  \colorlet{ShadeOfPurple}{blue!5!white}
  \colorlet{ShadeOfYellow}{yellow!5!white}
  \colorlet{ShadeOfGreen} {green!5!white}
  \colorlet{ShadeOfBrown} {brown!10!white}
  \colorlet{ShadeOfGray}  {gray!10!white}
  \declaretheoremstyle[
      spaceabove=6pt,
      spacebelow=6pt,
      bodyfont=\normalfont,
      qed=\(\lozenge\)
  ]{definitionwithbox}
  \declaretheoremstyle[
      headfont=\itshape,
      bodyfont=\normalfont,
      qed=\(\lozenge\)
      ]{remarkwithbox}
\numberwithin{equation}{section}
\numberwithin{table}{section}
\let\c@equation\c@table
  \declaretheorem[sibling=equation]{theorem}
  \declaretheorem[sibling=theorem]{lemma,proposition,corollary}
  \declaretheorem[sibling=theorem,style=definition]{definition}
  \declaretheorem[sibling=theorem,style=definition]{example}
  \declaretheorem[sibling=theorem,style=remark]{remark}
  \declaretheorem[sibling=equation]{theorem}
  \declaretheorem[sibling=theorem]{lemma,proposition,corollary}
  \declaretheorem[sibling=theorem,style=definitionwithbox]{definition}
  \declaretheorem[sibling=theorem,style=definitionwithbox]{example}
  \declaretheorem[sibling=theorem,style=remarkwithbox]{remark}
\declaretheorem[sibling=theorem,style=remark,numbered=no]{claim}
\begin{document}

\frontmatter

\newgeometry{total={180mm,267mm}} 
\begin{titlepage}
\begin{center}
  \vspace*{\stretch{0.5}}

  \large 

  {\Huge\textsc{Domain Theory in Constructive and Predicative Univalent Foundations}\par}

  \vspace{\stretch{0.2}}

  by

  \vspace{\stretch{0.2}}

  {\huge\textsc{Tom de Jong}}

  \vspace{\stretch{0.5}}

  A thesis submitted to the University of Birmingham for the degree of\\
  \textsc{Doctor of Philosophy}

  \vfill

  \flushright{
  School of Computer Science \\
  College of Engineering and Physical Sciences \\
  University of Birmingham \\\phantom{force-newline}\\

  \begin{tabular}{rr}
  Submitted: & 30 September 2022 \\
  Defended:  & 20 December  2022 \\
  Accepted:  & 1 February  2023 \\
  \end{tabular}}
\end{center}
\end{titlepage}

\restoregeometry%

\chapter{Abstract}

We develop domain theory in constructive and predicative univalent foundations
(also known as homotopy type theory). That we work predicatively means that we
do not assume Voevodsky's propositional resizing axioms. Our work is
constructive in the sense that we do not rely on excluded middle or the axiom of
(countable) choice.
Domain theory studies so-called directed complete posets (dcpos) and Scott
continuous maps between them and has applications in a variety of fields, such
as programming language semantics, higher-type computability and topology.
A common approach to deal with size issues in a predicative foundation is to
work with information systems, abstract bases or formal topologies rather than
dcpos, and approximable relations rather than Scott continuous functions.
In our type-theoretic approach, we instead accept that dcpos may be large and
work with type universes to account for this.
A~priori one might expect that complex constructions of dcpos, involving
countably infinite iterations of exponentials for example, result in a need for
ever-increasing universes and are predicatively impossible. We show, through a
careful tracking of type universe parameters, that such constructions can be
carried out in a predicative setting.
We~illustrate the development with applications in the semantics of programming
languages: the soundness and computational adequacy of the Scott model of PCF,
and Scott's \(D_\infty\) model of the untyped \(\lambda\)-calculus.
Both of these applications make use of Escard\'o's and Knapp's type of partial
elements.
Taking inspiration from work in category theory by Johnstone and Joyal, we also
give a predicative account of continuous and algebraic dcpos, and of the related
notions of a small (compact) basis and its rounded ideal completion. This is
accompanied by concrete examples, such as the small compact basis of
Kuratowski finite subsets of the powerset.
The fact that nontrivial dcpos have large carriers is in fact unavoidable and
characteristic of our predicative setting, as we explain in a complementary
chapter on the constructive and predicative limitations of univalent
foundations. We prove no-go theorems for a general class of posets that includes
dcpos, bounded complete posets, sup-lattices and frames.
In~particular, we show that, constructively, locally small nontrivial dcpos
necessarily lack decidable equality.
Our account of domain theory in univalent foundations has been fully formalised
with only a few minor exceptions. The ability of the proof assistant \Agda\ to
infer universe levels has been invaluable for our purposes.


\chapter{Acknowledgements}\label{chap:acknowledgements}
\markboth{Acknowledgements}{Acknowledgements}

First and foremost, it is my pleasure to express my deepest thanks to my
supervisor Mart\'in Escard\'o. This thesis simply wouldn't exist without his
ideas, patient teaching, and ever-helpful feedback and support.
Secondly, I am grateful to my co-supervisor Benedikt Ahrens who really helped me
feel at home in Birmingham and has always been available to offer sage
advice even after he moved away from Birmingham.

I feel fortunate to have been a part of the very vibrant Theory group at
Birmingham and wish to thank all its members, and the participants of our Theory
PhD student seminar \emph{Bravo} in particular.

Of the (former) PhD students, Alex Rice, Auke Booij, Ayberk Tosun, Calin Tataru,
George Kaye and Todd Waugh Ambridge deserve a special mention.
Alex and Calin were excellent teaching assistants and great office mates before
they moved to Cambridge. The framework developed to mark \emph{Functional
  Programming} assignments is largely due to them.
Auke was very welcoming and helpful when I first moved to Birmingham.
With George I had the pleasure of organising the Theory seminar, upgrading the
Theory website and introducing web profiles for PhD students.
I have had interesting discussions with Ayberk and Todd and would like to thank
them for their efforts as teaching assistants in the modules we taught
together. Moreover, like George, they became my friends in Birmingham.

Of the Theory staff members, I would like to mention Anupam Das, Eric Finster,
Paul Levy and Sonia Marin in particular. Anupam is responsible for much of the
social activities in the Theory group and seemingly never tires of playing
devil's advocate. It~was a pleasure to work with Eric in \emph{(Advanced)
  Functional Programming} and at \emph{MGS'22}.
Both Sonia and Paul were simply two very friendly presences in the School of
Computer Science and I wish to thank them for all our hallway conversations.
Moreover, I appreciate that Paul has agreed to chair my viva.

Next, I would like to express my thanks to my thesis group committee members
Steve Vickers and Vincent Rahli, and its past members Nicolai Kraus and Noam
Zeilberger, for their interest, questions and comments. Furthermore, I wish to
thank Vincent for accepting to be my internal examiner.

Last but not least on the list of people at University of Birmingham are Jason
Fenemore and his successor Angeliki Bompetsi who must be mentioned for their
outstanding work in supporting PhD students and dealing with Worklink.

A special thanks is reserved for Andrea Vezzosi who, upon request, kindly
implemented the \verb|lossy-unification| flag in \Agda. This heuristic
significantly speeds up \Agda's typechecking in some cases and it allowed me to
complete my formalisation of domain theory.

I am grateful to Andrej Bauer for agreeing to be my external examiner and for
hosting me during my research visit to Ljubljana, where I had many inspiring
conversations with Egbert Rijke to whom I am indebted. I quickly felt at home in
Ljubljana thanks to them, Ajda Lampe, Alex Simpson, Filip Koprivec, Katja
Ber\v{c}i\v{c}, L\'eo Mangel, and in particular, Jure Taslak.

I would also like to thank the organisers, lecturers, participants, and
especially, my fellow teaching assistants of the \emph{HoTTEST Summer School
  2022}: it has been a great pleasure to learn and work with you.

My fourth and final year as a PhD student was financially supported by the
\emph{Homotopy Type Theory Dissertation Fellowship}, funded by Cambridge Quantum
Computing and Ilyas Khan. I wish to express my sincere thanks to them and the
committee members Steve Awodey, Thierry Coquand, Emily Riehl, and Mike Shulman
for allowing me to work on this topic for another year.

I moved to Nottingham towards the end of my PhD and I greatly appreciated
Nicolai Kraus' help and support in moving. I am also very grateful to Josh Chen
and Stefania Damato for making my first weeks in Nottingham so enjoyable.

I am indebted to Harry Smit for many wonderful conversations, especially when
they concerned life as an academic. A special thanks to Menno de Boer for his
enthusiasm, our discussions on homotopy type theory and for proofreading parts
of this thesis.

These acknowledgements have deliberately been restricted to people that were
directly involved in my academic endeavours, and fall short of mentioning many
friends and family that I am thankful to. I wish to make a single important
exception: Madelon, thank you for all your love, support and understanding. I am
deeply grateful that you were, and will be, by my side while I pursue(d) my
academic interests abroad.


\setcounter{tocdepth}{2}
\tableofcontents

\mainmatter%

\chapter{Introduction}\label{chap:introduction}

\emph{Univalent foundations}~\cite{Voevodsky2015}, also known as \emph{homotopy
  type theory}~\cite{HoTTBook} and often abbreviated as \emph{HoTT} or
\emph{HoTT/UF}, is a recent enhancement of intensional Martin-L\"of Type
Theory~(MLTT)~\cite{MartinLof1984} and has many complementary uses. %
\index{univalent foundations}%
\index{homotopy type theory|see {univalent foundations}}%
\index{HoTT|see {univalent foundations}}%
\index{HoTT/UF|see {univalent foundations}}%
For example, it is a language for \((\infty,1)\)-toposes~\cite{Shulman2019}, it
allows for a natural development of synthetic homotopy
theory~\cite{HoTTBook,Rijke2019,HoTT-Coq,HoTT-Agda,Agda-UniMath} and synthetic
group theory~\cite{BezemEtAl2022,Agda-UniMath}, and it functions as a modern
foundation~\cite{HoTTBook,Escardo2019,Agda-UniMath} for general mathematics
providing an alternative to traditional set-theoretic approaches.
Moreover, thanks to the type-theoretic basis of univalent foundations, it is
possible to implement proofs in HoTT/UF in proof assistants such as
\Agda~\cite{Agda}, \CubicalAgda~\cite{CubicalAgda}, \Coq~\cite{Coq} and
\Lean~\cite{Lean}, among others~\cite{Arend,cooltt,redtt,RedPRL}, allowing for a
formalised, computer-checked development of mathematics. %
\index{proof assistant}%
\index{proof assistant!Agda@\Agda}%
\index{proof assistant!Coq@\Coq!}%
\index{Agda@{\Agda}|see {proof assistant}}%
\index{Coq@{\Coq}|see {proof assistant}}%

This thesis is concerned with homotopy type theory as a foundations for
(formalised) mathematics. Specifically, we develop a formalised account of
\emph{domain theory}, an important area in theoretical computer science, in
univalent foundations.
In fact, we present a fully \emph{constructive} and \emph{predicative} treatment
of domain theory within this setting. A precise overview of what is covered can
be found in~\cref{sec:outline-and-summary-of-contributions}, but for now we
emphasise that our development is illustrated and proved to be useful through
the exposition of two applications in the semantics of programming languages:
the soundness and computational adequacy of the Scott model of
PCF~\cite{Plotkin1977,Scott1993}, and Scott's \(D_\infty\)
model~\cite{Scott1982b} of the untyped \(\lambda\)-calculus, which are fully
formalised in \Agda~\cite{TypeTopologyDomainTheory,Hart2020} and
\Coq~\cite{UniMathScottModelOfPCF,deJong2021a}.%
\index{semantics}\index{programming language}%
\index{Scott model!of PCF}\index{Scott model!of PCF!soundness}%
\index{Scott model!of PCF!computational adequacy}%
\index{Scott model!of the untyped \(\lambda\)-calculus}%
\index{formalisation}\index{proof assistant!Agda@\Agda}\index{proof assistant!Coq@\Coq}

\paragraph{Domain theory}\index{domain theory}%
Domain theory~\cite{AbramskyJung1994} studies a particular class of posets and
has applications in a variety of fields, such as: programming language
semantics~\cite{Plotkin1977,Scott1982b,Scott1993}, higher-type
computability~\cite{LongleyNormann2015} and topology~\cite{GierzEtAl2003}. For
instance, domain-theoretic insights have led to the discovery of surprising
algorithms that exhaustively search infinite sets in finite
time~\cite{Berger1990,Escardo2008}. More generally, domain theory can be used to
prove correctness of algorithms through denotational semantics.

\paragraph{The univalent point of view}
In intensional Martin-L\"of Type Theory, the identity type of a type is
uniformly and inductively defined.
Thus, for every type \(X\), we have a type \(x = y\) of identities, or
\emph{identifications}. %
\index{type!identity}%
One of the key features of univalent foundations, and of the \emph{univalence
  axiom} specifically, is that type of identifications captures (in a precise
sense) the correct notion of equality, cf.\ \cite[Section~9.8]{HoTTBook},
\cite{CoquandDanielsson2013}, \cite[Section~3.33]{Escardo2019}
and~\cite{AhrensEtAl2020}.\index{univalence}
For example, if we have two elements \(G\) and \(H\) of the type of groups, then
the type \(G = H\) is equivalent (in Voevodsky's sense~\cite{Voevodsky2015}) to
the type of group isomorphisms from \(G\) to \(H\).
In particular, the type of identifications \(x = y\) can contain many elements,
so equality in univalent foundations is not necessarily a truth value.
This naturally leads to higher structures in univalent foundations. In fact,
another key insight of Voevodsky~\cite{Voevodsky2015} was that the
stratification of types according to the complexity of their identity types
into \emph{(sub)singletons} (truth values), \emph{sets}, 1-groupoids,
2-groupoids, etc.\ can be internalised and defined inside MLTT.

Consequently, the mathematical distinction between a \emph{property} and
(additional) \emph{data} or \emph{structure} is also internalised. Sometimes we
know how to express something as an equipment with extra structure, but we are
interested in obtaining a property instead.
For this, we turn to the \emph{propositional truncation}: the universal method
of making a type into a subsingleton.
The propositional truncation is an example, and in this thesis the \emph{only}
example, of a \emph{higher inductive type}~\cite[Section~6]{HoTTBook}.%
\index{higher inductive type}

For developing domain theory we typically do not need the theory of higher
groupoids. Accordingly, weaker consequences of the univalence axiom (function
extensionality and propositional extensionality, to be precise) are often
sufficient for our purposes.
An important exception, besides its use in the theory of ordinals
(\cref{sec:small-suprema-of-ordinals}), is the fundamental notion of
\emph{\(\V\)-smallness}: if we want to know that it expresses a property, then
univalence is sufficient and (in some precise sense) necessary, as we explain
in~\cref{sec:small-and-locally-small-types}.
Even if univalence is often not needed, it will hopefully become clear
throughout this thesis, that the univalent point of view is prevalent in our
work. For example, we recall that if a type \(X\) is equipped with a
subsingleton-valued reflexive and antisymmetric binary relation, then \(X\) is a
set, meaning its elements can be equal in at most one way.
Moreover, our development fundamentally relies on the aforementioned
propositional truncation and also features several applications of a
theorem~\cite[Theorem~5.4]{KrausEtAl2017} that characterises when we can map
from a propositionally truncated type to a set.

\paragraph{Constructivity}\index{constructivity|(}
Constructivity has historically always been important in the type theoretic
tradition. Indeed, Martin-L\"of invented his type theory to serve as a
constructive foundation of mathematics~\cite{MartinLof1975}.
More recently, its extension, univalent foundations, has been given a
computational interpretation through cubical type theory~\cite{CohenEtAl2018}
and this has been implemented in practice as the proof assistant and functional
programming language \CubicalAgda~\cite{CubicalAgda}.

That we work constructively means that we do not assume excluded middle,
or weaker variants, such as Bishop's LPO~\cite{Bishop1967}, or the axiom of
choice (which implies excluded middle), or its weaker variants, such as the
axiom of countable choice.
An~advantage of working constructively and not relying on these additional
logical axioms is that our development is valid in every
\((\infty,1)\)-topos~\cite{Shulman2019} and not just those in which the logic is
classical.

Our commitment to constructivity has several manifestations throughout this
thesis. For example, it means we cannot simply add a least element to a set to
obtain the free pointed \emph{directed complete poset (dcpo)}. Instead of adding
a single least element representing an undefined value, we must work with a more
complex type of partial elements~(\cref{sec:lifting}). Another example is the
distinction between continuity and pseudocontinuity of
dcpos~(\cref{sec:pseudocontinuity}); the notions coincide when the axiom of
choice is assumed.
Moreover, the absence of countable choice is discussed in connection to
semidecidability and the Scott model of PCF in~\cref{sec:semidec-pcf}.

Constructive mathematics is naturally at home in theoretical computer science,
because constructive proofs give rise to
algorithms~\cite{Bishop1970,MartinLof1982}.
We illustrate this point through a domain-theoretic
example. In~\cref{sec:Scott-model-of-PCF} we give a constructive proof that the
Scott model of PCF is computationally adequate. The constructive nature of the
proof yields (in theory, at least) an interpreter: if we can prove that a given
program (of base type) is total, then we can compute its numerical outcome
through computational adequacy.
In classical domain theory it is of fundamental
interest~\cite{Scott1970,Smyth1977} how its aspects can be formulated in an
effective or computational manner.
The computational nature of constructive mathematics might enable one to use our
constructive development of domain theory to obtain algorithms without having to
develop a separate account of effectively given dcpos.%
\index{algorithm}

The~advent of proof assistants (many based on type theory, including
Martin-L\"of Type Theory and the Calculus of
Constructions~\cite{CoquandHuet1988}) has narrowed the gap between mathematics
and computer science further and we discuss the implementations of our work in
the proof assistants \Agda\ and \Coq\ below and in more detail
in~\cref{chap:formalisation}.
\index{constructivity|)}

\paragraph{Predicativity}\index{predicativity|(}\index{predicativity|seealso {resizing}}

Our work is predicative in the sense that we do not assume Voevodsky's
\emph{resizing} rules~\cite{Voevodsky2011,Voevodsky2015} or axioms. In
particular, powersets of small types are large.
Before we explain some of the ramifications of this for the domain-theoretic
development, we reflect on some of the reasons for working without resizing
principles.

\index{resizing!propositional}
First and foremost, it is currently an open problem whether propositional
resizing axioms can be given a computational interpretation, as has been done
for the univalence axiom and higher inductive types in cubical type
theory~\cite{CohenEtAl2018}.
Thus, in line with our constructive agenda and to retain a computational
interpretation (in for instance, \CubicalAgda~\cite{CubicalAgda}) we work in the
absence of propositional resizing axioms.
Since higher inductive types may be seen as particular resizing principles, it
is also noteworthy that the only higher inductive type needed in our work is the
propositional truncation.
Another reason for being interested in predicativity is the fact that
propositional resizing axioms fail in some models of univalent type theory. This
is discussed further in \cref{sec:related-work:predicativity}.
Furthermore, it is expected, by analogy to predicative and impredicative set
theories, that adding resizing axioms significantly increases the
proof-theoretic strength of univalent type
theory~\cite[Remark~1.2]{Shulman2019}.
Lastly, one may have philosophical reservations regarding impredicativity. For
example, some constructivists may accept predicative set theories like
Aczel's~CZF\index{CZF} or Myhill's~CST, but not Friedman's impredicative set
theory~IZF~\cite{Beeson1985}.
Or, paraphrasing Shulman's narrative~\cite{Shulman2011}, one can ask why
propositions (or (\(-1\))-types) should be treated differently, i.e.\ given that
we have to take size seriously for \(n\)-types for \(n > -1\), why not do the
same for (\(-1\))-types?

A common approach to deal with domain-theoretic size issues in a predicative
foundation is to work with information systems~\cite{Scott1982a,Scott1982b},
abstract bases~\cite{AbramskyJung1994} or formal
topologies~\cite{Sambin1987,Sambin2003,CoquandEtAl2003} rather than dcpos, and
approximable relations rather than \emph{Scott continuous functions}. %
\index{information system}\index{basis!abstract}\index{formal topology}%
\index{approximable relation}%
Instead, we work directly with dcpos and Scott continuous functions. In dealing
with size issues, we draw inspiration from category theory and make crucial use
of type universes and type equivalences to capture \emph{smallness}.
For example, in our development of the Scott model of PCF, the dcpos have
carriers in the second universe \(\U_1\) and least upper bounds for directed
families indexed by types in the first universe \(\U_0\).
Moreover, up to equivalence of types, the order relation of the dcpos takes
values in the lowest universe \(\U_0\).
Seeing a poset as a category in the usual way, we can say that these dcpos are
large, but locally small, and have small filtered colimits.
The fact that the dcpos have large carriers is in fact unavoidable and
characteristic of predicative settings, as proved
in~\cref{chap:predicativity-in-order-theory}.
Because the dcpos have large carriers it is a priori not clear that complex
constructions of dcpos, involving countably infinite iterations of exponentials
for example, do not result in a need for ever-increasing universes and are
predicatively possible. We show that they are possible through a careful
tracking of type universe parameters, and this is also illustrated by
applications, such as the Scott model of PCF and \(D_\infty\). Since keeping
track of these universes is prone to mistakes, we have implemented much of our
work in \Agda; its ability to infer universe levels has been invaluable.%
\index{proof assistant!Agda@\Agda}\index{universe!parameters}%
\index{predicativity|)}

\paragraph{Formalisation}\index{formalisation}

Type theories are the basis of many successful proof assistants, such as
\Agda~\cite{Agda}, \Coq~\cite{Coq} and \Lean~\cite{Lean}.
Much of the work in this thesis has been formalised in \Agda\ using
Escard\'o's~\TypeTopology~\cite{TypeTopology} development and this has helped
considerably to guide our predicative and constructive development of domain
theory.
A full discussion of the formalisation efforts, including the work
in~\Coq/\UniMath, can be found in~\cref{chap:formalisation}.

\section{Related work}

We give a brief overview of related work on (constructive and/or predicative)
domain theory and of predicativity in general.
In short, the distinguishing features of our work are: (i) the adoption of
homotopy type theory as a foundation, (ii) a commitment to predicatively and
constructively valid reasoning, (iii) the use of type universes to avoid size
issues concerning large posets.

\subsection{Domain theory}\index{domain theory|(}

The standard works on domain theory, e.g.~\cite{AbramskyJung1994,GierzEtAl2003},
are based on traditional impredicative set theory with classical logic.
A constructive, topos valid, and hence impredicative, treatment of some domain
theory can be found in~\cite[Chapter~III]{Taylor1999}.

Domain theory has been studied predicatively in the setting of formal topology
\cite{Sambin1987,Sambin2003,CoquandEtAl2003} in
\cite{MaiettiValentini2004,Negri2002,SambinValentiniVirgili1996} and the more
recent categorical papers~\cite{Kawai2017,Kawai2021}. In this predicative
setting, one avoids size issues by working with information
systems~\cite{Scott1982a,Scott1982b}, abstract bases~\cite{AbramskyJung1994} or
formal topologies, rather than dcpos, and approximable relations rather than
Scott continuous functions. %
\index{information system}\index{basis!abstract}\index{formal topology}%
\index{approximable relation}%
Hedberg~\cite{Hedberg1996} presented some of these ideas in Martin-L\"of Type
Theory and formalised them in the proof assistant \ALF~\cite{Magnusson1995}, a
precursor to \Agda. A~modern formalisation in \Agda\ based on Hedberg's work was
recently carried out in Lidell's master thesis~\cite{Lidell2020}.

Our development differs from the above line of work in that it studies
dcpos directly and uses type universes to account for the fact that
dcpos may be large.\index{dcpo}
An advantage of this approach is that we can work with (Scott continuous)
functions rather than the arguably more involved (approximable) relations.
For the treatment of continuous (and algebraic) dcpos we turn to the work
of~\citeauthor{JohnstoneJoyal1982}~\cite{JohnstoneJoyal1982} which is situated
in category theory where attention must be paid to size issues even in an
impredicative setting.
In constructive set theory, this corresponds to working with partially ordered
classes~\cite{Aczel2006} as opposed to sets, where our notion of a small basis
for a dcpo (\cref{sec:small-bases}) is similar to \citeauthor{Aczel2006}'s
notion of a set-generated~\cite[Section~6.4]{Aczel2006} dcpo.

Another approach to formalising domain theory in type theory can be found
in~\cite{BentonKennedyVarming2009,Dockins2014}. Both formalisations study
\(\omega\)-chain complete preorders, work with setoids, and make use of \Coq's
impredicative sort~\texttt{Prop}. %
\index{omega-completeness@\(\omega\)-completeness}%
\nomenclature[Prop]{$\texttt{Prop}$}{special type of propositions in the
  Calculus of Constructions and \Coq}%
A setoid is a type equipped with an equivalence relation that must be respected
by all functions. The particular equivalence relation given by equality is
automatically respected of course, but for general equivalence relations this
must be proved explicitly. %
\index{setoid}%
The aforementioned formalisations work with preorders, rather than posets,
because they are setoids where two elements \(x\) and \(y\) are related if
\(x \leq y\) and \(y \leq x\). %
\index{preorder}
Our~development avoids the use of setoids thanks to the adoption of the
univalent point of view. Moreover, we work predicatively and we work with the
more general directed families rather than \(\omega\)-chains, as we intend the
theory to be also applicable to topology and algebra~\cite{GierzEtAl2003}.

There are also constructive accounts of domain theory aimed at program
extraction~\cite{BauerKavkler2009,PattinsonMohammadian2021}.
Both these works study \(\omega\)-chain complete posets (\(\omega\)-cpos) and
define notions of \(\omega\)-continuity for them. %
\index{omega-completeness@\(\omega\)-completeness}%
The former~\cite{BauerKavkler2009} is notably predicative, but makes use of
additional logical axioms: countable choice, dependent choice and Markov's
Principle, which are validated by a realisability interpretation. %
\index{constructivity}\index{choice!axiom of countable}%
The latter~\cite{PattinsonMohammadian2021} uses constructive logic to extract
witnesses but employs classical logic in the proofs of correctness by phrasing
them in the double negation fragment of constructive logic.
By~contrast, we study (continuous) dcpos rather than (\(\omega\)-continuous)
\(\omega\)-cpos and is fully constructive without relying on additional
principles such as countable choice or Markov's Principle.

Finally, yet another approach is the field of \emph{synthetic domain
  theory}~\cite{Rosolini1986,Rosolini1987,Hyland1991,Reus1999,ReusStreicher1999}. %
\index{domain theory!synthetic}%
Although the work in this area is constructive, it is still impredicative, as it
is based on topos logic; but more importantly it has a focus different from that
of regular domain theory. The aim is to isolate a few basic axioms and find
models in (realisability) toposes where every object is a domain and every
morphism is continuous. These models often validate additional axioms, such as
Markov's Principle and countable choice, and moreover (necessarily) falsify
excluded middle. We have a different goal, namely to develop regular domain
theory constructively and predicatively, but in a foundation compatible with
excluded middle and choice, while not relying on them or on Markov's Principle
or countable choice.
\index{domain theory|)}

\subsection{Predicativity}\label{sec:related-work:predicativity}

We summarise work on (im)predicativity in univalent foundations as well as work
on the limits of predicative mathematics and its relation to the results
presented in~\cref{chap:predicativity-in-order-theory}.

\subsubsection{Resizing in models of univalent foundations}
\index{resizing!in models}

As mentioned in the introduction, propositional resizing axioms fail in some
models of univalent type theory. A notable example of such a model is Uemura's
cubical assembly model~\cite{Uemura2019}. What is particularly striking about
Uemura's model is that it does support an impredicative universe \(\U\) in the
sense that if \(X\) is \emph{any} type and \(Y : X \to \U\), then
\(\Pi_{x : X}Y(x)\) is in \(\U\) again even if \(X\) isn't, but that
propositional resizing fails for this universe.
We also highlight Swan's (unpublished) results~\cite{Swan2019a,Swan2019b} that
show that propositional resizing axioms fail in certain presheaf (cubical)
models of type theory. Interestingly, Swan's argument works by showing that the
models violate certain collection principles if we assume Brouwerian continuity
principles in the metatheory.

By contrast, we should mention that propositional resizing is validated in many
models when a classical metatheory is assumed. For example, this is true for any
type-theoretic model topos~\cite[Proposition~11.3]{Shulman2019}. In particular,
Voevodsky's simplical sets model~\cite{KapulkinLumsdaine2021} validates excluded
middle and hence propositional resizing.
We~note, however, that in other models it is possible for propositional resizing
to hold and excluded middle to fail, as shown
by~\cite[Remark~11.24]{Shulman2015}.

\subsubsection{Resizing rules versus axioms}
\index{resizing!axioms and rules}

This thesis concerns resizing \emph{axioms}, meaning we ask a given type to be
equivalent to one in a fixed universe \(\U\) of ``small'' types.
Voevodsky~\cite{Voevodsky2011} originally introduced resizing \emph{rules} which
add judgements and hence modify the syntax of the type theory to make the given
type inhabit \(\U\), rather than only asking for an equivalent copy
in~\(\U\). It is not known whether Voevodsky's resizing rules are consistent
with univalent foundations in the sense that no-one has constructed a model of
univalent type theory extended with such resizing rules.
It is also an open problem~\cite[Section~10]{CohenEtAl2018} whether we have
normalisation for cubical type theory extended with resizing rules.
In fact, as far as we know, it is an open problem for plain Martin-L\"of Type
Theory as well.

\subsubsection{Limits of predicativity}\index{predicativity|(}

While~\cref{chap:basic-domain-theory,chap:continuous-and-algebraic-dcpos,chap:applications}
are devoted to demonstrating the possibility of developing domain theory
predicatively in univalent foundations,
\cref{chap:predicativity-in-order-theory} instead explores what cannot be done
in our predicative setting.
Curi had a similar goal and investigated the limits of predicative mathematics
in CZF\index{CZF}~\cite{AczelRathjen2010} in a series of
papers~\cite{Curi2010a,Curi2010b,Curi2015,Curi2018,CuriRathjen2012}.
In particular, Curi shows (see~\cite[Theorem~4.4 and
  Corollary~4.11]{Curi2010a}, \cite[Lemma~1.1]{Curi2010b} and
  \cite[Theorem~2.5]{Curi2015}) that CZF cannot prove that various nontrivial
posets, including sup-lattices, dcpos and frames, are small. This result is
obtained by exploiting that CZF is consistent with the anti-classical
generalised uniformity principle
(GUP)~\cite[Theorem~4.3.5]{vandenBerg2006}.%
\index{smallness}\index{dcpo!small}\index{uniformity principle}

Our related \cref{nontrivial-impredicativity,positive-impredicativity} is of a
different nature in two ways.
Firstly, the theorem is in the spirit of reverse constructive
mathematics~\cite{Ishihara2006}: Instead of showing that GUP implies that there
are no non-trivial small dcpos, we show that the existence of a non-trivial
small dcpo is equivalent to weak propositional resizing, and that the
existence of a positive small dcpo is equivalent to full propositional
resizing. Thus, if we wish to work with small dcpos, we are forced to assume
resizing axioms.
Secondly, we work in univalent foundations rather than CZF\index{CZF}. This may
seem a superficial difference, but a number of arguments in Curi's
papers~\cite{Curi2015,Curi2018} crucially rely on set-theoretical notions and
principles such as transitive set, set-induction, and the weak regular extension
axiom (wREA), which cannot even be formulated in the underlying type theory of
univalent foundations.
Moreover, although Curi claims that the arguments of~\cite{Curi2010a,Curi2010b}
can be adapted to some version of Martin-L\"of Type Theory, it is presently not
clear whether there is any model of univalent foundations which validates
GUP. However, one of the anonymous reviewers of~\cite{deJongEscardo2022}
suggested that Uemura's cubical assemblies model~\cite{Uemura2019} might
validate it. In particular, the reviewer hinted that
\cite[Proposition~21]{Uemura2019} may be seen as a uniformity principle.
\index{predicativity|)}

\section{Outline and summary of contributions}
\label{sec:outline-and-summary-of-contributions}

We develop domain theory~(\cref{chap:basic-domain-theory}) in predicative and
constructive univalent foundations~(\cref{chap:univalent-foundations}). %
We include the theory of continuous and algebraic dcpos and rounded ideal
completions~(\cref{chap:continuous-and-algebraic-dcpos}), as well as
applications in the semantics of programming
languages~(\cref{chap:applications}), namely soundness and computational
adequacy of Scott's model of PCF, and Scott's \(D_\infty\) model of the untyped
\(\lambda\)-calculus.
We use type universes to deal with size issues arising in our predicative
setting. Moreover, we show that dcpos are predicatively necessarily large
in~\cref{chap:predicativity-in-order-theory}.
The development of domain theory, including the applications, is supported by a
formalisation, as discussed in~\cref{chap:formalisation}. In particular, \Agda's
ability to automatically infer universe levels has been invaluable to us.%
\index{proof assistant!Agda@\Agda}\index{universe!parameters}

\subsection{Summary of contributions}

We briefly describe our contributions per chapter and record what parts of this
thesis are based on our publications. The full bibliographical details of the
publications can be found in~\cref{publications}.
Moreover, each chapter features a section at the end with further
bibliographical notes.

\paragraph{\cref{chap:univalent-foundations}} %
\emph{\nameref*{chap:univalent-foundations}}

Our exposition of \emph{univalent foundations} is fairly standard and largely
follows~\cite{HoTTBook}, and in particular~\cite{Escardo2019}.
Two exceptions
are~\cref{sec:propositional-truncations-from-set-quotients,sec:set-replacement}
which are original contributions where we show small \emph{set quotients} and a
\emph{set replacement} principle to be equivalent.
\cref{sec:quotients-replacement-prop-trunc-revisited}, on set quotients,
\emph{propositional truncations} and their universe levels, as a whole was
included in our work~\cite{deJongEscardo2021b,deJongEscardo2022}.
Other exceptions are the main results on \emph{indexed \(\WW\)-types} with
\emph{decidable equality} in~\cref{sec:indexed-W-types} which are due to Jasper
Hugunin~\cite{Hugunin2017,HuguninMail2017}, and were included in our
paper~\cite{deJong2021a}.

\paragraph{\cref{chap:basic-domain-theory}} %
\emph{\nameref*{chap:basic-domain-theory}}

We present the basic definitions of domain theory: \emph{directed complete
  posets (dcpos)} and \emph{Scott continuous functions}.
It must be remarked that our definitions make use of type universes and are
size-aware: we ask for suprema of directed families indexed by types in some
fixed universe.
We proceed with several basic examples and with constructions of dcpos:
\emph{products}, \emph{exponentials}, \emph{lifting} and \emph{bilimits}.
Because we work constructively we use Escard\'o's and
Knapp's~\cite{EscardoKnapp2017,Knapp2018} lifting monad to construct the free
dcpo with a least element on a set.
This chapter is a revision of our two
papers~\cite{deJong2021a,deJongEscardo2021a}, see the
\nameref{sec:basic-domain-theory-notes} for further details.

\paragraph{\cref{chap:continuous-and-algebraic-dcpos}} %
\emph{\nameref*{chap:continuous-and-algebraic-dcpos}}

This chapter has its roots in~\cite{deJongEscardo2021a}, but the treatment has
been considerably expanded and revised. In particular, we disentangled the
notions of \emph{continuity} and having a \emph{(small) basis} in this thesis.
Taking inspiration from the categorical treatment of~\cite{JohnstoneJoyal1982},
we give predicatively adequate definitions of continuous and \emph{algebraic}
dcpos, and discuss issues related to the absence of the axiom of choice.
We~also present predicative adaptations of the notions of a basis and the
\emph{rounded ideal completion}~\cite{AbramskyJung1994}.
Our development is illustrated with several examples: we describe small compact
bases for the lifting and the powerset, and consider the ideal completion of the
dyadics.

\paragraph{\cref{chap:applications}} %
\emph{\nameref*{chap:applications}}

We describe two applications of domain theory to the \emph{semantics of
  programming languages}.
The first application is a predicative reconstruction of
Scott's~\cite{Scott1972} famous \(D_\infty\) model of the untyped
\(\lambda\)-calculus, and was included in our paper~\cite{deJongEscardo2021a}.
The~use of exponentials and bilimits of dcpos is crucial in the construction of
\(D_\infty\) and we describe how Scott's original proof is adapted to
predicative and proof relevant setting of univalent foundations.
The second application is the \emph{Scott model}~\cite{Plotkin1977,Scott1993} of
the typed programming language \emph{PCF}, including its \emph{soundness} and
\emph{computational adequacy}, and was the subject of our
publication~\cite{deJong2021a}.
The Scott model of PCF highlights our use of the lifting monad in particular.
We also discuss issues concerning semidecidability and countable choice.

\paragraph{\cref{chap:predicativity-in-order-theory}} %
\emph{\nameref*{chap:predicativity-in-order-theory}}

We complement the above development by exploring the predicative and
constructive limits of order theory in univalent foundations.
We show that nontrivial dcpos are \emph{necessarily large} and \emph{necessarily
  lack decidable equality} in our constructive and predicative setting.
In particular, the carriers of the dcpos of the Scott model of PCF can only live
in the lowest universe \(\U_0\) if we work impredicatively.
The fact that nontrivial dcpos are necessarily large has the important
consequence that \emph{Tarski's theorem} (and similar results) cannot be applied
in nontrivial instances, even though it has a predicative proof.
Further, we explain, by studying the large sup-lattice of \emph{ordinals}, that
generalisations of Tarski's theorem which allow for large structures are provably
false.
Finally, we elaborate on the connections between requiring suprema of
\emph{families} and of \emph{subsets} in our predicative setting.
This chapter is taken mostly verbatim from our preprint~\cite{deJongEscardo2022}
which itself is based on our conference paper~\cite{deJongEscardo2021b}.

\paragraph{\cref{chap:formalisation}} %
\emph{\nameref*{chap:formalisation}}

Our development of domain theory in constructive and predicative univalent
foundations is accompanied by extensive formalisations that encompass, with very
few exceptions, all
of~\cref{chap:basic-domain-theory,chap:continuous-and-algebraic-dcpos,chap:applications}.

\subsection{Publications}\label{publications}

This thesis is based on the following papers, all of which have been published,
except for~\cite{deJongEscardo2022}, which has been accepted subject to minor
revisions.

\nocite{deJong2021a,deJongEscardo2021a,deJongEscardo2021b,deJongEscardo2022}
\AtNextBibliography{\renewbibmacro*{pageref}{}}

\defbibnote{awards}{The work~\cite{deJongEscardo2021b} won the \emph{Best
    Paper by a Junior Researcher} award and some of the above publications led
  to the award of the \emph{Homotopy Type Theory Dissertation Fellowship}.}
\printbibliography[keyword=introduction,heading=none,postnote=awards]


\chapter{Univalent foundations}\label{chap:univalent-foundations}

Our foundational starting point is intensional Martin-L\"of Type
Theory~(MLTT)~\cite{MartinLof1975,MartinLof1984} with an empty type \(\Zero\),
unit type \(\One\), natural numbers type \(\Nat\), binary coproducts~(\(+\)),
dependent sums~(\(\Sigma\)), dependent products~(\(\Pi\)), intensional identity
types, and general inductive types (i.e.\ \(\WW\)-types as discussed in
\cref{sec:indexed-W-types}).
\index{Martin-Loef Type Theory@Martin-L\"of Type Theory}%
\index{MLTT|see {Martin-L\"of Type Theory}}%
\index{type!empty}%
\index{type!unit}%
\index{type!of natural numbers}%
\index{type!coproduct}%
\index{type!dependent sum}%
\index{type!dependent product}%
\index{type!identity}%
\index{type!inductive}%
\nomenclature[0B]{$\Zero$}{empty type}
\nomenclature[1B]{$\One$}{unit type}
\nomenclature[N]{$\Nat$}{type of natural numbers}
\nomenclature[plus]{$X + Y$}{binary coproduct type}
\nomenclature[Sigma]{$\Sigma_{x : X}Y(x)$}{dependent sum type}
\nomenclature[Pi]{$\Pi_{x : X}Y(x)$}{dependent product type}

In the upcoming sections, we will discuss additions to this type theory that
will define our foundational setup.  These additions will be: universes,
function extensionality, propositional extensionality, propositional
truncations, and (sometimes) univalence.

We also introduce some of the characteristic features of univalent foundations,
such as the stratification of types into (sub)singletons, sets, 1-groupoids,
\ldots, according to the complexity of their identity types
(\cref{sec:stratification}), the notions of embedding and equivalence
(\cref{sec:embeddings-equivalences-retracts}), propositional truncations
(\cref{sec:prop-trunc}) and univalence (\cref{sec:univalent-universes}).

The notion of subsingletons gives rise to a refinement of the Curry--Howard
paradigm to logical-propositions-as-subsingletons, as explained in
\cref{sec:logic}. This distinguishes univalent foundations from other type
theories such as the \emph{Calculus of (Inductive)
  Constructions}~\cite{CoquandHuet1988} and \Coq~\cite{Coq}, where logic is set
to take place in a special designated \verb|Prop| type. %
\index{Calculus of Constructions}%
\index{proof assistant!Coq@\Coq}%
The fact that our logic is constructive is discussed in
\cref{sec:constructivity}, while its predicativity is studied in
\cref{sec:impredicativity-resizing-axioms} after we introduce the theory of
(locally) small types in \cref{sec:small-and-locally-small-types}.
Finally, set quotients and set replacement are examined in a predicative
context in \cref{sec:quotients-replacement-prop-trunc-revisited}.

\paragraph{Notation}

If \(X\) is a type and \(Y(x)\) a dependent type over \(X\), then we denote its
type of dependent functions as \(\Pi_{x : X}Y(x)\) or sometimes just \(\Pi Y\),
and similarly for \(\Sigma\)-types.
If \(Y(x)\) does not depend on \(X\), then we respectively denote
\(\Pi_{x : X}Y(x)\) by \(X \to Y\) and \(\Sigma_{x : X}Y(x)\) by \(X \times Y\).
\nomenclature[arrow]{$X \to Y$}{type of functions from \(X\) to \(Y\)}%
\nomenclature[times]{$X \times Y$}{binary cartesian product of types}%
\index{type!function}%
\index{type!product}

Further, the first projection is denoted by
\(\fst : \pa{\Sigma_{x : X}Y(x)} \to X\), while the second projection is called
\(\snd : \Pi_{s : {\Sigma_{x : X}Y(x)}}Y(\fst(s))\).
\nomenclature[pr1]{$\fst$}{first projection}%
\nomenclature[pr2]{$\snd$}{second projection}%
The identity map on a type \(X\) is denoted by \(\id\) or \(\id_{X}\) %
\nomenclature[id]{$\id$}{identity map}%
and
function composition is denoted by \(g \circ f\) %
\nomenclature[circ]{$g \circ f$}{function composition}%
where the codomain of \(f\) is
definitionally equal to the domain of \(g\).

Definitional (or judgemental) equality is denoted by \({\equiv}\) and we use
\({\colonequiv}\) to signal that we are making a definition.
\nomenclature[equiv]{$x \equiv y$}{definitional (judgemental) equality}%
\nomenclature[equivcol]{$x \colonequiv y$}{making a definition}%
For a type \(X\) with elements \({x,y} : X\), the corresponding identity type is
denoted by \(x = y\), or sometimes \(x =_{X} y\) to highlight the type of \(x\)
and \(y\).
\nomenclature[equal]{$x = y$}{identity type}%
If \(p : x = y\), then we write \(p^{-1} : y = x\) for its inverse (up to
intensional equality) and if \(q : y = z\), then we write
\(p \pathcomp q : x = z\) for the composition of the identifications.
\nomenclature[inverse]{$p^{-1}$}{inverse of an identification}%
\nomenclature[bullet]{$p \pathcomp q$}{composition of identifications}%
Moreover, we sometimes find it convenient to write \(f \sim g\) for
\(\Pi_{x : X}f(x) = g(x)\) for (dependent) functions \(f,g : \Pi_{x : X}Y(x)\).
\nomenclature[sim]{$f \sim g$}{pointwise equality of functions}%

The unique element of the unit type will be written as \(\star : \One\).
\nomenclature[star]{$\star$}{unique element of the unit type}%
The constructors of the coproduct are denoted by \(\inl\) and \(\inr\). We
denote the particular coproduct \(\One + \One\) by \(\Two\) and write \(0\) and
\(1\) for its elements.
\nomenclature[inl]{$\inl$}{first coprojection into the binary coproduct}
\nomenclature[inr]{$\inr$}{second coprojection into the binary coproduct}
\nomenclature[2B]{$\Two$}{two-element type}
\nomenclature[0]{$0$}{one of the elements of the two point type \(\Two\)}
\nomenclature[1]{$1$}{one of the elements of the two point type \(\Two\)}

\paragraph{Terminology}

Following \cite{HoTTBook}, when we define a (dependent) function using the
elimination rule of the identity type, then we say we define it by
\emph{(path)~induction}. We will also refer to elements of the identity type as
``equalities'', ``identifications'' or ``paths'', and write \(\refl\) or
\(\refl_x\) for reflexivity at \(x\), the canonical element of \(x = x\).
\index{path}%
\index{path!induction}
\nomenclature[refl]{$\refl$}{reflexivity, constructor of the identity type}

In line with the Curry--Howard paradigm and \cite{HoTTBook}, we will work
informally in type theory and for example say that ``\(Y(x)\) holds for every
\(x : X\)'' if we have an element of \(\Pi_{x : X}Y(x)\). Similarly, ``\(P\) and
\(Q\) hold'' will mean that the type \(P \times Q\) has an element.

\section{Type universes}\label{sec:type-universes}%
\index{type!universe|(}%
\index{universe|see {type universe}}
Type universes will play a fundamental role in our development of domain theory
in a predicative context, because, by our definitions, they will keep track of
size for us.
But type universes are indispensable in MLTT anyway as they have many other
uses, e.g.\ defining type families by induction and collecting mathematical
structures into a single type (e.g.\ the type of all (small) groups).
Our setup of type universes follows that of \Agda~\cite{Agda} and
\cite{Escardo2019,Escardo2021}, but differs from that of \Coq~\cite{Coq} and
\cite{HoTTBook} because we do not assume cumulativity
(see~\cref{no-cumulativity}); any lifting of types to higher universes will be
annotated explicitly (see~\cref{lift-to-higher-universes}).

Intuitively a type universe is a type of types. In so-called Tarski-style
universes, the elements of a universe \(\U\) are codes for types and the
universe comes equipped with a decoding \(\tau\) such that if \(x : \U\), then
\(\tau(x)\) is a type. %
This is useful, because it maintains a clean separation between types and
elements of types, but cumbersome in practice. By contrast, in Russell-style
universes, the elements of a universe \(\U\) are actual types. This complicates
the meta-theory because now \(X : \U\) is both an element of \(\U\) and a type.
Our universes will be presented \`a la Russell, but one should read this as an
abbreviation for Tarski-style universes.

\subsection{Operations on universes}

First of all, we postulate that there is a universe \(\U_0\).
Secondly, we postulate two meta-operations on universes: a unary operation
\({(-)}^+\), called \emph{successor}, and a binary operation \((-) \sqcup (-)\)
satisfying the following conditions:
\nomenclature[U+]{$\U^+$}{successor universe}
\nomenclature[sqcup]{$\U \sqcup \V$}{least upper bound of two universes}
\begin{enumerate}[(i)]
\item for every universe \(\U\), we have \(\U_0 \sqcup \U \equiv \U\) and
  \(\U \sqcup \U^+ \equiv \U^+\);
\item the operation \((-) \sqcup (-)\) is definitionally idempotent, commutative
  and associative, i.e.\ for all universes \(\U\), \(\V\) and \(\W\), we assume
  \(\U \sqcup \U \equiv \U\) as well as
  \(\U \sqcup \V \equiv \V \sqcup \U\) and
  \((\U \sqcup \V) \sqcup \W \equiv \U \sqcup (\V \sqcup \W)\);
\item the successor operation \({(-)}^+\) distributes over \((-) \sqcup (-)\)
  definitionally, i.e.\ for every two universes \(\U\) and \(\V\), we have
  \(\pa*{\U \sqcup \V}^+ \equiv \U^+ \sqcup \V^+\).
\end{enumerate}
\nomenclature[U]{$\U$}{type universe}%
\nomenclature[V]{$\V$}{type universe}%
\nomenclature[W]{$\W$}{type universe}

In particular, we can iterate the successor operation starting with \(\U_0\) to
obtain an infinite tower of universes that we denote by
\(\U_0\), \(\U_1\), \(\U_2\), \ldots.
\nomenclature[Ui]{$\U_i$}{\(i^{\textsuperscript{th}}\) type universe}

\begin{remark}\label{no-cumulativity}%
  \index{cumulativity}%
  We do \emph{not} assume cumulativity of the universes, i.e.\ we do not require
  that \(A : \U\) implies \(A : \U \sqcup \V\) for every two universes \(\U\)
  and \(\V\).
  However, in \cref{lift-to-higher-universes} we describe how we can easily
  transport types to higher universes in a suitable sense.
\end{remark}

\subsection{Closure properties}
We assume the following closure properties regarding universes:
\begin{enumerate}[(i)]
\item if \(X : \U\), then the identity type \(\pa{x = y}\) lives in \(\U\) for
  every \(x,y : X\);
\item if \(X : \U\) and \(Y : \V\), then \(X + Y : \U \sqcup \V\);
\item if \(X : \U\) and \(Y : X \to \V\), then the types \(\Sigma_{x : X}Y(x)\)
  and \(\Pi_{x : X}Y(x)\) are both assumed to be in \(\U \sqcup \V\);
\item the universe \(\U_0\) contains the type of natural numbers \(\Nat\);
\item every universe \(\U\) contains copies \(\Zero_{\U}\) and \(\One_{\U}\) of
  respectively the empty and unit type.
  \nomenclature[0U]{$\Zero_{\U}$}{empty type in \(\U\)}
  \nomenclature[1U]{$\One_{\U}$}{unit type in \(\U\)}
\end{enumerate}

We write \(\Two_{\U} \colonequiv \One_{\U} + \One_{\U}\) and \(0\) and \(1\) for
its two inhabitants.
\nomenclature[2U]{$\Two_{\U}$}{two-element type in \(\U\)}

\begin{remark}\label{lift-to-higher-universes}
  To compensate for the fact that we do not assume cumulativity, we observe
  that, using the empty and unit types, it is easy to define a map
  \[
    \lift{U}{V} : \U \to \U \sqcup \V
  \]
  for every two universes such that every type \(X : \U\) is equivalent (a
  notion we define later) to \(\lift{U}{V}(X)\). %
  \nomenclature[liftUV]{$\lift{U}{V}$}{embedding into higher type universe}
  For instance, the map
  \(X \mapsto X + \Zero_{\V}\) does the job, as does
  \(X \mapsto X \times \One_{\V}\).
  But, in the absence of cumulativity, the types \(X\) and \(\lift{U}{V}(X)\)
  cannot be \emph{equal}, because they do not even live in the same universe.
\end{remark}
\index{type!universe|)}

\section{Identity types and function extensionality}%
\label{sec:id-and-fun-ext}%
\index{type!identity|(}

The identity type is defined uniformly for every type \(X : \U\) as the
inductive family \(X \to X \to \U\) generated by \(\refl : x = x\).
It is possible, however, to show that the identity type acts as expected for
specific types. For example, given \({(x,y),(x',y')} : X \times Y\), we would
expect \((x,y) = (x',y')\) to hold precisely when \(x = x'\) and \(y = y'\).
Similarly, we can show that if \(x : X\) and \(y : Y\), then
\(\inl(x) =_{X + Y} \inr(y)\) never holds, while \(\inl(x) =_{X + Y} \inl(x')\)
holds precisely when \(x =_{X} x'\).
The situation for \(\Sigma\)-types is slightly more involved and requires the
notion of \emph{transport}.

\begin{definition}[Transport]
  \index{transport}%
  \nomenclature[transport]{$\transport^Y(e,y)$}{transport of \(y : Y(x)\) along \(e : x = x'\)}%
  For every type \(X : \U\) and type family \(Y : X \to \V\), we have a function
  \(\transport^Y : \Pi_{x,x' : X}\pa*{x = x' \to Y(x) \to Y(x')}\) defined
  inductively as \(\transport^Y(\refl) \colonequiv \id\), where we have left the
  arguments \(x\) and \(x'\) implicit.
\end{definition}

We also take this opportunity to define the action of a map on paths.

\begin{definition}[Action on paths, \(\ap_f\)]
  \index{path!action on}%
  \nomenclature[ap]{$\ap_f(e)$}{application of \(f : X \to Y\) to \(e : x = y\)}%
  Every function \(f : X \to Y\) induces a map on identity types
  \(\ap_f : (x = y) \to (f(x) = f(y))\) for every \(x,y : X\) defined
  inductively by \(\ap_f(\refl) \colonequiv \refl\), and sometimes called the
  \emph{action (of \(f\)) on paths}.
\end{definition}

For characterising identity types, we introduce the notion of an invertible
map. In \cref{sec:embeddings-equivalences-retracts} we consider the more refined
notion of a map being an \emph{equivalence}.
Another useful notion is that of a left-cancellable map, which we will similarly
refine to the notion of an \emph{embedding} later.

\begin{definition}[Invertibility and left-cancellability]
  A map \(f : X \to Y\) is
  \begin{enumerate}[(i)]
  \item \emph{invertible}\index{invertibility|textbf} if we have a specified
    \(g : Y \to X\) with \(g(f(x)) = x\) for every \(x : X\) and \(f(g(y)) = y\)
    for every \(y : Y\), and
  \item \emph{left-cancellable}\index{left-cancellability} if for every
    \(x,x' : X\), we have a function
    \[
      (f(x) = f(x')) \to (x = x'). \qedhere
    \]
  \end{enumerate}
\end{definition}

\begin{lemma}\label{lc-if-invertible}
  Every invertible map is left-cancellable.
\end{lemma}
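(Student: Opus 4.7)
The plan is to unpack the definition of invertibility and use the action on paths together with the retraction identities. Given an invertible map $f : X \to Y$ with specified inverse $g : Y \to X$ satisfying $\eta_x : g(f(x)) = x$ for all $x : X$ (the other homotopy is not needed here), I want to produce, for arbitrary $x, x' : X$, a function $(f(x) = f(x')) \to (x = x')$.

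So fix $x, x' : X$ and suppose $p : f(x) = f(x')$. First I apply the action on paths from the preceding definition to obtain $\ap_g(p) : g(f(x)) = g(f(x'))$. Then I compose with the retraction identities to land back in $X$: explicitly, I form
\[
\eta_x^{-1} \pathcomp \ap_g(p) \pathcomp \eta_{x'} : x = x',
\]
using the inverse and composition of identifications introduced in the notation paragraph. This assignment $p \mapsto \eta_x^{-1} \pathcomp \ap_g(p) \pathcomp \eta_{x'}$ is the desired function witnessing left-cancellability.

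There is no real obstacle here; the proof is a one-liner once the notation for $\ap$ and path composition is in place. The only thing worth noting is that only one of the two homotopies in the definition of invertibility is actually used, which foreshadows the later refinement to the notion of embedding versus equivalence. I would not introduce any auxiliary lemma and would simply present the composite path above as the witness.
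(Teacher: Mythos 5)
Your proof is correct and takes the same route as the paper: apply $\ap_g$ and then transport across the retraction homotopy $g \circ f \sim \id$. The paper merely leaves the final path composition $\eta_x^{-1} \pathcomp \ap_g(p) \pathcomp \eta_{x'}$ implicit, which you have usefully spelled out.
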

\begin{proof}
  If \(f : X \to Y\) is invertible with inverse \(g : Y \to X\), then for every
  \({x,x'} : X\), we have %
  \(f(x) = f(x') \xrightarrow{\ap_g} g(f(x)) = g(f(x')) \to x = x'\), where
  the final map is obtained using that \(g\) is the inverse of \(f\).
\end{proof}

\begin{lemma}\label{Id-of-Sigma}
  For every type family \(Y\) over a type \(X\), and every
  \({(x,y),(x',y')} : \Sigma Y\), we have invertible maps between the identity
  type \((x,y) =_{\Sigma Y} (x',y')\) and the \(\Sigma\)-type
  \(\Sigma_{p : x = x'}\transport^Y(p,y) = y'\).

  In particular, if \(Y\) is just a type, then we have invertible maps between
  the identity type \((x,y) =_{X \times Y} (x',y')\) and the product of identity
  types \((x=x') \times (y=y')\).
\end{lemma}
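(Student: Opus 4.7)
The plan is to construct maps in both directions by repeated path induction and then verify that the two composites are pointwise equal to the identity, again by path induction. Throughout, I will leave the implicit first projection arguments implicit to reduce clutter.

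For the forward map $\varphi : ((x,y) = (x',y')) \to \Sigma_{p : x = x'}\transport^Y(p,y) = y'$, I will define $\varphi$ by induction on the path $q : (x,y) = (x',y')$. In the base case $q \equiv \refl$, the pairs are definitionally equal, so I can take $\varphi(\refl) \colonequiv (\refl_x, \refl_y)$, using that $\transport^Y(\refl,y) \equiv y$ by the computation rule for transport. For the backward map $\psi$, I first induct on $p : x = x'$: when $p \equiv \refl$, the input $r$ has type $\transport^Y(\refl,y) = y'$, which is definitionally $y = y'$; then inducting on $r$, I set $\psi(\refl,\refl) \colonequiv \refl_{(x,y)}$.

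For the round-trips, I check $\psi \circ \varphi \sim \id$ by path induction on $q$: when $q \equiv \refl$, we compute $\psi(\varphi(\refl)) \equiv \psi(\refl,\refl) \equiv \refl$. For $\varphi \circ \psi \sim \id$, I induct first on $p$ and then on $r$; in the doubly-reflexive case, $\varphi(\psi(\refl,\refl)) \equiv \varphi(\refl) \equiv (\refl,\refl)$. This gives the required invertibility witnessing the first half of the lemma.

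For the non-dependent case $Y : \V$ (not depending on $X$), I observe that the transport $\transport^Y(p,y)$ is pointwise equal to $y$ for every $p : x = x'$, indeed equal to $y$ by induction on $p$. Hence the fibre $\Sigma_{p : x = x'}\transport^Y(p,y) = y'$ reduces to $(x = x') \times (y = y')$, and composing with the already-established invertible map yields the desired invertible map between $(x,y) =_{X \times Y} (x',y')$ and $(x=x') \times (y=y')$.

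The argument is routine path induction with no conceptual obstacle; the only mild subtlety is being careful that the backward map requires sequential induction on two identifications rather than a single one, which is why the straight symmetric definition works cleanly only once the first component $p$ has been eliminated to $\refl$. No universe or size considerations intervene here.
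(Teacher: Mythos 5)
Your proof is correct and follows essentially the same approach as the paper's, which simply writes down the two inductive definitions sending \(\refl_{(x,y)}\) to \((\refl_x,\refl_y)\) and back. You merely spell out the round-trip verifications and the non-dependent specialisation, both of which the paper leaves tacit.
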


The need for transporting \(y\) arises from the fact that \(y : Y(x)\), while
\(y' : Y(x')\), so \(y\)~and~\(y'\) cannot be equal as they do not have the same
type.

\begin{proof}
  The invertible maps are inductively defined as
  \begin{align*}
    (x,y) &=_{\Sigma Y} (x',y') &&\phantom{\mapsto} & \Sigma_{p : x = x'}&\transport^Y(p,y) = y' \\
    & \refl_{(x,y)} &&\mapsto & &(\refl_{x},\refl_{y}) \\
    & \refl_{(x,y)} &&\mapsfrom & &(\refl_{x},\refl_{y}) \qedhere
  \end{align*}
\end{proof}

By contrast, it is not provable in intensional Martin-L\"of Type Theory that two
pointwise equal functions are equal, as shown
by~\cite[Theorem~3.17]{Streicher1993}. Therefore, we wish to add it as an
axiom. However, for reasons that we will explain later, the official formulation
of the axiom will have to wait. Even so, the official formulation will be
logically equivalent to the following unofficial axiom that we introduce now
under the name ``naive function extensionality''.

\begin{definition}[Naive function extensionality]\label{naive-fun-ext}
  \emph{Naive function extensionality}\index{extensionality!naive function}
  asserts that for every two functions \(f,g : X \to Y\), if \(f(x) = g(x)\) for
  every \(x : X\), then \(f = g\).
  In other words, naive function extensionality says that pointwise functions
  are equal.
\end{definition}

The reason that we introduce naive function extensionality this early is that it
allows us to present many useful results earlier. When using it, we typically
drop the word ``naive'' and simply say ``by function extensionality''.
The following lemma prepares us for the official formulation of function
extensionality later.

\begin{lemma}\label{fun-ext-alt}
  Naive function extensionality is logically equivalent to all of the below,
  seemingly stronger, statements:
  \begin{enumerate}[(i)]
  \item for every two \emph{dependent} functions \(f,g : \Pi_{x : X}Y(x)\), if
    \(f(x) = g(x)\) for every \(x : X\), then \(f = g\);
  \item for every two functions \(f,g : X \to Y\), the canonical function from
    \(f = g\) to \(\Pi_{x : X}f(x) = g(x)\) given by
    \(e \mapsto \lambdadot{x}{\ap_{\lambdadot{h}{h(x)}}(e)}\) is invertible;
  \item for every two \emph{dependent} functions \(f,g : \Pi_{x : X}Y(x)\), the
    canonical function from \(f = g\) to \(\Pi_{x : X}f(x) = g(x)\) is
    invertible.
  \end{enumerate}
\end{lemma}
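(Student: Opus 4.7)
The plan is to establish a cycle of implications around naive function extensionality. Most implications are immediate: specialising the type family $Y$ to be independent of $x$ reduces (i) to naive function extensionality and reduces (iii) to (ii); extracting the inverse of the canonical map produces (i) from (iii) and naive function extensionality from (ii). The substantive direction is the implication from naive function extensionality to (iii), which I would split into first deriving the dependent version (i), and then promoting (i) to the invertibility statement (iii).

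For naive function extensionality $\Rightarrow$ (i), I would use a $\Sigma$-encoding. Given $f, g : \Pi_{x : X} Y(x)$ and $h : \Pi_{x : X}(f(x) = g(x))$, define $F, G : X \to \Sigma_{x : X} Y(x)$ by $F(x) \colonequiv (x, f(x))$ and $G(x) \colonequiv (x, g(x))$. For each $x$, the pair $(\refl_x, h(x))$ inhabits $\Sigma_{p : x = x}\,\transport^{Y}(p, f(x)) = g(x)$, which by \cref{Id-of-Sigma} produces an identification $F(x) = G(x)$. Naive function extensionality then yields $F = G$. To extract $f = g$ from this identification, I would exhibit $\Pi_{x : X}Y(x)$ as equivalent, via $f \mapsto F$, to the subtype of $H : X \to \Sigma_{x : X} Y(x)$ satisfying $\fst \circ H = \id_X$ (using that the fibre of $\id_X$ along this map is a based path space and hence contractible), and then transfer $F = G$ back along this equivalence.

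For (i) $\Rightarrow$ (iii), I would employ Voevodsky's contractibility trick. Given (i), there is a function $\mathrm{funext} : (f \sim g) \to (f = g)$, but the naive attempt to verify it is a two-sided inverse of $\mathrm{happly}$ fails, since (i) produces only some identification with no built-in coherence with $\mathrm{happly}$. The fix is to argue indirectly via contractibility of total spaces: $\Sigma_{g : \Pi_{x : X} Y(x)}(f = g)$ is always contractible (centre $(f, \refl_f)$), and (i) applied twice shows that $\Sigma_{g : \Pi_{x : X} Y(x)}(f \sim g)$ is contractible with centre $(f, \lambda x.\, \refl_{f(x)})$. Indeed, given any $(g, h)$, one application of (i) yields $p : f = g$, and a second application identifies the transported homotopy $\transport^{\lambda g'.\, f \sim g'}(p, \lambda x.\, \refl)$ with $h$. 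Since $\mathrm{happly}$ lifts to a map between these two contractible total spaces fibred over $g$, its fibrewise component $(f = g) \to (f \sim g)$ is a map between contractible types at each $g$, and hence invertible.

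The main obstacle is the second application of (i) in the Voevodsky argument: identifying the transported homotopy with a generic $h$ requires a careful adjustment, since naively $\mathrm{happly}(\mathrm{funext}(h))$ need not equal $h$. The standard trick is to replace $\mathrm{funext}(\cdot)$ by $\mathrm{funext}(\cdot) \pathcomp \mathrm{funext}(\lambda x.\,\refl)^{-1}$, forcing the adjusted map to send the trivial homotopy to $\refl$ on the nose; by path induction the coherence then propagates. This coherence, which is absent from the naive formulation, is exactly what bridges the gap to the invertibility formulations in (ii) and (iii).
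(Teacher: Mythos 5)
The paper's own proof is a citation to Escard\'o's notes, where the standard Voevodsky route is taken: naive non-dependent extensionality \(\Rightarrow\) weak extensionality (\(\Pi\) of a family of contractible types is contractible) \(\Rightarrow\) (iii), with (i) and (ii) falling out. Your proposal is in the same spirit but has two genuine gaps, both of the same kind: the naive hypothesis produces paths between functions with no coherence attached, and you then ask more of those paths than they can deliver.

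In the step naive FE \(\Rightarrow\) (i) you obtain \(p : F = G\) and try to transfer it along the equivalence \(\Pi_{x}Y(x) \simeq \Sigma_H\bigl(\fst\circ H = \id_X\bigr)\). But \(f\) and \(g\) correspond to the pairs \((F,\refl)\) and \((G,\refl)\), and \(\fst\circ H = \id_X\) is a path type in \(X \to X\), not a proposition, so \cref{prop-subtype} does not make the projection an embedding. Producing a path \((F,\refl) = (G,\refl)\) requires, by \cref{Id-of-Sigma}, some \(q : F = G\) with \(\ap_{\fst\circ-}(q) = \refl_{\id_X}\), a coherence that naive FE does not supply for \(p\). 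The same problem resurfaces in (i) \(\Rightarrow\) (iii): the ``second application of (i)'' needs \(\transport^{\lambda g'. f\sim g'}(p,\lambda x.\refl) = h\), which by the pointwise computation of transport unfolds to \(\ap_{\mathrm{ev}_x}(p) = h(x)\) for all \(x\), i.e.\ \(\mathrm{happly}\circ\mathrm{funext} \sim \id\); (i) says nothing of the kind, and the adjustment \(\mathrm{funext}(-)\pathcomp\mathrm{funext}(\lambda x.\refl)^{-1}\) only normalises the image of the trivial homotopy and does not propagate to arbitrary \(h\). The repair in both places is to route through weak FE. Your reshuffling \(\Sigma_H(\fst\circ H = \id_X) \simeq \Pi Y\), via \cref{Pi-Sigma-distr} and singleton contraction (\cref{singleton-type-is-singleton}), is indeed funext-free and correct, but its payoff comes only when each \(Y(x)\) is contractible: then \(\fst : \Sigma Y \to X\) is invertible, so \((\fst\circ-)\) is invertible by naive FE, so its fibre over \(\id_X\), which is \(\Pi Y\) by the reshuffling, is contractible. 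That is weak FE. From there, \(\Sigma_g(f\sim g) \simeq \Pi_x\Sigma_{y:Y(x)}(f(x)=y)\) by \cref{Pi-Sigma-distr}, each factor is a singleton by \cref{singleton-type-is-singleton}, weak FE contracts the \(\Pi\), and your (correct) total-to-fibrewise step finishes. Weak FE is also an immediate consequence of (i) --- two sections of a contractible family are pointwise equal, hence equal by (i), and the family of centres inhabits the \(\Pi\) --- so your (i) \(\Rightarrow\) (iii) can be salvaged along the same lines.
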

\begin{proof}
  See \cite[Section~3.18]{Escardo2019}.
\end{proof}
\index{type!identity|)}

\section{Subsingletons, sets and (higher) groupoids}\label{sec:stratification}

A fundamental idea in univalent foundations is the stratification of types
according to the complexity of their identity types.

\begin{definition}[Subsingleton, proposition, truth value]
  A type \(X\) is a \emph{subsingleton} (or \emph{proposition} or \emph{truth
    value}) if it has at most one element, meaning we have an element of
  \(\Pi_{x,y : X}\, x = y\).%
  \index{subsingleton}%
  \index{proposition|see {subsingleton}}%
  \index{truth value|see {subsingleton}}
\end{definition}

\begin{remark}[Property and data]\label{property-and-data}
  If a type \(X\) is a subsingleton, then we like to say that \(X\) is
  \emph{property}. By contrast, if \(X\) can have more than one element, then we
  sometimes say that \(X\) is \emph{data}. %
  \index{property|textbf}\index{data}%
  For example, as we explain in~\cref{sec:embeddings-equivalences-retracts}
  the notion of being an equivalence is a property, while being invertible is
  data.
  Another example comes from our development of domain theory and is the
  distinction between structural continuity and continuity of a directed
  complete poset~(\cref{sec:continuous-dcpos}).
  The former equips the poset with additional structure in the form of a
  specified mapping, while the latter only requires some unspecified mapping to
  exist, in a sense to be made precise in~\cref{sec:prop-trunc}.
\end{remark}

The names ``proposition'' and ``truth value'' suggests that subsingletons are
related to logic and indeed we will use the subsingletons to encode logic in our
type theory in~\cref{sec:logic}. %

\begin{definition}[Type of subsingletons, \(\Omega_{\U}\)]%
  \label{def:Omega}%
  \nomenclature[Omega]{$\Omega_{\U}$}{type of subsingletons in \(\U\)}%
  \index{type!of subsingletons|textbf}%
  The \emph{type of subsingletons} in a universe~\(\U\) is defined as
  \(\Omega_{\U} \colonequiv \Sigma_{P : \U}\issubsingleton(P)\).
\end{definition}

\begin{definition}[Singleton, contractibility]
  A type \(X\) is a \emph{singleton} (or said to be \emph{contractible}) if it
  is a subsingleton and moreover we have an element of \(X\).%
  \index{contractibility|textbf}\index{singleton|see {contractibility}}
\end{definition}

\begin{theorem}\label{singleton-type-is-singleton}
  For every element \(x\) of a type \(X\), the type \(\Sigma_{y : X}\,x=y\) is a
  singleton with unique element \((x,\refl)\).
\end{theorem}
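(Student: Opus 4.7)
The plan is to exhibit $(x,\refl)$ as the center of contraction and then show that every other inhabitant $(y,p)$ of $\Sigma_{y : X} x = y$ is equal to it. Since being a singleton requires both an inhabitant and the property that any two inhabitants are equal, it suffices in fact to construct, for each $(y,p) : \Sigma_{y : X} x = y$, an identification $(x,\refl) = (y,p)$.

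To do so, I would appeal to \cref{Id-of-Sigma} applied to the family $Y(y) \colonequiv (x = y)$: the identity type $(x,\refl) =_{\Sigma Y} (y,p)$ is (via an invertible map) the type $\Sigma_{q : x = y}\transport^Y(q,\refl) = p$. Thus the task reduces to exhibiting a path $q : x = y$ together with a proof that $\transport^Y(q,\refl) = p$. The natural choice is $q \colonequiv p$, but we still need to know that $\transport^Y(p,\refl) = p$ holds in general.

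This residual equality is the only non-trivial step, and it is handled by path induction on $p$: when $p$ is $\refl_x$, both sides reduce definitionally to $\refl_x$, so the goal is inhabited by $\refl$. Path induction is applicable because the type family $\lambdadot{y}{\lambdadot{p}{\transport^Y(p,\refl) = p}}$ is a family over $\Sigma_{y : X} x = y$ based at $(x,\refl)$, which is exactly the shape required by the eliminator for the identity type.

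Packaging this together, for every $(y,p)$ we obtain the pair $(p,\alpha)$ where $\alpha : \transport^Y(p,\refl) = p$ is produced by the above induction, and then we transport this across the inverse of the equivalence from \cref{Id-of-Sigma} to get the desired identification $(x,\refl) = (y,p)$. I do not expect any obstacle beyond the bookkeeping associated with transport; the whole statement is really just a repackaging of the $\JJ$-rule for identity types, and in fact could be proved more directly by a single path induction on $p$ over the family $\lambdadot{y}{\lambdadot{p}{(x,\refl) = (y,p)}}$, bypassing \cref{Id-of-Sigma} entirely. Either route is routine; I would pick whichever is more convenient in the surrounding formalisation style.
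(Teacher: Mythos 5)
Your proof is correct, and you have in fact described both viable routes. Your primary argument goes through \cref{Id-of-Sigma} to reduce the goal to finding a pair $(q,\alpha)$ with $q : x = y$ and $\alpha : \transport^Y(q,\refl) = p$, and then settles the residual transport computation by path induction. The paper skips this intermediate decomposition entirely: it performs a single path induction on $p$ (over the family $\lambdadot{y}{\lambdadot{p}{(y,p) = (x,\refl)}}$) to reduce directly to the case $y \equiv x$, $p \equiv \refl$, where the goal is immediate by reflexivity. You correctly anticipate this at the end of your proposal. The trade-off is minor: your route via \cref{Id-of-Sigma} shows explicitly how the general characterisation of $\Sigma$-identity types specialises here, which can be pedagogically clarifying, while the paper's direct induction is shorter and avoids the bookkeeping of packaging $(p,\alpha)$ and applying the inverse of the equivalence. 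One small phrasing slip: the family you do path induction over for $\transport^Y(p,\refl) = p$ is indexed by $y : X$ and $p : x = y$ (exactly the shape the $\JJ$-eliminator wants), not literally a family over $\Sigma_{y:X}\,x=y$; this does not affect the correctness of the argument.
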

\begin{proof}
  We have to show that for every \(y : X\) and \(p : x = y\), the pair
  \((y , p) = (x,\refl)\), but by path induction we may assume that
  \(y \equiv x\) and \(p \equiv \refl\) in which case it is trivial.
\end{proof}

\begin{example}\label{examples-of-(sub)singletons}
  The empty type \(\Zero_{\U}\) and the unit type \(\One_{\U}\) in any universe
  \(\U\) are both subsingletons.
  A further example of a subsingleton is the type
  \[
    \Sigma_{n : \Nat}\pa*{n \text{ is the least number \(k\) for which
        \(\alpha_k = 0\)}}
  \] where \(\alpha : \Nat \to \Two\). By contrast, the type
  \(\Sigma_{n : \Nat}\,\alpha_n = 0\) is not necessarily a subsingleton, because
  \(\alpha\) could have multiple roots.
\end{example}

\begin{remark}
  The fact that the type
  \(\Sigma_{n : \Nat}\pa*{n \text{ is the least number \(k\) for which
      \(\alpha_k = 0\)}}\) from~\cref{examples-of-(sub)singletons} is a
  subsingleton shows us that subsingletons are not necessarily proof irrelevant,
  because an inhabitant of that type gives us an explicit natural number.
  Another example that we will discuss in some detail
  later~(\cref{sec:embeddings-equivalences-retracts}) is the notion of an
  equivalence: the type expressing that a map \(f\) is an equivalence is a
  subsingleton, but given an inhabitant of it, we can construct an inverse of
  \(f\).
\end{remark}

So a subsingleton is a type where any two elements are equal. Going up one
level, we consider types where any two identifications are equal.

\begin{definition}[Set]
  A type \(X\) is a \emph{set} if the type \(x = y\) is a subsingleton for every
  \(x,y : X\).%
  \index{set}
\end{definition}

In other words, in a set two elements are equal in at most one way.
\begin{example}
  Every type with decidable equality is a set. This classic result is known as
  Hedberg's Theorem~\cite{Hedberg1998} (\cref{Hedberg-Theorem} below). In
  particular, the type \(\Nat\) of natural numbers is a set.
\end{example}

We could iterate these definitions and arrive at higher \emph{groupoids}:
a~1\nobreakdash-groupoid is a type whose identity types are sets, a 2-groupoid
is a type whose identity types are 1-groupoids, etc.
\begin{example}\label{circle}%
  \nomenclature[S1]{$\Circ$}{circle (as a higher inductive type)}%
  \nomenclature[base]{$\operatorname{base}$}{basepoint of the circle}%
  \nomenclature[Z]{$\mathbb Z$}{type of integers}%
  \index{circle}\index{higher inductive type}%
  An example of a type that is not a set is the circle~\(\Circ\), a higher
  inductive type~\cite[Chapter~6]{HoTTBook} with a chosen basepoint
  \(\operatorname{base} : \Circ\) for which we can prove, assuming
  univalence~(see~\cref{def:univalence}), that
  \((\operatorname{base} = \operatorname{base})\) is equivalent
  to~\(\mathbb Z\), the type of the integers. In our work, we will not assume
  any higher inductive types other than propositional truncations.
\end{example}
As explained in~\cite[Example~3.1.9]{HoTTBook}, another example of a non-set is
given by any univalent universe \(\U\). By univalence of \(\U\), one can show
that \(\Two_{\U} = \Two_{\U}\) contains exactly two elements, which shows that
the universe \(\U\) is not a set.
A nice example of a 1-groupoid is the type of groups in a universe \(\U\): for
two groups \(G\) and \(H\), the type \(G =_{\operatorname{Grp}_{\U}} H\) is,
assuming univalence, equivalent to the type of group isomorphisms between \(G\)
and \(H\), which is a set.

In this thesis we do not need to develop the theory of higher groupoids and
besides universes we can often restrict our attention to sets and subsingletons,
like in the upcoming closure results.

\subsection{Hedberg's Lemma}

In proving various results about subsingletons and sets, the following lemma,
which we call Hedberg's Lemma, proves highly useful. While all the techniques
were already present in Hedberg's paper~\cite{Hedberg1998}, the precise
formulation presented below only appeared in \cite[Lemma~3.11]{KrausEtAl2017}.

\begin{definition}[Constant]
  A map \(f : X \to Y\) is \emph{constant} if \(f(x) = f(x')\) for every
  \(x,x' : X\).%
  \index{constant}
\end{definition}

\begin{remark}
  This is sometimes called \emph{weakly constant} or \emph{wildly constant},
  because if \(Y\) is not a set, then \(f\) can be constant in more than one
  way, but also in an incoherent way in a precise higher categorical
  sense~\cite{Kraus2015}.
  In other words, the above definition does not account for further
  coherence conditions.
  But we will only be interested in constant maps to sets, so we simply stick to
  ``constant''.
\end{remark}

\begin{lemma}[Hedberg's Lemma]%
  \label{Hedberg-Lemma}\index{Hedberg's!Lemma}%
  Let \(x\) be an arbitrary, but fixed element of a type \(X\). If we have
  a constant endofunction on \(x = y\) for every \(y : X\), then \(x = y\) is a
  proposition for every \(y : X\).
\end{lemma}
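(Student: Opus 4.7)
The plan is to use the classical Hedberg trick: exhibit every path $p : x = y$ as a canonical composite built from the hypothesised constant maps, and then read off that two parallel paths must be equal. Denote the constant endofunctions by $f_y : (x = y) \to (x = y)$ for each $y : X$, with constancy witnessed by $c_y : \Pi_{p,q : x = y}\, f_y(p) = f_y(q)$.

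The key intermediate step is to show that for every $y : X$ and every $p : x = y$, one has
\[
  p \;=\; f_x(\refl_x)^{-1} \pathcomp f_y(p).
\]
I would prove this by path induction on $p$. The induction reduces to the case $y \equiv x$ and $p \equiv \refl_x$, where the function $f_y$ becomes $f_x$ and the required equation is $\refl_x = f_x(\refl_x)^{-1} \pathcomp f_x(\refl_x)$, which holds by the left-inverse law for path composition. Note that in order for this induction to typecheck, it is essential that $y$ remains general in the statement (so that $f_y$ genuinely varies with $y$); the hypothesis of the lemma gives us precisely such a family.

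Given the intermediate identity, the conclusion is immediate: for any $p, q : x = y$ we compute
\[
  p \;=\; f_x(\refl_x)^{-1} \pathcomp f_y(p) \;=\; f_x(\refl_x)^{-1} \pathcomp f_y(q) \;=\; q,
\]
where the middle equality is obtained by applying $\ap_{f_x(\refl_x)^{-1} \pathcomp (-)}$ to $c_y(p,q) : f_y(p) = f_y(q)$. Hence $x = y$ is a subsingleton for every $y : X$.

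There is no real obstacle here; the only thing to be careful about is to keep $y$ bound in the statement of the intermediate identity before doing path induction, and to keep track of which $f_y$ is being used where. Everything else is routine manipulation of paths.
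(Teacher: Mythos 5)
Your proof is correct and follows exactly the same strategy as the paper's: establish $p = f_x(\refl)^{-1}\pathcomp f_y(p)$ by path induction, then use constancy of $f_y$ to conclude that any two parallel paths are equal. The extra remarks about keeping $y$ general before inducting and about the left-inverse law are accurate and make explicit what the paper leaves implicit.
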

\begin{proof}
  Suppose that \(f_y : (x = y) \to (x = y)\) is constant for every \(y : X\).
  By induction on \(p : x = y\), we see that every \(p : x = y\) is equal to
  \({{f_{x} (\refl)}^{-1}} \pathcomp {f_y(p)}\).
  Hence, if \(y : X\) is arbitrary and \({p,q} : x = y\), then
  \(p = {{f_{x} (\refl)}^{-1}} \pathcomp {f_y(p)} =
  {{f_{x} (\refl)}^{-1}} \pathcomp {f_y(q)} = q\),
  as \(f_y\) is constant. Hence, each \(x = y\) is a proposition, as desired.
\end{proof}

\subsection{Closure properties}\index{subsingleton!closure properties|(}
The proofs of the following two lemmas illustrate how to apply Hedberg's Lemma.

\begin{lemma}\label{set-if-subsingleton}
  Every subsingleton is a set.
\end{lemma}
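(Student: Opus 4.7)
The plan is a direct application of Hedberg's Lemma (\cref{Hedberg-Lemma}). Suppose $X$ is a subsingleton, so we have a witness $s : \Pi_{x,y : X}\, x = y$. To show $X$ is a set, we must show that $x = y$ is a proposition for every $x, y : X$.

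Fix an arbitrary $x : X$. For each $y : X$, I would define an endofunction
\[
  f_y : (x = y) \to (x = y), \qquad f_y(p) \colonequiv s(x, y),
\]
which simply discards its input and returns the canonical identification supplied by subsingletonhood. This $f_y$ is trivially constant, since its value does not depend on $p$ at all: for any $p, q : x = y$ we have $f_y(p) \equiv s(x,y) \equiv f_y(q)$, hence $f_y(p) = f_y(q)$ by $\refl$. Hedberg's Lemma then immediately yields that $x = y$ is a proposition for every $y : X$. Since $x$ was arbitrary, this shows that $X$ is a set.

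The argument has no real obstacle: the whole point of isolating Hedberg's Lemma in the form given is that this closure property, along with similar ones (e.g.\ that types with decidable equality are sets), becomes a one-line corollary. The only thing to be careful about is that the constant endofunction must land in $x = y$ itself (not in some other identity type), which is why I take $f_y(p) \colonequiv s(x,y)$ rather than something involving the input $p$.
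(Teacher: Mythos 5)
Your proof is correct and is essentially the same as the paper's: the paper constructs the constant endofunction as the composite $(x=y) \to \One \to (x=y)$, which, unfolded, is exactly the map $p \mapsto s(x,y)$ that you write down directly, and then applies Hedberg's Lemma. The two arguments differ only in how explicitly the constant map is presented.
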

\begin{proof}
  If \(X\) is a subsingleton, then for every \(x,y : X\) we have a map
  \(\One \to (x=y)\). But the composite \((x=y) \to \One \to (x=y)\) is
  constant, because \(\One\) is a subsingleton, so \(X\) must be a set by
  Hedberg's Lemma.
\end{proof}

\begin{lemma}\label{lc-to-prop}
  If \(Y\) is a subsingleton (or set, respectively) and \(f : X \to Y\) is a
  left-cancellable map, then \(X\) is a subsingleton (or set, respectively) too.
  In particular, this holds if \(f\) is invertible.
\end{lemma}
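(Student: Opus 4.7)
The plan is to handle the two cases separately, with the subsingleton case being nearly immediate from unfolding definitions, and the set case requiring a short application of Hedberg's Lemma (\cref{Hedberg-Lemma}). The ``in particular'' clause is then a direct consequence of \cref{lc-if-invertible}.

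For the subsingleton case, assume $Y$ is a subsingleton and let $x, x' : X$ be arbitrary. Since $Y$ is a subsingleton we get some identification $f(x) = f(x')$, and then left-cancellability of $f$ produces an identification $x = x'$, so $X$ is a subsingleton.

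For the set case, assume $Y$ is a set and fix $x : X$. The strategy is to exhibit, for each $y : X$, a constant endofunction on $x = y$, so that Hedberg's Lemma applies. The natural candidate is the composite
\[
  (x = y) \xrightarrow{\ap_f} (f(x) = f(y)) \xrightarrow{\phantom{a}} (x = y),
\]
where the second map is the one provided by left-cancellability of $f$ at $x, y$. The key observation is that, because $Y$ is a set, the middle type $f(x) = f(y)$ is a subsingleton, so any function out of it is automatically constant (hence so is the composite). Applying \cref{Hedberg-Lemma} then yields that $x = y$ is a subsingleton for every $y$, so $X$ is a set. Since $x$ was arbitrary, we are done.

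The last sentence follows because \cref{lc-if-invertible} states that every invertible map is left-cancellable, so the hypotheses of the present lemma apply whenever $f$ is invertible. I do not anticipate any real obstacle: the only mildly nontrivial step is spotting that subsingleton-valuedness of the middle type in the composite is exactly what triggers Hedberg's Lemma, rather than attempting to invert $\ap_f$ directly (which we cannot do without assuming $f$ invertible).
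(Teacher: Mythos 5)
Your proof is correct and follows essentially the same route as the paper's: the subsingleton case by direct unfolding, and the set case by observing that the composite $\ap_f$ followed by left-cancellation is a constant endofunction on $x=y$ (because $f(x)=f(y)$ is a subsingleton when $Y$ is a set), then invoking Hedberg's Lemma (\cref{Hedberg-Lemma}) and \cref{lc-if-invertible} for the final clause.
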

\begin{proof}
  Assume that \(f\) is left-cancellable. Suppose first that \(Y\) is a
  subsingleton and let \(x,x' : X\) be arbitrary. Then \(f(x) = f(x')\) because
  \(Y\) is a subsingleton, but \(f\) is left-cancellable so we get the desired
  \(x = x'\), showing that \(X\) is a subsingleton.

  Now suppose that \(Y\) is a set. To show that \(X\) is a set, it suffices, by
  Hedberg's Lemma, to construct a constant endofunction on \(x = x'\) for every
  \(x,x' : X\). But because \(Y\) is a set, the second map, and hence the
  composite
  \[
    x = x' \xrightarrow{\ap_f} f(x) = f(x') \xrightarrow{\text{\(f\) is
        left-cancellable}} x = x'
  \]
  is constant.

  The final claim holds because every invertible map is left-cancellable as
  shown in \cref{lc-if-invertible}.
\end{proof}

\begin{theorem}\label{Sigma-is-prop}
  The subsingletons and sets are closed under \(\Sigma\), e.g.\ if \(X\) is a
  subsingleton and \(Y\) is a type family over \(X\) such that each \(Y(x)\) is
  a subsingleton, then \(\Sigma_{x : X}Y(x)\) is a subsingleton too.
  In particular, if \(Y\) is just a type, then \(X \times Y\) is a subsingleton
  (or set, respectively) if both \(X\)~and~\(Y\) are.
\end{theorem}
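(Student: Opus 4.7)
The plan is to apply \cref{Id-of-Sigma}, which gives an invertible map between \((x,y) =_{\Sigma Y} (x',y')\) and \(\Sigma_{p : x = x'}\transport^Y(p,y) = y'\), and then to combine it with \cref{lc-to-prop} and, for the set case, with the subsingleton case of the present theorem itself.

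First I would dispatch the subsingleton case for dependent sums. Given two arbitrary elements \((x,y),(x',y') : \Sigma_{x : X}Y(x)\), it suffices by \cref{Id-of-Sigma} to produce an element of \(\Sigma_{p : x = x'}\transport^Y(p,y) = y'\). Since \(X\) is a subsingleton, I obtain some \(p : x = x'\); then \(\transport^Y(p,y)\) and \(y'\) both lie in \(Y(x')\), which is a subsingleton by assumption, so they are equal. Pairing these witnesses yields the required identification and hence \(\Sigma Y\) is a subsingleton.

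Next I would treat the set case, bootstrapping off the subsingleton case just established. To show that \(\Sigma Y\) is a set, I need \((x,y) = (x',y')\) to be a subsingleton for every pair. By \cref{Id-of-Sigma} together with \cref{lc-if-invertible}, there is a left-cancellable map from this identity type to \(\Sigma_{p : x = x'}\transport^Y(p,y) = y'\), so by \cref{lc-to-prop} it is enough to check that the latter is a subsingleton. But \(x = x'\) is a subsingleton because \(X\) is a set, and for each \(p : x = x'\) the type \(\transport^Y(p,y) = y'\) is a subsingleton because \(Y(x')\) is a set; applying the subsingleton half of the theorem to this type family then closes the argument.

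The non-dependent clause for \(X \times Y\) follows as a direct specialisation to a constant family. The only real point of care is the order of proof: the set case genuinely depends on the subsingleton case, so the two parts must be tackled in that order rather than in parallel. Beyond that, I do not anticipate a substantive obstacle, since everything reduces cleanly to the invertible characterisation of equality in a \(\Sigma\)-type.
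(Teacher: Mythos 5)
Your proof is correct and follows essentially the same route as the paper: it applies \cref{Id-of-Sigma} to characterise equality in the \(\Sigma\)-type, handles the subsingleton case directly, and then uses \cref{lc-to-prop} together with the already-established subsingleton case to bootstrap the set case. The only cosmetic difference is that you invoke \cref{lc-if-invertible} explicitly, whereas the paper leaves this implicit since \cref{lc-to-prop} already covers invertible maps.
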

\begin{proof}
  Suppose first that \(X\) and each \(Y(x)\) are subsingletons and that we have
  two pairs \((x,y),(x',y') : \Sigma Y\). We wish to show that
  \((x,y) = (x',y')\). By \cref{Id-of-Sigma} it suffices to find an element
  \(p : x = x'\) and an element of \(\transport^Y(p,y) = y'\). But \(X\) is
  assumed to be a subsingleton, so we have such a \(p\) and moreover, \(Y(x')\)
  is assumed to be a proposition, so any two of its element are equal, in
  particular \(\transport^Y(p,y)\) and \(y'\) must be equal.

  Now suppose that \(X\) and each \(Y(x)\) are sets. To show that \(\Sigma Y\)
  is a set, we have to prove that \((x,y) = (x',y')\) is a subsingleton for
  every two pairs \((x,y),(x',y') : \Sigma Y\). By \cref{lc-to-prop,Id-of-Sigma}
  it is enough to show that \(\Sigma_{p : x = x'}\transport^Y(p,y) = y'\) is a
  subsingleton. But this is a \(\Sigma\)-type of subsingletons because \(X\) and
  \(Y(x')\) are assumed to be sets and we already proved that such
  \(\Sigma\)-types are subsingletons again.
\end{proof}

The proof of the following fundamental theorem features our first application of
function extensionality. That function extensionality is in fact necessary is
discussed in~\cite[Section~3.18]{Escardo2019}.

\begin{theorem}\label{Pi-is-prop}
  The subsingletons and sets form a (dependent) exponential ideal. That is, if
  \(Y\) is a type family over an \emph{arbitrary} type \(X\) such that each
  \(Y(x)\) is a subsingleton (or set, respectively), then \(\Pi_{x : X}Y(x)\) is
  a subsingleton (or set, respectively) too.

  In particular, if \(Y\) is just a type, then \(X \to Y\) is a subsingleton
  (or set, resp.) if \(Y\) is.
\end{theorem}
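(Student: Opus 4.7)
The plan is to handle the subsingleton case first directly via function extensionality, and then reduce the set case to the subsingleton case using \cref{lc-to-prop,fun-ext-alt}.

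For the subsingleton case, suppose each $Y(x)$ is a subsingleton and let $f, g : \Pi_{x:X}Y(x)$. For every $x : X$, since $Y(x)$ is a subsingleton, we get $f(x) = g(x)$. Applying (naive) function extensionality in its dependent form (\cref{fun-ext-alt}(i)), this pointwise equality yields $f = g$, so $\Pi_{x:X}Y(x)$ is a subsingleton.

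For the set case, suppose each $Y(x)$ is a set and let $f, g : \Pi_{x:X}Y(x)$. I need to show that $f = g$ is a subsingleton. By \cref{fun-ext-alt}(iii), the canonical map from $f = g$ to $\Pi_{x:X}f(x) = g(x)$ is invertible, hence left-cancellable by \cref{lc-if-invertible}. By \cref{lc-to-prop}, it then suffices to prove that $\Pi_{x:X}f(x) = g(x)$ is a subsingleton. But since each $Y(x)$ is a set, each identity type $f(x) = g(x)$ is a subsingleton, so this reduces precisely to the subsingleton case already established.

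The final sentence (the non-dependent version for $X \to Y$ with $Y$ a set or subsingleton) is then an immediate specialisation, taking the constant type family. There is no real obstacle here: the only subtlety is being careful to use the dependent version of function extensionality throughout (as given in \cref{fun-ext-alt}), rather than only the non-dependent version stated in \cref{naive-fun-ext}, since we are proving a statement about a \emph{dependent} function type. This is justified by the logical equivalence recorded in \cref{fun-ext-alt}.
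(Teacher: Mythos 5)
Your proof is correct and follows essentially the same route as the paper's: the subsingleton case by direct function extensionality, and the set case by applying \cref{lc-to-prop} to the invertible map from $f = g$ to $\Pi_{x:X}\,f(x) = g(x)$ supplied by \cref{fun-ext-alt}, reducing to the subsingleton case. The only cosmetic difference is that you explicitly pass through \cref{lc-if-invertible}, whereas the paper invokes the ``in particular'' clause of \cref{lc-to-prop} directly.
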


We stress that, unlike for \(\Sigma\)-types, \(X\) is \emph{not} required to be
a subsingleton or a set.

\begin{proof}
  Note that if each \(Y(x)\) is a subsingleton, then \(f(x) = g(x)\) for all
  \(x : X\) and all functions \(f,g : \Pi Y\). Hence, \(f = g\) for all
  \(f,g : \Pi_{x : X} Y(x)\) by function extensionality.

  Now assume that each \(Y(x)\) is a set and let \(f,g:\Pi_{x : X} Y(x)\) be
  arbitrary. We must show that \(f = g\) is a subsingleton.
  By function extensionality and \cref{fun-ext-alt} we have an invertible map
  from \(f = g\) to \(\Pi_{x : X} f(x) = g(x)\) for every two
  \(f,g : \Pi_{x : X} Y(x)\).
  Hence, by \cref{lc-to-prop} it suffices to prove that
  \(\Pi_{x : X}f(x) = g(x)\) is a subsingleton. But each \(Y(x)\) is a set, so
  this is a \(\Pi\)-type over a subsingleton-valued family and hence a
  subsingleton itself as we have just shown.
\end{proof}

\begin{lemma}\label{subsingleton-criterion}
  For every type \(X\), if we have a function \(X \to \issubsingleton(X)\), then
  \(X\) is a subsingleton.
\end{lemma}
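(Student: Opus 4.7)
The plan is to unfold the definition \(\issubsingleton(X) \equiv \Pi_{x,y : X}\, x = y\) and exploit the fact that the premise already packages, inside each inhabitant of \(X\), a complete proof that \(X\) is a subsingleton. So to establish that \(X\) is a subsingleton, it suffices to feed any element of \(X\) into the hypothesis in order to extract such a proof, and then to apply it.

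Concretely, given \(f : X \to \issubsingleton(X)\), I would let \(x,y : X\) be arbitrary and form \(f(x) : \Pi_{u,v : X}\, u = v\). Instantiating at \(u \colonequiv x\) and \(v \colonequiv y\) then yields \(f(x)(x,y) : x = y\), which is exactly what is required. Thus the witnessing term for \(\issubsingleton(X)\) is \(\lambda x\,y.\, f(x)(x,y)\) (one could equally well take \(\lambda x\,y.\, f(y)(x,y)\)).

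There is no real obstacle here: the lemma is a pleasant instance of a hypothesis ``bootstrapping'' itself, since a single element of \(X\) already activates an assumption strong enough to equate any two elements of \(X\). It is worth noting that this argument uses nothing beyond plain MLTT — in particular, neither function extensionality (\cref{naive-fun-ext}) nor propositional truncation is needed, because we only ever apply the hypothesis \(f\) pointwise and never need to identify two proofs of \(\issubsingleton(X)\).
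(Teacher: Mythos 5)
Your proof is correct and takes exactly the same approach as the paper: apply \(f\) to one of the two arbitrary elements to extract a witness of \(\issubsingleton(X)\), then instantiate that witness at the pair in question. The additional remark that nothing beyond plain MLTT is needed is accurate, though the paper leaves it implicit.
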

\begin{proof}
  Suppose that we have a function \(f : X \to \issubsingleton(X)\). To show that
  \(X\) is a subsingleton, recall that
  \(\issubsingleton(X) \colonequiv \Pi_{x,y : X}(x = y)\). So let \(x : X\) be
  arbitrary and note that we must prove that \(x = y\) for every \(y : X\). But
  this is given by \(f_x(x)\).
\end{proof}

\begin{theorem}\label{being-prop-is-prop}
  Being a set or (sub)singleton is a property, i.e.\ for every type \(X\), the
  types \(\issubsingleton(X)\), \(\issingleton(X)\) and \(\isset(X)\) are
  themselves subsingletons.
\end{theorem}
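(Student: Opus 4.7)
The plan is to handle the three cases by reducing each to closure properties of subsingletons under $\Pi$ and $\Sigma$, using \cref{subsingleton-criterion} as the key trick to break the apparent circularity for $\issubsingleton$ and $\issingleton$.

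I would start with $\issubsingleton(X)$, since the other two cases rely on it. Unfolding the definition, $\issubsingleton(X) \equiv \Pi_{x,y:X}(x=y)$. To show this type is itself a subsingleton, I~would invoke \cref{subsingleton-criterion}: it suffices to produce a function $\issubsingleton(X) \to \issubsingleton(\issubsingleton(X))$. So assume $X$ is a subsingleton. Then by \cref{set-if-subsingleton} it is a set, so each identity type $x=y$ is a subsingleton. Consequently $\Pi_{x,y:X}(x=y)$ is an iterated dependent product of subsingleton-valued families, and hence a subsingleton by (two applications of) \cref{Pi-is-prop}.

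Next, $\isset(X) \equiv \Pi_{x,y:X}\issubsingleton(x=y)$ falls out directly: for every $x,y:X$, the type $\issubsingleton(x=y)$ is a subsingleton by the case just proved, so $\isset(X)$ is again a dependent product of subsingletons and therefore a subsingleton by \cref{Pi-is-prop}. No appeal to \cref{subsingleton-criterion} is needed here, since we already have subsingleton-ness of the fibres unconditionally.

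Finally, for $\issingleton(X)$ --- unfolding as $\Sigma_{x:X}\Pi_{y:X}(x=y)$, or equivalently $X \times \issubsingleton(X)$ --- I~would again apply \cref{subsingleton-criterion}, assuming $\issingleton(X)$ to prove $\issubsingleton(\issingleton(X))$. Under this assumption $X$ is a subsingleton, and moreover $\issubsingleton(X)$ is a subsingleton by the first case; closure of subsingletons under $\Sigma$ (\cref{Sigma-is-prop}) then gives what we need. The only subtle point is the self-referential use of \cref{subsingleton-criterion}: we cannot show $\issubsingleton(X)$ is a subsingleton outright without first assuming its inhabitant, which is exactly what that lemma permits. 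Once this is noticed the argument is routine.
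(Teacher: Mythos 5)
Your proof is correct and follows exactly the paper's argument: the same bootstrap via \cref{subsingleton-criterion} for $\issubsingleton(X)$ and $\issingleton(X)$, the same use of \cref{set-if-subsingleton} and \cref{Pi-is-prop}, the same observation that $\isset(X)$ needs no such trick, and the same appeal to \cref{Sigma-is-prop} for the singleton case. No meaningful differences.
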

\begin{proof}
  We first show that \(\issubsingleton(X)\) is a subsingleton. By
  \cref{subsingleton-criterion} we may assume that we have an element of
  \(\issubsingleton(X)\), i.e.\ that \(X\) is a subsingleton.
  Now recall that \(\issubsingleton(X) \colonequiv \Pi_{x,y : X}(x = y)\).
  By \cref{Pi-is-prop} it suffices to prove that \(x=y\) is a subsingleton for
  every \({x,y} : X\). But this is indeed the case, as \(X\) was assumed to be a
  subsingleton, and hence must be a set by \cref{set-if-subsingleton}.  Thus,
  for every type \(X\), the type \(\issubsingleton(X)\) is a subsingleton, as
  desired.
  In particular, if \(X\) is an arbitrary type, then \(\issubsingleton(x=y)\) is
  a subsingleton for every \(x,y : X\). Hence, \(\isset(X)\) is a subsingleton
  by \cref{Pi-is-prop}.
  Finally, to show that \(\issingleton(X)\) is a subsingleton, note that we can
  assume that \(X\) is a singleton by \cref{subsingleton-criterion}. Hence, in
  particular, it is a subsingleton, so
  \(\issingleton(X) \equiv X \times \issubsingleton(X)\) is seen to be a
  proposition by \cref{Sigma-is-prop}.
\end{proof}%
\index{subsingleton!closure properties|)}

\subsection{Propositional extensionality}

Having introduced subsingletons (or propositions), we ask: when should two
propositions be equal? Since they have at most one element, it seems natural to
want them to be equal exactly when one has an element if and only if the other
does.

\begin{definition}[\Axiom: Propositional extensionality]\label{def:prop-ext}%
  \index{extensionality!propositional}%
  \emph{Propositional extensionality} asserts that for every two propositions
  \(P\) and \(Q\), if \(P \to Q\) and \(Q \to P\), then \(P = Q\).
  In~other words, it says that logically equivalent propositions are equal.
\end{definition}

We take propositional extensionality as an axiom and often use it tacitly.

\begin{remark}
  This really is an axiom \emph{scheme}: we add propositional extensionality for
  propositions in every type universe \(\U\).
  One is forced to formulate the axiom for universes anyway, because writing
  \(P = Q\) only makes sense when \(P\) and \(Q\) are elements of the same type,
  which we are taking to be a universe \(\U\) here.
\end{remark}

\begin{theorem}\label{Omega-is-set}
  Assuming function extensionality and propositional extensionality for~\(\U\),
  the type \(\Omega_{\U}\) of propositions in \(\U\) is a set.
\end{theorem}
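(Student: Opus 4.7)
The plan is to show that for any two $(P,i),(Q,j) : \Omega_{\U}$, the identity type $(P,i) =_{\Omega_{\U}} (Q,j)$ is a subsingleton, which then means $\Omega_{\U}$ is a set by definition. First I would reduce the problem to showing that $P = Q$ (as elements of $\U$) is a subsingleton. Using \cref{Id-of-Sigma}, this identity type is in bijection with $\Sigma_{p : P = Q}\,\transport^{\issubsingleton}(p,i) = j$; since $\issubsingleton(Q)$ is itself a subsingleton by \cref{being-prop-is-prop}, the fibre component $\transport^{\issubsingleton}(p,i) = j$ is a subsingleton for each $p$, so by \cref{Sigma-is-prop} it suffices to show that the base $P = Q$ is a subsingleton.

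Next I would bring in propositional extensionality. By \cref{Pi-is-prop}, the types $P \to Q$ and $Q \to P$ are both subsingletons, since their codomains are, and then by \cref{Sigma-is-prop} the product $(P \to Q) \times (Q \to P)$ is a subsingleton. Propositional extensionality (\cref{def:prop-ext}) supplies a map
\[
  f : (P \to Q) \times (Q \to P) \to (P = Q),
\]
while path induction gives the canonical map
\[
  g : (P = Q) \to (P \to Q) \times (Q \to P), \qquad g(e) \colonequiv (\transport^{\id_{\U}}(e,-), \transport^{\id_{\U}}(e^{-1},-)).
\]

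The key observation is now that the endofunction $f \circ g : (P = Q) \to (P = Q)$ is constant: for any $e,e' : P = Q$ we have $g(e) = g(e')$ because their common codomain is a subsingleton, and hence $f(g(e)) = f(g(e'))$. Applying Hedberg's Lemma (\cref{Hedberg-Lemma}) with $X \colonequiv \U$ and $x \colonequiv P$ — using the constant endofunctions just constructed at each $Q$ — we conclude that $P = Q$ is a subsingleton, completing the proof.

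The main conceptual obstacle is that propositional extensionality only provides a one-sided map from the biconditional to the identity type, not an equivalence, so one cannot directly invoke \cref{lc-to-prop}. Hedberg's Lemma is precisely the tool that bridges this gap: the composite $f \circ g$ factors through a subsingleton and so is automatically constant, regardless of whether $f$ and $g$ are mutually inverse. A minor bookkeeping point is that function extensionality is used implicitly through \cref{Pi-is-prop} to ensure the biconditional is genuinely a subsingleton.
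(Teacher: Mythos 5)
Your argument follows the same outline as the paper's: reduce to showing $P = Q$ is a subsingleton via the subtype reasoning, observe that the composite $f \circ g$ through the subsingleton $(P\to Q)\times(Q\to P)$ is constant, and conclude via Hedberg's Lemma. But the final invocation of Hedberg's Lemma, as you state it, does not go through. Hedberg's Lemma with $X \colonequiv \U$ and $x \colonequiv P$ requires a constant endofunction on $P = Y$ for \emph{every} $Y : \U$, not merely for the propositions $Q$. You have only built the endofunction when $Q$ is a proposition, since propositional extensionality is the source of the map $f : (P \to Q) \times (Q \to P) \to (P = Q)$ and only applies in that case. Over a universe containing non-propositional types, your family of endofunctions is undefined at most $Y$, so the hypothesis of the lemma is not satisfied.

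The correct target is $X \colonequiv \Omega_{\U}$ with $x \colonequiv (P,i)$. Your first paragraph has already done the work needed here: by \cref{Id-of-Sigma} and \cref{being-prop-is-prop} (or directly via \cref{prop-subtype}), the identity type $(P,i) =_{\Omega_{\U}} (Q,j)$ is equivalent to $P =_{\U} Q$, so your constant endofunctions on $P = Q$ can be conjugated across that equivalence to give the constant endofunctions on $(P,i) = (Q,j)$ that Hedberg's Lemma asks for. (An alternative repair is to observe that any $e : P = Y$ lets you transport $i : \issubsingleton(P)$ to a witness that $Y$ is a proposition, after which propositional extensionality applies and one checks the resulting endofunction is still constant, but this is a detour.) The paper's proof elides this point but is implicitly applying Hedberg's Lemma to $\Omega_{\U}$ rather than to $\U$.
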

\begin{proof}
  The type \(\Omega_{\U}\) consists of pairs \((P,i)\) with \(P : \U\) a type
  and \(i : \issubsingleton(P)\). By~\cref{being-prop-is-prop}, two such pairs
  are equal if and only if their first components are equal, so this is what we
  set out to prove.

  By Hedberg's Lemma (\cref{Hedberg-Lemma}), it suffices to construct a constant
  endomap on \(P = Q\) for every two propositions \(P\) and \(Q\) in \(\U\). But
  the map
  \[
    (P = Q) \xrightarrow{\refl \mapsto (\id,\id)} (P \to Q) \times (Q \to P)
    \xrightarrow{\text{prop-ext}_{\U}} (P = Q)
  \]
  does the job, because the type \((P \to Q) \times (Q \to P)\) is a
  subsingleton by \cref{Sigma-is-prop,Pi-is-prop}.
\end{proof}

\begin{proposition}[Propositional extensionality is a property]
  Assuming only function extensionality, the type expressing propositional
  extensionality for a universe \(\U\) is a subsingleton.
\end{proposition}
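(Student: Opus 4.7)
The strategy is to apply the subsingleton criterion (\cref{subsingleton-criterion}): since we are trying to show that the type of propositional extensionality for \(\U\) is itself a subsingleton, we may assume propositional extensionality for \(\U\) holds and then prove that the type is a subsingleton under that assumption.

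Unfolding the definition, propositional extensionality for \(\U\) is (essentially) the type
\[
  \Pi_{P,Q : \U}\, \issubsingleton(P) \to \issubsingleton(Q) \to (P \to Q) \to (Q \to P) \to P = Q.
\]
By repeated applications of \cref{Pi-is-prop} (which uses function extensionality), to show that this \(\Pi\)-type is a subsingleton it suffices to show that its body \(P = Q\) is a subsingleton whenever \(P\) and \(Q\) are subsingletons in \(\U\).

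To prove the latter, I would reuse the argument from \cref{Omega-is-set}: given any two propositions \(P,Q : \U\), the composite
\[
  (P = Q) \xrightarrow{\refl \mapsto (\id,\id)} (P \to Q) \times (Q \to P) \xrightarrow{\text{prop-ext}_{\U}} (P = Q)
\]
is well-defined because we have assumed propositional extensionality, and it is constant because \((P \to Q) \times (Q \to P)\) is a subsingleton by \cref{Sigma-is-prop,Pi-is-prop}. Hedberg's Lemma (\cref{Hedberg-Lemma}) then gives that \(P = Q\) is a subsingleton.

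The only subtlety is that we must invoke the subsingleton criterion \emph{before} we try to apply the Hedberg argument, since otherwise we have no propositional extensionality available to form the constant endomap; but once this move is made, the proof is a direct reuse of the key step from the proof of \cref{Omega-is-set}.
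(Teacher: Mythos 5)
Your proof is correct and follows essentially the same route as the paper: invoke \cref{subsingleton-criterion} to assume propositional extensionality, reduce via \cref{Pi-is-prop} to showing \(P = Q\) is a subsingleton for propositions \(P, Q : \U\), and then conclude. The only cosmetic difference is that the paper cites \cref{Omega-is-set} directly for that last step, whereas you inline the Hedberg-Lemma argument from its proof.
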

\begin{proof}
  \cref{subsingleton-criterion} tells us that we may assume propositional
  extensionality for \(\U\) to prove the lemma.
  By a repeated application of \cref{Pi-is-prop}, it suffices to
  prove that \(P = Q\) is a proposition for every two propositions \(P\) and
  \(Q\) in \(\U\). But this is given by \cref{Omega-is-set}.
\end{proof}

\begin{remark}\label{axioms-should-be-props}
  It is important that propositional extensionality is a property, because it
  gives us a guarantee that a construction using propositional extensionality
  cannot depend on a specific witness of propositional extensionality as they
  are all the same up to intensional equality.
  Put differently, in adding it as an axiom to our type theory we are adding a
  property and not additional data.
\end{remark}

\section{Embeddings, equivalences and retracts}%
\label{sec:embeddings-equivalences-retracts}

A major application of having defined the notion of (sub)singleton is being able
to define what it means for a map to be an embedding or an equivalence.

\begin{definition}[Fibre, \(\fib_f\)]
  The \emph{fibre} of a map \(f : X \to Y\) at \(y : Y\) is the type
  \(\fib_f(y) \colonequiv \Sigma_{x : X}\,f(x) = y\).%
  \index{fibre|textbf}
  \nomenclature[fibre]{$\fib_f(y)$}{fibre of \(f : X \to Y\) at \(y : Y\)}
\end{definition}

\begin{definition}[Embedding, equivalence, \(X \hookrightarrow Y\), \(X \simeq Y\)]
  \label{def:embedding-equivalence}
  A map \(f : X \to Y\) is
  \begin{enumerate}[(i)]
  \item an \emph{embedding} if all of its fibres are subsingletons, and%
    \index{embedding}
  \item an \emph{equivalence} if all of its fibres are singletons.%
    \index{equivalence}%
    \qedhere
  \end{enumerate}
  We denote the type of embeddings from \(X\) to \(Y\) by
  \(X \hookrightarrow Y\) and the type of equivalences from \(X\) to \(Y\) by
  \(X \simeq Y\).
  \nomenclature[arrowhookright]{$X \hookrightarrow Y$}{type of embeddings from \(X\) to \(Y\)}%
  \nomenclature[arrowsimeq]{$X \simeq Y$}{type of equivalences from \(X\) to \(Y\)}
\end{definition}

We understand this definition as follows: a map \(f : X \to Y\) is an
equivalence if for every \(y : Y\) there is exactly one \(x : X\) with
\(f(x) = y\). Similarly, a map \(f : X \to Y\) is an embedding if for every
\(y : Y\) there is at most one \(x : X\) with \(f(x) = y\).

Observe how the notions of embedding and equivalence and defined in terms of the
fibres of the map. After we introduce the proposition truncation
in~\cref{sec:prop-trunc}, we will see that a map is a surjection if its fibres
are all inhabited and a split surjection if its fibres are all pointed. Thus,
the fibres of a map are of fundamental interest.

\begin{theorem}\label{being-equiv-is-prop}
  Being an embedding/equivalence is a property, i.e.\ for every map
  \(f : X \to Y\) the types expressing that \(f\) is an embedding/equivalence
  are subsingletons.
\end{theorem}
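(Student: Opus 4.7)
The plan is to reduce this directly to the closure results already established. By definition, $f : X \to Y$ being an embedding unfolds to $\Pi_{y : Y}\issubsingleton(\fib_f(y))$, and $f$ being an equivalence unfolds to $\Pi_{y : Y}\issingleton(\fib_f(y))$. So in both cases we are looking at a dependent product indexed by $Y$ whose fibres are of the form $\issubsingleton(A)$ or $\issingleton(A)$ for some type $A$.

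First I would invoke \cref{being-prop-is-prop}, which tells us that for \emph{any} type $A$ (in particular for $A \colonequiv \fib_f(y)$), the types $\issubsingleton(A)$ and $\issingleton(A)$ are themselves subsingletons. Applying this pointwise, each term in the dependent product above is a subsingleton.

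Then I would appeal to \cref{Pi-is-prop}: a dependent product of subsingletons is a subsingleton. This immediately gives that $\Pi_{y : Y}\issubsingleton(\fib_f(y))$ and $\Pi_{y : Y}\issingleton(\fib_f(y))$ are subsingletons, which is precisely what is claimed. Note that $Y$ itself need \emph{not} be a subsingleton, exactly because \cref{Pi-is-prop} has no such hypothesis on the index type.

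There is no real obstacle here: the work has been done in \cref{being-prop-is-prop,Pi-is-prop}, and the proof is just their composition. The only point worth flagging is the (silent) use of function extensionality, which is needed by \cref{Pi-is-prop} and is part of our standing assumptions.
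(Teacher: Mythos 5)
Your proof is correct and takes precisely the same route as the paper, which simply states that the result is an immediate consequence of \cref{Pi-is-prop,being-prop-is-prop}. You have merely spelled out that terse one-liner in full, including the unfolding of the definitions and the remark on function extensionality.
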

\begin{proof}
  Immediate consequence of \cref{Pi-is-prop,being-prop-is-prop}.
\end{proof}

It is natural to wonder if, in the definition of a map being an equivalence, the
uniqueness conditions can be expressed as: ``if we have another \(x' : X\) such
that \(f(x') = y\), then \(x' = x\)''. This is equivalent to the above
definition when \(Y\) is a set, but in general this fails to account for the
structure that the identity types of \(Y\) might carry in the sense that the
types expressing this may fail to be propositions.
The upshot of \cref{being-equiv-is-prop} is that the equivalence between \(X\)
and \(Y\) form a subtype of all functions from \(X\) to \(Y\), as we explain in
\cref{subtype-of-equivalences} below.

The following result is rather useful for proving that a map is an equivalence.
\index{invertibility|(}

\begin{proposition}\label{equiv-iff-invertible}
  A map \(f : X \to Y\) is an equivalence if and only if it is invertible.
\end{proposition}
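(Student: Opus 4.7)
The statement is an ``iff'', and I would prove the two directions separately. The forward direction (equivalence implies invertible) is essentially bookkeeping; the real content lies in the backward direction.

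For equivalence implies invertible, suppose every fibre $\fib_f(y)$ is a singleton with unique element $c_y$. I would define $g : Y \to X$ by $g(y) \colonequiv \fst(c_y)$, so that the right-hand homotopy $f(g(y)) = y$ is delivered by $\snd(c_y)$. For the left-hand homotopy $g(f(x)) = x$, I would observe that both $c_{f(x)}$ and the obvious element $(x, \refl) : \fib_f(f(x))$ inhabit the same singleton type and are therefore equal; applying $\fst$ to this equality yields the desired identification. Invertibility follows.

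For invertible implies equivalence, given $g : Y \to X$ together with $\eta : g \circ f \sim \id_X$ and $\epsilon : f \circ g \sim \id_Y$, I need to exhibit a centre of contraction for every fibre $\fib_f(y)$ and establish its uniqueness. The natural centre is $(g(y), \epsilon(y))$. Given any $(x, p) : \fib_f(y)$, \cref{Id-of-Sigma} reduces the required equality to producing some $q : x = g(y)$ together with an identification $\transport^{(x' \mapsto f(x') = y)}(q, p) = \epsilon(y)$. The natural candidate is $q \colonequiv \eta(x)^{-1} \pathcomp \ap_g(p)$, obtained by composing $\eta(x)^{-1} : x = g(f(x))$ with $\ap_g(p) : g(f(x)) = g(y)$.

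I expect the main obstacle to be the transport equation, which is not directly deducible from $\eta$ and $\epsilon$ in isolation: a bare quasi-inverse carries no built-in coherence between its two homotopies. The standard remedy is the adjointification trick: replace $\epsilon$ by $\epsilon'(y) \colonequiv \epsilon(f(g(y)))^{-1} \pathcomp \ap_f(\eta(g(y))) \pathcomp \epsilon(y)$, which still witnesses $f \circ g \sim \id_Y$ but additionally satisfies the triangle identity $\epsilon'(f(x)) = \ap_f(\eta(x))$. Working with this coherent replacement, the transport equation can be discharged by path induction on $p$, reducing to the case $y \equiv f(x)$ and $p \equiv \refl$, where the triangle identity together with standard computation rules for $\transport$ and $\ap$ along compositions closes the goal.
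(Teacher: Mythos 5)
Your proof is correct. The paper does not actually give its own argument here: it simply refers the reader to \cite[Section~3.10]{Escardo2019} and \cite[Chapter~4]{HoTTBook}. Your write-up reproduces the standard argument found in those sources: the forward direction is indeed bookkeeping from contractibility of the fibres, and for the backward direction you have correctly identified the crux, namely that a bare quasi-inverse carries no coherence between its two homotopies, so one must first ``adjointify'' $\epsilon$ to obtain a half-adjoint equivalence satisfying the triangle identity $\epsilon'(f(x)) = \ap_f(\eta(x))$. Once that is in hand, the path-induction computation of the transport along $\eta(x)^{-1}$ reduces the fibre-equality goal exactly to that triangle identity, as you say. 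This is precisely Theorem~4.2.3 of \cite{HoTTBook} combined with the contractibility of fibres of half-adjoint equivalences, so there is no divergence from the intended argument to report.
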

\begin{proof}
  See \cite[Section~3.10]{Escardo2019} for a proof or \cite[Chapter 4]{HoTTBook}
  where this and related results and issues are discussed at length.
\end{proof}

One may ask why we did not define \(f\) to be an equivalence when it's
invertible. After all, \cref{equiv-iff-invertible} tells us that the two are
logically equivalent. However, they are \emph{not} equivalent in the sense of
\cref{def:embedding-equivalence}, as being invertible may fail to be
property~(recall~\cref{property-and-data}), cf.~\cite[Theorem~4.1.3]{HoTTBook},
while being an equivalence is.
Without going into the details of the proof too much, the circle \(\Circ\)
(recall \cref{circle}) provides an illustrative example where invertibility may
fail to be a property, as the following example makes clear.

\begin{example}[Invertibility is not necessarily a property]%
  \label{circle-invertibility}%
  \index{circle}%
  The type expressing that the identity on \(\Circ\) is invertible is
  \[
    \Sigma_{f : \Circ \to \Circ} \pa*{\Pi_{x : \Circ}{(f \circ \id)(x) =
        \id(x)}} \times \pa*{\Pi_{x : \Circ}{(\id \circ f)(x) = \id(x)}}.
  \]
  But by function extensionality, see \cref{fun-ext-alt}, this is equivalent to
  \[
    \Sigma_{f : \Circ \to \Circ}\pa*{f = \id} \times \pa*{f = \id},
  \]
  which, by \cref{singleton-type-is-singleton}, is equivalent to
  \(\id_{\Circ} = \id_{\Circ}\).
  By function extensionality, this is equivalent to \(\Pi_{x : \Circ}\,x=x\)
  which can be shown to be equivalent to
  \((\operatorname{base} = \operatorname{base})\) and hence to the type of
  integers \(\mathbb Z\), see \cref{circle}.
\end{example}%
\index{invertibility|)}

As mentioned above, the fact that being an equivalence is a property ensures
that the type of equivalences is a subtype (in the sense of~\cref{def:subtype}
below).
Moreover, the notion of an equivalence is crucial for formulating the univalence
axiom~(\cref{def:univalence} below), because the formulation using invertible
maps is provably false~\cite[Exercise~4.6(c)]{HoTTBook}.

\begin{proposition}\label{embedding-criterion}
  A map \(f : X \to Y\) is an embedding if and only if for every \(x,y : X\) we
  have an equivalence \((x = y) \simeq (f(x) = f(y))\).
\end{proposition}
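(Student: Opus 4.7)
The canonical map in question is (by path induction) the action on paths $\ap_f : (x = y) \to (f(x) = f(y))$, so the claim is that $f$ is an embedding if and only if $\ap_f$ is an equivalence at every pair $x,y : X$. The plan is to analyse the fibres of $\ap_f$ and relate them to identity types of fibres of $f$.

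The key step is the following identification. For any $x,y : X$ and $p : f(x) = f(y)$, I would construct an equivalence
\[
  \fib_{\ap_f}(p) \;\simeq\; \bigl((x,p) = (y,\refl_{f(y)})\bigr)\quad\text{in } \fib_f(f(y)).
\]
Unfolding, the fibre is $\Sigma_{q : x = y}\,(\ap_f(q) = p)$, while the right-hand side, by \cref{Id-of-Sigma} applied to the family $\lambda z.\,f(z) = f(y)$, is $\Sigma_{q : x = y}\,\transport^{\lambda z.\,f(z) = f(y)}(q,p) = \refl$. A short path-induction computation (transport in a path family with one endpoint fixed) shows that $\transport^{\lambda z.\,f(z) = f(y)}(q,p) = \ap_f(q)^{-1} \pathcomp p$, and the equation $\ap_f(q)^{-1} \pathcomp p = \refl$ is equivalent to $\ap_f(q) = p$. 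This yields the desired equivalence by reassembling the $\Sigma$.

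For the forward direction, assume $f$ is an embedding, so $\fib_f(f(y))$ is a subsingleton. Then any identity type within $\fib_f(f(y))$ is a singleton (a subsingleton is a set by \cref{set-if-subsingleton}, and between any two of its points there is a unique identification since they are equal). In particular, by the displayed equivalence, every fibre $\fib_{\ap_f}(p)$ is a singleton, so $\ap_f$ is an equivalence.

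For the backward direction, assume $\ap_{f}$ is an equivalence for every pair of points. To show $\fib_f(c)$ is a subsingleton for a fixed $c : Y$, take two elements $(x,p),(x',p') : \fib_f(c)$; it suffices to exhibit an identification between them, since by \cref{subsingleton-criterion} a type is a subsingleton as soon as inhabiting it entails it being one. By \cref{Id-of-Sigma} and the same transport computation as above, the identity type $(x,p) = (x',p')$ in $\fib_f(c)$ is equivalent to $\fib_{\ap_f}\bigl(p \pathcomp p'^{-1}\bigr)$, which is a singleton by hypothesis and hence inhabited. I expect the main obstacle to be bookkeeping with the transport/concatenation identities rather than anything conceptually deep; once the equivalence of fibres in the first step is established correctly, both directions follow cleanly.
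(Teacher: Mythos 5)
Your proof is correct and is essentially the standard argument that the paper defers to (it only cites Escard\'o's notes and gives no proof itself): identify $\fib_{\ap_f}(p)$ with the identity type $(x,p) = (y,\refl)$ inside $\fib_f(f(y))$ via \cref{Id-of-Sigma} and the transport computation, then transfer (sub)singletonhood back and forth. One minor slip: in the backward direction the appeal to \cref{subsingleton-criterion} is both unnecessary and slightly misapplied --- that lemma lets you \emph{assume an element} of the type while proving it is a subsingleton, whereas what you actually do (exhibit an identification between arbitrary $(x,p)$ and $(x',p')$) is already the definition of $\fib_f(c)$ being a subsingleton, so no extra lemma is needed.
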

\begin{proof}
  See \cite[Section~3.26]{Escardo2019}.
\end{proof}

\begin{definition}[Subtype]\label{def:subtype}%
  \index{subtype}%
  A \emph{subtype} of a type \(X\) is a type \(A\) together with an embedding
  \(A \hookrightarrow X\).
\end{definition}

\begin{lemma}\label{prop-subtype}
  If \(Y\) is a proposition-valued type family over \(X\), then
  \({\Sigma_{x : X}Y(x)}\) is a subtype of \(X\), as witnessed by the first
  projection.
  In particular, two elements \((x,y)\) and \((x',y')\) of the \(\Sigma\)-type
  are equal if and only if \(x = x'\).
\end{lemma}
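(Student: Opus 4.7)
The plan is to reduce the statement to showing that the first projection $\fst : \Sigma_{x:X}Y(x) \to X$ is an embedding, since the ``in particular'' clause will then follow immediately from Proposition~\ref{embedding-criterion}: it gives an equivalence $((x,y) = (x',y')) \simeq (\fst(x,y) = \fst(x',y')) \equiv (x = x')$, whose existence in particular yields the ``if and only if''.

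To show that $\fst$ is an embedding, by Definition~\ref{def:embedding-equivalence} I need to verify that the fibre $\fib_{\fst}(x)$ is a subsingleton for every $x : X$. Unfolding the definition, this fibre is $\Sigma_{s : \Sigma_{x':X}Y(x')}\,\fst(s) = x$, which is (definitionally, up to reassociation of $\Sigma$) the type $\Sigma_{x':X}\Sigma_{y : Y(x')}\,x' = x$. The strategy is to show this is equivalent to $Y(x)$, which is a subsingleton by hypothesis, and then conclude via Lemma~\ref{lc-to-prop}.

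The equivalence with $Y(x)$ I would construct by exhibiting an invertible map (hence an equivalence by Proposition~\ref{equiv-iff-invertible}): forward, send $((x',y),p)$ to $\transport^Y(p,y) : Y(x)$; backward, send $y : Y(x)$ to $((x,y),\refl)$. The backward-then-forward composite reduces definitionally to the identity since $\transport^Y(\refl,y) \equiv y$. The forward-then-backward composite requires checking that $((x,\transport^Y(p,y)),\refl) = ((x',y),p)$ in the fibre, which I would handle by path induction on $p$: when $p \equiv \refl$, both sides reduce to $((x,y),\refl)$. This is the only nontrivial calculation and is where one must be careful not to conflate judgemental and propositional equalities, but path induction dispatches it cleanly.

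Having shown $\fib_{\fst}(x) \simeq Y(x)$ with $Y(x)$ a subsingleton, Lemma~\ref{lc-to-prop} (applied to the invertible, hence left-cancellable, underlying map) ensures that each fibre is a subsingleton, so $\fst$ is an embedding; the pair $\bigl(\Sigma_{x:X}Y(x),\fst\bigr)$ is then the required subtype of $X$, and the ``in particular'' clause follows as noted above. No obstacle of substance is expected: the only subtle point is that the backward-then-forward homotopy is not judgemental and needs path induction, but everything else is formal manipulation of results already proved in the excerpt.
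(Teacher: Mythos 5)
Your proof is correct and follows essentially the same route as the paper: both show that each fibre $\fib_{\fst}(x)$ is equivalent to $Y(x)$, hence a subsingleton, and then invoke \cref{embedding-criterion} for the ``in particular'' clause. The only cosmetic difference is that you construct the equivalence $\fib_{\fst}(x) \simeq Y(x)$ by hand via transport and path induction, whereas the paper obtains it more abstractly by reshuffling $\Sigma$-types and using the contractibility of $\Sigma_{x':X}(x'=x)$ from \cref{singleton-type-is-singleton}.
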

\begin{proof}
  We have to show that all fibres of \(\fst : \pa*{\Sigma_{x : X}Y(x)} \to X\)
  are subsingletons. For arbitrary \(x : X\) we have
  \[
    \fib_{\fst}(x) \equiv \pa*{\Sigma_{(x',y) : \Sigma Y}\pa*{x' = x}}
    \simeq \pa*{\Sigma_{x' : X}\pa*{Y(x') \times (x' = x)}}
    \simeq Y(x)
  \]
  by reshuffling the \(\Sigma\)-types and the contractibility of the type
  \(\Sigma_{x' : X}(x'=x)\) at \((x,\refl)\). But \(Y(x)\) is a subsingleton by
  assumption, proving that the first projection is an embedding.
  The second claim follows from \cref{embedding-criterion}.
\end{proof}

\begin{remark}\label{subtype-omit}
  In the situation of \cref{prop-subtype} we often omit the second component of
  elements of \(\Sigma_{x : X}Y(x)\) which is justified because any two elements
  of the second component are equal anyway.
\end{remark}

\begin{example}\label{subtype-of-equivalences}
  For any types \(X\) and \(Y\), the types \(X \hookrightarrow Y\) and
  \(X \simeq Y\) are subtypes of \(X \to Y\) by
  \cref{being-equiv-is-prop,prop-subtype}. Hence, two embeddings or equivalences
  are equal precisely when they are equal as ordinary maps.
  In particular, following \cref{subtype-omit}, we simply write
  \(f : X \simeq Y\) for what is formally \((f,i) : X \simeq Y\) with \(i\)
  witnessing that \(f\) is an equivalence.
\end{example}

\begin{definition}[Section, retraction and retract]%
  \index{section|see {retract}}\index{retraction|see {retract}}\index{retract|textbf}%
  A \emph{section} is a map \(s : X \to Y\) together with a left inverse
  \(r : Y \to X\), i.e.\ the maps satisfy \(r(s(x)) = x\) for every \(x : X\).
  We call \(r\) the \emph{retraction} and say that \(X\) is a \emph{retract} of
  \(Y\).
\end{definition}

\begin{lemma}\label{sections-to-sets-are-embeddings}
  Sections to sets are embeddings.
\end{lemma}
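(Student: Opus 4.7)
The plan is to invoke the embedding criterion of Proposition~\ref{embedding-criterion}: it suffices to show that for every $x,x' : X$, the map $\ap_s : (x = x') \to (s(x) = s(x'))$ is an equivalence. I would not attempt to show directly that the fibres of $s$ are subsingletons, since reducing to the embedding criterion lets me piggy-back on the set structure of $Y$.

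First, I would observe that $s$ is left-cancellable. Indeed, from any $p : s(x) = s(x')$ we obtain
\[
x \;=\; r(s(x)) \;\xrightarrow{\;\ap_r(p)\;}\; r(s(x')) \;=\; x',
\]
using the retraction equation $r \circ s \sim \id$ at the two equalities. Applying Lemma~\ref{lc-to-prop} to this left-cancellable map into the set $Y$, we conclude that $X$ is itself a set.

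Consequently, both $(x = x')$ and $(s(x) = s(x'))$ are subsingletons. Hence the map $\ap_s$ between them admits the left-cancellability map in the opposite direction, and any map between subsingletons that is logically invertible is automatically an equivalence: each fibre $\Sigma_{p : x = x'} \ap_s(p) = q$ has its second component contractible because $s(x) = s(x')$ is a subsingleton, while the first component is either empty or contractible depending on whether $q$ is matched, and the reverse map ensures it is in fact contractible whenever $q$ is given. Thus $\ap_s$ is an equivalence, and by Proposition~\ref{embedding-criterion} the section $s$ is an embedding.

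There is no real obstacle here; the only subtlety to keep in mind is the conceptual distinction between left-cancellability (which is what the retraction directly gives) and being an embedding in the sense of Definition~\ref{def:embedding-equivalence}, and observing that the gap between the two collapses as soon as we know $X$ is a set — which is precisely what Lemma~\ref{lc-to-prop} provides.
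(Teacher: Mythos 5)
Your proof is correct, but takes a slightly different route than the paper. The paper argues directly that all fibres of $s$ are subsingletons: since $Y$ is a set, $s(x) = y$ is a proposition, so by \cref{prop-subtype} (specifically its ``in particular'' clause) it suffices to show $s(x) = s(x')$ implies $x = x'$, which follows immediately by applying $r$. Your route instead passes through \cref{embedding-criterion} and the observation that $X$ must also be a set (via \cref{lc-to-prop}), so that $\ap_s$ becomes a map between propositions with a map back and hence an equivalence. Both arguments pivot on the same elementary left-cancellability observation; the paper's version is a bit shorter because it does not need to first establish that $X$ is a set, while yours makes that fact explicit and uses the ``$\ap_f$ is an equivalence'' characterisation rather than reasoning about fibres directly. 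One stylistic point: your justification for why $\ap_s$ is an equivalence, while ultimately correct, could be phrased more cleanly as a single application of the general fact that a map between propositions with a map in the other direction is an equivalence (the fibre is a $\Sigma$-type of propositions over a proposition, hence a proposition, and the reverse map shows it is inhabited); the hedging about ``either empty or contractible depending on whether $q$ is matched'' is unnecessary once you note the fibre is always inhabited via the left-cancellability map applied to $q$.
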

\begin{proof}
  Let \(s : X \to Y\) be a section to a set. Since \(Y\) is a set, the type
  \(s(x) = y\) is a proposition for every \(x : X\) and \(y : Y\). Hence, by
  \cref{prop-subtype}, to prove that every fibre of \(s\) is a subsingleton, it
  suffices to prove that \(x = x'\) whenever \(s(x) = s(x')\). But if
  \(s(x) = s(x')\), then we can apply the retraction on both sides to get
  \(x = x'\).
\end{proof}

\begin{remark}
  The restriction to sets in \cref{sections-to-sets-are-embeddings} is a
  necessary one, because \cite[Remark~3.11(2)]{Shulman2016} tells us that if
  every section is an embedding then every type is a set.
\end{remark}

\section{Function extensionality revisited}\label{sec:fun-ext-revisited}

Armed with the notion of equivalence, we are now ready to give the official
definition of function extensionality that improves on the naive version
presented in \cref{naive-fun-ext}.

\begin{definition}[\Axiom: Function extensionality]\label{def:fun-ext}%
  \index{extensionality!function}%
  \emph{Function extensionality} asserts that for every two (dependent)
  functions \(f,g : \Pi_{x : X} Y(x)\), the canonical map from \(f = g\) to
  \(\Pi_{x : X}f(x) = g(x)\) is an equivalence.
\end{definition}

Naive function extensionality and function extensionality are logically
equivalent, but the advantage of the above official formulation of function
extensionality is that its type is a subsingleton, i.e.\ function extensionality
is a property. So the distinction is similar to that between
invertible maps and equivalences.

\begin{proposition}
  Assuming function extensionality, the type expressing function extensionality
  is a subsingleton.
\end{proposition}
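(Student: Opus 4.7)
The plan is to invoke \cref{subsingleton-criterion}, which allows us to assume function extensionality while proving that the type expressing function extensionality is itself a subsingleton. This is the standard trick, mirroring how we proved that propositional extensionality is a property: we only need to rule out two distinct witnesses, and in the presence of such a witness the usual machinery for $\Pi$-types of propositions becomes available.

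Concretely, having assumed function extensionality, we unfold \cref{def:fun-ext}: the type in question is a $\Pi$-type (over universes, a type $X$, a type family $Y$ over $X$, and a pair of functions $f,g : \Pi_{x : X}Y(x)$) whose body asserts that the canonical map $(f = g) \to \Pi_{x : X}f(x) = g(x)$ is an equivalence. By \cref{being-equiv-is-prop}, this body is a subsingleton for every choice of arguments. Now I would peel off the $\Pi$-quantifiers one by one, applying \cref{Pi-is-prop} at each step; crucially, \cref{Pi-is-prop} only needs function extensionality for the codomain family, which is exactly what we have assumed. Since the quantification over universes and the successive quantifications over $X$, $Y$, $f$, and $g$ are all $\Pi$'s over a subsingleton-valued family, the entire statement is a subsingleton.

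There is essentially no obstacle here beyond being careful about the circular-looking reliance on function extensionality in the very proof that it is a property; this is resolved cleanly by \cref{subsingleton-criterion}. The only minor subtlety is that the statement of function extensionality is implicitly a scheme over universes (analogous to \cref{def:prop-ext}), but since \cref{Pi-is-prop} applies to $\Pi$-types over arbitrary types, no additional work is required to handle this outer quantification.
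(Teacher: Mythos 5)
Your proof is correct and rests on exactly the two lemmas the paper uses: \cref{being-equiv-is-prop} to see that the body of the quantified statement (``the canonical map is an equivalence'') is a subsingleton, and \cref{Pi-is-prop} to close the $\Pi$-quantifiers, which is legitimate because function extensionality is available. One small redundancy: since the proposition already hypothesizes function extensionality, the detour through \cref{subsingleton-criterion} does no work here — you can apply \cref{Pi-is-prop} directly. The self-assuming trick was genuinely needed in the propositional extensionality case (where only function extensionality was hypothesized but the proof needed propositional extensionality via \cref{Omega-is-set}), but here the hypothesis already gives you everything you need.
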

\begin{proof}
  This follows from \cref{Pi-is-prop,being-equiv-is-prop}.
\end{proof}

\section{Propositional truncation, images and surjections}%
\label{sec:prop-trunc}%
\index{propositional truncation|(textbf}%
\index{truncation|see {propositional truncation}}

We turn to introducing propositional truncations, which we will motivate through
the problem of defining the image of a map. Intuitively, the image of
\(f : X \to Y\) should be the collection of elements \(y : Y\) such that there
exists \(x : X\) with \(f(x) = y\).
A naive attempt at defining this might lead us to the type
\(\Sigma_{y : Y}\Sigma_{x : X}\,f(x) = y\).

Notice that \(\Sigma\) plays a double role here: the first \(\Sigma\)
\emph{collects} elements \(y : Y\), while the second \(\Sigma\) supposedly
expresses the existence of \(x : X\) with \(f(x) = y\). This hints at a problem
and indeed there is one, because the type
\(\Sigma_{y : Y}\Sigma_{x : X}\,f(x)=y\) is obviously equivalent to
\(\Sigma_{x : X}\Sigma_{y : Y}\,f(x) = y\), which in turn is equivalent to just
\(X\), because for every \(x : X\), the type \(\Sigma_{y : Y}\,f(x) = y\) is a
singleton by \cref{singleton-type-is-singleton}.

The above illustrates our lack of expressing ``there exists'' or ``we have an
\emph{unspecified}''. Instead of collecting \(x : X\) for which \(f(x) = y\), we
only wish to record the knowledge that there is some \(x : X\) with
\(f(x) = y\). We will do so by means of propositional truncations and
subsequently use them to define images and surjections.

\subsection{Propositional truncation}\label{sec:propositional-truncation}

The propositional truncation is a higher inductive type; the only one we use in
this thesis. We postulate a constructor \(\squash*{-}\) that takes a type and
returns a proposition: its \emph{propositional truncation}.
\nomenclature[pipedouble]{$\squash{-}$}{propositional truncation}%
We require, unless explicitly stated otherwise, that our universes are closed
under propositional truncations, i.e.\ if \(X : \U\), then \(\squash{X} : \U\).
Moreover, as part of the propositional truncation, we postulate that we have a
map \(\tosquash*{-} : X \to \squash*{X}\) for every type \(X\).
\nomenclature[pipe]{$\tosquash{-}$}{unit of the propositional truncation}

We think of an element of \(\squash{X}\) as an \emph{unspecified} element of
\(X\). The map \(\tosquash{-}\) then says that every specified elements gives
rise to an unspecified one.

In the vocabulary of Martin-L\"of Type Theory, the above gives a formation and
an introduction rule for the propositional truncation, but we have not specified
an elimination and computation rule yet. The elimination principle expresses
that the propositional truncation is a reflector in the categorical sense: it is
a left adjoint to the inclusion of proposition into all types. Spelled out it
says that every map \(f : X \to P\) to a proposition factors through
\(\tosquash{-} : X \to \squash{X}\), i.e.\ we have a map
\(\bar{f} : \squash{X} \to P\) such that the diagram
\[
  \begin{tikzcd}
    X \ar[dr,"\tosquash{-}"'] \ar[rr,"f"] & & P \\
    & \squash{X} \ar[ur,dashed,"\bar{f}"']
  \end{tikzcd}
\]
commutes.

What is paramount here is that this universal property holds for all
propositions in arbitrary universes and not just for those in the same universe
as \(X\). We will return to this phenomenon
in~\cref{sec:set-quotients-from-propositional-truncations}.

Some sources, e.g.~\cite{HoTTBook}, also demand that the diagram above commutes
\emph{definitionally}: for every \(x : X\), we have
\(f(x) \equiv \bar{f}(\tosquash{x})\). Having definitional equalities has some
interesting consequences, such as being able to prove function
extensionality~\cite[Section~8]{KrausEtAl2017}.
We do not require definitional equalities, but notice that
we do have \(f(x) = \bar{f}(\tosquash{x})\) (up to an identification) for every
\(x : X\), as \(P\) is a subsingleton.
In particular it follows using function extensionality that \(\bar{f}\) is the
unique factorisation.

From the universal property, we can prove that \(\squash*{-}\) is a functor,
because any map \({f : X \to Y}\) between types gives rise to a necessarily
unique map \({\squash{f} : \squash{X} \to \squash{Y}}\) such that
\(\squash{f}(\tosquash{x}) = \tosquash{f(x)}\) for every \(x : X\).
Moreover, if a propositional truncation exists, then it is unique up to unique
equivalence.
\index{propositional truncation|)textbf}

\begin{definition}[Unspecified and specified existence, \(\exists_{x : X}Y(x)\)]
  \hfill
  \begin{enumerate}[(i)]
  \item We suggestively write \(\exists_{x : X}Y(x)\) for the propositional
    truncation of \(\Sigma_{x : X}Y(x)\).
    \nomenclature[exists]{$\exists_{x : X}Y(x)$}{existential quantifier,
      propositional truncation of the type \(\Sigma_{x : X}Y(x)\)}
  \item We say that ``there exists (some) \(x : X\) with \(Y(x)\)'' or that we
    ``have an unspecified \(x : X\) with \(Y(x)\)'' to mean that we have an
    element of \(\exists_{x : X}Y(x)\).%
    \index{existence!unspecified|textbf}
  \item By contrast, we say that we ``have \(x : X\) with \(Y(x)\)'' or
    sometimes for emphasis, that we ``have a \emph{specified} \(x : X\) with
    \(Y(x)\)'' to mean that we have an element of \(\Sigma_{x : X}Y(x)\).%
    \index{existence!specified|textbf}%
    \qedhere
  \end{enumerate}
\end{definition}

The following lemma is sometimes known as ``the type-theoretic axiom of
choice'', which is a misnomer, because, as emphasised in the above distinction
between specified and unspecified existence, there is no choice involved, since
elements of \(\Sigma\)-types are specified witnesses.
A correct formulation of the axiom of choice will be presented
in~\cref{def:axiom-of-choice}.

\begin{lemma}[Distributivity of \(\Pi\) over \(\Sigma\)]%
  \label{Pi-Sigma-distr}
  For every type family \(Y\) over a type \(X\) and every family
  \(P : \Pi_{x : X}\pa*{Y(x) \to \U}\) we have an equivalence
  \[
    \pa*{\Pi_{x : X}\Sigma_{y : Y(x)}P(x,y)}
    \simeq \pa*{\Sigma_{f : \Pi Y} \Pi_{x : X}P(x,f(x))}.
  \]
\end{lemma}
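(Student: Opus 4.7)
The plan is to construct explicit maps in both directions and verify they are mutually inverse, then apply Proposition~\ref{equiv-iff-invertible}.

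Going left-to-right, I~would send \(g : \Pi_{x : X}\Sigma_{y : Y(x)}P(x,y)\) to the pair \(\pa*{\lambdadot{x}{\fst(g(x))},\, \lambdadot{x}{\snd(g(x))}}\). Here \(\fst(g(x)) : Y(x)\) gives the first component of the output, a function in \(\Pi Y\), and \(\snd(g(x)) : P(x, \fst(g(x)))\) is exactly an element of \(P(x, f(x))\) for \(f \colonequiv \lambdadot{x}{\fst(g(x))}\), so the second component lies in \(\Pi_{x : X}P(x,f(x))\) as required.

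Going right-to-left, I~would send \((f, h)\) with \(f : \Pi Y\) and \(h : \Pi_{x : X}P(x,f(x))\) to the function \(\lambdadot{x}{(f(x), h(x))} : \Pi_{x : X}\Sigma_{y : Y(x)}P(x,y)\).

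For the round trips, starting from \(g\), the composite produces \(\lambdadot{x}{(\fst(g(x)), \snd(g(x)))}\); this is pointwise equal to \(g\) because for each \(x\), path induction on the \(\Sigma\)-type \(g(x)\) reduces the claim \((\fst(g(x)), \snd(g(x))) = g(x)\) to the trivial case of a pair. Starting from \((f,h)\), the composite produces \(\pa*{\lambdadot{x}{f(x)},\, \lambdadot{x}{h(x)}}\), which is pointwise equal to \((f,h)\) in each component. In both cases function extensionality (\cref{def:fun-ext}, in its dependent form via \cref{fun-ext-alt}) upgrades these pointwise equalities to actual identifications between functions, and for the second round trip we additionally use \cref{Id-of-Sigma} to combine the componentwise equalities into an equality of pairs. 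Invertibility then yields the equivalence by \cref{equiv-iff-invertible}.

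There is no real obstacle here; the statement is essentially a repackaging of the currying/uncurrying principle for \(\Sigma\)-types. The only subtle point is being careful that function extensionality is genuinely needed for the two round trips (since intensional MLTT alone would give only pointwise equalities), which is why this lemma, despite its ``axiom of choice'' folklore name, is a theorem in our setting.
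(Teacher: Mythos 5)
Your proof takes essentially the same approach as the paper's: both construct the map that splits \(g\) into \(\pa*{\fst \circ g, \snd \circ g}\) and the map that pairs \((f,h)\) into \(\lambdadot{x}{(f(x),h(x))}\), and then check the two round trips. The paper dismisses the latter step with the phrase ``a direct computation shows that the maps above are indeed inverses''---which, given the \(\eta\)-rules for \(\Pi\)- and \(\Sigma\)-types available in the \Agda\ formalisation, makes the round trips hold definitionally---whereas you deliberately spell out the fallback argument via function extensionality and \cref{Id-of-Sigma}; both are fine, and your more explicit route is the safer one in a type theory lacking definitional \(\eta\) for \(\Sigma\).
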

\begin{proof}
  In the left-to-right direction, assume we have
  \(\varphi : \Pi_{x : X}\Sigma_{y : Y(x)}P(x,y)\). Then, we define
  \(f : \Pi Y\) by \(f \colonequiv {{\fst} \circ {\varphi}}\) and we see that
  \(\varphi\) yields an element of \(P(x,f(x))\) for every \(x : X\).
  Conversely, if we have \(f : \Pi Y\) and \(\rho : \Pi_{x : X}P(x,f(x))\), then
  we define \(\varphi : \Pi_{x : X}\Sigma_{y : Y(x)}P(x,y)\) as
  \(\varphi(x) \colonequiv \pa*{f(x),\rho(x)}\).
  Finally, a direct computation shows that the maps above are indeed inverses.
\end{proof}

\begin{definition}[Inhabited]%
  \label{def:inhabited}
  We say that a type is \emph{inhabited} if we have an element of its
  propositional truncation.%
  \index{inhabited|textbf}
\end{definition}

Thus, a type is inhabited if we have an unspecified element of it. We do not use
the word \emph{nonempty} for this, because in our constructive setting this will
mean something weaker, as explained in~\cref{sec:logic}.

\subsection{Images and surjections}
As discussed, we now use the propositional truncation to define the image of
map.

\begin{definition}[Image, \(\image(f)\), corestriction]
  For a map \(f : X \to Y\) we define
  \begin{enumerate}[(i)]
  \item its \emph{image} as
    \(\image(f) \colonequiv \Sigma_{y : Y}\exists_{x : X}\,f(x) = y\), and%
    \index{image}
    \nomenclature[im]{$\image(f)$}{image of a map \(f\)}
  \item its \emph{corestriction} as the map \(f : X \to \image(f)\) given by
    \(x \mapsto \pa*{f(x),\tosquash{\pa{x,\refl}}}\).%
    \index{corestriction}%
    \qedhere
  \end{enumerate}
\end{definition}

Another take on the problem described in the introduction to this section is
that the type \(\Sigma_{y : Y}\Sigma_{x : X}\,f(x) = y\), being equivalent to
\(X\), is not a subtype of \(Y\). Reassuringly, with the above official
definition of image we do get a subtype.

\begin{definition}[Surjection, \(X \surj Y\)]
  A map \(f : X \to Y\) is a \emph{surjection} if all fibres are inhabited. In
  others words, for every \(y : Y\), there exists some \(x : X\) with
  \(f(x) = y\). The type of surjections from \(X\) to \(Y\) is denoted by
  \(X \surj Y\).%
  \index{surjection}%
  \nomenclature[arrowtwohead]{$X \surj Y$}{type of surjections from \(X\) to \(Y\)}
\end{definition}

\begin{lemma}\label{corestrictions-are-surjections}
  All corestrictions are surjections.
\end{lemma}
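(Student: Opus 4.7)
To prove that the corestriction $\bar{f} : X \to \image(f)$ of a map $f : X \to Y$ is a surjection, I need to show that for every element $(y, p) : \image(f)$, where $p : \exists_{x : X} f(x) = y$, the fibre $\fib_{\bar{f}}(y, p)$ is inhabited, i.e.\ I must construct an element of $\exists_{x : X} \bar{f}(x) = (y, p)$.

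Since the goal is a propositional truncation and hence a subsingleton, the plan is to apply the universal property of the propositional truncation (as described in \cref{sec:propositional-truncation}) to the witness $p : \exists_{x : X} f(x) = y$. This reduces the problem to the case where we have a specified pair $(x, e) : \Sigma_{x : X} f(x) = y$. I would then take $x$ itself as the witness of the existential, so it remains to show that $\bar{f}(x) = (y, p)$ in the image.

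Unfolding the definition, this amounts to showing $(f(x), \tosquash{(x, \refl)}) = (y, p)$ as elements of $\image(f) \equiv \Sigma_{y : Y} \exists_{x : X} f(x) = y$. Here the key observation is that the second component type $\exists_{x : X} f(x) = y$ is a subsingleton by construction. Hence by \cref{prop-subtype}, equality in this $\Sigma$-type is determined by equality of first components, and it suffices to prove $f(x) = y$, which is precisely~$e$.

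There is no real obstacle here; the only subtle point is ensuring that the proposition we are eliminating into is indeed a subsingleton so that the induction principle of $\squash*{-}$ applies. This is immediate since $\exists$ is defined as a propositional truncation, and so all fibres of the corestriction are inhabited, making $\bar{f}$ a surjection.
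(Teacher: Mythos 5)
Your proof is correct and takes essentially the same approach the paper intends; the paper simply compresses the argument to ``By definition of the corestriction,'' whereas you spell out the needed steps (eliminating the truncation into a propositional goal, using \cref{prop-subtype} to reduce equality in the image to equality of first components).
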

\begin{proof}
  By definition of the corestriction.
\end{proof}

\begin{lemma}[Surjection induction]\label{surjection-induction}%
  \index{surjection!induction}%
  If \(f : X \to Y\) is a surjection, then the following induction principle
  holds: for every subsingleton-valued \(P : Y \to \W\), with \(\W\) an
  \emph{arbitrary} universe, if \(P(f(x))\) holds for every \(x : X\), then
  \(P(y)\) holds for every \(y : Y\).

  In the other direction, for any map \(f : X \to Y\), if the above induction
  principle holds for the specific family
  \({P(y) \colonequiv \exists_{x : X}(f(x) = y)}\), then \(f\) is a surjection.
\end{lemma}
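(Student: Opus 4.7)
The plan is to prove the two directions separately, using the universal property of the propositional truncation for the forward direction and specialising the induction principle for the converse.

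For the forward direction, assume $f : X \surj Y$ and suppose $\varphi : \Pi_{x : X} P(f(x))$ with each $P(y)$ a subsingleton. Fix an arbitrary $y : Y$; I aim to produce an element of $P(y)$. Surjectivity of $f$ gives an element of $\exists_{x : X}\, f(x) = y$, i.e.\ of $\squash*{\fib_f(y)}$. I would define an auxiliary map $\fib_f(y) \to P(y)$ sending $(x,p)$ to $\transport^P(p,\varphi(x))$. Since the codomain $P(y)$ is a subsingleton, the universal property of the propositional truncation (recall from \cref{sec:propositional-truncation}) factors this map through $\squash*{\fib_f(y)}$, yielding the desired element of $P(y)$.

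For the converse, assume the induction principle holds for the specific family ${P(y) \colonequiv \exists_{x : X}\, f(x) = y}$. This family is subsingleton-valued by construction, so the induction principle applies. To conclude that $f$ is a surjection, I need to show $P(y)$ for every $y : Y$; by the hypothesised induction principle, it suffices to exhibit an element of $P(f(x'))$ for every $x' : X$. But $P(f(x'))\equiv\exists_{x : X}\, f(x) = f(x')$ is inhabited by $\tosquash*{(x',\refl_{f(x')})}$.

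Neither direction presents any real obstacle: the forward implication is essentially an application of the universal property of $\squash*{-}$ combined with transport, and the converse is a one-line specialisation. The only minor point to be careful about is that the induction principle is required to hold for families valued in an \emph{arbitrary} universe $\W$, which is precisely what the universal property of the propositional truncation provides (as stressed in \cref{sec:propositional-truncation}); this ensures the forward direction works uniformly in $\W$.
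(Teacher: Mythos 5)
Your proof is correct and follows the same route as the paper's: the forward direction applies the universal property of the propositional truncation using that \(P(y)\) is a subsingleton, and the converse specialises the induction principle to the fibre family. Your version is slightly more explicit in spelling out the \(\transport^P(p,\varphi(x))\) step, which the paper elides, but this is the same argument.
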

\begin{proof}
  Suppose that \(f : X \to Y\) is a surjection, let \(P : Y \to \W\) be
  subsingleton-valued and assume that \(P(f(x))\) holds for every \(x : X\).
  Now let \(y : Y\) be arbitrary. We are to prove that \(P(y)\) holds. Since
  \(f\) is a surjection, we have \(\exists_{x : X}(f(x) = y)\). But \(P(y)\) is
  a subsingleton, so, by the universal property of the propositional truncation,
  we may assume that we have a specific \(x : X\) with \(f(x) = y\). But then
  \(P(y)\) must hold, because \(P(f(x))\) does by assumption.

  For the other direction, notice that if
  \(P(y) \colonequiv \exists_{x : X}(f(x) = y)\), then \(P(f(x))\) clearly holds
  for every \(x : X\). So by assuming that the induction principle applies, we
  get that \(P(y)\) holds for every \(y : Y\), which says exactly that \(f\) is
  a surjection.
\end{proof}

\begin{lemma}\label{factor-map-through-corestriction}
  Every map \(f : X \to Y\) factors as a surjection followed by an embedding:
  \(X \surj \image(f) \hookrightarrow Y\), where the first map is the
  corestriction of \(f\) and the second map is the first projection.
\end{lemma}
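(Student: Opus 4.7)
The plan is to verify the three components of the factorisation separately: that the corestriction is a surjection, that the first projection out of $\image(f)$ is an embedding, and that their composite recovers $f$.

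First, the surjectivity of the corestriction $X \to \image(f)$ is exactly the content of \cref{corestrictions-are-surjections}, so nothing new is required there.

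Second, for the first projection $\fst : \image(f) \to Y$ to be an embedding, observe that $\image(f) \equiv \Sigma_{y : Y} \exists_{x : X}\,f(x) = y$ and that the dependent family $y \mapsto \exists_{x : X}\,f(x) = y$ is subsingleton-valued, since propositional truncations are propositions by design. Hence \cref{prop-subtype} applies and tells us directly that $\fst$ is an embedding, exhibiting $\image(f)$ as a subtype of $Y$.

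Third, the composite of the corestriction with $\fst$ sends $x : X$ to $\fst\pa{f(x), \tosquash{(x,\refl)}} \equiv f(x)$, so it equals $f$ judgementally (and in particular up to an identification, which is all that is needed by function extensionality if one prefers a propositional equality).

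I do not anticipate any real obstacle here: the lemma is essentially a bookkeeping statement assembling \cref{corestrictions-are-surjections,prop-subtype} with the definitions of $\image(f)$ and the corestriction. The only subtle point worth flagging is that the argument genuinely uses the propositional truncation in the second component of $\image(f)$: without it, the analogous ``image'' $\Sigma_{y : Y}\Sigma_{x : X}\,f(x) = y$ would be equivalent to $X$ (as noted in the discussion preceding \cref{sec:propositional-truncation}) and its first projection would fail to be an embedding in general.
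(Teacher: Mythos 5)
Your proof is correct and follows exactly the same route as the paper: surjectivity of the corestriction via \cref{corestrictions-are-surjections}, the embedding property of $\fst$ via \cref{prop-subtype} applied to the subsingleton-valued family $y \mapsto \exists_{x : X}\,f(x) = y$, and the immediate observation that the composite recovers $f$. The extra commentary about why the propositional truncation is essential is accurate but not needed for the proof itself.
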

\begin{proof}
  That \(f\) is equal to the composite \(X \surj \image(f) \hookrightarrow Y\)
  is immediate. Moreover, the corestriction \(X \to \image(f)\) is a surjection
  by \cref{corestrictions-are-surjections}, and the first projection
  \(\image(f) \to Y\) is an embedding because of \cref{prop-subtype}.
\end{proof}

\subsection{Mapping from propositional truncations into sets}%
\label{sec:theorem-by-Kraus-et-al}

We recall a result due to Kraus, Escard\'o, Coquand and
Altenkirch~\cite{KrausEtAl2017} which has several applications throughout this
thesis.

\begin{theorem}[{\cite[Theorem~5.4]{KrausEtAl2017}}]
  \label{constant-map-to-set-factors-through-truncation}%
  \index{constant}%
  Every constant map to a set factors through the truncation of its domain.
\end{theorem}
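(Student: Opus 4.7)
The plan is to factor the constant map \(f : X \to Y\) through its image \(\image(f) \colonequiv \Sigma_{y : Y}\exists_{x : X}\,f(x) = y\) and exploit the hypotheses to conclude that \(\image(f)\) is a subsingleton. Once this is shown, the corestriction \(X \to \image(f)\) takes values in a proposition, so the universal property of the propositional truncation from \cref{sec:propositional-truncation} yields a (necessarily unique) factorisation \(\widetilde{f} : \squash{X} \to \image(f)\). Composing with the first projection \(\image(f) \to Y\), which is an embedding by \cref{prop-subtype}, produces the desired map \(\bar{f} : \squash{X} \to Y\). Unfolding the definitions gives \(\bar{f}(\tosquash{x}) = \fst(\widetilde{f}(\tosquash{x})) = \fst(f(x),\tosquash{(x,\refl)}) = f(x)\) for every \(x : X\), as required.

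The heart of the argument is therefore the proof that \(\image(f)\) is a subsingleton. I would argue as follows: let \((y,t), (y',t') : \image(f)\) be arbitrary. Since \(\image(f)\) is a subtype of \(Y\) via the first projection, \cref{prop-subtype} reduces the task to producing an identification \(y = y'\). Because \(Y\) is a set, the type \(y = y'\) is itself a subsingleton, so we may apply the universal property of the propositional truncation to \(t\) and \(t'\) and assume that we have specified \(x, x' : X\) with \(f(x) = y\) and \(f(x') = y'\). Constancy of \(f\) then provides \(f(x) = f(x')\), and concatenating identifications gives \(y = f(x) = f(x') = y'\).

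The main obstacle is conceptual rather than technical: it is crucial that both hypotheses are used in showing that \(\image(f)\) is propositional. Constancy of \(f\) is what lets us conclude \(y = y'\) from two otherwise unrelated witnesses \(x\) and \(x'\); without it \(\image(f)\) is merely the set-theoretic image and typically fails to be a subsingleton. Meanwhile, the assumption that \(Y\) is a set is what makes \(y = y'\) a subsingleton, so that the recursion principle of \(\squash{-}\) applies and the hidden witnesses in \(t\) and \(t'\) can actually be extracted. Neither ingredient can be dropped. A tempting alternative would be to work with \(\Sigma_{y : Y}\Pi_{x : X}\,f(x) = y\) in place of \(\image(f)\), but this type collapses to \(Y\) when \(X\) is empty and so is not in general a proposition; routing through the truncated image is what avoids this pitfall.
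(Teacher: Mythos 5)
Your proof is correct and follows essentially the same route as the paper's: both factor \(f\) through \(\image(f)\), reduce to showing that \(\image(f)\) is a proposition, and establish this by using \cref{prop-subtype} to reduce to \(y = y'\), which is a proposition since \(Y\) is a set, so that the truncated witnesses can be eliminated and constancy applied. Your additional remarks on the necessity of both hypotheses and on why the \(\Pi\)-based alternative fails are accurate but not part of the paper's argument.
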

\begin{proof}
  Suppose that \(f : X \to Y\) is a constant map to a set. By
  \cref{factor-map-through-corestriction} and the universal property of the
  propositional truncation it suffices to prove that the image of \(f\) is a
  proposition, as this would yield the dashed map making the diagram
  \[
    \begin{tikzcd}
      & \squash{X} \ar[d,dashed] \\
      X
      \ar[ur,"\tosquash{-}"]
      \ar[r,twoheadrightarrow]
      \ar[rr,bend right,"f"'] &
      \image(f) \ar[r,hookrightarrow] & Y
    \end{tikzcd}
  \]
  commute.
  So suppose that we have \(y,y' : Y\) such that there exists some \(x : X\)
  with \(f(x) = y\) and some \(x' : X\) with \(f(x') = y'\). By
  \cref{prop-subtype}, we only have to prove that \(y = y'\). But \(Y\) is
  assumed to be a set, so this a proposition. Hence, we can assume that we have
  specified \(x : X\) and \(x' : X\) with \(f(x) = y\) and \(f(x') = y'\). But
  then \(y = f(x) = f(x') = y'\), as \(f\) is assumed to be constant.
\end{proof}

The theorem can be explained at a high level: since \(f\) is constant it does
not matter what element we have of \(X\) (at least when \(Y\) is a set and has
no higher dimensional structure). Thus, at least intuitively, as soon as we know
that there exists some \(x : X\) we should obtain a corresponding element of
\(Y\), because the choice of \(x\) is irrelevant.

\section{Logic, (semi)decidability and constructivity}\label{sec:logic}%
\index{logic|(}%
\index{Curry--Howard|(}

In univalent foundations and motivated by the discussion at the start of
\cref{sec:prop-trunc}, we refine the Curry--Howard paradigm of
propositions-as-types to propositions-as-subsingletons.
That is, logical statements will be interpreted as types that have at most one
element.
For example, we
interpret the existential quantifier as the propositional truncation of
\(\Sigma\).
Thus, the logic in traditional set-level mathematics is encoded according to the
following table.

\begin{table}[h]
  \centering
  \begin{tabular}{cc}\toprule
    Logical proposition & Subsingleton \\\midrule
    True  & \(\One\) \\
    False & \(\Zero\) \\
    \(P\) and \(Q\) & \(P \times Q\) \\
    \(P\) implies \(Q\) & \(P \to Q\) \\
    \(P\) or \(Q\) & \(\squash{P + Q}\) \\
    For every \(x : X\) we have \(P(x)\) & \(\Pi_{x : X}P(x)\) \\
    There exists \(x : X\) such that \(P(x)\) & \(\squash*{\Sigma_{x : X}P(x)}\) \\\bottomrule
  \end{tabular}
  \caption{Curry--Howard in univalent foundations.}
  \label{CH}
\end{table}

Note that \cref{Sigma-is-prop,Pi-is-prop} ensure that \(P \times Q\),
\(P \to Q\) and \(\Pi_{x : X}P(x)\) are propositions if \(P\), \(Q\) and each
\(P(x)\) are, while we use truncations to ensure that we get subsingletons for
\(\lor\) and the existential quantifier.

\begin{definition}[Logical or, \(\lor\)]
  We write \(X \lor Y \colonequiv \squash{X + Y}\) for any two types \(X\) and
  \(Y\).
  \nomenclature[vee]{$\lor$}{logical or}
  \nomenclature[not]{$\lnot$}{negation}
\end{definition}

\begin{definition}[Negation, \(\lnot\)]
  The negation of a type \(X\) is denoted by \(\lnot X\) and defined as
  \(X \to \Zero\).
\end{definition}

\begin{remark}
  Strictly speaking we should specify a universe for \(\Zero\), but the choice
  is immaterial because \(\Zero_\U\) and \(\Zero_\V\) are easily seen to be
  equivalent for any two universes \(\U\) and \(\V\). For the sake of
  definiteness, we take \(\Zero_{\U_0}\) in the definition of negation.
\end{remark}

Note that the negation of any type is a subsingleton by \cref{Pi-is-prop} and
the fact that \(\Zero\) is a subsingleton.

Finally, it is important to be mindful of the fact that our logic will be
constructive, as explained in \cref{sec:constructivity}.%
\index{logic|)}
\index{Curry--Howard|)}

\subsection{Subsets and powersets}\label{sec:subsets-and-powersets}

\begin{definition}[\powersetT{T}, \valuedT{T} subsets, \(\powerset_{\T}(X)\), \(x \in A\)]%
  \label{def:powerset}
  For any type \(X\) and a type universe \(\T\), the \emph{\powersetT{T}}
  \(\powerset_{\T}(X)\) of \(X\) is defined as \(X \to \Omega_{\T}\). %
  \index{powerset|textbf}\index{subset|textbf}%
  \index{type!of subsets|see {powerset}}%
  \nomenclature[PT]{$\powerset_{\T}(X)$}{type of \(\T\)-valued subsets of \(X\)}%
  \nomenclature[T]{$\T$}{type universe}%
  We refer to its elements as \emph{\(\T\)-valued subsets} of \(X\).
  Given a \valuedT{T} subset \(A\) of \(X\) and \(x : X\), we write \(x \in A\)
  for \(A(x)\).%
  \index{subset!membership}
  \nomenclature[in]{$x \in A$}{membership of a subset}
\end{definition}

\begin{lemma}\label{powerset-is-set}
  The \powersetT{T} of any type is a set.
\end{lemma}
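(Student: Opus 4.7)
The plan is to recognise this as an immediate corollary of two results already established in the excerpt. By definition, $\powerset_{\T}(X) \equiv X \to \Omega_{\T}$, so we need to show that a function type into $\Omega_{\T}$ is a set.

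First, I would invoke \cref{Omega-is-set}, which tells us (assuming function extensionality and propositional extensionality for $\T$) that $\Omega_{\T}$ is itself a set. This is the crucial input, and it has already been proved earlier via Hedberg's Lemma together with propositional extensionality.

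Then I would apply \cref{Pi-is-prop}, which states that sets form a (dependent) exponential ideal: for any type $X$ and any family $Y$ of sets over $X$, the type $\Pi_{x : X}Y(x)$ is a set. In the non-dependent case, this gives that $X \to Y$ is a set whenever $Y$ is. Taking $Y \colonequiv \Omega_{\T}$, we conclude that $X \to \Omega_{\T}$, i.e.\ $\powerset_{\T}(X)$, is a set.

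There is no real obstacle here: the work has been done in proving \cref{Omega-is-set} and \cref{Pi-is-prop}, and the present lemma is a one-line combination of the two. The only thing to be slightly careful about is that $X$ is \emph{not} required to be a set or a subsingleton, which is fine precisely because \cref{Pi-is-prop} places no such restriction on the domain of the $\Pi$-type.
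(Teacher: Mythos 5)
Your proposal is correct and matches the paper's proof exactly: the paper also proves this lemma as an immediate consequence of \cref{Pi-is-prop} and \cref{Omega-is-set}. Your additional remark that $X$ need not be a set or subsingleton is a sensible observation, and is indeed why \cref{Pi-is-prop} (and not \cref{Sigma-is-prop}) is the right tool here.
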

\begin{proof}
  By \cref{Pi-is-prop,Omega-is-set}.
\end{proof}

\begin{definition}%
  [Union, intersection, empty subset, singleton, \(A \cup B\), \(A \cap B\), \(\emptyset\), \(\set{x}\)]%
  \index{subset!union}%
  \index{subset!intersection}%
  \index{subset!empty}%
  \index{subset!singleton}%
  For \valuedT{T} subsets \(A\) and \(B\) of a type \(X\), we respectively
  write \(A \cup B\) and \(A \cap B\) for union and intersection that are
  formally defined by the maps
  \(\lambdadot{x}\pa{x \in A} \lor \pa{x \in B}\) and
  \(\lambdadot{x}\pa{x \in A} \times \pa{x \in B}\).
  \nomenclature[cup]{$A \cup B$}{union of two subsets}%
  \nomenclature[cap]{$A \cap B$}{intersection of two subsets}%

  The empty subset of \(X\) is denoted by \(\emptyset\) (leaving \(X\) and
  \(\T\) implicit) and formally defined as \(\lambdadot{x}\Zero_{\T}\).
  \nomenclature[emptyset]{$\emptyset$}{empty subset}%

  If \(X\) is a set in \(\T\), then we have singleton subsets \(\set{x}\) for
  every \(x : X\), formally defined by \(\lambdadot{y}(x=y)\).
  \nomenclature[singleton]{$\set{x}$}{singleton subset with \(x\) as its only member}%
  Note that the
  requirement that \(X\) is a set in \(\T\) is used to ensure that \(x=y\) is
  indeed an element of \(\Omega_{\T}\).
\end{definition}

Of course, besides binary unions and intersections, we could use \(\exists\) and
\(\Pi\) to construct unions and intersections of families of subsets, but this
matter is tightly connected to predicativity issues, so we will revisit it in
some detail later.

\begin{definition}[Subset inclusion, \(\subseteq\)]
  Given a \(\T\)-valued subset \(A\) and a \(\W\)-valued subset \(B\) of a type
  \(X : \U\), we write \(A \subseteq B\) for the notion of \emph{subset
    inclusion} that is formally defined as having an element of
  \(\Pi_{x : X}\pa*{x \in A \to x \in B}\).%
  \index{subset!inclusion}%
  \nomenclature[subseteq]{$A \subseteq B$}{inclusion of subsets}
\end{definition}

\subsection{Decidability}

In discussing constructive logic the notion of decidability is fundamental.

\begin{definition}[(Weak) decidability of a type/equality]
  A type \(X\)
  \begin{enumerate}[(i)]
  \item\label{item:decidable} is \emph{decidable} if we have an element of
    \(X + \lnot X\),%
    \index{decidability|textbf}
  \item\label{item:weakly-decidable} is \emph{weakly decidable} if we have an
    element of \(\lnot\lnot X + \lnot X\), and%
    \index{decidability!weak}
  \item\label{item:decidable-equality} has \emph{decidable equality} if
    \(x = y\) is decidable for every \(x,y : X\).%
    \index{decidability!of equality|textbf}%
    \qedhere
  \end{enumerate}
\end{definition}

Note that~\ref{item:decidable} is data, while~\ref{item:weakly-decidable}
and~\ref{item:decidable-equality} are property. For~\ref{item:weakly-decidable},
this holds because negated types are subsingletons and because \(\lnot\lnot X\)
and \(\lnot X\) are mutually exclusive. For~\ref{item:decidable-equality}, this
is a consequence of Hedberg's Theorem, which is~\cref{Hedberg-Theorem} below.

\begin{example}
  The types \(\Zero\), \(\One\) and \(\Nat\) all have decidable equality.
\end{example}

\begin{lemma}\label{decidable-iff}
  If we have maps back and forth between two types \(X\) and \(Y\) and one of
  the types is decidable, then so is the other.
\end{lemma}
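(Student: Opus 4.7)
The plan is to unfold the definition of decidability and do a straightforward case analysis, leveraging the symmetry between the two directions.

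By the definition of decidability, to prove that $Y$ is decidable it suffices to exhibit an element of $Y + \lnot Y$, and similarly for $X$. Since the hypotheses are symmetric in $X$ and $Y$ (we have maps $f : X \to Y$ and $g : Y \to X$), I would prove only one direction explicitly, say that decidability of $X$ implies decidability of $Y$, and remark that the other direction is symmetric.

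For that direction, I would perform case analysis on the assumed element of $X + \lnot X$. In the case $\inl(x)$ with $x : X$, I simply produce $\inl(f(x)) : Y + \lnot Y$. In the case $\inr(\varphi)$ with $\varphi : X \to \Zero$, I produce $\inr(\varphi \circ g) : Y + \lnot Y$, using the map $g : Y \to X$ to pre-compose.

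There is no serious obstacle here; the only thing worth noting is that the statement does not require the maps $f$ and $g$ to be inverse to one another, nor do we need any property of the types beyond the existence of these maps. The logical core is that $\lnot$ is contravariant in its argument, which is why a map in one direction suffices to transport negation in the opposite direction, while a map in the other direction transports positive decidability.
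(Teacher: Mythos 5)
Your proof is correct and takes essentially the same approach as the paper: case analysis on the assumed decidability, using $f$ to transport the positive case and precomposition with $g$ (contraposition) to transport the negative case.
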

\begin{proof}
  Suppose that \(f \colon X \to Y\) and \(g \colon Y \to X\) and that \(X\) is
  decidable. Then we have \(x : X\) or \(\lnot X\). In the first case we get
  \(f(x) : Y\) and in the second case, we get \(\lnot Y\) by contraposition and
  \(g : Y \to X\).
\end{proof}

\begin{theorem}[Hedberg's Theorem~\cite{Hedberg1998}]\label{Hedberg-Theorem}%
  \index{Hedberg's!Theorem}%
  If a type has decidable equality, then it is a set.
\end{theorem}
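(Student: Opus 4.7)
The plan is to reduce the theorem to Hedberg's Lemma (\cref{Hedberg-Lemma}), which has already been proved in the excerpt. That lemma says: to show that $x = y$ is a proposition for every $y : X$ (with $x$ fixed), it suffices to exhibit a constant endomap $f_y : (x = y) \to (x = y)$ for each $y : X$. Since a type is a set precisely when all its identity types are propositions, and since $x$ can be chosen freely, constructing such a family $(f_y)_{y : X}$ for every fixed $x : X$ will finish the proof.

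Thus, fix $x : X$ and let $y : X$ be arbitrary. Using the hypothesis that $X$ has decidable equality, we have an element of $(x = y) + \lnot(x = y)$, on which we perform case analysis to define $f_y$. In the left case we are given some $p : x = y$, and we set $f_y$ to be the constant function with value $p$, i.e.\ $f_y(q) \colonequiv p$ for every $q : x = y$. In the right case we are given $n : \lnot(x = y)$, and we define $f_y(q)$ by eliminating the empty type $\Zero$ obtained from $n(q)$; equivalently, we take $f_y \colonequiv \id_{x = y}$, which is vacuously constant because the type $x = y$ is empty in this case.

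The key verification is that $f_y$ so defined is constant for every $y$. In the left case this is immediate from the definition, since any two outputs equal $p$. In the right case, for any $q, q' : x = y$ we have $n(q) : \Zero$, so $f_y(q) = f_y(q')$ follows by ex falso. Hence the hypothesis of \cref{Hedberg-Lemma} is satisfied for our fixed $x$, and therefore $x = y$ is a proposition for every $y : X$. Since $x$ was arbitrary, $X$ is a set.

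There is no serious obstacle here: the only subtlety is to note that the definition of $f_y$ depends on the outcome of the decidability test at $y$, so $f_y$ is defined by case analysis rather than by a single uniform formula. Once one accepts this dependence, constancy holds trivially in both cases, and the theorem follows at once from Hedberg's Lemma.
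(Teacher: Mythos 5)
Your proof is correct and follows essentially the same route as the paper: both apply Hedberg's Lemma (\cref{Hedberg-Lemma}) and construct the required constant endomap on $x = y$ by case analysis on the decidability of $x = y$, taking the constant-at-$p$ function in the positive case and observing vacuous constancy in the negative case.
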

\begin{proof}
  Suppose that \(X\) is a type with decidable equality. By Hedberg's~Lemma
  (\cref{Hedberg-Lemma}), it suffices to construct a constant endofunction on
  \(x = y\) for every \({x,y} : X\). By assumption, we either have \(p : x = y\)
  or \(x \neq y\). In the latter case, \(x = y\) is equivalent to \(\Zero\) and
  so it certainly has a constant endomap. And if we have \(p : x = y\), then the
  function \((x = y) \to (x = y)\) mapping everything to \(p\) is constant.
\end{proof}

When studying the Scott model of PCF 
in~\cref{sec:Scott-model-of-PCF}, the notion of \emph{semi}decidability also makes
an appearance. Intuitively, a proposition is semidecidable if it can be affirmed
through some finite procedure. Notice how we do not impose such a restriction on
it being refuted, so this notion is characteristically asymmetric.

\begin{definition}[Semidecidability]
  A proposition \(P\) is \emph{semidecidable} if there exists some binary
  sequence \(\alpha : \Nat \to \Two\) such that \(P\) holds if and only if there
  exists some \(n : \Nat\) for which \(\alpha(n) = 1\).%
  \index{semidecidability|textbf}
\end{definition}

Indeed, if a semidecidable proposition \(P\) holds, then we will eventually find
\(n : \Nat\) for which \(\alpha(n)\). But inspecting the sequence \(\alpha\) for
any finite number of values will never allow us to conclude that the negation of
\(P\) holds as this would require knowing that the sequence \emph{never} attains
the value \(1\).
Also notice that every decidable proposition is semidecidable.

\begin{lemma}\label{semidecidable-criterion}
  A proposition \(P\) is semidecidable if and only if there exists a natural
  number \(k : \Nat\) and a family \(Q : \Nat^k \to \U_0\) such that
  \(Q\pa*{\vec n}\) is decidable for all inputs \(\vec n : \Nat^k\) and
  \(P\) holds exactly when \(\exists_{\vec n : \Nat^k}Q(\vec n)\) does.
\end{lemma}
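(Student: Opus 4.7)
The plan is to prove both directions separately, with the forward direction being essentially immediate and the backward direction requiring the standard trick of encoding tuples of naturals as naturals.

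For the forward direction, suppose \(P\) is semidecidable, witnessed by some \(\alpha : \Nat \to \Two\) with \(P\) holding iff \(\exists_{n : \Nat}\,\alpha(n) = 1\). I would take \(k \colonequiv 1\) and define \(Q : \Nat \to \U_0\) by \(Q(n) \colonequiv (\alpha(n) = 1)\). Since \(\Two\) has decidable equality, each \(Q(n)\) is decidable, and the required logical equivalence is exactly the defining property of \(\alpha\). (Note that for \(k = 0\), the type \(\Nat^0\) is just \(\One\), so this case captures plainly decidable propositions, and decidable propositions are semidecidable, since we can take \(\alpha\) constantly \(1\) or constantly \(0\) depending on which disjunct holds.)

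For the backward direction, suppose we are given \(k : \Nat\) and a decidable family \(Q : \Nat^k \to \U_0\) with \(P \leftrightarrow \exists_{\vec n : \Nat^k}Q(\vec n)\). The key step is to construct, by induction on \(k\), a surjection \(e : \Nat \surj \Nat^k\) (for \(k = 0\) this is trivial, and for \(k \geq 1\) one iterates the Cantor pairing function; in fact one obtains an equivalence, but surjectivity is all that is needed). Then I would define \(\alpha : \Nat \to \Two\) by case analysis on the decidable proposition \(Q(e(n))\), setting \(\alpha(n) \colonequiv 1\) if \(Q(e(n))\) holds and \(\alpha(n) \colonequiv 0\) otherwise. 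By construction, \(\alpha(n) = 1\) is logically equivalent to \(Q(e(n))\), so \(\exists_{n : \Nat}\,\alpha(n) = 1\) is equivalent to \(\exists_{n : \Nat}Q(e(n))\), which by surjectivity of \(e\) (using surjection induction, \cref{surjection-induction}, applied to the propositional family \(\lambdadot{\vec m}\exists_{n : \Nat}\,e(n) = \vec m \wedge Q(\vec m)\), or more directly by a back-and-forth on the propositional truncation) is equivalent to \(\exists_{\vec n : \Nat^k}Q(\vec n)\), and hence to \(P\).

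The main obstacle, though mild, is the construction of the surjection \(e : \Nat \surj \Nat^k\) and the verification that it transports the existential quantifier as claimed. This is routine for any fixed \(k\) using a pairing function and its inverses, but doing it uniformly in \(k\) requires an induction whose step case relies on the equivalence \(\Nat^{k+1} \simeq \Nat^k \times \Nat\) combined with a pairing equivalence \(\Nat \times \Nat \simeq \Nat\). No resizing issues arise: all the types involved live in \(\U_0\), and the only use of the propositional truncation is in handling the existential quantifiers, where the fact that \(P\) is a proposition (together with decidability of \(Q\)) lets us move freely between specified and unspecified witnesses.
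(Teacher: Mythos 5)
Your proof is correct and follows the same essential route as the paper's one-line proof sketch, which observes that decidable propositions are classified by \(\Two\) and that a \(k\)-parameter family over \(\Nat\) can be recoded as a single-parameter one. You are in fact slightly more careful than the paper: it claims a bijection \(\Nat^k \simeq \Nat\) for \emph{every} \(k : \Nat\), which fails for \(k = 0\) (since \(\Nat^0 \simeq \One\)), whereas your use of a surjection \(\Nat \surj \Nat^k\) handles \(k = 0\) uniformly and without a separate case.
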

\begin{proof}
  Note that the type of decidable propositions is equivalent to \(\Two\) and
  that for every natural number \(k\) we have a bijection
  \(\Nat^k \simeq \Nat\), so that we can always turn a family with \(k\)
  parameters into a corresponding one with a single parameter.
\end{proof}

\subsection{Constructivity}\label{sec:constructivity}%
\index{constructivity|(}

Our foundational setup will be constructive in the sense that we do not assume
any additional logical axioms beyond propositional extensionality and function
extensionality. In particular, we do not assume (weak) excluded
middle~(\cref{def:(w)em}), the limited principle of omniscience (LPO;
\cref{def:LPO}) or the axiom of (countable) choice~(\cref{def:axiom-of-choice}),
as these are constructively unacceptable~\cite[p.~9]{Bishop1967}, and even
provably false in some varieties of constructive
mathematics~\cite[pp.~3--4]{BridgesRichman1987}. In MLTT, and univalent
foundations, they are simply independent: they cannot be proved, but neither can
their negations.

This does \emph{not} mean that these logical principles will have no use for
us. In fact, they will feature as \emph{constructive taboos}. That is, sometimes
we wish to argue that something is inherently nonconstructive and we can do so
as follows: if we can show that \(X\) implies excluded middle, then this tells
us that \(X\) is constructively unacceptable, because excluded middle is.

\begin{definition}[Nonemptiness]\label{Nonemptiness}%
  \index{nonemptiness|textbf}
  A type \(X\) is \emph{nonempty} if \(\lnot\lnot X\) has an element.
\end{definition}

\begin{lemma}\index{inhabited}
  Every inhabited type is nonempty.
\end{lemma}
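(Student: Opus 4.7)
The plan is to apply the universal property of the propositional truncation from \cref{sec:propositional-truncation}. Since being inhabited means having an element of $\squash{X}$, I want to factor a suitable map $X \to \lnot\lnot X$ through $\tosquash{-} : X \to \squash{X}$.

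First, I would verify that $\lnot\lnot X$ is a subsingleton. By definition $\lnot\lnot X \equiv (X \to \Zero) \to \Zero$, and $\Zero$ is a proposition, so by \cref{Pi-is-prop} (applied twice, or once noting that the codomain is already a proposition) the type $\lnot\lnot X$ is itself a proposition. Note that this works regardless of whether $X$ is a proposition, since \cref{Pi-is-prop} does not require its domain to be a subsingleton.

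Next, I would construct a direct map $\eta : X \to \lnot\lnot X$ by $\eta(x) \colonequiv \lambdadot{f}f(x)$, which is just the usual double-negation unit. Since its codomain is a proposition, the universal property of the propositional truncation yields a (unique) map $\bar\eta : \squash{X} \to \lnot\lnot X$. Applying $\bar\eta$ to the given inhabitant of $\squash{X}$ produces the desired element of $\lnot\lnot X$.

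There is no real obstacle here: the argument is one line once the universal property and \cref{Pi-is-prop} are in place. The only thing to be careful about is the universe level of $\Zero$ used in defining $\lnot$, but as noted in the remark following \cref{def:inhabited} this is immaterial.
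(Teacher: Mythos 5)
Your proof is correct and matches the paper's argument: both establish that $\lnot\lnot X$ is a proposition via \cref{Pi-is-prop}, then use the universal property of the propositional truncation to extend the double-negation unit $x \mapsto \lambdadot{f}f(x)$ along $\tosquash{-}$. The paper just phrases the elimination step more informally as "we can eliminate $t$ and assume to have $x : X$", but it is the same reasoning.
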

\begin{proof}
  Let \(X\) be an inhabited type, i.e.\ we have an element \(t\) of
  \(\squash{X}\). Since being nonempty is a proposition by~\cref{Pi-is-prop}, we
  can eliminate \(t\) and assume to have \(x : X\). But then
  \(\lambdadot{\pa*{f : \lnot X}}{f(x)}\) is an element of \(\lnot\lnot X\).
\end{proof}

\begin{definition}[\(\lnot\lnot\)-stability, type of \(\lnot\lnot\)-stable
  propositions, \(\Omeganotnot{\U}\)] \hfill
  \begin{enumerate}[(i)]
  \item A type \(X\) is said to be \emph{\(\lnot\lnot\)-stable} if we have an
    element of \(\lnot\lnot X \to X\).%
    \index{not-not-stability@\(\lnot\lnot\)-stability|textbf}
  \item We denote the \emph{type of \(\lnot\lnot\)-stable propositions} in a
    universe \(\U\) by \(\Omeganotnot{\U}\).%
    \index{type!of \(\lnot\lnot\)-stable propositions|textbf}
    \qedhere
    \nomenclature[Omeganotnot]{$\Omeganotnot{\U}$}{type of \(\lnot\lnot\)-stable
      propositions in \(\U\)}
  \end{enumerate}
\end{definition}

\begin{remark}
  Using the terminology of \cref{def:inhabited} and the observation that a
  proposition is equivalent to its propositional truncation, we see that a
  proposition is \(\lnot\lnot\)-stable precisely when it is inhabited as soon as
  it is nonempty.
\end{remark}

\begin{definition}[(Weak) excluded middle]\label{def:(w)em}
  For a type universe \(\U\), \emph{(weak) excluded middle} in \(\U\) asserts
  that every proposition in \(\U\) is (weakly) decidable.%
  \index{excluded middle|textbf}\index{excluded middle!weak|textbf}
\end{definition}

\begin{remark}\label{em-only-for-props}
  The restriction to propositions in the formulation of (weak) excluded middle
  can be explained in two ways. Firstly, given our interpretation of
  (logical)-propositions-as-subsingletons, it seems appropriate to restrict
  (weak) excluded middle to the logical fragment of our framework.
  In fact, the statement \(X + \lnot X\) for all types \(X\) is \emph{global
    choice}: it says that we can choose a specified element of every nonempty
  type, and is incompatible with
  univalence~\cite[Section~3.35.6]{Escardo2019}.\index{choice!global}
  Secondly, the unrestricted formulation is provably false in the presence of
  the univalence axiom, while the restricted formulation is consistent with
  univalent type theory, as shown by Voevodsky's simplicial sets
  model~\cite{KapulkinLumsdaine2021}.
\end{remark}

\begin{lemma}\label{em-and-wem-equivalent-formulations}%
  \index{not-not-stability@\(\lnot\lnot\)-stability}%
  \index{type!of \(\lnot\lnot\)-stable propositions}%
  \index{nonemptiness}%
  \index{inhabited}%
  The following logical statements are equivalent for a universe \(\U\):
  \begin{enumerate}[(i)]
  \item\label{item-em} excluded middle in \(\U\);
  \item\label{item-em'} every proposition in \(\U\) is equal to either
    \(\Zero_{\U}\) or \(\One_{\U}\);
  \item\label{item-Omega-dec-eq} the type \(\Omega_{\U}\) has decidable
    equality;
  \item\label{item-em-Bool} the map \(\Two_{\V} \to \Omega_{\U}\) given by
    \(0 \mapsto \Zero_{\U}\) and \(1 \mapsto \One_{\U}\) is an equivalence for
    any type universe \(\V\);
  \item\label{item-dns} all elements of \(\Omega_{\U}\) are
    \(\lnot\lnot\)-stable \emph{(double negation elimination)};
  \item\label{item-inhabited-if-nonempty} every nonempty type is inhabited.
  \end{enumerate}
  Similarly, the following logical statements are equivalent for every universe \(\U\):
  \begin{enumerate}[(i')]
  \item\label{item-wem} weak excluded middle in \(\U\);
  \item\label{item-wem'} every \(\lnot\lnot\)-stable proposition in \(\U\) is
    equal to either \(\Zero_{\U}\) or \(\One_{\U}\);
  \item\label{item-Omeganotnot-dec-eq} the type \(\Omeganotnot{\U}\) has
    decidable equality;
  \item\label{item-wem-Bool} the map \(\Two_{\V} \to \Omeganotnot{\U}\) given
    by \(0 \mapsto \Zero_{\U}\) and \(1 \mapsto \One_{\U}\) is an equivalence
    for any type universe \(\V\).
  \end{enumerate}
\end{lemma}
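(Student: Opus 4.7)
The plan is to establish both sets of equivalences via a small number of chains of implications, reusing \(\emph{propositional extensionality}\) (for \(\U\)) throughout and the closure properties of subsingletons. I will prove the first set by the circle (i)\(\Rightarrow\)(ii)\(\Rightarrow\)(iii)\(\Rightarrow\)(i), tack on (iv), and then prove (i)\(\Leftrightarrow\)(v)\(\Leftrightarrow\)(vi) separately. The second set is handled by exactly the same strategy, everywhere restricting attention to \(\lnot\lnot\)\nobreakdash-stable propositions.

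For (i)\(\Rightarrow\)(ii), given a proposition \(P : \U\), excluded middle yields either \(p : P\) (in which case \(P \leftrightarrow \One_{\U}\) by the constant map and \(\star \mapsto p\), so \(P = \One_{\U}\) by propositional extensionality) or a witness of \(\lnot P\) (in which case \(P \leftrightarrow \Zero_{\U}\) by \(\lnot P\) and \(\Zero_{\U}\)-elimination, giving \(P = \Zero_{\U}\)). For (ii)\(\Rightarrow\)(iii), given \(P,Q : \Omega_{\U}\), (ii) reduces each to \(\Zero_{\U}\) or \(\One_{\U}\), leaving four cases; three are immediate, and \(\Zero_{\U} = \One_{\U}\) is refuted by transporting \(\star\) along the equation. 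For (iii)\(\Rightarrow\)(i), for a proposition \(P : \U\), decide \(P = \One_{\U}\): if equal, transport \(\star\) to obtain an element of \(P\); if not equal, then any \(p : P\) would, via \(\star \mapsto p\) and the constant map, establish \(P \leftrightarrow \One_{\U}\), contradicting the inequality by propositional extensionality. Item (iv) is then immediate: the map from \(\Two_{\V}\) is always an embedding (by \cref{sections-to-sets-are-embeddings}, since \(\Omega_{\U}\) is a set), and its surjectivity is essentially (ii). Conversely, if it is an equivalence, then \(\Omega_{\U}\) inherits the decidable equality of \(\Two_{\V}\), giving (iii).

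For (i)\(\Leftrightarrow\)(v): (i) trivially gives \(\lnot\lnot P \to P\) by case-splitting on \(P\) or \(\lnot P\). Conversely, for any proposition \(P : \U\) the type \(P + \lnot P\) is itself a proposition (its two summands are mutually exclusive), and \(\lnot\lnot(P + \lnot P)\) is provable constructively; applying (v) to this proposition gives excluded middle. For (i)\(\Leftrightarrow\)(vi): assuming (i), for a nonempty \(X\) the proposition \(\squash{X}\) is either inhabited (done) or negated (in which case \(X \to \squash{X}\) yields \(\lnot X\), contradicting \(\lnot\lnot X\)); conversely, if every nonempty type is inhabited, then for any proposition \(P\) with \(\lnot\lnot P\) we obtain \(\squash{P}\), which coincides with \(P\) since \(P\) is a proposition, proving (v) and hence (i).

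The weak version proceeds by the same template: in (i')\(\Rightarrow\)(ii'), weak excluded middle yields \(\lnot\lnot P\) or \(\lnot P\), and \(\lnot\lnot\)-stability of \(P\) upgrades the former to an element of \(P\), whence \(P = \One_{\U}\); the remaining steps (ii')\(\Rightarrow\)(iii')\(\Rightarrow\)(i') and (i')\(\Leftrightarrow\)(iv') parallel their non-weak counterparts, with \(\Omeganotnot{\U}\) in place of \(\Omega_{\U}\) (a subtype of the latter by \cref{prop-subtype}, hence still a set).

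The main obstacle is bookkeeping of universe levels, particularly in (iv)/(iv'): the map \(\Two_{\V} \to \Omega_{\U}\) allows \(\V\) to be arbitrary, and one must verify that the implicit types \(\lnot P\), \(P + \lnot P\), etc., land in \(\U\), which relies on \(\U_0 \sqcup \U \equiv \U\) and our convention that negation uses \(\Zero_{\U_0}\). Once these universe equations are noted, the arguments are all short applications of propositional extensionality and the closure properties from \cref{Sigma-is-prop,Pi-is-prop}.
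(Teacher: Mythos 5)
Your proof is correct and follows essentially the same strategy as the paper, which also treats (i)--(iii) via propositional extensionality, handles (iv) by reducing to (ii), and gives the same standard constructive-logic arguments for (v) and (vi). The only substantive departure is presentational: the paper proves (ii)\(\Rightarrow\)(iv) by writing down an explicit chain of type equivalences rather than via the embedding-plus-surjection decomposition you use.

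One minor inaccuracy worth flagging is the parenthetical justification in your treatment of (iv): you assert the map \(\Two_{\V} \to \Omega_{\U}\) is ``always an embedding (by \cref{sections-to-sets-are-embeddings}, since \(\Omega_{\U}\) is a set).'' That lemma applies to \emph{sections}, i.e.\ maps with a specified retraction, and exhibiting a retraction \(\Omega_{\U} \to \Two_{\V}\) is exactly excluded middle --- so it cannot be used to show the map is \emph{unconditionally} an embedding. The map is nonetheless always an embedding, but the correct reason is that it is left-cancellable (because \(\Zero_{\U} \neq \One_{\U}\)) and left-cancellable maps into sets are embeddings. In the specific context where you invoke it --- proving (ii)\(\Rightarrow\)(iv) --- the hypothesis (ii) does let you define a retraction, so your argument reaches the intended conclusion; only the word ``always'' and the cited lemma overstate what is available without hypotheses.
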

\begin{proof}
  For the equivalence of \ref{item-em}--\ref{item-Omega-dec-eq}, observe that,
  by propositional extensionality, a proposition holds if and only if it is
  equal to the unit type and does not hold if and only if it is equal to the
  empty type.
  It is also clear that \ref{item-em-Bool} implies \ref{item-em'}. For the
  converse, assume \ref{item-em'} and note that it implies that
  \[
    \Omega_{\U} \simeq
    \Sigma_{P : \U}\pa*{\pa*{P = \One_{\U}} + \pa{P = \Zero_{\U}}}
    \simeq \pa*{\Sigma_{P : \U}\,{P = \One_{\U}}} + %
           \pa*{\Sigma_{P : \U}\,{P = \Zero_{\U}}}
    \simeq \Two_{\V},
  \]
  where the final equivalence holds by \cref{singleton-type-is-singleton} and
  the fact that every contractible type is equivalent to \(\One_{\V}\).

  The equivalence of \ref{item-em} and \ref{item-dns} is well-known
  in constructive logic. If we assume excluded middle in \(\U\) and \(P\) is an
  arbitrary proposition in \(\U\), then we have \(P\) or \(\lnot P\). In the
  first case, obviously \(\lnot\lnot P \to P\) and in the second case this also
  holds, because the antecedent of the implication contradicts \(\lnot P\).
  Conversely, if double negation elimination for \(\U\) holds and \(P\) is an
  arbitrary proposition in \(\U\), then in particular the proposition
  \(P + \lnot P\) is \(\lnot\lnot\)-stable. But \(\lnot\lnot (P + \lnot P)\) is
  a tautology: for if we assume \(\lnot (P + \lnot P)\), then assuming either
  \(P\) or \(\lnot P\) would yield a contradiction, hence we have
  \(\lnot P \times \lnot\lnot P\), which is false. Thus, by our double negation
  elimination assumption, we get decidability of \(P\), completing the proof
  that items \ref{item-em}--\ref{item-dns} are equivalent.

  The equivalence of \ref{item-dns} and \ref{item-inhabited-if-nonempty} follows
  from the observations that every proposition is equivalent to its
  propositional truncation, and that \(\lnot\lnot X\) is equivalent to
  \(\lnot\lnot\squash{X}\).

  The final claim of the lemma, concerning
  \ref{item-wem}--\ref{item-wem-Bool}, follows from the fact that decidability
  and weak decidability coincide for \(\lnot\lnot\)-stable propositions.
\end{proof}

\begin{definition}[Limited principle of omniscience (LPO)]\label{def:LPO}
  Bishop's \emph{limited principle of omniscience (LPO)} asserts that for every
  binary sequence \(\alpha : \Nat \to \Two\) the proposition
  \(\exists_{n : \Nat}\,\alpha(n) = 1\) is decidable.%
  \index{limited principle of omniscience|textbf}%
  \index{LPO|see {limited principle of omniscience}}
\end{definition}

\begin{remark}\label{LPO-iff-semidecidable-implies-decidable}
  Unfolding the definitions, we see that LPO says exactly that every
  semidecidable proposition is decidable.
\end{remark}

\begin{definition}[Axiom of (countable) choice]\label{def:axiom-of-choice}
  The \emph{axiom of choice} with respect to universes \(\U\) and \(\V\) says
  that for every set \(X : \U\) and set-valued type-family \(Y : X \to \V\), if
  every \(Y(x)\) is inhabited, then \(\Pi_{x : X}Y(x)\) is inhabited as well. %
  \index{choice!axiom of|textbf}%
  \index{choice!axiom of countable|textbf}%
  Symbolically, this reads
  \begin{equation}\label{axiom-of-choice}
    \pa*{\Pi_{x : X}\squash*{Y(x)}} \to \squash*{\Pi_{x : X}Y(x)}
  \end{equation}
  The special case where \(X \equiv \Nat\) is called \emph{countable choice}.
\end{definition}

There are multiple equivalent ways of phrasing the axiom of choice in univalent
foundations, see~\cite[Section~3.35]{Escardo2019}, but the above is the most
convenient formulation for us.

\begin{remark}\label{semidecidability-and-choice}
  Semidecidability and countable choice are closely linked as investigated
  in~\cite{Knapp2018,EscardoKnapp2017} and further in~\cite{deJong2022}, but it
  is somewhat beyond the scope of this thesis to go into this here.
\end{remark}
\index{constructivity|)}

\section{Univalent universes}\label{sec:univalent-universes}

By analogy to propositional extensionality (\cref{def:prop-ext}) and function
extensionality (\cref{def:fun-ext}), we define an extensionality axiom for
\emph{types} and say that a universe is univalent if its types satisfy it.

\begin{definition}[Univalence]\label{def:univalence}
  A type universe \(\U\) is \emph{univalent} if for every \(X,Y : \U\) the map
  \((X =_{\U} Y) \to (X \simeq Y)\) defined by path induction as
  \(\refl \mapsto \id\) is an equivalence.%
  \index{univalence|textbf}
\end{definition}

In other words, two types are equal precisely when they are equivalent, although
the formulation above is carefully chosen to ensure (through
\cref{Pi-is-prop,being-equiv-is-prop}) that being univalent is a property of a
universe.

The \emph{univalence axiom}\index{univalence} asserts that all universes
are univalent. Unlike propositional extensionality and function extensionality,
we do not assume this globally, but rather add the univalence of a universe as
an explicit hypothesis to our theorems when needed.

The consistency of the univalence axiom was established by Voevodsky through the
simplicial sets model~\cite{KapulkinLumsdaine2021}.

\begin{theorem}
  If \(\U\) is univalent, then we have propositional extensionality in \(\U\)
  and function extensionality for functions between types in \(\U\).
\end{theorem}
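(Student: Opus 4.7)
The plan splits into two parts. For \textbf{propositional extensionality}, suppose $P, Q : \U$ are subsingletons equipped with maps $f : P \to Q$ and $g : Q \to P$. I would verify that $f$ is an equivalence by checking that each fibre $\fib_f(y) = \Sigma_{x : P}\,f(x) = y$ is a singleton: it is inhabited by $(g(y), e)$, where $e$ comes from $Q$ being a subsingleton, and it is itself a subsingleton by \cref{Sigma-is-prop}, since $P$ is a subsingleton and identities in $Q$ are subsingletons (because $Q$ is a set by \cref{set-if-subsingleton}). The inverse of the equivalence $(P = Q) \to (P \simeq Q)$ supplied by univalence of $\U$ then converts this equivalence into the desired identification $P = Q$.

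For \textbf{function extensionality}, I would reproduce Voevodsky's classical argument, which factors through the intermediate principle of \emph{weak function extensionality}: the statement that $\Pi_{x : X} P(x)$ is contractible whenever each $P(x)$ is. The argument has three stages. \emph{First}, univalence supplies, for any equivalence $e : A \simeq B$ between types in $\U$ and any type $C$, an equivalence $\alpha_e : (C \to A) \simeq (C \to B)$ given by postcomposition with $e$: we convert $e$ to an identification $A = B$ via univalence and transport the family $X \mapsto (C \to X)$ along it, the result agreeing with postcomposition by a path-induction argument. \emph{Second}, given a contractible-valued family $P : X \to \U$, the first projection $\fst : \Sigma P \to X$ is an equivalence, so $\alpha_{\fst} : (X \to \Sigma P) \simeq (X \to X)$ is an equivalence, and I would use that its fibre over $\id_X$, being contractible, corresponds (via \cref{Pi-Sigma-distr}) to the type $\Pi P$. \emph{Third}, I would derive full function extensionality as stated in \cref{def:fun-ext} by applying weak function extensionality to the family $\lambda x.\,\Sigma_{y : Y(x)}\,f(x) = y$, whose values are singletons by \cref{singleton-type-is-singleton}, obtaining contractibility of both total spaces $\Sigma_{g : \Pi Y}(f = g)$ and $\Sigma_{g : \Pi Y}\Pi_{x : X}\,f(x) = g(x)$. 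Since the canonical map is fibrewise over $\Pi Y$ and both total spaces are contractible, each fibre map is an equivalence, as required.

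The \textbf{main obstacle} I expect is the second stage, where one must be careful not to smuggle in function extensionality while translating between the fibre of $\alpha_{\fst}$ over $\id_X$ and $\Pi P$. The fibre naturally refers to a function equality $\fst \circ s = \id_X$, whereas a direct distributivity argument yields only the pointwise version $\Pi_{x : X}\fst(s(x)) = x$. The point of routing the argument through $\alpha_{\fst}$ rather than through a naive quasi-inverse is precisely that univalence plus path induction makes $\alpha_{\fst}$ an equivalence without ever needing $e \circ e^{-1} \sim \id$ to be promoted to $e \circ e^{-1} = \id$. Carrying this out carefully, following \cite{Escardo2019} or \cite[Chapter~4]{HoTTBook}, is where the real work lies; the first and third stages are largely formal manipulations of the notions of equivalence, fibre and contractibility already developed in this chapter.
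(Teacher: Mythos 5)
Your proposal is correct and follows the same route as the paper, which simply cites the standard references: the propositional-extensionality part is the observation the paper calls "straightforward," and the function-extensionality part is exactly Voevodsky's argument via weak function extensionality referenced to \cite[Section~3.17]{Escardo2019} and \cite[Section~4.9]{HoTTBook}. You have filled in the details faithfully, and your diagnosis of where the subtlety lies (making $\alpha_{\fst}$ an equivalence via univalence and path induction rather than via a hand-built quasi-inverse, which would beg the question) is exactly right.
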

\begin{proof}
  That univalence implies propositional extensionality is straightforward as two
  propositions are equivalent precisely when they imply each other.  That
  function extensionality can be derived from univalence is due to Voevodsky,
  see~\cite[Section~3.17]{Escardo2019} or~\cite[Section~4.9]{HoTTBook} for
  proofs.
\end{proof}

The following result should be compared to \cref{singleton-type-is-singleton}.

\begin{theorem}\label{univalent-iff-contractibility}%
  \index{contractibility}%
  A universe \(\U\) is univalent if and only if for every \(X : \U\) the type
  \(\Sigma_{Y : \U}\,X \simeq Y\) is contractible.
\end{theorem}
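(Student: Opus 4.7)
The plan is to reduce the statement to a standard comparison between fibrewise and total equivalences, pivoting on the fact, established in~\cref{singleton-type-is-singleton}, that the type $\Sigma_{Y : \U}\,X = Y$ is contractible with centre $(X,\refl)$.

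First, I would observe that the path-induction map
\[
  \idtoeq_{X,Y} : (X =_{\U} Y) \to (X \simeq Y)
\]
defined by $\refl \mapsto \id$ assembles, as $Y$ varies, into a fibrewise map between the type families $Y \mapsto (X = Y)$ and $Y \mapsto (X \simeq Y)$ on $\U$. This fibrewise map induces a map on total spaces
\[
  \Phi : \pa*{\Sigma_{Y : \U}\,X = Y} \to \pa*{\Sigma_{Y : \U}\,X \simeq Y},
  \qquad (Y, p) \mapsto (Y, \idtoeq_{X,Y}(p)).
\]

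Next, I would invoke the standard lemma (a consequence of the equivalence of $\Sigma$-types proved fibrewise, essentially the content of~\cref{Id-of-Sigma} together with characterisation of fibres of first projections as in~\cref{prop-subtype}) that such a $\Phi$ is an equivalence if and only if each component $\idtoeq_{X,Y}$ is. Granting this, univalence of $\U$ (restricted to $X$, which is what we need by quantifying over $X$) is logically equivalent to $\Phi$ being an equivalence.

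Finally, I would combine this with the fact that $\Sigma_{Y : \U}\,X = Y$ is contractible. Using that a map out of a contractible type is an equivalence precisely when its codomain is contractible (both directions follow easily from~\cref{equiv-iff-invertible} together with~\cref{lc-to-prop}: an equivalence between a contractible type and $Z$ makes $Z$ a retract of, hence equivalent to, a contractible type; conversely, any map between contractible types is invertible, hence an equivalence), we get that $\Phi$ is an equivalence iff $\Sigma_{Y : \U}\,X \simeq Y$ is contractible. Chaining the two equivalences, univalence of $\U$ holds iff $\Sigma_{Y : \U}\,X \simeq Y$ is contractible for every $X : \U$, as desired. The main obstacle is the fibrewise-equivalence-iff-total-equivalence lemma, which is a standard fact from~\cite{Escardo2019}; I would cite it rather than reprove it.
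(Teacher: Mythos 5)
Your argument is correct, and it is the standard route that the paper itself defers to (the paper's proof is simply a pointer to~\cite[Section~3.14]{Escardo2019}, which uses exactly this chain: contractibility of $\Sigma_{Y : \U}\,X = Y$, the fibrewise-vs-total equivalence lemma applied to $\idtoeq$, and the fact that a map out of a contractible type is an equivalence iff its codomain is contractible). There is nothing missing; the one lemma you chose to cite rather than reprove is indeed standard and harmless to black-box.
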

\begin{proof}
  See \cite[Section~3.14]{Escardo2019}.
\end{proof}

\begin{definition}[Universe embedding; {\cite[Section~3.30]{Escardo2019}}]
  A function \(f\) between universes \(\U\) and \(\V\) is a \emph{universe
    embedding} if \(f(X) \simeq X\) for every \(X : \U\).%
  \index{universe!embedding}
\end{definition}

\begin{proposition}[{\cite[Section~3.30]{Escardo2019}}]\label{universe-embeddings-are-embeddings}
  If \(\U\) and \(\V\) are univalent universes, then every universe embedding
  between \(\U\) and \(\V\) is an embedding.
\end{proposition}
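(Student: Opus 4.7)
The plan is to use the characterisation of embeddings via fibres from \cref{def:embedding-equivalence}: it suffices to show that for every $Z : \V$, the fibre $\fib_f(Z) \equiv \Sigma_{X : \U}\,f(X) = Z$ is a subsingleton. The idea is to massage this fibre, via the two univalence assumptions and the hypothesis on $f$, into a recognisably contractible-if-inhabited type.

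First, I would apply univalence of $\V$ to replace the identity type $f(X) = Z$ by the equivalence type $f(X) \simeq Z$, obtaining
\[
  \fib_f(Z) \simeq \Sigma_{X : \U}\,f(X) \simeq Z.
\]
Next, I would use the hypothesis that $f$ is a universe embedding, which supplies an equivalence $f(X) \simeq X$ for every $X : \U$. Pre-composing with this equivalence yields $(f(X) \simeq Z) \simeq (X \simeq Z)$, and hence
\[
  \fib_f(Z) \simeq \Sigma_{X : \U}\,X \simeq Z.
\]
By \cref{lc-to-prop}, it now suffices to prove that this last $\Sigma$-type is a subsingleton.

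To show the latter, I would invoke \cref{subsingleton-criterion}: it is enough to produce, from an assumed element $(X_0, e_0)$ with $e_0 : X_0 \simeq Z$, a proof that $\Sigma_{X : \U}\,X \simeq Z$ is a subsingleton (indeed, contractible). Post-composition with $e_0^{-1}$ gives an equivalence
\[
  \pa*{\Sigma_{X : \U}\,X \simeq Z} \simeq \pa*{\Sigma_{X : \U}\,X \simeq X_0},
\]
and now both components live in $\U$, so univalence of $\U$ converts $X \simeq X_0$ into $X = X_0$, producing $\Sigma_{X : \U}\,X = X_0$. This is contractible by (the symmetric variant of) \cref{singleton-type-is-singleton}. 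Transporting contractibility back along the chain of equivalences via \cref{lc-to-prop} then finishes the proof.

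The only delicate point is the universe bookkeeping in the third step: $Z$ lives in $\V$ while $X_0$ lives in $\U$, so one needs the replacement of $Z$ by $X_0$ (provided by post-composing with $e_0^{-1}$) before univalence of $\U$ can be applied; a direct application of univalence of $\U$ to $X \simeq Z$ is not available. Everything else is a routine chain of equivalences justified by the results already available in the excerpt.
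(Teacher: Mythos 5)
Your proof is correct, but it takes a genuinely different route from the paper. The paper's proof reduces the claim to \cref{embedding-criterion}: it exhibits, for every $X, Y : \U$, the chain of equivalences $(X = Y) \simeq (X \simeq Y) \simeq (f(X) \simeq f(Y)) \simeq (f(X) = f(Y))$ (using univalence of $\U$, the universe-embedding property, and univalence of $\V$, in that order) and then concludes directly. Your proof instead works from the fibre definition of embedding (\cref{def:embedding-equivalence}): you massage $\fib_f(Z)$ via univalence of $\V$ and the universe-embedding hypothesis into $\Sigma_{X : \U}\,X \simeq Z$, and then prove this is a subsingleton by \cref{subsingleton-criterion} together with a second application of univalence, this time of $\U$, reducing to the contractibility of a singleton type. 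Both arguments use exactly the same three ingredients (univalence of $\U$, univalence of $\V$, the universe-embedding property); the paper's version is shorter because \cref{embedding-criterion} does the fibre-level bookkeeping once and for all, whereas your version inlines it, essentially re-deriving a variant of \cref{univalent-iff-contractibility} in the process. Your observation about the universe bookkeeping — that you must first replace $Z : \V$ by $X_0 : \U$ before univalence of $\U$ can speak — is exactly the delicate point, and you handle it correctly. One small terminological remark: at the end you speak of transporting contractibility via \cref{lc-to-prop}, but that lemma transports subsingletonhood; this is all you need (embeddings require subsingleton, not contractible, fibres), so the argument is fine, only the phrasing is slightly loose.
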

\begin{proof}
  Suppose that \(\U\) and \(\V\) are univalent and that \(f\) is a universe
  embedding. Then for every \(X,Y : \U\), we have
  \begin{align*}
    (X = Y) &\simeq (X \simeq Y) &&\text{(since \(\U\) is univalent)} \\
    &\simeq (f(X) \simeq f(Y)) &&\text{(since \(f\) is a universe embedding)} \\
    &\simeq (f(X) = f(Y)) &&\text{(since \(\V\) is univalent)}
  \end{align*}
  so that \(f\) is an embedding by \cref{embedding-criterion}.
\end{proof}

\begin{corollary}\label{lift-is-embedding}
  If the universes \(\U\) and \(\U \sqcup \V\) are univalent, then the universe
  embedding \(\lift{U}{V} : \U \to \U\sqcup\V\) defined by
  \(X \mapsto X \times \One_{\V}\) is an embedding.
\end{corollary}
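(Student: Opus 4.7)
The statement follows almost immediately from Proposition \ref{universe-embeddings-are-embeddings} applied to the two universes $\U$ and $\U \sqcup \V$, both of which are assumed to be univalent. The only thing I need to verify is that the map $\lift{U}{V} : X \mapsto X \times \One_{\V}$ is genuinely a universe embedding in the sense of the preceding definition, i.e.\ that $X \times \One_{\V} \simeq X$ for every $X : \U$.

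The plan is therefore first to exhibit an explicit equivalence $X \times \One_{\V} \simeq X$. I would take the first projection $\fst : X \times \One_{\V} \to X$, with inverse $x \mapsto (x,\star)$. Both composites are (pointwise) equal to the identity: one composite is definitionally the identity on $X$, and for the other it suffices to observe that $(x,u) = (x,\star)$ for every $u : \One_{\V}$, because $\One_{\V}$ is a subsingleton, so that $\Id$-of-$\Sigma$ (\cref{Id-of-Sigma}) gives the required identification. By \cref{equiv-iff-invertible}, this invertible map is an equivalence. Hence $\lift{U}{V}$ is a universe embedding.

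With this in hand, the conclusion is immediate: \cref{universe-embeddings-are-embeddings} directly yields that $\lift{U}{V} : \U \to \U \sqcup \V$ is an embedding, using univalence of both $\U$ and $\U \sqcup \V$. There is no real obstacle here; the corollary is just a concrete instance of the preceding proposition, and the only minor point to check is the triviality that a product with a contractible factor is equivalent to the other factor.
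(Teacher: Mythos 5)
Your proof is correct and takes essentially the same approach the paper intends: the corollary is a direct instance of \cref{universe-embeddings-are-embeddings} once one observes that $\lift{U}{V}$ is a universe embedding, a fact that the paper already records informally in \cref{lift-to-higher-universes}. Your explicit verification of the equivalence $X \times \One_{\V} \simeq X$ via the first projection and \cref{Id-of-Sigma}, \cref{equiv-iff-invertible} is exactly what is needed.
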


\section{Small and locally small types}\label{sec:small-and-locally-small-types}

A fundamental theme of this work will be the concept of a (locally) small type,
as (im)predicativity is all about whether (types of) propositions are small or
not.

\begin{definition}[(Local) \(\U\)-smallness; Rijke~\cite{Rijke2017}]%
  \label{def:smallness}
  A type \(X\) in any universe is
  \begin{enumerate}[(i)]
  \item \emph{\smallt{U}} if it is equivalent to a type in the universe \(\U\),
    i.e.%
    \index{smallness|textbf}
    \[
      {X \issmalltype{\U}} \colonequiv \textstyle\sum_{Y : \U} \pa*{Y \simeq X},
    \]
  \item \emph{locally \smallt{U}} if the type \((x = y)\) is \smallt{U} for
    every \(x,y : X\).%
    \index{smallness!local}\qedhere
  \end{enumerate}
\end{definition}

From a categorical perspective, \(\U\)-small really means
\emph{essentially} \(\U\)\nobreakdash-small, because we are considering types up to
equivalence. We simply call it \(\U\)-smallness, because the corresponding
strict notions where \(Y\) is equal (definitional, or up to the intensional
identity type) to \(X\) can only make sense if \(X\) is already in the same
universe as \(Y\).

A fact that we will often use tacitly is the useful but simple observation that
if \(X\) is \smallt{U} and \(X \simeq Y\) then \(Y\) is \smallt{U} too.

\begin{example}\hfill
  \begin{enumerate}[(i)]
  \item Every \smallt{U} type is locally \smallt{U}.
  \item The type \(\Omega_{\U}\) of propositions in a universe \(\U\) lives in
    \(\U^+\), but is locally \smallt{U} by propositional extensionality.
    \qedhere
  \end{enumerate}
\end{example}

Even though we phrased \(\U\)-smallness using equivalences, a type can be
\(\U\)-small in at most one way, provided that the universes involved are
univalent.

\begin{proposition}\label{is-small-is-prop}
  If \(\V\) and \(\U \sqcup \V\) are univalent universes, then the type
  expressing that \(X\) is \(\V\)-small is a proposition for every \(X : \U\).
\end{proposition}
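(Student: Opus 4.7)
The plan is to apply \cref{subsingleton-criterion}, which reduces the task to showing that $X \issmalltype{\V}$ is contractible under the assumption that it is inhabited. So suppose we are given a witness $(Y_0, e_0) : \Sigma_{Y : \V}(Y \simeq X)$; I will think of $Y_0$ as a ``local copy'' of $X$ living inside $\V$ and reduce to the fact (\cref{singleton-type-is-singleton}) that a based path space is contractible.

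The key step is to establish the chain of equivalences
\[
  \Sigma_{Y : \V}(Y \simeq X) \;\simeq\; \Sigma_{Y : \V}(Y \simeq Y_0) \;\simeq\; \Sigma_{Y : \V}(Y = Y_0).
\]
The first equivalence is induced fibrewise by post-composition with $e_0^{-1}$, sending $(Y, e) \mapsto (Y, e_0^{-1} \circ e)$, with obvious inverse $(Y, f) \mapsto (Y, e_0 \circ f)$; here I use that $(Y \simeq X)$ and $(Y \simeq Y_0)$ are subtypes of the corresponding function types (by \cref{being-equiv-is-prop} and \cref{prop-subtype}), so that function extensionality turns the two round-trip equations for $e_0$ into identifications of equivalences. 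The second equivalence is obtained fibrewise from the univalence of $\V$ applied to pairs $Y, Y_0 : \V$. Finally, \cref{singleton-type-is-singleton} gives that $\Sigma_{Y : \V}(Y = Y_0)$ is a singleton with centre $(Y_0, \refl)$, so $X \issmalltype{\V}$ is contractible, and hence, by \cref{set-if-subsingleton} and \cref{being-prop-is-prop}, certainly a proposition.

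The main thing to check carefully is the first step: the post-composition bijection must be a genuine equivalence of $\Sigma$-types, not merely a correspondence on underlying functions. This is where the univalence of $\U \sqcup \V$ enters the picture, because the fibres $(Y \simeq X)$ live in $\U \sqcup \V$, and identifications $(Y, e) = (Y_0, e_0)$ in the original $\Sigma$-type are characterised, via \cref{Id-of-Sigma}, by transported equivalences valued in that larger universe; univalence of $\U \sqcup \V$ (together with \cref{being-equiv-is-prop}) ensures these identity types behave as expected, so that the fibrewise post-composition assembles into an equivalence of total spaces rather than merely a pointwise bijection. Once this bookkeeping is dispensed with, the remaining manipulation is just the composition of equivalences described above.
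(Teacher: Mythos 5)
Your proof is correct and takes a genuinely different route from the paper's. The paper's proof works by applying the universe-embedding maps \(\lift{U}{V}\) and \(\lift{V}{U}\) to move both sides into \(\U\sqcup\V\), rewriting \(X\issmalltype{\V}\) as the fibre \(\fib_{\lift{V}{U}}(\lift{U}{V}(X))\), and then invoking \cref{lift-is-embedding} (which is where the univalence of both \(\V\) and \(\U\sqcup\V\) is consumed) to conclude this fibre is a proposition. You instead invoke \cref{subsingleton-criterion} to assume a witness \((Y_0,e_0)\), then reduce to contractibility of the based path space \(\Sigma_{Y:\V}(Y=Y_0)\) via two fibrewise equivalences. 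This is more elementary, closer in spirit to \cref{univalent-iff-contractibility}, and it actually establishes something slightly stronger: tracing through your argument, only univalence of \(\V\) and function extensionality are used, not univalence of \(\U\sqcup\V\).

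That observation points to the one weak spot in your write-up. Your final paragraph says that univalence of \(\U\sqcup\V\) is needed to promote the fibrewise post-composition to an equivalence of total spaces, ``because the fibres \((Y\simeq X)\) live in \(\U\sqcup\V\)''. This is a misconception. A family of fibrewise equivalences \(\phi_Y : P(Y)\simeq Q(Y)\) always induces an equivalence \(\Sigma_Y P(Y)\simeq \Sigma_Y Q(Y)\), simply by the direct construction \((Y,p)\mapsto(Y,\phi_Y(p))\) with inverse \((Y,q)\mapsto(Y,\phi_Y^{-1}(q))\); the round-trips follow from \cref{Id-of-Sigma} with \(\pi\colonequiv\refl\) (so no transport is involved) and the fact that \(\phi_Y^{-1}\) inverts \(\phi_Y\). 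No characterisation of identity types in a larger universe, and no univalence, enters this step; only \cref{being-equiv-is-prop} and function extensionality are used to treat the composed equivalences as equal once their underlying functions are pointwise equal. So the hypothesis on \(\U\sqcup\V\) in the statement is in fact unused by your proof.
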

\begin{proof}
  If \(\U \sqcup \V\) is univalent, then
  \begin{align*}
    X \issmalltype{\V} &\equiv {\Sigma_{Y : \V}(Y \simeq X)} \\
    &\simeq {\Sigma_{Y : \V}(\lift{V}{U}(Y) \simeq \lift{U}{V}(X))}
    &&\text{(as the lifts are universe embeddings)} \\
    &\simeq {\Sigma_{Y : \V}(\lift{V}{U}(Y) = \lift{U}{V}(X))}
    &&\text{(as \(\U\sqcup\V\) is univalent)} \\
    &\equiv \fib_{\lift{V}{U}}\pa*{\lift{U}{V}(X)}.
  \end{align*}
  But the latter is a proposition because \(\lift{V}{U}\) is an embedding by
  \cref{lift-is-embedding} using our assumptions that \(\V\) and
  \(\U \sqcup \V\) are univalent.
\end{proof}
The converse also holds in the following form.
\begin{proposition}\label{is-small-univalence}
  The type expressing that \(X\) is \(\U\)-small is a proposition for every
  \(X : \U\) if and only if the universe \(\U\) is univalent.
\end{proposition}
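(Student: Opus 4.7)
The key observation is that for any $X : \U$, the type ``$X$ is $\U$-small,'' which unfolds to $\Sigma_{Y : \U}(Y \simeq X)$, is always \emph{pointed}: the pair $(X, \id_X)$ is a canonical inhabitant. A pointed subsingleton is automatically contractible (and vice versa). Hence the assertion ``$X \issmalltype{\U}$ is a proposition for every $X : \U$'' is logically equivalent to ``$\Sigma_{Y : \U}(Y \simeq X)$ is contractible for every $X : \U$.''

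The plan is therefore to reduce the statement to \cref{univalent-iff-contractibility}, which already characterises univalence of $\U$ in terms of the contractibility of $\Sigma_{Y : \U}(X \simeq Y)$ for every $X : \U$. The only mismatch is the direction of the equivalence: \cref{univalent-iff-contractibility} has $X \simeq Y$, whereas the definition of smallness gives $Y \simeq X$. I would bridge this by noting that the map sending an equivalence to its inverse gives a (canonical) equivalence $(X \simeq Y) \simeq (Y \simeq X)$, so that the two $\Sigma$-types are equivalent, and contractibility transports across equivalences.

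Putting this together: for the backward direction, assume $\U$ is univalent; by \cref{univalent-iff-contractibility} and the equivalence above, $\Sigma_{Y : \U}(Y \simeq X)$ is contractible, hence in particular a subsingleton. (Alternatively, this is the special case $\V \colonequiv \U$ of \cref{is-small-is-prop}.) For the forward direction, assume $X \issmalltype{\U}$ is a proposition for every $X : \U$; since it is also pointed by $(X, \id_X)$, it is contractible, and so by \cref{univalent-iff-contractibility} (together with the equivalence $(X \simeq Y) \simeq (Y \simeq X)$) the universe $\U$ is univalent.

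I do not anticipate any real obstacle here: the argument is essentially a repackaging of \cref{univalent-iff-contractibility} using the trivial fact that pointed propositions are contractible. The only mild care needed is the cosmetic swap between $X \simeq Y$ and $Y \simeq X$, which is handled by the inverse-of-equivalence map.
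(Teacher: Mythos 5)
Your proof is correct and takes essentially the same approach as the paper, which simply says the result follows from \cref{univalent-iff-contractibility}; you have just spelled out the two elementary observations (pointed subsingletons are contractible, and $(X \simeq Y) \simeq (Y \simeq X)$) that the paper leaves implicit.
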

\begin{proof}
  This follows from \cref{univalent-iff-contractibility}.
\end{proof}

\begin{lemma}\label{being-small-prop-is-prop}
  Propositional extensionality suffices to prove that being \(\U\)-small is a
  property for propositions.
\end{lemma}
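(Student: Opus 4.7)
The plan is to show directly that for any proposition \(P\) the type \(P \issmalltype{\U} \equiv \Sigma_{Y : \U}(Y \simeq P)\) has at most one element, using only propositional extensionality (together with the background assumption of function extensionality). The key observation is that, because \(P\) is a proposition, the second component is already forced to be propositional, so the whole \(\Sigma\)-type is essentially a subtype of \(\U\) and the problem reduces to identifying first components.

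First I would show that for any \(Y : \U\) the type \(Y \simeq P\) is a proposition. By \cref{being-equiv-is-prop} and \cref{prop-subtype} it sits as a subtype of \(Y \to P\), and \(Y \to P\) is a proposition by \cref{Pi-is-prop} since \(P\) is. Consequently, by \cref{prop-subtype} applied to the first projection \(\pa*{\Sigma_{Y : \U}(Y \simeq P)} \to \U\), two inhabitants \((Y,e)\) and \((Y',e')\) of \(P \issmalltype{\U}\) are equal as soon as \(Y = Y'\).

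Next I would verify that any two such \(Y\) and \(Y'\) are indeed equal. Both are propositions: the underlying maps of the equivalences \(e\) and \(e'\) are invertible, hence left-cancellable by \cref{lc-if-invertible}, so \(Y\) and \(Y'\) inherit the propositional structure of \(P\) via \cref{lc-to-prop}. Moreover, \(Y\) and \(Y'\) are logically equivalent, since composing \(e\) with the inverse of \(e'\) (and vice versa) gives maps back and forth. Propositional extensionality for \(\U\) then supplies an identification \(Y = Y'\), and combining this with the previous paragraph yields \((Y,e) = (Y',e')\), completing the proof.

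I do not expect a serious obstacle here: the only delicate point is noting that propositional extensionality is applicable, which requires that \(Y\) and \(Y'\) actually live in \(\U\) (true by definition of \(\U\)-smallness) and that they are propositions (obtained from \(P\) via the equivalences). Everything else is routine application of earlier closure lemmas, and crucially no use of univalence is made, in contrast to \cref{is-small-is-prop}.
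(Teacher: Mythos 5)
Your proof is correct, but it takes a genuinely different route from the paper's. The paper's proof is a one-line observation: it points out that the argument of \cref{is-small-is-prop} together with its dependency \cref{universe-embeddings-are-embeddings} degrades gracefully to propositional extensionality when the types involved are propositions, because propositional extensionality is precisely what is needed to pass from an equivalence of propositions to an identification, replacing the two uses of univalence in \cref{universe-embeddings-are-embeddings}. Your argument instead bypasses the universe-embedding and lifting machinery entirely and shows directly that \(\Sigma_{Y : \U}(Y \simeq P)\) has at most one element: first establishing that \(Y \simeq P\) is a proposition for every \(Y : \U\) (by embedding it into the proposition \(Y \to P\)), reducing equality of pairs to equality of first components via \cref{prop-subtype}, and then using that any two witnesses \(Y\), \(Y'\) are logically equivalent propositions, hence identified by propositional extensionality. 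Your version is more self-contained and elementary, and arguably more illuminating as a standalone proof; the paper's is shorter to state but relies on the reader re-tracing the earlier lifting-based argument. One small remark: when you apply \cref{prop-subtype} to the outer \(\Sigma\), you are implicitly relying on having just established that \(Y \simeq P\) is a proposition for each fixed \(Y : \U\) — not only a subtype of a proposition — which does follow (e.g.\ by \cref{lc-to-prop}, since embeddings are left-cancellable), but is worth saying explicitly.
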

\begin{proof}
  For propositions we see that the argument of \cref{is-small-is-prop} and its
  dependency \cref{universe-embeddings-are-embeddings} only require
  propositional extensionality and not full univalence.
\end{proof}

We end this section by showing our main technical result on small types here,
namely that being small is closed under retracts.
The following original definition extends the notion of a small type to
functions.

\begin{definition}[\(\U\)-smallness for maps]
  A map \(f : X \to Y\) is said be \(\U\)-small if all its fibres are.%
  \index{smallness!for maps}
\end{definition}

\begin{lemma}\label{small-maps-lemmas}\hfill
  \begin{enumerate}[(i)]
  \item\label{small-type-in-terms-of-small-map} A type \(X\) is \(\U\)-small if
    and only if the unique map \(X \to \One_{\U_0}\) is \(\U\)-small.
  \item\label{small-domain-iff-small-map} If \(Y\) is \(\U\)-small, then a map
    \(f : X \to Y\) is \(\U\)-small if and only if \(X\) is.
  \end{enumerate}
\end{lemma}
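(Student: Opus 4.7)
The plan is to unfold the definition of $\U$-smallness for a map (all fibres are $\U$-small) and to reduce both claims to the standard equivalence $X \simeq \Sigma_{y : Y}\fib_f(y)$ together with the closure of $\U$ under $\Sigma$ and identity types.

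For \ref{small-type-in-terms-of-small-map}, I would first observe that the unique map $!_X : X \to \One_{\U_0}$ has a single fibre to worry about, namely $\fib_{!_X}(\star) \equiv \Sigma_{x : X}\pa*{\star = \star}$. Since $\One_{\U_0}$ is a subsingleton (hence a set), the type $\star = \star$ is contractible, so $\fib_{!_X}(\star) \simeq X$. Because smallness is invariant under equivalences, $\fib_{!_X}(\star)$ is $\U$-small if and only if $X$ is. Since $\issmalltype{\U}$ for $!_X$ is, up to propositional equivalence, just smallness of this unique fibre (any other inhabitant of $\One_{\U_0}$ is equal to $\star$, and smallness is preserved by transport along such equalities), this finishes \ref{small-type-in-terms-of-small-map}.

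For \ref{small-domain-iff-small-map}, the key ingredient is the total-space equivalence $X \simeq \Sigma_{y : Y}\fib_f(y)$, which follows from reshuffling the $\Sigma$ and the contractibility of $\Sigma_{y : Y}\pa*{f(x) = y}$ (cf.\ \cref{singleton-type-is-singleton}). In the forward direction, assume $f$ is $\U$-small, so we have a family $F : Y \to \U$ with $F(y) \simeq \fib_f(y)$ for every $y : Y$. By hypothesis, we also have $Y' : \U$ with an equivalence $e : Y' \simeq Y$. Then
\[
  X \simeq \Sigma_{y : Y}\fib_f(y) \simeq \Sigma_{y' : Y'} F(e(y')),
\]
which lives in $\U$ by closure of $\U$ under $\Sigma$, proving $X$ is $\U$-small. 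For the converse, assume $X \simeq X'$ with $X' : \U$ and $Y \simeq Y'$ with $Y' : \U$. Transporting $f$ along these equivalences yields $f' : X' \to Y'$ with $\fib_f(y) \simeq \fib_{f'}(y')$ for the corresponding $y'$. The fibre $\fib_{f'}(y') \equiv \Sigma_{x' : X'}\pa*{f'(x') = y'}$ is a $\Sigma$-type whose base is in $\U$ and whose family consists of identity types in $Y' : \U$; by closure of $\U$ under identity types and $\Sigma$, it lies in $\U$, so each $\fib_f(y)$ is $\U$-small.

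There is no real obstacle here; the only point that requires a little care is ensuring that the equivalences used to transport $f$ in the converse of \ref{small-domain-iff-small-map} genuinely produce a small representative of every fibre simultaneously, which is handled uniformly by the two chosen equivalences without invoking choice. Everything else is a direct application of the stated closure properties of universes.
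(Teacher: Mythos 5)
Your proof is correct and takes essentially the same approach as the paper's: part \ref{small-type-in-terms-of-small-map} rests on the same observation that \(\fib_{!_X}(\star) \simeq X\), and for part \ref{small-domain-iff-small-map} both directions match — the paper also uses the total-space equivalence \(X \simeq \Sigma_{y : Y}\fib_f(y)\) for one implication, and for the other writes out the equivalence \(\fib_f(y) \simeq \Sigma_{x' : X'}\pa*{\psi^{-1}\pa*{f(\varphi(x'))} = \psi^{-1}(y)}\) explicitly, which is exactly what you call ``transporting \(f\) along the equivalences.''
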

\begin{proof}
  \ref{small-type-in-terms-of-small-map} Writing \(!_X\) for the map
  \(X \to \One_{\U_0}\) we have \(\fib_{!_X}(\star) \simeq X\).
  \ref{small-domain-iff-small-map} If \(X\) and \(Y\) are both
  \(\U\)\nobreakdash-small, witnessed respectively by \(\varphi : X' \simeq X\)
  and \(\psi : Y' \simeq Y\), then \(\fib_f(y)\) is \(\U\)\nobreakdash-small for
  every \(y : Y\), because
  \(\fib_f(y) \equiv \Sigma_{x : X}\pa*{f(x) = y} \simeq \Sigma_{x' :
    X'}\pa*{\psi^{-1}\pa*{f(\varphi(x'))} = \psi^{-1}(y)}\).
  Conversely, if \(f\) and \(Y\) are \(\U\)-small, then so is \(X\), because
  \cite[Lemma~4.8.2]{HoTTBook} tells us that
  \(X \simeq \Sigma_{y : Y}\fib_f(y)\).
\end{proof}

\begin{theorem}\label{is-small-retract}
  Every section into a \(\U\)-small type is \(\U\)-small. In particular, its
  domain is \(\U\)-small.
\end{theorem}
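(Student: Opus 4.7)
The plan is to show that $s$ itself is $\U$-small as a map; once established, the ``in particular'' claim about $X$ follows immediately from \cref{small-maps-lemmas}~\ref{small-domain-iff-small-map} since $Y$ is $\U$-small by assumption. So what I must show is that each fibre $\fib_s(y)$ is $\U$-small.

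Write $r : Y \to X$ for the retraction and $\eta : r \circ s \sim \id_X$ for the retraction identity. I would first exhibit $\fib_s(y)$ as a retract of
\[
  F(y) \colonequiv \Sigma_{y' : Y}\pa*{s(r(y')) = y}
\]
via the maps
\[
  f : (x, p) \mapsto \pa*{s(x),\; \ap_s(\eta_x) \pathcomp p}
  \qquad\text{and}\qquad
  g : (y', q) \mapsto \pa*{r(y'),\; q}.
\]
That $g \circ f = \id$ reduces, using \cref{Id-of-Sigma}, to a short transport calculation in the family $x \mapsto s(x) = y$ together with $\eta$. Next I would observe that $F(y)$ is $\U$-small: $Y$ being $\U$-small is automatically locally $\U$-small (its identity types are equivalent to those of a chosen small equivalent copy), so each $s(r(y')) = y$ is $\U$-small, and $F(y)$ is then a $\Sigma$ of $\U$-small types, hence $\U$-small by the closure properties of $\U$.

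The main obstacle is the final step: concluding from ``$\fib_s(y)$ is a retract of a $\U$-small type'' that $\fib_s(y)$ is itself $\U$-small. This is not merely a matter of taking a subtype --- a retract can in general carry coherence data invisible to the ambient $\U$-small type --- and for arbitrary types one cannot extract smallness from the retract relation alone. The strategy I would pursue is to transport the section-retraction pair $(f, g)$ across an equivalence $Y \simeq Y' : \U$ to obtain an idempotent on a type in $\U$ that arises from a genuine section-retraction, and then to construct its splitting directly inside $\U$, using $\eta$ as the necessary coherence. Alternatively, one may appeal to a prior general lemma stating that retracts of $\U$-small types are $\U$-small; this is where the real technical work lies and where a join-construction-style argument naturally enters.
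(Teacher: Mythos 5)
Your proposal has a genuine gap, and it's one you flag yourself: you reduce to the claim that a retract of a $\U$-small type is $\U$-small (applied to $\fib_s(y)$ as a retract of your $F(y)$), but that claim is precisely the ``in particular'' part of the theorem you are proving — $X$ is a retract of $Y$ by hypothesis, and you are asked to show it is $\U$-small. The detour through fibres is therefore circular in its current form: it re-poses the theorem at the level of $\fib_s(y) \to F(y)$ instead of $X \to Y$, with no gain, and the hard step is still entirely outstanding. (Your algebraic setup for the retraction $\fib_s(y) \to F(y) \to \fib_s(y)$ is correct, for what it's worth, but it was not needed.)

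The paper's proof goes directly at $X$ and finishes the argument. Since $s : X \to Y$ is a section with retraction $r$, the composite $f := r \circ s : Y \to Y$ is an idempotent arising from a section–retraction pair; by \cite[Lemma~3.6]{Shulman2016} such an idempotent is automatically a \emph{quasi}-idempotent — that is, $\eta$ already provides all the coherence needed, so there is no further coherence data to supply. Then \cite[Theorem~5.3]{Shulman2016} splits $f$ through a type $A$ that has an explicit description
\[
  A \;\colonequiv\; \Sigma_{\sigma : \Nat \to Y}\,\Pi_{n : \Nat}\pa*{f(\sigma_{n+1}) = \sigma_n},
\]
a sequential-limit construction rather than a join construction; this $A$ is visibly $\U$-small because $Y$ is. One checks $X \simeq A$ via $x \mapsto r'(s(x))$ and $a \mapsto r(s'(a))$ (with $s',r'$ the splitting maps), so $X$ is $\U$-small, and then \cref{small-maps-lemmas}\ref{small-domain-iff-small-map} gives that the section $s$ is a $\U$-small map. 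You did gesture at idempotent splitting as ``where the real technical work lies,'' which is the right instinct, but the argument has to actually be run — and run once, on $r \circ s$ directly — for the proof to close.
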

\begin{proof}
  We show that the domain is \(\U\)-small from which it follows that the section
  is \(\U\)-small by~\cref{small-maps-lemmas}\ref{small-domain-iff-small-map}.
  So suppose we have a section \(s : X \to Y\) with retraction \(r : Y \to X\)
  and that \(Y\) is \(\U\)-small.
  By~\cite[Lemma~3.6]{Shulman2016}, the endomap \(f \colonequiv r \circ s\) on
  \(Y\) is a quasi-idempotent~\cite[Definition~3.5]{Shulman2016}. Hence,
  \cite[Theorem~5.3]{Shulman2016} tells us that \(f\) can be split as
  \(Y \xrightarrow{r'} A \xrightarrow{s'} Y\) for some maps \(s'\) and \(r'\)
  and some type \(A\) recalled below.
  Now \(X\) and \(A\) are equivalent as witnessed by the maps
  \(x \mapsto r'(s(x))\) and \(a \mapsto r(s'(a))\).
  Finally, we recall from the proof of~\cite[Theorem~5.3]{Shulman2016} that
  \(A \colonequiv \Sigma_{\sigma : \Nat \to Y}\Pi_{n : \Nat}\pa*{f(\sigma_{n+1})
    = \sigma_n}\) which is \(\U\)-small because \(Y\) is assumed to be.
\end{proof}

\begin{remark}\label{small-retract-improvement}
  In~\cite{deJongEscardo2021b} we had a weaker version of~\cref{is-small-retract} where
  we included the additional assumption that the section was an embedding. (Note
  that if every section is an embedding, then every type is a
  set~\cite[Remark~3.11(2)]{Shulman2016}, but that all sections into \emph{sets}
  are embeddings~\cite[\texttt{lc-maps-into-sets-are-embeddings}]{Escardo2019}.)
  We are grateful to the anonymous reviewer of~\cite{deJongEscardo2022} who
  proposed the above strengthening.
\end{remark}

\section{Impredicativity: resizing axioms}\label{sec:impredicativity-resizing-axioms}

We have already explained in \cref{sec:constructivity} that our setup is
constructive because we do not assume excluded middle or the axiom of choice.
Similarly, our setup is \emph{predicative} because we do not assume certain
resizing principles concerning propositions which we define below.
Recall that the type of all propositions in a universe \(\U\) is denoted by
\(\Omega_{\U}\) and lives in \(\U^+\). Similarly, the type of all
\(\lnot\lnot\)-stable propositions in \(\U\) is denoted by \(\Omeganotnot{\U}\)
and also lives in \(\U^+\).

\begin{definition}[Propositional resizing]%
  \index{resizing!propositional|textbf}%
  \index{resizing!of the type of propositions|textbf}%
  \hfill
  \begin{enumerate}[(i)]
  \item\label{item-prop-res}%
    By \emph{Propositional-\(\text{Resizing}_{\U,\V}\)} we mean the assertion
    that every proposition \(P\) in a universe \(\U\) is \smallt{V}.
    \nomenclature[PropositionalResizing]{$\Propresizing{\U}{\V}$}{propositional
      resizing of propositions in \(\U\) to \(\V\)}
  \item\label{item-Omega-res}%
    We write
    \emph{\(\Omega\)-\(\text{Resizing}_{\U,\V}\)} for the assertion that the type
    \(\Omega_{\U}\) is \smallt{V}.
    \nomenclature[OmegaResizing]{$\Omegaresizing{\U}{\V}$}{assertion that
      \(\Omega_{\U}\) is \(\V\)-small}
  \item%
    By \emph{\(\Omega_{\lnot\lnot}\)-\(\text{Resizing}_{\U,\V}\)} we mean the
    assertion that the type \(\Omeganotnot{\U}\) is \smallt{V}.
    \nomenclature[OmeganotnotResizing]{$\Omeganotnotresizing{\U}{\V}$}{assertion that
      \(\Omeganotnot{\U}\) is \(\V\)-small}
  \item For the particular case of a single universe, we respectively write
    \emph{\(\Omega\)-\(\text{Resizing}_{\U}\)} and
    \emph{\(\Omega_{\lnot\lnot}\)-\(\text{Resizing}_{\U}\)} for the assertions
    that \(\Omega_{\U}\) is \smallt{U} and \(\Omeganotnot{\U}\) is
    \smallt{U}. \qedhere
    \nomenclature[OmegaResizingalt]{$\Omegaresizingalt{\U}$}{assertion that
      \(\Omega_{\U}\) is \(\U\)-small}
    \nomenclature[OmeganotnotResizingalt]{$\Omeganotnotresizingalt{\U}$}{assertion
      that \(\Omeganotnot{\U}\) is \(\U\)-small}
  \end{enumerate}
\end{definition}

The resizing of the type of propositions in a universe~\ref{item-Omega-res} is
closely related to the resizing of the propositions
themselves~\ref{item-prop-res}, as we show now.

\begin{proposition}[cf.~{\cite[Section~3.36.4]{Escardo2019}}]\label{Omega-prop-resizing-iff}%
  For every two type universes \(\U\)~and~\(\V\) we have that
  \begin{enumerate}[(i)]
  \item\label{Omegaresizing-implies-Propresizing} \(\Omegaresizing{\U}{\V}\)
    implies \(\Propresizing{\U}{\V}\), and
  \item\label{Propresizing-implies-Omegaresizing} the conjunction of
    \(\Propresizing{\U}{\V}\) and \(\Propresizing{\V}{\U}\) implies
    \(\Omegaresizing{\U}{\V^+}\).
  \end{enumerate}
\end{proposition}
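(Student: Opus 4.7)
For part~(i), suppose \(\Omegaresizing{\U}{\V}\) holds, giving an equivalence \(e : \Omega' \simeq \Omega_{\U}\) with \(\Omega' : \V\). Given a proposition \(P : \U\) with witness \(i : \issubsingleton(P)\), let \(\mathbf{1} \colonequiv (\One_{\U}, i_1) : \Omega_{\U}\) denote the true proposition. I would define
\[
P' \colonequiv \pa*{e^{-1}(P, i) = e^{-1}(\mathbf{1})},
\]
which lives in \(\V\) because \(\Omega' : \V\) and universes are closed under identity types. To see \(P' \simeq P\), I would build a chain: first, since \(e^{-1}\) is an equivalence and hence an embedding, \cref{embedding-criterion} gives \(P' \simeq \pa*{(P,i) = \mathbf{1}}\) in \(\Omega_{\U}\); second, by \cref{prop-subtype} applied to the embedding \(\fst : \Omega_\U \to \U\), this is equivalent to \(\pa*{P = \One_{\U}}\); third, both \(\pa*{(P,i) = \mathbf{1}}\) (a proposition because \(\Omega_\U\) is a set by \cref{Omega-is-set}) and \(P\) are propositions that are logically equivalent via propositional extensionality (in one direction applied to the trivial logical equivalence \(P \leftrightarrow \One_\U\) using the given \(p : P\), in the other transporting \(\star : \One_\U\)), hence equivalent as types. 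This exhibits \(P\) as \(\V\)-small.

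For part~(ii), assume \(\Propresizing{\U}{\V}\) and \(\Propresizing{\V}{\U}\). Since \(\Omega_{\V} : \V^+\), it suffices to construct an equivalence \(\Omega_{\U} \simeq \Omega_{\V}\). By \cref{being-small-prop-is-prop}, smallness of a proposition is itself a proposition, so each resizing hypothesis unfolds to a function that assigns, to every \((P,i) : \Omega_\U\), a specific pair \((Y_P, \varphi_P)\) with \(Y_P : \V\) and \(\varphi_P : Y_P \simeq P\); an analogous assignment goes the other way. I would define \(f : \Omega_{\U} \to \Omega_{\V}\) by \(f(P, i) \colonequiv (Y_P, j_P)\), where \(j_P\) transfers the subsingleton witness \(i\) across \(\varphi_P\), and symmetrically \(g : \Omega_{\V} \to \Omega_{\U}\).

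To show \(f\) and \(g\) are mutually inverse, fix \((P, i) : \Omega_\U\) and write \(g(f(P, i)) = (Z, k)\); by construction \(Z \simeq Y_P \simeq P\). Thus \(Z\) and \(P\) are logically equivalent propositions in \(\U\), so propositional extensionality gives \(Z = P\) in \(\U\), and \cref{prop-subtype} then yields \(g(f(P,i)) = (P,i)\) in \(\Omega_\U\); the other composite is handled symmetrically. By \cref{equiv-iff-invertible}, \(f\) is an equivalence, so \(\Omega_\U\) is \(\V^+\)-small. The only real care needed throughout is in tracking universe levels — making sure that \(P'\) in part~(i) genuinely lies in \(\V\) and that the equivalence in part~(ii) crosses universes correctly; the mathematical content is essentially just propositional extensionality together with the fact that \(\Omega_\U\) is a subtype of \(\U\).
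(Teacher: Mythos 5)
Your proof is correct and follows essentially the same route as the paper's. In part~(i) you encode \(P\) as an identity type in the small copy of \(\Omega_{\U}\) (the paper writes it as \(\varphi(P) = \varphi(\One_{\U})\); you write it as \(e^{-1}(P,i) = e^{-1}(\mathbf{1})\), which is the same thing with the equivalence run the other way), and then observe this is a proposition logically equivalent to \(P\). In part~(ii) you build maps \(\Omega_{\U} \to \Omega_{\V}\) and back from the two resizing hypotheses and check they are mutually inverse via propositional extensionality, exactly as the paper does. One small remark: your appeal to \cref{being-small-prop-is-prop} in part~(ii) is not needed to extract the assignments \(P \mapsto (Y_P,\varphi_P)\), since \(\Propresizing{\U}{\V}\) is already a \(\Pi\)-type with no truncation, so it literally is such a function; the lemma would only be relevant if you wanted to argue these witnesses are canonical, which the proof does not require.
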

\begin{proof}
  \ref{Omegaresizing-implies-Propresizing} Assuming \(\Omegaresizing{\U}{\V}\)
  we have an equivalence \(\varphi : \Omega_{\U} \simeq T : \V\), so if \(P\) is
  an arbitrary proposition in \(\U\), then \(P\) holds if and only if
  \(P = \One_{\U}\) if and only if \(\varphi(P) = \varphi\pa*{\One_{\U}}\), but
  the latter is a type in \(\V\).
  \ref{Propresizing-implies-Omegaresizing} Assuming \(\Propresizing{\U}{\V}\)
  and \(\Propresizing{\V}{\U}\) yields maps
  \(\phi : \Omega_{\U} \to \Omega_{\V}\) and
  \(\psi : \Omega_{\V} \to \Omega_{\U}\) such that for every \(P : \Omega_{\U}\)
  and \(Q : \Omega_{\V}\) we have \(P \simeq \phi(P)\) and \(Q \simeq \psi(Q)\).
  Hence, for every \(P : \Omega_{\U}\) we have
  \(P \simeq \phi(P) \simeq \psi(\phi(P))\) and similarly
  \(Q \simeq \phi(\psi(Q))\) for every \(Q : \Omega_{\V}\).
  Thus, by propositional extensionality, we get an equivalence
  \(\Omega_{\U} \simeq \Omega_{\V}\), but \(\Omega_{\V} : \V^+\), so
  \(\Omegaresizing{\U}{\V^+}\) must hold.
\end{proof}

\begin{remark}
  In light of the occurrence of \(\V^+\)
  in~\cref{Omega-prop-resizing-iff}\ref{Propresizing-implies-Omegaresizing}, it
  is worth observing that if we assume the strong principle
  \(\Propresizing{\U}{\U_0}\) then \(\Omegaresizing{\U}{\U_1}\) holds. Hence, if
  we can resize every proposition to one in the lowest universe \(\U_0\), then
  the type of propositions in an arbitrary universe is equivalent to a type in
  \(\U_1\), which is a significant resizing for all universes except \(\U_0\).
\end{remark}

With the classical axiom of excluded middle, impredicativity becomes a
theorem. Thus, if we wish to explore predicativity (in the form of propositional
resizing axioms) in univalent foundations then we must work constructively.

\begin{proposition}[cf.~{\cite[Section~3.36.2]{Escardo2019}}]
  Excluded middle implies impredicativity. Specifically,%
  \index{excluded middle}\index{excluded middle!weak}
  \begin{enumerate}[(i)]
  \item\label{em-implies-Omegaresizing} excluded middle in \(\U\) implies
    \(\Omegaresizing{\U}{\U_0}\), and
  \item\label{wem-implies-Omeganotnotresizing} weak excluded middle in \(\U\)
    implies \(\Omeganotnotresizing{\U}{\U_0}\).
  \end{enumerate}
\end{proposition}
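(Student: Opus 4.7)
The plan is to reduce both statements directly to the equivalent formulations already established in \cref{em-and-wem-equivalent-formulations}. For part \ref{em-implies-Omegaresizing}, I would apply item \ref{item-em-Bool} of that lemma, instantiated with $\V \equiv \U_0$. This gives that the map $\Two_{\U_0} \to \Omega_{\U}$ sending $0 \mapsto \Zero_{\U}$ and $1 \mapsto \One_{\U}$ is an equivalence. Since $\Two_{\U_0}$ lives in $\U_0$, this equivalence is precisely a witness that $\Omega_{\U}$ is $\U_0$-small, so $\Omegaresizing{\U}{\U_0}$ holds.

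For part \ref{wem-implies-Omeganotnotresizing}, I would proceed analogously, using item \ref{item-wem-Bool} of the same lemma instead, again taking $\V \equiv \U_0$. Weak excluded middle in $\U$ yields an equivalence $\Two_{\U_0} \simeq \Omeganotnot{\U}$, and since $\Two_{\U_0} : \U_0$, this shows that $\Omeganotnot{\U}$ is $\U_0$-small, which is exactly $\Omeganotnotresizing{\U}{\U_0}$.

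There is no real obstacle here: the heavy lifting was done in establishing the various equivalent formulations of (weak) excluded middle in \cref{em-and-wem-equivalent-formulations}. Once the map $\Two_{\V} \to \Omega_{\U}$ (resp.\ to $\Omeganotnot{\U}$) is known to be an equivalence for an arbitrary universe $\V$, taking $\V \equiv \U_0$ makes the resizing immediate from the very definition of $\U_0$-smallness. The only subtlety worth remarking on is that the flexibility in the choice of $\V$ in \cref{em-and-wem-equivalent-formulations} is what allows us to obtain resizing down to the lowest universe $\U_0$ specifically, rather than merely to some $\V$ depending on $\U$.
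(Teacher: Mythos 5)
Your proposal is correct and matches the paper's own proof exactly: both parts reduce to \cref{em-and-wem-equivalent-formulations}\ref{item-em-Bool} (respectively \ref{item-wem-Bool}) instantiated at $\V \equiv \U_0$, whence the equivalence $\Two_{\U_0} \simeq \Omega_{\U}$ (resp.\ $\Omeganotnot{\U}$) directly witnesses $\U_0$-smallness. Your closing remark about the role of the free universe parameter $\V$ in that lemma is a nice observation, even if the paper leaves it implicit.
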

\begin{proof}
  \ref{em-implies-Omegaresizing} By
  \cref{em-and-wem-equivalent-formulations}\ref{item-em-Bool} we know that
  excluded middle in \(\U\) implies having an equivalence
  \(\Omega_{\U} \simeq \Two_{\U_0}\).
  \ref{wem-implies-Omeganotnotresizing} By
  \cref{em-and-wem-equivalent-formulations}\ref{item-wem-Bool}.
\end{proof}

\section{Quotients, replacement, and propositional truncations revisited}
\label{sec:quotients-replacement-prop-trunc-revisited}

We investigate the inter-definability and interaction of type universe levels of
propositional truncations and set quotients in the absence of propositional
resizing axioms. In particular, we will see that it is not so important if the
set quotient or propositional truncation lives in a higher universe. What is
paramount instead is whether the universal property applies to types in
arbitrary universes.
However, in some cases, like in~\cref{sec:small-suprema-of-ordinals}, it is
relevant whether set quotients are small and we show this to be equivalent to a
set replacement principle in~\cref{sec:set-replacement}.

\begin{remark}\label{prop-trunc-universes}
  Recall from~\cref{sec:propositional-truncation} that in this thesis we
  typically assume our universes to be closed under propositional
  truncations. However, in this section, we will be more general and assume that
  \(\squash{X} : F(\U)\) where \(F\) is a metafunction on universes, so that the
  above case is obtained by taking \(F\) to be the identity. We will also
  consider \(F(\U) = \U_1 \sqcup \U\) in the final subsection.
  \nomenclature[F]{$F(\U)$}{the propositional truncation of a type in \(\U\)
    lives in \(F(\U)\)}
\end{remark}

\subsection{Propositional truncations and propositional resizing}%
\label{sec:prop-trunc-resizing}%
\index{resizing!propositional|(}%
\index{propositional truncation|(}%
Voevodsky~\cite{Voevodsky2011} introduced propositional resizing rules in order
to construct propositional
truncations~\cite[Section~2.4]{PelayoVoevodskyWarren2015}. Here we review
Voevodsky's construction, paying special attention to the universes involved.

\emph{NB.\ We do not assume the availability of propositional truncations in
this section.}

\begin{definition}[Voevodsky propositional truncation, \(\squashVV{X}\)]
  The \emph{Voevodsky propositional truncation} \(\squashVV{X}\) of a type
  \(X : \U\) is defined as%
  \index{propositional truncation!Voevodsky}%
  \nomenclature[pipevdouble]{$\squashVV{-}$}{Voevodsky propositional truncation}
  \[
    \squashVV{X} \colonequiv \prod_{P : \U}
    \pa*{\issubsingleton(P) \to (X \to P) \to P}. \qedhere
  \]
\end{definition}

Observe that this is a
System-F~\cite{Girard1971,Reynolds1974,AwodeyFreySpeight2018} style definition
where we use the desired universal property and a (large) quantification to
encode a type.

Also notice that the part \((X \to P) \to P\) in the Voevodsky propositional
truncation generalises the double negation (which is given by taking
\(P \equiv \Zero\)). The double negation of a type is a proposition, but it
enjoys the universal property of the propositional truncation if and only if
excluded middle holds~\cite[Section~7]{KrausEtAl2017}.

Because of \cref{Pi-is-prop}, one can show that \(\squashVV{X}\) is indeed a
proposition for every type \(X\). Moreover, we have a map
\(\tosquashVV{-} : X \to \squashVV{X}\) given by
\(\tosquashVV{x} \colonequiv (P,i,f) \mapsto f(x)\).
\nomenclature[pipev]{$\tosquashVV{-}$}{unit of the Voevodsky propositional truncation}

Observe that \(\squashVV{X} : \U^+\), so using the notation
from~\cref{prop-trunc-universes}, we have \({F(\U) \colonequiv \U^+}\).
However, as we will argue for set quotients, it
does not matter so much where the truncated proposition lives; it is much more
important that we can eliminate into subsingletons in arbitrary universes, i.e.\
that \(\squashVV{-}\) satisfies the right universal property.
Given \(X : \U\) and a map \(f : X \to P\) to a proposition \(P : \U\) with
\(i : \issubsingleton(P)\), we have a map \(\squashVV{X} \to P\) given as
\(\Phi \mapsto \Phi(P,i,f)\).
However, if the proposition \(P\) lives in some other universe \(\V\), then we
seem to be completely stuck. To clarify this, we consider the example of
functoriality.

\begin{example}\label{Voevodsky-prop-trunc-func}
  If we have a map \(f : X \to Y\) with \(X : \U\) and \(Y : \U\), then we get a
  lifting simply by precomposition, i.e.\ we define
  \(\tosquashVV{f} : \squashVV{X} \to \squashVV{Y}\) by
  \(\tosquashVV{f}(\Phi) \colonequiv (P,i,g) \mapsto \Phi(P,i,g \circ f)\).
  But obviously, we also want functoriality for maps \(f : X \to Y\) with
  \(X : \U\) and \(Y : \V\), but this is impossible with the above definition of
  \(\tosquashVV{f}\), because for \(\squashVV{X}\) we are considering
  propositions in \(\U\), while for \(\squashVV{Y}\) we are considering
  propositions in \(\V\).

  In particular, even if the types \(X : \U\) and \(Y : \V\) are equivalent,
  then it does not seem possible to construct an equivalence between
  \(\squashVV{X}\) and \(\squashVV{Y}\). This issue also comes up if one tries
  to prove that the map \(\tosquashVV{-} : X \to \squashVV{X}\) is a
  surjection~\cite[Section~3.34.1]{Escardo2019}.
\end{example}

\begin{proposition}[cf.\ {\cite[Theorem~3.8]{KrausEtAl2017}}]%
  \label{prop-trunc-implies-resizing-of-VV-trunc}
  If our type theory has propositional truncations with \(\squash{X} : \U\)
  whenever \(X : \U\), then \(\squashVV{X}\) is \(\U\)-small.
\end{proposition}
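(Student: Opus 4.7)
The plan is to exhibit an equivalence between \(\squashVV{X}\) and the ordinary propositional truncation \(\squash{X}\), which by hypothesis lives in \(\U\). Since \(\squash{X}\) is a proposition in \(\U\) and \(\squashVV{X}\) is a proposition (as a \(\Pi\)-type of subsingleton-valued families, by \cref{Pi-is-prop}), it suffices by propositional extensionality --- or equivalently by constructing back-and-forth maps and invoking \cref{lc-to-prop} type reasoning --- to establish logical equivalence between the two.

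First I would build the map \(\squash{X} \to \squashVV{X}\). Given \(t : \squash{X}\), I need to produce, for every \(P : \U\) with \(i : \issubsingleton(P)\) and \(f : X \to P\), an element of \(P\). Here I invoke the universal property of the propositional truncation \(\squash{-}\): since \(P\) is a proposition in \(\U\) (and therefore in the universe where the universal property applies), \(f\) factors through \(\tosquash{-}\) yielding \(\bar f : \squash{X} \to P\), and I set the result to be \(\bar f(t)\).

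Next I would construct the reverse map \(\squashVV{X} \to \squash{X}\). Given \(\Phi : \squashVV{X}\), the trick is to instantiate \(\Phi\) at the specific proposition \(\squash{X}\) itself, which is legal precisely because of our hypothesis that \(\squash{X} : \U\). Concretely, I feed \(\Phi\) the triple consisting of \(\squash{X}\), its proof of being a subsingleton, and the unit \(\tosquash{-} : X \to \squash{X}\), obtaining an element of \(\squash{X}\). Combining the two directions yields an equivalence \(\squash{X} \simeq \squashVV{X}\), witnessing that \(\squashVV{X}\) is \(\U\)-small.

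The only subtle point --- and arguably the whole reason the statement is nontrivial --- is that the backward map crucially requires \(\squash{X}\) itself to lie in \(\U\), since \(\Phi\) only quantifies over propositions in \(\U\). Without the hypothesis \(\squash{X} : \U\) one cannot plug \(\squash{X}\) into \(\Phi\), and indeed the universe juggling observed in \cref{Voevodsky-prop-trunc-func} shows that this instantiation is exactly where Voevodsky's encoding needs help from resizing. Thus no extra work beyond careful bookkeeping is needed: the hypothesis does all the heavy lifting.
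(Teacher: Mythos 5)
Your proof is correct and the core idea --- plugging \(\squash{X}\) itself into \(\Phi\), which is exactly what the hypothesis \(\squash{X} : \U\) makes possible --- coincides with the paper's. The one place where you diverge slightly from the paper is the construction of the forward map \(\squash{X} \to \squashVV{X}\). The paper applies the universal property of \(\squash{X}\) to the map \(\tosquashVV{-} : X \to \squashVV{X}\) in one stroke, and explicitly remarks that this requires the universal property to allow elimination into propositions of a \emph{larger} universe, since \(\squashVV{X} : \U^+\). You instead curry: for a fixed \(t : \squash{X}\) you build, for each \((P, i, f)\), the element \(\bar f(t) : P\), where \(\bar f\) is obtained by eliminating \(\squash{X}\) into the proposition \(P : \U\). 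Your version therefore only uses elimination into propositions living in \(\U\), which is marginally more economical; the paper's version is more direct but requires the ``large elimination'' form of the universal property (which, as the paper notes, is indeed assumed). Both are valid given the paper's hypotheses, and the maps are equal anyway since both types are propositions.
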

\begin{proof}
  We will show that \(\squash{X}\) and \(\squashVV{X}\) are logically equivalent
  (i.e.\ we have maps in both directions), which suffices, because both types
  are subsingletons. We obtain a map \(\squash{X} \to \squashVV{X}\) by applying
  the universal property of \(\squash{X}\) to the map
  \(\tosquashVV{-} : X \to \squashVV{X}\). Observe that it is essential that the
  universal property allows for elimination into subsingletons in universes
  other than \(\U\), as \(\squashVV{X} : \U^+\). For the function in the other
  direction, simply note that \(\squash{X} : \U\), so that we can construct
  \(\squashVV{X} \to \squash{X}\) as
  \(\Phi \mapsto \Phi(\squash{X},i,\tosquash{-})\) where
  \(i\) witnesses that \(\squash{X}\) is a subsingleton.
\end{proof}
Thus, as is folklore in the univalent foundations community, we can view
higher inductive types as specific resizing axioms. But note that the converse
to the above proposition does not appear to hold, because even if
\(\squashVV{X}\) is \(\U\)-small, then it still wouldn't have the appropriate
universal property. This is because the definition of \(\squashVV{X}\) is a
dependent product over propositions in \(\U\) only, which now includes
\(\squashVV{X}\), but still misses propositions in other universes.
In the presence of resizing axioms, we could obtain the full universal property,
because we would have (equivalent copies of) all propositions in a single
universe:

\begin{proposition}[cf.~{\cite[Section~36.5]{Escardo2019}}]
  If \(\Propresizing{\U}{\U_0}\) holds for every universe \(\U\), then the
  Voevodsky propositional truncation satisfies the full universal property with
  respect to all types in all universes.
\end{proposition}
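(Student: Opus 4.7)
The plan is to use $\Propresizing{\U}{\U_0}$ to reduce an arbitrary proposition $Q$ (in some universe $\V$) to an equivalent copy living in the universe $\U$ over which the System-F-style definition of $\squashVV{X}$ quantifies, and then invoke the built-in universal property that we already have for $\U$-valued propositions.

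More precisely, suppose $X : \U$ and we are given a proposition $Q : \V$ together with a map $f : X \to Q$. First I would apply $\Propresizing{\V}{\U_0}$ to obtain a proposition $Q_0 : \U_0$ together with an equivalence $\varphi : Q_0 \simeq Q$. Since we do not assume cumulativity, I would then transport $Q_0$ into the universe $\U$ via the universe embedding $\lift{U_0}{U} : \U_0 \to \U_0 \sqcup \U \equiv \U$, obtaining a proposition $Q_0' \colonequiv \lift{U_0}{U}(Q_0) : \U$ with an equivalence $\psi : Q_0' \simeq Q_0$ (and $Q_0'$ is still a subsingleton by \cref{lc-to-prop}, with witness $i$, since the lift is invertible).

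Next, composing yields a map $g \colonequiv \psi^{-1} \circ \varphi^{-1} \circ f : X \to Q_0'$ landing in a proposition in the universe $\U$. Because $Q_0' : \U$, the definition of $\squashVV{X}$ directly provides a map $\squashVV{X} \to Q_0'$ by $\Phi \mapsto \Phi(Q_0', i, g)$. Postcomposing with $\varphi \circ \psi : Q_0' \to Q$ gives the sought factorisation $\bar f : \squashVV{X} \to Q$. A direct computation shows that for every $x : X$ we have $\bar f(\tosquashVV{x}) = (\varphi \circ \psi)(g(x)) = f(x)$; in any case this triangle commutes automatically up to identification, since $Q$ is a subsingleton, which also gives uniqueness of $\bar f$ by function extensionality.

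The only nontrivial ingredient is threading the universes correctly: the crux is realising that the failure noted in \cref{Voevodsky-prop-trunc-func} is exactly the failure of $Q$ to lie in $\U$, and that $\Propresizing{\V}{\U_0}$ combined with lifting (which preserves propositions up to equivalence) is precisely what is needed to place an equivalent copy of $Q$ into $\U$ so that the System-F definition can be instantiated. I do not expect any serious obstacle beyond bookkeeping of the universe parameters and the composition of the two equivalences produced by resizing and lifting.
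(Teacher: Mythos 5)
Your argument is correct. The paper itself gives no proof of this proposition (it only cites Escard\'o's notes), but your route—resize the target proposition $Q : \V$ down to $\U_0$ via $\Propresizing{\V}{\U_0}$, lift the result into $\U$ via $\lift{U_0}{U}$ (using $\U_0 \sqcup \U \equiv \U$) so that it is in the domain of quantification of $\squashVV{X}$, instantiate the System-F definition there, and transport back—is precisely the intended argument, and the closing observation that the triangle commutes (and $\bar f$ is unique) because the codomain is a subsingleton cleanly finishes it.
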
%
\index{resizing!propositional|)}%
\index{propositional truncation|)}

\subsection{Set quotients from propositional truncations}
\label{sec:set-quotients-from-propositional-truncations}

In this section we assume to have propositional truncations with
\(\squash{X} : F(\U)\) when \(X : \U\) for some metafunction~\(F\) on
universes. We will be mainly interested in \(F(\U) = \U\) and
\(F(\U) = \U_1 \sqcup \U\) for the reasons explained below.

We prove that we can construct set quotients using propositional
truncations. The construction is due to Voevodsky and also appears
in~\cite[Section~6.10]{HoTTBook} and \cite[Section~3.4]{RijkeSpitters2015}. %
While Voevodsky assumed propositional resizing rules in his construction, we
show, following~\cite[Section~3.37]{Escardo2019}, that resizing is not needed to
prove the universal property of the set quotient, provided propositional
truncations are available.

\begin{definition}[Equivalence relation]
  An \emph{equivalence relation}\index{equivalence relation} on a type \(X\) is
  a binary type family \({\approx} : X \to X \to \V\) that is
  \begin{enumerate}[(i)]
  \item subsingleton-valued, i.e.\ \(x \approx y\) is a subsingleton for
    every \(x,y : X\),
  \item reflexive, i.e.\ \(x \approx x\) for every \(x : X\),
  \item symmetric, i.e.\ \(x \approx y\) implies \(y \approx x\) for
    every \(x,y : X\), and
  \item transitive, i.e.\ the conjunction of \(x \approx y\) and \(y \approx z\)
    implies \(x \approx z\) for every \(x,y,z : X\). \qedhere
  \end{enumerate}
\end{definition}

\begin{definition}[Set quotient, \(X/{\approx}\)]%
  \index{set quotient}\index{quotient|see {set quotient}}%
  We define the \emph{set quotient} \(X/{\approx}\) of \(X\) by \(\approx\) to
  be the image of \(e_\approx\) where
  \nomenclature[quotient]{$X/{\approx}$}{set quotient of \(X\) by the
    equivalence relation \({\approx}\)}
  \nomenclature[eapprox]{$e_{\approx}(x)$}{subset of elements related to \(x\)
    via the equivalence relation \({\approx}\)}
  \begin{align*}
    e_\approx : X &\to (X \to \Omega_{\V}) \\
    x &\mapsto \pa*{y \mapsto (x \approx y,p(x,y))}
  \end{align*}
  and \(p\) is the witness that \(\approx\) is subsingleton-valued.
\end{definition}

Of course, we should prove that \(X/{\approx}\) really is the quotient of \(X\)
by \(\approx\) by proving a suitable universal property. The following
definition and lemmas indeed build up to this. For the remainder of this
section, we will fix a type \(X : \U\) with an equivalence relation
\({\approx} : X \to X \to \V\).

\begin{remark}\label{quotient-universes}
  Since
  \(X/{\approx} \equiv \image(e_\approx) \equiv \Sigma_{\varphi : X \to
    \Omega_{\V}} \exists_{x : X} \pa*{\lambdadot{y}{x \approx y}} = \varphi\),
  we see, recalling \cref{prop-trunc-universes} and the fact that
  \(\Omega_{\V}\) is a type in \(\V^+\), that we have
  \(X/{\approx} : \T \sqcup F(\T)\) with \(\T \colonequiv \V^+ \sqcup \U\).
  In the particular case that \(F\) is the identity, we obtain the simpler
  \(X/{\approx} : \V^+ \sqcup \U\).
\end{remark}

\begin{lemma}
  The quotient \(X/{\approx}\) is a set.
\end{lemma}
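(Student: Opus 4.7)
The plan is to unfold the definition of $X/{\approx}$ as the image of $e_{\approx}$ and recognise it as a $\Sigma$-type with two set-like components, then invoke the closure of sets under $\Sigma$.

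Concretely, by definition of image we have
\[
  X/{\approx} \;\equiv\; \Sigma_{\varphi : X \to \Omega_{\V}}\,
  \exists_{x : X}\,e_{\approx}(x) = \varphi.
\]
I would first argue that the ambient type $X \to \Omega_{\V}$ is a set: by \cref{Omega-is-set} (which needs function extensionality and propositional extensionality, both of which we assume globally), $\Omega_{\V}$ is a set; and by the set version of \cref{Pi-is-prop}, any function type into a set is a set. Next, the fibrewise component $\exists_{x : X}\,e_{\approx}(x) = \varphi$ is a propositional truncation, hence a subsingleton, and hence a set by \cref{set-if-subsingleton}.

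With both components being sets, \cref{Sigma-is-prop} (the ``set'' half of its statement, which establishes that sets are closed under $\Sigma$) immediately gives that $X/{\approx}$ is a set.

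There is no real obstacle: the only thing to be careful about is the distinction between using propositional truncations being subsingletons (for the second component) and using the set structure of $\Omega_{\V}$ (for the first component), but both are already in place. Alternatively, one could first note via \cref{prop-subtype} that the first projection exhibits $X/{\approx}$ as a subtype of $X \to \Omega_{\V}$, and then apply \cref{lc-to-prop} together with the fact that embeddings are left-cancellable to transfer the set property; this is essentially the same proof phrased slightly differently.
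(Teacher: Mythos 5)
Your proof is correct and, in both of its phrasings, amounts to the same argument the paper gives. The paper's proof observes that $X/{\approx}$ is by definition the image of $e_\approx$, hence a subtype of $\powerset_{\V}(X) \equiv (X \to \Omega_{\V})$, which is a set by \cref{powerset-is-set}, and then invokes the general fact that subtypes of sets are sets — this is precisely the ``alternative'' you sketch at the end via \cref{prop-subtype} and \cref{lc-to-prop}. Your primary route just unfolds that packaged fact: you reprove \cref{powerset-is-set} inline from \cref{Omega-is-set} and \cref{Pi-is-prop}, note the fibre is a truncated type hence a subsingleton hence a set, and apply the set half of \cref{Sigma-is-prop}. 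No gap, just a slightly more explicit decomposition of the same ideas.
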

\begin{proof}
  Recall that \({X/{\approx}}\) is defined as the image of \(e_\approx\) and
  that this is a subtype of the powerset \(\powerset_{\V}(X)\) which is a set
  by~\cref{powerset-is-set}. Since it holds generally that subtypes of sets are
  sets, this proves the lemma.
\end{proof}

\begin{definition}[Universal map, \(\eta\)]
  The \emph{universal map} \(\eta : X \to X/{\approx}\) is defined to be the
  corestriction of \(e_{\approx}\).%
  \nomenclature[eta]{$\eta(x)$}{equivalence class of \(x\)}%
  \index{set quotient!universal map}
\end{definition}

Although, in general, the type \(X/{\approx}\) lives in another universe than
\(X\) (see~\cref{quotient-universes}), we can still prove the following
induction principle for subsingleton-valued families into \emph{arbitrary}
universes.

\begin{lemma}[Set quotient induction]\label{set-quotient-induction}%
  \index{set quotient!induction}%
  For every subsingleton-valued family \({P : X/{\approx} \to \W}\), with \(\W\)
  \emph{any} universe, if \(P(\eta(x))\) holds for every \(x : X\), then
  \(P(x')\) holds for every \(x' : X/{\approx}\).
\end{lemma}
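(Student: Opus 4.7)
The plan is to reduce the statement directly to surjection induction (\cref{surjection-induction}). Recall that the universal map \(\eta : X \to X/{\approx}\) was defined as the corestriction of \(e_{\approx}\), so by \cref{corestrictions-are-surjections} it is a surjection. Since surjection induction, as stated, already quantifies over subsingleton-valued families into an arbitrary universe \(\W\), applying it to \(\eta\) immediately gives exactly the desired conclusion.

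Concretely, I would proceed in two short steps. First, I would record that \(\eta \colonequiv e_{\approx}\) corestricted to \(\image(e_{\approx}) \equiv X/{\approx}\) is a surjection; this is a direct instance of \cref{corestrictions-are-surjections}, no extra argument needed. Second, given a subsingleton-valued \(P : X/{\approx} \to \W\) together with a hypothesis \(h : \Pi_{x : X} P(\eta(x))\), I would invoke \cref{surjection-induction} with the surjection \(\eta\) and the family \(P\) to obtain an element of \(\Pi_{x' : X/{\approx}} P(x')\).

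There is no real obstacle here: the content of the lemma has already been absorbed into the machinery of images, corestrictions, and surjection induction developed earlier. The only point worth emphasising in the write-up is the universe level: crucially, surjection induction was proved for predicates valued in an \emph{arbitrary} universe \(\W\), so the fact that \(X/{\approx}\) itself lives in a possibly higher universe than \(X\) (see \cref{quotient-universes}) is irrelevant to the applicability of the principle. This mirrors the analogous universe-polymorphic feature of the universal property of the propositional truncation discussed in \cref{sec:propositional-truncation}, and indeed is inherited from it, since surjection induction is itself proved by eliminating out of a propositional truncation into a subsingleton.
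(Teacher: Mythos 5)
Your proposal is correct and matches the paper's own proof exactly: both observe that \(\eta\), being the corestriction of \(e_{\approx}\), is a surjection by \cref{corestrictions-are-surjections}, and then apply \cref{surjection-induction}. Your remark about the universe-polymorphic nature of surjection induction is a sensible thing to flag, even though the paper's proof leaves it implicit.
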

\begin{proof}
  The map \(\eta\) is surjective by~\cref{corestrictions-are-surjections}, so
  that~\cref{surjection-induction} yields the desired result.
\end{proof}

\begin{definition}[Respect equivalence relation]%
  \index{equivalence relation!respect}%
  A map \(f : X \to A\) \emph{respects the equivalence relation} \(\approx\) if
  \(x \approx y\) implies \(f(x) = f(y)\) for every \(x,y : X\).
\end{definition}

Observe that respecting an equivalence relation is property rather than data,
when the codomain \(A\) of the map \(f : X \to A\) is a set.

\begin{lemma}\label{eta-respects-and-effective}
  The map \(\eta : X \to X/{\approx}\) respects the equivalence relation
  \({\approx}\) and the set quotient is \emph{effective}, i.e.\ for every
  \(x,y : X\), we have \(x \approx y\) if and only if \(\eta(x) = \eta(y)\).
\end{lemma}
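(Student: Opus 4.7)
The plan is to reduce both claims to a statement about the underlying subsets $e_\approx(x), e_\approx(y) \colon X \to \Omega_{\V}$, and then exploit the properties of $\approx$ pointwise.

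First, recall that $\eta$ is the corestriction of $e_\approx$ to the image $X/{\approx}$, and that the image is a subtype of $X \to \Omega_{\V}$ via the first projection (\cref{prop-subtype}, applied to the truncated second component). Hence $\eta(x) = \eta(y)$ if and only if $e_\approx(x) = e_\approx(y)$. By function extensionality and propositional extensionality, the latter equality is in turn equivalent to the conjunction of the implications $(x \approx z) \to (y \approx z)$ and $(y \approx z) \to (x \approx z)$ for every $z \colon X$, i.e.\ to pointwise logical equivalence of the two relations.

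For the claim that $\eta$ respects $\approx$, I assume $x \approx y$ and fix $z \colon X$. If $x \approx z$, then from $y \approx x$ (symmetry) and transitivity I obtain $y \approx z$; conversely, if $y \approx z$, then transitivity applied to $x \approx y$ gives $x \approx z$. This yields $e_\approx(x) = e_\approx(y)$, hence $\eta(x) = \eta(y)$, which also supplies the forward direction of effectivity.

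For the backward direction of effectivity, assume $\eta(x) = \eta(y)$, so $e_\approx(x) = e_\approx(y)$ as functions into $\Omega_{\V}$. Evaluating at $x$ gives an equality of propositions $(x \approx x) = (y \approx x)$; by reflexivity of $\approx$ we have $x \approx x$, so we obtain $y \approx x$, and then $x \approx y$ by symmetry. No step is genuinely difficult here; the only point requiring a bit of care is the initial reduction from equality in the image to equality of the underlying maps, which relies on \cref{prop-subtype} together with propositional extensionality to collapse the universal property of $e_\approx(x) = e_\approx(y)$ to bi-implication.
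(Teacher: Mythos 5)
Your proof is correct and follows essentially the same route as the paper's: both reduce $\eta(x) = \eta(y)$ via function extensionality to the pointwise bi-implication $\forall_{z : X}(x \approx z \iff y \approx z)$, then obtain $x \approx y$ from reflexivity and symmetry (taking $z$ to be $x$), and the converse from symmetry and transitivity. You are slightly more explicit than the paper in invoking \cref{prop-subtype} and propositional extensionality, which the paper uses tacitly, but the argument is the same.
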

\begin{proof}
  By definition of the image and function extensionality, we have for every
  \(x,y : X\) that \(\eta(x) = \eta(y)\) holds if and only if
  \begin{equation}
    \forall_{z : X}\pa*{x \approx z \iff y \approx z}
    \tag{\(\ast\)}\label{eta-equality}
  \end{equation}
  holds. If~\eqref{eta-equality} holds, then so does \(x \approx y\) by
  reflexivity and symmetry of the equivalence relation. Conversely, if
  \(x \approx y\) and \(z : X\) is such that \(x \approx z\), then
  \(y \approx z\) by symmetry and transitivity; and similarly if \(z : X\) is
  such that \(y \approx z\). Hence, \eqref{eta-equality} holds if and only if
  \(x \approx y\) holds.  Thus, \(\eta(x) = \eta(y)\) if and only if
  \(x \approx y\), as desired.
\end{proof}

The universal property of the set quotient states that the map
\(\eta : X \to X/{\approx}\) is the universal function to a set preserving the
equivalence relation. We can prove it using only~\cref{set-quotient-induction}
and~\cref{eta-respects-and-effective}, without needing to inspect the definition
of the quotient.

\begin{theorem}[Universal property of the set quotient]
  \label{set-quotient-universal-property}%
  \index{set quotient!universal property}%
  For every \emph{set} \(A : \W\) in \emph{any} universe~\(\W\) and function
  \(f : X \to A\) respecting the equivalence relation, there is a unique
  function \(\bar{f} : X/{\approx} \to A\) such that the diagram
  \[
    \begin{tikzcd}
      X \ar["f"', dr] \ar["\eta", rr] & & X/{\approx} \ar["\bar{f}",dl,dashed] \\
      & A
    \end{tikzcd}
  \]
  commutes.
\end{theorem}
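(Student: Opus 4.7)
The plan is to establish uniqueness by means of \cref{set-quotient-induction} and existence by a careful application of \cref{constant-map-to-set-factors-through-truncation}.

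For \emph{uniqueness}, suppose $\bar{f}_1, \bar{f}_2 : X/{\approx} \to A$ both satisfy $\bar{f}_i \circ \eta = f$. By function extensionality, it suffices to show $\bar{f}_1(x') = \bar{f}_2(x')$ for every $x' : X/{\approx}$. Since $A$ is a set, this is a subsingleton-valued family on $X/{\approx}$, so \cref{set-quotient-induction} reduces the task to verifying $\bar{f}_1(\eta(x)) = \bar{f}_2(\eta(x))$ for every $x : X$, which is immediate from the commutativity of the two triangles.

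For \emph{existence}, the strategy is to define, for $x' : X/{\approx}$, the value $\bar{f}(x')$ as ``$f(x)$ for any $x$ with $\eta(x) = x'$''. To make this rigorous, I consider, for each $x' : X/{\approx}$, the map $g_{x'} : \fib_\eta(x') \to A$ sending $(x, p) \mapsto f(x)$. Using effectiveness of $\eta$ (\cref{eta-respects-and-effective}) and the fact that $f$ respects $\approx$, one verifies that $g_{x'}$ is constant: any two elements $(x_1, p_1), (x_2, p_2)$ of the fibre satisfy $\eta(x_1) = \eta(x_2)$, hence $x_1 \approx x_2$, hence $f(x_1) = f(x_2)$. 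Because $A$ is a set, \cref{constant-map-to-set-factors-through-truncation} yields a factorisation $\bar{g}_{x'} : \squash{\fib_\eta(x')} \to A$. Since $\eta$ is a surjection by \cref{corestrictions-are-surjections}, $\squash{\fib_\eta(x')}$ is inhabited, and applying $\bar{g}_{x'}$ to this element produces the desired $\bar{f}(x') : A$. To verify $\bar{f} \circ \eta = f$, note that for any $x : X$ the pair $(x, \refl) : \fib_\eta(\eta(x))$ provides a canonical inhabitant $\tosquash{(x,\refl)}$ of $\squash{\fib_\eta(\eta(x))}$, and the factorisation property ensures $\bar{f}(\eta(x)) = \bar{g}_{\eta(x)}(\tosquash{(x,\refl)}) = g_{\eta(x)}(x, \refl) = f(x)$.

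The main obstacle, and the reason \cref{constant-map-to-set-factors-through-truncation} is indispensable here, lies in defining $\bar{f}(x')$ from an \emph{unspecified} preimage of $x'$ under $\eta$ while ensuring the construction does not depend on any choice of representative. The constancy of $g_{x'}$ combined with $A$ being a set is precisely what allows us to eliminate out of the propositional truncation of the fibre; without such a result one would appear to need the axiom of choice in order to pick a representative.
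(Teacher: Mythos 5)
Your proof is correct, and it takes a noticeably different (if closely related) route from the paper's. The paper introduces an auxiliary type family $B(x') \colonequiv \Sigma_{a : A}\exists_{x : X}\pa{(\eta(x) = x') \times (f(x) = a)}$, proves by set-quotient induction that each $B(x')$ is a subsingleton, constructs $k : \Pi_{x : X}B(\eta(x))$, and then invokes set-quotient induction a second time to extend $k$ to $\bar{k} : \Pi_{x' : X/{\approx}}B(x')$; the map $\bar{f}$ is then $\fst \circ \bar{k}$. Your version instead works pointwise with the fibre $\fib_\eta(x')$: you exhibit the map $g_{x'} : \fib_\eta(x') \to A$ as constant (directly from effectiveness of $\eta$ and the assumption that $f$ respects $\approx$, without any quotient induction), invoke \cref{constant-map-to-set-factors-through-truncation} to pass through $\squash{\fib_\eta(x')}$, and then feed in the inhabitant furnished by surjectivity of $\eta$. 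The two constructions are morally the same — the paper's type $B(x')$ is essentially the image of your $g_{x'}$, and its subsingleton-hood is what you extract as constancy — but yours reduces the argument to the Kraus--Escard\'o--Coquand--Altenkirch factorisation theorem, which makes the structure of the argument more transparent and avoids the second appeal to set-quotient induction. The one minor presentational point worth making explicit is that when verifying $\bar{f}(\eta(x)) = f(x)$ you are silently using that $\squash{\fib_\eta(\eta(x))}$ is a proposition, so the inhabitant supplied by surjectivity coincides with $\tosquash{(x,\refl)}$; this is fine but deserves a sentence. The uniqueness argument matches the paper's exactly.
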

\begin{proof}
  Let \(A : \W\) be a set and \(f : X \to A\) respect the equivalence relation.
  The following auxiliary type family over \(X/{\approx}\) will be at the heart
  of our proof:
  \[
    B(x') \colonequiv \Sigma_{a : A}\exists_{x : X}\pa{(\eta(x) = x') \times
      (f(x) = a)}.
  \]
  \vspace{-3ex} 
  \begin{claim}
    The type \(B(x')\) is a subsingleton for every \(x' : X/{\approx}\).
  \end{claim}
  \begin{proof}[Proof of claim]
    By function extensionality, the type expressing that \(B(x')\) is a
    subsingleton for every \(x' : X/{\approx}\) is itself a subsingleton. So by
    set quotient induction, it suffices to prove that \(B(\eta(x))\) is a
    subsingleton for every \(x : X\). So assume that we have
    \((a,p) , (b,q) : B(\eta(x))\). We only need to show that \(a = b\). The
    elements \(p\)~and~\(q\) witness
    \[
      \exists_{x_1 : X}\pa{(\eta(x_1) = \eta(x)) \times (f(x_1) = a)}
    \]
    and
    \[
      \exists_{x_2 : X}\pa{(\eta(x_2) = \eta(x)) \times (f(x_2) = b)},
    \]
    respectively. By~\cref{eta-respects-and-effective} and the fact that \(f\)
    respects the equivalence relation, we obtain \(f(x) = a\) and \(f(x) = b\)
    and hence the desired \(a = b\).
  \end{proof}
  Next, we define \(k : \Pi_{x : X} B(\eta(x))\) by
  \(k(x) = \pa*{f(x),\tosquash*{(x,\refl,\refl)}}\). By set quotient induction and
  the claim, the function \(k\) induces a dependent map
  \(\bar{k} : \Pi_{\pa*{x' : X/{\approx}}} B(x')\).

  We then define the (nondependent) function \(\bar{f} : X/{\approx} \to A\) as
  \({\fst} \circ {\bar{k}}\). We proceed by showing that
  \(\bar{f} \circ \eta = f\). By function extensionality, it suffices to prove
  that we have \(\bar{f}(\eta(x)) = f(x)\) for every \(x : X\). But notice that:
  \begin{align*}
    \bar{f}(\eta(x)) &\equiv \fst(\bar{k}(\eta(x))) \\
                     &= \fst(k(x)) &&\text{(since \(\bar{k}(\eta(x)) = k(x)\)
                                           because of the claim)} \\
    &\equiv f(x).
  \end{align*}

  Finally, we wish to show that \(\bar{f}\) is the unique such function, so
  suppose that \({g : X/{\approx} \to A}\) is another function such that
  \(g \circ \eta = f\). By function extensionality, it suffices to prove that
  \(g(x') = \bar{f}(x')\) for every \(x' : X/{\approx}\), which is a
  subsingleton as \(A\) is a set. Hence, set quotient induction tells us that it
  is enough to show that \(g(\eta(x)) = \bar{f}(\eta(x))\) for every \(x : X\),
  but this holds as both sides of the equation are equal to \(f(x)\).
\end{proof}

\begin{remark}[cf.\ Section~3.21 of~\cite{Escardo2019}]
  In univalent foundations, some attention is needed in phrasing unique
  existence, so we pause to discuss the phrasing
  of~\cref{set-quotient-universal-property} here.
  Typically, if we wish to express unique existence of an element \(x : X\)
  satisfying \(P(x)\) for some type family \(P : \U \to \V\), then we should
  phrase it as \(\issingleton(\Sigma_{x : X}P(x))\). That is, we require that
  there is a unique \emph{pair} \((x,p) : \Sigma_{x : X}P(x)\).
  However, if \(P\) is subsingleton-valued, then it is
  equivalent to the traditional formulation of unique existence: i.e.\ that
  there is an \(x : X\) with \(P(x)\) such that every \(y : X\) with \(P(y)\) is
  equal to \(x\).
  This happens to be the situation in~\cref{set-quotient-universal-property},
  because of function extensionality and the fact that \(A\) is a set.
\end{remark}

We stress that although the set quotient increases universe levels,
see~\cref{quotient-universes}, it does satisfy the appropriate universal
property, so that resizing is not needed.

Having small set quotients is closely related to propositional resizing, as we
show now.

\begin{proposition}
  Suppose that \(\squash{-}\) does not increase universe levels, i.e.\ in the
  notation of~\cref{prop-trunc-universes}, the function \(F\) is the identity.
  \begin{enumerate}[(i)]
  \item\label{quotient-resizing-1} %
    If \(\Omegaresizing{\V}{\U}\) holds for universes \(\U\) and \(\V\), then
    the set quotient \(X/{\approx}\) is \(\U\)\nobreakdash-small for any type
    \(X : \U\) and any \(\V\)-valued equivalence relation.
  \item\label{quotient-resizing-2} %
    Conversely, if the set quotient \(\Two/{\approx}\) is \(\U\)-small for
    every \(\V\)-valued equivalence relation on \(\Two\), then
    \(\Propresizing{\V}{\U}\) holds.
  \end{enumerate}
\end{proposition}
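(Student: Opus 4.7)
For part~\ref{quotient-resizing-1}, the plan is to exploit the definition of $X/{\approx}$ as the image of $e_{\approx} : X \to (X \to \Omega_{\V})$ and use $\Omegaresizing{\V}{\U}$ to transport this map into a $\U$-small setting. Specifically, the hypothesis supplies an equivalence $\varphi : \Omega_{\V} \simeq T$ with $T : \U$, so function extensionality gives $(X \to \Omega_{\V}) \simeq (X \to T)$ with the right-hand side in $\U$ since $X : \U$. Post-composing $e_{\approx}$ pointwise with $\varphi$ yields a map $e'_{\approx} : X \to (X \to T)$ with $\image(e_{\approx}) \simeq \image(e'_{\approx})$, and the latter unfolds to
\[
  \Sigma_{g : X \to T}\exists_{x : X}\pa*{e'_{\approx}(x) = g},
\]
which lies in $\U$ by closure under $\Sigma$-types, identity types, and --- crucially using that $F$ is the identity --- propositional truncation.

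For part~\ref{quotient-resizing-2}, I will encode an arbitrary proposition $P : \V$ as an identification in a suitable quotient of $\Two$. The plan is to define ${\approx} : \Two \to \Two \to \V$ by declaring $b \approx b'$ to be $\One_{\V}$ whenever $b = b'$ and $P$ otherwise; a direct case analysis using decidable equality of $\Two$ together with $P$ being a subsingleton shows that this is a $\V$-valued equivalence relation. By effectiveness (\cref{eta-respects-and-effective}) one has
\[
  P \simeq (0 \approx 1) \simeq \pa*{\eta(0) =_{\Two/{\approx}} \eta(1)}.
\]
The smallness hypothesis supplies $\psi : \Two/{\approx} \simeq Q$ with $Q : \U$, and since $\psi$ is an equivalence one obtains $\pa*{\eta(0) = \eta(1)} \simeq \pa*{\psi(\eta(0)) =_{Q} \psi(\eta(1))}$, which lies in $\U$ by closure under identity types. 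This exhibits a $\U$-small copy of $P$, establishing $\Propresizing{\V}{\U}$.

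The main subtlety is the universe bookkeeping in part~\ref{quotient-resizing-1}: the assumption that $F$ is the identity is essential, because otherwise the propositional truncation in the definition of the image could escape $\U$ even after $\Omega_{\V}$ has been resized. In part~\ref{quotient-resizing-2} the only real check is verifying that $\approx$ is subsingleton-valued and transitive; once this is in place, the transfer of $P$ into $\U$ is purely mechanical via the universal and effective properties of the set quotient.
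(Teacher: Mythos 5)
Your proof of part~\ref{quotient-resizing-1} is essentially the paper's argument: $X/{\approx}$ is by definition the image of a map into $X \to \Omega_{\V}$, and $\Omegaresizing{\V}{\U}$ transports that codomain (hence the image, using closure of $\U$-smallness under $\Sigma$, identity types and truncation, the last requiring that $F$ is the identity) into $\U$. Your explicit unfolding is more verbose than the paper's one-liner but captures the same idea.

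For part~\ref{quotient-resizing-2} you use the same relation — your case-defined $\approx$ is propositionally equal to the paper's $x \approx_P y \colonequiv (x = y) \vee P$, though the disjunctive formulation makes subsingleton-valuedness, symmetry and transitivity immediate whereas your case-split needs a small argument — and then diverge slightly in how you extract $P$. You use effectiveness (\cref{eta-respects-and-effective}) to obtain $P \simeq \pa*{\eta(0) = \eta(1)}$ directly, and then transfer this identity type along the smallness equivalence $\psi : \Two/{\approx} \simeq Q$, using that equivalences preserve identity types. The paper instead observes that $P$ holds if and only if $\Two/{\approx_P}$ is a subsingleton, and then uses that $\issubsingleton(-)$ of a $\U$-small type is $\U$-small. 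Both are correct and of comparable complexity; the paper's route avoids invoking effectiveness by name (though the forward direction of its ``iff'' needs it or something equivalent under the hood), while yours leans on the already-established effectiveness lemma and a closure property for identity types. Neither buys anything substantial over the other; this is a matter of packaging.
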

\begin{proof}
  \ref{quotient-resizing-1}: If we have \(\Omegaresizing{\V}{\U}\), then
  \(\Omega_{\V}\) is \(\U\)-small, so that
  \(X/{\approx} \equiv \image(e_{\approx})\) is \(\U\)-small too when \(X : \U\)
  and \({\approx}\) is \(\V\)-valued.
  \ref{quotient-resizing-2}: Let \(P : \V\) be any proposition and consider the
  \(\V\)-valued equivalence relation
  \( x \approx_P y \colonequiv (x = y) \vee P \) on \(\Two\). Notice that
  \[
    \pa*{\Two/{\approx_P}} \text{ is a subsingleton} \iff P \text{ holds},
  \]
  so if \(\Two/{\approx_P}\) is \(\U\)-small, then so is the type
  \(\issubsingleton\pa*{\Two/{\approx_P}}\) and therefore \(P\).
\end{proof}

\subsection{Propositional truncations from set quotients}
\label{sec:propositional-truncations-from-set-quotients}

Conversely, the propositional truncation arises as a particular set quotient,
namely by identifying all elements of a type. However, in order to get an exact
match in terms of back-and-forth constructions, we must pay some attention to
the universes involved as in \cref{comparison-of-universes} below.

\emph{NB.\ We do not assume the availability of propositional truncations in
  this section.}

\begin{definition}[Existence of set quotients]\label{existence-of-set-quotients}%
  \index{set quotient!existence of}%
  We say that \emph{set quotients exist} if for every type \(X\) and equivalence
  relation \({\approx}\) on \(X\), we have a set \(X/{\approx}\) with a
  universal map \(\eta : X \to X/{\approx}\) that respects the equivalence
  relation such that
  the universal property set out in~\cref{set-quotient-universal-property} is
  satisfied.
\end{definition}

\begin{theorem}\index{set quotient!induction}%
  Any set quotient satisfies the induction principle
  of~\cref{set-quotient-induction}, i.e.\ it is implied by the universal
  property of the set quotient.
\end{theorem}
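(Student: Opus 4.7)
The plan is to derive the induction principle from the universal property by an \emph{encode-decode} style argument, or rather, by exploiting the uniqueness clause of the universal property applied to a suitable subtype.

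Concretely, suppose \(P : X/{\approx} \to \W\) is subsingleton-valued and we are given \(p : \Pi_{x : X} P(\eta(x))\). First I would form the subtype
\[
A \colonequiv \Sigma_{x' : X/{\approx}} P(x'),
\]
which is a set by \cref{Sigma-is-prop,set-if-subsingleton} together with the fact that \(X/{\approx}\) is a set. Define \(f : X \to A\) by \(f(x) \colonequiv (\eta(x), p(x))\). This \(f\) respects \(\approx\): if \(x \approx y\), then \(\eta(x) = \eta(y)\) because \(\eta\) respects the equivalence relation by~\cref{existence-of-set-quotients}, and this equality lifts from \(X/{\approx}\) to \(A\) by~\cref{prop-subtype} since the second component is proposition-valued.

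By the universal property applied to \(f\) (with codomain the set \(A : \W \sqcup \V^+ \sqcup \U\), possibly in a higher universe, which is allowed since the universal property holds for sets in \emph{any} universe), we obtain a unique \(\bar{f} : X/{\approx} \to A\) with \(\bar{f} \circ \eta = f\). Now consider the composite \(\fst \circ \bar{f} : X/{\approx} \to X/{\approx}\). We have
\[
(\fst \circ \bar{f}) \circ \eta = \fst \circ f = \eta = \id_{X/{\approx}} \circ \eta,
\]
so both \(\fst \circ \bar{f}\) and \(\id_{X/{\approx}}\) are factorisations of \(\eta : X \to X/{\approx}\) through \(\eta\) itself (and \(\eta\) trivially respects \(\approx\), while \(X/{\approx}\) is a set). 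By the uniqueness clause of the universal property, \(\fst \circ \bar{f} = \id_{X/{\approx}}\).

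Finally, for any \(x' : X/{\approx}\), write \(\bar{f}(x') = (y, q)\) with \(q : P(y)\); the equation just proved gives \(y = x'\), and transporting \(q\) along this identification yields the desired element of \(P(x')\). The main subtlety to watch out for is ensuring that the universal property we invoke covers sets in the arbitrary universe \(\W\) in which \(A\) lives; this is exactly the same ``large elimination into subsingletons/sets in any universe'' that was emphasised for propositional truncations in~\cref{sec:prop-trunc-resizing}, and it is precisely what makes the argument go through without any resizing.
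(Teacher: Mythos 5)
Your proposal is correct and follows essentially the same route as the paper's proof: form the total space \(\Sigma_{x' : X/{\approx}} P(x')\), use the universal property to factor \(x \mapsto (\eta(x), p(x))\) through \(\eta\), and then use the uniqueness clause to identify \(\fst \circ \bar{f}\) with the identity so that the second component transports to give \(P(x')\). The remark about the universal property applying to sets in arbitrary universes is also exactly the point the paper stresses.
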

\begin{proof}
  Suppose that \(P : X/{\approx} \to \W\) is a proposition-valued type-family
  over the set quotient \(X/{\approx}\) and that we have
  \(\rho : \Pi_{x : X}P(\eta(x))\). We write
  \(S \colonequiv \Sigma_{x' : X/{\approx}}\,P(x')\) and define the map
  \(f : X \to S\) by \(f(x) \colonequiv \pa*{\eta(x) , \rho(x)}\). Note that
  \(f\) respects the equivalence relation since \(\eta\)~does and \(P\) is
  proposition-valued.
  Moreover, \(S\) is a set, because subtypes of sets are sets and the quotient
  \(X/{\approx}\) is a set by assumption.
  Hence, by the universal property, \(f\) induces a map
  \(\bar{f} : X/{\approx} \to S\) such that \(\bar{f} \circ \eta = f\).
  We claim that \(\bar{f}\) is a section of \(\fst : S \to X/{\approx}\).
  Note that this would finish the proof, because if we have
  \(e : \Pi_{x' : X/{\approx}}\,\fst\pa*{\bar{f}(x')} = x'\), then we obtain
  \(P(x')\) for every \(x'\) by transporting \(\snd\pa*{\bar{f}(x')}\) along
  \(e(x')\).
  But \(\bar{f}\) must be a section of~\(\fst\), because we can take both
  \({\fst} \circ {\bar{f}}\) and \(\id\) for the dashed map in the commutative
  diagram
  \begin{center}
    \begin{tikzcd}
      X \ar["\eta"', dr] \ar["\eta", rr] & & X/{\approx} \ar[dl,dashed] \\
      & X/{\approx}
    \end{tikzcd}
  \end{center}
  since \({\fst} \circ {\bar{f}} \circ {\eta} = {\fst} \circ {f} = \eta\), so
  \({\fst} \circ {\bar{f}}\) and \(\id\) must be equal by the universal property
  of the set quotient.
\end{proof}

\begin{theorem}\label{set-quotients-give-propositional-truncations}%
  \index{propositional truncation}%
  If set quotients exist, then every type has a propositional truncation.
\end{theorem}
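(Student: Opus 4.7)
The plan is to construct the propositional truncation of a type $X$ as the set quotient of $X$ by the \emph{trivial} equivalence relation that identifies every pair of elements. Concretely, define ${\approx} : X \to X \to \U_0$ by $x \approx y \colonequiv \One_{\U_0}$. This is subsingleton-valued, reflexive, symmetric and transitive entirely for trivial reasons. By the assumed existence of set quotients (\cref{existence-of-set-quotients}) we then have a set $X/{\approx}$ together with a universal map $\eta : X \to X/{\approx}$, and we take $\squash{X} \colonequiv X/{\approx}$ with unit $\tosquash{-} \colonequiv \eta$.

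Two things need to be verified. First, that $X/{\approx}$ is a subsingleton: since being a subsingleton is itself a property (\cref{being-prop-is-prop}), set quotient induction (which was just shown to follow from the universal property) applies, and a double induction reduces the goal $\Pi_{u,v : X/{\approx}}\,u = v$ to exhibiting $\eta(x) = \eta(y)$ for all $x,y : X$. But by \cref{eta-respects-and-effective}, $\eta$ respects $\approx$, and since $x \approx y$ holds unconditionally, the required identification is immediate.

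Second, that $\eta$ satisfies the universal property of the propositional truncation. Let $P : \W$ be a subsingleton in an \emph{arbitrary} universe $\W$ and let $f : X \to P$ be any map. Since $P$ is a subsingleton, any two values of $f$ are equal, so $f$ trivially respects $\approx$. Moreover, every subsingleton is a set by \cref{set-if-subsingleton}, so the universal property of the set quotient (\cref{set-quotient-universal-property}) applies and yields a unique $\bar f : X/{\approx} \to P$ with $\bar f \circ \eta = f$. This is precisely the universal property required of $\squash{X}$.

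There is no real obstacle; the construction is completely formulaic once one chooses the trivial equivalence relation. The one subtle point worth emphasising is universe polymorphism: it is essential that \cref{set-quotient-universal-property} allows the codomain $P$ to live in an arbitrary universe $\W$, not merely in the universe containing $X$, since otherwise the factorisation would not be available for subsingletons in every universe. Regarding sizes, with $X : \U$ and $\approx$ chosen to be $\U_0$-valued, \cref{quotient-universes} places $\squash{X}$ in $\U_1 \sqcup \U$, matching the case $F(\U) = \U_1 \sqcup \U$ flagged in \cref{prop-trunc-universes}.
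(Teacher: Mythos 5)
Your proof is correct and takes essentially the same approach as the paper: quotient by the trivial ($\One$-valued) equivalence relation, use set-quotient induction together with the fact that $\eta$ respects $\approx$ to see the quotient is a subsingleton, and derive the universal property of the truncation from that of the quotient, noting that any map into a subsingleton respects the trivial relation. The only additions (observing that subsingletons are sets before invoking \cref{set-quotient-universal-property}, and the universe bookkeeping) are sensible but match what the paper states in \cref{comparison-of-universes} immediately after the theorem.
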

\begin{proof}
  Let \(X : \U\) be any type and consider the \(\U_0\)-valued equivalence
  relation that identifies all elements:
  \({x \approx_{\One} y \colonequiv \One_{\U_0}}\).
  To see that \(X/{\approx_{\One}}\) is a subsingleton, note that by set
  quotient induction it suffices to prove \(\eta(x) = \eta(y)\) for every
  \(x,y : X\).
  But \(x \approx_{\One} y\) for every \(x,y : X\), and \(\eta\) respects the
  equivalence relation, so this is indeed the case.
  Now if \(P : \V\) is any subsingleton and \(f : X \to P\) is any map, then
  \(f\) respects the equivalence relation \({\approx_{\One}}\) on~\(X\), simply
  because \(P\) is a subsingleton. Thus, by the universal property of the
  quotient, we obtain the desired map \(\bar{f} : X/{\approx_{\One}} \to P\) and
  hence, \(X/{\approx_{\One}}\) has the universal property of the propositional
  truncation.
\end{proof}

\begin{remark}\label{comparison-of-universes}
  Because the set quotients constructed using the propositional truncation live
  in higher universes, we embark on a careful comparison of universes. Suppose
  that propositional truncations of types \(X : \U\) exist and that
  \(\squash{X} : F(\U)\). Then by~\cref{quotient-universes}, the set quotient
  \(X/{\approx_{\One}}\) in the proof above lives in the type universe
  \((\U_1 \sqcup \U) \sqcup F(\U_1 \sqcup \U)\).

  In particular, if \(F\) is the identity and the propositional truncation of
  \(X : \U\) lives in \(\U\), then the quotient \(X/{\approx_{\One}}\) lives in
  \(\U_1 \sqcup \U\), which simplifies to \(\U\) whenever \(\U\) is at least
  \(\U_1\). In other words, the universes of \(\squash{X}\) and
  \(X/{\approx_{\One}}\) match up for types \(X\) in every universe,
  \emph{except} the first universe \(\U_0\).

  If we always wish to have \(X/{\approx_{\One}}\) in the same universe as
  \(\squash{X}\), then we can achieve this by assuming
  \(F(\V) \colonequiv \U_1 \sqcup \V\), which says that the propositional
  truncations stay in the same universe, \emph{except} when the type is in the
  first universe \(\U_0\) in which case the truncation will be in the second
  universe \(\U_1\).
\end{remark}

\begin{theorem}\index{set quotient!effective}%
  All set quotients are effective, i.e.\ \(\eta(x) = \eta(y)\) implies
  \(x \approx y\).
\end{theorem}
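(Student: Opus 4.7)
My plan is to use the universal property of the set quotient with the type of propositions $\Omega_{\V}$ (which is available because $\Omega_{\V}$ is a set by \cref{Omega-is-set}, using function extensionality and propositional extensionality). The key idea is that each equivalence class determines a subset of $X$, and two classes coincide exactly when these subsets do.

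Concretely, fix $x : X$ and define
\[
  f_x : X \to \Omega_{\V}, \qquad f_x(y) \colonequiv (x \approx y).
\]
First I would check that $f_x$ respects the equivalence relation: if $y \approx z$, then by symmetry and transitivity of $\approx$ the propositions $x \approx y$ and $x \approx z$ are logically equivalent, hence equal by propositional extensionality; so $f_x(y) = f_x(z)$. Since $\Omega_{\V}$ is a set, the universal property of the set quotient (\cref{set-quotient-universal-property}) applies and yields a unique map $\bar{f}_x : X/{\approx} \to \Omega_{\V}$ with $\bar{f}_x \circ \eta = f_x$.

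Now suppose $\eta(x) = \eta(y)$. Applying $\bar{f}_x$ on both sides gives
\[
  (x \approx x) \;\equiv\; f_x(x) \;=\; \bar{f}_x(\eta(x)) \;=\; \bar{f}_x(\eta(y)) \;=\; f_x(y) \;\equiv\; (x \approx y)
\]
as elements of $\Omega_{\V}$. Transporting the reflexivity witness $x \approx x$ along this equality of propositions produces the desired element of $x \approx y$.

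I do not anticipate a significant obstacle here; the only point that requires a moment of care is the universe bookkeeping, namely that the universal property of \cref{set-quotient-universal-property} was explicitly formulated for sets in \emph{any} universe $\W$, so it may be instantiated at the set $\Omega_{\V}$ (which lives in $\V^+$) even though $\approx$ is $\V$-valued. This generality of the universal property — emphasised in the discussion preceding \cref{comparison-of-universes} — is exactly what makes the argument go through without assuming any resizing principle.
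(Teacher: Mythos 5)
Your proof is correct, but it takes a genuinely different and more direct route than the paper's. You use the classical argument: for a fixed \(x\), the map \(f_x(y) \colonequiv (x \approx y)\) into \(\Omega_{\V}\) respects the equivalence relation (by symmetry and transitivity of \(\approx\) together with propositional extensionality, using that \(\Omega_{\V}\) is a subtype of the universe so equality there reduces to equality of the first components); since \(\Omega_{\V}\) is a set, \(f_x\) factors through \(\eta\) via a map \(\bar{f}_x\); applying \(\bar{f}_x\) to the hypothesis \(\eta(x) = \eta(y)\) identifies \((x \approx x)\) with \((x \approx y)\) in \(\Omega_{\V}\), and transporting the reflexivity witness finishes. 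The paper instead argues indirectly: it invokes \cref{set-quotients-give-propositional-truncations} to produce propositional truncations from the abstractly given set quotients, then appeals to the fact that the set quotients constructed from truncations in \cref{sec:set-quotients-from-propositional-truncations} are effective (\cref{eta-respects-and-effective}), and concludes by observing that any two set quotients of \(X\) by \({\approx}\) must be equivalent, so the abstract one is effective too. Your argument is shorter and self-contained, using only the universal property, propositional extensionality and \cref{Omega-is-set}; the paper's argument trades that directness for reuse of the machinery it has already built. Both approaches hinge on exactly the point you flag: the universal property must apply at the set \(\Omega_{\V}\), which lives in the larger universe \(\V^+\), and this is precisely the universe-generality that the section emphasises.
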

\begin{proof}
  If we have set quotients, then we have propositional truncations
  by~\cref{set-quotients-give-propositional-truncations} which we can use to
  construct effective set quotients
  following~\cref{sec:set-quotients-from-propositional-truncations}.
  But any two set quotients of a type by an equivalence relation must be
  equivalent, so the original set quotients are effective too.
\end{proof}

\subsection{Set replacement}\label{sec:set-replacement}
In this section, we return to our running assumption that universes are closed
under propositional truncations, i.e.\ the metafunction \(F\) above is assumed
to be the identity. We study the equivalence of a set replacement principle and
the smallness of set quotients using our construction of
\cref{sec:set-quotients-from-propositional-truncations}. These principles will
find application in~\cref{sec:small-suprema-of-ordinals}, but are quite relevant
to us in any case, as smallness of types is a central theme in this thesis.

\begin{definition}[Set replacement]%
  \index{set replacement|textbf}%
  The \emph{set replacement} principle asserts that the image of a map
  \(f : X \to Y\) is \(\pa*{\U\sqcup\V}\)-small if \(X\) is a \(\U\)-small type
  and \(Y\) is a locally \(\V\)-small set.
\end{definition}

In particular, if \(\U\) and \(\V\) are the same, then the image is
\(\U\)-small.
The name ``set replacement'' is inspired by~\cite[Section~2.19]{BezemEtAl2022},
but the principle presented here differs from the one in \cite{BezemEtAl2022} in
two ways: In \cite{BezemEtAl2022}, replacement is not restricted to maps into
sets, and the universe parameters \(\U\)~and~\(\V\) are taken to be the same.
Rijke~\cite{Rijke2017} shows that the replacement of~\cite{BezemEtAl2022} is
provable in the presence of a univalent universe closed under pushouts.

We are going to show that set replacement is logically equivalent to having
small set quotients, where the latter means that the quotient of a type
\(X : \U\) by a \(\V\)-valued equivalence relation lives in \(\U \sqcup \V\).

\begin{definition}[Existence of small set quotients]%
  \label{existence-of-small-set-quotients}%
  \index{set quotient!existence of!small}%
  We say that \emph{small set quotients exist} if set quotients exists in the
  sense of~\cref{existence-of-set-quotients}, and moreover, the quotient
  \(X/{\approx}\) of a type \(X : \U\) by a \(\V\)-valued equivalence relation
  lives in \(\U \sqcup \V\).
\end{definition}

Note that we would get small set quotients if we added set quotients as a
primitive higher inductive type. Also, if one assumes \(\Omegaresizingalt{\V}\),
then the construction of set quotients
in~\cref{sec:set-quotients-from-propositional-truncations} yields a quotient
\({X/{\approx}}\) in \({\U \sqcup \V}\) when \(X : \U\) and \({\approx}\) is a
\(\V\)-valued equivalence relation on \(X\).

\begin{theorem}
  Set replacement is logically equivalent to the existence of small set
  quotients.
\end{theorem}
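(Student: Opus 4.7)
The plan is to prove both directions by connecting the set quotient to an image construction, exploiting the fact that $X/{\approx}$ was \emph{defined} as the image of $e_{\approx}$ and, conversely, that any image can be viewed as a set quotient by a kernel-style equivalence relation.

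For the forward direction, assume set replacement and let $X : \U$ be a type with a $\V$-valued equivalence relation ${\approx}$. Since $X/{\approx} \equiv \image(e_{\approx})$ with $e_{\approx} : X \to (X \to \Omega_{\V})$, it suffices to verify the hypotheses of set replacement. The domain $X$ is $\U$-small. The codomain $X \to \Omega_{\V}$ is a set by \cref{Pi-is-prop,Omega-is-set}, and it is locally $(\U\sqcup\V)$-small: by function extensionality, equality of two maps $f, g : X \to \Omega_{\V}$ is equivalent to $\Pi_{x : X}(f(x) = g(x))$, and each $f(x) = g(x)$ is $\V$-small by propositional extensionality (it is equivalent to $f(x) \iff g(x)$, which lies in $\V$). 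Hence set replacement yields that $\image(e_{\approx})$ is $(\U\sqcup\V)$-small, giving the desired small quotient.

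For the converse, assume small set quotients and let $f : X \to Y$ be a map with $X : \U$ and $Y$ a locally $\V$-small set. Using local smallness, pick for every $y, y' : Y$ a $\V$-small copy $\widetilde{(y = y')}$ and define ${x \approx x' \colonequiv \widetilde{(f(x) = f(x'))}}$, which is easily checked to be a $\V$-valued equivalence relation on $X$. By assumption, $X/{\approx}$ lies in $\U \sqcup \V$, so it suffices to construct an equivalence $X/{\approx} \simeq \image(f)$. The corestriction $X \to \image(f)$ respects ${\approx}$ (if $f(x) = f(x')$ then the pairs $(f(x), \tosquash{(x,\refl)})$ and $(f(x'), \tosquash{(x',\refl)})$ agree on the first component, hence are equal by \cref{prop-subtype}), so the universal property of the set quotient (\cref{set-quotient-universal-property}) yields a map $\bar{f} : X/{\approx} \to \image(f)$ making the obvious triangle commute.

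It then remains to show $\bar{f}$ is an equivalence, which by \cref{equiv-iff-invertible} amounts to showing it is both a surjection and an embedding. Surjectivity is immediate: the corestriction factors as $X \surj X/{\approx} \xrightarrow{\bar{f}} \image(f)$, and the composite is a surjection by \cref{corestrictions-are-surjections}, forcing $\bar{f}$ to be a surjection as well. For the embedding property, note that $\image(f)$ is a set (a subtype of the set $Y$), so by \cref{lc-to-prop,embedding-criterion} it suffices to prove left-cancellability. Using set quotient induction (\cref{set-quotient-induction}) twice, we reduce to showing that $\bar{f}(\eta(x)) = \bar{f}(\eta(x'))$ implies $\eta(x) = \eta(x')$; but the former gives $f(x) = f(x')$ (again by \cref{prop-subtype}), hence $x \approx x'$, hence $\eta(x) = \eta(x')$ by effectivity (\cref{eta-respects-and-effective}). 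The main obstacle is simply the universe bookkeeping for the local smallness hypothesis in the forward direction and verifying that the quotient-to-image map is an equivalence in the converse direction; both are routine once the construction is set up.
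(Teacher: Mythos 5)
Your proof takes the same approach as the paper's. The forward direction applies set replacement to the image \(\image(e_{\approx})\) defining the quotient, and the converse recovers \(\image(f)\) as the quotient of \(X\) by the kernel relation \(x \approx x' \colonequiv \widetilde{(f(x) = f(x'))}\); you are more explicit than the paper about verifying that the induced map \(X/{\approx} \to \image(f)\) is an equivalence, which the paper leaves as ``straightforward to work out''. The one small gap is in the forward direction: set replacement gives only that \(\image(e_{\approx})\) is \emph{equivalent} to a type \(Y : \U\sqcup\V\), whereas the definition of small set quotients (\cref{existence-of-small-set-quotients}) asks for a quotient living in \(\U\sqcup\V\), so you still need to observe (as the paper does) that \(Y\), equipped with the structure transported across the equivalence, also satisfies the universal property of the set quotient. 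A couple of minor citation slips in the converse — the step from \(x \approx x'\) to \(\eta(x) = \eta(x')\) uses that \(\eta\) respects \({\approx}\) rather than effectivity, and the fact that left-cancellable maps into sets are embeddings is not quite \cref{lc-to-prop} — but neither affects the substance of the argument.
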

\begin{proof}
  Suppose set replacement is true and that a type \(X : \U\) and a \(\V\)-valued
  equivalence relation~\({\approx}\) are given. Using the construction laid out
  in~\cref{sec:set-quotients-from-propositional-truncations}, we construct a set
  quotient \(X/{\approx}\) in \(\U \sqcup \V^+\) as the image of a map
  \(X \to (X \to \Omega_{\V})\). But by propositional extensionality
  \(\Omega_{\V}\) is locally \(\V\)-small and by function extensionality so is
  \(X \to \Omega_{\V}\). Hence, \(X/{\approx}\) is \((\U \sqcup \V)\)-small by
  set replacement, so \(X/{\approx}\) is equivalent to a type
  \(Y : \U\sqcup\V\).  It is then straightforward to show that \(Y\) satisfies
  the properties of the set quotient as well, finishing the proof of one
  implication.

  Conversely, let \(f : X \to Y\) be a map from a \(\U\)-small type to a locally
  \(\V\)-small set. Since \(X\) is \(\U\)-small, we have \(X' : \U\) such that
  \(X' \simeq X\). And because \(Y\) is locally \(\V\)-small, we have a
  \(\V\)-valued binary relation \({=_{\V}}\) on \(Y\) such that
  \((y =_{\V} y') \simeq (y = y')\) for every \(y,y' : Y\).
  We now define the \(\V\)-valued equivalence relation \({\approx}\) on \(X'\)
  by \((x \approx x') \colonequiv \pa*{f'(x) =_{\V} f'(x')}\), where
  \(f'\) is the composite \(X' \simeq X \xrightarrow{f} Y\).
  By assumption, the quotient \(X'/{\approx}\) lives in \(\U \sqcup \V\). But it
  is straightforward to work out that \(\image(f)\) is equivalent to this
  quotient. Hence, \(\image(f)\) is \(\pa*{\U \sqcup \V}\)-small, as desired.
\end{proof}

The left-to-right implication of the theorem above is similar
to~\cite[Corollary~5.1]{Rijke2017}, but our theorem generalises the universe
parameters and restricts to maps into sets. The latter is the reason why the
converse also holds.

\section{Indexed \texorpdfstring{\(\WW\)}{W}-types}\label{sec:indexed-W-types}

This final section discusses a general encoding of inductive types known as
\(\WW\)-types, which are due to Martin-L\"of~\cite{MartinLof1984}. The purpose
of this encoding is that it allows us to prove results about general inductive
types by proving them for \(\WW\)-types. This is exactly what we do in this
section. Specifically, we present a criterion for having decidable equality for
a further generalisation of \(\WW\)-types known as indexed
\(\WW\)-types~\cite{GambinoHyland2004,AbbottAltenkirchGhani2004,AltenkirchEtAl2015,Sattler2015},
which will find application in~\cref{sec:Scott-model-of-PCF}. The further
generalisation allows for inductive types with many-sorted constructors as we
explain below. But we start by defining and illustrating (nonindexed)
\(\WW\)-types.

\subsection{Basic definitions and examples}
\begin{definition}[\(\WW\)-type, \(\WW_{A,B}\), \(\sup\)]%
  \index{W-type@\(\WW\)-type}%
  \index{type!\(\WW\)-|see {\(\WW\)-type}}%
  The \emph{\(\WW\)-type \(\WW_{A,B}\)} specified by a type \(A : \U\) and type
  family \({B : A \to \V}\) is the inductive type with a single constructor
  \[\sup : \Pi_{a : A}\pa*{\pa*{B(a) \to \WW_{A,B}} \to \WW_{A,B}}.\]
  We postulate that \(\WW_{A,B}\) lives in the universe \(\U \sqcup \V\).
  \nomenclature[W]{$\WW_{A,B}$}{\(\WW\)-type with parameters \(A : \U\) and
    \(B : A \to \V\)}
  \nomenclature[sup]{$\sup$}{constructor for (indexed) \(\WW\)-types (see
    also p.~\pageref{def:indexed-W-type})}
\end{definition}

\begin{remark}[The induction principle of a \(\WW\)-type]
  Spelling out the induction principle of \(\WW_{A,B}\), it reads: for every
  \(Y : \WW_{A,B} \to \T\), then to prove \(Y(w)\) for every \(w : \WW_{A,B}\),
  it suffices to prove that for any \(a : A\) and \(f : B(a) \to \WW_{A,B}\)
  satisfying \(Y(f(b))\) for every \(b : B(a)\) (the ``induction hypothesis''),
  we have \(Y(\sup(a,f))\).
\end{remark}

The elements of a \(\WW\)-type \(\WW_{A,B}\) can be thought of as some kind of
well-founded trees, hence the name \(\WW\)-type, but we prefer another
viewpoint. We think of the type \(A\) in the definition above as a type of
labels for constructors, while \(B(a)\) encodes the arity of the constructor
labelled by \(a\).\index{well-founded!tree}\footnote{The name \(\sup\) does not
  make much sense from this point of view, but it is the traditional name in the
  existing literature.}
It is instructive to see how \(\WW\)-types can encode the
type of natural numbers.
\begin{example}[The type of natural numbers as a \(\WW\)-type]
  Following the above description, we define \(A \colonequiv \Two\), since
  \(\Nat\) has two constructors. Furthermore, we put \(B(0) = \Zero\), since the
  \(\zeroo\) constructor takes no arguments, while \(B(1) = \One\), because
  \(\succc\) takes one recursive argument.

  We recursively define functions back and forth between the types as follows:
  \begin{equation*}
    \begin{minipage}{.45\textwidth}
    \vspace{-\baselineskip}
    \begin{align*}
      \phi : \Nat &\to \WW_{A,B} \\
      \zeroo &\mapsto \sup(0,\uniquefromZero), \\
      \succc(n) &\mapsto \sup(1,\lambdadot{\star}{\phi(n)}),
    \end{align*}
  \end{minipage}
  \begin{minipage}{.45\textwidth}
    \vspace{-\baselineskip}
    \begin{align*}
      \psi : \WW_{A,B} &\to \Nat \\
      \sup(0,f) &\mapsto \zeroo, \\
      \sup(1,f) &\mapsto \succc(\psi(f(\star))),
    \end{align*}
  \end{minipage}
  \end{equation*}
  where \(\uniquefromZero\) is the unique map from \(\Zero\) to \(\WW_{A,B}\).

  Using the induction principles of \(\Nat\) and \(\WW_{A,B}\) it is then
  straightforward to prove that \(\phi\)~and~\(\psi\) are inverses. Thus, the
  types \(\Nat\) and \(\WW_{A,B}\) are equivalent.
\end{example}

Another example that will come in useful later when studying the programming
language PCF is the encoding of PCF types.

\begin{example}[The PCF types as a \(\WW\)-type]\label{W-PCF-types}%
  \index{W-type@\(\WW\)-type!encoding PCF types}%
  \index{PCF|seealso{\(\WW\)-type}}%
  The PCF types are inductively generated: \(\iota\) is a PCF type, known as the
  \emph{base type} and if \(\sigma\) and \(\tau\) are PCF types, then we have
  another PCF type, called the \emph{(PCF) function type} and denoted by
  \(\sigma \Rightarrow \tau\).

  We can encode this inductive type as a \(\WW\)-type as follows. Take
  \(A \colonequiv \Two\) since we have two constructors and put
  \(B(0) = \Zero\), since the base type needs no arguments, while
  \(B(1) = \Two\), because to construct a PCF function type we need to be given
  two PCF types. The maps back and forth the inductive types are given by
  \begin{equation*}
    \begin{minipage}{.45\textwidth}
    \vspace{-\baselineskip}
    \begin{align*}
      \phi : \PCFTypes &\to \WW_{A,B}, \\
      \iota &\mapsto \sup(0,\uniquefromZero), \\
      \sigma \Rightarrow \tau &\mapsto \sup(1,g),
    \end{align*}
  \end{minipage}
  \begin{minipage}{.45\textwidth}
    \vspace{-\baselineskip}
    \begin{align*}
      \psi : \WW_{A,B} &\to \PCFTypes \\
      \sup(0,f) &\mapsto \iota, \\
      \sup(1,f) &\mapsto \psi(f(0)) \Rightarrow \psi(f(1)),
    \end{align*}
  \end{minipage}
\end{equation*}
with \(g : \Two \to \WW_{A,B}\) given by \(g(0) \colonequiv \phi(\sigma)\) and
\(g(1) \colonequiv \phi(\tau)\).
Using the induction principles of \(\PCFTypes\) and \(\WW_{A,B}\) it is then
straightforward to prove that \(\phi\)~and~\(\psi\) are inverses. Thus, the
types \(\PCFTypes\) and \(\WW_{A,B}\) are equivalent.
\end{example}

We now generalise \(\WW\)-types to indexed \(\WW\)-types which encode
inductively defined \emph{families} over some type \(I\).
\begin{definition}[Indexed \(\WW\)-type, \(\WW_{s,t}\), \(\isup\)]%
  \label{def:indexed-W-type}%
  \index{W-type@\(\WW\)-type!indexed|textbf}%
  Let $A$ and $I$ be types and let $B$ be a type family over $A$. Suppose we
  have $t : A \to I$ and $s : \pa*{\Sigma_{a : A} B(a)} \to I$. The
  \emph{indexed \(\WW\)-type \(\WW_{s,t}\)} specified by $s$~and~$t$ is the
  inductive type family over $I$ generated by the constructor
  \[
    \isup : \prod_{a : A}\pa*{\pa*{\Pi_{b : B(a)}\,\WW_{s,t}(s(a,b))} \to
      \WW_{s,t}(t(a))}.
  \]
  \nomenclature[W']{$\WW_{s,t}$}{indexed \(\WW\)-type over \(I\) with parameters
    \({t : A \to I}\) and \({s : \pa*{\Sigma_{a : A}B(a)} \to I}\)}
  If we have \(I : \U\), \(A : \V\) and \(B : A \to \W\), then we assume
  \(\WW_{s,t}(i) : \V \sqcup \W\) for all \(i : I\).
\end{definition}
\begin{remark}[The induction principle of an indexed \(\WW\)-type]
  We spell out the induction principle for indexed \(\WW\)-types. If we have
  $Y : \Pi_{i : I}\pa*{\WW_{s,t}(i) \to \mathcal T}$, then to prove
  $\Pi_{i : I}\Pi_{w : \WW_{s,t}(i)} Y(i,w)$, it suffices to show that for
  any $a : A$ and $f : \Pi_{b:B(a)}\WW_{s,t}(s(a,b))$ satisfying
  $Y(s(a,b),f(b))$ for every $b : B(a)$ (the \emph{induction hypothesis}), we
  have a term of type $Y(t(a),\isup(a,f))$.
\end{remark}

That this is indeed a generalisation of \(\WW\)-types is witnessed by the
following result.

\begin{proposition}\label{nonindexed-as-indexed-over-One}
  Every \(\WW\)-type is equivalent to an indexed \(\WW\)-type over \(\One\).
\end{proposition}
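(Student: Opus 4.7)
The plan is to take $I \colonequiv \One$ and to let $t : A \to \One$ and $s : \pa*{\Sigma_{a : A}B(a)} \to \One$ be the unique maps into the unit type. Then the constructor of the indexed $\WW$-type $\WW_{s,t}$ specialises, up to trivial currying, to
\[
  \isup : \Pi_{a : A}\pa*{\pa*{B(a) \to \WW_{s,t}(\star)} \to \WW_{s,t}(\star)},
\]
which has exactly the shape of the constructor $\sup$ of $\WW_{A,B}$. This already suggests that $\WW_{A,B}$ and $\WW_{s,t}(\star)$ should be equivalent; the work is to turn this observation into an actual proof using only the induction principles.

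First, I would define the maps back and forth by induction:
\begin{align*}
  \phi : \WW_{A,B} &\to \WW_{s,t}(\star), & \psi : \WW_{s,t}(\star) &\to \WW_{A,B}, \\
  \sup(a,f) &\mapsto \isup(a,\lambdadot{b}{\phi(f(b))}), & \isup(a,g) &\mapsto \sup(a,\lambdadot{b}{\psi(g(b))}).
\end{align*}
Strictly speaking, $\psi$ is defined on $\WW_{s,t}(i)$ for every $i : \One$, but since $i$ is forced to be $\star$ this is unproblematic.

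Next, I would prove $\psi \circ \phi \sim \id$ and $\phi \circ \psi \sim \id$ by the induction principles of the respective $\WW$-types. For the first one, the induction hypothesis on $\sup(a,f)$ gives $\psi(\phi(f(b))) = f(b)$ for every $b : B(a)$, so function extensionality yields $\lambdadot{b}{\psi(\phi(f(b)))} = f$, and applying $\ap_{\sup(a,-)}$ gives $\psi(\phi(\sup(a,f))) = \sup(a,f)$. The argument for $\phi \circ \psi$ is symmetric, using induction on $\WW_{s,t}$ at index $\star$. Finally, \cref{equiv-iff-invertible} promotes this invertibility data to an equivalence $\WW_{A,B} \simeq \WW_{s,t}(\star)$.

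There is no real obstacle here: the only point worth being careful about is that the induction principle of $\WW_{s,t}$ is formulated for an arbitrary index $i : I$, so in the proof that $\phi \circ \psi \sim \id$ one must state the induction hypothesis for all $i : \One$ and then specialise to $\star$. Function extensionality is used, but is available by assumption.
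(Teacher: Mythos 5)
Your proof is correct and takes essentially the same approach as the paper, which just states the core equivalence \(\WW_{A,B} \simeq \WW_{s,t}(\star) \simeq \Sigma_{u : \One}\WW_{s,t}(u)\) and leaves the details as "not hard to see"; you have simply filled in the straightforward inductive constructions of \(\phi\), \(\psi\), and the round-trip identities that the paper elides.
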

\begin{proof}
  Given a \(\WW\)-type with parameters \(A\) and \(B\), we define the functions
  \(t : A \to \One\) and \({s : \pa*{\Sigma_{a : A}B(a)} \to \One}\) to be the
  unique maps to \(\One\). It is then not hard to see that
  \(\WW_{A,B} \simeq \WW_{s,t}(\star) \simeq \Sigma_{u : \One}\WW_{s,t}(u)\).
\end{proof}

As with ordinary \(\WW\)-types, we think of \(A\) as a type of (labels of)
constructors and of \(B\) as encoding the arity of the constructors. But in the
indexed case each constructor \(a : A\) has a sort given by \(t(a) : I\) and
the arguments \(b : B(a)\) to a constructor \(a : A\) also have sorts given by
\(s(a,b)\).%
\footnote{The names \(s\) and \(t\) stand for source and target respectively.}
Again, it is illuminating to look at an example.

\begin{example}[A subset of PCF as an indexed \(\WW\)-type]\label{W-PCF-terms}%
  \index{W-type@\(\WW\)-type!indexed!encoding PCF terms|textbf}%
  In this example, we define the terms of a very basic typed programming
  language, using the PCF types of \cref{W-PCF-types}. This will be a subset of
  the PCF programming language studied in~\cref{sec:PCF}.

  The type family $\mathsf{T} : \PCFTypes \to \U_0$ is inductively defined as:
  \begin{enumerate}[(i)]
  \item $\mathsf{zero}$ is a term of type $\iota$ (i.e.\ \(\zeroo : \mathsf{T}(\iota)\));
  \item $\mathsf{succ}$ is a term of type $\iota \Rightarrow \iota$;
  \item for every two PCF types $\sigma$ and $\tau$, given a term \(s\) of type
    \(\sigma \Rightarrow \tau\) and a term \(t\) of type \(\sigma\), we have a
    term of type \(\tau\) denoted by juxtaposition \(st\) and called \emph{\(s\)
      applied to \(t\)}.
  \end{enumerate}
  Besides application our only other terms are \(\mathsf{zero}\) and
  \(\mathsf{succ}\), so the fact that our application defined for general types
  doesn't get us very much, but it helps to illustrate indexed
  \(\WW\)-types. For the indexed \(\WW\)-type, take $I$ to be the type of PCF
  types and put $A \colonequiv \Two + (I \times I)$. Define
  $B : A \to \mathcal U_0$ by
  \begin{align*}
    B(\inl(0)) &\colonequiv B(\inl(1)) \colonequiv \Zero, \text{ and} \\
    B(\inr(\sigma,\tau)) &\colonequiv \Two.
  \end{align*}
  Finally, define $t$ by
  \begin{align*}
    t(\inl(0)) &\colonequiv \iota, \\
    t(\inl(1)) &\colonequiv \iota \Rightarrow \iota, \text{ and} \\
    t(\inr(\sigma,\tau)) &\colonequiv \tau;
  \end{align*}
  and $s$ by
  \begin{align*}
    s(\inr(\sigma,\tau),0) &\colonequiv \sigma \Rightarrow \tau, \text{ and} \\
    s(\inr(\sigma,\tau),1) &\colonequiv \sigma;
  \end{align*}
  on the other elements $s$ is defined as the unique function from $\Zero$.

  It is then tedious, but straightforward to define, by induction on PCF types,
  maps between \(\mathsf{T}(\sigma)\) and \(\WW_{s,t}(\sigma)\) for every PCF
  type \(\sigma\) and prove that they are inverses, establishing that
  \(\mathsf{T}(\sigma)\) and \(\WW_{s,t}(\sigma)\) are equivalent for every PCF
  type \(\sigma\).
\end{example}

\subsection{Indexed \texorpdfstring{$\WW$}{W}-types with decidable equality}
We wish to isolate some conditions on the parameters of an indexed
$\WW$-type that are sufficient to conclude that an indexed
$\WW$-type has decidable equality. We first need a few definitions before
we can state the theorem.

\begin{definition}[\(\Pi\)-compactness; {\cite[\mkTTurl{TypeTopology.CompactTypes}]{TypeTopology}}]\label{picompact}%
  \index{Pi-compactness@\(\Pi\)-compactness}%
  A type $X$ is \emph{\(\Pi\)-compact} when every type family $Y$ over $X$
  satisfies: if $Y(x)$ is decidable for every $x : X$, then so is the dependent
  product $\Pi_{x : X} Y(x)$.
\end{definition}

\begin{example}\label{emptyunitpicompact}
  The empty type $\Zero$ is vacuously $\Pi$-compact. The unit type $\One$ is
  also easily seen to be $\Pi$-compact. There are also interesting examples of
  infinite types that are $\Pi$-compact, such as $\Nat_{\infty}$, the one-point
  compactification of the natural numbers~\cite{Escardo2019Types}.
\end{example}

\begin{lemma}[cf.~{\cite[\mkTTurl{TypeTopology.CompactTypes}]{TypeTopology}}]%
  \label{coproductpicompact}
  The $\Pi$-compact types are closed under binary coproducts.
\end{lemma}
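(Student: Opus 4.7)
The plan is to reduce the product over $X+Y$ to a pair of products, one over $X$ and one over $Y$, and then apply $\Pi$-compactness of each summand together with the fact that decidability is closed under binary products.

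More precisely, suppose $X$ and $Y$ are $\Pi$-compact and let $Z : X + Y \to \mathcal{U}$ be a type family such that $Z(c)$ is decidable for every $c : X + Y$. First I would observe that, by the induction principle of the coproduct together with function extensionality, there is an equivalence
\[
  \pa*{\Pi_{c : X + Y} Z(c)} \simeq \pa*{\Pi_{x : X} Z(\inl(x))} \times \pa*{\Pi_{y : Y} Z(\inr(y))}.
\]
In light of \cref{decidable-iff}, it therefore suffices to show that the right-hand side is decidable.

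For that, I would apply $\Pi$-compactness of $X$ to the type family $x \mapsto Z(\inl(x))$, which is pointwise decidable by hypothesis, to conclude that $\Pi_{x : X} Z(\inl(x))$ is decidable; and similarly for $Y$. Finally, I would use the (straightforward) fact that the binary product of two decidable types is again decidable: given $A + \lnot A$ and $B + \lnot B$, a case analysis yields either $A \times B$ in the positive case, or else a map $A \times B \to \Zero$ in the other three cases by projecting to whichever factor is empty.

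The proof is almost entirely routine; the only mild subtlety is making sure that the equivalence between the $\Pi$ over a coproduct and the binary product of $\Pi$'s is used correctly so that \cref{decidable-iff} applies — no deep obstacle, since both directions of the equivalence are given by the obvious pairing and projections, and function extensionality (which we have by assumption throughout) ensures this is a genuine equivalence.
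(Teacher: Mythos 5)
Your proof is correct and follows essentially the same route as the paper: decompose $\Pi_{z : X+Y}Z(z)$ into the binary product $\pa*{\Pi_{x : X}Z(\inl(x))} \times \pa*{\Pi_{y : Y}Z(\inr(y))}$, apply $\Pi$-compactness of each summand, use closure of decidability under binary products, and conclude via \cref{decidable-iff}. The only cosmetic difference is that you invoke function extensionality to upgrade the decomposition to a genuine equivalence, whereas the paper observes that a logical equivalence (maps back and forth) already suffices for \cref{decidable-iff}.
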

\begin{proof}
  Let $X$ and $Y$ be $\Pi$-compact types. Suppose $F$ is a type family over
  $X + Y$ such that $F(z)$ is decidable for every $z : X + Y$. We must show that
  $\Pi_{z : X + Y}F(z)$ is decidable.

  Define the type family \(F_X\) over \(X\) by \(F_X(x) \colonequiv F(\inl(x))\)
  and the type family \(F_Y\) over \(Y\) by \(F_Y(y) \colonequiv F(\inr(y))\). %
  By our assumption on $F$, the types $F_X(x)$ and $F_Y(y)$ are decidable for
  every $x : X$ and $y : Y$. Hence, since $X$ and $Y$ are assumed to be
  $\Pi$-compact, the dependent products $\Pi_{x : X}F_X(x)$ and
  $\Pi_{y : Y}F_Y(y)$ are decidable.

  Finally, $\Pi_{z : X + Y}F(z)$ is logically equivalent to
  $\pa*{\Pi_{x : X}F_X(x)} \times \pa*{\Pi_{y : Y}F_Y(y)}$. Since the product of
  two decidable types is again decidable, an application of~\cref{decidable-iff}
  now finishes the proof.
\end{proof}

We are now in position to state the general theorem about decidable equality on
indexed $\WW$-types. We will use this theorem in~\cref{sec:Scott-model-of-PCF}
to prove that the syntax of the typed programming language PCF has decidable
equality.

\begin{theorem}
  \label{indexedWtypedeceq'}%
  \label{decidability!of equality}%
  An indexed \(\WW\)-type \(\WW_{s,t}\) specified by parameters \(s : A \to I\)
  and \({t : \pa*{\Sigma_{a : A}B(a)} \to I}\) has decidable equality at every
  \(i : I\) if
  \begin{enumerate}[(i)]
  \item \(I\) is a set,
  \item \(A\) has decidable equality, and
  \item \(B(a)\) is \(\Pi\)-compact for every \(a : A\).
  \end{enumerate}
\end{theorem}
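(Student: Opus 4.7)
The plan is a nested induction on $\WW_{s,t}$. I define the motive
\[
  D(i, w) \colonequiv \Pi_{w' : \WW_{s,t}(i)}\pa*{(w = w') + \lnot(w = w')}
\]
and apply the induction principle of the indexed $\WW$-type. This reduces the goal to proving $D(t(a), \isup(a, f))$ for arbitrary $a : A$ and $f : \Pi_{b : B(a)}\WW_{s,t}(s(a, b))$, under the inductive hypothesis that $D(s(a, b), f(b))$ holds for every $b : B(a)$. Concretely, I must decide $\isup(a, f) = w'$ for every $w' : \WW_{s,t}(t(a))$.

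For the inner induction on $w'$, I use the motive
\[
  E(j, w') \colonequiv \Pi_{p : t(a) = j}\pa*{\pa*{\transport^{\WW_{s,t}}(p, \isup(a, f)) = w'} + \lnot\pa*{\transport^{\WW_{s,t}}(p, \isup(a, f)) = w'}},
\]
which at $j \equiv t(a)$ instantiated at $\refl$ yields the desired decidability. The hypothesis that $I$ is a set ensures that $t(a) = j$ is a subsingleton, so the dependence on $p$ is harmless. Applying the induction principle, it suffices to prove $E(t(a'), \isup(a', f'))$ for every $a' : A$ and $f' : \Pi_{b' : B(a')}\WW_{s,t}(s(a', b'))$. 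I then split on whether $a = a'$ using decidable equality of $A$. If $a \neq a'$, I use the label function $\ell : \Pi_{i : I}\WW_{s,t}(i) \to A$, defined by recursion via $\ell_{t(a)}(\isup(a, f)) \colonequiv a$: any equality $\isup(a, f) = \isup(a', f')$ (after the appropriate transport along $p$, tracked using set-ness of $I$) would give $a = \ell(\isup(a, f)) = \ell(\isup(a', f')) = a'$, contradicting the assumption. If $a = a'$, transporting $f'$ yields $f'' : \Pi_{b : B(a)}\WW_{s,t}(s(a, b))$ of the same type as $f$, and by the injectivity of $\isup(a, -)$ (a consequence of the induction principle) together with function extensionality, the equation $\isup(a, f) = \isup(a, f'')$ reduces to $\Pi_{b : B(a)}\pa*{f(b) = f''(b)}$. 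The outer induction hypothesis makes each component decidable, and $\Pi$-compactness of $B(a)$ yields the decidability of the whole dependent product.

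The main obstacle I anticipate is setting up the inner induction correctly: since $w'$ inhabits the specific fiber $\WW_{s,t}(t(a))$, the motive must be schematic in the index $j$, which is why $E$ quantifies over a proof $p : t(a) = j$; it is precisely the set-ness of $I$ that makes this reformulation equivalent to the statement I want. A secondary delicate point is establishing, from the induction principle alone, the "no-confusion" properties I rely on, namely that the label function $\ell$ is well-defined and that $\isup(a, -)$ is injective on its function argument; both are standard but require a little care to formalise in the indexed setting.
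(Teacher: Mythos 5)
Your proposal is correct and follows essentially the same route as the paper: the key idea of generalising the inner induction motive over a transport path \(p\) to enable a nested induction, then deciding the label and using \(\Pi\)-compactness for the function argument, is precisely the content of \cref{indexedWtypedeceqtransport}. The paper factors through Hugunin's \cref{indexedWtypedeceq} (which assumes decidable equality of the \emph{fibres} of \(t\) rather than of \(A\)) and then derives the present statement as a corollary, while you inline this derivation and decide \(a = a'\) directly; your label function \(\ell\) and the invoked injectivity of \(\isup(a,-)\) play the roles of the paper's \(\tofib\) and \(\sub\) projections, and the transport-compatibility and no-confusion subtleties you correctly flag at the end are handled there by \cref{tofibtransport,rightpairinj,eqtosubtreeeq}.
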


\begin{corollary}\label{Wtypedeceq}
  A \(\WW\)-type \(\WW_{A,B}\) specified by a type \(A\) and a type family \(B\)
  over \(A\) has decidable equality if \(A\) has decidable equality and \(B(a)\)
  is \(\Pi\)-compact for every \(a : A\).
\end{corollary}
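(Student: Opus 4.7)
The plan is simply to reduce to Theorem \ref{indexedWtypedeceq'} via Proposition \ref{nonindexed-as-indexed-over-One}. Given a $\WW$-type $\WW_{A,B}$ with $A$ having decidable equality and each $B(a)$ being $\Pi$-compact, Proposition \ref{nonindexed-as-indexed-over-One} provides $s$ and $t$ (both the unique maps into $\One$) such that $\WW_{A,B} \simeq \WW_{s,t}(\star)$.

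To apply Theorem \ref{indexedWtypedeceq'} to $\WW_{s,t}$ at the index $\star : \One$, I would verify its three hypotheses: (i) $\One$ is a subsingleton and hence a set by Lemma \ref{set-if-subsingleton}; (ii) $A$ has decidable equality by assumption; (iii) each $B(a)$ is $\Pi$-compact by assumption. Thus $\WW_{s,t}(\star)$ has decidable equality.

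Finally, I would transport decidable equality along the equivalence $\WW_{A,B} \simeq \WW_{s,t}(\star)$. Concretely, for $w, w' : \WW_{A,B}$ with images $\varphi(w), \varphi(w')$ under the equivalence, the type $w = w'$ is logically equivalent to $\varphi(w) = \varphi(w')$ (using $\ap$ of $\varphi$ in one direction and $\ap$ of its inverse composed with the homotopies witnessing the equivalence in the other). An application of Lemma \ref{decidable-iff} then transfers decidability, completing the proof.

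I do not anticipate any substantive obstacle here: the corollary is essentially a packaging of the theorem through the embedding of nonindexed into indexed $\WW$-types, and all ingredients (including that $\One$ is a set and the closure of decidability under logical equivalence) are already available in the excerpt.
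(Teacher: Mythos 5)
Your proposal is correct and follows essentially the same route as the paper: reduce to \cref{indexedWtypedeceq'} via \cref{nonindexed-as-indexed-over-One}, noting that \(\One\) is a set. The paper's proof is terser (it elides the explicit transport of decidability along the equivalence \(\WW_{A,B} \simeq \WW_{s,t}(\star)\)), but you are right to note that step; it is exactly what \cref{decidable-iff} is for.
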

\begin{proof}
  By \cref{nonindexed-as-indexed-over-One} the type \(\WW_{A,B}\) is an indexed
  \(\WW\)-type over \(I \colonequiv \One\) which is a set.
\end{proof}

The proof of~\cref{indexedWtypedeceq'} is quite technical, so we postpone it
until~\cref{proofofindexedWtypedeceq}. Instead, we next describe how to apply
the theorem to prove that the PCF types and the type family
from~\cref{W-PCF-terms} have decidable equality.

\begin{proposition}\label{PCF-terms-have-decidable-equality}
  The type family from~\cref{W-PCF-terms} has decidable equality.
\end{proposition}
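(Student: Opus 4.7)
The plan is to reduce the claim to Theorem~\ref{indexedWtypedeceq'} applied to the indexed $\WW$-type presentation of $\mathsf{T}$ given in Example~\ref{W-PCF-terms}. Recall that for $\mathsf{T}$ the parameters are $I \colonequiv \PCFTypes$, $A \colonequiv \Two + (I \times I)$, and $B : A \to \U_0$ taking values only in $\Zero$ and $\Two$. Since decidable equality is preserved by equivalence (by Lemma~\ref{decidable-iff} applied componentwise on identity types) and $\mathsf{T}(\sigma) \simeq \WW_{s,t}(\sigma)$ for every PCF type $\sigma$, it suffices to verify the three hypotheses of Theorem~\ref{indexedWtypedeceq'}.

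First I would handle the hypothesis that $I = \PCFTypes$ is a set. For this I would apply Corollary~\ref{Wtypedeceq} to the (nonindexed) $\WW$-type encoding of Example~\ref{W-PCF-types}, where the parameters are $A' \colonequiv \Two$ and $B' : \Two \to \U_0$ with $B'(0) = \Zero$ and $B'(1) = \Two$. The type $\Two$ has decidable equality (a simple case split), and both $\Zero$ and $\Two$ are $\Pi$-compact by Example~\ref{emptyunitpicompact} together with Lemma~\ref{coproductpicompact} (since $\Two \equiv \One + \One$). Hence $\PCFTypes$ has decidable equality, and therefore is a set by Hedberg's Theorem (\cref{Hedberg-Theorem}).

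Next I would check that $A = \Two + (\PCFTypes \times \PCFTypes)$ has decidable equality. The type $\Two$ has decidable equality, $\PCFTypes$ has decidable equality by the previous paragraph, and decidable equality is easily seen to be closed under binary products and binary coproducts: equality in a product reduces (via \cref{Id-of-Sigma}) to a pair of equalities and equality in a coproduct reduces to either an equality in one of the summands or to $\Zero$, depending on the top-level constructors. Finally, $\Pi$-compactness of every $B(a)$ is immediate because $B(a)$ is either $\Zero$ or $\Two$, both handled by Example~\ref{emptyunitpicompact} and Lemma~\ref{coproductpicompact}.

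With all three hypotheses verified, Theorem~\ref{indexedWtypedeceq'} gives decidable equality of $\WW_{s,t}(\sigma)$ for every $\sigma : \PCFTypes$, and transporting along the equivalence with $\mathsf{T}(\sigma)$ yields the claim. I do not anticipate a real obstacle here; the work is really done by Theorem~\ref{indexedWtypedeceq'}, and the only mildly delicate point is remembering to first establish that $\PCFTypes$ itself is a set (via its own $\WW$-type presentation) before using it as the index type.
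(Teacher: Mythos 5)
Your proof is correct and follows essentially the same route as the paper: both reduce to Theorem~\ref{indexedWtypedeceq'}, both verify the $\Pi$-compactness of each $B(a)$ via Example~\ref{emptyunitpicompact} and Lemma~\ref{coproductpicompact}, and both establish decidable equality of $\PCFTypes$ via Corollary~\ref{Wtypedeceq} using the encoding of Example~\ref{W-PCF-types}. You are merely a bit more explicit than the paper in a few places the paper leaves tacit: invoking Hedberg's Theorem to pass from decidable equality of $I$ to sethood, spelling out that decidable equality is closed under binary products and coproducts so that $A \equiv \Two + (\PCFTypes \times \PCFTypes)$ inherits it, and noting the transport along $\mathsf{T}(\sigma) \simeq \WW_{s,t}(\sigma)$.
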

\begin{proof}
  We apply \cref{indexedWtypedeceq'} with parameters \(t : A \to I\) and
  \({s : \pa*{\Sigma_{a : A}B(a)} \to I}\) for the indexed \(\WW\)-type defined
  in~\cref{W-PCF-terms}. By using \cref{emptyunitpicompact} and
  \cref{coproductpicompact} we see that $B(a)$ is $\Pi$-compact for every
  $a : A$. Further, note that $A \equiv I$ has decidable equality if $I$
  does. So it remains to prove that $I$, the type of PCF types, has decidable
  equality.
  But this follows from \cref{Wtypedeceq}, because by \cref{W-PCF-types} we
  can encode PCF types as a (nonindexed) \(\WW\)-type with parameters
  \(A \colonequiv \Two\) and \(B : A \to \mathcal U_0\) given by
  $B(0) \colonequiv \Zero$ and $B(1) \colonequiv \Two$,
  and \(A\) has decidable equality while \(B(a)\) is \(\Pi\)-compact for every
  \(a : A\) because of \cref{emptyunitpicompact} and \cref{coproductpicompact}.
\end{proof}

\subsection{Proving indexed \texorpdfstring{\(\WW\)}{W}-types to have decidable equality}
\label{proofofindexedWtypedeceq}

In this section we prove \cref{indexedWtypedeceq'} by deriving it as a corollary
of another result, namely \cref{indexedWtypedeceq} below. This result seems to
have been first established by Jasper Hugunin, who reported on it in a post on
the \emph{Homotopy Type Theory} mailing list~\cite[]{HuguninMail2017}. Our proof
of \cref{indexedWtypedeceq} is a simplified written-up account of Hugunin's
\Coq~code \cite[\texttt{FiberProperties.v}]{Hugunin2017}.

\begin{theorem}
  [Jasper Hugunin]
  \label{indexedWtypedeceq}
  An indexed \(\WW\)-type \(\WW_{s,t}\) specified by \(s : A \to I\) and
  \(t : \pa*{\Sigma_{a : A}B(a)} \to I\) has decidable equality at every
  \(i : I\) if
  \begin{enumerate}[(i)]
  \item \(B(a)\) is \(\Pi\)-compact for every \(a : A\), and
  \item the fibres of \(t\) at every \(i : I\) all have decidable equality.
  \end{enumerate}
\end{theorem}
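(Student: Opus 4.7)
My plan is to prove the theorem by induction on the indexed $\WW$-type, showing that the predicate
\[
  P(i, w) \colonequiv \Pi_{w' : \WW_{s,t}(i)}\pa*{(w = w') + \lnot(w = w')}
\]
holds for every $i : I$ and $w : \WW_{s,t}(i)$. By the induction principle of $\WW_{s,t}$, it suffices to establish $P(t(a), \isup(a, f))$ for every $a : A$ and children $f$, given the inductive hypothesis that $P$ holds at each $f(b)$. Throughout, I use the naming convention of \cref{def:indexed-W-type}, in which $t : A \to I$ is the target map and the hypothesis states that each of its fibres has decidable equality.

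Fix $a$, $f$, and an arbitrary $w' : \WW_{s,t}(t(a))$. The crux is to invert $w'$ into the form $\isup(a', f')$ for some $a' : A$ together with a proof $p : t(a') = t(a)$ and some children $f' : \Pi_{b : B(a')}\WW_{s,t}(s(a', b))$. The pairs $(a, \refl_{t(a)})$ and $(a', p)$ then both belong to $\fib_t(t(a))$, whose equality is decidable by hypothesis, so we may decide which case we are in. In the negative case, first-projecting yields $a \neq a'$, and a ``no-confusion'' lemma for $\isup$ gives $\isup(a, f) \neq w'$. In the positive case, an identification in the fibre produces, via \cref{Id-of-Sigma}, a path $q : a = a'$ together with the requisite sort-proof compatibility; transporting $f'$ along $q^{-1}$ yields $\tilde{f}' : \Pi_{b : B(a)}\WW_{s,t}(s(a,b))$ of the same type as $f$, and a suitable characterisation of $\isup$-identifications reduces $\isup(a, f) = w'$ to $f = \tilde{f}'$. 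Function extensionality further reduces this to $\Pi_{b : B(a)}\,f(b) = \tilde{f}'(b)$: each factor is decidable by the inductive hypothesis, and $\Pi$-compactness of $B(a)$ (\cref{picompact}) promotes this to decidability of the whole dependent product, completing the case split.

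The principal obstacle is the careful bookkeeping of dependent paths: one must formulate and prove the appropriate injectivity or ``no-confusion'' lemma for $\isup$ that characterises its identity type as a path in $A$ coherently compatible with the sort proofs, and verify that the various transports line up so that equality of two $\isup$-terms is correctly reduced to equality of their children. Hugunin's strategy, reflected in his \Coq\ code, sidesteps much of this overhead by representing $\WW_{s,t}(i)$ as a suitable subtype of the nonindexed $\WW_{A, B}$ equipped with a well-indexedness predicate; inversion and injectivity then descend directly from standard facts about nonindexed $\WW$-types, and I would expect the formal write-up to follow this route.
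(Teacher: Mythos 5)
Your overall strategy — induct on one argument, decide equality in the fibre of \(t\), handle the positive case via \(\Pi\)-compactness and the inductive hypothesis — does match the paper's proof, and you correctly identify the dependent-path bookkeeping as the crux. Two points, however, need fixing or clarification. First, your negative case contains a genuine error: from \((a, \refl_{t(a)}) \neq (a', p)\) in \(\fib_t(t(a))\) it does \emph{not} follow that \(a \neq a'\). Since the theorem does not assume \(A\) or \(I\) is a set, one can have \(a = a'\) while the pairs remain unequal because the proofs \(\refl_{t(a)}\) and \(p\) are themselves unequal. The paper's argument instead defines a map \(\tofib_i : \WW_{s,t}(i) \to \fib_t(i)\), establishes a transport lemma for it, and then derives the contradiction directly at the level of fibre pairs: if the two \(\isup\)-terms (after transport) were equal, applying \(\tofib\) would force the pairs to be equal. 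So the ``no-confusion'' you invoke must be stated and used at the fibre level, not routed through \(a \neq a'\).

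Second, your packaging differs slightly from the paper's. You propose induction on one argument and an explicit ``inversion'' of the other into the shape \(\transport(p, \isup(a', f'))\); the paper instead generalises the goal to decidability of \(\transport^{\WW_{s,t}}(p,u) = v\) over \emph{all} paths \(p : i = j\), which makes simultaneous induction on both \(u\) and \(v\) type-correct and sidesteps the need for a reconstruction lemma altogether — one only needs the forgetful maps \(\sub\) and \(\tofib\), not a proof that they determine their input. These routes are morally equivalent (the inversion lemma is itself proved by induction), but the generalisation avoids the explicit transport reconstruction. Finally, your speculation that the write-up represents \(\WW_{s,t}\) as a well-indexedness subtype of a nonindexed \(\WW\)-type (following one reading of Hugunin's \Coq\ sources) does not match the paper's actual proof, which works directly with the indexed type and transports; the paper describes itself as a ``simplified written-up account'' of that code rather than a direct transcription.
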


Let us see how to obtain \cref{indexedWtypedeceq'} from \cref{indexedWtypedeceq}.
\begin{proof}[Proof of \cref{indexedWtypedeceq'} (using \cref{indexedWtypedeceq})]
  Suppose that $A$ has decidable equality and $I$ is a set. We are to show that
  the fibre of $t$ over $i$ has decidable equality for every $i : I$. So suppose
  we have $(a,p)$ and $(a',p')$ in the fibre of $t$ over $i$. Since $A$ has
  decidable equality, we can decide whether $a$ and $a'$ are equal or not. If
  they are not, then certainly $(a,p) \neq (a',p')$. If they are, then we claim
  that the dependent pairs $(a,p)$ and $(a',p')$ are also equal. If $e : a = a'$
  is the supposed equality, then it suffices to show that
  $\transport^{\lambda x : A.t(x) = i}(e,p) = p'$, but both these elements are
  identifications in $I$ and $I$ is a set, so they must be equal.
\end{proof}

We now embark on a proof of \cref{indexedWtypedeceq}. For the remainder of this
section, let us fix types $A$ and $I$, a type family $B$ over $A$ and maps
$t : A \to I$ and $s : \pa*{\Sigma_{a : A} B(a)} \to I$.

We do not prove the theorem directly. The statement makes it impossible to
assume two elements $u,v : \WW_{s,t}(i)$ and proceed by induction on \emph{both}
$u$ and $v$. Instead, we will state and prove a more general result that is
amenable to a proof by induction. But first, we need additional general lemmas
and some definitions.

\begin{lemma}
  \label{rightpairinj}
  If $X$ is a set, then the right pair function of any type family \(Y\) over
  \(X\) is left-cancellable, in the following sense: if $(x,y) = (x,y')$ as
  elements of $\Sigma_{a : X}Y(a)$, then $y = y'$.
\end{lemma}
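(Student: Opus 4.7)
The plan is to reduce the problem to a standard characterisation of identifications in $\Sigma$-types, together with the assumption that $X$ is a set. Specifically, I would invoke \cref{Id-of-Sigma}, which provides an invertible map from $(x,y) =_{\Sigma Y} (x,y')$ to $\Sigma_{p : x = x}\transport^Y(p,y) = y'$. So from the hypothesis $(x,y) = (x,y')$ I would extract a pair $(p,q)$ with $p : x = x$ and $q : \transport^Y(p,y) = y'$.

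Now I would use that $X$ is a set to conclude $p = \refl_x$. Then, transporting $q$ along this identification, I obtain $\transport^Y(\refl_x,y) = y'$. But by the definitional computation rule for transport we have $\transport^Y(\refl_x,y) \equiv y$, so this yields $y = y'$, as required.

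The only step that demands any care is the replacement of $p$ by $\refl_x$: strictly speaking one cannot just ``substitute'' $p = \refl_x$ into $q$ without a further transport, but since the target type of $q$ is the identity type $\transport^Y(p,y) = y'$, and the family $p \mapsto \transport^Y(p,y) = y'$ is a well-defined family in $p$, this is a routine transport along the identification $p = \refl_x$ coming from set-ness of $X$. No obstacle is expected here; the argument is essentially the standard ``sets have uniqueness of identity proofs, so projections of $\Sigma$-types over sets are left-cancellable in their second component'' observation, and the lemma is stated in precisely the form needed for its later applications.
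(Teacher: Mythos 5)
Your proof is correct and follows essentially the same route as the paper's: invoke the characterisation of identifications in $\Sigma$-types to extract $p : x = x$ and $q : \transport^Y(p,y) = y'$, use sethood of $X$ to identify $p$ with $\refl_x$, and conclude by the computation rule for transport. The paper's version is merely terser and elides the small transport you correctly flag as the one delicate step.
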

\begin{proof}
  Suppose $X$ is a set, $x : X$ and $y,y' : Y(x)$ with $e : (x,y) =
  (x,y')$. From $e$, we obtain $e_1 : x = x$ and
  $e_2 : \transport^{Y}(e_1,y) = y'$. Since $X$ is a set, we must have that
  $e_1 = \refl_{x}$, so that \(e_2\) yields
  $y \equiv \transport^Y(\refl_{x},y) = y'$, as desired.
\end{proof}

\begin{definition}[Subtrees, \(\sub_i\)]%
  \index{subtree}%
  For each $i : I$, define
  \[
    \sub_i : \WW_{s,t} (i) \to \sum_{p:\fib_t(i)}
    \prod_{b : B(\fst(p))}\WW_{s,t}( s(\fst(p),b))
  \]
  \nomenclature[sub]{$\sub_i$}{subtree of an indexed \(\WW\)-type at index \(i\)}
  by induction as
  \[
    \sub_{t(a)}(\isup(a,f)) \colonequiv
    \pa*{\pa*{a,\refl_{t(a)}},f}.
  \]
  For notational convenience, we will omit the subscript of $\sub$.
\end{definition}

The name \(\sub\) comes from \emph{sub}trees, thinking of the elements of a
\(\WW\)-type as a well-founded trees.

\begin{lemma}
  \label{eqtosubtreeeq}
  If the fibre of $t$ over $i$ has decidable equality for every $i : I$, then
  $\isup(a,f) = \isup(a,g)$ implies $f = g$ for every $a : A$ and
  $f,g : \Pi_{b : B(a)} \WW_{s,t}(s(a,b))$.
\end{lemma}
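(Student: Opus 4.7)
The plan is to exploit the \(\sub\) function that has just been defined, which extracts the ``subtrees'' of an element of the indexed \(\WW\)-type together with a witness of which constructor label produced it. Applying \(\sub\) to both sides of the given equality \(\isup(a,f) = \isup(a,g)\) and unfolding the definitional computation \(\sub(\isup(a,f)) \equiv \pa*{(a,\refl_{t(a)}),f}\) turns the hypothesis into an equality
\[
  \pa*{(a,\refl_{t(a)}),f} = \pa*{(a,\refl_{t(a)}),g}
\]
in the \(\Sigma\)-type \(\Sigma_{p : \fib_t(t(a))}\Pi_{b : B(\fst(p))}\WW_{s,t}(s(\fst(p),b))\).

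Having arranged the equality so that the first components are definitionally the same, I would invoke \cref{rightpairinj}, which tells us that the second projection is left-cancellable along a fixed first coordinate, provided the index type is a set. The index type here is \(\fib_t(t(a))\), and this is where the hypothesis of the lemma kicks in: by assumption every fibre of \(t\) has decidable equality, so by Hedberg's Theorem (\cref{Hedberg-Theorem}) every such fibre is a set. Applying \cref{rightpairinj} to the equality above therefore yields the desired \(f = g\).

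There is no real obstacle here once \(\sub\) is in hand; the only subtlety is the need to pass through Hedberg's Theorem to promote decidable equality of the fibre into sethood, which is precisely what enables the application of \cref{rightpairinj}. The use of \(\sub\) is crucial because it avoids having to do induction on \(\isup(a,f)\) and \(\isup(a,g)\) simultaneously, a difficulty the preceding paragraph of the excerpt explicitly highlights.
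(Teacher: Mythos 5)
Your proof matches the paper's exactly: apply \(\sub\) to both sides, unfold the computation rule to reduce the hypothesis to an equality of pairs with definitionally equal first components, promote decidable equality of the fibre to sethood via Hedberg's Theorem, and conclude with \cref{rightpairinj}. Nothing to add.
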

\begin{proof}
  Suppose $\isup(a,f) = \isup(a,g)$. Then
  \[
    \pa*{\pa*{a,\refl_{t(a)}},f} \equiv
    \sub(\isup(a,f)) =
    \sub(\isup(a,g)) \equiv
    \pa*{\pa*{a,\refl_{t(a)}},g}.
  \]
  As \(\fib_t(i)\) is assumed to be decidable, it is a set by
  Hedberg's~theorem~(\cref{Hedberg-Theorem}). Hence \(f = g\) by
  \cref{rightpairinj}.
\end{proof}

\begin{definition}[\(\tofib_i\)]\label{tofib}%
  \index{fibre}%
  For every $i : I$, we define $\tofib_i : \WW_{s,t}(i) \to \fib_t(i)$
  inductively by
  \[
    \tofib_{t(a)}(\isup(a,f)) \colonequiv
    (a,\refl_{t(a)}).
  \]
  \nomenclature[tofib]{$\tofib_i$}{map from \(\WW_{s,t}(i)\) to the fibre of \(t\) at \(i\)}
  In future use, we omit the subscript of $\tofib$.
\end{definition}
\begin{lemma}\label{tofibtransport}
  For $i,j : I$ with a path $p : i = j$ and $w : \WW_{s,t}(i)$, we have
  the following equality:
  \[
    \tofib(\transport^{\WW_{s,t}}(p,w)) =
    (\fst(\tofib(w)),\snd(\tofib(w)) \pathcomp p).
  \]
\end{lemma}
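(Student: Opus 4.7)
The plan is to proceed by path induction on $p : i = j$. Since $\tofib$ is a function out of $\WW_{s,t}(i)$ and the statement quantifies universally over both $p$ and $w$, path induction is the natural first move: we may reduce to the case $j \equiv i$ and $p \equiv \refl_i$. Under this reduction, the left-hand side simplifies by the computation rule for transport, namely $\transport^{\WW_{s,t}}(\refl_i, w) \equiv w$, so the left-hand side becomes $\tofib(w)$. On the right-hand side, $\snd(\tofib(w)) \pathcomp \refl_i$ equals $\snd(\tofib(w))$ (either definitionally, depending on which argument of $\pathcomp$ was used in its inductive definition, or up to a straightforward identification obtained by path induction on $\snd(\tofib(w))$). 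Hence the goal reduces to showing
\[
  \tofib(w) = (\fst(\tofib(w)),\snd(\tofib(w))),
\]
which is an instance of $\Sigma$-$\eta$ for the pair $\tofib(w) : \fib_t(i)$ and therefore holds by $\refl$.

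If the ambient type theory is not assumed to satisfy judgemental $\Sigma$-$\eta$, there is an alternative route: after the path induction on $p$, perform induction on $w : \WW_{s,t}(i)$, which, by the definition of $\tofib$ in \cref{tofib}, forces $w$ to be of the form $\isup(a, f)$ with $i \equiv t(a)$ and $\tofib(\isup(a,f)) \equiv (a, \refl_{t(a)})$. Then $\fst(\tofib(w)) \equiv a$ and $\snd(\tofib(w)) \equiv \refl_{t(a)}$ both compute definitionally, and the right-hand side becomes $(a, \refl_{t(a)} \pathcomp \refl_{t(a)})$, which is $(a, \refl_{t(a)}) \equiv \tofib(w)$. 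Either way, no induction hypothesis on subtrees is needed: the inductive step is trivial because the equation only relates $\tofib$ applied to a single transported tree, and does not recurse into the children $f$.

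The main obstacle is essentially bookkeeping rather than any genuine mathematical difficulty: one has to keep track of whether the right unit law for path composition is judgemental or propositional in the chosen formulation, and ensure that the resulting identification actually matches the shape required by the statement (i.e., that it lives in $\fib_t(j)$ after $j$ has been generalised back via path induction). Since \cref{sec:id-and-fun-ext} freely uses both $\pathcomp$ and $\Sigma$-$\eta$ in this informal style, either of the two routes above can be carried out without additional machinery.
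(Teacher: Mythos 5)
Your proposal is correct and takes essentially the same approach as the paper, which simply dispatches the lemma by path induction on $p$; your additional discussion of the right unit law and $\Sigma$-$\eta$ is careful bookkeeping that the paper leaves implicit.
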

\begin{proof}
  By path induction on $p$.
\end{proof}

We are now in position to state and prove the lemma from which we will derive
\cref{indexedWtypedeceq}.
\begin{lemma}\label{indexedWtypedeceqtransport}
  Suppose that $B(a)$ is $\Pi$-compact for every $a :A $ and that the fibre of
  $t$ over each $i : I$ has decidable equality. For any $i : I$,
  $u : \WW_{s,t}(i)$, $j : I$, path $p : i = j$ and
  $v : \WW_{s,t}(j)$, the type
  \[
    \transport^{\WW_{s,t}}(p,u) = v
  \]
  is decidable.
\end{lemma}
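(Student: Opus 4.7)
The plan is to prove this by induction on $u$, and then, for each resulting case, to do another induction on $v$. The crucial design of the statement — quantifying over all $j$, all paths $p : i = j$, and all $v$ — is exactly what makes this double induction go through, so we keep $j$, $p$, and $v$ universally quantified in the induction hypothesis on $u$.

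The outer induction on $u$ reduces to the case $u = \isup(a,f)$ with induction hypothesis: for every $b : B(a)$, every $j' : I$, every path $p' : s(a,b) = j'$ and every $v' : \WW_{s,t}(j')$, the type $\transport^{\WW_{s,t}}(p',f(b)) = v'$ is decidable. Now let arbitrary $j$, $p : t(a) = j$ and $v : \WW_{s,t}(j)$ be given, and induct on $v$, reducing to $v = \isup(a',g)$ for some $a' : A$ and $g : \Pi_{b : B(a')}\WW_{s,t}(s(a',b))$, with $p : t(a) = t(a')$. The goal is to decide $\transport^{\WW_{s,t}}(p,\isup(a,f)) = \isup(a',g)$.

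The key idea is to first apply the $\tofib$ map together with Lemma \ref{tofibtransport}: if the two sides were equal, then applying $\tofib$ to both would give $(a, \snd(\tofib(\isup(a,f))) \pathcomp p) = (a, \refl \pathcomp p) = (a,p)$ on the left, and $(a',\refl_{t(a')})$ on the right. Hence we first decide whether $(a,p) = (a',\refl_{t(a')})$ in $\fib_t(t(a'))$, which is possible because fibres of $t$ are assumed to have decidable equality. In the negative case, $\ap_{\tofib}$ gives a direct refutation. In the positive case, we extract an equality $e_1 : a = a'$ and $e_2 : \transport^{\lambda x.\, t(x) = t(a')}(e_1,p) = \refl_{t(a')}$; by path induction on $e_1$ we may assume $a \equiv a'$ and $f,g : \Pi_{b:B(a)}\WW_{s,t}(s(a,b))$, and then $e_2$ lets us reduce to $p \equiv \refl_{t(a)}$. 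The transport then becomes the identity and the remaining goal is simply $\isup(a,f) = \isup(a,g)$.

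At this point Lemma \ref{eqtosubtreeeq} (its converse is automatic from $\ap_{\lambda h.\, \isup(a,h)}$) says this is equivalent to $f = g$, and function extensionality reduces this further to $\Pi_{b : B(a)}(f(b) = g(b))$. For each fixed $b$, the induction hypothesis applied with $p' \colonequiv \refl_{s(a,b)}$ and $v' \colonequiv g(b)$ yields decidability of $f(b) = g(b)$. Finally, the $\Pi$-compactness of $B(a)$ promotes this family of decidable types to decidability of the dependent product, and an appeal to Lemma \ref{decidable-iff} transports decidability back along the chain of equivalences to the original goal. The main obstacle will be bookkeeping the transport in the path induction step — specifically, being careful that the hypothesis $e_2$ really does let us normalise $p$ to $\refl_{t(a)}$ once $e_1$ has been eliminated — but no deep new idea is needed beyond organising the double induction and the fibre-based case split.
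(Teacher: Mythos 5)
Your proposal is correct and follows essentially the same strategy as the paper's own proof: a double induction on $u$ and then $v$ (with $j$, $p$, $v$ universally quantified so the outer induction hypothesis is strong enough), a case split on decidable equality in the fibre of $t$ using $\tofib$ and the transport lemma to refute the negative case, path induction on the extracted $e_1$ and use of $e_2$ to normalise $p$ to $\refl$ in the positive case, and then the inner induction hypothesis plus $\Pi$\nobreakdash-compactness of $B(a)$ to decide the remaining $\isup(a,f) = \isup(a,g)$. The only cosmetic difference is that you package the last step as a logical equivalence with $\Pi_{b:B(a)}(f(b)=g(b))$ and invoke \cref{decidable-iff}, whereas the paper treats the two directions (function extensionality for one, \cref{eqtosubtreeeq} for the contrapositive of the other) separately — same content, slightly different bookkeeping.
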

\begin{proof}
  Suppose $i : I$ and $u : \WW_{s,t}(i)$. We proceed by induction on $u$
  and so we assume that $u \equiv \isup(a,f)$. The induction
  hypothesis reads:
  \begin{equation*}\label{IH}
    \prod_{b : B(a)}\prod_{j' : I}\prod_{p' : s(a,b) = j'}\prod_{v' :
      \WW_{s,t}(j')} (\transport^{\WW_{s,t}}(p',f(b))=v')
    \text{ is decidable }. \tag{$\ast$}
  \end{equation*}
  Suppose we have $j : I$ with path $p : t(a) = j$ and $v : \WW_{s,t}(j)$.
  By induction, we may assume that $v \equiv \isup(a',f')$. We are
  tasked to show that
  \begin{equation*}\label{indexedsupdec}
    \transport^{\WW_{s,t}}(p, \isup(a,f)) =
    \isup(a',f') \tag{$\dagger$}
  \end{equation*}
  is decidable, where $p : t(a) = t(a')$.

  By assumption the fibre of $t$ over $t(a')$ has decidable equality. Hence, we
  can decide if $\pa*{a',\refl_{t(a')}}$ and $(a,p)$ are equal or
  not. Suppose first that the pairs are not equal. We claim that in this case
  $\lnot \eqref{indexedsupdec}$. For suppose we had $e : \eqref{indexedsupdec}$,
  then
  \[
    \ap_{\tofib}(e) :
    \tofib(\transport^{\WW_{s,t}}(p,
    \isup(a,f))) = \tofib(\isup(a',f')).
  \]
  By definition, the right hand side is $(a',\refl_{t(a')})$. By
  \cref{tofibtransport}, the left hand side is equal to
  $(a, \refl_{t(a)} \pathcomp p)$ which is in turn equal to $(a,p)$,
  contradicting our assumption that $\pa*{a',\refl_{t(a')}}$ and $(a,p)$
  were not equal.

  Now suppose that $\pa*{a',\refl_{t(a')}} = (a,p)$. From this, we
  obtain paths $e_1 : a' = a$ and
  $e_2~:~\transport^{\lambda x : A.t(x) =
    t(a')}\pa*{e_1,\refl_{t(a')}} = p$. By path induction, we may assume
  $e_1 \equiv \refl_{a'}$, so that from $e_2$ we obtain an identification
  \[
    \rho : \refl_{t(a')} = p.
  \]
  Using this identification, we see that the left hand side of
  \eqref{indexedsupdec} is equal to $\isup(a',f)$, so we are left to show that
  \[
    \isup(a',f) = \isup(a',f')
  \]
  is decidable.
  By induction hypothesis \eqref{IH} and the fact that $a \equiv a'$, the type
  $ f(b) = f'(b) $ is decidable for every $b : B(a')$. Since $B(a')$ is
  $\Pi$-compact, this implies that the type
  $\pa*{\Pi_{b : B(a')}\,f(b) = f'(b)}$ is decidable.
  Suppose first that $\Pi_{b : B(a')}\,f(b) = f'(b)$. Function extensionality
  then yields $f = f'$, so that
  $\isup(a',f) = \isup(a',f')$.
  On the other hand, suppose $\lnot\pa*{\Pi_{b : B(a')}\,f(b) = f'(b)}$. We
  claim that the elements $\isup(a',f)$ and $\isup(a',f')$ cannot be equal. For
  if they were equal, then \cref{eqtosubtreeeq} would yield $f = f'$,
  contradicting our assumption that $\lnot\pa*{\Pi_{b : B(a)}\,f(b) = f'(b)}$,
  which finishes the proof.
\end{proof}

The proof of~\cref{indexedWtypedeceq} now follows readily.
\begin{proof}[Proof of \cref{indexedWtypedeceq}]
  Let $i : I$ and $u,v : \WW_{s,t}(i)$. Taking $j \colonequiv i$ and
  $p \colonequiv \refl_i$ in \cref{indexedWtypedeceqtransport}, we see
  that $u = v$ is decidable, as desired.
\end{proof}

\section{Notes}

Our discussion of type universes in \cref{sec:type-universes} closely follows
that of \cite[Section~2.1]{Escardo2021}.
Our treatment of univalent foundations in Sections
\ref{sec:id-and-fun-ext}--\ref{sec:univalent-universes} is our own, but based on
the expositions in \cite{HoTTBook} and \cite{Escardo2019}.
The notion of a (locally) \(\U\)-small type appears in~\cite{Rijke2017}, but the
lemmas in \cref{sec:small-and-locally-small-types} involving retracts are our
original results and were included in our paper~\cite{deJongEscardo2021b} and
its extended version~\cite{deJongEscardo2022}.

\cref{sec:prop-trunc-resizing} closely follows the exposition in
\cite[Section~3.34.1]{Escardo2019}, while
\cref{sec:set-quotients-from-propositional-truncations} is based on
\cite[Section~3.37]{Escardo2019}.
Both \cref{sec:propositional-truncations-from-set-quotients,sec:set-replacement}
are original
contributions. \cref{sec:quotients-replacement-prop-trunc-revisited} as a whole
was included in our work \cite{deJongEscardo2021b,deJongEscardo2022}.

Finally, \(\WW\)-types, studied in \cref{sec:indexed-W-types}, were
introduced by Per Martin-L\"of~\cite{MartinLof1984} and the main theorem
presented in that section is, as mentioned before, due to Jasper
Hugunin~\cite{Hugunin2017,HuguninMail2017}.
This result was also included in our paper~\cite{deJong2021a} with an
application to semidecidability questions pertaining to the Scott model of
PCF~(as discussed in~\cref{sec:Scott-model-of-PCF} of this thesis).


\chapter{Basic domain theory}\label{chap:basic-domain-theory}

Domain theory~\cite{AbramskyJung1994} is a well-established subject in
mathematics and theoretical computer science with applications to programming
language semantics~\cite{Scott1972,Scott1993,Plotkin1977}, higher-type
computability~\cite{LongleyNormann2015}, topology, and
more~\cite{GierzEtAl2003}.
In this chapter we introduce basic domain theory within the context of
constructive and predicative univalent foundations. Specifically, we discuss
\begin{description}
\item[\cref{sec:directed-complete-posets}:] (pointed) dcpos: the basic objects
  of domain theory,
\item[\cref{sec:Scott-continuous-maps}:] Scott continuous maps: morphism between
  dcpos,
\item[\cref{sec:lifting}:] the lifting of a set and of a dcpo to get the free
  pointed dcpo,
\item[\cref{sec:products-and-exponentials}:] products and exponentials of dcpos, and
\item[\cref{sec:bilimits}:] bilimits of dcpos.
\end{description}
The basic theory will find application in the semantics of programming
languages, as laid out in \cref{chap:applications}.
We offer the following overture in preparation of our development, especially if
the reader is familiar with domain theory in a classical, set-theoretic setting.

\section{Introduction}\label{sec:basic-domain-theory-introduction}

The basic object of study in domain theory is that of a \emph{directed complete
  poset} (dcpo).
In (impredicative) set-theoretic foundations, a dcpo can be defined to be a
poset that has least upper bounds of all directed subsets.
A naive translation of this to our foundation would be to proceed as
follows. Define a poset in a universe \(\U\) to be a type \(P:\U\) with a
reflexive, transitive and antisymmetric relation
\(-\below- : P \times P \to \U\).
Since we wish to consider posets and not categories we require that the values
\(p \below q\) of the order relation are \emph{subsingletons}.
Then we could say that the poset \((P,\below)\) is \emph{directed complete} if
every directed family \(I \to P\) with indexing type \(I : \U\) has a least
upper bound (supremum). The problem with this definition is that there are no
interesting examples in our constructive and predicative setting.
For instance, assume that the poset~$\Two$ with two elements \(0\below 1\) is
directed complete, and consider a proposition~\(A:\U\) and the directed family
\(A + \One \to \Two\) that maps the left component to~\(0\) and the right
component to~\(1\). By case analysis on its hypothetical
supremum~(\cref{def:supremum}), we conclude that the negation of \(A\) is
decidable. This amounts to weak excluded middle~(\cref{def:(w)em}) which is
constructively unacceptable.

To try to get an example, we may move to the poset \(\Omega_{\U_0}\) of
propositions in the universe \(\U_0\), ordered by implication. This poset does
have all suprema of families \(I \to \Omega_{\U_0}\) indexed by types \(I\) in
the \emph{first universe} \(\U_0\), given by existential quantification. But if
we consider a directed family \(I \to \Omega_{\U_0}\) with \(I\) in the
\emph{same universe} as \(\Omega_{\U_0}\) lives, namely the \emph{second
  universe} \(\U_1\), existential quantification gives a proposition in the
\emph{second universe} \(\U_1\) and so doesn't give its supremum. In this
example, we get a poset such that
\begin{enumerate}[(i)]
\item the carrier lives in the universe \(\U_1\),
\item the order has truth values in the universe \(\U_0\), and
\item suprema of directed families indexed by types in \(\U_0\) exist.
\end{enumerate}

Regarding a poset as a category in the usual way, we have a large, but locally
small, category with small filtered colimits (directed suprema). This is typical
of all the concrete examples that we will consider, such as the dcpos in the
Scott model of PCF (\cref{sec:Scott-model-of-PCF}) and Scott's \(D_\infty\)
model of the untyped \(\lambda\)-calculus (\cref{sec:Scott-D-infty}).
We may say that the predicativity restriction increases
the universe usage by one.  However, for the sake of generality, we formulate
our definition of dcpo with the following universe conventions:
\begin{enumerate}[(i)]
\item the carrier lives in a universe \(\U\),
\item the order has truth values in a universe \(\T\), and
\item suprema of directed families indexed by types in a universe \(\V\) exist.
\end{enumerate}
So our notion of dcpo has three universe parameters \(\U,\V\) and \(\T\). We
will say that the dcpo is \emph{locally small} when \(\T\) is not necessarily
the same as \(\V\), but the order has \(\V\)-small (recall~\cref{def:smallness})
truth values. Most of the time we mention \(\V\) explicitly and leave \(\U\) and
\(\T\) to be understood from the context.

\section{Directed complete posets}\label{sec:directed-complete-posets}

We introduce the basic object of domain theory: a directed complete poset. We
carefully explain our use of the propositional truncation in our definitions
and, as mentioned above, the type universes involved.

\begin{definition}[Preorder, reflexivity, transitivity]
  A \emph{preorder} \((P,\sqsubseteq)\) is a type \(P : \U \) together with a
  proposition-valued binary relation \({\sqsubseteq} : {P \to P \to \Omega_\T}\)
  satisfying%
  \index{preorder|textbf}%
  \nomenclature[sqsubseteq]{$x \below y$}{order relation of a preorder}
  \begin{enumerate}[(i)]
  \item \emph{reflexivity}: for every \(p : P\), we have \(p \below p\), and%
    \index{reflexivity|textbf}%
  \item \emph{transitivity}: for every \(p,q,r : P\), if \(p \below q\) and
    \(q \below r\), then \(p \below r\).%
    \index{transitivity}%
    \qedhere
  \end{enumerate}
\end{definition}

\begin{definition}[Poset, antisymmetry]
  A \emph{poset} is a preorder \((P,\below)\) that is \emph{antisymmetric}: if
  \(p \below q\) and \(q \below p\), then \(p = q\) for every \(p,q : P\).%
  \index{poset}\index{antisymmetry}
\end{definition}

\begin{lemma}\label{posets-are-sets}
  If \((P,\below)\) is a poset, then \(P\) is a set.
\end{lemma}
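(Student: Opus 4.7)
The plan is to apply Hedberg's Lemma~(\cref{Hedberg-Lemma}) by constructing, for each fixed $x : P$ and every $y : P$, a constant endofunction on the identity type $x = y$. Once we do this, Hedberg's Lemma immediately gives that $x = y$ is a proposition for every $y$, and since $x$ was arbitrary, this means $P$ is a set.

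To build the required endofunction, I would first observe that the type $(x \sqsubseteq y) \times (y \sqsubseteq x)$ is a proposition, since each factor is a proposition by the definition of a preorder and subsingletons are closed under binary products by~\cref{Sigma-is-prop}. Next, given $p : x = y$, I would produce an element of this product: transport along $p$ sends $\refl_x : x \sqsubseteq x$ to an element of $x \sqsubseteq y$, and dually (using $p^{-1}$ or transport on the other side) to an element of $y \sqsubseteq x$. Then antisymmetry, applied to this pair, yields an element of $x = y$. Composing these two steps gives a map $f_y : (x = y) \to (x = y)$ that factors through the proposition $(x \sqsubseteq y) \times (y \sqsubseteq x)$.

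Such a factorisation makes $f_y$ constant: any two inputs $p, p' : x = y$ produce equal intermediate pairs (since the middle type is a subsingleton), so antisymmetry is applied to the same data and returns the same identification. Hedberg's Lemma with basepoint $x$ then shows that $x = y$ is a proposition for every $y : P$, which is exactly the statement that $P$ is a set.

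I do not expect any real obstacle here; the only point that requires a moment's care is the construction of $(a, b) : (x \sqsubseteq y) \times (y \sqsubseteq x)$ from $p : x = y$, which is handled uniformly by transport (or equivalently by path induction, reducing to the reflexivity element given by the reflexivity axiom of the preorder). Everything else is a direct application of closure properties of subsingletons and Hedberg's Lemma.
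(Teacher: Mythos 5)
Your proof is correct and is essentially the paper's own argument: both factor a map $(x = y) \to (x = y)$ through the proposition $(x \sqsubseteq y) \times (y \sqsubseteq x)$ using reflexivity (via path induction/transport) and antisymmetry, observe the composite is constant, and conclude by Hedberg's Lemma. The only difference is cosmetic bookkeeping about which element is held fixed when invoking~\cref{Hedberg-Lemma}.
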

\begin{proof}
  For every \(p,q : P\), the composite
  \[
    \pa{p = q} \xrightarrow{\text{by reflexivity}}
    {\pa{p \below q} \times \pa{q \below p}} \xrightarrow{\text{by antisymmetry}}
    \pa{p = q}
  \]
  is constant, since \({\pa{p \below q} \times \pa{q \below p}}\) is a
  proposition. Hence, by \cref{Hedberg-Lemma} it follows that \(P\) must be a
  set.
\end{proof}

From now on, we will simply write ``let \(P\) be a poset'' leaving the partial
order \(\below\) implicit. We will often use the symbol \({\below}\) for partial
orders on different carriers when it is clear from the context which one it
refers to.

\begin{definition}[(Semi)directed family]
  A family \(\alpha : I \to P\) of elements of a poset \(P\) is
  \emph{semidirected} if whenever we have \(i,j : I\), there exists some
  \(k : I\) such that \(\alpha_i \below \alpha_k\) and
  \(\alpha_j \below \alpha_k\).\index{semidirectedness}
  We frequently use the shorthand \({\alpha_i,\alpha_j} \below \alpha_k\) to
  denote the latter requirement.
  Such a family is \emph{directed} if it is semidirected and its domain \(I\) is
  inhabited.%
  \index{directedness}
\end{definition}

The name ``semidirected'' matches Taylor's terminology~\cite[Definition~3.4.1]{Taylor1999}.

\begin{remark}\index{propositional truncation}%
  Note our use of the propositional truncation in defining when a family is
  \emph{directed}. To make this explicit, we write out the definition in
  type-theoretic syntax: a family \(\alpha : I \to P\) is directed if
  \begin{enumerate}[(i)]
  \item\label{dir-inh} we have an element of \(\squash{I}\), and
  \item\label{dir-semidir}
    \(\Pi_{i,j : I} \squash*{\Sigma_{k : I}\pa*{\alpha_i \below \alpha_k} \times
      \pa*{\alpha_j \below \alpha_k}}\).
  \end{enumerate}
  The use of the propositional truncation ensures \cref{dir-inh} and
  \cref{dir-semidir} are propositions and hence that being (semi)directed is a
  property of a family.
  \cref{dir-semidir} without truncating would be asking us to assign a chosen
  \(k : I\) for every \(i,j : I\) instead.
\end{remark}

Following~Scott~\cite{Scott1970}, we sometimes think of the elements of \(P\) as
pieces of information and \(p \below q\) as expressing that \(q\) contains
more information or refines \(p\). With this viewpoint, a directed family
is a collection of pieces of information that are consistent in the sense
that any two pieces of information can be refined to a third one that is a
member of the collection.
In the next definition we ask for such families to have least upper bounds,
which is like saying that such consistent collections of information can be
patched together to a piece of information that refines everything in the
family.

\begin{definition}[(Least) upper bound, supremum]\label{def:supremum}
  An element \(x\) of a poset \(P\) is an \emph{upper bound} of a family
  \(\alpha : I \to P\) if \(\alpha_i \below x\) for every \(i : I\).%
  \index{upper bound}
  It is a \emph{least upper bound} of~\(\alpha\) if it is an upper bound, and
  whenever \(y : P\) is an upper bound of \(\alpha\), then \(x \below y\).
  \index{upper bound!least}
  By antisymmetry, a least upper bound is unique if it exists, so in this case
  we will speak of \emph{the} least upper bound of \(\alpha\), or sometimes the
  \emph{supremum} of \(\alpha\).%
  \index{supremum!directed}
\end{definition}

\begin{definition}[\(\V \)-directed complete poset, \(\V\)-dcpo, %
  \(\bigsqcup \alpha\), \(\bigsqcup_{i : I}\alpha_i\)]
  For a universe \(\V\), a \emph{\(\V \)-directed complete poset} (or
  \emph{\(\V \)-dcpo}, for short) is a poset \(D\) such that every directed
  family \(\alpha : I \to D\) with \(I : \V \) has a supremum in \(D\) that we
  denote by \(\bigsqcup \alpha\) or \(\bigsqcup_{i : I} \alpha_i\).%
  \index{directed complete poset|see {dcpo}}%
  \index{poset!directed complete|see {dcpo}}%
  \index{dcpo|textbf}
  \nomenclature[sqcup]{$\bigsqcup \alpha$}{supremum of a directed family \(\alpha\)}
\end{definition}

\begin{remark}\label{directed-completeness-is-prop}
  Explicitly, we ask for an element of the type
  \[
    \Pi_{I : \V}\Pi_{\alpha : I \to D}\pa*{\operatorname{is-directed} \alpha \to
      \Sigma_{x : D}\pa*{x \mathrel{\operatorname{is-sup-of}} \alpha}},
  \]
  where \(\pa*{x \mathrel{\operatorname{is-sup-of}} \alpha}\) is the type expressing
  that \(x\) is the supremum of \(\alpha\).
  Even though we used \(\Sigma\) and not \(\exists\) in this expression, this
  type is still a proposition: By \cref{Pi-is-prop}, it suffices to prove that
  the type \(\Sigma_{x : D}(x \mathrel{\operatorname{is-sup-of}} \alpha)\) is a
  proposition. So suppose that we have \(x,y : D\) with
  \(p : x \mathrel{\operatorname{is-sup-of}} \alpha\) and
  \(q : y \mathrel{\operatorname{is-sup-of}} \alpha\). Being the supremum of a
  family is a property because the partial order is proposition-valued. Hence,
  by \cref{prop-subtype}, to prove that \((x,p) = (y,q)\), it suffices to prove
  that \(x = y\). But this follows from antisymmetry and the fact that \(x\) and
  \(y\) are both suprema of \(\alpha\).
\end{remark}

We will sometimes leave the universe \(\V \) implicit, and simply speak of a
dcpo. On other occasions, we need to carefully keep track of universe levels. To
this end, we make the following definition.
\begin{definition}[\(\DCPO{V}{U}{T}\)]
  Let \(\V\), \(\U\) and \(\T \) be universes. We write \(\DCPO{V}{U}{T}\) for
  the type of \(\V \)-dcpos with carrier in \(\U \) and order taking values in
  \(\T \).
  \nomenclature[DCPO]{$\DCPO{V}{U}{T}$}{type of \(\V\)-dcpos with carriers in
    \(\U\) and orders taking values in \(\T\)}
\end{definition}

\begin{remark}\label{universe-levels-of-lifting-and-exponentials}%
  \index{universe!parameters}%
  In particular, it is very important to keep track of the universe parameters
  of the lifting~(\cref{sec:lifting}) and of
  exponentials~(\cref{sec:products-and-exponentials}) in order to ensure that it
  is possible to construct the Scott model of PCF and Scott's \(D_\infty\) in
  our predicative setting, as we do in~\cref{chap:applications}.
\end{remark}

In many examples and applications, we require our dcpos to have a least element.

\begin{definition}[Pointed dcpo]
  A dcpo \(D\) is \emph{pointed} if it has a least element which we will denote
  by \(\bot_{D}\), or simply \(\bot\).%
  \index{dcpo!pointed}%
  \nomenclature[bot]{$\bot$}{least element of a poset}
\end{definition}

\begin{definition}[Local smallness]\label{def:local-smallness}
  A \(\V\)-dcpo \(D\) is \emph{locally small} if \(x \below y\) is \(\V\)-small
  (recall~\cref{def:smallness}) for every \(x,y : D\).%
  \index{dcpo!locally small|textbf}%
\end{definition}

\begin{lemma}\label{local-smallness-alt}
  A \(\V\)-dcpo \(D\) is locally small if and only if we have
  \({\below_{\V}} : D \to D \to \V\) such that \(x \below y\) holds
  precisely when \(x \below_{\V} y\) does.
\end{lemma}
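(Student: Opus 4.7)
The plan is to unpack the definition of local smallness and invoke the distributivity of $\Pi$ over $\Sigma$ (\cref{Pi-Sigma-distr}).

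For the forward implication, assume that $D$ is locally small. Then we are given an element of
\[
  \Pi_{x,y : D}\pa*{(x \below y) \issmalltype{\V}}
  \equiv \Pi_{x,y : D}\Sigma_{T : \V}\pa*{T \simeq (x \below y)}.
\]
Currying in $(x,y) : D \times D$ and applying \cref{Pi-Sigma-distr}, this type is equivalent to
\[
  \Sigma_{f : D \to D \to \V}\Pi_{x,y : D}\pa*{f(x,y) \simeq (x \below y)}.
\]
Taking $x \below_{\V} y \colonequiv f(x,y)$, the pointwise equivalences in particular yield a logical equivalence between $x \below y$ and $x \below_{\V} y$ for every $x,y : D$, as required.

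For the backward implication, suppose we are given ${\below_{\V}} : D \to D \to \V$ such that $(x \below y)$ holds precisely when $(x \below_{\V} y)$ does. Fix $x,y : D$; we wish to show that $(x \below y)$ is $\V$-small. Since $(x \below y)$ is a proposition but $x \below_{\V} y$ need not be, I would pass to the propositional truncation $\squash{x \below_{\V} y}$, which lives in $\V$ by our standing assumption that universes are closed under propositional truncations (\cref{sec:propositional-truncation}). The logical equivalence together with the universal property of the truncation gives maps $\squash{x \below_{\V} y} \to (x \below y)$ (factoring through the truncation because $(x \below y)$ is a proposition) and $(x \below y) \to (x \below_{\V} y) \to \squash{x \below_{\V} y}$. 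Both types being propositions, propositional extensionality upgrades this logical equivalence to an equivalence $\squash{x \below_{\V} y} \simeq (x \below y)$, witnessing that $(x \below y)$ is $\V$-small.

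There is no genuine obstacle: the forward direction is essentially the distributivity lemma, and the only subtlety in the backward direction is handling the possibility that $\below_{\V}$ is not a priori proposition-valued, which is dealt with by the propositional truncation.
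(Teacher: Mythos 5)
Your proof is correct and the forward direction is essentially the paper's argument: both hinge on \cref{Pi-Sigma-distr} to exchange the \(\Pi\) over elements with the \(\Sigma\) over types. Where you diverge from the paper is in the backward direction. The paper's proof is just the single application of \cref{Pi-Sigma-distr}, because it reads the phrase ``holds precisely when'' as the type equivalence \(\pa*{x \below_{\V} y} \simeq \pa*{x \below y}\) — exactly what \cref{Pi-Sigma-distr} produces — so the biconditional of the lemma is a direct restatement of that type equivalence. You instead read ``holds precisely when'' as a mere logical biimplication, observe that \(\below_{\V}\) is typed as landing in \(\V\) rather than in \(\Omega_{\V}\) so might not be proposition-valued, and therefore route through the propositional truncation \(\squash{x \below_{\V} y}\) to manufacture a small proposition equivalent to \(x \below y\). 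This is a perfectly valid argument and is arguably more faithful to the literal wording; it buys robustness against a non-proposition-valued \(\below_{\V}\) at the cost of invoking the running closure-under-truncation assumption. One small overstatement: you don't need propositional extensionality to upgrade a logical biimplication between two propositions to a type equivalence — that step is immediate because any two maps between propositions are automatically inverse to each other.
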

\begin{proof}
  The \(\V\)-dcpo \(D\) is locally small exactly when we have an element of
  \[
    \Pi_{x,y : D}\Sigma_{T : \V}\pa*{T \simeq {x \below y}}.
  \]
  But this type is equivalent to
  \[
    \Sigma_{R : {D \to D \to \V}}\Pi_{x,y : D}\pa*{R(x,y) \simeq {x \below y}}
  \]
  by~\cref{Pi-Sigma-distr}.
\end{proof}

Nearly all examples of \(\V\)-dcpos in this thesis will be locally small. We now
introduce two fundamental examples of dcpos: the type of subsingletons and
powersets.

\begin{example}[The type of subsingletons as a pointed dcpo]\label{Omega-as-pointed-dcpo}%
  \index{dcpo!of subsingletons}%
  For any type universe~\(\V\), the type \(\Omega_{\V}\) of subsingletons in
  \(\V\) is a poset if we order the propositions by implication.
  Note that antisymmetry holds precisely because of propositional extensionality
  (\cref{def:prop-ext}).
  Moreover, \(\Omega_{\V}\) has a least element, namely \(\Zero_{\V}\), the
  empty type in \(\V\).
  We also claim that \(\Omega_{\V}\) has suprema for all (not necessarily
  directed) families \(\alpha : I \to \Omega_{\V}\) with \(I : \V\).
  Given such a family \(\alpha\), its least upper bound is given by
  \(\exists_{i : I}\,\alpha_i\). It is clear that this is indeed an upper bound
  for \(\alpha\). And if \(P\) is a subsingleton such that \(\alpha_i \below P\)
  for every \(i : I\), then to show that
  \(\pa*{\exists_{i : I}\,\alpha_i} \to P\) it suffices to construct to
  construct a map \(\pa*{\Sigma_{i : I}\,\alpha_i} \to P\) as \(P\) is a
  subsingleton. But this is easy because we assumed that \(\alpha_i \below P\)
  for every \(i : I\).
  Finally, paying attention to the universe levels we observe that
  \(\Omega_{\V} : \DCPO{V}{V^+}{V}\).
\end{example}

\begin{example}[Powersets as pointed dcpos]\label{powersets-as-pointed-dcpos}%
  \index{dcpo!of subsets}%
  Recalling our treatment of subset and powersets from
  \cref{sec:subsets-and-powersets}, we show that powersets give examples of
  pointed dcpos.
  Specifically, for every type \(X : \U\) and every type universe \(\V\), the
  subset inclusion \(\subseteq\) makes \(\powerset_{\V}(X)\) into a poset, where
  antisymmetry holds by function extensionality and propositional
  extensionality.
  Moreover, \(\powerset_{\V}(X)\) has a least element of course: the empty set
  \(\emptyset\).
  We also claim that \(\powerset_{\V}(X)\) has suprema for all (not necessarily
  directed) families \(\alpha : I \to \powerset_{\V}(X)\) with \(I : \V\).
  Given such a family \(\alpha\), its least upper bound is given by
  \(\bigcup \alpha \colonequiv \lambdadot{x}{\exists_{i : I}\,x\in\alpha_i}\),
  the set-theoretic union, which is well-defined as
  \(\pa*{\exists_{i : I}\,x\in\alpha_i} : \V\).
  It is clear that this is indeed an upper bound for \(\alpha\). And if \(A\) is
  a \(\V\)-subset of \(X\) such that \(\alpha_i \subseteq A\) for every
  \(i : I\), then to show that \(\bigcup \alpha \subseteq A\) it suffices to
  construct for every \(x : X\), a map
  \(\pa*{\Sigma_{i : I}\,{x \in \alpha_i}} \to \pa*{x\in A}\) as \(x \in A\) is a
  subsingleton. But this is easy because we assumed that
  \(\alpha_i \subseteq A\) for every \(i : I\).
  Finally, paying attention to the universe levels we observe that
  \(\powerset_{\V}(X) : \DCPO{V}{V^+ \sqcup \U}{V \sqcup \U}\).
  In the case that \(X : \U \equiv \V\), we obtain the simpler, locally small
  \(\powerset_{\V}(X) : \DCPO{V}{V^+}{V}\).
\end{example}

Of course, \(\Omega_{\V}\) is easily seen to be equivalent to
\(\powerset_{\V}(\One_\V)\), so \cref{powersets-as-pointed-dcpos} subsumes
\cref{Omega-as-pointed-dcpo}, but it is instructive to understand
\cref{Omega-as-pointed-dcpo} first.

\begin{proposition}[\(\omega\)-completeness]\label{dcpo-has-sups-of-chains}%
  \index{omega-completeness@\(\omega\)-completeness|textbf}
  Every \(\V\)-dcpo \(D\) is \emph{\(\omega\)-complete}, viz.\ if we
  have elements \(x_0 \below x_1 \below x_2 \below \dots\) of \(D\), then the
  supremum of \(\pa*{x_n}_{n : \Nat}\) exists in \(D\).
\end{proposition}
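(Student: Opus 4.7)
The plan is to exhibit the ascending chain as a directed family and invoke the defining property of a $\V$-dcpo. Concretely, define $\alpha : \Nat \to D$ by $\alpha_n \colonequiv x_n$. A straightforward induction on the natural number $n - m$ using the hypothesised inequalities $x_0 \below x_1 \below x_2 \below \dots$ and transitivity of $\below$ shows that $x_m \below x_n$ whenever $m \leq n$.

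Next, I would verify that $\alpha$ is directed in the sense of the excerpt. The domain $\Nat$ is inhabited by $0 : \Nat$, so the inhabitedness clause is immediate. For semidirectedness, given $m,n : \Nat$, take $k \colonequiv \max(m,n)$; then both $m \leq k$ and $n \leq k$, so the previous observation yields $\alpha_m,\alpha_n \below \alpha_k$, and we wrap this witness in $\tosquash{-}$ to land in the propositionally truncated type required by the definition of directedness.

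The one subtlety is that directed completeness of a $\V$-dcpo only guarantees suprema for families indexed by types in $\V$, whereas $\Nat$ lives in $\U_0$. If the proposition is read as implicitly requiring that $\V$ contains $\U_0$ (so that $\Nat$ already lies in $\V$ up to the implicit lifting used throughout), the statement follows directly by applying directed completeness to $\alpha$. In full generality, one instead works with the lifted family $\alpha \circ \varphi : \lift{U_0}{V}(\Nat) \to D$, where $\varphi : \lift{U_0}{V}(\Nat) \simeq \Nat$ is the equivalence supplied by \cref{lift-to-higher-universes}. Precomposition with an equivalence preserves inhabitedness of the domain and the semidirectedness condition (both witnesses transport along $\varphi$), so this lifted family is again directed, and its supremum — which exists by directed completeness since $\lift{U_0}{V}(\Nat) : \U_0 \sqcup \V$ — is also a supremum of $\alpha$ because a supremum depends only on the image of the family.

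The main obstacle is purely bookkeeping around the universe level of $\Nat$; once the lift is in place, directedness is routine and the conclusion is an immediate application of the definition of a $\V$-dcpo.
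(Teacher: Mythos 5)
Your proposal is correct and follows essentially the same route as the paper: both recognise that \(\Nat : \U_0\) is not automatically a type in \(\V\) (there is no cumulativity), replace it by \(\lift{U_0}{V}(\Nat)\) via \cref{lift-to-higher-universes}, and then apply \(\V\)-directed completeness to the resulting family. One small point worth making explicit, which the paper states and you leave implicit: \(\U_0 \sqcup \V \equiv \V\) by the universe laws, so \(\lift{U_0}{V}(\Nat)\) is literally a type in \(\V\), which is what the definition of \(\V\)-dcpo requires. Also, your phrase ``a supremum depends only on the image of the family'' is a slight shortcut — what you actually need is that precomposing with an equivalence (or indeed any surjection) preserves upper bounds and least upper bounds, since the relevant quantifications transport along it; but this is a routine fact and does not affect the validity of the argument.
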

\begin{proof}
  Recalling \cref{lift-to-higher-universes} and using the fact that
  \(\Nat : \U_0\), the type \(\operatorname{lift}_{\U_0,\V}(\Nat) \) is in the
  universe \(\V\) and is equivalent to \(\Nat\).
  Now
  \(\operatorname{lift}_{\U_0,\V}(\Nat) \simeq \Nat \xrightarrow{x_{(-)}} D\) is
  a directed family as \(x_n \below x_{n+1}\) for every natural number \(n\),
  and it is indexed by a type in \(\V\).
  Hence, it has a least upper bound in \(D\) which is the supremum of
  \(\pa*{x_n}_{n : \Nat}\).
\end{proof}

\section{Scott continuous maps}\label{sec:Scott-continuous-maps}

We discuss an appropriate notion of morphism between \(\V\)-dcpos, namely one
that requires preservation of directed suprema and the order
(\cref{continuous-implies-monotone}).

\begin{definition}[Scott continuity]
  A function \(f : D \to E\) between two \(\V\)-dcpos is \emph{(Scott)
    continuous} if it preserves directed suprema, i.e.\ if \(I : \V \) and
  \(\alpha : I \to D\) is directed, then \(f\pa*{\bigsqcup \alpha}\) is the
  supremum in \(E\) of the family \(f \circ \alpha\).%
  \index{Scott continuity}\index{continuity|see {Scott continuity}}
\end{definition}

\begin{remark}
  When we speak of a Scott continuous function between \(D\) and \(E\), then we
  will always assume that \(D\) and \(E\) are both \(\V\)-dcpos for some
  arbitrary but fixed type universe \(\V\).
\end{remark}

\begin{remark}\label{continuous-wrt-Scott-topology}
  The name ``Scott continuous'' is due to the fact that such maps are continuous
  with respect to the so-called Scott topology. We will not discuss the Scott
  topology in this thesis, but see~\cref{chap:conclusion} for a brief
  discussion of our work on apartness and the Scott topology in a constructive
  setting.%
  \index{topology}
\end{remark}

\begin{lemma}
  Being Scott continuous is a property. In particular, two Scott continuous maps
  are equal if and only if they are equal as functions.
\end{lemma}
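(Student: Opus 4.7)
The plan is to unfold the definition of Scott continuity and observe that, as a dependent product of subsingletons, it is itself a subsingleton. Then the ``in particular'' clause follows from the subtype lemma \cref{prop-subtype}.

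First, I would spell out that a Scott continuous function is formally a pair \((f, c)\) where \(f : D \to E\) is a function and \(c\) is a witness of the type
\[
  \Pi_{I : \V}\Pi_{\alpha : I \to D}\Pi_{\delta : \operatorname{is-directed}(\alpha)}
  \pa*{f\pa*{\textstyle\bigsqcup \alpha} \mathrel{\operatorname{is-sup-of}} f \circ \alpha}.
\]
To show that this witnessing type is a subsingleton, I would apply \cref{Pi-is-prop} three times (it does not matter that \(I\), \(\alpha\) and \(\delta\) are not themselves propositions, since \cref{Pi-is-prop} does not require the index type to be a proposition), reducing the problem to showing that \((x \mathrel{\operatorname{is-sup-of}} \beta)\) is a subsingleton for any \(x : E\) and \(\beta : I \to E\). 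But this was already observed in \cref{directed-completeness-is-prop}: being a supremum decomposes as the conjunction of being an upper bound and being least among upper bounds, each of which is a \(\Pi\)-type into the proposition-valued order relation, and is therefore a subsingleton by another application of \cref{Pi-is-prop}.

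With this established, the second claim is immediate from \cref{prop-subtype}: the type of Scott continuous maps is \(\Sigma_{f : D \to E}\operatorname{is-continuous}(f)\) where the second component is subsingleton-valued, so it is a subtype of \(D \to E\) via the first projection, and hence two Scott continuous maps \((f,c)\) and \((g,c')\) are equal exactly when \(f = g\) as functions.

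There is no real obstacle here; the proof is a routine unfolding followed by applications of the closure lemmas \cref{Pi-is-prop,prop-subtype}. The only care needed is to notice that the universal property of suprema is already packaged as a proposition in \cref{directed-completeness-is-prop}, so one does not need to reprove it from scratch.
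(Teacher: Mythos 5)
Your proof is correct and matches the paper's argument: the paper also cites \cref{Pi-is-prop} together with the observation from \cref{directed-completeness-is-prop} that being the supremum of a family is a property. You have simply unfolded the details the paper leaves implicit, including the use of \cref{prop-subtype} for the ``in particular'' clause, which is exactly how one would justify that step.
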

\begin{proof}
  By \cref{Pi-is-prop} and the fact that being the supremum of a family is a
  property, cf.\ \cref{directed-completeness-is-prop}.
\end{proof}

\begin{lemma}\label{continuous-implies-monotone}
  If \(f : D \to E\) is Scott continuous, then it is \emph{monotone}, i.e.\
  \(x \below_{D} y\) implies \(f(x) \below_{E} f(y)\).%
  \index{monotonicity}
\end{lemma}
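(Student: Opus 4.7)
The plan is to exhibit a small directed family whose supremum is $y$ and that contains $x$, and then invoke preservation of directed suprema to transfer the inequality to $E$. Concretely, given $x \below y$ in $D$, I would define $\alpha : \Two_{\V} \to D$ by $\alpha(0) \colonequiv x$ and $\alpha(1) \colonequiv y$. This is a $\V$-indexed family because $\Two_{\V} \equiv \One_{\V} + \One_{\V}$ lives in $\V$ by the closure properties of universes, so it is eligible for the directed-completeness hypothesis on $D$.

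First I would verify that $\alpha$ is directed. Its index type is inhabited since $0 : \Two_{\V}$. For semidirectedness, given any $i,j : \Two_{\V}$ I would always take $k \colonequiv 1$: by assumption $\alpha_0 = x \below y = \alpha_1$, and $\alpha_1 \below \alpha_1$ by reflexivity, so $\alpha_i, \alpha_j \below \alpha_k$ in all four cases. Next, I would check that $y$ is the supremum of $\alpha$: it is an upper bound by the same two inequalities, and it is the least one because any upper bound $u$ of $\alpha$ satisfies $y = \alpha_1 \below u$. By uniqueness of suprema we then have $\bigsqcup \alpha = y$.

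Finally, Scott continuity of $f$ gives that $f(y) = f\bigl(\bigsqcup \alpha\bigr)$ is the supremum in $E$ of the family $f \circ \alpha$. In particular $f(y)$ is an upper bound of $f \circ \alpha$, whence $f(x) = f(\alpha_0) \below f(y)$, as required.

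There is no real obstacle here; the only point that needs care is ensuring that the constructed family is indexed by a type in $\V$, so that both the directed completeness of $D$ and the Scott continuity of $f$ apply. This is handled by using $\Two_{\V}$ rather than a generic two-element type. A minor stylistic variant would replace $\Two_{\V}$ by any inhabited type in $\V$ equipped with a map to $\{x,y\}$ sending some element to $y$, but the $\Two_{\V}$-indexed family is the most direct choice.
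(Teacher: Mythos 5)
Your proof is correct and takes essentially the same approach as the paper: both consider the directed family $\Two_{\V} \to D$ sending $0 \mapsto x$ and $1 \mapsto y$, observe its supremum is $y$, and apply Scott continuity to conclude $f(x) \below f(y)$. You simply spell out the directedness and supremum checks that the paper leaves implicit.
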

\begin{proof}
  Given \(x,y : D\) with \(x \below y\), consider the directed family
  \(\Two_{\V} \xrightarrow{\alpha} D\) defined by \(\alpha(0) \colonequiv x\)
  and \(\alpha(1) \colonequiv y\). Its supremum is \(y\) and \(f\) must preserve
  it. Hence, \(f(y)\) is an upper bound of \(f(\alpha(0)) \equiv f(x)\), so
  \(f(x) \below f(y)\), as we wished to show.
\end{proof}

\begin{lemma}\label{image-is-directed}
  If \(f : D \to E\) is monotone and \(\alpha : I \to D\) is directed, then so
  is \(f \circ \alpha\).
\end{lemma}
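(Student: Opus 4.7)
The plan is straightforward: unpack the definition of directedness and check both clauses for \(f \circ \alpha\).

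First I would verify inhabitedness: since \(\alpha\) is directed, its indexing type \(I\) is inhabited, and \(f \circ \alpha\) has the same indexing type, so this clause is immediate.

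For the semidirectedness clause, let \(i, j : I\) be arbitrary. I must exhibit (in the propositionally truncated sense) some \(k : I\) with \(f(\alpha_i) \below f(\alpha_k)\) and \(f(\alpha_j) \below f(\alpha_k)\). Since the goal is a proposition (the propositional truncation of a \(\Sigma\)-type), I may apply the universal property of the propositional truncation to the semidirectedness hypothesis for \(\alpha\) and thereby assume I have a specified \(k : I\) with \(\alpha_i \below \alpha_k\) and \(\alpha_j \below \alpha_k\). Applying monotonicity of \(f\) to each of these inequalities yields \(f(\alpha_i) \below f(\alpha_k)\) and \(f(\alpha_j) \below f(\alpha_k)\), and taking the image \(\tosquash{(k, \ldots)}\) under \(\tosquash{-}\) finishes the proof.

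There is no real obstacle here; the only subtlety worth noting is the tacit use of the propositional truncation's recursion principle to pass from ``there exists \(k\)'' (for \(\alpha\)) to ``there exists \(k\)'' (for \(f \circ \alpha\)), which is legitimate precisely because the target is itself a propositional truncation.
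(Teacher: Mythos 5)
Your proof is correct and matches the paper's argument exactly: inhabitedness of \(I\) is inherited directly, and semidirectedness follows from applying monotonicity of \(f\) to the \(k\) obtained from the semidirectedness of \(\alpha\). The paper simply leaves the propositional-truncation bookkeeping implicit, which you have spelled out.
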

\begin{proof}
  Since \(\alpha\) is directed, \(I\) is inhabited, so it remains to prove that
  \(f \circ \alpha\) is semidirected. If we have \(i,j : I\), then by
  directedness of \(\alpha\), there exists \(k : I\) such that
  \({\alpha_i,\alpha_j}\below\alpha_k\). By monotonicity,
  we obtain \({f(\alpha_i) , f(\alpha_j)} \below f(\alpha_k)\) as desired.
\end{proof}

\begin{lemma}\label{continuity-criterion}
  A monotone map \(f : D \to E\) between \(\V\)-dcpos is Scott continuous if and
  only if \(f\pa*{\bigsqcup \alpha} \below \bigsqcup {f \circ \alpha}\).
\end{lemma}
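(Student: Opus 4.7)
The plan is to prove both directions of the equivalence, noting first that the right-hand side expression \(\bigsqcup(f \circ \alpha)\) is well-defined under the monotonicity assumption thanks to \cref{image-is-directed}, which guarantees that \(f \circ \alpha\) is directed whenever \(\alpha\) is and \(f\) is monotone.

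For the forward direction, if \(f\) is Scott continuous, then by definition \(f(\bigsqcup\alpha)\) \emph{is} the supremum of \(f \circ \alpha\), so in particular \(f(\bigsqcup\alpha) \below \bigsqcup(f \circ \alpha)\) by reflexivity (and the uniqueness of suprema). This direction is immediate and requires no use of monotonicity.

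For the backward direction, I would fix a directed family \(\alpha : I \to D\) with \(I : \V\) and verify that \(f(\bigsqcup\alpha)\) satisfies the universal property of the supremum of \(f \circ \alpha\). That it is an upper bound follows from monotonicity: for each \(i : I\), we have \(\alpha_i \below \bigsqcup\alpha\), hence \(f(\alpha_i) \below f(\bigsqcup\alpha)\). For the least upper bound property, suppose \(y : E\) is any upper bound of \(f \circ \alpha\); then \(\bigsqcup(f \circ \alpha) \below y\) by the universal property of that supremum, and combining this with the hypothesis \(f(\bigsqcup\alpha) \below \bigsqcup(f \circ \alpha)\) via transitivity yields \(f(\bigsqcup\alpha) \below y\), as required.

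There is no real obstacle here; the lemma is a convenient reformulation of Scott continuity, and the proof is essentially a matter of unwinding the definition of ``supremum''. The only subtlety to flag is that writing \(\bigsqcup(f \circ \alpha)\) presupposes directedness of \(f \circ \alpha\), which is why monotonicity appears as a standing hypothesis rather than something to be derived (compare \cref{continuous-implies-monotone}, which derives monotonicity from continuity).
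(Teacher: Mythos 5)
Your proof is correct and takes essentially the same approach as the paper: the paper also notes the forward direction is immediate, then establishes the converse by proving \(f(\bigsqcup\alpha) \aboveorder \bigsqcup(f\circ\alpha)\) via monotonicity (which is your upper-bound step), combining with the hypothesis to conclude. Your direct verification of the universal property and the paper's antisymmetry argument are just minor rephrasings of the same content.
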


Note that we are justified in writing \(\bigsqcup {f \circ \alpha}\) because
\cref{image-is-directed} tells us that \(f \circ \alpha\) is directed by the
assumed monotonicity of \(f\).
\begin{proof}
  The left-to-right implication is immediate. For the converse, note that it
  only remains to show that
  \(f\pa*{\bigsqcup \alpha} \aboveorder \bigsqcup{f \circ \alpha}\).
  But for this it suffices that
  \(f\pa*{\alpha_i} \below f\pa*{\bigsqcup \alpha}\) for every \(i : I\), which
  holds as \(\bigsqcup \alpha\) is an upper bound of \(\alpha\) and \(f\) is
  monotone.
\end{proof}

\begin{remark}
  In constructive mathematics it is not possible to exhibit a discontinuous
  function from \(\Nat^\Nat\) to \(\Nat\), because
  sheaf~\cite[Chapter~15]{TroelstraVanDalen1988} and realizability
  models~\cite[e.g.~Proposition~3.1.6]{vanOosten2008} imply that it is
  consistent to assume that all such functions are continuous.
  This does not mean, however, that we cannot exhibit a discontinuous function
  between dcpos. In fact, the negation map \({\lnot} : \Omega \to \Omega\) is
  not monotone and hence not continuous.
  If we were to preclude such examples, then we can no longer work with the full
  type \(\Omega\) of all propositions, but instead we must restrict to a subtype
  of propositions, for example by using dominances~\cite{Rosolini1986}.
  Indeed, this approach is investigated in the context of topos theory
  in~\cite{Phao1991,Longley1995} and for computability instead of continuity in
  univalent foundations in~\cite{EscardoKnapp2017}.
\end{remark}

\begin{definition}[Strictness]
  A Scott continuous function \(f : D \to E\) between pointed dcpos is \emph{strict}
  if \(f\pa*{\bot_{D}} = \bot_{E}\).%
  \index{strictness}
\end{definition}

\begin{lemma}\label{pointed-dcpos-sups}
  A poset \(D\) is a pointed \(\V\)-dcpo if and only if it has suprema for all
  semidirected families indexed by types in \(\V\) that we will denote using the
  \(\bigvee\) symbol.%
  \index{supremum!semidirected}%
  \nomenclature[veeb]{$\bigvee \alpha$}{supremum of a semidirected family \(\alpha\)}
  In~particular, a pointed \(\V\)-dcpo has suprema of all families indexed by
  propositions in \(\V\).%
  \index{supremum!subsingleton}

  Moreover, if \(f\) is a Scott continuous and strict map between pointed
  \(\V\)-dcpos, then \(f\) preserves suprema of semidirected families.
\end{lemma}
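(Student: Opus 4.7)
The plan is to prove both implications by passing through a simple construction that turns a semidirected family into a directed one by adjoining a bottom element as an extra index.

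First, for the right-to-left implication, assume $D$ has suprema of all semidirected families indexed by types in $\V$. Every directed family is in particular semidirected, so $D$ is a $\V$-dcpo. Moreover, the empty family $\Zero_\V \to D$ is vacuously semidirected, and its supremum is necessarily the least element of $D$, so $D$ is pointed.

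For the left-to-right implication, suppose $D$ is a pointed $\V$-dcpo and let $\alpha : I \to D$ be a semidirected family with $I : \V$. Define $\alpha' : I + \One_\V \to D$ by $\alpha'(\inl(i)) \colonequiv \alpha_i$ and $\alpha'(\inr(\star)) \colonequiv \bot_D$, noting that $I + \One_\V : \V$ by the closure properties of universes. This family is inhabited by $\inr(\star)$, and it is semidirected: given two indices, if both are of the form $\inl(i), \inl(j)$ we use semidirectedness of $\alpha$; if either is $\inr(\star)$ we may take the other index since $\bot_D$ is below everything. Hence $\alpha'$ is directed, so $\bigsqcup \alpha'$ exists in $D$, and a straightforward check using that $\bot_D$ is least shows that an element is an upper bound of $\alpha'$ if and only if it is an upper bound of $\alpha$, so $\bigsqcup \alpha'$ is also the least upper bound of $\alpha$, which we denote $\bigvee \alpha$.

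For the ``in particular'' claim, if $P : \V$ is a proposition and $\alpha : P \to D$ is any family, then $\alpha$ is automatically semidirected: given $i, j : P$, propositionality yields $i = j$, so taking $k \colonequiv i$ gives $\alpha_i \below \alpha_i$ and $\alpha_j = \alpha_i \below \alpha_i$ by reflexivity. Thus $\bigvee \alpha$ exists. (Note that $\alpha$ need not be inhabited, which is precisely why we need semidirected rather than directed suprema here.)

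For the ``moreover'' part, let $f : D \to E$ be strict and Scott continuous between pointed $\V$-dcpos, and let $\alpha : I \to D$ be semidirected with $I : \V$. Using the same construction, $\alpha' : I + \One_\V \to D$ is directed with $\bigsqcup \alpha' = \bigvee \alpha$. By Scott continuity, $f(\bigvee \alpha) = f(\bigsqcup \alpha') = \bigsqcup (f \circ \alpha')$. Now strictness of $f$ gives $f \circ \alpha'(\inr(\star)) = f(\bot_D) = \bot_E$, so $f \circ \alpha'$ is precisely the family obtained from $f \circ \alpha$ by adjoining $\bot_E$. By the same upper-bound argument as before, $\bigsqcup (f \circ \alpha') = \bigvee (f \circ \alpha)$, giving the desired $f(\bigvee \alpha) = \bigvee (f \circ \alpha)$. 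No step here presents a real obstacle; the only care required is checking that the extended family lies in the correct universe and that adjoining $\bot$ genuinely does not affect the set of upper bounds.
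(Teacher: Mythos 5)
Your proof is correct and takes essentially the same approach as the paper: both directions go through the construction that adjoins a bottom-indexed element (the paper calls it $\hat\alpha : I + \One_{\V} \to D$), and the ``moreover'' part uses the identical observation that strictness makes $f \circ \hat\alpha$ coincide with $\widehat{f \circ \alpha}$. You have merely spelled out the upper-bound verification and the semidirectedness checks in a bit more detail than the paper does.
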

\begin{proof}
  If \(D\) is complete with respect to semidirected families indexed by types
  in~\(\V\), then it is clearly a \(\V\)-dcpo and it is pointed because the
  supremum of the family indexed by the empty type is the least element.
  Conversely, if \(D\) is a pointed \(\V\)-dcpo and \(\alpha : I \to D\) is a
  semidirected family with \(I : \V\), then the family
  \begin{align*}
    \hat\alpha : I + \One_{\V} &\to D \\
    \inl(i) &\mapsto \alpha_i \\
    \inr(\star) &\mapsto \bot
  \end{align*}
  is directed and hence has a supremum in \(D\) which is also the least
  upper bound of \(\alpha\).

  A pointed \(\V\)-dcpo must have suprema for all families indexed by
  propositions in~\(\V\), because any such family is semidirected.
  Finally, suppose that \(\alpha : I \to D\) is semidirected and that
  \(f : D \to E\) is Scott continuous and strict. Using the
  \(\widehat{(-)}\)-construction from above, we see that
  \begin{align*}
    f\pa*{\textstyle\bigvee \alpha}
    &\equiv f\pa*{\textstyle\bigsqcup\hat\alpha} \\
    &= \textstyle\bigsqcup f \circ \hat\alpha
    &&\text{(by Scott continuity of \(f\))} \\
    &= \textstyle\bigsqcup \widehat{f \circ \alpha}
    &&\text{(since \(f\) is strict)} \\
    &\equiv \textstyle\bigvee {f \circ \alpha},
  \end{align*}
  finishing the proof.
\end{proof}

\begin{proposition}\label{continuity-closure}%
  \hfill
  \begin{enumerate}[(i)]
  \item\label{id-is-continuous} The identity on any dcpo is Scott continuous.
  \item\label{const-is-continuous} For dcpos \(D\) and \(E\) and \(y : E\), the
    constant map \(x \mapsto y : D \to E\) is Scott continuous.
  \item\label{comp-is-continuous} If \(f : D \to E\) and \(g : E \to E'\) are
    Scott continuous, then so is \(g \circ f\).
  \end{enumerate}
  Moreover, if \(D\) is pointed, then the identity on \(D\) is strict, and if
  \(f\)~and~\(g\) are strict in \ref{comp-is-continuous}, then so is
  \(g \circ f\).
\end{proposition}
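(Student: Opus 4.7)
My plan is to treat the four parts in turn, each of which reduces to a short direct verification from the definitions of Scott continuity and strictness.

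For part~\ref{id-is-continuous}, I would observe that for any directed family \(\alpha : I \to D\) in a \(\V\)-dcpo, we have \(\id \circ \alpha \equiv \alpha\), so \(\id\pa*{\bigsqcup \alpha} = \bigsqcup \alpha = \bigsqcup \pa*{\id \circ \alpha}\) holds definitionally. For part~\ref{const-is-continuous}, fix \(y : E\) and let \(c : D \to E\) denote the constant map at \(y\). Given a directed \(\alpha : I \to D\), the composite \(c \circ \alpha\) is the constant family at \(y\). I would verify that \(y\) is the supremum of this constant family: it is certainly an upper bound, and if \(z\) is another upper bound then since \(I\) is inhabited (as \(\alpha\) is directed) I can extract, using that \(y \below z\) is a subsingleton, an \(i : I\) witnessing \(y \equiv c(\alpha_i) \below z\). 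Hence \(c\pa*{\bigsqcup \alpha} \equiv y = \bigsqcup \pa*{c \circ \alpha}\). The use of inhabitedness of \(I\) is the only subtle point, and it is exactly why this argument requires a \emph{directed} rather than merely semidirected family.

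For part~\ref{comp-is-continuous}, suppose \(f : D \to E\) and \(g : E \to E'\) are Scott continuous and let \(\alpha : I \to D\) be directed with \(I : \V\). By \cref{continuous-implies-monotone}, \(f\) is monotone, so by \cref{image-is-directed} the family \(f \circ \alpha : I \to E\) is directed. Therefore I can chain the two continuity hypotheses:
\[
  (g \circ f)\pa*{\textstyle\bigsqcup \alpha}
  \equiv g\pa*{f\pa*{\textstyle\bigsqcup \alpha}}
  = g\pa*{\textstyle\bigsqcup f \circ \alpha}
  = \textstyle\bigsqcup g \circ f \circ \alpha,
\]
where the first equality uses Scott continuity of \(f\) and the second uses Scott continuity of \(g\) applied to the directed family \(f \circ \alpha\).

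Finally, for the claims about strictness, the identity on a pointed dcpo satisfies \(\id\pa*{\bot_D} \equiv \bot_D\) definitionally, and if \(f\) and \(g\) are strict in the situation of~\ref{comp-is-continuous} (so \(D\), \(E\), \(E'\) are pointed), then \((g \circ f)\pa*{\bot_D} \equiv g\pa*{f\pa*{\bot_D}} = g\pa*{\bot_E} = \bot_{E'}\). I do not anticipate any genuine obstacle; the only point deserving care is the appeal to inhabitedness of the index type in part~\ref{const-is-continuous}, which is why constant maps are continuous only with respect to directed suprema rather than arbitrary families.
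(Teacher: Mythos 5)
Your proposal is correct and follows the same approach as the paper, which simply records items (i) and (ii) as obvious, gives the same two-step chain of equalities for (iii), and notes that the strictness claims are clear. You spell out more detail (in particular, you correctly observe that (iii) implicitly needs \(f \circ \alpha\) to be directed, which follows from \cref{continuous-implies-monotone} and \cref{image-is-directed}, and that (ii) uses inhabitedness of \(I\) via truncation elimination into the proposition \(y \below z\)); both of these points are sound and merely made explicit where the paper leaves them tacit.
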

\begin{proof}
  The proofs of~\ref{id-is-continuous}~and~\ref{const-is-continuous} are
  obvious. For~\ref{comp-is-continuous}, let \(\alpha : I \to D\) be directed
  and notice that
  \(g\pa*{f\pa*{\textstyle\bigsqcup \alpha}} = g\pa*{\textstyle\bigsqcup f \circ
    \alpha} = \textstyle\bigsqcup {g \circ f \circ \alpha}\) by respectively
  continuity of \(f\)~and~\(g\). The claims about strictness are also clear.
\end{proof}

\begin{definition}[Isomorphism]
  A Scott continuous map \(f : D \to E\) is an \emph{isomorphism} if we have a
  Scott continuous inverse \(g : E \to D\).%
  \index{dcpo!isomorphism|textbf}
\end{definition}

\begin{lemma}\label{isomorphism-is-strict}
  Every \(f : D \to E\) isomorphism between pointed dcpos is strict.
\end{lemma}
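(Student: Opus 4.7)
The plan is very short: this is a standard argument using only monotonicity, antisymmetry, and the defining property of the least element, together with the fact that isomorphisms come equipped with a Scott continuous two-sided inverse.

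First I would unpack the hypothesis: let $f : D \to E$ be an isomorphism with Scott continuous inverse $g : E \to D$, so that $g \circ f = \id_D$ and $f \circ g = \id_E$. The goal $f(\bot_D) = \bot_E$ will be obtained by antisymmetry, so I would split it into the two inequalities $\bot_E \below f(\bot_D)$ and $f(\bot_D) \below \bot_E$.

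The first inequality is immediate from the fact that $\bot_E$ is the least element of $E$. For the second, observe that $\bot_D \below g(\bot_E)$ holds because $\bot_D$ is the least element of $D$. Since $f$ is Scott continuous, it is monotone by~\cref{continuous-implies-monotone}, and applying $f$ to this inequality gives $f(\bot_D) \below f(g(\bot_E))$. But $f(g(\bot_E)) = \bot_E$ because $g$ is the inverse of $f$, so we conclude $f(\bot_D) \below \bot_E$, and hence $f(\bot_D) = \bot_E$ by antisymmetry.

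There is no real obstacle here; the only thing to be mindful of is that one cannot try to derive strictness from preservation of the supremum of the empty family, since the empty family is not directed (its index type is not inhabited) and so Scott continuity says nothing about it directly. Using monotonicity together with the existence of a Scott continuous inverse sidesteps this issue entirely.
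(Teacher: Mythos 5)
Your proof is correct and uses the same key idea as the paper's: apply monotonicity of $f$ to $\bot_D \below g(-)$ and simplify via $f \circ g = \id$. The only cosmetic difference is that you specialise to $y = \bot_E$ and then invoke antisymmetry, whereas the paper lets $y : E$ range over all elements to conclude that $f(\bot_D)$ is the least element of $E$; these amount to the same argument.
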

\begin{proof}
  Let \(y : E\) be arbitrary and notice that \(\bot_D \below g(y)\) because
  \(\bot_D\) is the least element of \(D\). By monotonicity of \(f\), we get
  \(f(\bot_D) \below f(g(y)) = y\) which shows that \(f(\bot_D)\) is the least
  element of \(E\).
\end{proof}

\begin{definition}[Scott continuous retract]\label{def:continuous-retract}%
  A dcpo \(D\) is a \emph{(Scott) continuous retract} of \(E\) if we have Scott
  continuous maps \(s : D \to E\) and \(r : E \to D\) such that \(s\)~is a
  section of \(r\). We denote this situation by \(\retract{D}{E}\).%
  \index{retract!Scott continuous|textbf}
  \nomenclature[arrowz]{$\retract{D}{E}$}{Scott continuous retract}
\end{definition}

\begin{lemma}\label{locally-small-retract}
  If \(D\) is a continuous retract of \(E\) and \(E\) is locally small,
  then so is~\(D\).
\end{lemma}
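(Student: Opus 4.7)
The plan is to exhibit, for every $x,y : D$, an equivalence between $x \below_D y$ and a $\V$-small type, which by \cref{local-smallness-alt} suffices for local smallness of $D$. The obvious candidate is the relation on $E$ evaluated at the images of $x$ and $y$ under $s$, namely $s(x) \below_E s(y)$, which is $\V$-small by the assumed local smallness of $E$.

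First I would use \cref{local-smallness-alt} on $E$ to obtain a $\V$-valued relation $\below_{\V}$ on $E$ equivalent to $\below_E$. Then I would define a candidate small relation on $D$ by $x \below'_{\V} y \colonequiv s(x) \below_{\V} s(y)$, which lives in $\V$. The core of the argument is to show that $x \below_D y$ holds precisely when $s(x) \below_E s(y)$ does. The left-to-right direction is immediate from monotonicity of $s$ (\cref{continuous-implies-monotone}). For the right-to-left direction, apply monotonicity of $r$ to $s(x) \below_E s(y)$ to obtain $r(s(x)) \below_D r(s(y))$, and then use that $r \circ s = \id_D$ (the retract condition) to conclude $x \below_D y$.

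Since both $x \below_D y$ and $s(x) \below_E s(y)$ are propositions (because the orders on dcpos are proposition-valued), the logical equivalence above is an equivalence of types. Composing with the equivalence $(s(x) \below_E s(y)) \simeq (s(x) \below_{\V} s(y))$ gives the required $\V$-small equivalent of $x \below_D y$, finishing the proof via \cref{local-smallness-alt}.

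There is no real obstacle here: the lemma is essentially bookkeeping, and the only point worth being careful about is that we really do get an equivalence of propositions (not merely a logical equivalence of arbitrary types), which is automatic because orders are required to be $\Omega$-valued.
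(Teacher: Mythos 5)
Your proof is correct and takes essentially the same approach as the paper: both establish that \(x \below_D y\) is logically equivalent to \(s(x) \below_E s(y)\), using monotonicity of \(s\) for one direction and monotonicity of \(r\) together with \(r \circ s = \id\) for the other, then conclude from local smallness of \(E\). The extra explicit invocation of \cref{local-smallness-alt} is just unpacking what the paper leaves implicit.
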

\begin{proof}
  We claim that \(x \below_D y\) and \(s(x) \below_E s(y)\) are equivalent,
  which proves the lemma as \(E\) is assumed to be locally small.
  One direction of the equivalence is given by the fact that \(s\) is monotone.
  In the other direction, assume that \(s(x) \below s(y)\) and note that
  \(x = r(s(x)) \below r(s(y)) = y\), as \(r\) is monotone and \(s\) is a
  section of \(r\).
\end{proof}

\section{Lifting}\label{sec:lifting}

We now turn to constructing pointed \(\V\)-dcpos from sets.
First of all, every discretely ordered set is a \(\V\)-dcpo, where discretely
ordered means that we have \(x \below y\) exactly when \(x = y\). This is
because if \(\alpha : I \to X\) is a directed family into a discretely ordered
set~\(X\), then \(\alpha\) has to be constant (by semidirectedness), so
\(\alpha_i\) is its supremum for any \(i : I\). And since directedness includes
the condition that the domain is inhabited, it follows that \(\alpha\) must have
a supremum in \(X\).
In fact, ordering \(X\) discretely yields the free \(\V\)-dcpo on the set \(X\)
in the categorical sense.

With excluded middle, the situation for \emph{pointed} \(\V\)-dcpos is also very
straightforward. Simply order the set \(X\) discretely and add a least element,
as depicted for \(X \equiv \Nat\) in the Hasse diagram of~\cref{flat-gives-LPO}.%
\index{excluded middle}
The point of that proposition is to show, by a reduction to the constructive
taboo LPO~(\cref{def:LPO}), that this approach is constructively unsatisfactory.%
\index{constructivity}
In~\cref{chap:predicativity-in-order-theory} we will prove a general
constructive no-go theorem~(\cref{nontrivial-weak-em}) showing that there is a
nontrivial dcpo with decidable equality if and only if weak excluded middle
holds.%
\index{decidability!of equality}\index{excluded middle!weak}

\begin{proposition}\label{flat-gives-LPO}%
  \index{limited principle of omniscience}%
  \index{omega-completeness@\(\omega\)-completeness}%
  If the poset \(\Nat_\bot = (\Nat + \One,{\below})\) with order depicted by the Hasse diagram
  \nomenclature[Nbot]{$\Nat_{\bot}$}{the set \(\Nat + \One\) with the flat order}
  \[
    \begin{tikzcd}
      0 \ar[drr,dash] & 1 \ar[dr,dash] & 2 \ar[d,dash] & 3 \ar[dl,dash] & \cdots \ar[dll,dash] \\
      & & \bot
    \end{tikzcd}
  \]
  is \(\omega\)-complete, then LPO holds.
  In particular, by \cref{dcpo-has-sups-of-chains}, if it is
  \(\U_0\)-directed complete, then LPO holds.
\end{proposition}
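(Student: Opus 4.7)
The plan is to extract LPO from a given binary sequence $\alpha : \Nat \to \Two$ by constructing an ascending chain in $\Nat_\bot$ whose supremum will tell us whether $\alpha$ ever takes the value $1$. The key observation is that the coproduct structure $\Nat + \One$ makes the predicate ``is $\bot$'' decidable on $\Nat_\bot$, so any supremum in $\Nat_\bot$ yields a decision.

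Concretely, for each $n : \Nat$, the bounded search ``is there $k \leq n$ with $\alpha_k = 1$?'' is decidable because equality on $\Two$ is decidable and bounded quantification preserves decidability. Using this, define
\[
  x_n \colonequiv \begin{cases}\inl(k) & \text{if $k$ is the least index \(\leq n\) with $\alpha_k = 1$,}\\ \bot & \text{otherwise.}\end{cases}
\]
I would then check $x_n \below x_{n+1}$ in each case: if $x_n = \bot$ the inequality is trivial; if $x_n = \inl(k)$, the same $k$ remains the least witness up to $n+1$, so $x_{n+1} = x_n$ and the chain is in fact constant from that point on. Hence $(x_n)_{n : \Nat}$ is an ascending chain in $\Nat_\bot$.

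Assuming $\omega$-completeness, let $s \colonequiv \bigsqcup_{n : \Nat} x_n$. Since $s : \Nat + \One$, we case-split on whether $s = \inr(\star) = \bot$ or $s = \inl(m)$ for some $m : \Nat$ --- and this case split is decidable. If $s = \inl(m)$, then $\bot$ cannot be an upper bound (otherwise antisymmetry would give $\inl(m) = \bot$), so some $x_n \neq \bot$; unfolding the definition of $x_n$ exhibits a $k$ with $\alpha_k = 1$, establishing $\exists_{n : \Nat}\,\alpha_n = 1$. If $s = \bot$, then every $x_n \below \bot$ forces $x_n = \bot$ for all $n$, and since any $k$ satisfies $k \leq k$, the definition of $x_k$ yields $\alpha_k = 0$ for every $k$, establishing $\lnot \exists_{n : \Nat}\,\alpha_n = 1$. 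In either case the proposition $\exists_{n : \Nat}\,\alpha_n = 1$ is decided.

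I expect the only subtlety to be the case $s = \inl(m)$: here one must rule out that all $x_n$ are $\bot$, which is done using antisymmetry together with $\inl(m) \neq \bot$ (a consequence of the coproduct being a disjoint sum). Everything else is routine: the chain is well-defined thanks to the decidability of bounded search, and its upward monotonicity is immediate from the construction.
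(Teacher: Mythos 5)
Your chain construction and the treatment of the case $s = \bot$ are essentially the same as the paper's, and both are fine. The genuine gap lies in the case $s = \inl(m)$. You correctly observe (via antisymmetry) that $\bot$ cannot be an upper bound, and then conclude ``so some $x_n \neq \bot$''. Constructively this inference is not available: from $\lnot\left(\Pi_{n : \Nat}\,(x_n = \bot)\right)$ one cannot obtain a witness $n$ with $x_n \neq \bot$ --- that is an instance of Markov's Principle for the decidable predicate $x_n \neq \bot$. Markov's Principle is not assumed in the paper's constructive setup, and invoking it here would be circular in any case, since LPO implies it.

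The paper's proof avoids this by exploiting the concrete index $m$ you already hold. Since $\alpha_m = 1$ is decidable, suppose towards a contradiction that $\alpha_m = 0$. Then no $x_n$ can equal $\inl(m)$ (as $x_n = \inl(k)$ entails $\alpha_k = 1$), yet every $x_n \below s = \inl(m)$ forces $x_n = \bot$ or $x_n = \inl(m)$; hence $x_n = \bot$ for all $n$, making $\bot$ an upper bound and yielding $\inl(m) = \bot$ by leastness and antisymmetry --- the very contradiction you used, but now deployed to conclude $\alpha_m = 1$ directly rather than to extract an unspecified witness. With that replacement the rest of your argument goes through unchanged.
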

\begin{proof}
  Let \(\alpha : \Nat \to \Two\) be an arbitrary binary sequence. We show that
  \(\exists_{n : \Nat}\,\alpha_n = 1\) is decidable.
  Define the family \(\beta : \Nat \to \Nat_\bot\) by
  \[
    \beta_n \colonequiv
    \begin{cases}
      \inl(k) &\text{if \(k\) is the least integer below \(n\) for which \(\alpha_k = 1\), and} \\
      \inr(\star) &\text{else}.
    \end{cases}
  \]
  Then \(\beta\) is a chain, so by assumption it has a supremum \(s\) in \(\Nat_\bot\).
  By the induction principle of coproducts, we have \(s = \inr(\star)\) or we
  have \(k : \Nat\) such that \(s = \inl(k)\).
  If the latter holds, then \(\alpha_k = 1\), so
  \(\exists_{n : \Nat}\,\alpha_n = 1\) is decidable.
  We claim that \(s = \inr(\star)\) implies that
  \(\lnot\pa*{\exists_{n : \Nat}\,\alpha_n = 1}\).
  Indeed, assume for a contradiction that \(\exists_{n : \Nat}\,\alpha_n =
  1\). Since we are proving a proposition, we may assume to have \(n : \Nat\)
  with \(\alpha_n = 1\).
  Then, \(\beta_n = \inl(k)\) for a natural number \(k \leq n\). Since \(s\) is
  the supremum of \(\beta\) we have
  \(\inl(k) = \beta_n \below s = \inr(\star)\), but \(\inr(\star)\) is the least
  element of \(\Nat_\bot\), so by antisymmetry \(\inl(k) = \inr(\star)\), which is
  impossible.
\end{proof}

Our solution to the above will be to work with the lifting monad, sometimes
known as the partial map classifier monad from topos
theory~\cite{Johnstone1977,Rosolini1986,Rosolini1987,Kock1991}, which has been
extended to constructive type theory by
\citeauthor{ReusStreicher1999}~\cite{ReusStreicher1999} and recently to
univalent foundations by
\citeauthor{EscardoKnapp2017}~\cite{EscardoKnapp2017,Knapp2018}; see the
\nameref{sec:basic-domain-theory-notes} of this chapter for a discussion of
additional related work.%
\index{partial!map classifier}%

\begin{definition}[Lifting, partial element, \(\lifting_{\V}(X)\); %
  {\cite[Section~2.2]{EscardoKnapp2017}}]%
  \index{partial!element|see {lifting}}%
  \index{lifting|textbf}%
  We define the type of \emph{partial elements} of a type \(X : \U\) with
  respect to a universe \(\V\) as
  \[
    \lifting_{\V}(X) \colonequiv \Sigma_{P : \Omega_{\V}}(P \to X)
  \]
  \nomenclature[L]{$\lifting_{\V}(X)$}{lifting of a type \(X\) with respect to a
    universe \(\V\)}
  and we also call it the \emph{lifting} of \(X\) with respect to \(\V\).
\end{definition}

Every (total) element of \(X\) gives rise to a partial element of \(X\) through
the following map, which will be shown to be the unit of the monad later.

\begin{definition}[\(\eta_X\)]
  The map \(\eta_X : X \to \lifting_{\V}(X)\) is defined by mapping \(x\) to the
  tuple \(\pa*{\One_{\V},\lambdadot{u}{x}}\), where, following
  \cref{subtype-omit}, we have omitted the witness that \(\One_{\V}\) is a
  subsingleton.
  \nomenclature[eta]{$\eta$}{unit of the lifting monad}
  We sometimes omit the subscript in \(\eta_X\).
\end{definition}

Besides these total elements, the lifting has another distinguished element that
will be the least with respect to the order with which we shall equip the
lifting.

\begin{definition}[\(\bot\)]\label{def:lifting-bot}
  For every type \(X : \U\) and universe \(\V\), we denote the element
  \(\pa*{\Zero_{\V},\varphi} : \lifting_{\V}(X)\) by \(\bot\). (Here \(\varphi\)
  is the unique map from \(\Zero_{\V}\) to \(X\).)
\end{definition}

Next we introduce appropriate names for the projections from the type of partial
elements.
\begin{definition}[\(\isdefined\) and \(\liftvalue\)]
  We write \(\isdefined : \lifting_{\V}(X) \to \Omega_{\V}\) for the first
  projection and
  \(\liftvalue : \Pi_{l : \lifting_{\V}(X)}\pa*{\isdefined(l) \to X}\) for the
  second projection.
  \nomenclature[is-defined]{$\isdefined$}{first projection from the lifting of a type}
  \nomenclature[value]{$\liftvalue$}{second projection from the lifting of a type}
\end{definition}

Thus, with this terminology, the element \(\star\) witnesses that \(\eta(x)\) is
defined with value \(x\) for every \(x : X\), while \(\bot\) is not defined
because \(\isdefined(\bot)\) is the empty type.

Excluded middle says exactly that such elements are the only elements of the
lifting of a type \(X\), as the following proposition shows.
Thus, the lifting generalises the classical construction of adding a new
element.

\begin{proposition}[{\cite[Section~2.2]{EscardoKnapp2017}}]%
  \label{lifting-is-plus-one-iff-em}
  The map \(X + \One \xrightarrow{[\eta,\const_{\bot}]} \lifting_{\V}(X)\) is an
  equivalence for every type \(X : \U\) if and only if excluded middle in \(\V\)
  holds.
\end{proposition}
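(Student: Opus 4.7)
The plan is to handle the two directions separately, since only one is truly substantive.

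For the harder ($\Leftarrow$) direction, I would assume excluded middle in $\V$ and, given any $X : \U$, construct an explicit inverse $g : \lifting_{\V}(X) \to X + \One$ to $[\eta,\const_{\bot}]$ by case analysis on $\isdefined(l)$: for $l = (P, f) : \lifting_{\V}(X)$, excluded middle yields either some $p : P$, in which case set $g(l) \colonequiv \inl(f(p))$, or $\lnot P$, in which case set $g(l) \colonequiv \inr(\star)$. Because $P$ is a subsingleton, $f(p)$ is independent of the choice of $p$, so the first clause is well-defined. Verifying one composite, $\inl(x) \mapsto \eta(x) = (\One_{\V}, \lambda u.\, x)$, whose $\isdefined$-component has the witness $\star$, so $g(\eta(x)) = \inl(x)$; and $\inr(\star) \mapsto \bot = (\Zero_{\V}, \varphi)$, whose $\isdefined$-component is empty, so $g(\bot) = \inr(\star)$.

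For the other composite, given $(P, f) : \lifting_{\V}(X)$: if $P$ holds at some $p$, then $g(P,f) = \inl(f(p))$, which maps to $\eta(f(p)) = (\One_{\V}, \lambda u.\, f(p))$; propositional extensionality gives $P = \One_{\V}$, and then \cref{Id-of-Sigma} together with function extensionality (using that $P$ is a subsingleton, so $f$ is equal to $\lambda u.\, f(p)$ pointwise once we transport) yields $(P, f) = \eta(f(p))$. The case $\lnot P$ is analogous: propositional extensionality gives $P = \Zero_{\V}$, and both second components are the unique map out of $\Zero_{\V}$, so $(P, f) = \bot$.

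For the ($\Rightarrow$) direction, I would specialise $X$ to $\One_{\U_0}$. Since $P \to \One_{\U_0}$ is contractible for every proposition $P$ (by \cref{Pi-is-prop}), the first projection $\isdefined : \lifting_{\V}(\One_{\U_0}) \to \Omega_{\V}$ is an equivalence. Under this equivalence, the map $[\eta,\const_{\bot}] : \One + \One \to \lifting_{\V}(\One_{\U_0})$ sends $\inl(\star) \mapsto \One_{\V}$ and $\inr(\star) \mapsto \Zero_{\V}$, which, after identifying $\One + \One$ with $\Two$ (reversing the labelling), is precisely the map of \cref{em-and-wem-equivalent-formulations}\ref{item-em-Bool}. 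That lemma then forces excluded middle in $\V$.

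The main obstacle I anticipate is the identification $(P, f) = \eta(f(p))$ in the ($\Leftarrow$) direction: the second component of a partial element has a type depending on its first component, so after transporting $f$ along the propositional-extensionality identification $P = \One_{\V}$ one must explicitly invoke \cref{Id-of-Sigma} and function extensionality, rather than just ``rewriting''.
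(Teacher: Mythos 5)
Your proof is correct. The $(\Rightarrow)$ direction is essentially identical to the paper's: specialise to the one-point type, observe $\lifting_{\V}(\One)\simeq\Omega_{\V}$ via $\isdefined$, and match the composite (up to swapping the two summands) against \cref{em-and-wem-equivalent-formulations}\ref{item-em-Bool}. (Your appeal to \cref{Pi-is-prop} gives that $P\to\One_{\U_0}$ is a proposition; contractibility additionally needs the evident inhabitant $\lambda p.\star$, but that is trivial.)

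The $(\Leftarrow)$ direction takes a genuinely different route. The paper factors the whole equivalence through the criterion $\Two_{\V}\simeq\Omega_{\V}$: it quotes \cref{em-and-wem-equivalent-formulations}\ref{item-em-Bool} to get that equivalence from excluded middle, and then observes (rather tersely, and leaving the sum-shuffling to the reader) that $\lifting_{\V}(X)=\Sigma_{P:\Omega_{\V}}(P\to X)\simeq\Sigma_{b:\Two_{\V}}(\cdots)\simeq X+\One$ and that the composite is $[\eta,\const_{\bot}]$. You instead construct the inverse directly: use the excluded-middle decision on $\isdefined(l)$ to pick a branch, and verify both round trips by hand. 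Your version puts the transport bookkeeping explicitly on the table (and you correctly identify \cref{Id-of-Sigma} plus function extensionality as what is needed; the paper's \cref{lifting-equality} is a packaged form of exactly this, and would let you skip the explicit transport). The paper's version is conceptually shorter because it reuses the $\Two_{\V}\simeq\Omega_{\V}$ equivalence for both directions, but it hides the same amount of work behind ``it follows that''. The one small thing you elide is why, in computing $g(\eta(x))$ and $g(\bot)$, the excluded-middle witness for $\One_{\V}$ (resp.\ $\Zero_{\V}$) is forced into the left (resp.\ right) case: one must discharge the impossible branch using that $\One_{\V}$ is inhabited (resp.\ $\Zero_{\V}$ is empty). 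This is standard but should be said if writing this out fully.
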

\begin{proof}
  By the proof of \cref{em-and-wem-equivalent-formulations}\ref{item-em-Bool},
  excluded middle in \(\V\) is equivalent to the map
  \([\const_{\Zero},\const_{\One}] : \Two_{\V} \to \Omega_{\V}\) being an
  equivalence.
  But if that map is an equivalence, then it follows that the map
  \([\eta,\const_{\bot}]: X + \One \to \lifting_{\V}(X)\) is also an equivalence
  for every type \(X\).
  Conversely, we can take \(X \colonequiv \One_{\V}\) to see that
  \([\const_{\Zero},\const_{\One}] : \Two_{\V} \to \Omega_{\V}\) must be an
  equivalence.
\end{proof}

\begin{lemma}\label{lifting-equality}
  Two partial elements \(l,m : \lifting_{\V}(X)\) of a type \(X\) are equal if
  and only if we have \(\isdefined(l) \iff \isdefined(m)\) and the diagram
  \[
    \begin{tikzcd}
      \isdefined(m) \ar[dr] \ar[rr,"\liftvalue(m)"] & & X \\
      & \isdefined(l) \ar[ur,"\liftvalue(l)"']
    \end{tikzcd}
  \]
  commutes.
\end{lemma}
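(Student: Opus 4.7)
The plan is to unfold the definition $\lifting_{\V}(X) \equiv \Sigma_{P : \Omega_{\V}}(P \to X)$ and apply the standard characterisation of equality in $\Sigma$-types (\cref{Id-of-Sigma}). Writing $l = (P, f)$ and $m = (Q, g)$ with $P, Q : \Omega_{\V}$ and $f : P \to X$, $g : Q \to X$, we have that $l = m$ is equivalent to $\Sigma_{p : P = Q}\pa*{\transport^{\lambda A.\,A \to X}(p, f) = g}$.

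The forward direction is essentially trivial: by path induction, if $l = m$ then we can take the identity as the logical equivalence $\isdefined(l) \iff \isdefined(m)$ and the commuting triangle reduces to reflexivity. So the substance lies in the backward direction, where I would proceed as follows. First, given a logical equivalence $\isdefined(l) \iff \isdefined(m)$, use propositional extensionality (applicable because $P$ and $Q$ are subsingletons) to obtain an identification $p : P = Q$. Second, I need to show that $\transport^{\lambda A.\, A \to X}(p, f) = g$. By function extensionality, it suffices to check this pointwise: for every $q : Q$, that $\transport^{\lambda A.\, A \to X}(p, f)(q) = g(q)$. A short computation (path induction on $p$, or unfolding transport along a propositional extensionality path) shows the left-hand side equals $f(\phi(q))$, where $\phi : Q \to P$ is the backward direction of the logical equivalence. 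Thus the desired equation is precisely the commutativity hypothesis of the stated diagram.

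The main obstacle, which is really only bookkeeping rather than a genuine difficulty, is making sure that the transport along an identification in $\Omega_{\V}$ arising from propositional extensionality matches up cleanly with the commuting triangle. Here it helps that $P$ and $Q$ are subsingletons, so any two functions $\phi, \phi' : Q \to P$ are equal by \cref{Pi-is-prop}, which means we do not need to worry about the specific choice of backward map in the logical equivalence: the condition that the triangle commutes is a well-defined proposition independent of this choice. Combined with the fact that $X$ being a general type poses no difficulty (we only ever evaluate $f$ and $g$ on propositional arguments), the argument goes through using only function extensionality and propositional extensionality, with no further assumptions.
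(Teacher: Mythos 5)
Your proposal is correct and takes essentially the same route as the paper: both unfold the $\Sigma$-type, apply \cref{Id-of-Sigma}, compute the transport by path induction (your pointwise observation that the transport evaluated at $q : Q$ gives $f(\phi(q))$ is exactly the paper's statement that the transported function equals $\liftvalue(l) \circ \tilde{e}^{-1}$), and then pass between paths in $\Omega_{\V}$ and logical equivalences via propositional extensionality and function extensionality. The only cosmetic difference is that you split the argument into forward and backward directions, while the paper organizes the whole thing as a single chain of logical equivalences ending at the stated characterisation.
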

\begin{proof}
  By \cref{Id-of-Sigma} we have
  \[
    (l = m) \simeq \pa*{\Sigma_{e : \isdefined(l) = \isdefined(m)}\,
      \transport^{\lambdadot{P}{P \to X}}(e,\liftvalue(l)) = \liftvalue(m)}
  \]
  By path induction on \(e\) we can prove that
  \[
    \transport^{\lambdadot{P}{P \to X}}(e,\liftvalue(l)) = \liftvalue(l) \circ \tilde{e}^{-1},
  \]
  where \(\tilde{e}\) is the equivalence \(\isdefined(l) \simeq \isdefined(m)\)
  induced by \(e\).
  Hence, using function extensionality and propositional extensionality, the
  right hand side of the equivalence given above is logically equivalent to
  \[
    \Sigma_{(e_1,e_2) :
      \isdefined(l)\leftrightarrow\isdefined(m)}\,\liftvalue(l) \circ e_2 \sim
    \liftvalue(m),
  \]
  as desired.
\end{proof}

\begin{remark}
  It is possible to promote the logical equivalence of \cref{lifting-equality}
  to an equivalence of types using univalence and a generalised \emph{structure
    identity principle}~\cite[Section~3.33]{Escardo2019}, as done
  in~\cite[Lemma~44]{Escardo2021} and~\cite[\mkTTurl{Lifting.IdentityViaSIP}]{TypeTopology}.
  But the above logical equivalence will suffice.
\end{remark}

\begin{theorem}[Lifting monad, Kleisli extension, \(f^\#\);
  {\cite[Section~2.2]{EscardoKnapp2017}}]
  \label{lifting-is-monad}%
  \index{lifting!monad|textbf}%
  \index{lifting!Kleisli extension|textbf}%
  \index{Kleisli extension|see {lifting}}%
  The lifting is a monad with unit \(\eta\). That~is, for every map
  \(f : X \to \lifting_{\V}(Y)\) we have a map
  \(f^\# : \lifting_{\V}(X) \to \lifting_{\V}(Y)\), the \emph{Kleisli extension}
  of \(f\), such that
  \nomenclature[f#]{$f^\#$}{Kleisli extension of a map
    \(f : X \to \lifting_{\V}(Y)\) with respect to the lifting monad}
  \begin{enumerate}[(i)]
  \item\label{eta-ext} \(\eta_X^\# \sim \id_{\lifting_{\V}(X)}\) for every type
    \(X\),
  \item\label{ext-eta} \(f^\# \circ \eta_X \sim f\) for every map
    \(f : X \to \lifting_{\V}(Y)\), and
  \item\label{comp-ext} \((g^\# \circ f)^\# \sim g^\# \circ f^\#\) for every two
    maps \(f : X \to \lifting_{\V}(Y)\) and \(g : Y \to \lifting_{\V}(Z)\).
  \end{enumerate}
\end{theorem}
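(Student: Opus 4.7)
The plan is to define the Kleisli extension explicitly and then verify the three monad laws using the characterisation of equality in the lifting from \cref{lifting-equality}. Given a map \(f : X \to \lifting_{\V}(Y)\) and a partial element \(l : \lifting_{\V}(X)\), we set
\[
  f^\#(l) \colonequiv \pa*{\Sigma_{p : \isdefined(l)}\isdefined\pa*{f(\liftvalue(l)(p))},\ (p,q) \mapsto \liftvalue\pa*{f(\liftvalue(l)(p))}(q)}.
\]
The first step is to check this is well-defined, i.e.\ that the proposed defined-predicate is indeed an element of \(\Omega_{\V}\): this is a \(\Sigma\)-type over the proposition \(\isdefined(l) : \V\) whose fibres are propositions in \(\V\), and hence is itself a proposition in \(\V\) by \cref{Sigma-is-prop}, with closure under \(\V\) ensured by our universe assumptions.

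Next I would verify each law in turn. For \ref{eta-ext}, note that \(\eta_X^\#(l)\) has definedness \(\Sigma_{p : \isdefined(l)}\One_{\V}\) which is logically equivalent to \(\isdefined(l)\) via \(\fst\) and \(p \mapsto (p,\star)\); the value maps agree on the nose via this equivalence. For \ref{ext-eta}, \(f^\#(\eta_X(x))\) has definedness \(\Sigma_{u : \One_{\V}}\isdefined(f(x))\), logically equivalent to \(\isdefined(f(x))\), with matching value map. Both are then verified by invoking \cref{lifting-equality}. For \ref{comp-ext}, the left-hand side has definedness
\[
  \Sigma_{p : \isdefined(l)}\Sigma_{q : \isdefined(f(\liftvalue(l)(p)))}\isdefined\pa*{g\pa*{\liftvalue\pa*{f(\liftvalue(l)(p))}(q)}}
\]
while the right-hand side has
\[
  \Sigma_{r : \Sigma_{p : \isdefined(l)}\isdefined(f(\liftvalue(l)(p)))}\isdefined\pa*{g\pa*{\liftvalue\pa*{f(\liftvalue(l)(\fst r))}(\snd r)}},
\]
which agree by the obvious reassociation of \(\Sigma\); the value maps coincide once this reassociation is applied, so \cref{lifting-equality} applies once more.

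The argument is largely a bookkeeping exercise and I do not expect a conceptual obstacle. The main subtlety will be carefully applying \cref{lifting-equality}, which demands exhibiting a logical equivalence between the definedness propositions together with a commuting triangle for the value maps; the latter amounts to a routine check but must be done per law. It is worth noting that we did not need to invoke propositional truncation, univalence, or a full structure identity principle, and that all the constructions remain within the universe \(\V\) thanks to closure of \(\V\) under \(\Sigma\) and under subsingleton-valued quantification, which is exactly why the lifting yields a monad at each fixed universe level \(\V\).
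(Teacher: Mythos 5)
Your proposal is correct and follows essentially the same route as the paper: the same explicit formula for \(f^{\#}\), well-definedness via closure of propositions under \(\Sigma\), and all three laws reduced to the unit/associativity isomorphisms for \(\Sigma\) combined with \cref{lifting-equality}. The paper's proof is the same bookkeeping argument, stated in the pair notation \((P,\varphi)\) rather than your \((\isdefined(l),\liftvalue(l))\).
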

\begin{proof}
  Given \(f : X \to \lifting_{\V}(Y)\), we define
  \begin{align*}
    f^\# : \lifting_{\V}(X) &\to \lifting_{\V}(Y)\\
    (P,\varphi) &\mapsto \pa*{\Sigma_{p : P}\isdefined(f(\varphi(p))),\psi},
  \end{align*}
  where \(\psi(p,q) \colonequiv \liftvalue\pa*{f(\varphi(p)),q}\).

  Now for the proof of \ref{eta-ext}: Let \((P,\varphi) : \lifting_{\V}(X)\) be
  arbitrary and we calculate that
  \begin{align*}
    \eta^\#(P,\varphi) &\equiv
    \pa*{\Sigma_{p : P}\isdefined(\eta(\varphi(p))),
      \lambdadot{(p,q)}{\liftvalue(\eta(\varphi(p)),q)}} \\
    &\equiv
    \pa*{P \times \One,
      \lambdadot{(p,q)}{\varphi(p)}} \\
    &= (P,\varphi),
  \end{align*}
  where the final equality is seen to hold using \cref{lifting-equality}.  For
  \ref{ext-eta}, let \(x : X\) and \(f : X \to \lifting_{\V}(Y)\) be arbitrary
  and observe that
  \begin{align*}
    f^\#(\eta(x))
    &\equiv f^\#(\One,\lambdadot{u}{x}) \\
    &\equiv \pa*{\One \times \isdefined(f(x)),
      \lambdadot{(u,p)}{\liftvalue(f(x),p)}} \\
    &= \pa*{\isdefined(f(x)),
      \lambdadot{p}{\liftvalue(f(x),p)}} \\
    &\equiv f(x)
  \end{align*}
  where the penultimate equality is another easy application of
  \cref{lifting-equality}.
  We see that these proofs amount to the fact that \(\One\) is the unit for
  taking the product of types.
  For \ref{comp-ext} the proof amounts to the associativity of \(\Sigma\).
\end{proof}

\begin{remark}
  It should be noted that if \(X : \U\), then
  \(\lifting_{\V}(X) : \V^+ \sqcup \U\), so in general the lifting is a monad
  ``across universes''.
  But this increase in universes does not hinder us in stating and proving the
  monad laws and using them in later proofs.
\end{remark}
\begin{remark}
  The equalities of \cref{lifting-is-monad} do not include any coherence
  conditions which may be needed when \(X\) is not a set but a higher type. We
  will restrict to the lifting of sets, but the more general case is considered
  in~\cite{Escardo2021} where the coherence conditions are not needed for its
  goals either.
\end{remark}

\begin{definition}[Lifting functor, \(\lifting_{\V}(f)\)]\label{def:lifting-functor}%
  \index{lifting!functor}%
  The functorial action of the lifting could be defined from the unit and
  Kleisli extension as \(\lifting_{\V}(f) \colonequiv \pa*{\eta_Y \circ f}^\#\)
  for \(f : X \to Y\).
  \nomenclature[L']{$\lifting_{\V}(f)$}{functorial action of the lifting on a map \(f : X \to Y\)}
  But it is equivalent and easier to define \(\lifting_{\V}(f)\) directly by
  post-composition:
  \[
    \lifting_{\V}(f)(P,\varphi) \colonequiv \pa{P,f\circ\varphi}.
    \qedhere
  \]
\end{definition}

We now work towards showing that \(\lifting_{\V}(X)\) is the free pointed
\(\V\)-dcpo on a set \(X\).

\begin{proposition}\label{lifting-order}\index{lifting!partial order}%
  The relation
  \({\below} : \lifting_{\V}(X) \to \lifting_{\V}(X) \to \V^+\sqcup \U\)
  given by
  \[
    l \below m \colonequiv {\isdefined(l) \to l = m}
  \]
  is a partial order on \(\lifting_{\V}(X)\) for every set \(X : \U\).
  Moreover, it is equivalent to the more verbose relation
  \[
    (P,\varphi) \below' (Q,\psi) \colonequiv \Sigma_{f : P \to Q}%
    \pa*{\varphi \sim \psi \circ f}
  \]
  that is valued in \(\V \sqcup \U\).
\end{proposition}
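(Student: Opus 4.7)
The plan is to proceed via \cref{lifting-equality}, which characterises identifications in $\lifting_\V(X)$. I would first establish the equivalence with the verbose relation $\below'$, and then derive the partial order axioms from whichever formulation is more convenient.

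To begin, I would verify that $\below$ is proposition-valued: since $X$ is a set and $\Omega_\V$ is a set by \cref{Omega-is-set}, closure of sets under $\Sigma$ and $\Pi$ (\cref{Sigma-is-prop,Pi-is-prop}) makes $\lifting_\V(X) \equiv \Sigma_{P : \Omega_\V}(P \to X)$ a set, so $l = m$ is a proposition and hence so is $\isdefined(l) \to (l = m)$.

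For the equivalence with $\below'$, write $l = (P,\varphi)$ and $m = (Q,\psi)$. Going from $\below$ to $\below'$: given $h : \isdefined(l) \to (l = m)$ and $p : P$, \cref{lifting-equality} extracts from $h(p)$ an equivalence $P \simeq Q$ together with triangle data; I would define $f(p)$ to be the image of $p$ under this equivalence and derive $\varphi \sim \psi \circ f$ from the triangle, using propositionality of $Q$ to reconcile any directional mismatch. Conversely, given $(f,\alpha) : l \below' m$ and a witness $p : P$, the maps $f : P \to Q$ and $\lambda q.\,p : Q \to P$ together with $\alpha$ satisfy the hypotheses of \cref{lifting-equality} and therefore produce an identification $l = m$.

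Reflexivity of $\below'$ is witnessed by $(\id_P, \lambda p.\,\refl)$, and transitivity by composing the underlying maps and chaining the triangles. The main obstacle is antisymmetry, which is cleanest in the verbose form: given $(f,\alpha) : (P,\varphi) \below' (Q,\psi)$ and $(g,\beta) : (Q,\psi) \below' (P,\varphi)$, the pair $(f,g)$ exhibits $P \iff Q$, so $P = Q$ by propositional extensionality, and combining this with the triangle $\alpha$ via \cref{lifting-equality} yields $(P,\varphi) = (Q,\psi)$. Transferring along the equivalence, all three axioms hold for $\below$ as well. Finally, the universe levels $\V^+ \sqcup \U$ and $\V \sqcup \U$ are immediate by inspection of the two definitions, so the equivalence genuinely shrinks the universe.
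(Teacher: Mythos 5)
Your proof is correct and follows the same route the paper indicates in its two-sentence proof: derive proposition-valuedness from $X$ being a set (via sethood of $\lifting_\V(X)$), and use \cref{lifting-equality} for the partial order axioms and the comparison with $\below'$. One cosmetic remark: in the passage from $\below$ to $\below'$ where you "reconcile the directional mismatch", what you actually use is propositionality of $P$ (to get $e_2 \circ e_1 \sim \id_P$ and hence $\varphi(e_2(e_1(p))) = \varphi(p)$), not of $Q$; this does not affect the validity of the argument.
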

\begin{proof}
  Note that \({\below}\) is subsingleton-valued because \(X\) is assumed to be a
  set. The other properties follow using \cref{lifting-equality}.
\end{proof}

In light of~\cref{universe-levels-of-lifting-and-exponentials}, we carefully
keep track of the universe parameters of the lifting in the following
proposition.

\begin{proposition}[cf.~{\cite[Theorem~1]{EscardoKnapp2017}}]%
  \label{lifting-is-pointed-dcpo}%
  \index{lifting!as a pointed dcpo}%
  For a set \(X : \U\), the lifting \(\lifting_{\V}(X)\) ordered as in
  \cref{lifting-order} is a pointed \(\V\)-dcpo.
  In general, \(\lifting_{\V}(X) : \DCPO{V}{V^+ \sqcup U}{V^+ \sqcup U}\), but
  if \(X : \V\), then \(\lifting_{\V}(X)\) is locally small.
\end{proposition}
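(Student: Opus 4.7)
The plan is to verify that \(\bot\) is the least element and then construct suprema of directed families. That \(\bot\) is least follows immediately from the definition \(l \below m \equiv \isdefined(l) \to l = m\), because \(\isdefined(\bot) \equiv \Zero_{\V}\) makes the implication vacuous for every \(m\). So the main work is in constructing directed suprema.

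Given a directed family \(\alpha : I \to \lifting_{\V}(X)\) with \(I : \V\), I expect the supremum \(s\) to be defined precisely when at least one \(\alpha_i\) is, so I would set \(\isdefined(s) \colonequiv \exists_{i : I}\isdefined(\alpha_i)\), which lives in \(\V\) since \(I : \V\) and each \(\isdefined(\alpha_i) : \V\). Defining \(\liftvalue(s)\) is the crux of the construction: the natural attempt is to send a pair \((i,p)\) with \(p : \isdefined(\alpha_i)\) to \(\liftvalue(\alpha_i)(p) : X\) and then factor this map through the propositional truncation via \cref{constant-map-to-set-factors-through-truncation}, which is applicable because \(X\) is assumed to be a set.

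The main obstacle—and the heart of the proof—is showing that this value map is constant, so that \cref{constant-map-to-set-factors-through-truncation} genuinely applies. This is where directedness is used essentially. Given \((i,p)\) and \((j,q)\), directedness yields some (unspecified) \(k : I\) with \({\alpha_i,\alpha_j}\below\alpha_k\); since both \(\alpha_i\) and \(\alpha_j\) are defined, the order relation unfolds to give \(\alpha_i = \alpha_k\) and \(\alpha_j = \alpha_k\), hence \(\alpha_i = \alpha_j\). By \cref{lifting-equality} the values at \(p\) and \(q\) must then agree. Because constancy is a proposition, the existential witness from directedness can be eliminated to carry out this argument. With \(s\) in hand, showing it is an upper bound amounts to verifying the commutative triangle of \cref{lifting-equality}, using \(\tosquash{(i,p)} : \isdefined(s)\) as the witness. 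For leastness, given another upper bound \(m\), one must show \(\isdefined(s) \to s = m\); but \(s = m\) is a proposition (since \(\lifting_{\V}(X)\) is a poset and hence a set by \cref{posets-are-sets}), so we may untruncate the hypothesis to obtain some \(i\) with \(\isdefined(\alpha_i)\), whereupon \(\alpha_i = m\) follows from \(m\) being an upper bound, and composition with \(\alpha_i = s\) finishes.

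For the universe computations: since \(\lifting_{\V}(X) \equiv \Sigma_{P : \Omega_{\V}}(P \to X)\) with \(\Omega_{\V} : \V^+\) and \(X : \U\), we have \(\lifting_{\V}(X) : \V^+ \sqcup \U\), and the order \(l \below m\) involves the identity type on \(\lifting_{\V}(X)\), so it also lies in \(\V^+ \sqcup \U\), yielding \(\lifting_{\V}(X) : \DCPO{V}{\V^+ \sqcup \U}{\V^+ \sqcup \U}\). For local smallness when \(X : \V\), I will invoke the equivalent formulation \({\below'}\) from \cref{lifting-order}: with \(P, Q : \V\) and values in \(X : \V\), the relation \(\Sigma_{f : P \to Q}(\varphi \sim \psi \circ f)\) lives in \(\V\), providing the required \(\V\)-small copy of \({\below}\) via \cref{local-smallness-alt}.
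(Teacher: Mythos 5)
Your proof is correct and takes essentially the same route as the paper's: defining the supremum's domain as $\exists_{i:I}\isdefined(\alpha_i)$, proving the value map constant via directedness and factoring it through the truncation using \cref{constant-map-to-set-factors-through-truncation}, then checking the least-upper-bound property by untruncation, and finally reading off the universe parameters and the locally small case from \cref{lifting-order}. The only superficial difference is that you invoke \cref{lifting-equality} explicitly to pass from $\alpha_i = \alpha_j$ to agreement of values, whereas the paper treats this step as immediate from the equality of pairs.
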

\begin{proof}
  By \cref{lifting-order} we have a poset and it is clear that \(\bot\) from
  \cref{def:lifting-bot} is its least element.
  Now let \(\pa*{Q_{(-)},\varphi_{(-)}} : I \to \lifting_{\V}(X)\) be a directed
  family with \(I : \V\). We claim that the map
  \({\Sigma_{i : I}Q_i} \xrightarrow{(i,q) \mapsto \varphi_i(q)} X\) is
  constant.
  Indeed, given \(i,j : I\) with \(p : Q_i\) and \(q : Q_j\), there exists
  \(k : I\) such that
  \(\pa*{Q_i,\varphi_i},\pa*{Q_j,\varphi_j} \below \pa*{Q_k,\varphi_k}\) by
  directedness of the family.
  But by definition of the order and the elements \(p : Q_i\) and \(q : Q_j\),
  this implies that
  \(\pa*{Q_i,\varphi_i} = \pa*{Q_j,\varphi_j} = \pa*{Q_k,\varphi_k}\) which in
  particular tells us that \(\varphi_i(p) = \varphi_j(q)\).
  Hence, by \cref{constant-map-to-set-factors-through-truncation}, we have a
  (dashed) map \(\psi\) making the diagram%
  \index{propositional truncation}%
  \[
    \begin{tikzcd}
      \Sigma_{i : I}Q_i \ar[dr,"{\tosquash{-}}"']
      \ar[rr,"{(i,q) \mapsto \varphi_i(q)}"] & & X \\
      & \exists_{i : I}Q_i \ar[ur,dashed,"\psi"']
    \end{tikzcd}
  \]
  commute.
  We claim that \(\pa*{\exists_{i : I}Q_i,\psi}\) is the least upper bound of
  the family.
  To see that it is an upper bound, let \(j : I\) be arbitrary. By the
  commutative diagram and \cref{lifting-order} we see that
  \(\pa*{Q_j,\varphi_j} \below \pa*{\exists_{i : I}Q_i,\psi}\), as desired.
  Moreover, if \((P,\rho)\) is an upper bound for the family, then
  \(\pa*{Q_i,\varphi_i} = (P,\rho)\) for all \(i : I\) such that \(Q_i\) holds.
  Hence, \(\pa*{\exists_{i : I}Q_i,\psi} \below (P,\rho)\), as desired.
  Finally, local smallness in the case that \(X\) is a type in \(\V\) follows
  from \cref{lifting-order}.
\end{proof}

\begin{proposition}\label{lifting-extension-is-continuous}
  \index{lifting!Kleisli extension}%
  The Kleisli extension \(f^\# : \lifting_{\V}(X) \to \lifting_{\V}(Y)\) is
  Scott continuous for any map \(f : X \to \lifting_{\V}(Y)\).
\end{proposition}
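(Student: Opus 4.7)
The plan is to invoke \cref{continuity-criterion}, which reduces Scott continuity to monotonicity together with the single inequality $f^\#\pa*{\bigsqcup \alpha} \below \bigsqcup f^\# \circ \alpha$ for every directed family $\alpha$. Throughout, I will exploit the convenient reformulation of the order given in~\cref{lifting-order}, namely $l \below m \iff (\isdefined(l) \to l = m)$, which makes everything about $\below$ reduce to a single unfolding step.

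First I would verify that $f^\#$ is monotone. Suppose $l \below m$ in $\lifting_{\V}(X)$ and assume $f^\#(l)$ is defined; unfolding the definition of $f^\#$ from~\cref{lifting-is-monad}, this yields in particular an element of $\isdefined(l)$, so $l = m$ by the assumption $l \below m$, and hence $f^\#(l) = f^\#(m)$ by functoriality of the Kleisli extension (or simply by transport along the equality). This is exactly what is needed for $f^\#(l) \below f^\#(m)$.

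Next I would prove the inequality. Let $(Q_{(-)},\varphi_{(-)}) : I \to \lifting_{\V}(X)$ be a directed family; recall from the proof of~\cref{lifting-is-pointed-dcpo} that its supremum is $\pa*{\exists_{i : I}Q_i,\psi}$ where $\psi$ is the factorising map characterised by $\psi\pa*{\tosquash{(i,p)}} = \varphi_i(p)$. To show $f^\#\pa*{\bigsqcup \alpha} \below \bigsqcup f^\# \circ \alpha$, assume $f^\#\pa*{\bigsqcup \alpha}$ is defined, i.e.\ we have $q : \exists_{i : I}Q_i$ and $r : \isdefined(f(\psi(q)))$; we must establish the propositional goal $f^\#\pa*{\bigsqcup \alpha} = \bigsqcup f^\# \circ \alpha$. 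Since equality in a set is a proposition, we may apply the universal property of the propositional truncation to $q$ and assume we have a specified $(i,p) : \Sigma_{i : I}Q_i$. Then $\psi(q) = \varphi_i(p)$, so $r$ also witnesses $\isdefined(f(\varphi_i(p)))$, which exhibits the pair $(p,r)$ as an element of $\isdefined\pa*{f^\#(Q_i,\varphi_i)}$. Thus $f^\#(\alpha_i)$ is defined. By monotonicity, $f^\#(\alpha_i) \below f^\#\pa*{\bigsqcup \alpha}$, and of course $f^\#(\alpha_i) \below \bigsqcup f^\# \circ \alpha$ because the latter is an upper bound. Unfolding $\below$ using that $f^\#(\alpha_i)$ is defined, both inequalities become equalities, so
\[
  f^\#\pa*{\textstyle\bigsqcup \alpha} = f^\#(\alpha_i) = \textstyle\bigsqcup f^\# \circ \alpha,
\]
as required.

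The main obstacle, though relatively minor, is to cleanly relate $\psi\pa*{\tosquash{(i,p)}}$ to $\varphi_i(p)$ and to ensure that the propositional truncation elimination is justified by the fact that the target equality is proposition-valued; this is what allows us to convert the unspecified witness $q$ into specified data $(i,p)$ from which the rest of the argument proceeds by routine use of monotonicity and the order reformulation.
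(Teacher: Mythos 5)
Your proof is correct and takes essentially the same approach as the paper: reduce to $f^\#\pa*{\bigsqcup\alpha} \below \bigsqcup f^\# \circ \alpha$ via monotonicity, assume $f^\#\pa*{\bigsqcup\alpha}$ is defined, extract (by truncation elimination, justified because the goal is a proposition) an index $i$ for which $f^\#(\alpha_i)$ is defined, and then conclude by the order characterisation. The only minor variation is that the paper first shows $\alpha_i = \bigsqcup\alpha$ and applies $f^\#$ to that equality, whereas you unpack the definition of $\isdefined\pa*{f^\#(-)}$ directly to produce the witness $(p,r)$ that $f^\#(\alpha_i)$ is defined — both routes are equally valid and amount to the same argument.
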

\begin{proof}
  It is straightforward to prove that \(f^\#\) is monotone. Hence, it remains to
  prove that \(f^\#\pa*{\bigsqcup \alpha} \below \bigsqcup f^\# \circ \alpha\)
  for every directed family \(\alpha : I \to \lifting_{\V}(X)\).
  So suppose that \(f^\#\pa*{\bigsqcup \alpha}\) is defined. Then we have to
  show that it equals \(\bigsqcup f^\# \circ \alpha\).
  By our assumption and definition of \(f^\#\) we get that \(\bigsqcup \alpha\)
  is defined too. By the definition of suprema in the lifting and because we are
  proving a proposition, we may assume to have \(i : I\) such that \(\alpha_i\)
  is defined.
  But since \(\alpha_i \below \bigsqcup \alpha\), we get
  \(\alpha_i = \bigsqcup \alpha\) and hence,
  \(f^\#\pa*{\alpha_i} = f^\#\pa*{\bigsqcup \alpha}\).
  Finally, \(f^\#\pa*{\alpha_i} \below \bigsqcup f^\# \circ \alpha\), but by
  assumption \(f^\#\pa*{\bigsqcup \alpha}\) is defined and hence so is
  \(f^\#\pa*{\alpha_i}\) which implies
  \(f^\#\pa*{\bigsqcup \alpha} = f^\#\pa*{\alpha_i} = \bigsqcup f^\# \circ
  \alpha\), as desired.
\end{proof}

Recall from \cref{pointed-dcpos-sups} that pointed \(\V\)-dcpos have suprema of
families indexed by propositions in \(\V\). We make use of this fact in the
following lemma.

\begin{lemma}\label{lifting-element-as-sup}
  For a set \(X\), every partial element \((P,\varphi) : \lifting_{\V}(X)\) is
  equal to supremum \(\bigvee_{p : P}\eta_X\pa*{\varphi(p)}\).
\end{lemma}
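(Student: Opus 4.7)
The plan is to show $(P,\varphi)$ satisfies the defining universal property of the supremum, from which the equality follows by antisymmetry (uniqueness of least upper bounds).

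First I would note that the family $p \mapsto \eta_X(\varphi(p))$ is indexed by the proposition $P$, which is automatically semidirected (any two elements of $P$ are equal), so by \cref{pointed-dcpos-sups} the supremum $\bigvee_{p:P}\eta_X(\varphi(p))$ exists in the pointed $\V$-dcpo $\lifting_{\V}(X)$.

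Next I would verify that $(P,\varphi)$ is an upper bound of this family. Fix $p : P$; by \cref{lifting-order} we must show that whenever $\isdefined(\eta_X(\varphi(p)))$ (which is $\One_{\V}$, trivially inhabited), we have $\eta_X(\varphi(p)) = (P,\varphi)$. Using \cref{lifting-equality}, the equivalence $\One_{\V} \iff P$ is witnessed by the constant map and $\lambdadot{u}{p}$, and the required triangle commutes because both legs evaluate to $\varphi(p)$.

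Now I would show $(P,\varphi)$ is the \emph{least} upper bound. Suppose $(Q,\psi) : \lifting_{\V}(X)$ satisfies $\eta_X(\varphi(p)) \below (Q,\psi)$ for every $p : P$. To prove $(P,\varphi) \below (Q,\psi)$, by the definition of the order it suffices to show that $P$ implies $(P,\varphi) = (Q,\psi)$. So assume $p : P$: then, since $\eta_X(\varphi(p))$ is defined, the hypothesis gives $\eta_X(\varphi(p)) = (Q,\psi)$, while the previous paragraph gives $\eta_X(\varphi(p)) = (P,\varphi)$, so $(P,\varphi) = (Q,\psi)$ by transitivity of equality.

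No step here is hard; the only subtlety is keeping the use of $P$ being a proposition explicit (so that the index is semidirected, so that the supremum exists by \cref{pointed-dcpos-sups} rather than requiring directedness) and applying \cref{lifting-equality} correctly to establish the two equalities of partial elements. The conclusion $(P,\varphi) = \bigvee_{p:P}\eta_X(\varphi(p))$ then follows from the uniqueness of suprema established in \cref{def:supremum}.
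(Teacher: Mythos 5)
Your proof is correct and follows the same essential route as the paper's: the key observation that \((P,\varphi) = \eta_X(\varphi(p))\) whenever \(p : P\), combined with antisymmetry (uniqueness of least upper bounds). The paper states this in one line and leaves the routine verifications implicit, whereas you spell out the upper-bound and least-upper-bound checks via \cref{lifting-order} and \cref{lifting-equality}; in substance the arguments coincide.
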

\begin{proof}
  Note that if \(p : P\), then \((P,\varphi) = \eta_X(\varphi(p))\), so that the
  lemma follows from antisymmetry.
\end{proof}

The lifting \(\lifting_{\V }(X)\) gives the free pointed \(\V \)-dcpo on a set
\(X\). Keeping track of universes, it holds in the following generality:
\begin{theorem}\label{lifting-is-free}%
  \index{lifting!free pointed dcpo}%
  \index{dcpo!free pointed|see {lifting}}%
  \index{dcpo!of partial elements|see {lifting}}%
  If \(X : \U \) is a set, then for every pointed \(\V \)-dcpo
  \(D : \DCPO{V}{U'}{T'}\) and function \(f : X \to D\), there is a unique
  strict and continuous function \(\bar{f} : \lifting_{\V }(X) \to D\) making
  the diagram
  \[
    \begin{tikzcd}
      X \ar[dr, "\eta_X"'] \ar[rr, "f"] & & D  \\
      & \lifting_{\V }(X) \ar[ur, dashed, "\bar{f}"']
    \end{tikzcd}
  \]
  commute.
\end{theorem}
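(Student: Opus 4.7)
The plan is to define $\bar{f}$ using the representation of partial elements as suprema indexed by propositions, and then to use this same representation to establish uniqueness.

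First, I would define $\bar{f} : \lifting_{\V}(X) \to D$ by setting
\[
  \bar{f}(P,\varphi) \colonequiv \bigvee_{p : P} f(\varphi(p)).
\]
This is well-defined because $P$ is a proposition, hence the family $p \mapsto f(\varphi(p))$ is (trivially) semidirected and indexed by a type in $\V$, so by \cref{pointed-dcpos-sups} the supremum exists in the pointed $\V$-dcpo $D$. Strictness is immediate: $\bar{f}(\bot) = \bigvee_{p : \Zero_\V} f(\varphi(p)) = \bot_D$, since the empty semidirected supremum is the least element. Commutativity of the triangle is also straightforward: $\bar{f}(\eta_X(x)) = \bigvee_{u : \One_\V} f(x) = f(x)$, where the latter equality uses antisymmetry and the observation that $f(x)$ is both an upper bound and below itself.

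Next I would prove Scott continuity. Monotonicity follows quickly: if $(P,\varphi) \below (Q,\psi)$, then by \cref{lifting-order} we have a map $g : P \to Q$ with $\varphi \sim \psi \circ g$, so every $f(\varphi(p))$ equals $f(\psi(g(p)))$ and hence lies below $\bigvee_{q : Q} f(\psi(q))$. For full continuity, let $\alpha : I \to \lifting_{\V}(X)$ with $\alpha_i = (Q_i,\varphi_i)$ be directed. Writing the supremum as $(\exists_{i : I} Q_i,\psi)$ as in the proof of \cref{lifting-is-pointed-dcpo}, I would show
\[
  \bar{f}\bigl(\textstyle\bigsqcup \alpha\bigr)
  = \bigvee_{t : \exists_{i : I} Q_i} f(\psi(t))
  \below \bigsqcup_{i : I} \bar{f}(\alpha_i)
\]
by applying \cref{constant-map-to-set-factors-through-truncation}, or more directly by propositional induction on an inhabitant of $\exists_{i : I} Q_i$: from a witness $i : I$ with $q : Q_i$ we get $f(\psi(t)) = f(\varphi_i(q)) \below \bar{f}(\alpha_i) \below \bigsqcup \bar{f} \circ \alpha$. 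Combined with monotonicity and \cref{continuity-criterion}, this gives Scott continuity.

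Finally, for uniqueness, suppose $g : \lifting_{\V}(X) \to D$ is strict and Scott continuous with $g \circ \eta_X = f$. By \cref{lifting-element-as-sup}, every $(P,\varphi) : \lifting_{\V}(X)$ equals $\bigvee_{p : P} \eta_X(\varphi(p))$, a semidirected supremum. By \cref{pointed-dcpos-sups}, strict continuous maps preserve semidirected suprema, so
\[
  g(P,\varphi) = \bigvee_{p : P} g(\eta_X(\varphi(p))) = \bigvee_{p : P} f(\varphi(p)) = \bar{f}(P,\varphi),
\]
and function extensionality (recalling \cref{posets-are-sets}, so that equality of Scott continuous maps reduces to pointwise equality) yields $g = \bar{f}$. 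The main obstacle is the continuity verification: one must carefully handle the interplay between the propositionally truncated supremum $\exists_{i : I} Q_i$ that indexes $\bigsqcup \alpha$ in $\lifting_{\V}(X)$ and the corresponding directed supremum in $D$, which is where the constant-map-factoring lemma (\cref{constant-map-to-set-factors-through-truncation}) does the real work, just as in the construction of directed suprema in the lifting itself.
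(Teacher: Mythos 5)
Your proof is correct and follows essentially the same route as the paper: the same definition of \(\bar{f}\) as a semidirected supremum \(\bigvee_{p : P} f(\varphi(p))\), and the same uniqueness argument via \cref{lifting-element-as-sup} and the preservation of semidirected suprema by strict continuous maps (\cref{pointed-dcpos-sups}). The paper simply elides the strictness and Scott-continuity verifications as ``easily seen'', which you work out explicitly.
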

\begin{proof}
  We define \(\bar f : \lifting_{\V }(X) \to D\) by
  \((P,\varphi) \mapsto \bigvee_{p : P} f(\varphi(p))\), which is well-defined
  by \cref{pointed-dcpos-sups} and easily seen to be strict
  and continuous. For uniqueness, suppose that we have
  \({g : \lifting_{\V }(X) \to D}\) strict and continuous such that
  \(g \circ \eta_X = f\) and let \((P,\varphi)\) be an arbitrary element of
  \(\lifting_{\V }(X)\). Then,
  \begin{align*}
    g\pa*{P,\varphi}
    &= g\pa*{\textstyle\bigvee_{p : P}\eta_X\pa*{\varphi(p)}}
    &&\text{(by \cref{lifting-element-as-sup})} \\
    &= \textstyle\bigvee_{p : P}g\pa*{\eta_X\pa*{\varphi(p)}}
    &&\text{(by \cref{pointed-dcpos-sups} and strictness and continuity of \(g\))} \\
    &= \textstyle\bigvee_{p : P} f\pa*{\varphi(p)}
    &&\text{(by assumption on \(g\))} \\
    &\equiv \bar{f}(P,\varphi),
  \end{align*}
  as desired.
\end{proof}

The proof tells us that there is yet another way in which the lifting is a free
construction, namely as the free subsingleton-complete poset. What is noteworthy
about this is that freely adding subsingleton suprema automatically gives all
directed suprema.

\begin{definition}[Subsingleton completeness]
  A poset \(P\) is \emph{subsingleton complete} with respect to a type universe
  \(\V\) if it has suprema for all families indexed by a subsingleton in \(\V\).%
  \index{subsingleton!completeness}
\end{definition}

The lifting \(\lifting_{\V }(X)\) gives the free \(\V\)-subsingleton complete
poset on a set \(X\). Keeping track of universes, it holds in the following
generality:
\begin{theorem}\label{lifting-is-free2}%
  \index{lifting!free subsingleton complete poset}%
  \index{subsingleton!completeness!free|see {lifting}}%
  If \(X : \U \) is a set, then for every \(\V\)-subsingleton complete poset
  \(P\) (with carrier and order taking values in arbitrary, possibly distinct,
  universes) and function \(f : X \to P\), there exists a unique monotone
  \(\bar{f} : \lifting_{\V }(X) \to P\) preserving all suprema indexed by
  propositions in \(\V \) making the diagram
  \[
    \begin{tikzcd}
      X \ar[dr, "\eta_X"'] \ar[rr, "f"] & & P \\
      & \lifting_{\V }(X) \ar[ur,dashed,"\bar{f}"']
    \end{tikzcd}
  \]
  commute.
\end{theorem}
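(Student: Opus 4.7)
The plan is to imitate the proof of Theorem \ref{lifting-is-free} almost verbatim, noting that what is really used there is only that pointed \(\V\)-dcpos admit suprema of families indexed by propositions in \(\V\) (via \cref{pointed-dcpos-sups}). Since \(P\) is assumed to be \(\V\)-subsingleton complete, we may replay the same construction without ever invoking pointedness or directed suprema.

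Concretely, I would define \(\bar{f} : \lifting_{\V}(X) \to P\) by
\[
  \bar{f}(Q,\varphi) \colonequiv \bigvee\nolimits_{q : Q} f(\varphi(q)),
\]
which is well-defined because \(Q : \V\) is a proposition and \(P\) is \(\V\)-subsingleton complete. The triangle commutes since \(\bar{f}(\eta_X(x)) = \bigvee_{\star : \One_{\V}}f(x) = f(x)\), the supremum being over a contractible indexing type. Monotonicity follows easily from the alternative description of the lifting order in \cref{lifting-order}: given \((Q,\varphi) \below (Q',\varphi')\) with witness \(h : Q \to Q'\) and \(\varphi \sim \varphi' \circ h\), each \(f(\varphi(q)) = f(\varphi'(h(q)))\) is bounded above by \(\bigvee_{q' : Q'}f(\varphi'(q'))\), and hence so is the supremum.

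For preservation of suprema indexed by propositions \(I : \V\), given \(\alpha : I \to \lifting_{\V}(X)\) with \(\alpha_i \equiv (Q_i,\varphi_i)\), the computation in the proof of \cref{lifting-is-pointed-dcpo} still applies (noting that a subsingleton-indexed family is trivially semidirected, so the existing argument with \(\exists_{i : I} Q_i\) as the indexing proposition goes through; in fact, when \(I\) itself is a proposition, \(\exists_{i : I}Q_i\) is equivalent to \(\Sigma_{i : I}Q_i\)). One then verifies that \(\bar{f}\) sends the supremum of \(\alpha\) to the supremum of \(\bar{f} \circ \alpha\) in \(P\) by unfolding definitions and using associativity of indexed suprema over \(\Sigma\)-types, which reduces to the subsingleton completeness of \(P\).

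For uniqueness, I would use \cref{lifting-element-as-sup}, which states that every \((Q,\varphi)\) in the lifting equals \(\bigvee_{q : Q}\eta_X(\varphi(q))\); this is a supremum indexed by the proposition \(Q\). If \(g\) is any monotone map preserving suprema indexed by propositions in \(\V\) with \(g \circ \eta_X = f\), then
\[
  g(Q,\varphi) = g\bigl(\textstyle\bigvee_{q : Q}\eta_X(\varphi(q))\bigr)
              = \textstyle\bigvee_{q : Q} g(\eta_X(\varphi(q)))
              = \textstyle\bigvee_{q : Q} f(\varphi(q))
              \equiv \bar{f}(Q,\varphi).
\]
The main obstacle I anticipate is a bookkeeping one: verifying that the supremum of a subsingleton-indexed family of partial elements really matches the formula from \cref{lifting-is-pointed-dcpo} (which was stated for directed families) and then chasing the resulting \(\Sigma\)-of-proposition simplifications through \(\bar{f}\). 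None of this is deep, but the universe-level tracking and the interaction between the two flavours of suprema (subsingleton versus directed, since the ambient poset \(P\) is only assumed to have the former) need to be handled with care.
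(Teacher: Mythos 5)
Your proposal matches the paper's intended proof, which simply states that the argument is ``similar to the proof of \cref{lifting-is-free}.'' You correctly fill in the details by reusing the same construction of \(\bar{f}\), the same application of \cref{lifting-element-as-sup} for uniqueness, and by observing that the original argument only ever invokes subsingleton-indexed suprema, so pointedness and directed completeness can be dropped in favour of \(\V\)-subsingleton completeness.
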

\begin{proof}
  Similar to the proof of \cref{lifting-is-free}.
\end{proof}

Finally, we consider a variation of \cref{lifting-order} which allows us to
freely add a least element to a \(\V\)-dcpo instead of just a set.

\begin{proposition}\label{lifting-order-alt}\index{lifting!partial order}
  For a poset \(D\) whose order takes values in \(\T\), the binary relation
  \({\below} : \lifting_{\V}(D) \to \lifting_{\V}(D) \to \V \sqcup \T\) %
  given by
  \[
    (P,\varphi) \below (Q,\psi) \colonequiv \Sigma_{f : P \to Q}%
    \pa*{\Pi_{p : P}\pa*{\varphi(p) \below_D \psi(f(p))}}
  \]
  is a partial order on \(\lifting_{\V}(D)\).
\end{proposition}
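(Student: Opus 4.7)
The plan is to verify the four defining properties of a partial order: that the relation is subsingleton-valued, reflexive, transitive, and antisymmetric.

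First I would check that the relation is proposition-valued. Since $P$ and $Q$ are subsingletons and $Q$ is a subsingleton, the type $P \to Q$ is a subsingleton by \cref{Pi-is-prop}. The second component $\Pi_{p : P}\pa*{\varphi(p) \below_D \psi(f(p))}$ is a $\Pi$-type valued in subsingletons (because the order on $D$ is proposition-valued), hence a subsingleton. By \cref{Sigma-is-prop}, the whole $\Sigma$-type is a subsingleton.

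Next, reflexivity and transitivity are routine. For reflexivity of $(P,\varphi) \below (P,\varphi)$ I would take $f \colonequiv \id_P$ and use reflexivity of $\below_D$ to obtain $\varphi(p) \below_D \varphi(p)$ for each $p : P$. For transitivity, given witnesses $f : P \to Q$ and $g : Q \to R$ of $(P,\varphi) \below (Q,\psi)$ and $(Q,\psi) \below (R,\rho)$ respectively, the composite $g \circ f : P \to R$ is the required map, and transitivity of $\below_D$ yields $\varphi(p) \below_D \psi(f(p)) \below_D \rho(g(f(p)))$ for each $p : P$.

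The main work lies in antisymmetry, where the subtlety is that the two carriers $P$ and $Q$ are only logically equivalent, so I need to invoke \cref{lifting-equality}. Assume $(P,\varphi) \below (Q,\psi)$ via some $f : P \to Q$ and $(Q,\psi) \below (P,\varphi)$ via some $g : Q \to P$. Then $f$ and $g$ witness $P \iff Q$, so by \cref{lifting-equality} it suffices to show that the diagram
\[
  \begin{tikzcd}
    Q \ar[dr,"g"'] \ar[rr,"\psi"] & & D \\
    & P \ar[ur,"\varphi"']
  \end{tikzcd}
\]
commutes (the other triangle follows symmetrically). Fix $q : Q$. The hypotheses give $\psi(q) \below_D \varphi(g(q))$ and $\varphi(g(q)) \below_D \psi(f(g(q)))$, and because $Q$ is a subsingleton we have $f(g(q)) = q$, so $\psi(f(g(q))) = \psi(q)$. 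Antisymmetry of $\below_D$ then yields $\psi(q) = \varphi(g(q))$, establishing the required commutativity and completing the proof.

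The main obstacle I anticipate is precisely the antisymmetry step, and specifically the bookkeeping of transporting along the identification $P = Q$ obtained from propositional extensionality; invoking \cref{lifting-equality} sidesteps this by reducing equality of partial elements to a logical equivalence on the domain plus a commuting triangle on values, which is exactly what the two inequalities and antisymmetry of $\below_D$ provide.
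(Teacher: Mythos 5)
Your proposal is correct and follows essentially the same route the paper intends: the paper's proof is a one-liner ("similar to \cref{lifting-order}, but using that $\below_D$ is reflexive, transitive and antisymmetric"), and the proof of \cref{lifting-order} it refers to simply says the properties "follow using \cref{lifting-equality}". Your write-up fills in exactly those details — subsingleton-valuedness via \cref{Pi-is-prop} and \cref{Sigma-is-prop}, reflexivity and transitivity from the corresponding properties of $\below_D$, and antisymmetry by reducing equality in $\lifting_{\V}(D)$ to a commuting triangle via \cref{lifting-equality} and then applying antisymmetry of $\below_D$ together with the fact that $Q$ is a subsingleton (to get $f(g(q)) = q$). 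The only inessential remark is "the other triangle follows symmetrically": \cref{lifting-equality} requires only the single triangle $\varphi \circ e_2 \sim \psi$, which you already establish, so no second triangle is needed.
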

\begin{proof}
  Similar to \cref{lifting-order}, but using that \({\below_D}\) is reflexive,
  transitive and antisymmetric.
\end{proof}

\begin{proposition}\label{lifting-is-pointed-dcpo-alt}\index{lifting!as a pointed dcpo}
  For a dcpo \(D : \DCPO{V}{U}{T}\), the lifting \(\lifting_{\V}(D)\) ordered as
  in \cref{lifting-order-alt} is a pointed \(\V\)-dcpo.
  In general, \(\lifting_{\V}(D) : \DCPO{V}{V^+ \sqcup U}{V \sqcup T}\), but if
  \(D\) is locally small, then so is \(\lifting_{\V}(D)\).
\end{proposition}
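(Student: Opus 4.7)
The plan is to mirror the structure of \cref{lifting-is-pointed-dcpo}, replacing the argument that collapses a family to a constant using set-truncation with an argument that takes a directed supremum in $D$. The partial order and the least element $\bot = (\Zero_{\V},\varphi)$ are already handled by \cref{lifting-order-alt}: antisymmetry, reflexivity, transitivity and leastness of $\bot$ follow immediately from the corresponding properties of ${\below_D}$ together with propositional extensionality and function extensionality, exactly as in the set case.

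The substantive content is directed completeness. Given a directed family $(Q_{(-)},\varphi_{(-)}) : I \to \lifting_{\V}(D)$ with $I : \V$, the candidate supremum will have support $R \colonequiv \exists_{i : I}Q_i : \Omega_{\V}$. To define its value map $\rho : R \to D$, I first consider the auxiliary family $\sigma : (\Sigma_{i : I}Q_i) \to D$, $\sigma(i,q) \colonequiv \varphi_i(q)$. I claim $\sigma$ is semidirected: given $(i,p),(j,q)$, directedness of the original family yields (as a proposition) some $k : I$ with $(Q_i,\varphi_i),(Q_j,\varphi_j) \below (Q_k,\varphi_k)$, i.e.\ maps $f : Q_i \to Q_k$ and $g : Q_j \to Q_k$ together with inequalities $\varphi_i(p) \below_D \varphi_k(f(p))$ and $\varphi_j(q) \below_D \varphi_k(g(q))$; since $Q_k$ is a proposition we have $f(p) = g(q)$, so $\varphi_k(f(p))$ is a common upper bound. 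Any specified witness $(i_0,q_0) : \Sigma_{i : I}Q_i$ makes this family directed and indexed by a type in $\V$, so it has a supremum in $D$. This gives a map $\tilde\sigma : \Sigma_{i : I}Q_i \to D$ sending $(i_0,q_0)$ to that supremum, which is constant because the supremum is uniquely determined. Since $D$ is a set by \cref{posets-are-sets}, \cref{constant-map-to-set-factors-through-truncation} provides the desired factorisation $\rho : R \to D$.

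I then check that $(R,\rho)$ is the supremum of $(Q_i,\varphi_i)$ with respect to the order of \cref{lifting-order-alt}. For upper-boundedness, the map $Q_j \to R$ sending $q$ to $\tosquash{(j,q)}$ gives the first component; the required inequality $\varphi_j(q) \below_D \rho(\tosquash{(j,q)})$ holds because $\rho(\tosquash{(j,q)})$ unfolds to a directed supremum in $D$ that includes $\varphi_j(q) = \sigma(j,q)$ among its upper-bounded terms. For the least-upper-bound property, let $(S,\tau)$ be another upper bound, so that for each $i : I$ we have $f_i : Q_i \to S$ with $\varphi_i(q) \below_D \tau(f_i(q))$. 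A map $R \to S$ exists because $S$ is a proposition and $R$ is inhabited whenever $(Q_i)$ is; and the required inequality $\rho(t) \below_D \tau(s)$ for the resulting $s : S$ reduces, after propositional elimination, to $\varphi_i(q) \below_D \tau(f_i(q))$ combined with the defining property of $\rho$ as a supremum in $D$. I expect the only mildly delicate step to be bookkeeping: one must justify eliminating the truncation into $\below_D$ statements (which is fine because ${\below_D}$ is subsingleton-valued) and chase the definition of $\rho$ through \cref{constant-map-to-set-factors-through-truncation}.

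Finally, the universe count: $\Omega_{\V} : \V^+$ and for any $P : \V$ we have $(P \to D) : \V \sqcup \U$, so $\lifting_{\V}(D) \equiv \Sigma_{P : \Omega_{\V}}(P \to D)$ lives in $\V^+ \sqcup \U$, while the order formula $\Sigma_{f : P \to Q}\Pi_{p : P}(\varphi(p) \below_D \psi(f(p)))$ has type $\V \sqcup \T$ since $P, Q : \V$. For local smallness, if $D$ is locally small then by \cref{local-smallness-alt} there is a $\V$-valued relation ${\below_{D,\V}}$ equivalent to ${\below_D}$, and substituting it in yields a $\V$-valued relation equivalent to the order on $\lifting_{\V}(D)$, giving local smallness by another application of \cref{local-smallness-alt}.
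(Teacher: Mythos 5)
Your proposal is correct and captures the essential content: the supremum of a directed family in $\lifting_{\V}(D)$ has support $\exists_{i:I}Q_i$, and the value map is built by bundling the $\varphi_i$'s into a semidirected family $\sigma : \pa*{\Sigma_{i:I}Q_i} \to D$ and taking its directed supremum in $D$. Your semidirectedness argument, your verification of the least-upper-bound property, and your universe accounting and local-smallness argument are all right.

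The one place where you take a route the paper does not is the definition of the value map $\rho : \exists_{i:I}Q_i \to D$. You mimic the proof of \cref{lifting-is-pointed-dcpo} and invoke \cref{constant-map-to-set-factors-through-truncation}: build a constant map $\tilde\sigma : \Sigma_{i:I}Q_i \to D$ picking out the supremum, observe it is constant because suprema are unique, and factor it through the truncation. That works, but it is heavier than needed. The paper's own proof of this proposition does not use \cref{constant-map-to-set-factors-through-truncation} at all: once a witness of $\exists_{i:I}Q_i$ is supplied, $\sigma$ is directed, and ``$\sigma$ has a supremum'' is the $\Sigma$-type $\Sigma_{x : D}\pa*{x \mathrel{\operatorname{is-sup-of}} \sigma}$, which is a \emph{proposition} (cf.\ \cref{directed-completeness-is-prop}). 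One can therefore eliminate the truncation directly into this proposition and project out the supremum. This is precisely why the dcpo case is simpler than the set case: in \cref{lifting-is-pointed-dcpo} the codomain is just a set and \cref{constant-map-to-set-factors-through-truncation} is genuinely needed, but here the supremum already lives in a dcpo and the existence-and-uniqueness of suprema supplies the proposition to eliminate into. Both arguments are valid; the paper's is more elementary and avoids appealing to the Kraus--Escard\'o--Coquand--Altenkirch theorem, while yours has the appeal of treating the set and dcpo liftings uniformly.
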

\begin{proof}
  The element \(\bot\) from \cref{def:lifting-bot} is still the least element
  with respect to the new order.
  If \(\alpha : I \to \lifting_{\V}(D)\) is directed, then, writing
  \(\pa*{Q_i,\varphi_i} \colonequiv \alpha_i\), we consider
  \(\Phi : \pa*{\Sigma_{i : I}Q_i} \to D\) given by
  \((i,q) \mapsto \varphi_i(q)\).
  This family is semidirected, for if we have \(i,j : I\) with \(p : Q_i\) and
  \(q : Q_j\), then there exists \(k : I\) such that
  \({\alpha_i,\alpha_j} \below \alpha_k\) in \(\lifting_{\V}(D)\) by
  directedness of \(\alpha\), which implies that \(\Phi(i,p) \below \Phi(j,q)\)
  in \(D\).
  Thus, if we know that \(\exists_{i : I}Q_i\), then the family \(\Phi\) is
  directed and must have a supremum in \(D\).
  Hence we have a partial element
  \(\pa*{\exists_{i : I}Q_i,\psi} : \lifting_{\V}(D)\) where \(\psi\) takes the
  witness that the domain of \(\Phi\) is inhabited to the directed supremum
  \(\bigsqcup \Phi\) in \(D\).
  It is not hard to verify that this partial element is the least upper bound of
  \(\alpha\) in \(\lifting_{\V}(D)\), completing the proof.
\end{proof}

The lifting \(\lifting_{\V }(D)\) with the partial order of
\cref{lifting-order-alt} gives the free \emph{pointed} \(\V\)-dcpo on a
\(\V\)-dcpo \(D\). Keeping track of universes, it holds in the following
generality:
\begin{theorem}\label{lifting-is-free3}\index{lifting!free pointed dcpo}
  If \(D : \DCPO{V}{U}{T}\) is a \(\V\)-dcpo, then for every pointed \(\V\)-dcpo
  \(E : \DCPO{V}{U'}{T'}\) and continuous function \(f : D \to E\), there is a
  unique strict continuous function \(\bar{f} : \lifting_{\V }(D) \to E\) making
  the diagram
  \[
    \begin{tikzcd}
      D \ar[dr, "\eta_{D}"'] \ar[rr, "f"] & & E  \\
      & \lifting_{\V }(D) \ar[ur, dashed, "\bar{f}"']
    \end{tikzcd}
  \]
  commute.
\end{theorem}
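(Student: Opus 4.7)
The plan is to follow the pattern of the proofs of \cref{lifting-is-free} and \cref{lifting-is-free2}, with the appropriate adaptations needed for the setting of dcpos (as opposed to sets or subsingleton-complete posets). Specifically, I would define
\[
  \bar f(P,\varphi) \colonequiv \bigvee_{p : P} f(\varphi(p)),
\]
which is well-defined because $P$ is a proposition in $\V$, the family $\lambdadot{p}{f(\varphi(p))}$ is in particular semidirected, and $E$, being a pointed $\V$-dcpo, has suprema of all such families by \cref{pointed-dcpos-sups}. Strictness of $\bar{f}$ is then immediate, since $\bar f(\bot) = \bigvee_{p : \Zero_\V} f(\varphi(p)) = \bot_E$ (the supremum of the empty family), and the factorisation $\bar f \circ \eta_D = f$ follows from $\bar f(\eta_D(x)) = \bigvee_{u : \One_\V} f(x) = f(x)$.

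For Scott continuity, I would first verify monotonicity using the definition of the order in \cref{lifting-order-alt}: given $(P,\varphi) \below (Q,\psi)$ witnessed by $g : P \to Q$ with $\varphi(p) \below_D \psi(g(p))$ for every $p : P$, monotonicity of $f$ (from \cref{continuous-implies-monotone}) yields $f(\varphi(p)) \below_E f(\psi(g(p))) \below_E \bar f(Q,\psi)$, so $\bar f(P,\varphi) \below_E \bar f(Q,\psi)$. Then, by \cref{continuity-criterion}, it suffices to show $\bar f(\bigsqcup \alpha) \below \bigsqcup \bar f \circ \alpha$ for any directed $\alpha : I \to \lifting_\V(D)$, written $\alpha_i = (Q_i,\varphi_i)$. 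Unfolding the supremum in $\lifting_\V(D)$ from the proof of \cref{lifting-is-pointed-dcpo-alt}, we have $\bigsqcup \alpha = (\exists_{i : I} Q_i, \psi)$, where $\psi$ maps a proof of the domain to the directed supremum $\bigsqcup_{(i,q)} \varphi_i(q)$ in $D$. Then for any $p : \exists_{i : I} Q_i$, continuity of $f$ gives $f(\psi(p)) = \bigsqcup_{(i,q)} f(\varphi_i(q))$, and since each summand $f(\varphi_i(q)) \below \bar f(\alpha_i) \below \bigsqcup \bar f \circ \alpha$, we conclude $f(\psi(p)) \below \bigsqcup \bar f \circ \alpha$; taking the subsingleton supremum over $p$ then gives the required inequality.

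For uniqueness, I would first prove an analogue of \cref{lifting-element-as-sup} for the dcpo setting: every $(P,\varphi) : \lifting_\V(D)$ satisfies $(P,\varphi) = \bigvee_{p : P} \eta_D(\varphi(p))$ in $\lifting_\V(D)$, using antisymmetry together with the definition of the order (for $p : P$ the map $\One \to P$ sending $\star$ to $p$ witnesses $\eta_D(\varphi(p)) \below (P,\varphi)$; conversely, any upper bound $(Q,\psi)$ supplies, for each $p : P$, an element $g(p) : Q$ with $\varphi(p) \below \psi(g(p))$). Given any strict continuous $g : \lifting_\V(D) \to E$ with $g \circ \eta_D = f$, the preservation of semidirected suprema by strict continuous maps (\cref{pointed-dcpos-sups}) then yields
\[
  g(P,\varphi) = g\Bigl(\bigvee_{p : P} \eta_D(\varphi(p))\Bigr) = \bigvee_{p : P} g(\eta_D(\varphi(p))) = \bigvee_{p : P} f(\varphi(p)) = \bar f(P,\varphi),
\]
and function extensionality concludes.

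The main obstacle will be the Scott continuity step, since it involves reasoning about an iterated supremum — an outer subsingleton sup (from the definition of $\bar{f}$) wrapping a directed sup (that defines the sup in $\lifting_\V(D)$) — and requires carefully interchanging it with the outer directed sup on the right-hand side. The calculation above sidesteps a full interchange by using \cref{continuity-criterion} to reduce continuity to a single inequality, but one must be attentive that the witness $p : \exists_{i : I}Q_i$ is only used to extract a supremum in $D$ (justified because $f(\psi(p))$ is an element of the set $E$ and hence the target of a map out of a propositional truncation via \cref{constant-map-to-set-factors-through-truncation}, analogously to \cref{lifting-extension-is-continuous}).
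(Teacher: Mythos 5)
Your proof is correct and takes the same approach as the paper, whose own proof of \cref{lifting-is-free3} is simply the one-liner ``Similar to the proof of \cref{lifting-is-free}.'' You have filled in exactly the details that remark leaves implicit — in particular the observation that \cref{lifting-element-as-sup} must be re-proved with respect to the dcpo order of \cref{lifting-order-alt} rather than simply imported, and the careful unpacking of the directed supremum in $\lifting_{\V}(D)$ from \cref{lifting-is-pointed-dcpo-alt} when verifying Scott continuity.
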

\begin{proof}
  Similar to the proof of \cref{lifting-is-free}.
\end{proof}

Notice how \cref{lifting-is-free3} generalises \cref{lifting-is-free} as any set
can be viewed as a discretely ordered \(\V\)-dcpo.

\section{Products and exponentials}\label{sec:products-and-exponentials}

We describe two constructions of \(\V\)-dcpos, namely products and exponentials.
Exponentials will be crucial in the Scott model of PCF, as discussed
in~\cref{sec:Scott-model-of-PCF}.
Products are not needed for this purpose as we will work with the combinatory
version of PCF.
However, product allows us to state the universal property of the exponential
(\cref{exponential-universal-property}).
Moreover, products are also needed when extending the Scott model to account for
a version of PCF with variables and \(\lambda\)-abstraction, see the
\nameref{sec:basic-domain-theory-notes} for this chapter.

\begin{definition}[Product of (pointed) dcpos, \(D_1 \times D_2\)]
  The \emph{product} of two \(\V\)-dcpos \(D_1\) and \(D_2\) is given by the
  \(\V\)-dcpo \(D_1 \times D_2\) defined as follows. Its carrier is the
  cartesian product of the carriers of \(D_1\) and \(D_2\). The order is given
  componentwise, i.e.\ \((x,y) \below_{D_1 \times D_2} (x',y')\) if
  \(x \below_{D_1} y\) and \(y \below_{D_2} y'\).%
  \index{dcpo!product|textbf}%
  \nomenclature[times']{$D \times E$}{binary cartesian product of (pointed) dcpos}
  Accordingly directed suprema are also given componentwise. That is, given a
  directed family \(\alpha : I \to {D_1 \times D_2}\), one quickly verifies that
  the families \(\fst \circ {\alpha}\) and \(\snd \circ \alpha\) are also
  directed. We then define the supremum of \(\alpha\) as
  \((\bigsqcup {\fst \circ \alpha} , \bigsqcup {\snd \circ \alpha})\).
  Moreover, if \(D\) and \(E\) are pointed, then so is \(D \times E\) by taking
  the least elements in both components.
\end{definition}

\begin{remark}
  Notice that if \(D_1 : \DCPO{V}{U}{T}\) and \(D_2 : \DCPO{V}{U'}{T'}\), then
  for their product we have
  \(D_1 \times D_2 : \DCPO{V}{U \sqcup U'}{T \sqcup T'}\), which simplifies to
  \(\DCPO{V}{U}{T}\) when \(\U' \equiv \U\) and \(\T' \equiv \T\).
\end{remark}

\begin{proposition}\label{binary-product-universal-property}
  The product defined above satisfies the appropriate universal property: the
  projections \(\fst : D_1 \times D_2 \to D_1\) and
  \(\snd : D_1 \times D_2 \to D_2\) are Scott continuous and if
  \(f : E \to D_1\) and \(g : E \to D_2\) are Scott continuous functions from a
  \(\V\)-dcpo~\(E\), then there is a unique Scott continuous map
  \(k \colon E \to D_1 \times D_2\) such that the diagram
  \[
    \begin{tikzcd}
      & D_1 \times D_2 \ar[dl,"\fst"'] \ar[dr,"\snd"] \\
      D_1 \ & & D_2 \\
      & E \ar[ul,"f"] \ar[ur,"g"'] \ar[uu,dashed,"k"]
    \end{tikzcd}
  \]
  commutes.
\end{proposition}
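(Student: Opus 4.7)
The plan is to first verify that the projections are Scott continuous, then exhibit the obvious pairing map as the unique mediator.

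First I would verify Scott continuity of the projections. Both $\fst$ and $\snd$ are monotone by the componentwise definition of $\below_{D_1 \times D_2}$. Given a directed family $\alpha : I \to D_1 \times D_2$, the family $\fst \circ \alpha$ is directed (by monotonicity of $\fst$, applying~\cref{image-is-directed}), and by the very definition of the supremum in $D_1 \times D_2$ we have $\fst\pa*{\bigsqcup \alpha} = \bigsqcup {\fst} \circ \alpha$. The argument for $\snd$ is symmetric.

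Next I would define the candidate mediator $k : E \to D_1 \times D_2$ by $k(e) \colonequiv (f(e),g(e))$. Commutativity of the two triangles then holds definitionally. For Scott continuity of $k$, let $\alpha : I \to E$ be a directed family in $E$. Then
\[
  k\pa*{\textstyle\bigsqcup \alpha} \equiv \pa*{f\pa*{\textstyle\bigsqcup \alpha},g\pa*{\textstyle\bigsqcup \alpha}} = \pa*{\textstyle\bigsqcup {f} \circ \alpha,\textstyle\bigsqcup {g} \circ \alpha}
\]
using Scott continuity of $f$ and $g$, and this last expression is exactly the supremum of $k \circ \alpha$ in $D_1 \times D_2$ by the componentwise description of directed suprema.

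Finally, for uniqueness, suppose $k' : E \to D_1 \times D_2$ is any Scott continuous map with $\fst \circ k' = f$ and $\snd \circ k' = g$. Then for every $e : E$ we have $k'(e) = \pa*{\fst(k'(e)),\snd(k'(e))} = (f(e),g(e)) \equiv k(e)$, so $k' \sim k$ and hence $k' = k$ by function extensionality (noting that equality of Scott continuous maps reduces to equality as underlying functions, since being Scott continuous is a property). The main thing to be careful about is simply that directed suprema in the product are computed componentwise, which is what makes everything else fall out immediately; there is no real obstacle.
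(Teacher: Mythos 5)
Your proposal is correct and follows essentially the same approach as the paper's proof: verify continuity of the projections from the componentwise description of suprema, take $k(e) = (f(e),g(e))$ as the pairing map, and check continuity and uniqueness directly. The paper is merely more terse about uniqueness (folding it into the remark that the diagram forces the definition of $k$); your explicit argument via \cref{Pi-is-prop} and the fact that being Scott continuous is a property is the same reasoning spelled out.
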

\begin{proof}
  The projections are Scott continuous by definition of directed suprema in
  \(D_1 \times D_2\). Moreover, if \(f \colon E \to D_1\) and
  \(g \colon E \to D_2\) are Scott continuous maps, then we see that we have no
  choice but to define \(k \colon E \to {D_1 \times D_2}\) by
  \(e \mapsto (f(e),g(e))\). Moreover, this assignment is Scott continuous,
  because for a directed family \(\alpha : I \to E\), we have
  \(k\pa*{\bigsqcup \alpha} \equiv \pa*{f(\bigsqcup \alpha),g(\bigsqcup\alpha)}
  = \pa*{\bigsqcup f \circ \alpha,\bigsqcup g \circ \alpha} \equiv \bigsqcup {k
    \circ \alpha}\) by Scott continuity of \(f\)~and~\(g\) and the definition of
  directed suprema in \(D_1 \times D_2\).
\end{proof}

\begin{lemma}\label{continuous-in-both-arguments}
  A map \(f \colon D_1 \times D_2 \to E\) is Scott continuous if and only if the
  maps \(f(x,-) : D_2 \to E\) and \(f(-,y) : D_1 \to E\) are Scott continuous
  for every \(x : D_1\) and \(y : D_2\).
\end{lemma}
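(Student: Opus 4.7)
The plan is to prove both directions separately, with the forward direction being essentially immediate and the backward direction requiring a careful manipulation of double suprema.

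For the forward direction, I would fix an arbitrary $x : D_1$ and a directed family $\beta : I \to D_2$, and observe that the family $i \mapsto (x, \beta_i) : I \to D_1 \times D_2$ is directed (since $\beta$ is, and the first coordinate is constant), with supremum $(x, \bigsqcup \beta)$ by the componentwise definition of suprema in the product. Applying Scott continuity of $f$ to this family then immediately gives $f(x, \bigsqcup \beta) = \bigsqcup_i f(x, \beta_i)$, so $f(x,-)$ is Scott continuous. The argument for $f(-,y)$ is symmetric.

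For the backward direction, assume each $f(x,-)$ and $f(-,y)$ is Scott continuous. First I would establish that $f$ itself is monotone: given $(x,y) \below (x',y')$, chain $f(x,y) \below f(x',y) \below f(x',y')$ by monotonicity of $f(-,y)$ and $f(x',-)$, which follows from their Scott continuity via \cref{continuous-implies-monotone}. Then, by \cref{continuity-criterion}, it suffices to show $f(\bigsqcup \alpha) \below \bigsqcup f \circ \alpha$ for every directed $\alpha : I \to D_1 \times D_2$. Writing $\alpha_1 \colonequiv \fst \circ \alpha$ and $\alpha_2 \colonequiv \snd \circ \alpha$, so that $\bigsqcup \alpha = (\bigsqcup \alpha_1, \bigsqcup \alpha_2)$, I would compute
\[
  f\pa*{\textstyle\bigsqcup \alpha}
  = f\pa*{\textstyle\bigsqcup \alpha_1, \bigsqcup \alpha_2}
  = \textstyle\bigsqcup_i f\pa*{(\alpha_1)_i, \bigsqcup \alpha_2}
  = \textstyle\bigsqcup_i \bigsqcup_j f\pa*{(\alpha_1)_i, (\alpha_2)_j}
\]
using Scott continuity of $f\pa*{-, \bigsqcup \alpha_2}$ and then of each $f\pa*{(\alpha_1)_i, -}$. (The relevant families are directed because $\alpha_1$ and $\alpha_2$ are, being monotone images of the directed family $\alpha$, via \cref{image-is-directed}.)

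The main (and only real) obstacle is to show that this ``independent double supremum'' agrees with the ``diagonal'' supremum $\bigsqcup_k f \circ \alpha$. For this I would use the directedness of $\alpha$: given any $i, j : I$, there exists $k : I$ with $\alpha_i, \alpha_j \below \alpha_k$, hence $(\alpha_1)_i \below (\alpha_1)_k$ and $(\alpha_2)_j \below (\alpha_2)_k$, so by monotonicity of $f$ we obtain $f\pa*{(\alpha_1)_i, (\alpha_2)_j} \below f\pa*{(\alpha_1)_k, (\alpha_2)_k} = (f \circ \alpha)_k \below \bigsqcup f \circ \alpha$. Since this holds for all $i, j$, we conclude $\bigsqcup_i \bigsqcup_j f\pa*{(\alpha_1)_i, (\alpha_2)_j} \below \bigsqcup f \circ \alpha$, completing the chain and finishing the proof.
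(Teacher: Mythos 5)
Your proposal is correct and takes essentially the same approach as the paper: both directions rest on the identical ideas (one-variable families for the forward direction; the cofinality argument equating the iterated double supremum with the diagonal one for the backward direction). The only organisational difference is that you first establish monotonicity of \(f\) and invoke \cref{continuity-criterion}, whereas the paper directly checks that \(f(\bigsqcup\alpha)\) is the least upper bound of \(f \circ \alpha\) without that preliminary step.
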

\begin{proof}
  Suppose first that \(f \colon D_1 \times D_2 \to E\) is Scott continuous and
  let \(x : D_1\) be arbitrary. If \(\alpha : I \to D_2\) is a directed family,
  then \(f\pa*{x,\bigsqcup \alpha} = f\pa*{\bigsqcup \alpha_x}\), where
  \(\alpha_x : I \to D_1 \times D_2\) is the directed family given by
  \(i \mapsto (x,\alpha(i))\).
  But \(f\) is Scott continuous, so
  \(\bigsqcup_{i : I} f(x,\alpha(i)) = \bigsqcup_{i : I}f(\alpha_x(i)) =
  f(\bigsqcup\alpha_x(i)) = f(x,\bigsqcup \alpha)\), as desired.
  Continuity of \(f(-,y)\) is proved similarly of course.

  Conversely, suppose that the conditions in the lemma hold and let
  \(\alpha : I \to {D_1 \times D_2}\) be directed. We need to show that
  \(f \pa{\bigsqcup \alpha} \equiv f \pa*{\bigsqcup \alpha_1,\bigsqcup
    \alpha_2}\) is the least upper bound of \(\bigsqcup f \circ \alpha\), where
  \(\alpha_1 \colonequiv \fst \circ \alpha\) and
  \(\alpha_2 \colonequiv \snd \circ \alpha\).
  To see that it is indeed an upper bound, assume that \(i : I\) and observe
  that
  \[
    f(\alpha(i)) \equiv f\pa*{\alpha_1(i),\alpha_2(i)} \below
    f\pa*{\alpha_1(i),\textstyle\bigsqcup\alpha_2} \below
    f\pa*{\textstyle\bigsqcup \alpha_1,\textstyle\bigsqcup\alpha_2},
  \]
  by monotonicity of \(f\pa*{\alpha_1(i),-}\)
  and \(f\pa*{-,\bigsqcup\alpha_2}\).
  To see that it is least, suppose that \(y \aboveorder f(\alpha(i))\) for every
  \(i : I\). By Scott continuity of \(f\pa*{-,\bigsqcup \alpha_2}\) it is
  sufficient to prove that \(f\pa*{\alpha_1(i),\bigsqcup \alpha_2}\) for every
  \(i : I\). So let \(i : I\) be arbitrary. By Scott continuity of
  \(f\pa*{\alpha_1(i),-}\) it suffices to prove
  \(f\pa*{\alpha_1(i),\alpha_2(j)} \below y\) for every \(j : I\).
  So let \(j : I\) be arbitrary. By directedness of \(\alpha\), there exists
  \(k : I\) such that \(\alpha(i),\alpha(j) \below \alpha(k)\).
  Hence,
  \(f\pa*{\alpha_1(i),\alpha_2(j)} \below f(\alpha(k)) \below y\), as desired.
\end{proof}

\begin{definition}[Exponential of (pointed) dcpos, \(E^D\)]
  The \emph{exponential} of two \(\V\)\nobreakdash-dcpos \(D\) and \(E\) is
  given by the poset \(E^D\) defined as follows. Its carrier is the type of
  Scott continuous functions from \(D\) to \(E\).
  \index{exponential|textbf}%
  \nomenclature[ED]{$E^D$}{exponential of (pointed) dcpos}%
  The order is given pointwise, i.e.\ \(f \below_{E^D} g\) holds if
  \(f(x) \below_{E} g(x)\) for every \(x : D\).
  Notice that if \(E\) is pointed, then so is \(E^D\) with least element given
  the constant function \(\lambdadot{x : D}\bot_E\) which is Scott continuous by
  \cref{continuity-closure}\ref{const-is-continuous}.
\end{definition}

Note that the exponential \(E^D\) is a priori not locally small even if \(E\) is
because the partial order quantifies over all elements of \(D\). But if \(D\) is
continuous (a notion that we will study in detail in
\cref{chap:continuous-and-algebraic-dcpos}) then \(E^D\) will be locally small
when \(E\) is (\cref{exponential-is-locally-small}).

\begin{proposition}
  The exponential \(E^D\) of two \(\V\)-dcpos \(D\) and \(E\) is \(\V\)-directed
  complete.
\end{proposition}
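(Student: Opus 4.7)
The natural plan is to define suprema in $E^D$ pointwise. Given a directed family $\alpha : I \to E^D$ with $I : \V$, I would first observe that for each fixed $x : D$, the family $\alpha_{(-)}(x) : I \to E$ is directed: inhabitedness of $I$ is inherited from $\alpha$, and semidirectedness follows by applying $(-)(x)$ to the pointwise order and using that semidirectedness of $\alpha$ at $i,j$ yields some $k$ with $\alpha_i,\alpha_j \below_{E^D} \alpha_k$, hence $\alpha_i(x),\alpha_j(x) \below_E \alpha_k(x)$. Using that $E$ is a $\V$-dcpo, I can then define $s : D \to E$ by $s(x) \colonequiv \bigsqcup_{i : I} \alpha_i(x)$.

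Next I would verify that $s$ is Scott continuous, the main technical step. Monotonicity of $s$ follows from monotonicity of each $\alpha_i$ (using \cref{continuous-implies-monotone}) and the fact that suprema in $E$ are monotone in the family. For full continuity, by \cref{continuity-criterion} it suffices to show $s\pa*{\bigsqcup \beta} \below \bigsqcup s \circ \beta$ for any directed $\beta : J \to D$ with $J : \V$. Unfolding, this amounts to the interchange
\[
  \bigsqcup_{i : I}\alpha_i\pa*{\textstyle\bigsqcup_{j : J}\beta_j}
  = \bigsqcup_{i : I}\bigsqcup_{j : J}\alpha_i(\beta_j)
  \below \bigsqcup_{j : J}\bigsqcup_{i : I}\alpha_i(\beta_j),
\]
where the equality uses Scott continuity of each $\alpha_i$. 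I would establish this interchange by considering the joint family $\gamma : I \times J \to E$ with $\gamma(i,j) \colonequiv \alpha_i(\beta_j)$, which is directed (since $I \times J : \V$ is inhabited and semidirectedness follows by combining witnesses from $\alpha$ and $\beta$ and applying monotonicity), and showing that both iterated suprema equal $\bigsqcup \gamma$ by standard upper-bound arguments.

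With $s$ shown to be Scott continuous, I would then verify that $s$ is the least upper bound of $\alpha$ in $E^D$. It is an upper bound because for each $i : I$ and $x : D$, $\alpha_i(x) \below \bigsqcup_{i' : I}\alpha_{i'}(x) \equiv s(x)$, so $\alpha_i \below_{E^D} s$. And if $g : E^D$ satisfies $\alpha_i \below_{E^D} g$ for all $i$, then for every $x : D$ the element $g(x)$ is an upper bound of $\alpha_{(-)}(x)$, whence $s(x) \below g(x)$ by the universal property of the supremum in $E$, giving $s \below_{E^D} g$.

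The main obstacle is the interchange of suprema in the continuity verification; this is where one has to be careful since one cannot simply appeal to monotonicity of $(-) \mapsto \bigsqcup$ without producing a directed family to take the supremum of. Packaging $\gamma$ over $I \times J$ sidesteps this cleanly. I would also keep an eye on universe parameters throughout: since $I,J : \V$, all the families constructed along the way are indexed by types in $\V$, so the assumed $\V$-directed completeness of $E$ is enough, and no universe inflation occurs in the carrier of $E^D$ beyond the expected $\U \sqcup \U' \sqcup \T$ needed to express pointwise order and Scott continuity.
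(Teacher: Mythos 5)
Your proof is correct and takes essentially the same approach as the paper: define suprema pointwise, check that each pointwise family $\alpha_{(-)}(x)$ is directed, establish monotonicity, and use the continuity criterion (\cref{continuity-criterion}) to reduce Scott continuity of the candidate supremum to an interchange of suprema. The only cosmetic difference is that you handle the interchange by packaging a joint family over $I \times J$ and equating both iterated suprema with its supremum, whereas the paper's proof dispenses with this and argues directly that each $\alpha_i(\bigsqcup \beta) = \bigsqcup_j \alpha_i(\beta_j)$ is below $\bigsqcup_j \bigsqcup_i \alpha_i(\beta_j)$ — a slightly leaner upper-bound argument reaching the same conclusion.
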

\begin{proof}
  Since the partial order is given pointwise, we expect directed suprema to be
  calculated pointwise too.
  Explicitly, given a directed family \(\alpha : I \to {E^D}\), we verify that
  for every \(x : D\), the family \(\alpha_x : I \to E\) defined by
  \(i \mapsto \alpha_i(x)\) is also directed.
  Indeed, if we have \(i,j : I\), then there exists \(k : I\) such that
  \({\alpha_i,\alpha_j} \below \alpha_k\). Hence, for arbitrary \(x : D\), we
  have \({\alpha_i(x),\alpha_j(x)} \below \alpha_k(x)\), which shows that
  \(\alpha_x\) is directed.
  Because the order is pointwise, it is clear that the function
  \(\lambdadot{x : D}{\bigsqcup \alpha_x}\) is the least upper bound of
  \(\alpha\), but we must also check that this function is Scott continuous.
  We employ \cref{continuity-criterion} for this, so we first check that the
  function is monotone.
  Indeed if \(x \below y\) in \(D\), then \(\alpha_i(x) \below \alpha_i(y)\) for
  every \(i : I\) as Scott continuous functions are monotone. Hence,
  \(\bigsqcup \alpha_x \below \bigsqcup \alpha_y\) in this case.
  Now let \(\beta : J \to D\) be directed. We have to prove that
  \(\bigsqcup_{i : I}\alpha_i\pa*{\bigsqcup \beta}
  \below \bigsqcup_{j : J}\bigsqcup_{i : I}\alpha_i\pa*{\beta_j}\), %
  for which it is enough to know that \(\alpha_i\pa*{\bigsqcup \beta}
  \below \bigsqcup_{j : J}\bigsqcup_{i : I}\alpha_i\pa*{\beta_j}\) %
  for every \(i : I\).
  But this is clear as
  \(\alpha_i\pa*{\bigsqcup \beta} = \bigsqcup_{j : J}\alpha_i\pa*{\beta_j}\) by
  Scott continuity of each \(\alpha_i\).
\end{proof}

\begin{remark}\label{exponential-universe-parameters}%
  \index{universe!parameters}%
  Recall from~\cref{universe-levels-of-lifting-and-exponentials} that it is
  necessary to carefully keep track of the universe parameters of the
  exponential.
  In general, the universe levels of \(E^{D}\) can be quite large and
  complicated. For~if \(D : \DCPO{V}{U}{T}\) and \(E : \DCPO{V}{U'}{T'}\), then
  \[
    E^{D} : \DCPO{V}{V^+\sqcup U\sqcup T\sqcup U'\sqcup T'}{U\sqcup T'}.
  \]
  Even~if \(\V = \U \equiv \T \equiv \U' \equiv \T'\), the carrier of \(E^{D}\)
  still lives in the larger universe~\(\V ^+\), because the type expressing
  Scott continuity for \(\V\)-dcpos quantifies over all types in~\(\V\).
  Actually, the scenario where \(\U = \U' = \V\) cannot happen in a predicative
  setting unless \(D\) and \(E\) are trivial, in a sense made precise in
  \cref{chap:predicativity-in-order-theory}.

  Even so, in many applications such as those in \cref{chap:applications}, if we
  take \(\V \equiv \U_0\) and all other parameters to be
  \(\U \equiv \T \equiv \U' \equiv \T' \equiv \U_1\), then the situation is much
  simpler and \(D\),~\(E\)~and the exponential \(E^D\) are all elements of
  \(\DCPOnum{0}{1}{1}\) with all of them being locally small (remember that this
  is defined up to equivalence).
  This turns out to be a very favourable situation for both the Scott model of
  PCF and Scott's \(D_\infty\) model of the untyped \(\lambda\)-calculus.
\end{remark}

In the proposition below we can have \(D : \DCPO{\V}{\U}{\T}\) and
\(E : \DCPO{\V}{\U'}{\T'}\) for arbitrary universes \(\U\), \(\T\), \(\U'\) and
\(\T'\). In particular, the universe parameters of \(D\) and \(E\), apart from
the universe of indexing types, need to be the same.

\begin{proposition}\label{exponential-universal-property}%
  The exponential defined above satisfies the appropriate universal property:
  the \emph{evaluation map} \(\ev: E^D \times D \to E, (g,x) \mapsto g(x)\) is
  Scott continuous and if \(f : {D'\times D} \to E\) is a Scott continuous
  function, then there is a unique Scott continuous map
  \(\bar{f} \colon D' \to E^D\) such that the diagram
  \[
    \begin{tikzcd}
      D' \times D \ar[dr,"f"] \ar[d,dashed,"{\bar{f}}\,\times\,{\id_D}"'] \\
      E^D \times D \ar[r,"\ev"'] & E
    \end{tikzcd}
  \]
  commutes.
\end{proposition}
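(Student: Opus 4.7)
The plan is to prove the universal property by systematic application of Lemma~\ref{continuous-in-both-arguments} together with the fact that directed suprema in $E^D$ are computed pointwise.

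First, I would verify Scott continuity of the evaluation map $\ev : E^D \times D \to E$. By Lemma~\ref{continuous-in-both-arguments}, it suffices to fix one argument and check continuity in the other. For fixed $g : E^D$, the map $\ev(g,-) \equiv g$ is Scott continuous by hypothesis. For fixed $x : D$, the map $\ev(-,x) : E^D \to E$ sends $h \mapsto h(x)$, and its Scott continuity is immediate from the construction of directed suprema in $E^D$ as pointwise suprema, namely $(\bigsqcup \alpha)(x) = \bigsqcup_{i : I} \alpha_i(x)$ for every directed $\alpha : I \to E^D$.

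Next, given a Scott continuous $f : D' \times D \to E$, the unique possible definition of $\bar f$ is forced by the commutativity requirement: $\bar f(y)(x) \colonequiv f(y,x)$. I would then verify that this is well-defined and Scott continuous in two stages. Stage one: for each $y : D'$, the map $\bar f(y) = f(y,-) : D \to E$ is Scott continuous by Lemma~\ref{continuous-in-both-arguments} applied to $f$, so $\bar f(y)$ genuinely lives in $E^D$. Stage two: to see that $\bar f : D' \to E^D$ is itself Scott continuous, take a directed family $\beta : J \to D'$; since suprema in $E^D$ are pointwise, the identity $\bar f(\bigsqcup \beta) = \bigsqcup_{j : J} \bar f(\beta_j)$ reduces, for each $x : D$, to $f(\bigsqcup \beta, x) = \bigsqcup_{j : J} f(\beta_j, x)$, which again holds by Lemma~\ref{continuous-in-both-arguments}.

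Commutativity of the triangle is then a direct computation: $(\ev \circ (\bar f \times \id_D))(y,x) = \ev(\bar f(y), x) = \bar f(y)(x) = f(y,x)$. For uniqueness, suppose $g : D' \to E^D$ is any Scott continuous map making the diagram commute. Then for every $y : D'$ and $x : D$ we have $g(y)(x) = f(y,x) = \bar f(y)(x)$, so by function extensionality $g(y) = \bar f(y)$ as elements of $E^D$ (recall from Section~\ref{sec:Scott-continuous-maps} that two Scott continuous maps agree once they agree as functions), and another application of function extensionality gives $g = \bar f$.

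I do not expect any real obstacle: the argument is entirely routine once Lemma~\ref{continuous-in-both-arguments} is in hand and once one remembers that directed suprema in $E^D$ are pointwise. The only mild subtlety is bookkeeping about being a property versus data, so that the uniqueness at the level of underlying functions upgrades to uniqueness at the level of Scott continuous maps; this is handled by the fact that Scott continuity is a subsingleton, making the type $E^D$ a subtype of the function type $D \to E$.
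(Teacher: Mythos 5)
Your proof is correct and follows the same route as the paper's: \cref{continuous-in-both-arguments} for the evaluation map, the forced pointwise definition of $\bar f$, and pointwise directed suprema in $E^D$ for continuity of $\bar f$. The only difference is that you spell out the uniqueness argument (via Scott continuity being a property) a bit more explicitly, which the paper leaves implicit after noting that the definition of $\bar f$ is forced.
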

\begin{proof}
  We use \cref{continuous-in-both-arguments} to prove that \(\ev\) is Scott
  continuous: It is continuous in the second argument, because the first
  argument is a Scott continuous function, and it is continuous in the first
  argument, because suprema in the exponential are calculated pointwise.
  From the diagram we see that we have no choice but to define \(\bar{f}\) as
  \(y \mapsto \lambdadot{x}{f(y,x)}\).
  It remains to prove that \(\bar{f}(y)\) is Scott continuous for every
  \(y : D'\) and that this assignment itself defines a Scott continuous function
  \(D' \to E^D\).
  For the former, note that \(\bar{f}(y) \colonequiv f(y,-)\) is indeed Scott
  continuous by \cref{continuous-in-both-arguments}.
  For the latter, note that if \(\alpha : I \to D'\) is directed, then
  \[
    \bar{f}\pa*{\textstyle\bigsqcup \alpha} \equiv
    \lambdadot{x}{f\pa*{\textstyle\bigsqcup\alpha,x}} =
    \lambdadot{x}\textstyle\bigsqcup_{i : I}f\pa*{\alpha_i,x} \equiv
    \textstyle\bigsqcup_{i : I}\pa*{\lambdadot{x}{f\pa*{\alpha_i,x}}}
  \]
  by Scott continuity of \(f\) and the fact that suprema are calculated
  pointwise in the exponential.
  Thus, \(\bar{f}\) is Scott continuous, completing the proof.
\end{proof}

The following theorem lies at the heart of the Scott model of PCF that we will
study in \cref{sec:Scott-model-of-PCF}.

\begin{theorem}[Least fixed point, \(\mu\)]\label{least-fixed-point}%
  \index{least fixed point}%
  \index{fixed point|see {least fixed point}}%
  Every Scott continuous endomap \(f\) on a pointed \(\V\)-dcpo \(D\) has a
  least fixed point given by
  \[
    \mu(f) \colonequiv \textstyle\bigsqcup_{n : \Nat} f^n(\bot).
  \]
  \nomenclature[mu]{$\mu(f)$}{least fixed point of a Scott continuous endomap
    \(f\) on a pointed dpco}
  Specifically, the following two conditions hold:
  \begin{enumerate}[(i)]
  \item\label{is-fixed-point} \(f(\mu(f)) = \mu(f)\), and
  \item\label{is-least-fixed-point} for every \(x : D\), if \(f(x) \below x\), then
    \(\mu(f) \below x\).
  \end{enumerate}
  Moreover, the assignment \(f \mapsto \mu(f)\) defines a Scott continuous map
  \(D^D \to D\).
\end{theorem}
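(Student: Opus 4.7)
The plan is as follows. First, I would verify that $\mu(f)$ is well-defined. By induction on $n$, using that $\bot \below f(\bot)$ (since $\bot$ is least) together with monotonicity of $f$ (\cref{continuous-implies-monotone}), the sequence $\bot \below f(\bot) \below f^2(\bot) \below \cdots$ is an $\omega$-chain, so its supremum exists by \cref{dcpo-has-sups-of-chains}.

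For \ref{is-fixed-point}, I would compute
\[
  f(\mu(f)) = f\pa*{\textstyle\bigsqcup_{n : \Nat} f^n(\bot)} = \textstyle\bigsqcup_{n : \Nat} f^{n+1}(\bot)
\]
by Scott continuity of $f$, and then observe that the shifted chain $\pa*{f^{n+1}(\bot)}_{n : \Nat}$ has the same least upper bound as $\pa*{f^n(\bot)}_{n : \Nat}$, because $\bot \below f(\bot)$ makes the original chain pointwise below the shifted one while the shifted one is a sub-chain. For \ref{is-least-fixed-point}, assuming $f(x) \below x$, I would prove by induction on $n$ that $f^n(\bot) \below x$: the base case uses that $\bot$ is least, and the inductive step uses monotonicity of $f$ combined with the hypothesis, so that $\mu(f) \below x$ by the universal property of the supremum.

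For the final claim I would first show monotonicity of $\mu : D^D \to D$: if $f \below g$ in $D^D$ then a simple induction (using $f(f^n(\bot)) \below g(f^n(\bot)) \below g(g^n(\bot))$ via monotonicity of $g$) gives $f^n(\bot) \below g^n(\bot)$, whence $\mu(f) \below \mu(g)$. Then, using \cref{continuity-criterion}, it suffices to prove $\mu\pa*{\bigsqcup \alpha} \below \bigsqcup_{i : I} \mu(\alpha_i)$ for every directed $\alpha : I \to D^D$, noting that $\pa*{\mu(\alpha_i)}_{i : I}$ is directed by \cref{image-is-directed}.

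The main obstacle is this last inequality, which I would prove by establishing, by induction on $n$, that $\pa*{\bigsqcup \alpha}^n(\bot) \below \bigsqcup_{i : I}\mu(\alpha_i)$. The base case is immediate. For the inductive step, I would chain the following facts: since suprema in $D^D$ are pointwise, $\pa*{\bigsqcup \alpha}(y) = \bigsqcup_{i : I} \alpha_i(y)$ for every $y : D$; Scott continuity of $\bigsqcup \alpha$ applied to the directed family $\pa*{\mu(\alpha_i)}_{i : I}$ together with the induction hypothesis and monotonicity of $\bigsqcup \alpha$ reduce the problem to bounding $\alpha_j(\mu(\alpha_i))$. Using directedness of $\alpha$, pick $k : I$ with $\alpha_i, \alpha_j \below \alpha_k$; then
\[
  \alpha_j(\mu(\alpha_i)) \below \alpha_k(\mu(\alpha_k)) = \mu(\alpha_k) \below \textstyle\bigsqcup_{l : I}\mu(\alpha_l),
\]
using monotonicity of $\alpha_k$, monotonicity of $\mu$, and the fixed point property \ref{is-fixed-point}. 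Since this bound is subsingleton-valued, truncation elimination handles the existential from directedness, closing the induction.
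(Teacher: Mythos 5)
Your proof is correct but takes a genuinely different route for the Scott continuity of $\mu$, and a slightly different route for part~\ref{is-fixed-point}. For~\ref{is-fixed-point}, you argue directly that the shifted chain $\pa*{f^{n+1}(\bot)}_n$ has the same supremum as the original, whereas the paper first establishes $f(\mu(f)) \below \mu(f)$ and then appeals to~\ref{is-least-fixed-point} to obtain the reverse inequality via antisymmetry; yours is a bit more self-contained for that part. The bigger divergence is in the Scott continuity of $\mu$: the paper reconstructs $\mu$ as the pointwise supremum of the directed family $\iter_n : D^D \to D$, $f \mapsto f^n(\bot)$, inside the exponential $D^{(D^D)}$, observing that each $\iter_n$ is continuous by the closure properties of \cref{continuity-closure,exponential-universal-property} and that suprema of continuous maps in an exponential are again continuous; the inequality you prove by hand never needs to be touched. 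You instead establish monotonicity of $\mu$ directly, reduce to the one-sided inequality $\mu\pa*{\bigsqcup\alpha} \below \bigsqcup_i \mu(\alpha_i)$ via \cref{continuity-criterion}, and close it with an induction on $n$ that uses monotonicity of $\mu$, the fixed-point equation~\ref{is-fixed-point}, and directedness to pick a common bound $\alpha_k$. Both arguments are sound: the paper's exponential trick is more modular and avoids the bookkeeping of the inductive bound, while yours is more elementary and makes the mechanism of the inequality visible. One small remark: when you apply Scott continuity of $\bigsqcup\alpha$ to the family $\pa*{\mu(\alpha_i)}_i$, you are relying on that family being directed, which follows from \cref{image-is-directed} only after monotonicity of $\mu$ is in hand — you have this, but the ordering of steps matters and is worth making explicit.
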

\begin{proof}
  We follow the proof given in~\cite[Theorem~2.1.19]{AbramskyJung1994} and first
  establish~\ref{is-least-fixed-point}. Suppose that \(f(x) \below x\). To show
  that \(\mu(f) \below x\), it suffices to prove that \(f^n(\bot) \below x\) for
  every \(n : \Nat\). But this follows easily by induction on \(n\) and the fact
  that \(f\) is monotone.
  For~\ref{is-fixed-point}, first notice that
  \begin{equation}\label{towards-fixed-point}\tag{\(\dagger\)}
    f(\mu(f)) \equiv f\pa*{\textstyle\bigsqcup_{n : \Nat}f^n(\bot)}
    = \textstyle\bigsqcup_{n : \Nat}f^{n+1}(\bot)
    \below \mu(f)
  \end{equation}
  by Scott continuity of \(f\), proving one of the inequalities.
  But \(f\) is monotone, so \eqref{towards-fixed-point} yields
  \(f(f(\mu(f))) \below f(\mu(f))\), which by~\ref{is-least-fixed-point} implies
  \(\mu(f) \below f(\mu(f))\), so that \(f(\mu(f)) = \mu(f)\) by antisymmetry as
  we set out to prove.
  To see that the assignment \(f \mapsto \mu(f)\) is continuous we will
  reconstruct it as the least upper bound of a family in the exponential
  \(D^{\pa*{D^D}}\).
  First define for every natural number \(n : \Nat\), the function
  \begin{align*}
    \iter_{n} : D^D &\to D\\
    f &\mapsto f^n(\bot)
  \end{align*}
  Observe that \(\iter_n\) can be factored as
  \( D^D \xrightarrow{f \mapsto f^n} D^D \xrightarrow{\text{evaluate at
      \(\bot\)}} D \). %
  By induction on \(n\) and \cref{continuity-closure} the first map is seen to
  be continuous, while the second is continuous by
  \cref{exponential-universal-property}. Hence, the composite, \(\iter_n\) is
  continuous for every \(n : \Nat\) by \cref{continuity-closure}.
  Thus, each \(\iter_n\) is an element of \(D^{\pa*{D^D}}\). Moreover, the
  assignment \(n \mapsto \iter_n\) is directed in \(D^{\pa*{D^D}}\) because if
  \(n \leq m\), then
  \(\iter_n(f) \equiv f^n(\bot) \below f^m(\bot) \equiv \iter_m(f)\) for every
  Scott continuous \(f : D \to D\).
  Hence, we can take the supremum of \(\pa*{\iter}_{n : \Nat}\) in
  \(D^{\pa*{D^D}}\) which is Scott continuous by construction. But suprema are
  calculated pointwise, so we can compute that
  \(\pa*{\bigsqcup \iter}(f) \equiv \bigsqcup_{n : \Nat}f^n(\bot)\),
  establishing the continuity of \(f \mapsto \mu(f)\) and completing the proof.
\end{proof}

In the Scott model of~PCF (\cref{sec:Scott-model-of-PCF}), the \(\mu\) operation
is used to model general recursion in the programming language~PCF. The equation
\(f(\mu(f)) = \mu(f)\) may be regarded as the unfolding of a recursive
definition, while the least element \(\bot\) represents nontermination.

\section{Bilimits}\label{sec:bilimits}

Recall that in a \(\V\)-dcpo \(D\) we can take suprema of directed families
\(\alpha : I \to D\). It is a striking feature of directed complete posets that
this act is reflected in the \emph{category} of dcpos, although it does require
us to specify an appropriate notion of one dcpo being ``below'' another one.
This notion will be exactly that of an embedding projection pair.
The technical results developed in this section will find application in the
construction of Scott's \(D_\infty\), a model of the untyped
\(\lambda\)-calculus, as discussed in \cref{sec:Scott-D-infty}.

A priori it is not clear that \(D_\infty\) should exist in predicative univalent
foundations and it is one of the contributions of this work that this is indeed
possible.
Our construction largely follows the classical development of Scott's original
paper~\cite{Scott1972}, but with some crucial differences.
First of all, we carefully keep track of the universe parameters and try to be
as general as possible. In the particular case of Scott's \(D_\infty\) model of
the untyped \(\lambda\)-calculus, we obtain a \(\U_0\)-dcpo whose carriers lives
in the second universe \(\U_1\).
Secondly, difference arises from proof relevance and these complications are
tackled with techniques in univalent foundations
and~\cref{constant-map-to-set-factors-through-truncation} in particular, as
discussed right before~\cref{kappa-is-constant}.
Finally, we we generalise Scott's treatment from sequential bilimits to directed
bilimits.

\begin{definition}[Deflation]
  An endofunction \(f : D \to D\) on a poset \(D\) is a \emph{deflation} if
  \(f(x) \below x\) for all \(x : D\).%
  \index{deflation}
\end{definition}
\begin{definition}[Embedding-projection pair]%
  \index{embedding-projection pair|textbf}%
  An \emph{embedding-projection pair} from a \(\V\)-dcpo \(D\) to a \(\V\)-dcpo
  \(E\) consists of two Scott continuous functions \(\varepsilon : D \to E\)
  (the~\emph{embedding}) and \(\pi : E \to D\) (the~\emph{projection})
  such~that:
  \begin{enumerate}[(i)]
  \item \(\varepsilon\) is a section of \(\pi\), and
  \item \(\varepsilon \circ \pi\) is a deflation.%
    \qedhere
  \end{enumerate}
\end{definition}

For the remainder of this section, fix the following setup, where we try to be
as general regarding universe levels as we can be.
We fix a directed preorder \((I,\below)\) with \(I : \V\) and \(\below\) takes
values in some universe \(\W\). Now suppose that \((I,\below)\) indexes a family
of \(\V\)-dcpos with embedding-projection pairs between them, i.e.\ we have
\begin{itemize}
\item for every \(i : I\), a \(\V \)-dcpo \(D_i : \DCPO{V}{U}{T}\), and
\item for every \(i,j : I\) with \(i \sqsubseteq j\), an embedding-projection
  pair \(\pa*{\varepsilon_{i,j},\pi_{i,j}}\) from \(D_i\) to \(D_j\).
\end{itemize}
Moreover, we require that the following compatibility conditions hold:
\begin{align}
  &\text{for every \(i : I\), we have \(\varepsilon_{i,i} = \pi_{i,i} = \id\)}; %
  \label{epsilon-pi-id} \\
  &\text{for every \(i \sqsubseteq j \sqsubseteq k\) in \(I\), we have
  \(\varepsilon_{i,k} \sim \varepsilon_{j,k} \circ \varepsilon_{i,j}\) and
    \(\pi_{i,k} \sim \pi_{i,j} \circ \pi_{j,k}\).} %
  \label{epsilon-pi-comms}
\end{align}

\begin{example}
  If \(I \colonequiv \Nat\) with the usual ordering, then we are looking at a
  diagram of \(\V\)-dcpos like this
  \[
    \begin{tikzcd}
      D_0 \ar[r,"\varepsilon_{0,1}", shift left, hookrightarrow] &
      D_1 \ar[l,"\pi_{0,1}", shift left, two heads]
      \ar[r,"\varepsilon_{1,2}", shift left, hookrightarrow ]
      &
      D_2 \ar[l,"\pi_{1,2}", shift left, two heads]
      \ar[r,"\varepsilon_{2,3}", shift left, hookrightarrow ]
      &
      D_3 \ar[l,"\pi_{2,3}", shift left, two heads]
      \ar[r,"\varepsilon_{3,4}", shift left, hookrightarrow ]
      &
      \cdots
      \ar[l,"\pi_{3,4}", shift left, two heads]
    \end{tikzcd}
  \]
  where, for example, we have not pictured
  \(\varepsilon_{1,1} : D_1 \hookrightarrow D_1\) and
  \(\varepsilon_{0,2} : D_0 \hookrightarrow D_2\) explicitly, as they are equal
  to \(\id_{D_1} : D_1 \to D_1\) and the composition of
  \(D_0 \xhookrightarrow{\varepsilon_{0,1}} D_1\) and
  \(D_1\xhookrightarrow{\varepsilon_{1,2}} D_2\), respectively.
\end{example}

The goal is now to construct another \(\V\)-dcpo \(D_\infty\) with
embedding-projections pairs
\(\pa*{\varepsilon_{i,\infty} : D_1 \hookrightarrow D_\infty, {\pi_{i,\infty} :
    D_\infty \to D_i}}\) for every \(i : I\), such that
\(\pa*{D_\infty,\pa*{\varepsilon_{i,\infty}}_{i : I}}\) is the colimit of the
diagram given by \(\pa*{\varepsilon_{i,j}}_{i \below j \text{ in } I}\) and
\(\pa*{D_\infty,\pa*{\pi_{i,\infty}}_{i : I}}\) is the limit of the
diagram given by \(\pa*{\pi_{i,j}}_{i \below j \text{ in } I}\).
In other words,
\(\pa*{D_\infty,\pa*{\varepsilon_{i,\infty}}_{i : I},\pa*{\pi_{i,\infty}}_{i :
    I}}\) is both the colimit and the limit in the category of \(\V\)-dcpos with
embedding-projections pairs between them. We say that it is the \emph{bilimit}.

\begin{definition}[\(D_\infty\)]\label{def:D-infty}%
  \index{bilimit|textbf}\index{colimit}\index{limit}%
  We define a poset \(D_\infty\) as follows. Its carrier is given by dependent
  functions \(\sigma : \Pi_{i : I}D_i\) satisfying
  \(\pi_{i,j}(\sigma_j) = \sigma_i\) whenever \(i \below j\).
  \nomenclature[Dinfty]{$D_\infty$}{bilimit of a directed diagram of dcpos with
    embedding-projection pairs}
  That is, the carrier is the type
  \[
    \sum_{\sigma : \Pi_{i : I} D_i} \prod_{{i,j} : I , i \below j}
    \pi_{i,j}\pa*{\sigma_j} = \sigma_i.
  \]
  Note that this defines a subtype of \(\Pi_{i : I}D_i\) as the condition
  \(\prod_{{i,j} : I , i \below j} \pi_{i,j}\pa*{\sigma_j} = \sigma_i\) is a
  property by \cref{Pi-is-prop} and the fact that each \(D_i\) is a set.

  These functions are ordered pointwise, i.e.\ if
  \(\sigma,\tau : \Pi_{i : I} D_i\), then \(\sigma \below_{D_\infty} \tau\)
  exactly when \(\sigma_i \below_{D_i} \tau_i\) for every \(i : I\).
\end{definition}

\begin{lemma}
  The poset \(D_\infty\) is \(\V\)-directed complete with suprema calculated
  pointwise.
  Paying attention to the universe levels involved, we have
  \(D_\infty : \DCPO{V}{U \sqcup V \sqcup W}{U \sqcup T}\).
\end{lemma}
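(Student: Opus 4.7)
The plan is to build directed suprema in $D_\infty$ componentwise from the directed suprema in each $D_i$, checking that the compatibility condition $\pi_{i,j}(\sigma_j) = \sigma_i$ is preserved by suprema, for which the Scott continuity of the projections is the key ingredient.

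Concretely, I would fix a directed family $\alpha : A \to D_\infty$ with $A : \V$. For each $i : I$, define the componentwise family $\alpha^i : A \to D_i$ by $\alpha^i_a \colonequiv \fst(\alpha_a)(i)$, i.e.\ the $i$-th coordinate of $\alpha_a$ viewed as an element of $\Pi_{i : I} D_i$. This is directed in $D_i$: the domain $A$ is inhabited because $\alpha$ is, and if $a,b : A$, then directedness of $\alpha$ yields $c : A$ with $\alpha_a,\alpha_b \below_{D_\infty} \alpha_c$, which by the pointwise definition of $\below_{D_\infty}$ immediately gives $\alpha^i_a,\alpha^i_b \below_{D_i} \alpha^i_c$. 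Since each $D_i$ is a $\V$-dcpo and $A : \V$, we may form $\sigma_i \colonequiv \bigsqcup \alpha^i \in D_i$ for every $i : I$.

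Next I would verify that $\sigma \colonequiv (\sigma_i)_{i : I}$ lies in $D_\infty$, i.e.\ that $\pi_{i,j}(\sigma_j) = \sigma_i$ whenever $i \below j$. Since $\pi_{i,j}$ is Scott continuous and $\alpha^j$ is directed, we compute
\[
  \pi_{i,j}(\sigma_j) = \pi_{i,j}\pa*{\textstyle\bigsqcup \alpha^j}
  = \textstyle\bigsqcup_{a : A}\pi_{i,j}\pa*{(\alpha_a)_j}
  = \textstyle\bigsqcup_{a : A}(\alpha_a)_i
  = \sigma_i,
\]
where the penultimate equality uses the fact that each $\alpha_a$ itself satisfies the compatibility condition. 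The universal property is then immediate from the pointwise definition of the order: $\sigma$ is an upper bound because $(\alpha_a)_i = \alpha^i_a \below \sigma_i$ for all $a : A$ and $i : I$; and if $\tau : D_\infty$ is any upper bound, then each $\tau_i$ is an upper bound of $\alpha^i$, so $\sigma_i \below \tau_i$ for every $i : I$, giving $\sigma \below_{D_\infty} \tau$.

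For the universe bookkeeping, I would unfold the $\Sigma$-type of \cref{def:D-infty}. The underlying $\Pi_{i : I}D_i$ lives in $\V \sqcup \U$; the compatibility condition contributes a $\Pi$ over $\{(i,j) : \pa*{\Sigma_{i,j : I}\, i \below j}\}$, introducing $\V$ (for the $I$'s) and $\W$ (for the relation), with codomain an equality in $D_i$ which is in $\U$, so the carrier altogether lives in $\V \sqcup \U \sqcup \W$. The pointwise order is a $\Pi$-type over $I : \V$ with codomain in $\T$, explaining the $\T$-universe. I do not expect any serious obstacle: all that is really used is Scott continuity of the $\pi_{i,j}$, pointwise suprema, and the fact that subtypes of sets are sets so that $D_\infty$ is itself a poset with antisymmetry from $\Pi_{i : I}D_i$.
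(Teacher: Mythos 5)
Your proof is correct and takes essentially the same approach as the paper's: define suprema pointwise and use Scott continuity of the projections $\pi_{i,j}$ to show the compatibility condition is preserved. You spell out a few steps the paper leaves implicit (directedness of the componentwise families, verification of the supremum's universal property, and the universe bookkeeping, which the paper asserts but doesn't argue), but the key idea is identical. One small remark on the universe calculation: your analysis of the order gives $\V \sqcup \T$ (a $\Pi$ over $I : \V$ with values in $\T$) rather than the $\U \sqcup \T$ stated in the lemma; these agree in the intended application where $\V \below \U$, but your computation is the more precise one.
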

\begin{proof}
  If \(\alpha : A \to D_\infty\) is a directed family, then the family
  \(\alpha_i : A \to D_i\) given by
  \(\alpha_i(a) \colonequiv \pa*{\alpha(a)}_i\) is directed again, and we define
  the supremum of \(\alpha\) in \(D_\infty\) as the function
  \(i \mapsto \bigsqcup \alpha_i\).
  To see that this indeed defines an element of \(D_\infty\), observe that for
  every \(i,j : I\) with \(i \below j\) we have
  \begin{align*}
    \pi_{i,j}\big(\pa*{\textstyle\bigsqcup \alpha}_j\big)
    &\equiv \pi_{i,j}\pa*{\textstyle\bigsqcup\alpha_j}
    \\ &= \textstyle\bigsqcup \pi_{i,j} \circ \alpha_j
       &&\text{(by Scott continuity of \(\pi_{i,j}\))}
    \\ &\equiv \textstyle\bigsqcup_{a : A}\big(\pi_{i,j}\big(\pa*{\alpha(a)}_j\big)\big)
    \\ &= \textstyle\bigsqcup \alpha_i
       &&\text{(as \(\alpha(a)\) is an element of \(D_\infty\))},
  \end{align*}
  as desired.
\end{proof}

\begin{remark}\label{bilimit-universe-parameters}\index{universe!parameters}%
  We allow for general universe levels here, which is why \(D_\infty\) lives in
  the relatively complicated universe \(\U \sqcup \V \sqcup \W\). In concrete
  examples, the situation often simplifies. E.g., in \cref{sec:Scott-D-infty} we
  find ourselves in the favourable situation described
  in~\cref{exponential-universe-parameters} where \(\V \equiv \W \equiv \U_0\)
  and \(\U \equiv \T \equiv \U_1\), so that we get
  \(D_\infty : \DCPOnum{0}{1}{1}\), as the bilimit of a diagram of dcpos
  \(D_n : \DCPOnum{0}{1}{1}\) indexed by natural numbers.
\end{remark}

\begin{definition}[\(\pi_{i,\infty}\)]\label{pi-infty}
  For every \(i : I\), we define the Scott continuous function
  \(\pi_{i,\infty} : {D_\infty \to D_i}\) by \(\sigma \mapsto \sigma_i\).%
  \index{bilimit!projection}%
  \nomenclature[piiinfty]{$\pi_{i,\infty}$}{projection from \(D_\infty\) to
    \(D_i\)}
\end{definition}

\begin{lemma}\label{pi-infty-is-continuous}
  The map \(\pi_{i,\infty} : D_\infty \to D_i\) is Scott continuous for every
  \(i : I\).
\end{lemma}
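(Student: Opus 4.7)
The plan is to unfold the definitions and observe that Scott continuity of $\pi_{i,\infty}$ follows immediately from the fact that both the order and the directed suprema on $D_\infty$ are defined pointwise.

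First I would verify monotonicity. Given $\sigma, \tau : D_\infty$ with $\sigma \below_{D_\infty} \tau$, this by definition of the order on $D_\infty$ means $\sigma_j \below_{D_j} \tau_j$ for every $j : I$; in particular at $j \colonequiv i$ we obtain $\pi_{i,\infty}(\sigma) \equiv \sigma_i \below_{D_i} \tau_i \equiv \pi_{i,\infty}(\tau)$. By \cref{continuity-criterion} it then suffices to show that $\pi_{i,\infty}\pa*{\bigsqcup \alpha} \below \bigsqcup \pi_{i,\infty} \circ \alpha$ for any directed family $\alpha : A \to D_\infty$ with $A : \V$.

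For this, I would simply compute using the pointwise description of the supremum established in the preceding lemma: writing $\alpha_j(a) \colonequiv (\alpha(a))_j$ for the associated directed family $A \to D_j$, we have $\bigsqcup \alpha = \pa*{j \mapsto \bigsqcup \alpha_j}$, so that
\[
  \pi_{i,\infty}\pa*{\textstyle\bigsqcup \alpha}
  \equiv \pa*{\textstyle\bigsqcup \alpha}_i
  \equiv \textstyle\bigsqcup \alpha_i
  \equiv \textstyle\bigsqcup_{a : A}\pa*{\alpha(a)}_i
  \equiv \textstyle\bigsqcup_{a : A} \pi_{i,\infty}(\alpha(a)),
\]
which establishes the required inequality (in fact an equality) and completes the proof.

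There is no real obstacle here, since the lemma is essentially a bookkeeping consequence of the construction of $D_\infty$ in \cref{def:D-infty} and the pointwise description of its directed suprema given in the lemma immediately preceding \cref{pi-infty}. The only thing worth double-checking is that one is entitled to form $\bigsqcup \alpha_i$ in $D_i$, which is guaranteed by $A : \V$ and the directedness of $\alpha_i$ (which is inherited from the directedness of $\alpha$, exactly as used in the previous lemma).
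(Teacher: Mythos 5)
Your proposal is correct and follows exactly the same idea as the paper, namely that suprema in $D_\infty$ are calculated pointwise and $\pi_{i,\infty}$ merely selects the $i$-th component. The paper states this in one sentence without the monotonicity detour via \cref{continuity-criterion}, but the substance is identical.
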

\begin{proof}
  This holds because suprema in \(D_\infty\) are calculated pointwise and
  \(\pi_{i,\infty}\) selects the \(i\)-th component.
\end{proof}

While we could closely follow~\cite{Scott1972} up until this point, we will now
need a new idea to proceed.
Our goal is to define maps \(\varepsilon_{i,\infty} : D_i \to D_\infty\) for
every \(i : I\) so that \(\varepsilon_{i,\infty}\) and \(\pi_{i,\infty}\) form
an embedding-projection pair.
We give an outline of the idea for defining this map
\(\varepsilon_{i,\infty}\). For~an arbitrary element \(x : D_i\), we need to
construct \(\sigma : D_\infty\) at component \(j : I\), say. If we had \(k : I\)
such that \(i,j \below k\), then we could define \(\sigma_j : D_j\) by
\(\pi_{j,k}\pa*{\varepsilon_{i,k}(x)}\).
Now semidirectedness of \(I\) tells us that there exists such a \(k : I\), so
the point is to somehow make use of this propositionally truncated fact. This is
where~\cref{constant-map-to-set-factors-through-truncation} comes
in.\index{propositional truncation} %
We define a map
\(\kappa_{i,j}^x : \pa*{\Sigma_{k : I}\,\pa{i \below k} \times \pa{j \below k}}
\to D_j\) by sending \(k\) to \(\pi_{j,k}\pa*{\varepsilon_{i,k}(x)}\) and show
it to be constant, so that it factors through the truncation of its domain.
In the special case that \(I \equiv \Nat\), as in~\cite{Scott1972}, we could
simply take \(k\) to be the sum of the natural numbers \(i\) and \(j\), but this
does not work in the more general directed case, of course.

\begin{definition}[\(\kappa_{i,j}^x\)]\label{def:kappa}
  For every \(i,j : I\) and \(x : D_i\) we define the function
  \[
    \kappa_{i,j}^x : \pa*{\Sigma_{k : I}\,\pa{i \below k} \times \pa{j \below
        k}} \to D_j
  \]
  by mapping \(k : I\) with \(i,j \below k\) to
  \(\pi_{j,k}\pa*{\varepsilon_{i,k}(x)}\).
\end{definition}

\begin{lemma}\label{kappa-is-constant}
  The function \(\kappa_{i,j}^x\) is constant for ever \(i,j : I\) and
  \(x : D_i\).
  \nomenclature[kappaijx]{$\kappa_{i,j}^x$}{auxiliary function for defining
    \(\rho_{i,j} : D_i \to D_j\)}
  Hence, \(\kappa_{i,j}^x\) factors through
  \(\exists_{k : I}\,\pa{i \below k}\times\pa{j \below k}\) by
  \cref{constant-map-to-set-factors-through-truncation}.
\end{lemma}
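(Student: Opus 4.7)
The plan is to use that constancy of $\kappa_{i,j}^x$ is a property (because each $D_j$ is a set by \cref{posets-are-sets}), so equalities in $D_j$ are propositions and we may eliminate propositional truncations freely when proving them.

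Fix $i,j : I$ and $x : D_i$, and let $(k,p,q)$ and $(k',p',q')$ be two arbitrary inputs of $\kappa_{i,j}^x$. We must show
\[
  \pi_{j,k}\pa*{\varepsilon_{i,k}(x)} = \pi_{j,k'}\pa*{\varepsilon_{i,k'}(x)}.
\]
Since this is an equation in the set $D_j$, it is a proposition, so I may use directedness of $(I,\below)$ to extract an element $m : I$ with $k \below m$ and $k' \below m$. My plan is then to prove that both sides are equal to the common value $\pi_{j,m}\pa*{\varepsilon_{i,m}(x)}$.

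For the left-hand side, I will use the compatibility conditions from \eqref{epsilon-pi-comms}: since $i \below k \below m$ we have $\varepsilon_{i,m} \sim \varepsilon_{k,m} \circ \varepsilon_{i,k}$, and since $j \below k \below m$ we have $\pi_{j,m} \sim \pi_{j,k} \circ \pi_{k,m}$. Therefore
\[
  \pi_{j,m}\pa*{\varepsilon_{i,m}(x)}
  = \pi_{j,k}\pa*{\pi_{k,m}\pa*{\varepsilon_{k,m}\pa*{\varepsilon_{i,k}(x)}}}
  = \pi_{j,k}\pa*{\varepsilon_{i,k}(x)},
\]
where the final step uses that $(\varepsilon_{k,m},\pi_{k,m})$ is an embedding-projection pair, so $\pi_{k,m} \circ \varepsilon_{k,m} = \id_{D_k}$. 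The same argument with $k$ replaced by $k'$ gives $\pi_{j,m}\pa*{\varepsilon_{i,m}(x)} = \pi_{j,k'}\pa*{\varepsilon_{i,k'}(x)}$, completing the equality.

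I do not anticipate any substantial obstacle: the only delicate point is the legitimacy of eliminating the propositional truncation coming from directedness of $I$, which is precisely justified by the fact that we are proving an equality in a set. After that, the argument is a purely algebraic manipulation using the compatibility conditions \eqref{epsilon-pi-comms} and the section property of embedding-projection pairs. The final statement that $\kappa_{i,j}^x$ factors through $\exists_{k : I}\pa*{i \below k}\times\pa*{j \below k}$ is then an immediate application of \cref{constant-map-to-set-factors-through-truncation}.
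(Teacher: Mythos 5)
Your proof is correct and takes essentially the same approach as the paper: both use semidirectedness of \(I\) to obtain a common upper bound \(m\) of the two witnessing indices (eliminating the truncation because the goal is an equation in the set \(D_j\)), and then the same algebraic chain involving the compatibility conditions \eqref{epsilon-pi-comms} and the section property \(\pi_{k,m}\circ\varepsilon_{k,m}=\id\). The only cosmetic difference is that you show both sides equal the middle term \(\pi_{j,m}(\varepsilon_{i,m}(x))\), whereas the paper writes a single chain from one side to the other.
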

\begin{proof}
  If we have \(k_1,k_2 : I\) with \(i \below k_1,k_2\) and \(j \below k_1,k_2\),
  then by semidirectedness of \(I\), there exists some \(k : K\) with
  \(k_1,k_2 \below k\) and hence,
  \begin{align*}
    \pa*{\pi_{j,k_1} \circ \varepsilon_{i,k_1}} (x)
    &= \pa*{\pi_{j,k_1} \circ \pi_{k_1,k} \circ \varepsilon_{k_1,k} \circ \varepsilon_{i,k_1}} (x)
      &&\text{(since \(\varepsilon_{k_1,k}\) is a section of \(\pi_{k_1,k}\))}
    \\
    &= \pa*{\pi_{j,k} \circ \varepsilon_{i,k}}(x)
      &&\text{(by \cref{epsilon-pi-comms})}
    \\
    &= \pa*{\pi_{j,k} \circ \pi_{k_2,k} \circ \varepsilon_{k_2,k} \circ \varepsilon_{i,k_2}} (x)
      &&\text{(since \(\varepsilon_{k_2,k}\) is a section of \(\pi_{k_2,k}\))}
    \\
    &= \pa*{\pi_{j,k_2} \circ \varepsilon_{i,k_2}}(x)
      &&\text{(by \cref{epsilon-pi-comms})},
  \end{align*}
  proving that \(\kappa_{i,j}^x\) is constant.
\end{proof}

\begin{definition}[\(\rho_{i,j}\)]
  For every \(i,j : I\), the type
  \(\exists_{k : I}\,\pa{i \below k} \times \pa{j \below k}\) has an element
  since \((I,\below)\) is directed. Thus, \cref{kappa-is-constant} tells us that
  we have a function \(\rho_{i,j} : D_i \to D_j\) such that if \(i,j \below k\),
  then the equation
  \begin{equation}\label{rho-eq}
    \rho_{i,j}(x) = \kappa_{i,j}^x(k) \equiv \pi_{j,k}\pa*{\varepsilon_{i,k}(x)}
  \end{equation}
  holds for every \(x : D_i\).
  \nomenclature[rhoij]{\(\rho_{i,j}\)}{auxiliary map from \(D_i\) to \(D_j\) for
    defining the embedding \(\varepsilon_{i,\infty} : D_i \to D_\infty\)}
\end{definition}

\begin{definition}[\(\varepsilon_{i,\infty}\)]\label{epsilon-infty}
  The map \(\rho\) induces a map \(\varepsilon_{i,\infty} : D_i \to D_\infty\)
  by sending \(x : D_i\) to the function \(\lambdadot{j : I}{\rho_{i,j}(x)}\).%
  \index{bilimit!embedding}%
  \nomenclature[epsiloniinfty]{\(\varepsilon_{i,\infty}\)}{embedding from \(D_i\)
    to \(D_\infty\)}
  To see that this is well-defined, assume that we have \(j_1 \below j_2\) in
  \(J\) and \(x : D_i\). We have to show that
  \(\pi_{j_1,j_2}\pa*{\pa*{\varepsilon_{i,\infty}(x)}_{j_2}} =
  \pa*{\varepsilon_{i,\infty}(x)}_{j_1}\).
  By semidirectedness of \(I\) and the fact that are looking to prove a
  proposition, we may assume to have \(k : I\) with \(i \below k\) and
  \(j_1 \below j_2 \below k\). Then,
  \begin{align*}
    \pi_{j_1,j_2}\pa*{\pa*{\varepsilon_{i,\infty}(x)}_{j_2}}
    &\equiv \pi_{j_1,j_2}\pa*{\rho_{i,j_2}(x)} \\
    &= \pi_{j_1,j_2}\pa*{\pi_{j_2,k}\pa*{\varepsilon_{i,k}(x)}}
    &&\text{(by \cref{rho-eq})} \\
    &= \pi_{j_1,k}\pa*{\varepsilon_{i,k}(x)}
    &&\text{(by \cref{epsilon-pi-comms})}
    \\
    &= \rho_{i,j_1}(x)
    &&\text{(by \cref{rho-eq})} \\
    &\equiv \pa*{\varepsilon_{i,\infty}(x)}_{j_1}
  \end{align*}
  as desired.
\end{definition}

This completes the definition of \(\varepsilon_{i,\infty}\). From this point on,
we can typically work with it by using~\cref{rho-eq} and the fact that
\(\pa*{\varepsilon_{i,\infty}(x)}_j\) is defined as \(\rho_{i,j}(x)\).

\begin{lemma}\label{rho-is-continuous}
  The map \(\rho_{i,j} : D_i \to D_j\) is Scott continuous for every \(i,j : I\).
\end{lemma}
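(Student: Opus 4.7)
The plan is to exploit the fact that Scott continuity is a property (a subsingleton), so that we may eliminate from propositional truncations when trying to prove it.

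First I would fix $i, j : I$ and observe that being Scott continuous is a proposition, since it is given by a product (quantifying over directed families) of equations in the set $D_j$, and equations in a set form subsingletons. Next, since $(I, \below)$ is directed (in particular semidirected), we have an element of the propositional truncation $\exists_{k : I}\,(i \below k) \times (j \below k)$. Because our goal is a proposition, we may apply the universal property of the propositional truncation and assume we have a specified $k : I$ with $i \below k$ and $j \below k$.

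With such a $k$ in hand, equation (\ref{rho-eq}) yields $\rho_{i,j}(x) = \pi_{j,k}(\varepsilon_{i,k}(x))$ for every $x : D_i$. By function extensionality this gives $\rho_{i,j} = \pi_{j,k} \circ \varepsilon_{i,k}$ as maps. Now $\varepsilon_{i,k} : D_i \to D_k$ and $\pi_{j,k} : D_k \to D_j$ are Scott continuous since they are the components of embedding-projection pairs (by the standing assumption of this section). Hence their composite is Scott continuous by the closure of Scott continuity under composition (\cref{continuity-closure}\ref{comp-is-continuous}), which concludes the proof.

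I do not expect any substantial obstacle: the only subtlety is the standard move of using that the goal is propositional in order to extract a witness from the directedness of $I$, and this is precisely the pattern already used to define $\rho_{i,j}$ via \cref{kappa-is-constant}. Once the witness $k$ is available, the argument reduces to reading off the defining equation and invoking the previously established closure property.
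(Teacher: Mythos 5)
Your proof is correct and follows essentially the same route as the paper's: both use the fact that being Scott continuous is a proposition to extract a specified \(k : I\) with \(i, j \below k\) from the directedness of \(I\), then apply \cref{rho-eq} to identify \(\rho_{i,j}\) with \(\pi_{j,k} \circ \varepsilon_{i,k}\) and conclude by closure of Scott continuity under composition. The only cosmetic difference is that you invoke function extensionality to pass from pointwise equality to equality of maps, whereas the paper works directly with the pointwise equality \(\rho_{i,j} \sim \pi_{j,k} \circ \varepsilon_{i,k}\); either route is fine.
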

\begin{proof}
  Since we are proving a property, we may use semidirectedness of \(I\) to get
  \(k : I\) with \(i,j \below k\). Then,
  \(\rho_{i,j} \sim \pi_{j,k} \circ \varepsilon_{i,k}\) by \cref{rho-eq}. But
  the functions \(\pi_{j,k}\) and \(\varepsilon_{i,k}\) are continuous and
  continuity is preserved by function composition, so \(\rho_{i,j}\) is
  continuous, as we wished to show.
\end{proof}

\begin{lemma}\label{epsilon-infty-is-continuous}
  The map \(\varepsilon_{i,\infty} : D_i \to D_\infty\) is Scott continuous for
  every \(i : I\).
\end{lemma}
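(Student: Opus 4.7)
The plan is to deduce Scott continuity of $\varepsilon_{i,\infty}$ directly from the Scott continuity of the pointwise components $\rho_{i,j}$ established in \cref{rho-is-continuous}, together with the fact that suprema in $D_\infty$ are computed pointwise.

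First, I would check monotonicity: if $x \below y$ in $D_i$, then since $\rho_{i,j}$ is Scott continuous (and hence monotone by \cref{continuous-implies-monotone}) for every $j : I$, we get $\pa*{\varepsilon_{i,\infty}(x)}_j \equiv \rho_{i,j}(x) \below \rho_{i,j}(y) \equiv \pa*{\varepsilon_{i,\infty}(y)}_j$, and so $\varepsilon_{i,\infty}(x) \below \varepsilon_{i,\infty}(y)$ by the definition of the order on $D_\infty$. Then, appealing to \cref{continuity-criterion}, it suffices to show that for every directed family $\alpha : A \to D_i$, we have $\varepsilon_{i,\infty}\pa*{\bigsqcup \alpha} \below \bigsqcup \varepsilon_{i,\infty} \circ \alpha$ in $D_\infty$.

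Since the order on $D_\infty$ is pointwise, this inequality reduces to showing, for each $j : I$, that $\pa*{\varepsilon_{i,\infty}\pa*{\bigsqcup \alpha}}_j \below \pa*{\bigsqcup \varepsilon_{i,\infty} \circ \alpha}_j$. Because suprema in $D_\infty$ are also calculated pointwise, the right-hand side equals $\bigsqcup_{a : A} \pa*{\varepsilon_{i,\infty}(\alpha(a))}_j \equiv \bigsqcup_{a : A} \rho_{i,j}(\alpha(a))$, while the left-hand side unfolds to $\rho_{i,j}\pa*{\bigsqcup \alpha}$. The required inequality (in fact an equality) is then precisely Scott continuity of $\rho_{i,j}$, which is \cref{rho-is-continuous}.

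There is no real obstacle here since all the work has been done in setting up $\rho_{i,j}$ via the factorisation of the constant map $\kappa_{i,j}^x$ through the propositional truncation; the only minor point worth noting is that one should confirm that $\varepsilon_{i,\infty} \circ \alpha : A \to D_\infty$ is itself directed so that the supremum $\bigsqcup \varepsilon_{i,\infty} \circ \alpha$ on the right-hand side is well-defined, but this is immediate from the monotonicity of $\varepsilon_{i,\infty}$ established at the outset together with \cref{image-is-directed}.
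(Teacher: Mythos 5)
Your proof is correct and relies on the same essential idea as the paper's: reducing to the Scott continuity of each $\rho_{i,j}$ (\cref{rho-is-continuous}) together with the pointwise computation of order and suprema in $D_\infty$. The paper verifies the equality $\varepsilon_{i,\infty}\pa*{\bigsqcup\alpha} = \bigsqcup(\varepsilon_{i,\infty}\circ\alpha)$ directly as a chain of pointwise identifications, whereas you route through monotonicity and \cref{continuity-criterion}; this is a minor presentational difference, and your closing remark confirming that $\varepsilon_{i,\infty}\circ\alpha$ is directed is a harmless and reasonable precaution.
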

\begin{proof}
  If \(\alpha : A \to D_i\) is directed, then for every \(j : I\) we have
  \begin{align*}
    \pa*{\varepsilon_{i,\infty}\pa*{\textstyle\bigsqcup \alpha}}_j
    &\equiv \rho_{i,j} \pa*{\textstyle\bigsqcup \alpha} \\
    &= \textstyle\bigsqcup \rho_{i,j} \circ \alpha
    &&\text{(by \cref{rho-is-continuous})} \\
    &\equiv \textstyle\bigsqcup_{a : A} \pa*{\varepsilon_{i,\infty}\pa*{\alpha(a)}}_j \\
    &\equiv \pa*{\textstyle\bigsqcup \pa*{\varepsilon_{i,\infty} \circ \alpha}}_j
      &&\text{(as suprema in \(D_\infty\) are calculated pointwise)}.
  \end{align*}
  Hence,
  \(\varepsilon_{i,\infty}\pa*{\bigsqcup \alpha} = \bigsqcup
  \pa*{\varepsilon_{i,\infty} \circ \alpha}\) and \(\varepsilon_{i,\infty}\) is
  seen to be Scott continuous.
\end{proof}

\begin{theorem}\label{epsilon-pi-infty-ep-pair}%
  \index{bilimit!embedding-projection pair}%
  For every \(i : I\), the pair
  \(\pa*{\varepsilon_{i,\infty},\pi_{i,\infty}}\) is an embedding-projection
  pair from \(D_i\) to \(D_\infty\).
\end{theorem}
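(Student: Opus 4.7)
The plan is to verify the three defining conditions of an embedding-projection pair in turn, using the already-established Scott continuity of $\varepsilon_{i,\infty}$ and $\pi_{i,\infty}$ (\cref{epsilon-infty-is-continuous,pi-infty-is-continuous}) for free. So two things remain: (a) showing that $\varepsilon_{i,\infty}$ is a section of $\pi_{i,\infty}$, and (b) showing that $\varepsilon_{i,\infty} \circ \pi_{i,\infty}$ is a deflation on $D_\infty$.

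For (a), I would compute: for any $x : D_i$,
\[
  \pi_{i,\infty}\pa*{\varepsilon_{i,\infty}(x)} \equiv \pa*{\varepsilon_{i,\infty}(x)}_i \equiv \rho_{i,i}(x).
\]
Since $i \below i$ by reflexivity, \cref{rho-eq} with $k \colonequiv i$ gives $\rho_{i,i}(x) = \pi_{i,i}\pa*{\varepsilon_{i,i}(x)}$, which equals $x$ by \eqref{epsilon-pi-id}. Hence $\pi_{i,\infty} \circ \varepsilon_{i,\infty} \sim \id_{D_i}$.

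For (b), let $\sigma : D_\infty$ be arbitrary. Since $\below_{D_\infty}$ is computed componentwise, I need to show $\pa*{\varepsilon_{i,\infty}\pa*{\sigma_i}}_j \below \sigma_j$ for every $j : I$. This is a subsingleton, so by directedness of $I$ I may assume to have $k : I$ with $i,j \below k$. Then, using \cref{rho-eq} and the coherence condition defining $D_\infty$ (which gives $\sigma_i = \pi_{i,k}(\sigma_k)$),
\[
  \pa*{\varepsilon_{i,\infty}\pa*{\sigma_i}}_j
  \equiv \rho_{i,j}(\sigma_i)
  = \pi_{j,k}\pa*{\varepsilon_{i,k}(\sigma_i)}
  = \pi_{j,k}\pa*{\varepsilon_{i,k}\pa*{\pi_{i,k}(\sigma_k)}}.
\]
Now $\pa*{\varepsilon_{i,k},\pi_{i,k}}$ is an embedding-projection pair by assumption, so $\varepsilon_{i,k} \circ \pi_{i,k}$ is a deflation, giving $\varepsilon_{i,k}\pa*{\pi_{i,k}(\sigma_k)} \below \sigma_k$. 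Applying the monotone map $\pi_{j,k}$ and then using the coherence condition again (namely $\pi_{j,k}(\sigma_k) = \sigma_j$) yields $\pa*{\varepsilon_{i,\infty}(\sigma_i)}_j \below \sigma_j$.

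I do not expect any real obstacle here: both conditions reduce, after unfolding the definitions of $\varepsilon_{i,\infty}$, $\pi_{i,\infty}$ and $\rho_{i,j}$, to facts about the underlying embedding-projection pairs $\pa*{\varepsilon_{i,k},\pi_{i,k}}$ and the coherence condition defining elements of $D_\infty$. The only subtlety is in (b), where one must remember to exploit directedness of $I$ to pick a witness $k$ dominating both $i$ and $j$; this is legitimate since $\below_{D_j}$ is subsingleton-valued.
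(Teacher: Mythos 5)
Your proposal is correct and follows essentially the same route as the paper: invoke the continuity lemmas, establish the section identity via $\rho_{i,i}(x) = \pi_{i,i}(\varepsilon_{i,i}(x)) = x$ using \cref{rho-eq} and \eqref{epsilon-pi-id}, and prove deflation componentwise by choosing $k$ dominating $i$ and $j$, unfolding via \cref{rho-eq}, and passing through the deflation of $\varepsilon_{i,k} \circ \pi_{i,k}$ under the monotone $\pi_{j,k}$. (In fact your statement that $\varepsilon_{i,k}\circ\pi_{i,k}$ is the deflation is the correct one; the paper's prose has a small slip writing the composite in the wrong order.)
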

\begin{proof}
  Scott continuity of both maps is given by
  \cref{pi-infty-is-continuous,epsilon-infty-is-continuous}. To see that
  \(\varepsilon_{i,\infty}\) is a section of \(\pi_{i,\infty}\), observe that
  for every \(x : D_i\), we have
  \begin{align*}
    \pi_{i,\infty}\pa*{\varepsilon_{i,\infty}(x)}
    &\equiv \pa*{\varepsilon_{i,\infty}(x)}_i \\
    &\equiv \rho_{i,i}(x) \\
    &= \pi_{i,i}\pa*{\varepsilon_{i,i}(x)}
      &&\text{(by \cref{rho-eq})}
    \\
    &\equiv x &&\text{(by \cref{epsilon-pi-id})},
  \end{align*}
  so that \(\varepsilon_{i,\infty}\) is indeed a section of \(\pi_{i,\infty}\).
  It remains to prove that
  \(\varepsilon_{i,\infty}\pa*{\pi_{i,\infty}\pa{\sigma}} \below \sigma\) for
  every \(\sigma : D_\infty\). The order is given pointwise, so let \(j : I\) be
  arbitrary and since we are proving a proposition, assume that we have
  \(k : I\) with \(i,j \below k\). Then,
  \begin{align*}
    \pa*{\varepsilon_{i,\infty}\pa*{\pi_{i,\infty}(\sigma)}}_j
    &\equiv \pa*{\varepsilon_{i,\infty}\pa{\sigma_i}}_j \\
    &\equiv \rho_{i,j}\pa*{\sigma_i} \\
    &= \pi_{j,k}\pa*{\varepsilon_{i,k}\pa*{\sigma_i}}
    &&\text{(by \cref{rho-eq})}
    \\
    &= \pi_{j,k}\pa*{\varepsilon_{i,k}\pa*{\pi_{i,k}\pa*{\sigma_k}}}
    &&\text{(since \(\sigma\) is an element of \(D_\infty\))}
    \\
    \shortintertext{But \(\pi_{i,k} \circ \varepsilon_{i,k}\) is deflationary and \(\pi_{j,k}\) is monotone, so}
    &\below \pi_{j,k}\pa*{\sigma_k}
    \\
    &= \sigma_j
    &&\text{(since \(\sigma\) is an element of \(D_\infty\))},
  \end{align*}
  finishing the proof.
\end{proof}

\begin{lemma}\label{epsilon-pi-infty-commutes}
  The maps \(\pi_{i,\infty}\) and \(\varepsilon_{i,\infty}\) respectively
  commute with \(\pi_{i,j}\) and \(\varepsilon_{i,j}\) whenever \(i \below j\),
  viz.\ the diagrams
  \[
    \begin{tikzcd}
      D_\infty \ar[dr,"\pi_{j,\infty}"'] \ar[rr,"\pi_{i,\infty}"] & & D_i \\
        & D_j \ar[ur, "\pi_{i,j}"']
    \end{tikzcd}
    \quad
    \quad
    \quad
    \begin{tikzcd}
      D_i \ar[dr,"\varepsilon_{i,j}"'] \ar[rr,"\varepsilon_{i,\infty}"] & & D_\infty \\
        & D_j \ar[ur, "\varepsilon_{j,\infty}"']
    \end{tikzcd}
  \]
  commute for all \(i,j : I\) with \(i \below j\).
\end{lemma}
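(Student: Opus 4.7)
The first diagram is essentially immediate from the definition of $D_\infty$. Given $\sigma : D_\infty$, we have $\pi_{i,\infty}(\sigma) \equiv \sigma_i$ and $\pi_{j,\infty}(\sigma) \equiv \sigma_j$ by the definition of the projections in~\cref{pi-infty}. Hence $\pi_{i,j}(\pi_{j,\infty}(\sigma)) = \pi_{i,j}(\sigma_j) = \sigma_i$, where the final equality is exactly the coherence condition that makes $\sigma$ an element of $D_\infty$ (recall~\cref{def:D-infty}), using our assumption that $i \below j$. So the first triangle commutes pointwise in $D_\infty$, which suffices by function extensionality.

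For the second diagram, fix $x : D_i$. Since $D_\infty$ is a subtype of $\Pi_{k : I} D_k$, it suffices to show $\pa*{\varepsilon_{i,\infty}(x)}_k = \pa*{\varepsilon_{j,\infty}(\varepsilon_{i,j}(x))}_k$ for every $k : I$, i.e.\ $\rho_{i,k}(x) = \rho_{j,k}(\varepsilon_{i,j}(x))$. This is a proposition, so by the directedness of $(I,\below)$ we may assume given $\ell : I$ with $j \below \ell$ and $k \below \ell$; note that then $i \below \ell$ as well since $i \below j$.

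Now we apply~\cref{rho-eq} on both sides: on the left, with witness $\ell$ (using $i \below \ell$ and $k \below \ell$), we get
\[
  \rho_{i,k}(x) = \pi_{k,\ell}\pa*{\varepsilon_{i,\ell}(x)},
\]
and on the right, with the same witness $\ell$, we get
\[
  \rho_{j,k}\pa*{\varepsilon_{i,j}(x)} = \pi_{k,\ell}\pa*{\varepsilon_{j,\ell}\pa*{\varepsilon_{i,j}(x)}}.
\]
The two right-hand sides are equal by the compatibility condition~\cref{epsilon-pi-comms}, which gives $\varepsilon_{i,\ell} \sim \varepsilon_{j,\ell} \circ \varepsilon_{i,j}$ from $i \below j \below \ell$. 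This concludes the proof.

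There is no real obstacle here; the only subtlety is that unfolding $\varepsilon_{i,\infty}$ requires manipulating the factored map $\rho_{i,j}$, so we must use~\cref{rho-eq} (choosing an appropriate witness) rather than trying to compute with the raw definition of $\kappa_{i,j}^x$ through the truncation. Directedness of $I$ gives us the common upper bound $\ell$ we need for both applications of~\cref{rho-eq} simultaneously, which is what makes a single coherent choice of witness possible.
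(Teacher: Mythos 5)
Your proof is correct and takes essentially the same approach as the paper: the first triangle follows immediately from the coherence condition defining $D_\infty$, and the second is verified componentwise by using semidirectedness to obtain a common upper bound, applying \cref{rho-eq} with that witness, and then invoking \cref{epsilon-pi-comms}. The only difference is cosmetic presentation (you apply \cref{rho-eq} symmetrically to both sides, where the paper chains the equalities), and you helpfully make explicit the transitivity step $i \below j \below \ell \Rightarrow i \below \ell$, which the paper leaves implicit.
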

\begin{proof}
  If \(i \below j\) and \(\sigma : D_\infty\) is arbitrary, then
  \[
    \pi_{i,j}\pa*{\pi_{j,\infty}\pa{\sigma}} \equiv \pi_{i,j}\pa*{\sigma_j} =
    \sigma_i
  \]
  precisely because \(\sigma\) is an element of \(D_\infty\), which proves the
  commutativity of the first diagram.
  For the second, let \(x : D_i\) be arbitrary and we compare
  \(\pa*{\varepsilon_{j,\infty}\pa*{\varepsilon_{i,j}(x)}}\) and
  \(\varepsilon_{i,\infty}(x)\) componentwise. So let \(j' : I\) be
  arbitrary. Since we are proving a proposition, we may assume to have \(k : I\)
  with \(j,j' \below k\) by semidirectedness of \(I\). We now calculate that
  \begingroup
  \allowdisplaybreaks
  \begin{align*}
    \pa*{\varepsilon_{j,\infty}\pa*{\varepsilon_{i,j}(x)}}_{j'}
    &\equiv \rho_{j,j'}\pa*{\varepsilon_{i,j}(x)} \\
    &= \pi_{j',k}\pa*{\varepsilon_{j,k}\pa*{\varepsilon_{i,j}(x)}}
    &&\text{(by \cref{rho-eq})} \\
    &= \pi_{j',k}\pa*{\varepsilon_{i,k}(x)}
    &&\text{(by \cref{epsilon-pi-comms})} \\
    &= \rho_{i,j'}(x)
    &&\text{(by \cref{rho-eq})} \\
    &\equiv \pa*{\varepsilon_{i,\infty}(x)}_{j'}
  \end{align*}
  \endgroup
  as desired.
\end{proof}

\begin{theorem}\label{limit}\index{limit}%
  The \(\V\)-dcpo \(D_\infty\) with the maps \(\pa*{\pi_{i,\infty}}_{i : I}\) is
  the limit of the diagram
  \(\pa*{\pa*{D_i}_{i : I} , \pa*{\pi_{i,j}}_{i \below j}}\).
  That is, given a \(\V \)-dcpo \(E : \DCPO{V}{U'}{T'}\) and Scott continuous
  functions \(f_i : E \to D_i\) for every \(i : I\) such that the diagram
  \begin{equation}\label{fs-are-cone}
    \begin{tikzcd}
      E \ar[dr,"f_j"'] \ar[rr,"f_i"] & & D_i \\
      & D_j \ar[ur,"\pi_{i,j}"']
    \end{tikzcd}
  \end{equation}
  commutes for every \(i \below j\),
  we have a unique Scott continuous function \(f_\infty : E \to D_\infty\) making
  the diagram
  \begin{equation}\label{f-infty}
    \begin{tikzcd}
      E \ar[dr,dashed,"f_\infty"'] \ar[rr,"f_i"] & & D_i \\
      & D_\infty \ar[ur,"\pi_{i,\infty}"']
    \end{tikzcd}
  \end{equation}
  commute for every \(i : I\).
\end{theorem}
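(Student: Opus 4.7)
The plan is to define $f_\infty$ in the only way the commutation requirement $\pi_{i,\infty} \circ f_\infty = f_i$ allows, namely pointwise: for $y : E$, set
\[
  f_\infty(y) \colonequiv \lambdadot{i : I}{f_i(y)} : \Pi_{i : I}D_i.
\]
The first task is to check that $f_\infty(y)$ is actually an element of $D_\infty$, i.e.\ that $\pi_{i,j}\pa*{(f_\infty(y))_j} = (f_\infty(y))_i$ for every $i \below j$. Unfolding the definitions, this amounts to $\pi_{i,j}(f_j(y)) = f_i(y)$, which is exactly the hypothesis~\eqref{fs-are-cone} evaluated at $y$.

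Next I would verify Scott continuity of $f_\infty$. Since suprema in $D_\infty$ are calculated pointwise (by the construction in~\cref{def:D-infty}) and each $f_i$ is Scott continuous, for any directed $\alpha : A \to E$ and any $i : I$ we have
\[
  \pa*{f_\infty\pa*{\textstyle\bigsqcup\alpha}}_i \equiv f_i\pa*{\textstyle\bigsqcup \alpha} = \textstyle\bigsqcup f_i \circ \alpha \equiv \textstyle\bigsqcup_{a : A}\pa*{f_\infty(\alpha(a))}_i \equiv \pa*{\textstyle\bigsqcup f_\infty \circ \alpha}_i,
\]
and hence $f_\infty\pa*{\bigsqcup\alpha} = \bigsqcup f_\infty \circ \alpha$. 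The commutativity of~\eqref{f-infty} is immediate from the definition, since $\pi_{i,\infty}(f_\infty(y)) \equiv (f_\infty(y))_i \equiv f_i(y)$.

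For uniqueness, suppose $g : E \to D_\infty$ is another Scott continuous map with $\pi_{i,\infty} \circ g = f_i$ for every $i : I$. Then for every $y : E$ and every $i : I$ we get $(g(y))_i \equiv \pi_{i,\infty}(g(y)) = f_i(y) \equiv (f_\infty(y))_i$. By function extensionality and the fact that $D_\infty$ is a subtype of $\Pi_{i : I}D_i$ (so equality there reduces to equality of the first components, cf.\ \cref{prop-subtype}), we conclude $g(y) = f_\infty(y)$, and then $g = f_\infty$ by function extensionality again. There is no real obstacle here; the construction of $D_\infty$ was designed precisely so that the dependent-function description makes it the limit on the nose, and all verifications are routine once one observes that suprema and the order on $D_\infty$ are pointwise.
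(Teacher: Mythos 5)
Your proposal is correct and follows essentially the same approach as the paper: define $f_\infty$ pointwise by $f_\infty(y) \colonequiv \lambdadot{i : I}{f_i(y)}$, verify membership in $D_\infty$ via the cone condition~\eqref{fs-are-cone}, and note Scott continuity follows because suprema in $D_\infty$ are computed pointwise. The paper handles uniqueness a bit more tersely (observing at the outset that~\eqref{f-infty} forces the pointwise definition), whereas you spell it out as a separate step, but the underlying argument is identical.
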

\begin{proof}
  Note that \cref{f-infty} dictates that we must have
  \(\pa*{f_\infty(y)}_i = f_i(y)\) for every \(i : I\). Hence, we define
  \(f_\infty : E \to D_\infty\) as
  \(f_\infty(y) \colonequiv \lambdadot{i : I}{f_i(y)}\), which is Scott
  continuous because each \(f_i\) is and suprema are calculated pointwise in
  \(D_\infty\). To see that \(f_\infty\) is well-defined, i.e.\ that
  \(f_\infty(y)\) is indeed an element of \(D_\infty\), observe that for every
  \(i \below j\), the equation
  \(\pi_{i,j}\pa*{\pa*{f_\infty(y)}_j} \equiv \pi_{i,j}\pa*{f_j(y)} = f_i(y)\)
  holds because of \cref{fs-are-cone}.
\end{proof}

It should be noted that in the above universal property \(E\) can have its
carrier in any universe \(\U'\) and its order taking values in any universe
\(\T'\), even though we required all \(D_i\) to have their carriers and orders
in two fixed universes \(\U \) and \(\T \), respectively.

\begin{lemma}\label{epsilon-infty-fam-monotone}
  If \(i \below j\) in \(I\), then
  \(\varepsilon_{i,\infty}\pa{\sigma_i} \below
  \varepsilon_{j,\infty}\pa{\sigma_j}\) for every \(\sigma : D_\infty\).
\end{lemma}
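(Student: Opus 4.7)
The plan is to reduce the claim to a pointwise inequality in $D_\infty$ and then exploit the fact that $\sigma$ is a compatible family together with the deflation property of the embedding-projection pairs.

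Let $k : I$ be arbitrary; since suprema and order in $D_\infty$ are pointwise (\cref{def:D-infty}), it suffices to show that $\rho_{i,k}(\sigma_i) \below \rho_{j,k}(\sigma_j)$. As this is a proposition, I would invoke semidirectedness of $(I,\below)$ to obtain some $l : I$ with $j \below l$ and $k \below l$; combined with the hypothesis $i \below j$ and transitivity, this automatically yields $i \below l$, so $l$ is an upper bound of all three indices $i$, $j$, and $k$. Then equation \eqref{rho-eq} rewrites the two sides as
\[
\rho_{i,k}(\sigma_i) = \pi_{k,l}\pa*{\varepsilon_{i,l}(\sigma_i)}
\quad\text{and}\quad
\rho_{j,k}(\sigma_j) = \pi_{k,l}\pa*{\varepsilon_{j,l}(\sigma_j)}.
\]

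The key step is to exchange $\sigma_i$ for $\sigma_j$. Since $\sigma : D_\infty$ and $i \below j$, we have $\sigma_i = \pi_{i,j}(\sigma_j)$. Using the compatibility condition \eqref{epsilon-pi-comms}, namely $\varepsilon_{i,l} = \varepsilon_{j,l} \circ \varepsilon_{i,j}$, I can rewrite
\[
\varepsilon_{i,l}(\sigma_i) = \varepsilon_{j,l}\pa*{\varepsilon_{i,j}\pa*{\pi_{i,j}(\sigma_j)}}.
\]
Now the fact that $(\varepsilon_{i,j},\pi_{i,j})$ is an embedding-projection pair gives $\varepsilon_{i,j}(\pi_{i,j}(\sigma_j)) \below \sigma_j$, and monotonicity of the continuous maps $\varepsilon_{j,l}$ and $\pi_{k,l}$ (\cref{continuous-implies-monotone}) propagates this to
\[
\pi_{k,l}\pa*{\varepsilon_{j,l}\pa*{\varepsilon_{i,j}\pa*{\pi_{i,j}(\sigma_j)}}} \below \pi_{k,l}\pa*{\varepsilon_{j,l}(\sigma_j)},
\]
which is the desired inequality.

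I do not expect any serious obstacle here: this is essentially a diagram chase combining \eqref{epsilon-pi-comms}, \eqref{rho-eq}, the defining property of $\sigma \in D_\infty$, the deflation inequality, and monotonicity. The only mildly subtle point is remembering to work with an upper bound $l$ of \emph{three} indices (not just $i$ and $j$), so that equation \eqref{rho-eq} is applicable for both $\rho_{i,k}$ and $\rho_{j,k}$ simultaneously; this is legitimate because the statement to be proved is propositional, so the existential from semidirectedness may be eliminated.
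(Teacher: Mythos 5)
Your proof is correct and follows essentially the same argument as the paper: reduce to the pointwise inequality at an arbitrary component $k$, use semidirectedness to obtain a common upper bound of $j$ and $k$ (which automatically bounds $i$), unfold $\rho$ via \eqref{rho-eq}, rewrite $\sigma_i$ as $\pi_{i,j}(\sigma_j)$, apply the compatibility condition \eqref{epsilon-pi-comms}, and finish with the deflation property of $\varepsilon_{i,j}\circ\pi_{i,j}$ together with monotonicity. The paper presents this as a single displayed chain of equalities and one inequality, but the underlying steps are identical.
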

\begin{proof}
  The order of \(D_\infty\) is pointwise, so we compare
  \(\varepsilon_{i,\infty}\pa{\sigma_i}\) and
  \(\varepsilon_{j,\infty}\pa{\sigma_j}\) at an arbitrary component \(k : I\).
  We may assume to have \(m : I\) such that \(j,k \below m\) by semidirectedness
  of \(I\). We then calculate that
  \begingroup
  \allowdisplaybreaks
  \begin{align*}
    \pa*{\varepsilon_{i,\infty}(\sigma_i)}_k
    &\equiv \rho_{i,k}(\sigma_i) \\
    &= \pa*{\pi_{k,m} \circ \varepsilon_{i,m}}(\sigma_i)
    &&\text{(by \cref{rho-eq})} \\
    &= \pa*{\pi_{k,m} \circ \varepsilon_{i,m} \circ \pi_{i,j}}(\sigma_j)
    &&\text{(since \(\sigma\) is an element of \(D_\infty\))} \\
    &= \pa*{\pi_{k,m} \circ \varepsilon_{j,m} \circ \varepsilon_{i,j} \circ\pi_{i,j}}(\sigma_j)
    &&\text{(by \cref{epsilon-pi-comms})} \\
    \shortintertext{But \(\varepsilon_{i,j} \circ\pi_{i,j}\) is deflationary
    and \(\pi_{k,m} \circ \varepsilon_{j,m}\) is monotone, so}
    &\below \pa*{\pi_{k,m} \circ \varepsilon_{j,m}}(\sigma_j) \\
    &= \rho_{j,k}(\sigma_j)
    &&\text{(by \cref{rho-eq})} \\
    &\equiv \pa*{\varepsilon_{j,\infty}(\sigma_j)}_k,
  \end{align*}
  \endgroup
  as we wished to show.
\end{proof}

\begin{lemma}\label{sigma-sup-of-epsilon-pis}
  Every element \(\sigma : D_\infty\) is equal to the directed supremum
  \(\bigsqcup_{i : I} \varepsilon_{i,\infty}\pa*{\sigma_i}\).
\end{lemma}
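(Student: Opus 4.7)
The plan is to first check that the family is actually directed, then to show that $\sigma$ is an upper bound using the embedding-projection structure, and finally to show that it is least by a componentwise argument. Since $D_\infty$ has directed suprema computed pointwise, once directedness is established the problem reduces to verifying upper-boundedness and the least-upper-bound property.

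\emph{Directedness.} The family $\alpha : I \to D_\infty$ given by $\alpha_i \colonequiv \varepsilon_{i,\infty}(\sigma_i)$ has inhabited domain because $(I,\below)$ is directed. For semidirectedness, given $i,j : I$, directedness of $I$ provides (after eliminating a propositional truncation) some $k : I$ with $i,j \below k$, and then \cref{epsilon-infty-fam-monotone} yields $\alpha_i \below \alpha_k$ and $\alpha_j \below \alpha_k$.

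\emph{Upper bound.} For each $i : I$, we have $\alpha_i \equiv \varepsilon_{i,\infty}(\sigma_i) \equiv \varepsilon_{i,\infty}(\pi_{i,\infty}(\sigma))$ by \cref{pi-infty}, and by \cref{epsilon-pi-infty-ep-pair} the composite $\varepsilon_{i,\infty} \circ \pi_{i,\infty}$ is a deflation on $D_\infty$, so $\alpha_i \below \sigma$.

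\emph{Least upper bound.} Suppose $\tau : D_\infty$ satisfies $\alpha_i \below \tau$ for every $i : I$. Since the order on $D_\infty$ is pointwise, it suffices to show $\sigma_j \below \tau_j$ for every $j : I$. Taking the $j$\textsuperscript{th} component of $\alpha_j \below \tau$ yields $\rho_{j,j}(\sigma_j) \below \tau_j$, and applying \eqref{rho-eq} with $k \colonequiv j$ together with \eqref{epsilon-pi-id} gives $\rho_{j,j}(\sigma_j) = \pi_{j,j}(\varepsilon_{j,j}(\sigma_j)) = \sigma_j$, so $\sigma_j \below \tau_j$ as needed.

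By antisymmetry, $\sigma = \bigsqcup_{i : I} \varepsilon_{i,\infty}(\sigma_i)$. No step is really hard here; the only point requiring care is the use of directedness of $I$ in the semidirectedness argument, where we must work inside a propositional truncation because the witness $k$ is only given up to unspecified existence. This is unproblematic because being below is a subsingleton, so \cref{epsilon-infty-fam-monotone} factors through the truncation without further work.
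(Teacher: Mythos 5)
Your proof is correct and takes essentially the same route as the paper's: both establish directedness via \cref{epsilon-infty-fam-monotone}, use the deflation property from \cref{epsilon-pi-infty-ep-pair} for one inequality, and compute $\rho_{j,j}(\sigma_j) = \sigma_j$ via \eqref{rho-eq} and \eqref{epsilon-pi-id} for the other; the paper just phrases the second half as a direct pointwise antisymmetry argument at component $j$ rather than quantifying over an arbitrary upper bound $\tau$. One small remark on your closing paragraph: the reason the truncation elimination is unproblematic is that the semidirectedness goal is itself a truncated (hence subsingleton-valued) statement, not that ``being below is a subsingleton''.
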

\begin{proof}
  The domain of the family is inhabited, because \((I,\below)\) is assumed to be
  directed.
  Moreover, if we have \(i,j : I\), then there exists \(k : I\) with
  \(i,j \below k\), which implies
  \(\varepsilon_{i,\infty}(\sigma_i),\varepsilon_{j,\infty}(\sigma_j) \below
  \varepsilon_{k,\infty}(\sigma_k)\) by \cref{epsilon-infty-fam-monotone}.
  Thus, the family \(i \mapsto \varepsilon_{i,\infty}(\sigma_i)\) is indeed
  directed.
  To see that its supremum is indeed \(\sigma\) we use antisymmetry at an
  arbitrary component \(j : I\).
  Firstly, observe that
  \begin{align*}
    \sigma_j
    &= \pi_{j,j}\pa*{\varepsilon_{j,j}(\sigma_j)} &&\text{(by \cref{epsilon-pi-id})} \\
    &= \rho_{j,j}(\sigma_j) &&\text{(by \cref{rho-eq})} \\
    &\equiv \pa*{\varepsilon_{j,\infty}(\sigma_j)}_j \\
    &\below \pa*{\textstyle\bigsqcup_{i : I}\varepsilon_{i,\infty}(\sigma_i)}_j
    &&\text{(since suprema are computed pointwise in \(D_\infty\))}.
  \end{align*}
  Secondly, to prove that
  \(\pa*{\bigsqcup_{i : I}\varepsilon_{i,\infty}(\sigma_i)}_j \below \sigma_j\)
  it suffices to show that
  \(\pa*{\varepsilon_{i,\infty}(\sigma_i)}_j \below \sigma_j\) for every
  \(i : I\). But this just says that
  \(\varepsilon_{i,\infty} \circ \pi_{i,\infty}\) is a deflation, which was
  proved in \cref{epsilon-pi-infty-ep-pair}.
\end{proof}

Although the composites \(\varepsilon_{i,\infty} \circ \pi_{i,\infty}\) are
deflations for each \(i : I\), the supremum of all of them is the identity. This
fact will come in useful in~\cref{sec:Scott-D-infty}.
\begin{lemma}\label{epsilon-pi-sup}
  The family \(i \mapsto \varepsilon_{i,\infty} \circ \pi_{i,\infty}\) is
  directed in the exponential \(D_\infty^{D_\infty}\) and its supremum is the
  identity on \(D_\infty\).
\end{lemma}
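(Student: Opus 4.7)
The plan is to derive the claim directly from the two lemmas immediately preceding it, namely \cref{epsilon-infty-fam-monotone,sigma-sup-of-epsilon-pis}, since suprema in the exponential $D_\infty^{D_\infty}$ are computed pointwise (as established for general exponentials of dcpos).

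First I would check directedness of the family $\Phi_i \colonequiv \varepsilon_{i,\infty} \circ \pi_{i,\infty}$ in $D_\infty^{D_\infty}$. The index type $I$ is inhabited since $(I,\below)$ is directed by assumption, so the family is pointed. For semidirectedness, given $i,j : I$, I use directedness of $I$ to obtain (an unspecified) $k : I$ with $i,j \below k$; I then check that $\Phi_i \below \Phi_k$ and $\Phi_j \below \Phi_k$ in the exponential, which, by the pointwise order, reduces to showing, for an arbitrary $\sigma : D_\infty$, that $\varepsilon_{i,\infty}(\sigma_i) \below \varepsilon_{k,\infty}(\sigma_k)$ and similarly for $j$. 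Unfolding $\pi_{i,\infty}(\sigma) \equiv \sigma_i$, this is exactly the content of \cref{epsilon-infty-fam-monotone}.

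Next, to identify the supremum with $\id_{D_\infty}$, I invoke that suprema in the exponential are pointwise. Hence it suffices to show, for each $\sigma : D_\infty$, the equality
\[
\bigsqcup_{i : I}\pa*{\varepsilon_{i,\infty} \circ \pi_{i,\infty}}(\sigma) = \sigma
\]
in $D_\infty$. Since $\pi_{i,\infty}(\sigma) \equiv \sigma_i$, the left-hand side is precisely $\bigsqcup_{i : I}\varepsilon_{i,\infty}(\sigma_i)$, and this equals $\sigma$ by \cref{sigma-sup-of-epsilon-pis}.

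There is no real obstacle: the heavy lifting has already been done in the two preceding lemmas, which in turn rely on the constancy-based construction of $\varepsilon_{i,\infty}$ via \cref{constant-map-to-set-factors-through-truncation}. The only thing to be slightly careful about is that propositional-truncation uses (e.g., picking a $k : I$ above $i$ and $j$) occur inside proofs of propositions, which is legitimate since $\below$ is proposition-valued and equality in $D_\infty$ is a property.
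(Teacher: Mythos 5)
Your proposal is correct and follows essentially the same route as the paper: the paper's proof is a one-liner citing \cref{sigma-sup-of-epsilon-pis} and the pointwise nature of the exponential, and you simply unpack the directedness step explicitly via \cref{epsilon-infty-fam-monotone} (which is itself what the proof of \cref{sigma-sup-of-epsilon-pis} uses for that purpose). No gap and no genuinely different method.
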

\begin{proof}
  The order and suprema are given pointwise in exponentials, so this follows
  from~\cref{sigma-sup-of-epsilon-pis}.
\end{proof}

\begin{theorem}\label{colimit}\index{colimit}
  The \(\V\)-dcpo \(D_\infty\) with the maps
  \(\pa*{\varepsilon_{i,\infty}}_{i : I}\) is the colimit of the diagram
  \(\pa*{\pa*{D_i}_{i : I} , \pa*{\varepsilon_{i,j}}_{i \below j}}\).
  That is, given a \(\V\)-dcpo \(E : \DCPO{V}{U'}{T'}\) and Scott continuous
  functions \(g_i : D_i \to E\) for every \(i : I\) such that the diagram
  \begin{equation}\label{gs-are-cocone}
    \begin{tikzcd}
      D_i \ar[dr,"\varepsilon_{i,j}"'] \ar[rr,"g_i"] & & E \\
      & D_j \ar[ur,"g_j"']
    \end{tikzcd}
  \end{equation}
  commutes for every \(i \below j\),
  we have a unique Scott continuous function \(g_\infty : D_\infty \to E\) making
  the diagram
  \begin{equation}\label{g-infty}
    \begin{tikzcd}
      D_i \ar[dr,"\varepsilon_{i,\infty}"'] \ar[rr,"g_i"] & & E \\
      & D_\infty \ar[ur,dashed,"g_\infty"']
    \end{tikzcd}
  \end{equation}
  commute for every \(i : I\).
\end{theorem}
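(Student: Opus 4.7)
The definition of $g_\infty$ is forced: combining \cref{sigma-sup-of-epsilon-pis} with the desired continuity of $g_\infty$ and the required commutativity \eqref{g-infty}, any such $g_\infty$ must satisfy
\[
  g_\infty(\sigma) = g_\infty\pa*{\textstyle\bigsqcup_{i : I}\varepsilon_{i,\infty}(\sigma_i)} = \textstyle\bigsqcup_{i : I} g_i(\sigma_i),
\]
so I plan to take this as the definition and then show (a) the family $i \mapsto g_i(\sigma_i)$ is directed in $E$, (b) $g_\infty$ is Scott continuous, (c) the triangle \eqref{g-infty} commutes, and (d) uniqueness (which is free from the display above once (b) and (c) are known).

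For (a), $I$ is inhabited by directedness of $(I,\below)$, so I only need semidirectedness. Given $i_1,i_2 : I$, pick $k$ with $i_1,i_2 \below k$. Since $\sigma$ is an element of $D_\infty$, $\sigma_{i_n} = \pi_{i_n,k}(\sigma_k)$, hence $\varepsilon_{i_n,k}(\sigma_{i_n}) = \varepsilon_{i_n,k}\pa*{\pi_{i_n,k}(\sigma_k)} \below \sigma_k$ using that $\varepsilon_{i_n,k} \circ \pi_{i_n,k}$ is a deflation. Applying the monotone map $g_k$ and using \eqref{gs-are-cocone} to rewrite $g_k \circ \varepsilon_{i_n,k} = g_{i_n}$, we get $g_{i_n}(\sigma_{i_n}) \below g_k(\sigma_k)$, so $g_k(\sigma_k)$ is the required upper bound.

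For (b), monotonicity is immediate because suprema in $D_\infty$ are pointwise and each $g_i$ is monotone. By \cref{continuity-criterion} it remains to show $g_\infty\pa*{\bigsqcup \alpha} \below \bigsqcup g_\infty \circ \alpha$ for directed $\alpha : A \to D_\infty$. Unfolding and using that $\pa*{\bigsqcup \alpha}_i = \bigsqcup_a \alpha(a)_i$ together with Scott continuity of each $g_i$, we get
\[
  g_\infty\pa*{\textstyle\bigsqcup \alpha} = \textstyle\bigsqcup_i \bigsqcup_a g_i(\alpha(a)_i),
\]
and each term $g_i(\alpha(a)_i)$ is, by definition, a summand of $g_\infty(\alpha(a))$, hence bounded by $\bigsqcup_a g_\infty(\alpha(a))$; taking suprema gives the desired inequality.

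For (c), fix $i : I$ and $x : D_i$ and consider the family $j \mapsto g_j\pa*{\pa*{\varepsilon_{i,\infty}(x)}_j} = g_j(\rho_{i,j}(x))$. Taking $j \colonequiv i$ and using $\rho_{i,i}(x) = x$ (from \eqref{rho-eq} and \eqref{epsilon-pi-id}) shows $g_i(x)$ is a member of this family, so $g_i(x) \below g_\infty(\varepsilon_{i,\infty}(x))$. For the other inequality, it suffices to show $g_j(\rho_{i,j}(x)) \below g_i(x)$ for each $j$; since this is a proposition we may take $k$ with $i,j \below k$ and compute, using \eqref{rho-eq}, \eqref{gs-are-cocone}, monotonicity of $g_k$, and that $\varepsilon_{j,k} \circ \pi_{j,k}$ is a deflation,
\[
  g_j(\rho_{i,j}(x)) = g_j\pa*{\pi_{j,k}\pa*{\varepsilon_{i,k}(x)}}
  = g_k\pa*{\varepsilon_{j,k}\pa*{\pi_{j,k}\pa*{\varepsilon_{i,k}(x)}}}
  \below g_k\pa*{\varepsilon_{i,k}(x)} = g_i(x).
\]
The main subtlety, and really the only one, is keeping the propositional truncation under control in this step: the choice of $k$ in the argument for (a) and (c) is propositionally truncated, so I should phrase each as proving a subsingleton (an inequality or an equation in the set $E$) before invoking semidirectedness of $I$. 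Uniqueness (d) then follows directly from the forced formula at the start of the proof.
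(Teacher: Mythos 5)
Your proposal is correct and follows essentially the same route as the paper's proof: you derive the forced formula for $g_\infty$ from \cref{sigma-sup-of-epsilon-pis} and the cocone conditions, verify directedness of $i \mapsto g_i(\sigma_i)$ using the same choice of $k$, the deflation $\varepsilon_{i,k} \circ \pi_{i,k}$, and \eqref{gs-are-cocone}, establish \eqref{g-infty} by antisymmetry via the same computation with $\rho_{i,j}$, and prove continuity by the same reduction to $g_i(\alpha(a)_i) \below g_\infty(\alpha(a)) \below \bigsqcup g_\infty \circ \alpha$ (merely phrased as an iterated-supremum calculation rather than a chain of ``it suffices to show''). Your closing remark about wrapping the choice of $k$ inside a proof of a proposition is exactly the care the paper takes as well.
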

\begin{proof}
  Note that any such Scott continuous function \(g_\infty\) must satisfy
  \begin{align*}
    g_\infty(\sigma)
    &= g_\infty\pa*{\textstyle\bigsqcup_{i : I} \varepsilon_{i,\infty}(\sigma_i)}
    &&\text{(by \cref{sigma-sup-of-epsilon-pis})} \\
    &= \textstyle\bigsqcup_{i : I} g_\infty\pa*{\varepsilon_{i,\infty}(\sigma_I)}
    &&\text{(as \(g_\infty\) is assumed to be Scott continuous)} \\
    &= \textstyle\bigsqcup_{i : I}g_i(\sigma_i)
    &&\text{(by \cref{g-infty})}
  \end{align*}
  for every \(\sigma : D_\infty\).
  Accordingly, we define \(g_\infty\) by
  \(g_\infty(\sigma) \colonequiv \bigsqcup_{i : I}g_i(\sigma_i)\), where we
  verify that the family is indeed directed:
  If we have \(i,j : I\), then there exists \(k : I\) with \(i,j \below k\), and
  we have
  \begin{align*}
    g_i(\sigma_i)
    &= g_i\pa*{\pi_{i,k}(\sigma_k)}
    &&\text{(since \(\sigma\) is an element of \(D_\infty\))} \\
    &= g_k\pa*{\varepsilon_{i,k}\pa*{\pi_{i,k}(\sigma_k)}}
    &&\text{(by \cref{gs-are-cocone})} \\
    &\below g_k(\sigma_k)
    &&\text{(since \(\varepsilon_{i,k} \circ \pi_{i,k}\) is deflationary and \(g_k\) is monotone)},
  \end{align*}
  and similarly, \(g_j(\sigma_j) \below g_k(\sigma_k)\).
  To see that \(g_\infty\) satisfies \cref{g-infty}, let \(x : D_i\) be
  arbitrary and first observe that
  \[
    g_\infty\pa*{\varepsilon_{i,\infty}(x)} \equiv \textstyle\bigsqcup_{j :
      I}g_j\pa*{\pa*{\varepsilon_{i,\infty}(x)}_j} \equiv \textstyle\bigsqcup_{j
      : I}g_j\pa*{\rho_{i,j}(x)}.
  \]
  We now use antisymmetry to prove that this is equal to \(g_i(x)\).
  In one direction this is easy as
  \( g_i(x) = \pa*{g_i \circ \pi_{i,i} \circ \varepsilon_{i,i}}(x) \equiv
  g_i\pa*{\rho_{i,i}(x)} \below \textstyle\bigsqcup_{j :
    I}g_j\pa*{\rho_{i,j}(x)} \).  In the other direction, it suffices to prove
  that \(g_j(\rho_{i,j}(x)) \below g_i(x)\) for every \(j : I\). By directedness
  of \(I\) there exists \(k : I\) with \(i,j \below k\) so that
  \begin{align*}
    g_j(\rho_{i,j}(x))
    &= \pa*{g_j \circ \pi_{j,k} \circ \varepsilon_{i,k}}(x)
    &&\text{(by \cref{rho-eq})} \\
    &= \pa*{g_k \circ \varepsilon_{j,k} \circ \pi_{j,k} \circ \varepsilon_{i,k}}(x)
    &&\text{(by \cref{gs-are-cocone})} \\
    \shortintertext{But \(\varepsilon_{j,k} \circ \pi_{j,k}\) is deflationary and \(g_k\) is monotone, so}
    &\below \pa*{g_k \circ \varepsilon_{i,k}}(x) \\
    &= g_i(x)
    &&\text{(by \cref{gs-are-cocone})},
  \end{align*}
  as we wished to show.

  Finally, we verify that \(g_\infty\) is Scott continuous. We first check that
  \(g_\infty\) is monotone. If \(\sigma \below \tau\) in \(D_\infty\), then
  \(g_\infty(\sigma) \equiv \bigsqcup_{i : I} g_i(\sigma_i) \below \bigsqcup_{i
    : I} g_i(\tau_i) \equiv g_\infty(\tau)\), as each \(g_i\) is monotone.
  It remains to show that
  \(g_\infty\pa*{\bigsqcup \alpha} \below \bigsqcup \pa*{g_\infty \circ
    \alpha}\) for every directed family \(\alpha : A \to D_\infty\).
  By definition of \(g_\infty\), it suffices to show that
  \(g_i\pa*{\pa*{\bigsqcup\alpha}_i} \below \bigsqcup \pa*{g_\infty \circ
    \alpha}\) for every \(i : I\).
  By continuity of \(g_i\) it is enough to establish that
  \(g_i\pa*{\pa*{\alpha(a)}_i} \below \bigsqcup \pa*{g_\infty \circ \alpha}\)
  for every \(a : A\). But this holds as
  \( g_i\pa*{\pa*{\alpha(a)}_i} \below g_\infty(\alpha(a)) \below \bigsqcup
  \pa*{g_\infty \circ \alpha} \), completing our proof.
\end{proof}

\begin{proposition}\label{locally-small-bilimit}\index{bilimit!locally small}
  The bilimit of locally small dcpos is locally small, i.e.\ if every \(\V\)-dcpo
  \(D_i\) is locally small for all \(i : I\), then so is \(D_\infty\).
\end{proposition}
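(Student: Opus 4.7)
The approach is to exploit that the order on $D_\infty$ is defined pointwise, so local smallness of the bilimit should reduce to local smallness of each $D_i$ via closure of $\V$-smallness under $\V$-indexed dependent products.

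More concretely, I would proceed as follows. Using \cref{local-smallness-alt}, the assumption that each $D_i$ is locally small gives us, for every $i : I$, a binary relation ${\below_{\V,i}} : D_i \to D_i \to \V$ such that $x \below_{\V,i} y$ is equivalent to $x \below_{D_i} y$ for all $x,y : D_i$. By the definition of the partial order on $D_\infty$ (recall \cref{def:D-infty}), for $\sigma,\tau : D_\infty$ we have
\[
  (\sigma \below_{D_\infty} \tau) \;\equiv\; \prod_{i : I}\pa*{\sigma_i \below_{D_i} \tau_i},
\]
and this is equivalent, pointwise in $i$ and using function extensionality, to $\Pi_{i : I}\pa*{\sigma_i \below_{\V,i} \tau_i}$. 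Since $I : \V$ and each $\sigma_i \below_{\V,i} \tau_i$ lives in $\V$, this dependent product itself lives in $\V$ by the closure properties of universes from \cref{sec:type-universes}. Appealing to \cref{local-smallness-alt} in the other direction then yields that $D_\infty$ is locally small.

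There is no real obstacle here: the result is essentially a repackaging of the pointwise definition of the order together with closure of $\V$-smallness under $\Pi$-types indexed by $\V$-small types. The only mild subtlety is ensuring the indexing type $I$ is indeed in $\V$, which is part of our standing setup for the bilimit construction (the directed preorder $(I,\below)$ has $I : \V$), so nothing more needs to be assumed.
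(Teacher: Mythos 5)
Your proof is correct and takes essentially the same approach as the paper's: both exploit the pointwise definition of the order on \(D_\infty\), extract a specified \(\V\)-valued relation on each \(D_i\) from the local-smallness assumption, and conclude via closure of \(\V\) under \(\Pi\)-types indexed by \(I : \V\). The paper doesn't explicitly cite \cref{local-smallness-alt} but uses exactly that reformulation.
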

\begin{proof}
  If every \(D_i\) is locally small, then for every \(i : I\), we have a
  \emph{specified} \(\V\)-valued partial order \(\below_{\V}^i\) on \(D_i\) such
  that for every \(i : I\) and every \(x,y : D_i\), we have an equivalence
  \(\pa{x \below_{D_i} y} \simeq \pa{x \below_{\V}^i y}\).
  Hence,
  \(\pa{\sigma \below_{D_\infty} \tau} \equiv \pa{\Pi_{i : I}\pa{\sigma_i
      \below_{D_i} \tau_i}} \simeq \pa{\Pi_{i : I}\pa{\sigma_i \below_{\V}^i
      \tau_i}}\), but the latter is small, because \(I : \V\) and
  \(\below_{\V}^i\) is \(\V\)-valued.
\end{proof}

\section{Notes}\label{sec:basic-domain-theory-notes}

This chapter is based on our two publications~\cite{deJongEscardo2021a} and
\cite{deJong2021a}, but with an improved exposition and the inclusion of more
proofs.
More precisely,
\cref{sec:directed-complete-posets,sec:Scott-continuous-maps,sec:lifting} and
the exponentials of \cref{sec:products-and-exponentials} feature in both of
these works, whereas the material of \cref{sec:basic-domain-theory-introduction}
and \cref{sec:bilimits} only appears in \cite{deJongEscardo2021a}, while
\cref{least-fixed-point} is only included in~\cite{deJong2021a}.
In~\cite{deJongEscardo2021a,deJong2021a} the definition of a poset included the
requirement that the carrier is a set, because we only realised later that this
was redundant~(\cref{posets-are-sets}).
Products of dcpos were not discussed in these works, but were, building on our
previous work, formalised in \Agda\ by Brendan Hart~\cite{Hart2020} for a
final year MSci project supervised by Mart\'in Escard\'o and myself.%
\index{dcpo!product}\index{formalisation}\index{proof assistant!Agda@\Agda}

\cref{sec:directed-complete-posets,sec:Scott-continuous-maps,sec:products-and-exponentials}
are predicative universe-aware adaptations of classical domain theory as
expounded in~\cite{AbramskyJung1994,GierzEtAl2003}, while \cref{sec:bilimits} is
a predicative account of Scott's original paper~\cite{Scott1972}, but with
important differences, including an application
of~\cref{constant-map-to-set-factors-through-truncation}, as discussed at the
start of~\cref{sec:bilimits}.

\cref{sec:lifting} uses the lifting monad in univalent type theory, which
originated with the works~\cite{EscardoKnapp2017,Knapp2018} that deal with
partiality in univalent foundations and aim to avoid (weak) countable choice,
which is not provable in constructive univalent
foundations~\cite{CoquandMannaaRuch2017,Coquand2018,Swan2019a,Swan2019b}.%
\index{choice!axiom of countable}\index{lifting!monad}\index{constructivity}%
\index{partial}%
This is to be contrasted to other approaches to partiality in Martin-L\"of Type
Theory. The first is Capretta's delay monad~\cite{Capretta2005}, which uses
coinduction. %
\index{delay monad}\index{coinduction}
Arguably, the correct notion of equality of Capretta's delay monad
is that of weak bisimilarity where two partial elements are considered equal
when they are both undefined, or, when one of them is, so is the other and they
have equal values in this case.%
\index{bisimilarity}
This prompted the authors of~\cite{ChapmanUustaluVeltri2019} to consider its
quotient by weak bisimilarity, but they use countable choice to show that
quotient is again a monad. Again using countable choice, they show that their
quotient yields free pointed \(\omega\)-complete posets (\(\omega\)-cpos).%
\index{omega-completeness@\(\omega\)-completeness}
In~\cite{AltenkirchDanielssonKraus2017} the authors use a so-called quotient
inductive-inductive type (QIIT) to construct the free pointed \(\omega\)-cpo,
essentially by definition of the QIIT. It~was shown
in~\cite{ChapmanUustaluVeltri2019} that a simpler higher inductive type actually
suffices.
Regardless, we stress that our approach yields free dcpos as opposed to
\(\omega\)-cpos and does not use countable choice or higher inductive types
other than the propositional truncation.
But the notion of \(\omega\)-completeness and countable choice will resurface in
our discussion of the Scott model of PCF in~\cref{sec:Scott-model-of-PCF}.

\chapter{Continuous and algebraic dcpos}
\label{chap:continuous-and-algebraic-dcpos}

In the previous chapter we developed sufficient domain theory for the
applications considered in \cref{chap:applications}, but we have not yet
discussed a fundamental topic in domain theory: \emph{algebraic} and
\emph{continuous} dcpos.
The study of continuous and algebraic dcpo is a rich and deep
subject~\cite{GierzEtAl2003}. We present a treatment of the basic theory and
examples in our constructive and predicative approach, where we deal with size
issues by taking direct inspiration from category theory and the
work of Johnstone and Joyal~\cite{JohnstoneJoyal1982} in particular.

\section{Introduction}\label{sec:continuous-and-algebraic-introduction}

Classically, a dcpo \(D\) is said to be \emph{continuous} if for every element
\(x\) of \(D\) the set of elements \emph{way below} it is directed and has
supremum \(x\).
The problem with this definition in our foundational setup is that the type of
elements way below \(x\) is not necessarily small. Although this does not stop
us from asking it to be directed and having supremum \(x\), this still poses a
problem: for example, there would be no guarantee that its supremum is preserved
by a Scott continuous function, as it is only required to preserve suprema of
directed families indexed by small types.

Our solution is to take inspiration from category
theory~\cite{JohnstoneJoyal1982} and to use the ind-completion to give a
predicatively suitable definition of continuity of a dcpo.
Some care is needed to ensure that the resulting definition expresses a property
of a dcpo, rather than an equipment with additional structure. This is of course
where the propositional truncation comes in useful, but there are two natural
ways of using the truncation. We show that one of them yields a well-behaved
notion that serves as our definition of continuity, while the other, which we
call \emph{pseudocontinuity}, is problematic in a constructive context. In a
classical setting where the axiom of choice is assumed, the two notions
(continuity and pseudocontinuity) are equivalent.

Another approach is to turn to the notion of a
basis~\cite[Section~2.2.2]{AbramskyJung1994}, but to include smallness
conditions. While we cannot expect the type of elements way below an element
\(x\) to be small, in many examples it is the case that the type of \emph{basic}
elements way below \(x\) is small.
We show that if a dcpo has a small basis, then it is continuous. In fact, all
our running examples of continuous dcpos are actually examples of dcpos with
small bases. Moreover, dcpos with small bases are better behaved. For example,
they are locally small and so are their exponentials.
Furthermore, we show that having a small basis is equivalent to being presented
by ideals.

Once we have carefully set up predicatively suitable notions of continuity and
small bases, the theory can be developed quite smoothly. Specifically, we
discuss
\begin{description}[leftmargin=!,labelwidth=\widthof{Section 7.77:}]
\item[\cref{sec:way-below}:] the way-below relation and compact elements;
\item[\cref{sec:ind-completion}:] the ind-completion: a tool used in discussing
  (pseudo) continuity;
\item[\cref{sec:continuous-dcpos}:] continuity of a dcpo and the interpolation
  property of the way-below relation;
\item[\cref{sec:pseudocontinuity}:] pseudocontinuity of a dcpo and issues
  concerning the axiom of choice;
\item[\cref{sec:algebraic-dcpos}:] algebraicity of a dcpo;
\item[\cref{sec:small-bases}:] the notion of a small basis: a strengthening of
  continuity;
\item[\cref{sec:small-compact-bases}:] the notion of a small compact basis: a
  strengthening of algebraicity;
\item[\cref{sec:algebraic-examples}:] examples of dcpos with small compact
  bases: the type of subsingletons, the lifting of a set, and the powerset; and
  an example of an algebraic dpco that does not necessarily have a small basis;
\item[\cref{sec:rounded-ideal-completion}:] the (rounded) ideal completion of an
  abstract basis, including an example of a dcpo with a small basis that is not
  algebraic: the ideal completion of inductively defined dyadic rationals;
\item[\cref{sec:ideal-completions-of-small-bases}:] the ideal completion of a
  small (compact) basis and its relation to the original dcpo;
\item[\cref{sec:structurally-continuous-and-algebraic-bilimits}:] bilimits of
  structurally continuous or algebraic dcpos (with small (compact) bases); and
\item[\cref{sec:exponentials-with-small-bases}:] exponentials of sup-complete
  dcpos with small (compact) bases.
\end{description}

\section{The way-below relation and compactness}\label{sec:way-below}

The way-below relation is the fundamental ingredient in the development of
continuous dcpos. Following Scott~\cite{Scott1970}, we intuitively think of
\(x \ll y\) as saying that every computation of \(y\) has to print \(x\), or
something better than \(x\), at some stage.

\begin{definition}[Way-below relation, \(x \ll y\)]\label{def:way-below}%
  \index{way-below relation|textbf}%
  An element \(x\) of a \(\V\)-dcpo \(D\) is \emph{way below} an element \(y\)
  of \(D\) if whenever we have every directed family \(\alpha : I \to D\)
  indexed by \(I : \V\) such that \(y \below \bigsqcup \alpha\), then there
  exists \(i : I\) such that \(x \below \alpha_i\) already.
  We denote this situation by \(x \ll y\).
  \nomenclature[ll]{$x \ll y$}{way-below relation}
\end{definition}

\begin{lemma}\label{way-below-properties}
  The way-below relation enjoys the following properties:
  \begin{enumerate}[(i)]
  \item\label{way-below-prop-valued} it is proposition-valued;
  \item\label{below-if-way-below} if \(x \ll y\), then \(x \below y\);
  \item\label{below-way-below-way-below} if \(x \below y \ll v \below w\), then
    \(x \ll w\);
  \item\label{way-below-antisymmetric} it is antisymmetric;
  \item\label{way-below-transitive} it is transitive.
  \end{enumerate}
\end{lemma}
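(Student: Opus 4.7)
The plan is to prove the five items in the order (i), (ii), (iii), (iv), (v), using the earlier items to shorten the later arguments. I anticipate no real obstacle: the statements are all essentially unfoldings of the definition of $\ll$, combined with directedness of suitable families. The only small subtlety is keeping the universe $\V$ in mind so that our auxiliary families really live in $\V$.

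For (i), I would unfold the defining type
\[
  x \ll y \;\equiv\; \prod_{I:\V}\prod_{\alpha:I\to D}\bigl(\operatorname{is-directed}(\alpha)\to y\below\textstyle\bigsqcup\alpha\to \exists_{i:I}\,x\below\alpha_i\bigr),
\]
and observe that the conclusion is an existentially quantified statement and hence a proposition, while $\below$ and being directed are proposition-valued. Thus by repeated application of~\cref{Pi-is-prop} the whole type is a proposition. For (ii), I would apply $x\ll y$ to the constant family $\One_{\V}\to D$ at $y$, which is directed with supremum $y$ (so the hypothesis $y\below\bigsqcup\alpha$ holds by reflexivity), obtaining some $i:\One_{\V}$ with $x\below\alpha_i\equiv y$; since we only need to derive the proposition $x\below y$, the propositional truncation can be eliminated.

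For (iii), suppose $x\below y\ll v\below w$ and let $\alpha:I\to D$ be a directed family with $I:\V$ and $w\below\bigsqcup\alpha$. Then $v\below w\below\bigsqcup\alpha$ by transitivity of $\below$, so by $y\ll v$ there exists $i:I$ with $y\below\alpha_i$, and composing with $x\below y$ yields $x\below\alpha_i$; since we are proving an existentially truncated statement, we may eliminate the truncation and conclude $\exists_{i:I}\,x\below\alpha_i$, giving $x\ll w$. For (iv), if $x\ll y$ and $y\ll x$, then by (ii) we get $x\below y$ and $y\below x$, and antisymmetry of the underlying partial order of $D$ yields $x=y$. Finally, (v) is an immediate consequence of (iii) combined with (ii) and reflexivity of $\below$: given $x\ll y$ and $y\ll z$, the chain $x\below x\ll y\below z$ (where $y\below z$ follows from $y\ll z$ via (ii), and $x\below x$ is reflexivity) is exactly the hypothesis of (iii) with $w\colonequiv z$, and its conclusion is $x\ll z$.
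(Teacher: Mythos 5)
Your proposal is correct and matches the paper's proof in every essential respect: item (i) via \cref{Pi-is-prop} and the truncated existential, item (ii) via the constant family $\One_{\V}\to D$ at $y$, item (iii) by direct unfolding, and items (iv), (v) as corollaries of (ii) and (ii)+(iii). Your version merely spells out some of the steps (such as the elimination of the truncation in (ii) and the explicit chain in (v)) that the paper leaves implicit.
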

\begin{proof}
  \ref{way-below-prop-valued} By \cref{Pi-is-prop} and the fact that we
  propositionally truncated the existence of \(i : I\) in the definition.
  \ref{below-if-way-below} Simply take \(\alpha : \One_{\V} \to D\) to be
  \(u \mapsto y\).
  \ref{below-way-below-way-below} Suppose that \(\alpha : I \to D\) is directed
  with \(w \below \bigsqcup \alpha\). Then \(v \below \bigsqcup \alpha\), so by
  assumption that \(y \ll v\) there exists \(i : I\) with \(y \below \alpha_i\)
  already. But then \(x \below \alpha_i\).
  \ref{way-below-antisymmetric} Follows from \ref{below-if-way-below}.
  \ref{way-below-transitive} Follows from \ref{below-if-way-below} and
  \ref{below-way-below-way-below}.
\end{proof}

In general, the way below relation is not reflexive. The elements for which it
is have a special status and are called compact. We illustrate this notion by a
series of examples.

\begin{definition}[Compactness]\index{compactness|textbf}%
  An element %
  is \emph{compact} if it is way below itself.
\end{definition}

\begin{example}\label{least-element-is-compact}\index{compactness!least element}
  The least element of a pointed dcpo is always compact.
\end{example}

\begin{example}[Compact elements in \(\Omega_{\V}\)]%
  \label{compact-elements-in-Omega}%
  \index{compactness!in the type of subsingletons}\index{type!of subsingletons}%
  The compact elements of \(\Omega_{\V}\) are exactly \(\Zero_{\V}\) and
  \(\One_{\V}\). In other words, the compact elements of \(\Omega_{\V}\) are
  precisely the decidable propositions.
\end{example}
\begin{proof}
  By \cref{least-element-is-compact} we know that \(\Zero_{\V}\) must be
  compact. For \(\One_{\V}\), suppose that we have
  \(Q_{(-)} : I \to \Omega_{\V}\) directed such that
  \(\One_{\V} \below \exists_{i : I}Q_i\). Then there exists \(i : I\) such that
  \(Q_i\) holds, and hence, \(\One_{\V} \below Q_i\).
  Now suppose that \(P : \Omega_{\V}\) is compact. We show that \(P\) is
  decidable. The family \(\alpha : \pa*{P + \One_{\V}} \to \Omega_{\V}\) given
  by \(\inl(p) \mapsto \One_{\V}\) and \(\inr(\star) \mapsto \Zero_{\V}\) is
  directed and \(P \below \bigsqcup \alpha\). Hence, by compactness, there
  exists \(i : P + \One_{\V}\) such that \(P \below \alpha_i\) already. %
  Since being decidable is a property of a proposition, we actually get such an
  \(i\) and by case distinction on it we get decidability of \(P\).
\end{proof}

\begin{example}[Compact elements in the lifting]
  \label{compact-elements-in-lifting}%
  \index{compactness!in the lifting}\index{lifting}%
  An element \((P,\varphi)\) of the lifting \(\lifting_{\V}(X)\) of a set
  \(X : \U\) is compact if and only if \(P\) is decidable.
  Hence, the compact elements of \(\lifting_{\V}(X)\) are exactly \(\bot\) and
  \(\eta(x)\) for \(x : X\).
\end{example}
\begin{proof}
  To see that compactness implies decidability of the domain of the partial
  element, we proceed as in the proof of \cref{compact-elements-in-Omega}, but
  for a partial element \((P,\varphi)\), we consider the family
  \(\alpha : \pa*{P + \One_{\V}} \to \lifting_{\V }(X)\) given by
  \(\inl(p) \mapsto \eta(\varphi(p))\) and \(\inr(\star) \mapsto \bot\).
  Conversely, if we have a partial element \((P,\varphi)\) with \(P\) decidable,
  then either \(P\) is false in which case \((P,\varphi) = \bot\) which is
  compact by \cref{least-element-is-compact}, or \(P\) holds. %
  So suppose that \(P\) holds and let \(\alpha : I \to \lifting_{\V}(X)\) be
  directed with \(P \below \bigsqcup \alpha\). %
  Since \(P\) holds, the element \(\bigsqcup \alpha\) must be defined, which
  means that there exists \(i : I\) such that \(\alpha_i\) is defined. %
  But for this \(i : I\) we also have \(\bigsqcup \alpha = \alpha_i\) by
  construction of the supremum, and hence, \(P \below \alpha_i\), proving
  compactness of \((P,\varphi)\).
\end{proof}

For characterising the compact elements of the powerset, we introduce a lemma,
as well as the notion of Kuratowski finiteness and the induction principle for
Kuratowski finite subsets.

\begin{lemma}\label{binary-join-is-compact}%
  \index{compactness!of binary joins}%
  \index{join!binary}%
  The compact elements of a dcpo are closed under (existing) binary joins.
\end{lemma}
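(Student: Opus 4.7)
The plan is to show directly that if $x$ and $y$ are compact and $x \vee y$ exists as a binary join in $D$, then $x \vee y \ll x \vee y$. So I would fix a directed family $\alpha : I \to D$ with $I : \V$ and $x \vee y \below \bigsqcup \alpha$, and aim to produce (unspecified) some $k : I$ with $x \vee y \below \alpha_k$.

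Since $x \below x \vee y \below \bigsqcup \alpha$, compactness of $x$ gives an unspecified $i : I$ with $x \below \alpha_i$; similarly there exists $j : I$ with $y \below \alpha_j$ by compactness of $y$. By directedness of $\alpha$ there then exists $k : I$ with $\alpha_i, \alpha_j \below \alpha_k$, so $x, y \below \alpha_k$, and hence $x \vee y \below \alpha_k$ by the universal property of the binary join. Because the conclusion $\exists_{k : I}\,(x \vee y \below \alpha_k)$ is a proposition, we are free to eliminate the three truncations introduced above, so this pattern of reasoning is legitimate.

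The argument is routine and I do not foresee any serious obstacle; the only small point worth being careful about is that all three existentials (the two from compactness and the one from directedness) are propositionally truncated, so the proof must be phrased so that the intermediate choices of $i$, $j$, $k$ only appear inside an argument whose conclusion is itself a proposition (here, the way-below statement, which is proposition-valued by \cref{way-below-properties}\ref{way-below-prop-valued}). The ``existing'' qualifier in the statement simply reminds us that we do not assert the join exists in the first place --- it is an assumption on $D$ or on the particular pair $x, y$.
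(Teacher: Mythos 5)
Your proof is correct and follows essentially the same route as the paper's: reduce $x \vee y \below \bigsqcup \alpha$ to $x \below \bigsqcup \alpha$ and $y \below \bigsqcup \alpha$, invoke compactness of each to get indices $i, j$, use semidirectedness to find $k$ dominating both, and conclude $x \vee y \below \alpha_k$ by the universal property of the join. Your explicit remark that eliminating the three truncations is legitimate because the goal is proposition-valued is a point the paper leaves tacit, but it is exactly the right justification.
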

\begin{proof}
  Suppose that \(x\) and \(y\) are compact elements of a \(\V\)-dcpo \(D\), let
  \(z\) be their least upper bound and suppose that we have \(\alpha : I \to D\)
  directed with \(z \below \bigsqcup \alpha\). Then
  \(x \below \bigsqcup \alpha\) and \(y \below \bigsqcup \alpha\), so by
  compactness there exist \(i : I\) and \(j : J\) such that
  \(x \below \alpha_i\) and \(y \below \alpha_j\). By semidirectedness of
  \(\alpha\), there exists \(k : I\) with
  \({\alpha_i,\alpha_j} \below \alpha_k\), so that \({x,y} \below
  \alpha_k\). But \(z\) is the join of \(x\) and \(y\), so
  \(z \below \alpha_k\), as desired.
\end{proof}

\begin{definition}[Total space of a subset, \(\totalspace\)]%
  \label{def:total-space}%
  \index{subset!total space}\index{total space|see {subset}}
  The \emph{total space} of a \(\T\)-valued subset \(S\) of a type \(X\) is
  defined as \(\totalspace(S) \colonequiv \Sigma_{x : X} (x \in S)\).
  \nomenclature[T]{$\totalspace(S)$}{total space of a subset \(S\)}
\end{definition}

\begin{definition}[Kuratowski finiteness]\hfill%
  \index{finiteness!Kuratowski|textbf}%
  \begin{enumerate}[(i)]
  \item A type \(X\) is \emph{Kuratowski finite} if there exists some natural
    number \(n : \Nat\) and a surjection \(e : \Fin(n) \surj X\), where
    \(\Fin(n)\) is the inductively defined type with exactly \(n\) elements.%
    \index{type!standard finite}%
    \nomenclature[Fin]{$\Fin(n)$}{standard finite type with exactly \(n\) elements}
  \item A subset is \emph{Kuratowski finite} if its total space is a Kuratowski
    finite type.%
    \qedhere
  \end{enumerate}
\end{definition}

Thus, a type \(X\) is \emph{Kuratowski finite} if its elements can be finitely
enumerated, possibly with repetitions, although the repetitions can be removed
when \(X\) has decidable equality.

\begin{lemma}\label{Kuratowski-finite-closure-properties}
  The Kuratowski finite subsets of a set are closed under finite unions and
  contain all singletons.%
  \index{join!finite}
\end{lemma}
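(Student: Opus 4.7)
The plan is to verify each of the two claims in turn, being careful to use only binary unions and the empty union (which together, by a straightforward induction on $\Fin(n)$, suffice for finite unions).

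First, for singletons: given $x$ in a set $X$, the total space of the singleton subset $\set{x}$ is $\totalspace(\set{x}) \equiv \Sigma_{y : X}(x = y)$, which is contractible by \cref{singleton-type-is-singleton}. A contractible type is equivalent to $\One$, hence equivalent to $\Fin(1)$, and the composite $\Fin(1) \simeq \One \to \totalspace(\set{x})$ picking out the unique element is a surjection. For the empty union (the bottom case of ``finite union''), observe that $\totalspace(\emptyset)$ is equivalent to $\Zero \simeq \Fin(0)$, giving trivially a surjection from $\Fin(0)$.

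For binary unions, suppose $A$ and $B$ are Kuratowski finite subsets of $X$, witnessed by surjections $e_A : \Fin(n) \surj \totalspace(A)$ and $e_B : \Fin(m) \surj \totalspace(B)$. Using $\Fin(n+m) \simeq \Fin(n) + \Fin(m)$, it suffices to produce a surjection
\[
  \Fin(n) + \Fin(m) \surj \totalspace(A \cup B).
\]
Define this map by sending $\inl(i)$ to $\pa*{\fst(e_A(i)), \tosquash{\inl(\snd(e_A(i)))}}$ and $\inr(j)$ to $\pa*{\fst(e_B(j)), \tosquash{\inr(\snd(e_B(j)))}}$. To check surjectivity, let $(x, q) : \totalspace(A \cup B)$ be arbitrary, where $q : \pa{x \in A} \lor \pa{x \in B}$. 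Since being in the image is a proposition, we may eliminate the propositional truncation in $q$ and distinguish two cases: if $p : x \in A$, then by surjectivity of $e_A$ there exists $i : \Fin(n)$ with $e_A(i) = (x, p)$, and the image of $\inl(i)$ then has first component $x$; the analogous argument works for $p : x \in B$. In either case the second component matches because $\pa{x \in A} \lor \pa{x \in B}$ is a subsingleton.

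The main subtlety, and hence the only place requiring care, is the interplay between the propositional truncation hidden in $\lor$ (used to define $A \cup B$) and the propositional truncation hidden in the definition of surjectivity: both are needed so that ``pick a witness'' steps yield a proposition and can therefore be eliminated. Once this is spelled out, the remaining verifications are routine. General finite unions then follow by induction on the indexing $\Fin(n)$, using the empty case and the binary case just established.
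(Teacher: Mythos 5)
Your proof is correct and follows essentially the same route as the paper's: the paper simply notes that the empty set and singletons are "clearly" Kuratowski finite and that the two surjections $\Fin(n) \surj \totalspace(A)$ and $\Fin(m) \surj \totalspace(B)$ can be "patched together" to give $\Fin(n+m) \surj \totalspace(A \cup B)$. You have merely filled in the routine details — including the useful remark about where the various propositional truncations (in Kuratowski finiteness, in $\lor$, and in surjectivity) are eliminated — that the paper leaves implicit.
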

\begin{proof}
  The empty set and any singleton are clearly Kuratowski finite. Moreover, if
  \(A\) and \(B\) are Kuratowski finite subsets, then we may assume to have
  natural numbers \(n\) and \(m\) and surjections
  \(\sigma : \Fin(n) \surj \totalspace(A)\) and
  \(\tau : \Fin(m) \surj \totalspace(B)\). We can then patch these together to
  obtain a surjection \(\Fin(n + m) \surj \totalspace(A \cup B)\), as desired.
\end{proof}

\begin{lemma}[Induction for Kuratowski finite subsets]%
  \label{Kuratowski-finite-subsets-induction}%
  \index{finiteness!Kuratowski!induction}%
  \index{join!binary}%
  A property of subsets of a type~\(X\) holds for all Kuratowski finite subsets
  of \(X\) as soon as
  \begin{enumerate}[(i)]
  \item\label{empty-set-case} it holds for the empty set,
  \item\label{singleton-case} it holds for any singleton subset, and
  \item\label{binary-union-case} it holds for \(A \cup B\), whenever it holds
    for \(A\) and \(B\).
  \end{enumerate}
\end{lemma}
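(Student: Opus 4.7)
Let $P$ be a property of subsets of $X$ satisfying conditions \ref{empty-set-case}--\ref{binary-union-case}, and let $S$ be a Kuratowski finite subset of $X$. We wish to show $P(S)$. Since $P(S)$ is a proposition, by the universal property of the propositional truncation we may extract a specified natural number $n$ and a specified surjection $e : \Fin(n) \surj \totalspace(S)$ from the assumption that $\totalspace(S)$ is Kuratowski finite. The plan is then to proceed by induction on $n$, with the inductive statement being the \emph{universally quantified} claim that for every Kuratowski finite subset $S'$ admitting a surjection $\Fin(n) \surj \totalspace(S')$, we have $P(S')$. Formulating the induction in this universally quantified way is important, since in the step case the relevant subset of $X$ will change.

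For the base case $n \equiv 0$, the type $\Fin(0)$ is empty, so from a surjection $\Fin(0) \surj \totalspace(S)$ we obtain that $\totalspace(S)$ is empty, which by function extensionality and propositional extensionality forces $S = \emptyset$. Condition \ref{empty-set-case} then yields $P(S)$. For the step case, suppose the claim holds for $k$ and that $e : \Fin(k+1) \surj \totalspace(S)$ is given. Using the equivalence $\Fin(k+1) \simeq \Fin(k) + \One$, set $x \colonequiv \fst\pa*{e(\inr(\star))} : X$ and define the subset $T$ of $X$ by $T(y) \colonequiv \exists_{i : \Fin(k)} \fst\pa*{e(\inl(i))} = y$. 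The restriction of $e$ to $\Fin(k)$ induces a surjection $\Fin(k) \surj \totalspace(T)$, so the induction hypothesis gives $P(T)$, while $P(\set{x})$ holds by \ref{singleton-case}. Applying \ref{binary-union-case} yields $P(T \cup \set{x})$, and the proof will then be complete provided we can show $S = T \cup \set{x}$ as subsets.

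The main obstacle is precisely this final equality $S = T \cup \set{x}$, which amounts, by antisymmetry of subset inclusion, to showing that every $y : X$ with $y \in S$ lies in $T \cup \set{x}$ (the other inclusion being straightforward from the definitions of $T$ and $x$). Unfolding, $y \in S$ means we have an inhabitant of $\totalspace(S)$ at $y$; surjectivity of $e$ then yields, using surjection induction into the proposition $y \in T \cup \set{x}$, an element $i : \Fin(k+1)$ with $\fst\pa*{e(i)} = y$. Case analysis on $i : \Fin(k) + \One$ places $y$ either in $T$ or equal to $x$, in each case giving $y \in T \cup \set{x}$ as required. Together with the inductive step above this establishes the induction principle.
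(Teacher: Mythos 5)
Your proof is correct and follows essentially the same route as the paper's (terse) proof: extract witnesses for Kuratowski finiteness by eliminating the propositional truncation into the proposition $P(S)$, then induct on $n$, handling the base case via condition~(i) and the step case by peeling off one element as a singleton and applying conditions~(ii) and~(iii). Your version makes explicit the universal quantification over subsets that the paper's phrase ``by induction on $n$'' leaves implicit (the paper instead establishes $A = \{\sigma_0\} \cup \cdots \cup \{\sigma_{n-1}\}$ once and reasons about that fixed union), and it also spells out the verification $S = T \cup \{x\}$ via surjection induction, which the paper elides; these are expository differences, not a different argument.
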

\begin{proof}
  Let \(Q\) be a such a property and let \(A\) be an arbitrary Kuratowski finite
  subset of \(X\). Since \(Q\) is proposition-valued, we may assume to have a
  natural number \(n\) and a surjection
  \(\sigma : \Fin(n) \surj \totalspace(A)\). Then the subset \(A\) must be equal
  to the finite join of singletons
  \(\{\sigma_0\} \cup \{\sigma_1\} \cup \dots \cup \{\sigma_{n-1}\}\), which can
  be shown to satisfy \(Q\) by induction on~\(n\), and hence, so must \(A\).
\end{proof}

\begin{definition}[\(\beta\)]\label{def:list-to-powerset}%
  \index{type!of lists}%
  For a set \(X : \U\), we write \(\beta : \List(X) \to \powerset_{\U}(X)\) for
  the map inductively defined by \([] \mapsto \emptyset\) and
  \(x :: l \mapsto \{x\} \cup \beta(l)\).
  \nomenclature[{[]}]{$[]$}{empty list}%
  \nomenclature[colons]{$x :: l$}{adding an element to the start of a list}%
  \nomenclature[beta]{$\beta$}{canonical map from lists to subsets}
\end{definition}

\begin{lemma}\label{Kuratowski-finite-iff-list}
  A subset \(A : X \to \Omega_{\U}\) of a set \(X : \U\) is Kuratowski finite if
  and only if it is in the image of \(\beta\).
\end{lemma}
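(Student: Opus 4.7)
The plan is to prove both implications separately, noting that both ``\(A\) is Kuratowski finite'' and ``\(A\) is in the image of \(\beta\)'' are propositions (the former unfolds to \(\exists_{n : \Nat}\, \Fin(n) \surj \totalspace(A)\) and the latter to \(\exists_{l : \List(X)}\, \beta(l) = A\) by definition of \(\image\)), so we are free to use the universal property of the propositional truncation in both directions.

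For the right-to-left implication, I would prove by induction on lists that \(\beta(l)\) is Kuratowski finite for every \(l : \List(X)\); the conclusion then follows because being Kuratowski finite is a property. The base case \(\beta([]) = \emptyset\) is handled by~\cref{Kuratowski-finite-closure-properties} (the empty set is Kuratowski finite since \(\Fin(0) \surj \emptyset = \totalspace(\emptyset)\)). For the induction step, \(\beta(x :: l) = \{x\} \cup \beta(l)\) is a binary union of a singleton subset (Kuratowski finite by~\cref{Kuratowski-finite-closure-properties}) with a Kuratowski finite subset (by the induction hypothesis), and is therefore itself Kuratowski finite by another appeal to~\cref{Kuratowski-finite-closure-properties}.

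For the left-to-right implication, assume \(A\) is Kuratowski finite. Since being in the image of \(\beta\) is a proposition, we may eliminate the truncation and assume given a specified \(n : \Nat\) and surjection \(\sigma : \Fin(n) \surj \totalspace(A)\). I would then define a list \(l_n : \List(X)\) by induction on \(n\), concatenating the first projections \(\fst(\sigma(0)),\dots,\fst(\sigma(n-1))\), and show \(\beta(l_n) = A\) by subset extensionality (which follows from function extensionality and propositional extensionality applied to \(\powerset_{\U}(X)\)). The inclusion \(\beta(l_n) \subseteq A\) is a straightforward induction: every element appearing in the list has the form \(\fst(\sigma(i))\) for some \(i : \Fin(n)\), which lies in \(A\) by definition of \(\totalspace(A)\). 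The reverse inclusion \(A \subseteq \beta(l_n)\) uses the surjectivity of \(\sigma\): given \(x : X\) with \(x \in A\), there exists some \(i : \Fin(n)\) with \(\fst(\sigma(i)) = x\), and since \(x \in \beta(l_n)\) is a proposition we can eliminate the truncation and conclude by an auxiliary induction showing that \(\fst(\sigma(i)) \in \beta(l_n)\) for each \(i : \Fin(n)\).

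The only mildly delicate point is keeping the use of propositional truncations book-keeping-clean: both \(A \in \image(\beta)\) and \(x \in \beta(l_n)\) are \(\exists\)-statements and hence subsingletons, which is precisely what allows the constructions above to go through without any choice. No single step is really an obstacle; the proof is a routine application of subset extensionality together with the closure properties already established in~\cref{Kuratowski-finite-closure-properties}.
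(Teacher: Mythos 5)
Your proof is correct. The right-to-left direction matches the paper exactly: induction on lists using the closure properties of Kuratowski finite subsets (\cref{Kuratowski-finite-closure-properties}). For the left-to-right direction you take a genuinely different route from the paper. The paper invokes the induction principle for Kuratowski finite subsets (\cref{Kuratowski-finite-subsets-induction}), proving the image property holds for \(\emptyset\) and singletons and is preserved under binary unions via list concatenation. You instead inline that machinery: eliminate the truncation to extract a specified \(n : \Nat\) and \(\sigma : \Fin(n) \surj \totalspace(A)\), assemble the list \(l_n\) from first projections, and verify \(\beta(l_n) = A\) by subset extensionality, using surjectivity of \(\sigma\) for the inclusion \(A \subseteq \beta(l_n)\). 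Both are valid; the paper's argument is a cleaner invocation of the established induction principle, while yours is more explicit and self-contained (and is essentially what one would prove to establish that induction principle in the first place). The care you take with propositional-truncation bookkeeping is appropriate and matches the paper's conventions.
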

\begin{proof}
  The left to right direction follows from
  \cref{Kuratowski-finite-closure-properties}, while the converse follows easily
  from the induction principle for Kuratowski finite subsets where we use list
  concatenation in case~\ref{binary-union-case}.
\end{proof}

\begin{example}[Compact elements in \(\powerset_{\U }(X)\)]%
  \label{compact-elements-in-powerset}%
  \index{compactness!in the powerset}%
  \index{finiteness!Kuratowski}%
  \index{powerset}%
  The compact elements of \(\powerset_{\U}(X)\) for a set \(X : \U\) are exactly
  the Kuratowski finite subsets of \(X\).
\end{example}
\begin{proof}
  Suppose first that \(A : \powerset_{\U}(X)\) is a compact element. The family
  \[
    \pa*{\Sigma_{l : \List(X)}\,\beta(l) \subseteq A}
    \xrightarrow{\beta \circ \fst}
    \powerset_{\U}(X)
  \]
  is directed, as it contains \(\emptyset\) and we can concatenate lists to
  establish semidirectedness. Moreover,
  \(\pa*{\Sigma_{l : \List(X)}\,\beta(l) \subseteq A}\) lives in \(\U\) and we
  clearly have \(A \subseteq \bigsqcup {\beta \circ \fst}\). So by compactness,
  there exists \(l : \List(X)\) with \(\beta(l) \subseteq A\) such that
  \(A \subseteq \beta(l)\) already. But this says exactly that \(A\) is
  Kuratowski finite by \cref{Kuratowski-finite-iff-list}.

  For the converse we use the induction principle for Kuratowski finite subsets:
  the empty set is compact by \cref{least-element-is-compact}, singletons are
  easily shown to be compact, and binary unions are compact by
  \cref{binary-join-is-compact}.
\end{proof}

We end this section by presenting a few lemmas connecting the way-below relation
and compactness to Scott continuous sections.

\begin{lemma}\label{continuous-retraction-way-below-criterion}%
  \index{retract!Scott continuous}\index{way-below relation}
  If we have a Scott continuous retract \(\retract{D}{E}\), then \(y \ll s(x)\)
  implies \(r(y) \ll x\) for every \(x : D\) and \(y : E\).
\end{lemma}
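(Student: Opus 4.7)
The plan is to unfold the definition of the way-below relation for $r(y) \ll x$ directly. Suppose $\alpha : I \to D$ is a directed family indexed by $I : \V$ with $x \below \bigsqcup \alpha$; we need to exhibit (an unspecified) $i : I$ such that $r(y) \below \alpha_i$.

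First I would transport the situation to $E$ by applying $s$. Since $s$ is Scott continuous, it is in particular monotone (\cref{continuous-implies-monotone}), and by \cref{image-is-directed} the family $s \circ \alpha : I \to E$ is directed. Moreover, Scott continuity of $s$ gives $s\pa*{\bigsqcup \alpha} = \bigsqcup (s \circ \alpha)$, and since $x \below \bigsqcup \alpha$, monotonicity of $s$ yields $s(x) \below \bigsqcup (s \circ \alpha)$.

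Now I would invoke the hypothesis $y \ll s(x)$ on this directed family: there exists $i : I$ with $y \below s(\alpha_i)$. Since we are proving the proposition $r(y) \ll x$ (which is proposition-valued by \cref{way-below-properties}\ref{way-below-prop-valued}), we may eliminate the propositional truncation and assume such an $i$ is given. Applying the monotone map $r$ and using $r \circ s = \id_D$, we obtain $r(y) \below r(s(\alpha_i)) = \alpha_i$, as required.

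The argument is essentially a straightforward chase through the definitions, and I do not anticipate any obstacle beyond careful bookkeeping with the propositional truncation; the fact that $s$ is Scott continuous (not merely monotone) is used precisely once, to push $s$ through the supremum $\bigsqcup \alpha$.
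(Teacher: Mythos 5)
Your proof is correct and takes essentially the same route as the paper's: apply $s$ to transport $x \below \bigsqcup\alpha$ to $s(x) \below \bigsqcup (s\circ\alpha)$ via Scott continuity, invoke $y \ll s(x)$ to get $y \below s(\alpha_i)$, then apply the monotone retraction $r$ and the identity $r\circ s = \id$ to conclude. The only difference is that you spell out the directedness of $s\circ\alpha$ and the propositional-truncation elimination, which the paper leaves implicit; the argument is otherwise the same.
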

\begin{proof}
  Suppose that \(y \ll s(x)\) and that \(x \below \bigsqcup \alpha\) for a
  directed family \(\alpha : I \to D\). Then
  \(s(x) \below s\pa*{\bigsqcup \alpha} = \bigsqcup s \circ \alpha\) by Scott
  continuity of \(s\), so there exists \(i : I\) such that
  \(y \below s(\alpha_i)\) already. Now monotonicity of \(r\) implies
  \(r(y) \below r(s(\alpha_i)) = \alpha_i\) which completes the proof that
  \(r(y) \ll x\).
\end{proof}

\begin{lemma}\label{embedding-preserves-and-reflects-way-below}%
  \index{embedding-projection pair}\index{way-below relation}%
  The embedding in an embedding-projection pair
  \(\retractalt{D}{E}{\varepsilon}{\pi}\) preserves and reflects the way-below
  relation, i.e.\ \(x \ll y \iff \varepsilon(x) \ll \varepsilon(y)\).
  In particular, an element \(x\) is compact if and only if \(\varepsilon(x)\)
  is.
\end{lemma}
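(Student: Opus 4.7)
The plan is to prove the two implications separately and then derive the statement about compactness as an immediate corollary.

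For the forward direction $x \ll y \Rightarrow \varepsilon(x) \ll \varepsilon(y)$, I would assume $x \ll y$ and let $\alpha : I \to E$ be a directed family (indexed by a type in $\V$) with $\varepsilon(y) \below \bigsqcup \alpha$. The idea is to transport the situation from $E$ down to $D$ using $\pi$. Applying $\pi$ and using its Scott continuity gives $y = \pi(\varepsilon(y)) \below \pi(\bigsqcup \alpha) = \bigsqcup \pi \circ \alpha$, where the family $\pi \circ \alpha$ is directed by \cref{image-is-directed}. The way-below hypothesis $x \ll y$ then yields some $i : I$ with $x \below \pi(\alpha_i)$. Applying the monotone map $\varepsilon$ and then using that $\varepsilon \circ \pi$ is a deflation gives $\varepsilon(x) \below \varepsilon(\pi(\alpha_i)) \below \alpha_i$, as required.

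For the backward direction $\varepsilon(x) \ll \varepsilon(y) \Rightarrow x \ll y$, I would simply invoke the previous result, \cref{continuous-retraction-way-below-criterion}, applied to the Scott continuous retract $\retract{D}{E}$ with section $\varepsilon$ and retraction $\pi$: instantiating that lemma with the element $\varepsilon(x) : E$ way below $\varepsilon(y) = \varepsilon(y)$ gives $\pi(\varepsilon(x)) \ll y$, which is $x \ll y$ since $\pi \circ \varepsilon = \id_D$.

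For the final ``in particular'' clause, observe that $x$ being compact means $x \ll x$; by the biconditional just established, this is equivalent to $\varepsilon(x) \ll \varepsilon(x)$, i.e.\ to $\varepsilon(x)$ being compact.

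I do not foresee any real obstacle: the backward implication is immediate from the preceding lemma, and the forward implication is a routine chase using Scott continuity of $\pi$, monotonicity of $\varepsilon$, the section equation $\pi \circ \varepsilon = \id_D$, and the deflation property $\varepsilon \circ \pi \below \id_E$ — which together are exactly the defining data of an embedding-projection pair.
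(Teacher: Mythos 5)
Your proof is correct and follows essentially the same route as the paper: the forward direction by pushing the directed family down through $\pi$ via Scott continuity, locating an approximant using $x \ll y$, and lifting it back via monotonicity of $\varepsilon$ and the deflation $\varepsilon \circ \pi$; and the backward direction by the same appeal to \cref{continuous-retraction-way-below-criterion}. The derivation of the compactness corollary from the biconditional is exactly what is intended.
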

\begin{proof}
  Suppose that \(x \ll y\) in \(D\) and let \(\alpha : I \to E\) be directed
  with \(\varepsilon(y) \below \bigsqcup \alpha\). Then
  \(y = \pi(\varepsilon(y)) \below \bigsqcup \pi \circ \alpha\) by Scott
  continuity of \(\pi\). Hence, there exists \(i : I\) such that
  \(x \below \pi(\alpha_i)\). But then
  \(\varepsilon(x) \below \varepsilon(\pi(\alpha_i)) \below \alpha_i\) by
  monotonicity of \(\varepsilon\) and the fact that \(\varepsilon \circ \pi\) is a
  deflation. This proves that \(x \ll y\).
  Conversely, if \(\varepsilon(x) \ll \varepsilon(y)\), then
  \(x = \pi(\varepsilon(x)) \ll y\) by
  \cref{continuous-retraction-way-below-criterion}.
\end{proof}

\section{The ind-completion}\label{sec:ind-completion}

The ind-completion will be a useful tool for phrasing and proving results about
directed complete \emph{posets} and is itself a directed complete
\emph{preorder}, cf.~\cref{ind-completion-is-directed-complete}.
It was introduced by \citeauthor{SGA41} in \cite[Section~8]{SGA41} in the
context of category theory, but its role in order theory is discussed in
\cite[Section~1]{JohnstoneJoyal1982}.
We will also use it in the context of order theory, but our treatment will
involve a careful consideration of the universes involved, very similar to the
original treatment in~\cite{SGA41}.

\begin{definition}[\(\V\)-ind-completion \(\Ind{V}(X)\), cofinality, \({\cof}\)]
  The \emph{\(\V\)-ind-completion} \(\Ind{V}(X)\) of a preorder \(X\) is the
  type of directed families in \(X\) indexed by types in \(\V\), ordered by
  cofinality.%
  \index{Ind-completion}%
  \nomenclature[Ind]{$\Ind{V}(X)$}{\(\V\)-ind-completion of a preorder \(X\)}
  A directed family \(\alpha : I \to X\) is \emph{cofinal} in
  \(\beta : J \to X\) if for every \(i : I\), there exists \(j : J\) such that
  \(\alpha_i \below \beta_j\), and we denote this by \(\alpha \cof \beta\).%
  \index{cofinality}%
  \nomenclature[leqsim]{$\alpha \cof \beta$}{cofinality relation}
\end{definition}

\begin{lemma}
  Cofinality defines a preorder on the ind-completion.
\end{lemma}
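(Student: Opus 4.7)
The plan is to verify the two preorder axioms, reflexivity and transitivity, for the cofinality relation $\cof$ on $\Ind{\V}(X)$. Before doing so, I would briefly observe that $\alpha \cof \beta$ is proposition-valued, since it is a $\Pi$-type of propositionally truncated $\Sigma$-types, hence a proposition by \cref{Pi-is-prop}; this is needed so that the notion really is a relation in the sense of a preorder.

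For reflexivity, given a directed family $\alpha : I \to X$ in $\Ind{\V}(X)$, I would show $\alpha \cof \alpha$ by fixing an arbitrary $i : I$ and taking the witness $j \colonequiv i$; then $\alpha_i \below \alpha_i$ holds by reflexivity of the preorder $\below$ on $X$, and the truncated existential is obtained by applying $\tosquash{-}$ to the pair $(i, \text{refl})$.

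For transitivity, given directed families $\alpha : I \to X$, $\beta : J \to X$, $\gamma : K \to X$ with $\alpha \cof \beta$ and $\beta \cof \gamma$, I would fix an arbitrary $i : I$ and produce a witness that there exists $k : K$ with $\alpha_i \below \gamma_k$. Since the goal is proposition-valued, I can eliminate the truncation coming from $\alpha \cof \beta$ applied at $i$ to obtain some $j : J$ with $\alpha_i \below \beta_j$, and similarly eliminate the truncation from $\beta \cof \gamma$ applied at $j$ to obtain some $k : K$ with $\beta_j \below \gamma_k$. Transitivity of $\below$ on $X$ then yields $\alpha_i \below \gamma_k$, completing the proof.

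There is no real obstacle here: the entire argument is a routine application of reflexivity and transitivity of the underlying preorder, and the only subtlety — that one is allowed to eliminate propositional truncations when the goal itself is a proposition — is handled uniformly by the universal property of the propositional truncation. I would not expect to need directedness of the families anywhere in this verification, although of course directedness is part of the ambient data defining $\Ind{\V}(X)$.
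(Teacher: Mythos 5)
Your proof is correct and fills in exactly the routine verification that the paper dispatches with the single word ``Straightforward.'' The details you supply — reflexivity of $\cof$ from reflexivity of $\below$, transitivity of $\cof$ via elimination of the truncated existentials against the proposition-valued goal, and the observation that directedness plays no role in the verification itself — are all accurate and match the intended argument.
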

\begin{proof}
  Straightforward.
\end{proof}

\begin{lemma}\label{ind-completion-is-directed-complete}
  The \(\V\)-ind-completion \(\Ind{V}(X)\) of a preorder \(X\) is \(\V\)-directed
  complete.
\end{lemma}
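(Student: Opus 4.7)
The plan is to construct a supremum of a directed family in $\Ind{V}(X)$ by taking the disjoint union of the underlying indexing types. Specifically, suppose we are given a directed family $F : K \to \Ind{V}(X)$ with $K : \V$, and write $F_k = \pa{\alpha_k : I_k \to X}$ with $I_k : \V$ for each $k : K$. Since the universe $\V$ is closed under $\Sigma$-types, the type $J \colonequiv \Sigma_{k : K} I_k$ lives in $\V$, so I propose the candidate supremum
\[
  \gamma : J \to X,
  \quad
  \gamma(k, i) \colonequiv \alpha_k(i).
\]

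First I would verify that $\gamma$ is a directed family. Inhabitedness of $J$ follows from inhabitedness of $K$ (given by directedness of $F$) together with inhabitedness of each $I_k$ (given by directedness of each $\alpha_k$), using that $\squash{-}$ is closed under $\Sigma$. For semidirectedness, given $(k_1, i_1), (k_2, i_2) : J$, directedness of $F$ yields some $k : K$ with $\alpha_{k_1} \cof \alpha_k$ and $\alpha_{k_2} \cof \alpha_k$, whence one obtains $j_1, j_2 : I_k$ dominating $\alpha_{k_1}(i_1)$ and $\alpha_{k_2}(i_2)$ respectively; directedness of $\alpha_k$ then supplies a common upper bound $j : I_k$, and $(k, j)$ witnesses the required semidirectedness. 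Note that all these existence claims live in propositions, so the unwrappings via the propositional truncation are legitimate.

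Next I would check that $\gamma$ is an upper bound of $F$ in the cofinality preorder: for every $k : K$ and $i : I_k$ we trivially have $\alpha_k(i) \below \gamma(k,i)$, which is precisely $\alpha_k \cof \gamma$. Finally, to see that $\gamma$ is least, suppose $\delta : L \to X$ is a directed family in $\V$ with $\alpha_k \cof \delta$ for all $k : K$. Then for every $(k, i) : J$ we obtain (propositionally) some $l : L$ with $\gamma(k, i) \equiv \alpha_k(i) \below \delta(l)$, i.e.\ $\gamma \cof \delta$, as required.

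I do not anticipate a serious obstacle here: the main thing to be mindful of is universe discipline (ensuring $J : \V$, which is immediate from the closure properties of \(\V\)), and that since $\Ind{V}(X)$ is merely a preorder rather than a poset, the supremum is only unique up to the equivalence $\alpha \cof \beta \wedge \beta \cof \alpha$, which is exactly what one should expect.
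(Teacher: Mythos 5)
Your proposal is correct and follows essentially the same route as the paper: form the candidate supremum over the dependent sum of the indexing types, verify inhabitedness and semidirectedness using that both the outer family and each inner family are directed (unwrapping propositional truncations is fine since all goals are propositions), and then check the upper-bound and least-upper-bound conditions against cofinality. No gaps.
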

\begin{proof}
  Suppose that we have a directed family \(\alpha : I \to \Ind{V}(X)\) with
  \(I : \V\). Then each \(\alpha_i\) is a directed family in \(X\) indexed by a
  type \(J_i : \V\). We define the family
  \(\hat\alpha : \pa*{\Sigma_{i : I}J_i} \to X\) by \((i,j) \mapsto
  \alpha_i(j)\). It is clear that each \(\alpha_i\) is cofinal in
  \(\hat\alpha\), and that \(\hat\alpha\) is cofinal in \(\beta\) if every
  \(\alpha_i\) is cofinal in \(\beta\). So it remains to show that
  \(\hat\alpha\) is in fact an element of \(\Ind{V}(X)\), i.e.\ that it is
  directed. Because \(\alpha\) and each \(\alpha_i\) are directed, \(I\) and
  each \(J_i\) are inhabited. Hence, so is the domain of \(\hat\alpha\).
  Thus, we show that \(\hat\alpha\) is semidirected. Suppose we have
  \((i_1,j_1) , (i_2,j_2)\) in the domain of \(\hat\alpha\). By directedness of
  \(\alpha\), there exists \(i : I\) such that \(\alpha_{i_1}\) and
  \(\alpha_{i_2}\) are cofinal in \(\alpha_i\). Hence, there exist
  \(j_1',j_2' : J_i\) with \(\alpha_{i_1}(j_1) \below \alpha_{i}(j_1')\) and
  \(\alpha_{i_2}(j_2) \below \alpha_{i}(j_2')\).
  Because the family \(\alpha_i\) is directed in \(X\), there exists \(j : J_i\)
  such that \(\alpha_{i}(j_1'),\alpha_{i}(j_2') \below \alpha_i(j)\).  Hence, we
  conclude that
  \(\hat\alpha(i_1,j_1) \equiv \alpha_{i_1}(j_1) \below \alpha_i(j_1') \below
  \alpha_i(j) \equiv \hat\alpha(i,j)\), and similarly for \((i_2,j_2)\), which
  proves semidirectedness of \(\hat\alpha\).
\end{proof}

\begin{lemma}\label{sup-map-is-monotone}%
  \index{supremum!as a map from the ind-completion to a dcpo}%
  Taking directed suprema defines a monotone map from a \(\V\)-dcpo to its
  \(\V\)-ind-completion, denoted by \({\bigsqcup} : \Ind{V}(D) \to D\).
\end{lemma}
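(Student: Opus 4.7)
The plan is straightforward: well-definedness of the map is immediate from directed completeness of $D$ (every element of $\Ind{V}(D)$ is by definition a directed family indexed by a type in $\V$, so its supremum exists in $D$ and is unique by antisymmetry), so the content of the lemma is monotonicity with respect to the cofinality preorder.

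To establish monotonicity, I would fix directed families $\alpha : I \to D$ and $\beta : J \to D$ with $\alpha \cof \beta$, and aim to show $\bigsqcup \alpha \below \bigsqcup \beta$. By the universal property of the supremum $\bigsqcup \alpha$, this reduces to showing that $\bigsqcup \beta$ is an upper bound of $\alpha$, i.e.\ that $\alpha_i \below \bigsqcup \beta$ for every $i : I$. Fix such an $i : I$. Cofinality gives $\exists_{j : J}\,\alpha_i \below \beta_j$, and since $\alpha_i \below \bigsqcup \beta$ is a proposition (the order is proposition-valued), we may eliminate the propositional truncation and assume an actual $j : J$ with $\alpha_i \below \beta_j$. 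Then $\alpha_i \below \beta_j \below \bigsqcup \beta$ by transitivity and the fact that $\bigsqcup \beta$ is an upper bound of $\beta$.

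There is no real obstacle here: the only subtle point is the implicit appeal to the fact that $\alpha_i \below \bigsqcup \beta$ is a subsingleton, which lets us extract a witness from the truncated existential in the definition of cofinality. Everything else is a direct application of the defining universal property of directed suprema.
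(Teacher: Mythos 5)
Your proposal is correct and follows essentially the same route as the paper: reduce $\bigsqcup\alpha \below \bigsqcup\beta$ to $\bigsqcup\beta$ being an upper bound of $\alpha$, then invoke cofinality. The paper's version leaves the propositional-truncation elimination implicit, whereas you spell it out; this is a harmless and arguably helpful additional detail.
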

\begin{proof}
  We have to show that \(\bigsqcup \alpha \below \bigsqcup \beta\) for directed
  families \(\alpha\) and \(\beta\) such that \(\alpha\) is cofinal in
  \(\beta\). Note that the inequality holds as soon as
  \(\alpha_i \below \bigsqcup \beta\) for every \(i\) in the domain of
  \(\alpha\). For this, it suffices that for every such \(i\), there exists a
  \(j\) in the domain of \(\beta\) such that \(\alpha_i \below \beta_j\). But
  the latter says exactly that \(\alpha\) is cofinal in \(\beta\).
\end{proof}

\citeauthor{JohnstoneJoyal1982}~\cite{JohnstoneJoyal1982} generalise the notion
of continuity from posets to categories and do so by phrasing it in terms of
\(\bigsqcup : \Ind{V}(D) \to D\) having a left adjoint.
We follow their approach and build towards this.
It turns out to be convenient to use the following two notions, which are in
fact equivalent by \cref{approximate-adjunct-coincidence}.

\begin{definition}[Approximate, left adjunct]%
  \index{approximation}\index{adjunct}
  For an element \(x\) of a dcpo \(D\) and a directed family
  \(\alpha : I \to D\), we say that
  \begin{enumerate}[(i)]
  \item \(\alpha\) \emph{approximates} \(x\) if the supremum of \(\alpha\) is
    \(x\) and each \(\alpha_i\) is way below \(x\), and
  \item \(\alpha\) is \emph{left adjunct} to \(x\) if
    \(\alpha \cof \beta \iff x \below \bigsqcup \beta\) for every directed
      family \(\beta\).
      \qedhere
  \end{enumerate}
\end{definition}

\begin{remark}\label{left-adjoint-in-terms-of-left-adjunct-to}
  For a \(\V\)-dcpo \(D\), a function \(L : D \to \Ind{V}(D)\) is a left adjoint
  to \({{\bigsqcup} : \Ind{V}(D) \to D}\) precisely when \(L(x)\) is left
  adjunct to \(x\) for every \(x : D\).

  One may object at this point and argue that a function
  \(L : D \to \Ind{V}(D)\) needs to be monotone in order to truly be a left
  adjoint. But monotonicity actually follows from the condition that \(L(x)\) is
  left adjunct to \(x\) for every \(x : D\), as we show in
  \cref{left-adjoint-monotone}.
\end{remark}

\begin{lemma}\label{approximate-adjunct-coincidence}
  A directed family \(\alpha\) approximates an element \(x\) if and only if it
  is left adjunct to it.
\end{lemma}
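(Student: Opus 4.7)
The plan is to prove both directions by direct unfolding of the definitions, with the forward direction being essentially automatic and the reverse direction requiring us to extract both the supremum equality and the way-below information from the adjunction-style biconditional.

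For the forward direction, suppose $\alpha : I \to D$ approximates $x$, so $\bigsqcup \alpha = x$ and $\alpha_i \ll x$ for all $i : I$. Given any directed family $\beta$, if $\alpha \cof \beta$ then $x = \bigsqcup \alpha \below \bigsqcup \beta$ by monotonicity of the supremum map (\cref{sup-map-is-monotone}). Conversely, if $x \below \bigsqcup \beta$, then for each $i : I$ we have $\alpha_i \ll x \below \bigsqcup \beta$, so applying the definition of the way-below relation directly to the directed family $\beta$ yields some $j$ in the domain of $\beta$ with $\alpha_i \below \beta_j$, witnessing $\alpha \cof \beta$.

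For the reverse direction, assume $\alpha$ is left adjunct to $x$. The key move is to instantiate the biconditional with $\beta \colonequiv \alpha$ itself: cofinality is reflexive (take $j \colonequiv i$), so $\alpha \cof \alpha$ holds trivially, which via the biconditional gives $x \below \bigsqcup \alpha$. To obtain the converse inequality and the way-below conditions simultaneously, I would prove that each $\alpha_i \ll x$: given an arbitrary directed family $\gamma$ with $x \below \bigsqcup \gamma$, the left adjunct property yields $\alpha \cof \gamma$, which by definition of cofinality provides some $j$ with $\alpha_i \below \gamma_j$, establishing $\alpha_i \ll x$. From $\alpha_i \ll x$ we get $\alpha_i \below x$ by \cref{way-below-properties}\ref{below-if-way-below}, so $\bigsqcup \alpha \below x$, and antisymmetry then gives $\bigsqcup \alpha = x$. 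Combined with $\alpha_i \ll x$ for every $i$, this shows $\alpha$ approximates $x$.

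I do not anticipate a serious obstacle here: the whole argument is a formal manipulation of the two definitions, and the only slightly non-obvious step is recognising that instantiating the biconditional at $\beta \colonequiv \alpha$ gives the easy half of the supremum equation, while the way-below statements (and hence the other half of the equation) come from instantiating at an \emph{arbitrary} $\gamma$ with $x \below \bigsqcup \gamma$. No use of the propositional truncation is needed beyond what is already built into the definition of the way-below relation.
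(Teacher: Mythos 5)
Your proof is correct and follows essentially the same approach as the paper. The only difference is in deriving \(\bigsqcup \alpha \below x\) in the reverse direction: the paper instantiates the biconditional a second time at the singleton family \(\hat{x} : \One \to D\) constantly equal to \(x\) to obtain \(\alpha \cof \hat{x}\) and hence \(\alpha_i \below x\), whereas you reuse the already-established \(\alpha_i \ll x\) together with \cref{way-below-properties}\ref{below-if-way-below}, which is slightly more economical.
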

\begin{proof}
  Suppose first \(\alpha\) approximates \(x\). If \(\alpha \cof \beta\), then
  \(x = \bigsqcup \alpha \below \bigsqcup \beta\), by
  \cref{sup-map-is-monotone}. Conversely, if \(x \below \bigsqcup \beta\), then
  \(\alpha\) is cofinal in \(\beta\): for if \(i\) is in the domain of
  \(\alpha\), then \(\alpha_i \ll x\), so there exists \(j\) such that
  \(\alpha_i \below \beta_j\) already.

  In the other direction, suppose that \(\alpha\) is left adjunct to \(x\). We
  show that each \(\alpha_i\) is way below \(x\). If \(\beta\) is a directed
  family with \(x \below \bigsqcup \beta\), then \(\alpha\) is cofinal in
  \(\beta\) as \(\alpha\) is assumed to be left adjunct to \(x\). Hence, for
  every \(i\), there exists \(j\) with \(\alpha_i \below \beta_j\), proving that
  \(\alpha_i \ll x\).
  Since \(\alpha\) is cofinal in itself, we get \(x \below \bigsqcup \alpha\) by
  assumption. For the other inequality, we note that
  \(x \below \bigsqcup \hat{x}\), where \(\hat{x} : \One \to D\) is the directed
  family that maps to \(x\). Hence, as \(\alpha\) is left adjunct to
  \(x\), we must have that \(\alpha\) is cofinal in \(\hat{x}\), which means
  that each \(\alpha_i\) is below \(x\). Thus, we conclude
  \(\bigsqcup \alpha \below x\) and \(x = \bigsqcup \alpha\), as desired.
\end{proof}

\begin{proposition}\label{left-adjoint-characterization}%
  \index{adjoint}%
  For a \(\V\)-dcpo \(D\), a function \(L : D \to \Ind{V}(D)\) is a left
  adjoint to \(\bigsqcup : \Ind{V}(D) \to D\) if and only if \(L(x)\)
  approximates \(x\) for every \(x : D\).
\end{proposition}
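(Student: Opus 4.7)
The plan is to assemble the statement directly from the two preceding results, namely \cref{left-adjoint-in-terms-of-left-adjunct-to} and \cref{approximate-adjunct-coincidence}, with the only extra ingredient being the automatic monotonicity of any pointwise left adjunct assignment.

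First I would recall that, by \cref{left-adjoint-in-terms-of-left-adjunct-to}, a function \(L : D \to \Ind{V}(D)\) is a left adjoint to \({\bigsqcup} : \Ind{V}(D) \to D\) precisely when \(L(x)\) is left adjunct to \(x\) for every \(x : D\). I would then apply \cref{approximate-adjunct-coincidence} pointwise: for each \(x : D\), the family \(L(x)\) is left adjunct to \(x\) if and only if it approximates \(x\). Chaining these two equivalences gives exactly the biconditional we want.

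The one thing to watch is that \cref{left-adjoint-in-terms-of-left-adjunct-to} relies on monotonicity of \(L\) being derivable from the pointwise left-adjunct condition (the remark defers this to the forthcoming \cref{left-adjoint-monotone}, i.e.\ the statement that any such \(L\) is automatically monotone). In the write-up I would explicitly cite that result to justify that ``left adjoint'' really does reduce to the pointwise condition, so that no monotonicity hypothesis needs to be threaded through the proof. This is the only potentially subtle step; everything else is a direct substitution of equivalences, so the argument will amount to two lines plus a pointer to the monotonicity lemma.
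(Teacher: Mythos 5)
Your proposal matches the paper's proof exactly: the proposition is immediate from \cref{approximate-adjunct-coincidence} together with \cref{left-adjoint-in-terms-of-left-adjunct-to}, and that is all the paper says. Be a little careful with the monotonicity tangent, though: \cref{left-adjoint-monotone} is stated \emph{after} the present proposition and its proof invokes the fact that \(L(y)\) approximates \(y\), i.e.\ the very conclusion of the proposition you are proving, so citing it here would be circular --- the paper avoids this by treating \cref{left-adjoint-in-terms-of-left-adjunct-to} simply as the posetal unfolding of what a left adjoint to \(\bigsqcup\) means (no monotonicity hypothesis is part of that unfolding), and then derives monotonicity afterwards as a corollary.
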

\begin{proof}
  Immediate from \cref{approximate-adjunct-coincidence} and
  \cref{left-adjoint-in-terms-of-left-adjunct-to}.
\end{proof}

\begin{corollary}\label{left-adjoint-monotone}
  If a function \(L : D \to \Ind{V}(D)\) is a left adjoint to the function
  \(\bigsqcup : {\Ind{V}(D) \to D}\), then it is monotone. In fact, it is also
  order-reflecting in this case.
\end{corollary}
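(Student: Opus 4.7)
The plan is to use the characterisation from Proposition~\ref{left-adjoint-characterization} together with the equivalent reformulation via left adjuncts (Remark~\ref{left-adjoint-in-terms-of-left-adjunct-to} and Lemma~\ref{approximate-adjunct-coincidence}). By these results, the hypothesis that $L$ is left adjoint to $\bigsqcup$ is equivalent to saying that for every $x : D$ and every directed family $\beta$ in $\Ind{V}(D)$ we have the biconditional $L(x) \cof \beta \iff x \below \bigsqcup \beta$, and moreover $\bigsqcup L(x) = x$.

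For monotonicity, I would suppose $x \below y$ in $D$ and aim to show $L(x) \cof L(y)$. By the left adjunct biconditional instantiated at $\beta \colonequiv L(y)$, it suffices to prove $x \below \bigsqcup L(y)$. But $L(y)$ approximates $y$, so $\bigsqcup L(y) = y$, and hence $x \below y$ yields the required inequality.

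For order-reflection, I would suppose $L(x) \cof L(y)$ and aim to show $x \below y$. Again by the left adjunct biconditional applied to $\beta \colonequiv L(y)$, the assumption $L(x) \cof L(y)$ gives $x \below \bigsqcup L(y)$, and using $\bigsqcup L(y) = y$ once more we conclude $x \below y$.

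There is no real obstacle here: both halves fall out of the biconditional characterisation of left adjuncts combined with the fact that $\bigsqcup \circ L = \id_D$. The only thing to double-check is that we are entitled to instantiate the biconditional at $\beta \colonequiv L(y)$, which is fine since $L(y) : \Ind{V}(D)$ by the very typing of $L$.
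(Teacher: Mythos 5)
Your proposal is correct and is essentially the same argument as the paper's: both instantiate the left-adjunct biconditional $L(x) \cof \beta \iff x \below \bigsqcup \beta$ at $\beta \colonequiv L(y)$ and use $\bigsqcup L(y) = y$ (from \cref{left-adjoint-characterization}) to conclude $L(x) \cof L(y) \iff x \below y$. You merely unfold the biconditional into its two directions, whereas the paper states it all at once.
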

\begin{proof}
  Suppose that \(L\) is a left adjoint to \(\bigsqcup\) and that we have
  elements \(x,y : D\). Since \(L\) is a left adjoint, we have
  \(L(x) \cof L(y) \iff x \below \bigsqcup L(y)\), but \(L(y)\) approximates
  \(y\), so \(\bigsqcup L(y) = y\) and hence
  \(L(x) \cof L(y) \iff x \below y\), so \(L\) preserves and reflects the
  order.
\end{proof}

\section{Continuous dcpos}\label{sec:continuous-dcpos}

We define what it means for a \(\V\)-dcpo to be (structurally) continuous and
prove the interpolation properties for the way-below relation. We postpone
giving examples (see~\cref{sec:algebraic-examples,sec:dyadics}) until we have
developed the theory further.

\begin{definition}[Structural continuity]%
  \index{continuity!structural|textbf}%
  \index{approximating family}%
  A \(\V\)-dcpo \(D\) is \emph{structurally continuous} if for every \(x : D\)
  we have a specified \(I : \V\) and directed \emph{approximating family}
  \(\alpha : I \to D\) such that \(\alpha\) has supremum \(x\) and each element
  \(\alpha(i)\) is way below \(x\).
\end{definition}

\begin{remark}[\(I_x\), \(\alpha_x\)]
  Note how structural continuity equips a dcpo with a function assigning an
  approximating family to every element of the dcpo. If we have such an
  equipment, we will write \(\alpha_x : I_x \to D\) for the approximating family
  of an element \(x\).
  \nomenclature[alphax]{$\alpha_x$}{specified approximating family of \(x\) in a
    structurally continuous dcpo}
  \nomenclature[Ix]{$I_x$}{type indexing the approximating (resp.\ compact)
    family of \(x\) in a structurally continuous (resp.\ algebraic) dcpo (see
    also p.~\pageref{spec-compact-family-notation})}
\end{remark}

The somewhat verbose definition of structural continuity can be succinctly
phrased as follows.

\begin{proposition}\label{structural-continuity-in-terms-of-ladj}%
  \index{adjoint}%
  Structural continuity of a \(\V\)-dcpo \(D\) is equivalent to having a
  specified left adjoint to \({\bigsqcup} : \Ind{V}(D) \to D\).
\end{proposition}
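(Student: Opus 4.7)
The plan is to derive this as a direct corollary of the results already established in Section~\ref{sec:ind-completion}, in particular Proposition~\ref{left-adjoint-characterization}. Structural continuity of $D$ unfolds to a term of type
\[
  \prod_{x : D}\,\Sigma_{I : \V}\,\Sigma_{\alpha : I \to D}\pa*{\text{$\alpha$ is directed and approximates $x$}},
\]
whereas the data of a specified left adjoint $L : D \to \Ind{V}(D)$ to $\bigsqcup$ unfolds, via Remark~\ref{left-adjoint-in-terms-of-left-adjunct-to} and Proposition~\ref{left-adjoint-characterization}, to a term of type
\[
  \Sigma_{L : D \to \Ind{V}(D)}\,\Pi_{x : D}\pa*{\text{$L(x)$ approximates $x$}}.
\]

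First, I would use distributivity of $\Pi$ over $\Sigma$ (Lemma~\ref{Pi-Sigma-distr}) to swap the outer $\Pi_{x : D}$ past the $\Sigma_{I}$ and $\Sigma_{\alpha}$, turning the type encoding structural continuity into
\[
  \Sigma_{J : D \to \V}\,\Sigma_{\alpha : \Pi_{x:D}(J(x) \to D)}\Pi_{x : D}\pa*{\text{$\alpha_x : J(x) \to D$ is directed and approximates $x$}}.
\]
Packaging $(J(x),\alpha_x)$ into a single element $L(x) : \Ind{V}(D)$ (which is possible since $\Ind{V}(D)$ is by definition the type of $\V$-indexed directed families) shows that this is equivalent to the type of specified left adjoints displayed above, once we invoke Proposition~\ref{left-adjoint-characterization}.

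The only mildly subtle step is checking that the equivalence genuinely goes through at the level of types (and not merely as a logical biconditional), since an element of $\Ind{V}(D)$ already bundles together the indexing type, the family, and a proof of directedness. This is routine reshuffling of $\Sigma$-types; the heart of the argument is Proposition~\ref{left-adjoint-characterization}, which has already done the real work. I expect no serious obstacle: the proof is essentially a repackaging, and the only thing to be careful about is that we keep the data (indexing type, family, directedness proof) paired in the same way on both sides.
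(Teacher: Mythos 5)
Your proposal is correct and takes essentially the same route as the paper: the paper's own proof is the single line ``Immediate from \cref{left-adjoint-characterization}'', and what you have done is spell out, via \cref{Pi-Sigma-distr} and the observation that an element of $\Ind{V}(D)$ bundles precisely the indexing type, the family, and the directedness proof, why that equivalence of types really is immediate.
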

\begin{proof}
  Immediate from \cref{left-adjoint-characterization}.
\end{proof}

\begin{remark}\index{property}\index{cofinality!bi-}%
  It should be noted that structural continuity is \emph{not} a property because
  an element \(x : D\) can have several different but bicofinal approximating
  families, e.g.\ if \(\alpha : I \to D\) approximates \(x\), then so does
  \([\alpha,\alpha] : (I + I) \to D\).
  In other words, the left adjoint to \(\bigsqcup : \Ind{V}(D) \to D\) is not
  unique, although it is unique up to isomorphism.
  In category theory this is typically sufficient (and often the best one can
  do). \citeauthor{JohnstoneJoyal1982} follow this philosophy
  in~\cite{JohnstoneJoyal1982}, but we want the type of continuous
  \(\V\)\nobreakdash-dcpos to be a subtype of the \(\V\)-dcpos.
  One reason that a property is preferred is that it is considered good
  mathematical practice to only consider morphisms that preserve imposed
  structure. In the case of structural continuity, this would imply preservation
  of the way-below relation, but this excludes many Scott continuous maps,
  e.g.\ if the output of a constant map is not compact, then it does not preserve
  the way-below relation.

  One may ask why the univalence axiom cannot be used to identify these
  isomorphic objects. The point is that the ind-completion \(\Ind{V}(D)\) is
  \emph{not} a univalent category in the sense
  of~\cite{AhrensKapulkinShulman2015}, because it is a preorder and not a poset.
  One way to obtain a subtype is to propositionally truncate the notion of
  structural continuity and this is indeed the approach that we will
  take. However, another choice that would yield a property is to identify
  bicofinal elements of \(\Ind{V}(D)\) by quotienting. This approach is
  discussed at length in \cref{sec:pseudocontinuity} and in particular it is
  explained to be inadequate in a constructive setting.
\end{remark}

\begin{definition}[Continuity of a dcpo]\index{continuity!of a dcpo|textbf}
  A \(\V\)-dcpo is \emph{continuous} if the propositional truncation of its
  structural continuity holds.
\end{definition}

Thus, a dcpo is continuous if we have an \emph{unspecified} function assigning
an approximating family to every element of the dcpo.

\begin{proposition}\index{adjoint}
  Continuity of a \(\V\)-dcpo \(D\) is equivalent to having an unspecified left
  adjoint to \({\bigsqcup} : \Ind{V}(D) \to D\).
\end{proposition}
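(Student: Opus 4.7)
The plan is to derive this statement directly from \cref{structural-continuity-in-terms-of-ladj} by applying the propositional truncation to both sides. By definition, continuity of \(D\) is the propositional truncation of structural continuity, and ``having an unspecified left adjoint'' unfolds to the propositional truncation of the type \(\Sigma_{L : D \to \Ind{V}(D)}\pa*{L \text{ is a left adjoint to } \textstyle\bigsqcup}\). So the whole statement reduces to the observation that if two types are (logically) equivalent, then so are their propositional truncations.

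First I would invoke \cref{structural-continuity-in-terms-of-ladj}, which already gives maps back and forth between the type expressing structural continuity of \(D\) and the type of specified left adjoints to \({\bigsqcup} : \Ind{V}(D) \to D\). Then I would apply the functorial action \(\squash{-}\) (see the discussion after \cref{set-quotients-give-propositional-truncations}) to these maps in both directions, yielding maps between the propositional truncations. Since both truncations are subsingletons, the resulting logical equivalence is the desired equivalence.

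There is essentially no obstacle here: the proof is a one-line corollary, and I would simply write ``Immediate from \cref{structural-continuity-in-terms-of-ladj} by applying the propositional truncation to both sides of the equivalence.'' The only subtlety worth a brief remark is that unlike in \cref{structural-continuity-in-terms-of-ladj}, the left adjoint produced by unfolding continuity is not a chosen piece of structure but merely known to exist, which is exactly why the map \(L\) cannot in general be recovered as a function of \(x : D\) in the continuous (as opposed to structurally continuous) case.
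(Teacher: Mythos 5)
Your proposal is correct and matches the paper's proof, which reads in its entirety ``By \cref{structural-continuity-in-terms-of-ladj} and functoriality of the propositional truncation.'' The only quibble is that the functoriality of \(\squash{-}\) is discussed in \cref{sec:propositional-truncation} (the paragraph following the elimination rule), not after \cref{set-quotients-give-propositional-truncations}.
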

\begin{proof}
  By \cref{structural-continuity-in-terms-of-ladj} and functoriality of the
  propositional truncation.
\end{proof}

\begin{lemma}\label{structurally-continuous-below-characterization}
  For elements \(x\) and \(y\) of a structurally continuous dcpo, the following
  are equivalent:
  \begin{enumerate}[(i)]
  \item\label{item-below} \(x \below y\);
  \item\label{item-approx-below} \(\alpha_x(i) \below y\) for every \(i : I_x\);
  \item\label{item-approx-way-below} \(\alpha_x(i) \ll y\) for every
    \(i : I_x\).
  \end{enumerate}
\end{lemma}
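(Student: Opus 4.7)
The plan is to prove a cycle of implications (i)$\Rightarrow$(iii)$\Rightarrow$(ii)$\Rightarrow$(i), since this is the most economical route and each step uses a different ingredient from what has already been established.

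First, I would prove (i)$\Rightarrow$(iii). Assuming $x \below y$, I want to show $\alpha_x(i) \ll y$ for every $i : I_x$. By definition of an approximating family, $\alpha_x(i) \ll x$, and combined with the hypothesis $x \below y$, I can invoke \cref{way-below-properties}\ref{below-way-below-way-below} (``if $x \below y \ll v \below w$, then $x \ll w$'') applied to the chain $\alpha_x(i) \below \alpha_x(i) \ll x \below y$ to conclude $\alpha_x(i) \ll y$.

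Next, (iii)$\Rightarrow$(ii) is immediate from \cref{way-below-properties}\ref{below-if-way-below}, which says that the way-below relation refines the order: if $\alpha_x(i) \ll y$, then $\alpha_x(i) \below y$.

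Finally, for (ii)$\Rightarrow$(i), assume $\alpha_x(i) \below y$ for every $i : I_x$. Then $y$ is an upper bound of the directed family $\alpha_x$, so by the universal property of its supremum and the fact that $\bigsqcup \alpha_x = x$ (since $\alpha_x$ approximates $x$), we obtain $x = \bigsqcup \alpha_x \below y$. None of the three steps poses any real obstacle; the proof is essentially a direct unpacking of the definition of an approximating family together with the two basic properties of the way-below relation recorded in \cref{way-below-properties}.
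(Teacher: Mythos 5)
Your proof is correct and follows essentially the same cyclic strategy as the paper's: it proves (iii)$\Rightarrow$(ii)$\Rightarrow$(i) directly from the definition of supremum and \cref{way-below-properties}, and closes the cycle with (i)$\Rightarrow$(iii) via the fact that each $\alpha_x(i) \ll x \below y$. The only difference is that you are a bit more explicit about which clause of \cref{way-below-properties} is being invoked in the (i)$\Rightarrow$(iii) step.
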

\begin{proof}
  Note that \ref{item-approx-way-below} implies \ref{item-approx-below} and
  \ref{item-approx-below} implies \ref{item-below}, because if
  \(\alpha_x(i) \below y\) for every \(i : I_x\), then
  \(x = \bigsqcup \alpha_x \below y\), as desired. So it remains to prove that
  \ref{item-below} implies \ref{item-approx-way-below}, but this holds, because
  \(\alpha_x(i) \ll x\) for every \(i : I_x\).
\end{proof}

\begin{lemma}\label{structurally-continuous-way-below-characterization}
  For elements \(x\) and \(y\) of a structurally continuous dcpo, \(x\) is way
  below~\(y\) if and only if there exists \(i : I_y\) such that
  \(x \below \alpha_y(i)\).
\end{lemma}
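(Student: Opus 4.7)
The plan is to prove both directions directly from the definition of the way-below relation together with~\cref{way-below-properties}, using the two defining properties of the approximating family \(\alpha_y : I_y \to D\): it is directed with supremum \(y\), and each \(\alpha_y(i)\) is way below \(y\).

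For the right-to-left implication, I would argue as follows. Since \(\ll\) is proposition-valued (\cref{way-below-properties}\ref{way-below-prop-valued}), I may eliminate the existential and assume a specified \(i : I_y\) with \(x \below \alpha_y(i)\). By the structural continuity data, \(\alpha_y(i) \ll y\). Then \(x \below \alpha_y(i) \ll y\), and \cref{way-below-properties}\ref{below-way-below-way-below} yields \(x \ll y\).

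For the left-to-right implication, I would unfold the definition of way below: \(\alpha_y : I_y \to D\) is a directed family indexed by a type in \(\V\), and its supremum is \(y\), so in particular \(y \below \bigsqcup \alpha_y\). By the assumption \(x \ll y\), there exists \(i : I_y\) such that \(x \below \alpha_y(i)\), which is exactly the required conclusion.

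No step here is really an obstacle; the lemma is essentially an immediate consequence of the defining conditions of an approximating family combined with the transitivity-like property \ref{below-way-below-way-below} of \(\ll\). The only mild subtlety is remembering that the existential is propositionally truncated and that \(x \ll y\) is itself a proposition, which is what lets the two directions mesh cleanly.
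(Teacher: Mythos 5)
Your proof is correct and follows exactly the same approach as the paper's (very terse) proof: left-to-right uses the definition of \(\ll\) applied to the directed family \(\alpha_y\) with supremum \(y\), and right-to-left uses that each \(\alpha_y(i) \ll y\) together with \cref{way-below-properties}\ref{below-way-below-way-below}. The remarks about propositional truncation are correct and make explicit what the paper leaves implicit.
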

\begin{proof}
  The left-to-right implication holds, because \(\alpha_y\) is a directed family
  with supremum \(y\), while the converse holds because \(\alpha_y(i) \ll y\)
  for every \(i : I_y\).
\end{proof}

We now prove three interpolation lemmas for structurally continuous
dcpos. Because the conclusions of the lemmas are propositions, the results will
follow for continuous dcpo immediately.

\index{interpolation|(}%
\begin{lemma}[Nullary interpolation for the way-below relation]
  \label{nullary-interpolation}
  For every \(x : D\) of a (structurally) continuous dcpo \(D\), there exists
  \(y : D\) such that \(y \ll x\).
\end{lemma}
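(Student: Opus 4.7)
The plan is to reduce the problem to structural continuity and then extract a witness from the approximating family. Since the conclusion is an existence statement (which is a proposition by \cref{Pi-is-prop}), and continuity is the propositional truncation of structural continuity, the universal property of the propositional truncation allows us to assume \emph{structural} continuity without loss of generality.

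Under this assumption, every element $x : D$ comes equipped with a specified directed approximating family $\alpha_x : I_x \to D$ whose supremum is $x$ and with $\alpha_x(i) \ll x$ for every $i : I_x$. The key observation is that directedness of $\alpha_x$ includes the requirement that $I_x$ is inhabited. So there exists (in the propositionally truncated sense) some $i : I_x$, and we may take $y \colonequiv \alpha_x(i)$, which satisfies $y \ll x$ by definition of an approximating family.

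There is no real obstacle here; the only subtlety worth noting is the standard move of passing from the truncated hypothesis (continuity) back to structural continuity, which is legitimate because the goal $\exists_{y : D}\, y \ll x$ is itself a proposition. Once inside the truncation we work with the specified data $\alpha_x$ and $I_x$ and immediately pick a witness using the inhabitedness of the index type, again dispatched by the fact that the conclusion we are proving is propositional. This same pattern will be reused in the binary and ternary interpolation lemmas that follow.
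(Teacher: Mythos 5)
Your proof is correct and matches the paper's argument exactly: the paper likewise observes just before the interpolation lemmas that since their conclusions are propositions it suffices to treat the structurally continuous case, and then extracts a witness from the inhabitedness of $I_x$ guaranteed by directedness of $\alpha_x$. Nothing to add.
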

\begin{proof}
  The approximating family \(\alpha_x\) is directed, so there exists \(i : I_x\)
  and hence we can take \(y \colonequiv \alpha_x(i)\) since
  \(\alpha_x(i) \ll x\).
\end{proof}

Our proof of the following lemma is inspired by
\cite[Proposition~2.12]{JohnstoneJoyal1982}.
\begin{lemma}[Unary interpolation for the way-below relation]\label{unary-interpolation}
  If \(x \ll y\) in a (structurally) continuous dcpo \(D\), then there exists an
  \emph{interpolant} \(d : D\) such that \(x \ll d \ll y\).
\end{lemma}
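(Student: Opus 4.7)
The plan is to work with structural continuity (the conclusion is a proposition, namely a truncated existence, so continuity in the truncated sense suffices by propositional induction). Fix the approximating family $\alpha_y : I_y \to D$ for $y$, and for every $i : I_y$ take the approximating family $\alpha_{\alpha_y(i)} : I_{\alpha_y(i)} \to D$ for $\alpha_y(i)$. The candidate interpolant will be of the form $\alpha_y(i)$ for a suitable $i$, located by approximating $y$ twice over.

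More precisely, I would form the ``double'' family
\[
  \gamma : \Sigma_{i : I_y} I_{\alpha_y(i)} \to D, \qquad
  \gamma(i,j) \colonequiv \alpha_{\alpha_y(i)}(j),
\]
indexed by a type in $\V$ since both $I_y$ and each $I_{\alpha_y(i)}$ lie in $\V$. The first main step is to show $\gamma$ is directed. Given $(i_1,j_1)$ and $(i_2,j_2)$, use directedness of $\alpha_y$ to pick $i$ with $\alpha_y(i_1), \alpha_y(i_2) \below \alpha_y(i)$; then the fact that $\gamma(i_1,j_1) \ll \alpha_y(i_1) \below \alpha_y(i) = \bigsqcup \alpha_{\alpha_y(i)}$ yields some $j_1' : I_{\alpha_y(i)}$ with $\gamma(i_1,j_1) \below \alpha_{\alpha_y(i)}(j_1')$ (and similarly a $j_2'$); finally use directedness of $\alpha_{\alpha_y(i)}$ to find $j$ above $j_1'$ and $j_2'$, giving $(i,j)$ as the required upper bound.

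The second step is to compute $\bigsqcup \gamma = y$ by antisymmetry. Each $\gamma(i,j) \ll \alpha_y(i) \below y$, so $\bigsqcup \gamma \below y$. Conversely, for every $i : I_y$ we have $\alpha_y(i) = \bigsqcup_{j} \alpha_{\alpha_y(i)}(j) \below \bigsqcup \gamma$, whence $y = \bigsqcup \alpha_y \below \bigsqcup \gamma$. Now apply the assumption $x \ll y$ to the directed family $\gamma$: there exists $(i,j)$ with $x \below \gamma(i,j)$. Setting $d \colonequiv \alpha_y(i)$, we obtain $x \below \gamma(i,j) \ll \alpha_y(i)$, hence $x \ll d$ by \cref{way-below-properties}\ref{below-way-below-way-below}, while $d = \alpha_y(i) \ll y$ by construction.

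The only mildly technical point is the bookkeeping for directedness of $\gamma$, which uses each way-below relation $\gamma(i_k,j_k) \ll \alpha_y(i_k)$ to re-index into $I_{\alpha_y(i)}$. Everything else is routine. Since the statement $\exists_{d : D}(x \ll d) \times (d \ll y)$ is a proposition, passing from structural continuity to continuity is immediate via the elimination rule of the propositional truncation.
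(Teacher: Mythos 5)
Your proof is correct and follows essentially the same strategy as the paper's: approximate \(y\) twice over, form the flattened family \(\gamma : (\Sigma_{i : I_y} I_{\alpha_y(i)}) \to D\), apply \(x \ll y\) to obtain a pair \((i,j)\), and take \(d \colonequiv \alpha_y(i)\) as the interpolant. The only difference is presentational: the paper packages the directedness argument via the ind-completion (showing \(i \mapsto \beta_i\) is directed with respect to cofinality in \(\Ind{V}(D)\) and then invoking \cref{ind-completion-is-directed-complete} to obtain \(\gamma\) as its supremum), whereas you prove directedness of \(\gamma\) directly; the underlying re-indexing argument is identical, and you in fact establish the slightly stronger equality \(\bigsqcup \gamma = y\) where the paper only needs \(y \below \bigsqcup \gamma\).
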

\begin{proof}
  By structural continuity, we can approximate every approximant \(\alpha_y(i)\)
  of \(y\) by an approximating family \(\beta_{i} : J_i \to D\). This defines a
  map \(\hat{\beta}\) from \(I_y\) to \(\Ind{V}(D)\), the ind-completion of the
  \(\V\)-dcpo \(D\), by sending \(i : I_y\) to the directed family \(\beta_i\).
  We claim that \(\hat\beta\) is directed in \(\Ind{V}(D)\). Since \(\alpha_y\)
  is directed, \(I_y\) is inhabited, so it remains to prove that \(\hat\beta\)
  is semidirected. So suppose we have \(i_1,i_2 : I_y\). Because \(\alpha_y\) is
  semidirected, there exists \(i : I_y\) such that
  \(\alpha_y(i_1),\alpha_y(i_2) \below \alpha_y(i)\). We claim that
  \(\beta_{i_1}\) and \(\beta_{i_2}\) are cofinal in \(\beta_i\), which would
  prove semidirectedness of \(\hat\beta\). We give the argument for \(i_1\) only
  as the case for \(i_2\) is completely analogous.
  For the cofinality, we have to show that for every \(j : J_{i_1}\), there
  exists \(j' : J_i\) such that \(\beta_{i_1}(j) \below \beta_{i}(j')\).
  But this holds because \(\beta_{i_1}(j) \ll \bigsqcup \beta_i\) for every such
  \(j\), as we have
  \(\beta_{i_1}(j) \ll \alpha_y(i_1) \below \alpha_y(i) \below \bigsqcup
  \beta_i\).

  Thus, \(\hat\beta\) is directed in \(\Ind{V}(D)\) and hence we can calculate
  its supremum in \(\Ind{V}(D)\) to obtain the \emph{directed} family
  \(\gamma : \pa{\Sigma_{i : I}J_i} \to D\) given by
  \((i,j) \mapsto \beta_i(j)\).

  We now show that \(y\) is below the supremum of \(\gamma\). By
  \cref{structurally-continuous-below-characterization}, it suffices to prove
  that \(\alpha_y(i) \below \bigsqcup \gamma\) for every \(i : I_y\), and, in
  turn, to prove this for an \(i : I_y\) it suffices to prove that
  \(\beta_i(j) \below \bigsqcup \gamma\) for every \(j : J_i\). But this is
  immediate from the definition of \(\gamma\). Thus,
  \(y \below \bigsqcup \gamma\).
  Because \(x \ll y\), there exists \((i,j) : \Sigma_{i : I}J_i\) such that
  \(x \below \gamma(i,j) \equiv \beta_i(j)\).

  Finally, for our interpolant, we take \(d \colonequiv \alpha_y(i)\). Then,
  indeed, \(x \ll d \ll y\), because
  \(x \below \beta_i(j) \ll \alpha_y(i) \equiv d\) and
  \(d \equiv \alpha_y(i) \ll y\), completing the proof.
\end{proof}

\begin{lemma}[Binary interpolation for the way-below relation]%
  \label{binary-interpolation}
  If \(x \ll z\) and \(y \ll z\) in a (structurally) continuous dcpo \(D\), then
  there exists an interpolant \(d : D\) such that \(x,y \ll d\) and
  \(d \ll z\).
\end{lemma}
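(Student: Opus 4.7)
The plan is to reduce binary interpolation to unary interpolation (Lemma~\ref{unary-interpolation}) combined with directedness of the approximating family for $z$. Since the statement to prove is a proposition, we may freely work with the specified approximating families of structural continuity and truncate at the end.

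First I would apply unary interpolation twice: from $x \ll z$ obtain an interpolant $d_1 : D$ with $x \ll d_1 \ll z$, and from $y \ll z$ obtain $d_2 : D$ with $y \ll d_2 \ll z$. The problem then reduces to finding a single element $d$ with $d_1, d_2 \below d \ll z$, because then by Lemma~\ref{way-below-properties}\ref{below-way-below-way-below} we would immediately get $x \ll d$ and $y \ll d$.

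To build such a $d$, I would invoke the way-below characterisation of Lemma~\ref{structurally-continuous-way-below-characterization}: since $d_1 \ll z$ there exists $i_1 : I_z$ with $d_1 \below \alpha_z(i_1)$, and similarly $d_2 \below \alpha_z(i_2)$ for some $i_2 : I_z$. Directedness of $\alpha_z$ then supplies $k : I_z$ with $\alpha_z(i_1), \alpha_z(i_2) \below \alpha_z(k)$, and I would set $d \colonequiv \alpha_z(k)$. By construction $d \ll z$ (as every member of an approximating family is way below the element it approximates), and $d_1 \below d$, $d_2 \below d$ by transitivity, yielding the desired interpolant after applying Lemma~\ref{way-below-properties}\ref{below-way-below-way-below}.

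There is no real obstacle here: the only mild subtlety is that structural continuity (rather than mere continuity) is used to pick the approximating families, but since the conclusion is a proposition we may extract the data from the truncation in the continuous case too. The argument is really just a combination of unary interpolation with the semidirectedness of $\alpha_z$, together with the standard composability of the way-below relation with the partial order on either side.
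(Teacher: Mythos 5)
Your proof is correct and follows essentially the same route as the paper's: apply unary interpolation twice to obtain $d_1$ and $d_2$ with $x \ll d_1 \ll z$ and $y \ll d_2 \ll z$, then use \cref{structurally-continuous-way-below-characterization} to land $d_1$ and $d_2$ under members of the approximating family $\alpha_z$, invoke semidirectedness to find a common upper bound $\alpha_z(k)$ there, and take $d \colonequiv \alpha_z(k)$. The paper's write-up is just a slightly more compressed version of the same argument, citing the same facts implicitly.
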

The proof is a straightforward application of unary interpolation.
\begin{proof}
  Using that \(x \ll z\) and \(y \ll z\), there exist interpolants
  \({d_1,d_2} : D\) such that \(x \ll d_1 \ll z\) and \(y \ll d_2 \ll z\).
  Hence, there exist \(i_1,i_2 : I_z\) such that \(d_1 \below \alpha_z(i_1)\)
  and \(d_2 \below \alpha_z(i_2)\). By semidirectedness of \(\alpha_z\), there
  then exists \(i : I_z\) for which \(d_1,d_2 \below \alpha_z(i)\). Our final
  interpolant is defined as \(d \colonequiv \alpha_z(i)\), which works because
  \(x \ll d_1 \below d\), \(y \ll d_2 \below d\) and
  \(d \equiv \alpha_z(i) \ll z\).
\end{proof}%
\index{interpolation|)}

Both continuity and structural continuity are closed under Scott continuous
retracts. Keeping track of universes, it holds in the following generality:
\begin{theorem}\label{continuity-closed-under-continuous-retracts}%
  \index{retract!Scott continuous}%
  If we have dcpos \(D : \DCPO{V}{U}{T}\) and \(E : \DCPO{V}{U'}{T'}\) such that
  \(D\)~is a Scott continuous retract of \(E\), then \(D\) is (structurally)
  continuous if \(E\) is.
\end{theorem}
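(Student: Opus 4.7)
The plan is to transport approximating families along the retraction. Suppose $D$ is structurally continuous and write $\retractalt{D}{E}{s}{r}$ for the Scott continuous section-retraction pair. Given $x : D$, consider $s(x) : E$ which, by structural continuity of $E$, has a specified approximating family $\alpha_{s(x)} : I_{s(x)} \to E$ with $I_{s(x)} : \V$. I propose to define the approximating family of $x$ in $D$ to be $r \circ \alpha_{s(x)} : I_{s(x)} \to D$.

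There are three things to verify. First, the family $r \circ \alpha_{s(x)}$ is directed: by \cref{continuous-implies-monotone} the map $r$ is monotone, and so this follows from \cref{image-is-directed} applied to the directed family $\alpha_{s(x)}$. Second, its supremum is $x$: by Scott continuity of $r$ we compute
\[
  \textstyle\bigsqcup (r \circ \alpha_{s(x)}) \;=\; r\pa*{\textstyle\bigsqcup \alpha_{s(x)}} \;=\; r(s(x)) \;=\; x,
\]
using that $\alpha_{s(x)}$ approximates $s(x)$ and that $s$ is a section of $r$. Third, each $r(\alpha_{s(x)}(i))$ is way below $x$ in $D$: since $\alpha_{s(x)}(i) \ll s(x)$, this is immediate from \cref{continuous-retraction-way-below-criterion}.

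For the non-structural case, recall that continuity is the propositional truncation of structural continuity. Since the construction above produces, from a specified structural continuity witness of $E$, a specified structural continuity witness of $D$, functoriality of the propositional truncation promotes it to the statement that continuity of $E$ implies continuity of $D$.

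There is essentially no obstacle here: the result reduces to combining three already-established lemmas (monotonicity of Scott continuous maps preserves directedness, Scott continuity of $r$ computes the supremum, and \cref{continuous-retraction-way-below-criterion} handles the way-below condition). The only point requiring a little care is that we need the indexing type $I_{s(x)}$ to live in $\V$, which is guaranteed because both $D$ and $E$ are $\V$-dcpos with the same universe parameter $\V$; the other universe parameters $\U,\T,\U',\T'$ play no role in the argument.
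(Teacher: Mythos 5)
Your argument is correct and essentially identical to the paper's proof: it uses the same composition $r \circ \alpha_{s(x)}$, the same three verifications (directedness via monotonicity of $r$, the supremum via Scott continuity of $r$ and the section--retraction identity, and the way-below condition via \cref{continuous-retraction-way-below-criterion}), followed by functoriality of the propositional truncation for the non-structural case. The only issue is a slip in your opening sentence: you write ``Suppose $D$ is structurally continuous'' where you clearly mean $E$, as the rest of the argument shows.
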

\begin{proof}
  We prove the result for structural continuity, as the other will follow from
  that and the fact that the propositional truncation is functorial. So suppose
  that we have a Scott continuous section \(s : D \to E\) and retraction
  \(r : E \to D\) and structural continuity of \(E\). We claim that for every
  \(x : D\), the family \(r \circ \alpha_{s(x)}\) is approximating for \(x\).
  Firstly, it is directed, because \(\alpha_{s(x)}\) is and \(r\) is Scott
  continuous.
  Secondly,
  \begin{align*}
    \textstyle\bigsqcup r \circ \alpha_{s(x)}
    &= r\pa*{\textstyle\bigsqcup\alpha_{s(x)}}
      &&\text{(by Scott continuity of \(r\))}
    \\
    &= r(s(x))
      &&\text{(as \(\alpha_{s(x)}\) is the approximating family of \(s(x)\))}
    \\
    &= x &&\text{(because \(s\) is a section of \(r\))},
  \end{align*}
  so the supremum of \(r \circ \alpha_{s(x)}\) is \(x\).
  Finally, we must prove that \(r\pa*{\alpha_{s(x)}(i)} \ll x\) for every
  \(i : I_x\). By \cref{continuous-retraction-way-below-criterion}, this is
  implied by \(\alpha_{s(x)}(i) \ll s(x)\), which holds as \(\alpha_{s(x)}\) is
  the approximating family of \(s(x)\).
\end{proof}

\begin{proposition}\label{cont-loc-small-iff-way-below-small}%
  \index{dcpo!locally small}
  A (structurally) continuous dcpo is locally small if and only if its way-below
  relation has small values.
\end{proposition}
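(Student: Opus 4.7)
The plan is to reduce to structural continuity and then apply the two characterisations from \cref{structurally-continuous-below-characterization,structurally-continuous-way-below-characterization}. Both ``locally small'' and ``way-below has small values'' are propositions (using \cref{being-small-prop-is-prop}, since local smallness and smallness of way-below both amount to asking certain propositions to be \(\V\)-small). Therefore, by the universal property of the propositional truncation, I may assume we are given an actual approximating assignment \(x \mapsto \alpha_x : I_x \to D\), i.e.\ I may reason as though \(D\) is structurally continuous throughout.

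For the forward direction, assume \(D\) is locally small. Given \(x,y : D\), \cref{structurally-continuous-way-below-characterization} yields an equivalence
\[
  (x \ll y) \simeq \exists_{i : I_y}\,x \below \alpha_y(i).
\]
Since \(I_y : \V\) and each \(x \below \alpha_y(i)\) is \(\V\)-small by local smallness, and \(\V\) is closed under \(\Sigma\) and propositional truncation, the right-hand side is \(\V\)-small. Hence \(x \ll y\) is \(\V\)-small.

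For the converse, assume the way-below relation has \(\V\)-small values. Given \(x,y : D\), \cref{structurally-continuous-below-characterization} yields an equivalence
\[
  (x \below y) \simeq \Pi_{i : I_x}\,\alpha_x(i) \ll y.
\]
Since \(I_x : \V\) and each \(\alpha_x(i) \ll y\) is \(\V\)-small by assumption, and \(\V\) is closed under \(\Pi\), the right-hand side is \(\V\)-small. Hence \(x \below y\) is \(\V\)-small, i.e.\ \(D\) is locally small.

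There is no genuine obstacle here: the work is entirely done by the two characterisations already established for structurally continuous dcpos, together with the observation that the statement is propositional and therefore lifts from structural continuity to continuity for free.
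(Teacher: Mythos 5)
Your proof is correct and takes essentially the same route as the paper: both directions are read off from \cref{structurally-continuous-below-characterization,structurally-continuous-way-below-characterization}, using smallness of \(I_x\) and \(I_y\) together with closure of \(\V\)-smallness under \(\Pi\), \(\Sigma\), and propositional truncation, and the reduction from continuous to structurally continuous is handled by observing that the statement is a proposition. The only stylistic difference is that you spell out the closure-under-type-formers reasoning that the paper leaves implicit with ``But the types \(I_x\) and \(I_y\) are small, finishing the proof.''
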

\begin{proof}
  By \cref{structurally-continuous-below-characterization,%
    structurally-continuous-way-below-characterization}, we have
  \[
    x \below y \iff \forall_{i : I_x}\pa*{\alpha_x(i) \ll y} %
    \quad\text{and}\quad %
    x \ll y \iff \exists_{i : I_y}\pa*{x \below \alpha_y(i)},
  \]
  for every two elements \(x\) and \(y\) of a structurally continuous dcpo.
  But the types \(I_x\)~and~\(I_y\) are small, finishing the proof.
  The result also holds for continuous dcpos, because what we are proving is a
  proposition.
\end{proof}

\cref{cont-loc-small-iff-way-below-small} is significant because the definition
of the way-below relation for a \(\V\)-dcpo \(D\) quantifies over all families
into \(D\) indexed by types in \(\V\).

\section{Pseudocontinuity}\label{sec:pseudocontinuity}
In light of~\cref{structural-continuity-in-terms-of-ladj}, we see that a
\(\V\)\nobreakdash-dcpo \(D\) can be structurally continuous in more than one
way: the map \({\bigsqcup} : \Ind{V}(D) \to D\) can have two left adjoints
\(L_1,L_2\) such that for some \(x : D\), the directed families \(L_1(x)\) and
\(L_2(x)\) are bicofinal, yet unequal.
In order for the left adjoint to be truly unique (and not just up to
isomorphism), the preorder \(\Ind{\V}(D)\) would have to identify bicofinal
families.\index{cofinality!bi-}
Of course, we could enforce this identification by passing to the poset
reflection \(\Ind{V}(D)/{\approx}\) of \(\Ind{V}(D)\) and this section studies
exactly that.%
\index{poset!reflection}%
\nomenclature[Ind']{$\Ind{V}(D)/{\approx}$}{poset reflection of the preorder
  \(\Ind{V}(D)\)}

Another perspective on the situation is the following: The type-theoretic
definition of structural continuity of a \(\V\)-dcpo \(D\) is of the following form
\(\Pi_{x : D}\Sigma_{I : \V}\Sigma_{\alpha : I \to D}\dots\), while continuity
is defined as its propositional truncation
\(\squash*{\Pi_{x : D}\Sigma_{I : \V}\Sigma_{\alpha : I \to D}\dots}\).%
\index{property}
Yet another way to obtain a property is by putting the propositional truncation on
the \emph{inside} instead:
\(\Pi_{x : D}\squash*{\Sigma_{I : \V}\Sigma_{\alpha : I \to D}\dots}\).
We study what this amounts to and how it relates to (structural) continuity and
the poset reflection. Our results are summarised in~\cref{continuity-table}
below.

\begin{definition}[Pseudocontinuity]\index{continuity!pseudo-|textbf}%
  \index{pseudocontinuity|see {continuity}}
  A \(\V\)-dcpo \(D\) is \emph{pseudocontinuous} if for every \(x : D\) there
  exists an unspecified directed family that approximates \(x\).
\end{definition}

Note that structural continuity \(\Rightarrow\) continuity \(\Rightarrow\)
pseudocontinuity, but reversing the first implication is an instance of global
choice, while reversing the second amounts to an instance of the axiom of
choice~(\cref{axiom-of-choice}) that we do not expect to be provable in our
constructive setting. We further discuss this point in
\cref{pseudocontinuous-choice-issues}.%
\index{choice!axiom of}

For a \(\V\)-dcpo \(D\), the map \({\bigsqcup} : \Ind{V}(D) \to D\) is monotone, so
it induces a unique monotone map
\({\bigsqcup_{\approx}} : \Ind{V}(D)/{\approx} \to D\) such that the diagram
\begin{equation}\label{supremum-map-quotient-comm}
  \begin{tikzcd}
    \Ind{V}(D)/{\approx}\ar[rr,"{\bigsqcup_{\approx}}"]
    \ar[dr,"\toquotient{-}"']
    & & D \\
    & \Ind{V}(D) \ar[ur,"{\bigsqcup}"']
  \end{tikzcd}
\end{equation}
\nomenclature[sqcup']{$\bigsqcup_{\approx}$}{map from
  \(\Ind{V}(D)/{\approx}\) to \(D\) induced by taking directed suprema}
commutes.

\begin{proposition}\index{adjoint}
  A \(\V\)-dcpo \(D\) is pseudocontinuous if and only if the map of posets
  \({\bigsqcup}_{\approx} : \Ind{V}(D)/{\approx} \to D\) has a (specified) left
  adjoint.
\end{proposition}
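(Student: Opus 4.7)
The plan is to mirror the proof of \cref{left-adjoint-characterization}/\cref{structural-continuity-in-terms-of-ladj}, but working in the poset reflection so that the left adjoint becomes genuinely unique and hence a property. The key technical tool will be \cref{constant-map-to-set-factors-through-truncation}, which lets us build a function out of a propositional truncation into a set.

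For the forward direction, I would assume pseudocontinuity and construct a left adjoint $L : D \to \Ind{V}(D)/{\approx}$. Fix $x : D$ and consider the map that sends a triple $(I,\alpha,p)$---with $\alpha : I \to D$ directed, $I : \V$, and $p$ a witness that $\alpha$ approximates $x$---to $\toquotient{\alpha} : \Ind{V}(D)/{\approx}$. First I would verify that this map is constant: any two approximating families of $x$ are bicofinal by \cref{approximate-adjunct-coincidence} (each is left adjunct to $x$, so cofinal in the other), hence equal in the poset reflection. Since $\Ind{V}(D)/{\approx}$ is a poset, it is a set by \cref{posets-are-sets}, so \cref{constant-map-to-set-factors-through-truncation} yields a factorisation through the truncation. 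Applied to the pseudocontinuity witness at $x$, this gives $L(x)$. To see $L$ is left adjoint to $\bigsqcup_{\approx}$, I would use surjection induction (via the quotient and truncation) to reduce to the case where $L(x)$ is represented by an approximating family $\alpha$ and $[\beta]$ by $\beta$; then by definition of the order on $\Ind{V}(D)/{\approx}$ and of $\bigsqcup_{\approx}$ via diagram~\eqref{supremum-map-quotient-comm}, the adjoint condition $L(x) \leq [\beta] \iff x \below \bigsqcup_{\approx}[\beta]$ unfolds to $\alpha \cof \beta \iff x \below \bigsqcup \beta$, which is exactly \cref{approximate-adjunct-coincidence}. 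Monotonicity of $L$ then follows as in \cref{left-adjoint-monotone}.

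For the converse, assume a specified left adjoint $L : D \to \Ind{V}(D)/{\approx}$; I need to show pseudocontinuity, which is a proposition. Fix $x : D$. Since $\toquotient{-} : \Ind{V}(D) \surj \Ind{V}(D)/{\approx}$ is a surjection, there exists some $\alpha \in \Ind{V}(D)$ with $\toquotient{\alpha} = L(x)$. Using the adjoint condition together with \eqref{supremum-map-quotient-comm} and the fact that the order on the quotient is induced by cofinality, we obtain $\alpha \cof \beta \iff x \below \bigsqcup \beta$ for every directed $\beta$. That is, $\alpha$ is left adjunct to $x$, so by \cref{approximate-adjunct-coincidence} it approximates $x$, giving the required (unspecified) approximating family.

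The step I expect to require the most care is the forward direction's application of \cref{constant-map-to-set-factors-through-truncation}: one must check not only that the map is constant but also that the resulting dependent assignment $x \mapsto L(x)$ really yields a single well-defined function, and that verifying its adjoint property can be done pointwise after eliminating the truncation (which is legitimate because the adjoint condition is proposition-valued). Once this is set up cleanly, the rest is a routine unfolding of the order on $\Ind{V}(D)/{\approx}$ and an appeal to \cref{approximate-adjunct-coincidence}.
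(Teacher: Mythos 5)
Your proposal is correct and follows essentially the same strategy as the paper: using \cref{constant-map-to-set-factors-through-truncation} to build $L(x)$ from the pseudocontinuity witness after checking that $\toquotient{-}$ composed with the forgetful map on approximating families is constant, then verifying the adjoint condition by quotient induction and \cref{approximate-adjunct-coincidence}, and for the converse using surjectivity of $\toquotient{-}$ to recover a representative that is left adjunct to $x$. The only cosmetic difference is that you derive bicofinality of two approximating families from \cref{approximate-adjunct-coincidence}, whereas the paper argues directly from $\alpha_i \ll x = \bigsqcup \beta$; both work.
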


Observe that the type of left adjoints to
\({\bigsqcup}_{\approx} : \Ind{V}(D)/{\approx} \to D\) is a proposition,
precisely because \(\Ind{V}(D)/{\approx}\) is a poset,
cf.~\cite[Lemma~5.2]{AhrensKapulkinShulman2015}.

\begin{proof}
  Suppose that \({\bigsqcup}_{\approx} : \Ind{V}(D)/{\approx} \to D\) has a left
  adjoint \(L\) and let \(x : D\) be arbitrary. We have to prove that there
  exists a directed family \(\alpha : I \to D\) that approximates \(x\). By
  surjectivity of the universal map \(\toquotient{-}\), there exists a directed
  family \(\alpha : I \to D\) such that \(L(x) = \toquotient{\alpha}\).
  Moreover, \(\alpha\) approximates \(x\) by virtue of
  \cref{approximate-adjunct-coincidence}, since for every
  \(\beta : \Ind{V}(D)\), we have
  \begin{align*}
    \alpha \cof \beta
    &\iff L(x) \leq \toquotient{\beta}
      &&\text{(since \(L(x) = \toquotient{\alpha}\))} \\
    &\iff x \below \textstyle\bigsqcup_{\approx}{\toquotient{\beta}}
      &&\text{(since \(L\) is a left adjoint to \(\textstyle{\bigsqcup}_{\approx}\))}\\
    &\iff x \below \textstyle\bigsqcup \beta
      &&\text{(by \cref{supremum-map-quotient-comm})}.
  \end{align*}

  The converse is more involved and features another application
  of~\cref{constant-map-to-set-factors-through-truncation}, similar to the proof
  of~\cref{lifting-is-pointed-dcpo}.
  Assume that \(D\) is pseudocontinuous. We start by constructing the left
  adjoint, so let \(x : D\) be arbitrary. Writing \(\mathcal A_x\) for the type
  of directed families that approximate \(x\), we have an obvious map
  \(\varphi_x : \mathcal A_x \to \Ind{V}(D)\) that forgets that the directed
  family approximates \(x\).

  We claim that all elements in the image of \(\varphi_x\) are bicofinal. For if
  \(\alpha\) and \(\beta\) are directed families both approximating \(x\), then
  for every \(i\) in the domain of \(\alpha\) we know that
  \(\alpha_i \ll x = \bigsqcup \beta\), so that there exists \(j\) with
  \(\alpha_i \below \beta_j\).
  Hence, passing to the poset reflection, the composite
  \(\toquotient{-} \circ \varphi_x\) is constant.
  Thus, by~\cref{constant-map-to-set-factors-through-truncation} we have a
  (necessarily unique) map \(\psi_x\) making the diagram
  \[
    \begin{tikzcd}
      \mathcal A_x \ar[rr,"\toquotient{-} \circ \varphi_x"]
      \ar[dr,"\tosquash{-}"']
      & & \Ind{V}(D)/{\approx} \\
      & \squash*{\mathcal A_x}
      \ar[ur,"\psi_x"',dashed]
    \end{tikzcd}
  \]
  commute.
  Since \(D\) is assumed to be pseudocontinuous, we have exactly
  \(\squash*{\mathcal A_x}\) for every \(x : D\), so together with \(\psi_x\)
  this defines a map \(L : D \to \Ind{V}(D)/{\approx}\) by
  \(L(x) \colonequiv \psi_x(p)\), where \(p\) witnesses pseudocontinuity at
  \(x\).

  Lastly, we prove that \(L\) is indeed a left adjoint to
  \({\bigsqcup}_{\approx}\). So let \(x : D\) be arbitrary. Since we're proving
  a property, we can use pseudocontinuity at \(x\) to specify a directed family
  \(\alpha\) that approximates \(x\). We now have to prove
  \(\toquotient{\alpha} \leq \beta' \iff x \below \bigsqcup_{\approx} \beta'\)
  for every \(\beta' : \Ind{V}(D)/{\approx}\). This is a proposition, so using
  quotient induction once more, it suffices to prove
  \(\toquotient{\alpha} \leq \toquotient{\beta} \iff x \below
    \bigsqcup_{\approx} \toquotient{\beta}\) for every \(\beta : \Ind{V}(D)\).
  Indeed, for such \(\beta\) we have
  \begin{align*}
    \toquotient{\alpha} \leq \toquotient{\beta}
    &\iff \alpha \cof \beta \\
    &\iff x \below \textstyle\bigsqcup \beta
    &&\text{(by \cref{approximate-adjunct-coincidence} and the fact
      that \(\alpha\) approximates \(x\))} \\
    &\iff x \below \textstyle\bigsqcup_{\approx} \toquotient{\beta}
    &&\text{(by \cref{supremum-map-quotient-comm})},
  \end{align*}
  finishing the proof.
\end{proof}

Thus, the explicit type-theoretic formulation and the formulation using left
adjoints in each row of~\cref{continuity-table} (which summarises our findings)
are equivalent.

\begin{table}[h]%
  \index{property}\index{continuity!structural}%
  \index{continuity!of a dcpo}\index{continuity!pseudo-}%
  \index{propositional truncation}%
  \index{existence!specified}\index{existence!unspecified}%
  \centering
  \begin{tabular}{lcp{4.4cm}c}\toprule 
    & Type-theoretic formulation
    & Formulation with adjoints & Property \\\midrule
    Struc.\ cont.
    & \(\Pi_{x : D} \Sigma_{I : \V} \Sigma_{\alpha : I \to D}\,\delta(\alpha,x)\)
    & Specified left adjoint to \({\bigsqcup : \Ind{V}(D) \to D}\) 
    & \ding{53} \\
    Cont.
    & \(\squash*{\Pi_{x : D} \Sigma_{I : \V} \Sigma_{\alpha : I \to D}\,\delta(\alpha,x)}\)
    & Unspecified left adjoint to \({\bigsqcup : \Ind{V}(D) \to D}\) 
    & \checkmark \\
    Pseudocont.
    & \(\Pi_{x : D}\squash*{\Sigma_{I : \V} \Sigma_{\alpha : I \to D}\,\delta(\alpha,x)}\)
    & Specified left adjoint to \({\bigsqcup_{\approx} : \Ind{V}(D)/{\approx} \to D}\)
    & \checkmark \\
    \bottomrule
  \end{tabular}
  \caption{(Structural) continuity and pseudocontinuity of a dcpo \(D\), where
    \(\delta(\alpha,x)\) abbreviates that \(\alpha\) is directed and
    approximates \(x\).}
  \label{continuity-table}
\end{table}

\begin{remark}\label{pseudocontinuous-choice-issues}
  \index{choice!axiom of}\index{continuity!pseudo-}\index{cofinality!bi-}%
  The issue with pseudocontinuity is that taking the quotient by bicofinality
  introduces a dependence on instances of the axiom of choice when it comes to
  proving properties of pseudocontinuous dcpos. An illustrative example is the
  proof of unary interpolation~(\cref{unary-interpolation}), where we used
  structural continuity to first approximate an element \(y\) by \(\alpha_y\)
  and then, in turn, approximate every approximant \(\alpha_y(i)\). With
  pseudocontinuity this argument would require \emph{choosing} an approximating
  family for every \(i\).
  Another example is that while the preorder \(\Ind{V}(D)\) is \(\V\)-directed
  complete, a direct lifting of the proof of this fact to the poset reflection
  \(\Ind{V}(D)/{\approx}\) requires the axiom of choice.\index{poset!reflection}
  Hence, the Rezk completion~\cite{AhrensKapulkinShulman2015}, of which the
  poset reflection is a special case, does not necessarily preserve (filtered)
  colimits.\index{Rezk completion}
  Related issues concerning the axiom of choice are also discussed
  in~\cite[pp.~260--261]{JohnstoneJoyal1982}.
\end{remark}

\section{Algebraic dcpos}\label{sec:algebraic-dcpos}

Many of our examples of dcpos are not just continuous, but satisfy the stronger
condition of being algebraic, meaning their elements can be approximated by
compact elements only.

\begin{definition}[Structural algebraicity]%
  \index{algebraicity!structural|textbf}%
  \index{compact family}%
  A \(\V\)-dcpo \(D\) is \emph{structurally algebraic} if for every \(x : D\)
  we have a specified \(I : \V\) and directed \emph{compact family}
  \(\kappa : I \to D\) such that \(\kappa\) has supremum \(x\) and each element
  \(\kappa(i)\) is compact.
\end{definition}

\begin{remark}[\(I_x\), \(\kappa_x\)]\label{spec-compact-family-notation}
  Note how structural algebraicity equips a dcpo with a function assigning a
  compact family to every element of the dcpo. If we have such an equipment, we
  will write \(\kappa_x : I_x \to D\) for the compact family of an element
  \(x\).
  \nomenclature[kappa]{$\kappa_x$}{specified compact family of \(x\) in a
    structurally algebraic dcpo}
\end{remark}

\begin{definition}[Algebraicity]\index{algebraicity|textbf}
  A \(\V\)-dcpo is \emph{algebraic} if the propositional truncation of its
  structural algebraicity holds.
\end{definition}

Thus, a dcpo is continuous if we have an \emph{unspecified} function assigning
a compact family to every element of the dcpo.

\begin{lemma}
  Every (structurally) algebraic dcpo is (structurally) continuous.
\end{lemma}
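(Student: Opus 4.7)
The plan is to show the structural version directly and then derive the plain version by applying functoriality of the propositional truncation to the unspecified witness of structural algebraicity.

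For structural algebraicity implying structural continuity, my strategy is to simply reuse the same data: given a structurally algebraic dcpo $D$, take $\alpha_x \colonequiv \kappa_x$ for every $x : D$. The family $\kappa_x$ is already directed with supremum $x$ by hypothesis, so the only nontrivial thing to verify is that $\kappa_x(i) \ll x$ for every $i : I_x$, which is required of an approximating family.

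This last step is where I expect the key (but very mild) work to happen. I would argue as follows: since $\kappa_x(i)$ is compact we have $\kappa_x(i) \ll \kappa_x(i)$, and since $x = \bigsqcup \kappa_x$ we have $\kappa_x(i) \below x$. Using reflexivity of $\below$ we obtain the chain
\[
\kappa_x(i) \below \kappa_x(i) \ll \kappa_x(i) \below x,
\]
and then \cref{way-below-properties}\ref{below-way-below-way-below} yields $\kappa_x(i) \ll x$, as required. This establishes the structural version.

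For the non-structural version, since continuity is defined as the propositional truncation of structural continuity, we apply the structural implication under the functorial action $\squash{-}$ of the propositional truncation to the hypothesis that $D$ is algebraic, i.e.\ to an element of $\squash*{\text{structural algebraicity of } D}$, yielding an element of $\squash*{\text{structural continuity of } D}$. No obstacle is anticipated here; the only subtle point worth emphasising is that the two notions share the same indexing type $I_x$ and family, so the implication is essentially an identity on data, with only the way-below condition needing a short derivation from compactness and the supremum property.
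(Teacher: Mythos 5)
Your proof is correct and follows essentially the same route as the paper: both reuse the compact family as the approximating family, both reduce the claim to showing $\kappa_x(i) \ll x$, and both derive this from compactness plus $\kappa_x(i) \below \bigsqcup \kappa_x = x$ via \cref{way-below-properties}\ref{below-way-below-way-below}, then transfer to the non-structural version by functoriality of the propositional truncation. The only cosmetic difference is that you pad the chain with a reflexivity step on the left to match the literal shape $x \below y \ll v \below w$, whereas the paper leaves that implicit.
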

\begin{proof}
  We prove that structurally algebraic dcpos are structurally continuous. The
  claim for algebraic and continuous then follows by functoriality of the
  propositional truncation.
  It suffices to prove that \(\kappa_x(i) \ll x\) for every \(i : I_x\).  By
  assumption, \(\kappa_x(i)\) is compact and has supremum \(x\). Hence,
  \(\kappa_x(i) \ll \kappa_x(i) \below \bigsqcup\kappa_x = x\), so
  \(\kappa_x(i) \ll x\).
\end{proof}

\section{Small bases}\label{sec:small-bases}

Recall that the traditional, set-theoretic definition of a dcpo \(D\) being
continuous says that for every element \(x \in D\), the subset
\(\set{y \in D \mid y \ll x}\) is directed with supremum~\(x\).
As explained in the \nameref{sec:continuous-and-algebraic-introduction} of this
chapter, the problem with this definition in a predicative context is that the
subset \(\set{y \in D \mid y \ll x}\) is not small.
But, as is well-known in domain theory, it is sufficient (and in fact
equivalent) to instead ask that \(D\) has a subset \(B\), known as a
\emph{basis}, such that the subset \(\set{b \in B \mid b \ll x} \subseteq B\) is
directed with supremum \(x\), see~\cite[Section~2.2.2]{AbramskyJung1994} and
\cite[Definition~III-4.1]{GierzEtAl2003}.
The idea developed in this section is that in many examples we can find a
\emph{small} basis giving us a predicative handle on the situation.%
\index{smallness}

If a dcpo has a small basis, then it is continuous. In fact, all our running
examples of continuous dcpos are actually examples of dcpos with small
bases. Moreover, dcpos with small bases are better behaved. For example, they
are locally small and so are their exponentials, which also have small bases
(\cref{sec:exponentials-with-small-bases}). Moreover, in
\cref{sec:ideal-completions-of-small-bases} we show that having a small basis is
equivalent to being presented by ideals.

\begin{definition}[Small basis]
  \index{basis|textbf}%
  For a \(\V\)-dcpo \(D\), a map \(\beta : B \to D\) with \(B : \V\) is a
  \emph{small basis} for \(D\) if the following conditions hold:
  \begin{enumerate}[(i)]
  \item\label{basis-approximating} for every \(x : D\), the family
    \(\pa*{\Sigma_{b : B}\pa*{\beta(b) \ll x}} \xrightarrow{\beta \circ \fst}
    D\) is directed and has supremum \(x\);
  \item\label{basis-small-way-below} for every \(x : D\) and \(b : B\), the
    proposition \(\beta(b) \ll x\) is \(\V\)-small.
  \end{enumerate}
  We will write \(\ddset_\beta x\) for the type
  \(\Sigma_{b : B}\pa*{\beta(b) \ll x}\) and conflate this type with the
  canonical map \(\ddset_\beta x \xrightarrow {\beta \circ \fst} D\).
  \nomenclature[ddsetbetax]{$\ddset_\beta x$}{type of basis elements way below
    \(x\) (or its associated family)}
\end{definition}

\cref{basis-small-way-below} ensures not only that the type
\(\Sigma_{b : B}\pa*{\beta(b) \ll x}\) is \(\V\)-small, but also that a dcpo
with a small basis is locally small (\cref{locally-small-if-small-basis}).

\begin{remark}\label{tacitly-use-small-basis}
  If \(\beta : B \to D\) is a small basis for a \(\V\)-dcpo \(D\), then the type
  \(\ddset_\beta x\) is small. Hence, we have a type \(I : \V\) and an equivalence
  \(\varphi : I \simeq \ddset_\beta x\) and we see that the family
  \(I \xrightarrow{\varphi} \ddset_\beta x \xrightarrow{\beta \circ \fst} D\) is
  directed and has the same supremum as \(\ddset_\beta x \to D\).
  We will use this tacitly and write as if the type \(\ddset_\beta x\) is actually a
  type in \(\V\).
\end{remark}

\begin{lemma}\label{structural-continuity-if-small-basis}%
  \index{continuity!of a dcpo}\index{continuity!structural}%
  If a dcpo comes equipped with a small basis, then it is structurally
  continuous. Hence, if a dpco has an unspecified small basis, then it is
  continuous.
\end{lemma}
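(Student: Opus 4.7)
The plan is to construct the required approximating family directly from the small basis. Fix a \(\V\)-dcpo \(D\) together with a small basis \(\beta : B \to D\), and let \(x : D\) be arbitrary. The candidate approximating family is \(\ddset_\beta x \xrightarrow{\beta \circ \fst} D\). By clause \ref{basis-approximating} of the definition of a small basis, this family is directed and has supremum \(x\), and by construction each element in its image is of the form \(\beta(b)\) for some \(b : B\) together with a witness that \(\beta(b) \ll x\), so every element of the family is way below \(x\).

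The only issue is the size of the index type: structural continuity demands an index type in \(\V\), whereas \(\ddset_\beta x \equiv \Sigma_{b : B}(\beta(b) \ll x)\) is a priori only in \(\V \sqcup \T\), where \(\T\) is the universe in which the order of \(D\) takes values. This is precisely where clause \ref{basis-small-way-below} of the small basis definition comes in: it provides, for every \(b : B\) and \(x : D\), a type in \(\V\) equivalent to \(\beta(b) \ll x\). Combined with \(B : \V\), closure of \(\V\) under \(\Sigma\) yields a specified type \(I : \V\) with an equivalence \(\varphi : I \simeq \ddset_\beta x\). Precomposing the candidate family with \(\varphi\) gives an index type in \(\V\) while preserving directedness, the supremum, and the way-below property (as all of these are invariant under reindexing by an equivalence). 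This is the content anticipated in the preceding remark.

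The main (minor) obstacle is a bookkeeping one: we must make the smallness of \(I\) genuinely specified rather than merely existing. Inspecting the statement of local smallness in \cref{local-smallness-alt} and the formulation of smallness via \cref{Pi-Sigma-distr}, one extracts from clause \ref{basis-small-way-below} a chosen \(\V\)-valued relation \({\ll_{\V}}\) on \(B \times D\) with specified equivalences \(\pa{b \ll_{\V} x} \simeq \pa{\beta(b) \ll x}\); the index type is then \(\Sigma_{b : B}(b \ll_{\V} x)\), which lives in \(\V\) by the closure assumptions on universes. This assignment is canonical, so the whole construction yields a specified left adjunct in the sense of \cref{structural-continuity-in-terms-of-ladj}, witnessing structural continuity.

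For the second part, the implication from continuity to the truncated statement is immediate since continuity is defined as the propositional truncation of structural continuity. If \(D\) merely has an unspecified small basis, i.e.\ there is an element of \(\squash{\Sigma_{B : \V}\Sigma_{\beta : B \to D}\,\beta \text{ is a small basis for } D}\), then since continuity of \(D\) is a proposition, we may eliminate the truncation and apply the first part of the lemma to conclude continuity of \(D\).
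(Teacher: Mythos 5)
Your proof is correct and follows the same approach as the paper: the paper's proof is a one-liner that the assignment \(x \mapsto \ddset_\beta x\) witnesses structural continuity, with the smallness bookkeeping handled tacitly by the preceding remark (\cref{tacitly-use-small-basis}); you simply spell that bookkeeping out in full, which is fine but not a different argument.
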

\begin{proof}
  For every element \(x\) of a dcpo \(D\), the family \(\ddset_\beta x \to D\)
  approximates \(x\), so the assignment \(x \mapsto \ddset_\beta x\) makes \(D\)
  structurally continuous.
\end{proof}

\begin{lemma}\label{below-in-terms-of-way-below-basis}
  In a dcpo \(D\) with a small basis \(\beta : B \to D\), we have
  \( x \below y\) if and only if \(\forall_{b : B}\pa*{\beta(b) \ll x \to \beta(b) \ll y} \)
    for every \(x,y : D\).
\end{lemma}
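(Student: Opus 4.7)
The plan is to prove both directions directly using the basic properties of the way-below relation collected in \cref{way-below-properties} together with the defining property of a small basis.

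For the forward direction, suppose $x \below y$ and let $b : B$ with $\beta(b) \ll x$. I would apply \cref{way-below-properties}\ref{below-way-below-way-below} to the chain $\beta(b) \below \beta(b) \ll x \below y$ to conclude $\beta(b) \ll y$, which is exactly what is required.

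For the backward direction, suppose we have the hypothesis $\forall_{b : B}\pa*{\beta(b) \ll x \to \beta(b) \ll y}$. Since $\beta$ is a small basis, condition \ref{basis-approximating} gives that the directed family $\ddset_\beta x \to D$ has supremum $x$. Hence it suffices to show that this family is bounded above by $y$, that is, that $\beta(b) \below y$ whenever $b : B$ satisfies $\beta(b) \ll x$. For such $b$, the hypothesis yields $\beta(b) \ll y$, and then \cref{way-below-properties}\ref{below-if-way-below} gives $\beta(b) \below y$. Taking the supremum, $x = \bigsqcup \ddset_\beta x \below y$, as desired.

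This proof is essentially routine given the lemmas already established; there is no real obstacle, since the tacit replacement of $\ddset_\beta x$ by an equivalent family indexed in $\V$ (as described in \cref{tacitly-use-small-basis}) is harmless for the order-theoretic argument and both directions are short applications of the transport properties of the way-below relation together with the approximation property of the basis.
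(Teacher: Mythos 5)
Your proof is correct and follows essentially the same approach as the paper's: the forward direction uses the transitivity-like property of the way-below relation, and the backward direction uses that $x = \bigsqcup \ddset_\beta x$ together with the observation that $y$ is an upper bound of $\ddset_\beta x$. You are slightly more explicit in citing \cref{way-below-properties}, but the structure and key idea are identical.
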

\begin{proof}
  If \(x \below y\) and \(\beta(b) \ll x\), then \(\beta(b) \ll y\), so the
  left-to-right implication is clear.
  For the converse, suppose that the condition of the lemma holds. Because
  \(x = \bigsqcup \ddset_\beta x\), the inequality \(x \below y\) holds as soon as
  \(\beta(b) \below y\) for every \(b : B\) with \(\beta(b) \ll x\), but this is
  implied by the condition.
\end{proof}

\begin{proposition}\label{locally-small-if-small-basis}\index{dcpo!locally small}%
  A dcpo with a small basis is locally small. Moreover, the way-below relation
  on all of the dcpo has small values.
\end{proposition}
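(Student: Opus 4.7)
The plan is to prove the two claims in sequence, using the local smallness to bootstrap the smallness of the way-below relation via the already-established \cref{cont-loc-small-iff-way-below-small}.

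For local smallness, I would appeal directly to \cref{below-in-terms-of-way-below-basis}, which gives the equivalence
\[
  (x \below y) \iff \forall_{b : B}\pa*{\beta(b) \ll x \to \beta(b) \ll y}.
\]
The right-hand side is a $\Pi$-type over $B : \V$ whose body is an implication of two propositions, each of which is $\V$-small by clause (ii) of the definition of a small basis. Since $\V$-smallness is closed under $\Pi$ indexed over $\V$-small types and under implication (being a function type between two $\V$-small propositions), the right-hand side is $\V$-small. By the equivalence, $x \below y$ is $\V$-small, witnessing local smallness.

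For the second claim, I would combine \cref{structural-continuity-if-small-basis}, which tells us that $D$ is structurally continuous whenever it has a small basis, with \cref{cont-loc-small-iff-way-below-small}, which states that a (structurally) continuous dcpo has $\V$-small way-below relation precisely when it is locally small. Since we have just established the latter, the way-below relation $x \ll y$ is $\V$-small for all $x, y : D$.

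I do not expect any genuine obstacle here: both claims reduce to invocations of results already proved in the preceding pages, and the only care needed is to verify that the closure of $\V$-smallness under the relevant type formers (here $\Pi$ over a $\V$-small type and function types between $\V$-small propositions) is used correctly, which is routine given that $B : \V$ and propositional extensionality ensures \(\V\)-smallness for propositions is well-behaved (cf.\ \cref{being-small-prop-is-prop}).
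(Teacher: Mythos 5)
Your proposal is correct and follows essentially the same route as the paper's proof, which simply cites \cref{below-in-terms-of-way-below-basis} for local smallness and then \cref{cont-loc-small-iff-way-below-small} for the way-below claim. The extra detail you supply about closure of \(\V\)-smallness under \(\Pi\) over a type in \(\V\) is a correct elaboration of what the paper leaves implicit.
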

\begin{proof}
  The first claim follows from \cref{below-in-terms-of-way-below-basis} and the
  second follows from the first and \cref{cont-loc-small-iff-way-below-small}.
\end{proof}

A notable feature of dcpos with a small basis is that interpolants for the
way-below relation,
cf.\ \cref{nullary-interpolation,unary-interpolation,binary-interpolation}, can
be found in the basis, as we show now.

\index{interpolation!in the basis|(}
\begin{lemma}[Nullary interpolation in the basis for the way-below relation]
  \label{nullary-interpolation-basis}
  In a dcpo \(D\) with a small basis \(\beta : B \to D\), there exists \(b : B\)
  with \(\beta(b) \ll x\) for every \(x : D\).
\end{lemma}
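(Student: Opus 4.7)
The plan is to unpack the definition of a small basis and use the directedness condition directly. By part~\ref{basis-approximating} of the definition of a small basis, for every $x : D$ the family $\ddset_\beta x \xrightarrow{\beta \circ \fst} D$ is directed. In particular, recalling the definition of a directed family, its indexing type $\ddset_\beta x \equiv \Sigma_{b : B}(\beta(b) \ll x)$ is inhabited.

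Since the conclusion we want to establish, namely $\exists_{b : B}(\beta(b) \ll x)$, is a proposition, we may eliminate this propositional truncation to obtain an actual element $(b,p) : \Sigma_{b : B}(\beta(b) \ll x)$. The first component $b : B$ together with the witness $p : \beta(b) \ll x$ then immediately supplies the required inhabitant of $\exists_{b : B}(\beta(b) \ll x)$.

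There is no real obstacle here: the entire content is that directedness of the approximating family includes the inhabitation clause, which is precisely nullary interpolation restricted to the basis. Compare this with the proof of the unrestricted nullary interpolation lemma (\cref{nullary-interpolation}), which proceeds in exactly the same way using directedness of $\alpha_x$; the only difference is that here the approximants are indexed by basis elements rather than by an arbitrary $I : \V$.
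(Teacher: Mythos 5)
Your argument is correct and is essentially the paper's proof, only spelled out in more detail: the paper simply observes that $\ddset_\beta x$ is directed, hence its domain is inhabited, which is exactly the conclusion. The extra remarks about eliminating the truncation are accurate but not strictly necessary, since the inhabitation clause of directedness is already literally $\exists_{b : B}(\beta(b) \ll x)$.
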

\begin{proof}
  For every \(x : D\), the approximating family \(\ddset x\) is directed, so
  there exists \(b : B\) with \(\beta(b) \ll x\).
\end{proof}

\begin{lemma}[Unary interpolation in the basis for the way-below relation]
  \label{unary-interpolation-basis}
  If \(x \ll y\) in a dcpo \(D\) with a small basis \(\beta : B \to D\), then
  there exists an interpolant \(b : B\) such that \(x \ll \beta(b) \ll y\).
\end{lemma}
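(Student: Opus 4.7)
The plan is to reduce the claim to the (non-basis) unary interpolation lemma, \cref{unary-interpolation}, and then use the basis approximation \emph{at $y$} to pluck a suitable basis element sandwiched around the resulting interpolant.

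Since the conclusion is a proposition we may freely unpack truncated existentials throughout. Applying \cref{unary-interpolation} to the hypothesis $x \ll y$ yields some $d : D$ with $x \ll d \ll y$. Now we invoke the small basis: by clause~\ref{basis-approximating} of the definition, the family $\ddset_\beta y \to D$ is directed and has supremum $y$. Since $d \ll y$ and (by \cref{tacitly-use-small-basis}) this family may be taken to be indexed by a type in~$\V$, the definition of the way-below relation produces an index $(b,p) : \ddset_\beta y$ — that is, some $b : B$ together with a witness $p$ of $\beta(b) \ll y$ — such that $d \below \beta(b)$.

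I claim that this $b$ is the desired interpolant. From $x \ll d$ and $d \below \beta(b)$ we obtain $x \ll \beta(b)$ by \cref{way-below-properties}\ref{below-way-below-way-below} (chaining $x \below x \ll d \below \beta(b)$); and $\beta(b) \ll y$ is exactly the witness $p$.

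There is no serious obstacle. The argument is a careful chaining of the way-below calculus with the approximation property of the basis, and the one point worth highlighting is the choice of where to apply the basis approximation: applying it at $y$ itself (rather than at the intermediate $d$) automatically delivers the required $\beta(b) \ll y$, and lets one get away with a single application of \cref{unary-interpolation} rather than two.
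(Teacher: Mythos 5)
Your argument is essentially the same as the paper's: obtain an interpolant $d$ from \cref{unary-interpolation}, then use the approximating family at~$y$ to find a basis element $b$ with $d \below \beta(b) \ll y$, and conclude $x \ll \beta(b)$ by chaining through~$d$. The only step the paper makes explicit and you leave implicit is the justification that \cref{unary-interpolation} applies at all: its hypothesis requires $D$ to be (structurally) continuous, whereas the present lemma only assumes a small basis, so one should first invoke \cref{structural-continuity-if-small-basis} to pass from the small basis to structural continuity. That is a one-line addition and the rest of your reasoning is sound.
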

\begin{proof}
  The small basis ensures that \(D\) is structurally continuous by
  \cref{structural-continuity-if-small-basis}. Hence, if \(x \ll y\), then there
  exists an interpolant \(d : D\) with \(x \ll d \ll y\).
  Now \(d \ll y \below \bigsqcup \ddset_\beta y\), so there exists \(b : B\) such
  that \(d \below \beta(b) \ll y\). Moreover, \(x \ll d \below \beta(b)\),
  completing the proof.
\end{proof}

\begin{lemma}[Binary interpolation in the basis for the way-below relation]
  \label{binary-interpolation-basis}
  If \(x \ll z\) and \(y \ll z\) in a dcpo \(D\) with a small basis
  \(\beta : B \to D\), then there exists an interpolant \(b : B\) such that
  \(x,y \ll \beta(b)\) and \(\beta(b) \ll z\).
\end{lemma}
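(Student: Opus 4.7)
The plan is to mimic the proof of the unrestricted binary interpolation (Lemma~\ref{binary-interpolation}), but using unary interpolation in the basis (Lemma~\ref{unary-interpolation-basis}) rather than unary interpolation in the whole dcpo, and then exploiting directedness of the approximating family \(\ddset_\beta z\) to find a common upper bound that already lives in the basis.

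More concretely, since we are proving a proposition, I would first apply Lemma~\ref{unary-interpolation-basis} to \(x \ll z\) to obtain \(b_1 : B\) with \(x \ll \beta(b_1) \ll z\), and then apply it again to \(y \ll z\) to obtain \(b_2 : B\) with \(y \ll \beta(b_2) \ll z\). The pairs \((b_1,\,\beta(b_1)\ll z)\) and \((b_2,\,\beta(b_2)\ll z)\) are thus elements of \(\ddset_\beta z\).

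Next, the whole point is that the family \(\ddset_\beta z \to D\) is directed by the defining clause \ref{basis-approximating} of a small basis. Consequently there exists an element \((b,\,p) : \ddset_\beta z\), i.e.\ \(b : B\) together with a proof \(p : \beta(b) \ll z\), such that \(\beta(b_1) \below \beta(b)\) and \(\beta(b_2) \below \beta(b)\). Taking this \(b\) as our interpolant, we have \(x \ll \beta(b_1) \below \beta(b)\) and \(y \ll \beta(b_2) \below \beta(b)\), which by Lemma~\ref{way-below-properties}\ref{below-way-below-way-below} yield \(x \ll \beta(b)\) and \(y \ll \beta(b)\); meanwhile \(\beta(b) \ll z\) holds by choice of \(b\).

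There is no real obstacle here: the argument is a one-step application of directedness after doubly interpolating in the basis. The only subtlety to mind is that we are proving an existence statement (a proposition), so the two uses of Lemma~\ref{unary-interpolation-basis} and the use of semidirectedness of \(\ddset_\beta z\) can all be eliminated through the propositional truncation without any choice principle.
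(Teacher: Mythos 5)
Your proof is correct, but it follows a genuinely different route from the paper's. The paper first invokes the general binary interpolation lemma (\cref{binary-interpolation}) for structurally continuous dcpos — which the small basis provides via \cref{structural-continuity-if-small-basis} — to obtain an interpolant \(d : D\) with \(x,y \ll d \ll z\), and then uses \(d \ll z = \bigsqcup \ddset_\beta z\) to find a single \(b : B\) with \(d \below \beta(b) \ll z\). By contrast, you work entirely within the basis from the start: two applications of \cref{unary-interpolation-basis} land interpolants \(\beta(b_1), \beta(b_2)\) in the basis, and you then use directedness of \(\ddset_\beta z\) to get a common upper bound \(\beta(b)\) there, applying \cref{way-below-properties}\ref{below-way-below-way-below} to conclude \(x,y \ll \beta(b)\). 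Both arguments are correct and of comparable length; the paper's is shorter on the page because it reuses an already-proved lemma as a black box, while yours is more self-contained and in fact parallels the paper's own proof of \cref{binary-interpolation} but with basis interpolants substituted throughout — nicely bypassing the detour through an intermediate \(d : D\). Your remark that the entire argument goes through under the propositional truncation, as we are proving a proposition, is also the right observation.
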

\begin{proof}
  The small basis ensures that \(D\) is structurally continuous by
  \cref{structural-continuity-if-small-basis}. Hence, if \(x \ll y\) and
  \(y \ll z\), then there exists an interpolant \(d : D\) with
  \(x,y \ll d \ll z\).
  Now \(d \ll z \below \bigsqcup \ddset_\beta z\), so there exists \(b : B\) such
  that \(d \below \beta(b) \ll z\). Moreover, \(x,y \ll d \below \beta(b)\),
  completing the proof.
\end{proof}%
\index{interpolation!in the basis|)}

Before proving the analogue of
\cref{continuity-closed-under-continuous-retracts} (closure under Scott
continuous retracts) for small bases, we need a type-theoretic analogue of
\cite[Proposition~2.2.4]{AbramskyJung1994} and
\cite[Proposition~III-4.2]{GierzEtAl2003}, which essentially says that it is
sufficient for a ``subset'' of \(\ddset_\beta x\) (given by \(\sigma\) in the
lemma) to be directed and have suprema \(x\).

\begin{lemma}\label{subbasis-lemma}
  Suppose that we have an element \(x\) of a \(\V\)-dcpo \(D\) together with two
  maps \(\beta : B \to D\) and
  \(\sigma : I \to \Sigma_{b : B}\pa*{\beta(b) \ll x}\) with \(I : \V\).
  If \(\ddset_\beta x \circ \sigma\) is directed and has supremum \(x\), then
  \(\ddset_\beta x\) is directed with supremum \(x\) too.
\end{lemma}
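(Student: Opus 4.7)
The plan is to verify the two defining conditions of the directed supremum separately, leveraging the way-below hypothesis on basis elements in the obvious way.

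First I would establish that $\ddset_\beta x$ is directed. Inhabitedness is immediate: since $\ddset_\beta x \circ \sigma$ is assumed directed, $I$ is inhabited, and picking any $i : I$ gives $\sigma(i) : \Sigma_{b : B}(\beta(b) \ll x)$. For semidirectedness, suppose $(b_1, p_1), (b_2, p_2) : \Sigma_{b : B}(\beta(b) \ll x)$ are given; I need to produce an element above both. Here is the main use of the hypothesis: since $\beta(b_1) \ll x$ and $x = \bigsqcup (\ddset_\beta x \circ \sigma)$, the way-below relation (\cref{def:way-below}) yields some $i_1 : I$ with $\beta(b_1) \below \beta(\fst(\sigma(i_1)))$, and analogously some $i_2 : I$ with $\beta(b_2) \below \beta(\fst(\sigma(i_2)))$. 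Directedness of $\ddset_\beta x \circ \sigma$ then gives $i : I$ with both $\beta(\fst(\sigma(i_1)))$ and $\beta(\fst(\sigma(i_2)))$ below $\beta(\fst(\sigma(i)))$, so $\sigma(i)$ is the required upper bound in $\ddset_\beta x$. Since semidirectedness is a proposition, the use of propositional truncation in extracting $i_1$, $i_2$, and $i$ is unproblematic.

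Next I would check that $x$ is the supremum of $\ddset_\beta x$. That $x$ is an upper bound is immediate from item~\ref{below-if-way-below} of~\cref{way-below-properties}: if $(b, p) : \Sigma_{b : B}(\beta(b) \ll x)$, then $\beta(b) \ll x$ implies $\beta(b) \below x$. For the universal property, suppose $y : D$ is any upper bound of $\ddset_\beta x$. Then for every $i : I$, the element $(\ddset_\beta x \circ \sigma)(i) \equiv \beta(\fst(\sigma(i)))$ is of the form $\beta(b)$ with $\beta(b) \ll x$, so it lies in the image of $\ddset_\beta x$ and is therefore below $y$. Hence $y$ is an upper bound of $\ddset_\beta x \circ \sigma$, and consequently $x = \bigsqcup (\ddset_\beta x \circ \sigma) \below y$.

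I do not expect any genuine obstacle here: once the way-below property is invoked to produce witnesses $i_1, i_2 : I$ in the semidirectedness step, the rest of the argument is essentially unpacking definitions. The slight subtlety is that $\ddset_\beta x$ lives in a larger universe than $\V$ a priori (cf.~\cref{tacitly-use-small-basis}), but directedness and the supremum property are formulated independently of any such size concerns, so this is not an issue for the present lemma.
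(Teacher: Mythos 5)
Your proof is correct and follows essentially the same route as the paper: inhabitedness from directedness of \(\ddset_\beta x \circ \sigma\); semidirectedness by using the way-below hypothesis against the supremum \(x = \bigsqcup(\ddset_\beta x \circ \sigma)\) to obtain \(i_1, i_2\) and then directedness of the composite family to merge them; and the least-upper-bound property by observing that any upper bound of \(\ddset_\beta x\) is also one of \(\ddset_\beta x \circ \sigma\). The only cosmetic difference is that you treat directedness before the supremum property, whereas the paper does the reverse.
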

\begin{proof}
  Suppose that \({\ddset_\beta x} \circ {\sigma}\) is directed and has supremum
  \(x\). Obviously, \(x\) is an upper bound for \(\ddset_\beta x\), so we are to
  prove that it is the least. If \(y\) is an upper bound for \(\ddset_\beta x\),
  then it is also an upper bound for \(\ddset_\beta x \circ \sigma\) which has
  supremum \(x\), so that \(x \below y\) follows.  So the point is directedness
  of \(\ddset_\beta x\). Its domain is inhabited, because \(\sigma\) is
  directed.
  Now suppose that we have \(b_1,b_2 : B\) with \(\beta(b_1),\beta(b_2) \ll
  x\). Since \(x = \bigsqcup \pa*{\ddset_\beta x \circ \sigma}\), there exist
  \(i_1,i_2 : I\) such that \(\beta(b_1) \below \beta(\fst(\sigma(i_1)))\) and
  \(\beta(b_2) \below \beta(\fst(\sigma(i_2)))\).
  Since \(\ddset_\beta x \circ \sigma\) is directed, there exists \(i : I\) with
  \(\beta(\fst(\sigma(i_1))),\beta(\fst(\sigma(i_2))) \below
  \beta(\fst(\sigma(i)))\). Hence, writing \(b \colonequiv \fst(\sigma(i))\), we
  have \(\beta(b) \ll x\) and \(\beta(b_1),\beta(b_2) \below \beta(b)\).
  Thus, \(\ddset_\beta x\) is directed, as desired.
\end{proof}

\begin{theorem}\label{small-basis-closed-under-continuous-retracts}%
  \index{retract!Scott continuous}%
  If we have a Scott continuous retract \(\retract{D}{E}\) and
  \(\beta : B \to E\) is a small basis for \(E\), then \(r\circ \beta\) is a
  small basis for \(D\).
\end{theorem}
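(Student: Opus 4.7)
The plan is to verify the two defining conditions of a small basis for the map $r \circ \beta : B \to D$, namely (i) that $\ddset_{r \circ \beta}\, x$ is directed with supremum $x$ for every $x : D$, and (ii) that $(r\circ\beta)(b) \ll x$ is $\V$-small for every $b : B$ and $x : D$. The second condition is easy and will be handled by chaining together inheritance results; the main work is the first, where the key idea is to transport the approximating family $\ddset_\beta\, s(x)$ in $E$ across the retract via $r$ and then apply \cref{subbasis-lemma}.

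For the smallness condition (ii), I would argue as follows. Since $\beta$ is a small basis for $E$, \cref{locally-small-if-small-basis} gives that $E$ is locally small. By \cref{locally-small-retract}, $D$ is then locally small too. Furthermore, \cref{structural-continuity-if-small-basis} tells us that $E$ is structurally continuous, and \cref{continuity-closed-under-continuous-retracts} then gives that $D$ is structurally continuous. Finally, \cref{cont-loc-small-iff-way-below-small} yields that the way-below relation on $D$ is small-valued, which gives (ii).

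For condition (i), I would fix $x : D$ and consider the small basis approximation $\ddset_\beta\, s(x) \to E$, which is directed in $E$ with supremum $s(x)$. Applying the Scott continuous retraction $r$ gives a directed family $r \circ \beta \circ \fst : \ddset_\beta\, s(x) \to D$ whose supremum is $r(s(x)) = x$. By \cref{continuous-retraction-way-below-criterion}, the hypothesis $\beta(b) \ll s(x)$ entails $(r\circ\beta)(b) \ll x$, so there is a well-defined map $\sigma : \ddset_\beta\, s(x) \to \ddset_{r \circ \beta}\, x$ sending $(b,p)$ to $(b, \text{refined proof})$. By construction, $(\ddset_{r\circ\beta}\, x) \circ \sigma$ is definitionally equal to $r \circ \beta \circ \fst$, which we have just shown is directed with supremum $x$. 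An application of \cref{subbasis-lemma} then yields that $\ddset_{r\circ\beta}\, x$ is itself directed with supremum $x$, completing condition (i).

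The main subtlety will be keeping track of universe levels: \cref{subbasis-lemma} demands that the domain of $\sigma$ lie in $\V$, so I need to implicitly replace $\ddset_\beta\, s(x)$ by an equivalent type in $\V$ (as per the remark following the definition of a small basis), which is legitimate because $B : \V$ and $\beta(b) \ll s(x)$ is $\V$-small by hypothesis. Aside from this bookkeeping, the argument is a straightforward chaining of the results already established.
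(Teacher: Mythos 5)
Your proposal is correct and follows essentially the same route as the paper: the same chain of lemmas (local smallness of $E$ via \cref{locally-small-if-small-basis}, passing through \cref{locally-small-retract}, \cref{continuity-closed-under-continuous-retracts}, \cref{structural-continuity-if-small-basis} and \cref{cont-loc-small-iff-way-below-small} for the smallness condition), and the same application of \cref{continuous-retraction-way-below-criterion} to build the map $\sigma : \ddset_{\beta}\,s(x) \to \ddset_{r \circ \beta}\,x$ feeding into \cref{subbasis-lemma}. The only difference is presentational: you spell out the composite $(\ddset_{r\circ\beta}\, x) \circ \sigma = r \circ \beta \circ \fst$ explicitly and flag the universe-level bookkeeping, both of which the paper handles more tersely.
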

\begin{proof}
  First of all, note that \(E\) is locally small by
  \cref{locally-small-if-small-basis}. But being locally small is closed under
  Scott continuous retracts by~\cref{locally-small-retract}, so \(D\) is locally
  small too. Moreover, \(D\) is structurally continuous by virtue of
  \cref{continuity-closed-under-continuous-retracts} and
  \cref{structural-continuity-if-small-basis}. Hence, the way-below relation is
  small-valued by \cref{cont-loc-small-iff-way-below-small}.
  In particular, the type \(r(\beta(b)) \ll x\) is small for every \(b : B\) and
  \(x : D\).

  We are going to use \cref{subbasis-lemma} to show that
  \(\ddset_{r \circ \beta} x\) is directed and has supremum \(x\) for every
  \(x : D\). By \cref{continuous-retraction-way-below-criterion}, the identity
  on \(B\) induces a well-defined map
  \(\sigma : \pa*{\Sigma_{b : B}\pa*{\beta(b) \ll s(x)}} \to \pa*{\Sigma_{b :
      B}\pa*{r(\beta(b)) \ll y}}\).
  Now \cref{subbasis-lemma} tells us that it suffices to prove that
  \(r \circ \ddset_\beta s(x)\) is directed with supremum \(x\).
  But \(\ddset_\beta s(x)\) is directed with supremum \(x\), so by Scott
  continuity of \(r\), the family \(r \circ \ddset_\beta s(x)\) is directed with
  supremum \(r(s(x)) = x\), as desired.
\end{proof}

Finally, a nice use of dcpos with small bases is that they yield locally small
exponentials, as we can restrict the quantification in the pointwise order to
elements of the small basis.

\begin{proposition}\label{exponential-is-locally-small}%
  \index{dcpo!locally small}\index{exponential}%
  If \(D\) is a dcpo with an unspecified small basis and \(E\) is a locally
  small dcpo, then the exponential \(E^D\) is locally small too.
\end{proposition}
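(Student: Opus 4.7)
The plan is to show that for $f, g : E^D$, the pointwise order $f \below_{E^D} g$ is equivalent to a small type, by restricting the quantification in the pointwise order from all of $D$ (which is large) to just a small basis of $D$. Since being locally small is a proposition, I may unpack the propositional truncation and assume we have a specified small basis $\beta : B \to D$ with $B : \V$.

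Concretely, I would prove the logical equivalence
\[
  \pa*{f \below_{E^D} g} \iff \forall_{b : B}\pa*{f(\beta(b)) \below_E g(\beta(b))}.
\]
The left-to-right direction is immediate since $\beta(b) : D$. For the converse, suppose $f(\beta(b)) \below g(\beta(b))$ for every $b : B$. Given $x : D$, by the small basis assumption, $x = \bigsqcup \ddset_\beta x$, and by Scott continuity of $f$, we have $f(x) = \bigsqcup_{b \in \ddset_\beta x} f(\beta(b))$. For each such $b$ with $\beta(b) \ll x$, monotonicity of $g$ (from \cref{continuous-implies-monotone}) gives $f(\beta(b)) \below g(\beta(b)) \below g(x)$, so $g(x)$ is an upper bound of the family and hence $f(x) \below g(x)$.

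Next, I would observe that the right-hand side is $\V$-small: the index type $B$ lies in $\V$, and for each $b : B$ the proposition $f(\beta(b)) \below_E g(\beta(b))$ is $\V$-small by local smallness of $E$ (using \cref{local-smallness-alt} to replace it by an equivalent $\V$-valued relation). Hence $\Pi_{b : B}(f(\beta(b)) \below_{\V}^E g(\beta(b)))$ lives in $\V$, and it is equivalent to $f \below_{E^D} g$ by the argument above. By \cref{local-smallness-alt} again, this exhibits $E^D$ as locally small.

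The only subtlety — not really an obstacle — is the use of the unspecified basis: since the statement ``$E^D$ is locally small'' is a proposition (via \cref{local-smallness-alt}, being locally small is a property by \cref{Pi-is-prop} together with \cref{is-small-is-prop} or \cref{being-small-prop-is-prop}), we may eliminate the propositional truncation and work with a chosen small basis $\beta : B \to D$ throughout the argument.
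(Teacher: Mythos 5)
Your proof is correct and takes essentially the same approach as the paper: unpack the propositional truncation since local smallness is a property, then establish the logical equivalence \(\pa*{f \below_{E^D} g} \iff \forall_{b : B}\pa*{f(\beta(b)) \below_E g(\beta(b))}\) using Scott continuity of \(f\), the approximating property of the basis, and monotonicity of \(g\). The paper's version is slightly terser (leaving the monotonicity of \(g\) implicit as ``easily seen'') but the argument is identical in structure.
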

\begin{proof}
  Being locally small is a proposition, so in proving the result we may assume
  that \(D\) comes equipped with a small basis \(\beta : B \to D\). For
  arbitrary Scott continuous functions \(f,g : D \to E\), we claim that
  \(f \below g\) precisely when \(\forall_{b : B}\pa*{f(\beta(b)) \below
    g(\beta(b))}\), which is a small type using that \(E\) is locally small.
  The left-to-right implication is obvious, so suppose that
  \(f(\beta(b)) \below g(\beta(b))\) for every \(b : B\) and let \(x : D\) be
  arbitrary. We are to show that \(f(x) \below g(x)\). Since
  \(x = \bigsqcup \ddset_\beta x\), it suffices to prove
  \(f\pa*{\bigsqcup \ddset_\beta x} \below g\pa*{\bigsqcup \ddset_\beta x}\) and in
  turn, that \(f(\beta(b)) \below g\pa*{\bigsqcup \ddset_\beta x}\) for every
  \(b : B\). But is easily seen to hold, because
  \(f(\beta(b)) \below g(\beta(b))\) for every \(b : B\) by assumption.
\end{proof}

\section{Small compact bases}\label{sec:small-compact-bases}

Similarly to the progression from continuous dcpos (\cref{sec:continuous-dcpos})
to algebraic ones (\cref{sec:algebraic-dcpos}), we now turn to small
\emph{compact} bases.

\begin{definition}[Small compact basis]%
  \index{basis!compact|textbf}%
  For a \(\V\)-dcpo \(D\), a map \(\beta : B \to D\) with \(B : \V\) is a
  \emph{small compact basis} for \(D\) if the following conditions hold:
  \begin{enumerate}[(i)]
  \item for every \(b : B\), the element \(\beta(b)\) is compact in \(D\);
  \item for every \(x : D\), the family
    \(\pa*{\Sigma_{b : B}\pa*{\beta(b) \below x}} \xrightarrow{\beta \circ \fst} D\)
    is directed and has supremum \(x\);
  \item for every \(x : D\) and \(b : B\), the proposition \(\beta(b) \below x\) is
    \(\V\)-small.
  \end{enumerate}
  We will write \(\dset_\beta x\) for the type
  \(\Sigma_{b : B}\pa*{\beta(b) \below x}\) and conflate this type with the
  canonical map \(\dset_\beta x \xrightarrow {\beta \circ \fst} D\).
  \nomenclature[dsetbetax]{$\dset_\beta x$}{type of compact elements below
    \(x\) (or its associated family)}
\end{definition}

\begin{remark}
  If \(\beta : B \to D\) is a small compact basis for a \(\V\)-dcpo \(D\), then the
  type \(\dset_\beta x\) is small. Similarly to \cref{tacitly-use-small-basis}, we
  will use this tacitly and write as if the type \(\dset_\beta x\) is actually a
  type in \(\V\).
\end{remark}

\begin{lemma}\label{structural-algebraicity-if-small-compact-basis}%
  \index{algebraicity}\index{algebraicity!structural}%
  If a dcpo comes equipped with a small compact basis, then it is structurally
  algebraic. Hence, if a dpco has an unspecified small compact basis, then it is
  algebraic.
\end{lemma}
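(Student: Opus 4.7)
The plan is to extract the compact family witnessing structural algebraicity directly from the small compact basis. Given $x : D$, the candidate compact family is $\dset_\beta x \xrightarrow{\beta \circ \fst} D$. I would first verify all four requirements of structural algebraicity for this family:

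First, the indexing type must lie in $\V$. Since $\beta(b) \below x$ is assumed $\V$-small for every $b : B$ and $B : \V$, the type $\Sigma_{b : B}\pa*{\beta(b) \below x}$ is $\V$-small, so using~\cref{tacitly-use-small-basis} (or more precisely, an equivalent copy in $\V$) we may treat it as indexed by some $I_x : \V$. Second, directedness and the supremum condition are part of the definition of a small compact basis. Third, each element in the image of the family has the form $\beta(b)$ for some $b : B$, and is therefore compact by the first clause in the definition of a small compact basis. This gives, for each $x : D$, a specified $I_x$ and directed compact family $\kappa_x : I_x \to D$ with $\bigsqcup \kappa_x = x$, which is precisely the data of structural algebraicity.

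For the second claim, I would simply observe that the argument above constructs a function from the type of small compact bases on $D$ to the type of structural algebraicity witnesses for $D$. Applying the functoriality of the propositional truncation to this function, an unspecified small compact basis (an element of the truncation of the type of small compact bases) yields an unspecified witness of structural algebraicity, which is by definition algebraicity.

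There is no real obstacle here; the lemma is essentially definitional once one unpacks what a small compact basis provides and what structural algebraicity requires. The only minor technical point to mention is the implicit use of the equivalence between $\dset_\beta x$ and a genuine type in $\V$, which is what licenses using it as the indexing type $I_x$ in the definition of structural algebraicity, exactly as flagged in the remark immediately preceding the statement.
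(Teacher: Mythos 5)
Your proof is correct and follows essentially the same route as the paper's: take $\dset_\beta x$ (after passing to an equivalent indexing type in $\V$) as the compact family, observe that its directedness, supremum, and compactness of its values all come directly from the definition of a small compact basis, and conclude by functoriality of the propositional truncation for the untruncated-to-truncated passage.
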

\begin{proof}
  For every element \(x\) of a dcpo \(D\), the family \(\dset_\beta x \to D\)
  consists of compact elements and approximates \(x\), so the assignment
  \(x \mapsto \dset_\beta x\) makes \(D\) structurally continuous.
\end{proof}

\begin{lemma}\label{small-and-compact-basis}
  A map \(\beta : B \to D\) is a small compact basis for a dcpo \(D\) if and
  only if \(\beta\) is a small basis for \(D\) and \(\beta(b)\) is compact for
  every \(b : B\).
\end{lemma}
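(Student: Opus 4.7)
The plan is to reduce both implications to a single observation: when an element $c$ of a dcpo is compact, the relations $c \below x$ and $c \ll x$ coincide for all $x$. Indeed, $c \ll x$ always entails $c \below x$ by \cref{way-below-properties}\ref{below-if-way-below}, and conversely if $c$ is compact (so $c \ll c$) then $c \ll c \below x$ yields $c \ll x$ by \cref{way-below-properties}\ref{below-way-below-way-below}. Consequently, whenever every $\beta(b)$ is compact, the types $\dset_\beta x$ and $\ddset_\beta x$ are (canonically, fibrewise) equivalent, and in particular the propositions $\beta(b) \below x$ and $\beta(b) \ll x$ are logically equivalent, hence equal by propositional extensionality.

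For the forward implication, suppose $\beta$ is a small compact basis. Compactness of each $\beta(b)$ is immediate from the definition. By the observation above, $\ddset_\beta x$ and $\dset_\beta x$ determine the same family into $D$, so directedness of $\ddset_\beta x$ with supremum $x$ follows at once from the corresponding property of $\dset_\beta x$. Smallness of $\beta(b) \ll x$ follows from smallness of $\beta(b) \below x$ together with the logical equivalence just established.

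For the backward implication, assume $\beta$ is a small basis and each $\beta(b)$ is compact. The same observation gives the converse substitutions: $\dset_\beta x$ coincides with $\ddset_\beta x$, so it is directed with supremum $x$, and smallness of $\beta(b) \below x$ follows from smallness of $\beta(b) \ll x$.

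I do not expect any real obstacle here; the only thing to be mindful of is that we are identifying \emph{dependent sums} $\Sigma_{b:B}(\beta(b)\below x)$ and $\Sigma_{b:B}(\beta(b)\ll x)$, which is justified by \cref{prop-subtype} together with the logical equivalence of their fibres and propositional extensionality. Both the statement and proof are a matter of bookkeeping once this identification is in place.
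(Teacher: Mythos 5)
Your proof is correct and follows essentially the same approach as the paper: both rely on the key observation that compactness of $c$ makes $c \below x$ and $c \ll x$ coincide, which then gives a (fibrewise) equivalence $\ddset_\beta x \simeq \dset_\beta x$, from which everything in both directions follows. The paper's proof is terser, but the content is identical.
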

\begin{proof}
  If \(\beta(b)\) is compact for every \(b : B\), then \(\beta(b) \below x\) if
  and only if \(\beta(b) \ll x\) for every \(b : B\) and \(x : D\), so that
  \(\ddset_\beta x \simeq \dset_\beta x\) for every \(x : D\).
  In particular, \(\ddset_\beta x\) approximates \(x\) if and only if
  \(\dset_\beta x\) does, which completes the proof.
\end{proof}

\begin{proposition}
  A small compact basis contains every compact element. That is, if
  \(\beta : B \to D\) is a small compact basis for a dcpo \(D\) and \(x : D\) is
  compact, then there exists \(b : B\) such that \(\beta(b) = x\).
\end{proposition}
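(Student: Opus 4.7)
The plan is to use the compactness of $x$ applied directly to the approximating family $\dset_\beta x$, and then close the deal with antisymmetry.

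First I would observe that by the defining property of a small compact basis, the family $\dset_\beta x : \dset_\beta x \to D$ is directed and has supremum $x$, and it is indexed by a $\V$-small type (tacitly using the smallness of $\beta(b) \below x$, as per the remark following the definition of a small compact basis). Hence $x = \bigsqcup \dset_\beta x$ and we may apply the definition of the way-below relation (\cref{def:way-below}) to this particular directed family.

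Next, since $x$ is compact, i.e.\ $x \ll x$, the assumption $x \below \bigsqcup \dset_\beta x$ yields that there exists some $(b,p) : \dset_\beta x$, where $p : \beta(b) \below x$, such that $x \below \beta(b)$ already. Combining $\beta(b) \below x$ with $x \below \beta(b)$, antisymmetry of the partial order gives $\beta(b) = x$. Since we are proving the propositionally truncated statement $\exists_{b : B}\pa*{\beta(b) = x}$, there is no obstruction to using the witness $(b,p)$ extracted from the existential given by compactness: the goal is itself a proposition, so we may eliminate the truncation.

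There is essentially no obstacle here; the argument is a direct two-line application of the definitions of compactness and of a small compact basis, closed off by antisymmetry. The only subtlety worth flagging is making sure the approximating family $\dset_\beta x$ is genuinely indexed by a type in $\V$ so that it is an admissible test family for the way-below relation, but this is exactly what the smallness clause in the definition of small compact basis guarantees.
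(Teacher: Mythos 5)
Your proof is correct and essentially identical to the paper's: both apply the compactness of \(x\) to the directed family \(\dset_\beta x\) with supremum \(x\) to extract a witness \(b : B\) with \(\beta(b) \below x\) and \(x \below \beta(b)\), then conclude \(\beta(b) = x\) by antisymmetry. Your remark about the goal being a proposition and the smallness of the index type is a helpful explicit spelling-out of details the paper leaves tacit.
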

\begin{proof}
  Suppose we have a compact element \(x : D\). By compactness of \(x\) and the
  fact that \(x = \dset_\beta x\), there exists \(b : B\) with \(\beta(b) \ll x\)
  such that \(x \below \beta(b)\). But then \(\beta(b) = x\) by antisymmetry.
\end{proof}

\section{Examples of dcpos with small compact bases}%
\label{sec:algebraic-examples}

Armed with the theory of small bases we turn to examples illustrating small
bases in practice. Our examples will involve small \emph{compact} bases and an
example of a dcpo with a small basis that is not compact will have to wait until
\cref{sec:dyadics} when we have developed the ideal completion.

\begin{example}\label{Omega-small-compact-basis}%
  \index{basis!compact!for the type of subsingletons}\index{type!of subsingletons}
  The map \(\beta : \Two \to \Omega_{\U}\) defined by \(0 \mapsto \Zero_{\U}\)
  and \(1 \mapsto \One_{\U}\) is a small compact basis for \(\Omega_{\U}\). In
  particular, \(\Omega_{\U}\) is (structurally) algebraic.
\end{example}

The basis \(\beta : \Two \to \Omega_{\U}\) defined above has the special
property that it is \emph{dense} in the sense
of~\cite[\mkTTurl{TypeTopology.Density}]{TypeTopology}: its image has empty
complement, i.e.\ the type
\(\Sigma_{P : \Omega_{\U}}\lnot\pa*{\Sigma_{b : \Two}\,\beta(b) = P}\) is empty.

\begin{proof}[Proof of~\cref{Omega-small-compact-basis}]
  By \cref{compact-elements-in-Omega}, every element in the image of \(\beta\)
  is compact. Moreover, \(\Omega_{\U}\) is locally small, so we only need to
  prove that for every \(P : \Omega_{\U}\) the family \(\dset_{\beta} P\) is
  directed with supremum \(P\).
  The domain of the family is inhabited, because \(\beta(0)\) is the least
  element. Semidirectedness also follows easily, since \(\Two\) has only two
  elements for which we have \(\beta(0) \below \beta(1)\).
  Finally, the supremum of \(\dset_{\beta} P\) is obviously below
  \(P\). Conversely, if \(P\) holds, then
  \(\bigsqcup {\dset_{\beta} P} = \One = P\).
  The final claim follows from
  \cref{structural-algebraicity-if-small-compact-basis}.
\end{proof}

\begin{example}\label{lifting-has-small-compact-basis}%
  \index{basis!compact!for the lifting}\index{lifting}
  For a set \(X : \U\), the map \(\beta : \pa*{\One + X} \to \lifting_{\U}(X)\)
  given by \(\inl(\star) \mapsto \bot\) and \(\inr(x) \mapsto \eta(x)\) is a
  small compact basis for \(\lifting_{\U}(X)\). In particular,
  \(\lifting_{\U}(X)\) is (structurally) algebraic.
\end{example}

Similar to~\cref{Omega-small-compact-basis}, the basis
\(\beta : (\One + X) \to \lifting_{\U}(X)\) defined above is also dense.

\begin{proof}[Proof of~\cref{lifting-has-small-compact-basis}]
  By \cref{compact-elements-in-lifting}, every element in the image of \(\beta\)
  is compact. Moreover, the lifting is locally small by
  \cref{lifting-is-pointed-dcpo}, so we only need to prove that for every
  partial element \(l\), the family \(\dset_{\beta} l\) is directed with
  supremum~\(l\).
  The~domain of the family is inhabited, because \(\beta(\inl(\star))\) is the
  least element. Semidirectedness also follows easily: First of all,
  \(\beta(\inl(\star))\) is the least element. Secondly, if we have \(x,x' : X\)
  such that \(\beta(\inr(x)),\beta(\inr(x')) \below l\), then because
  \(\beta(\inr(x)) \equiv \eta(x)\) is defined, we must have
  \(\beta(\inr(x)) = l = \beta(\inr(x'))\) by definition of the order.
  Finally, the supremum of \(\dset_{\beta} l\) is obviously a partial element
  below \(l\). Conversely, if \(l\) is defined, then \(l = \eta(x)\) for some
  \(x : X\), and hence, \(l = \eta(x) \below \bigsqcup \dset_{\beta} l\).
  The final claim follows from
  \cref{structural-algebraicity-if-small-compact-basis}.
\end{proof}

\begin{example}\label{powerset-small-compact-basis}%
  \index{basis!compact!for the powerset}\index{powerset}%
  For a set \(X : \U\), the map \(\beta : \List(X) \to \powerset_{\U}(X)\) from
  \cref{def:list-to-powerset} (whose image is the type of Kuratowski finite
  subsets of \(X\)) is a small compact basis for \(\powerset_{\U}(X)\). In
  particular, \(\powerset_{\U}(X)\) is (structurally) algebraic.
\end{example}

Notice that the map \(\beta : \List(X) \to \powerset(X)\) is not an
embedding, as two lists can represent the same Kuratowski finite subset.
Of course, an embedding is given by the inclusion of the Kuratowski finite
subsets into the powerset, and its codomain is small if we assume set
replacement, because it is the image of \(\beta\).

\begin{proof}[Proof of~\cref{powerset-small-compact-basis}]
  By \cref{Kuratowski-finite-iff-list,compact-elements-in-powerset}, all
  elements in the image of \(\beta\) are compact. Moreover,
  \(\powerset_{\U}(X)\) is locally small, so we only need to prove that for
  every \(A : \powerset(X)\) the family \(\dset_{\beta} A\) is directed with
  supremum \(A\), but this was also proven in
  \cref{compact-elements-in-powerset}.
  The final claim follows from
  \cref{structural-algebraicity-if-small-compact-basis}.
\end{proof}

At this point the reader may ask whether we have any examples of dcpos that are
structurally algebraic but that do not have a small compact basis.
The following example shows that this can happen in our predicative setting.

\begin{example}\label{lifting-structurally-algebraic-but-no-small-basis}%
  \index{algebraicity!structural}%
  The lifting \(\lifting_{\V}(P)\) of a proposition \(P : \U\) is structurally
  algebraic, but has a small compact basis if and only if \(P\) is \(\V\)-small.
\end{example}

Thus, requiring that \(\lifting_{\V}(P)\) has a small basis for every
proposition \(P : \U\) is equivalent to \(\Propresizing{\U}{\V}\).

\begin{proof}[Proof of~\cref{lifting-structurally-algebraic-but-no-small-basis}]
  Note that \(\lifting_{\V}(P)\) is simply the type of propositions in \(\V\)
  that imply \(P\). It is structurally algebraic, because given such a
  proposition \(Q\), the family
  \begin{align*}
    Q + \One_{\V} &\to \lifting_{\V}(P) \\
    \inl(q) &\mapsto \One_{\V} \\
    \inr(\star) &\mapsto \Zero_{\V}
  \end{align*}
  is directed, has supremum \(Q\) and consists of compact elements.
  But if \(\lifting_{\V}(P)\) had a small compact basis \(\beta : B \to D\),
  then we would have \(P \simeq \exists_{b : B}\pa*{\beta(b) \simeq \One_{\V}}\) and
  the latter is \(\V\)-small.
\end{proof}

\section{The rounded ideal completion}\label{sec:rounded-ideal-completion}

We have seen that in continuous dcpos, the basis essentially ``generates'' the
whole dcpo, because the basis suffices to approximate any of its elements.
It is natural to ask whether one can start from a more abstract notion of basis
and ``complete'' it to a continuous dcpo.
This is exactly what we do here using the notion of an \emph{abstract basis} and
the \emph{rounded ideal completion}.

\begin{definition}[Abstract basis]%
  \index{basis!abstract|textbf}%
  \index{interpolation|textbf}%
  An \emph{abstract \(\V\)-basis} is a type \(B : \V\) with a binary relation
  \({\prec} : B \to B \to \V\) that is proposition-valued, transitive and
  satisfies
  \begin{description}
  \item[\emph{nullary interpolation}:]\label{abstract-nullary-interpolation}
    for every \(a : B\), there exists \(b : B\) with \(b \prec a\), and
  \item[\emph{binary interpolation}:\phantom{(}]\label{abstract-binary-interpolation}
    for every \(a_1,a_2 \prec b\), there exists \(a : B\) with
    \({a_1,a_2} \prec a \prec b\).\qedhere
  \end{description}
\end{definition}

\begin{definition}[Ideal, (rounded) ideal completion, \(\Idl{V}(B,\prec)\)]%
  \hfill%
  \begin{enumerate}[(i)]
  \item A subset \(I : B \to \Omega_{\V}\) of an abstract \(\V\)-basis
    \((B,{\prec})\) is a \emph{\ideal{V}} if it is a directed lower set
    with respect to \({\prec}\).\index{ideal}
    That it is a lower set means: if \(b \in I\) and \(a \prec b\), then
    \(a \in I\) too.%
    \index{lower set}%
  \item We write \(\Idl{V}(B,{\prec})\) for the type of \ideal{V}s of an
    abstract \(\V\)-basis \((B,{\prec})\) and call \(\Idl{V}({B,\prec})\) the
    \emph{(rounded) ideal completion} of \((B,\prec)\). \qedhere%
    \index{ideal!completion|textbf}%
    \nomenclature[Idl]{$\Idl{V}({B,\prec})$}{rounded \(\V\)-ideal completion of
      an abstract \(\V\)-basis \((B,{\prec})\)}
  \end{enumerate}
\end{definition}

The name rounded ideal completion is justified
by~\cref{Idl-is-dcpo,ideals-are-rounded} below.

\begin{definition}[Union of ideals, \(\bigcup \mathcal I\)]%
  \label{def:union-of-ideals}\index{ideal!union}
  Given a family \(\mathcal I : S \to \Idl{V}(B,{\prec})\) of ideals, indexed
  by \(S : \V\), we write
  \[
    \textstyle\bigcup \mathcal I \colonequiv \set{b \in B \mid \exists_{s : S}\pa*{b \in
        \mathcal I_s}}
  \]%
  \nomenclature[cup]{$\bigcup \mathcal I$}{union of a family of ideals}%
  for the set-theoretic union of the ideals indexed by \(\mathcal I\).
\end{definition}

\begin{lemma}\label{union-of-directed-family-of-ideals-is-ideal}
  If \(\mathcal I : S \to \Idl{V}(B,{\prec})\) is directed, then
  \(\bigcup \mathcal I\) is an ideal.
\end{lemma}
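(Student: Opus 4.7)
The plan is to verify the two defining conditions of an ideal for $\bigcup \mathcal{I}$: that it is a lower set with respect to $\prec$ and that it is directed. Throughout, I will implicitly use that the order on $\Idl{V}(B,{\prec})$ is subset inclusion, so directedness of $\mathcal{I}$ says: $S$ is inhabited and for every $s_1, s_2 : S$ there exists $s : S$ with $\mathcal{I}_{s_1} \subseteq \mathcal{I}_s$ and $\mathcal{I}_{s_2} \subseteq \mathcal{I}_s$.

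First, I would check the lower-set condition. Suppose $b \in \bigcup \mathcal{I}$ and $a \prec b$. Unfolding the definition of the union, we need to show $\exists_{s : S}(a \in \mathcal{I}_s)$, which is a proposition, so we may assume we have a specified $s : S$ with $b \in \mathcal{I}_s$. Since $\mathcal{I}_s$ is itself an ideal and therefore a lower set, from $a \prec b$ we obtain $a \in \mathcal{I}_s$, and hence $a \in \bigcup \mathcal{I}$.

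Next, I would handle directedness. For inhabitedness: since $\mathcal{I}$ is directed, $S$ is inhabited, and since each $\mathcal{I}_s$ is directed as an ideal, its underlying subset is inhabited; composing these truncated existentials produces an element of $\bigcup \mathcal{I}$. For semidirectedness, let $b_1, b_2 \in \bigcup \mathcal{I}$; as we are proving a proposition, we may unpack the truncations to obtain $s_1, s_2 : S$ with $b_i \in \mathcal{I}_{s_i}$. By directedness of $\mathcal{I}$ there exists $s : S$ with $\mathcal{I}_{s_1}, \mathcal{I}_{s_2} \subseteq \mathcal{I}_s$, so $b_1, b_2 \in \mathcal{I}_s$. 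Then, since $\mathcal{I}_s$ is an ideal and in particular directed with respect to $\prec$, there exists $c \in \mathcal{I}_s$ with $b_1, b_2 \prec c$; this $c$ witnesses the required semidirectedness inside $\bigcup \mathcal{I}$.

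There is no real obstacle here; the only point requiring care is the disciplined use of the propositional truncation, ensuring that each time we introduce a witness (of $S$, of membership in some $\mathcal{I}_s$, or of an upper bound) we are in the process of proving a proposition so that the elimination principle for $\exists$ applies. Combining the lower-set property with inhabitedness and semidirectedness yields that $\bigcup \mathcal{I}$ is an ideal, completing the proof.
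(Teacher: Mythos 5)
Your proof is correct and follows the same line of argument as the paper's: verify the lower-set condition, inhabitedness, and semidirectedness in turn, each time unpacking a truncated existential when proving a proposition. The only difference is that you spell out the propositional-truncation bookkeeping more explicitly, which the paper leaves implicit.
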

\begin{proof}
  The subset \(\bigcup \mathcal I\) is easily seen to be a lower set, for if
  \(a \prec b \in \bigcup \mathcal I\), then there exists \(s : S\) such that
  \(a \prec b \in \mathcal I_s\), so \(a \in \mathcal I_s\) as \(\mathcal I_s\)
  is a lower set, but then \(a \in \bigcup \mathcal I\).
  Moreover, \(\bigcup \mathcal I\) is inhabited: Since \(\mathcal I\) is
  directed, there exists \(s : S\), but \(\mathcal I_s\) is an ideal and
  therefore inhabited, so there exists \(b \in \mathcal I_s\) which implies
  \(b \in \bigcup \mathcal I\).
  Finally, suppose we have \(b_1,b_2 \in \bigcup \mathcal I\). By definition,
  there exist \(s_1,s_2 : S\) such that \(b_1 \in \mathcal I_{s_1}\) and
  \(b_2 \in \mathcal I_{s_2}\).
  By directedness of \(S\), there exists \(s : S\) such that
  \(\mathcal I_{s_1}, \mathcal I_{s_2} \subseteq \mathcal I_s\). Hence,
  \(b_1,b_2 \in \mathcal I_s\), which is an ideal, so there exists
  \(b \in \mathcal I_s\) with \(b_1,b_2 \prec b\). But then also
  \(b \in \bigcup \mathcal I\), which proves that \(\bigcup\mathcal I\) is
  directed and hence an ideal, completing the proof.
\end{proof}

\begin{lemma}\label{Idl-is-dcpo}
  The rounded ideal completion of an abstract \(\V\)-basis \((B,{\prec})\) is a
  \(\V\)-dcpo when ordered by subset inclusion.
\end{lemma}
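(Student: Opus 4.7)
The plan is to verify the two requirements separately: that $\Idl_{\V}(B,{\prec})$ is a poset under $\subseteq$, and that it admits suprema for every directed family indexed by a type in $\V$.

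First, I would observe that being an ideal is a property (it is the conjunction of being a lower set with being inhabited and semidirected with respect to $\prec$, each of which is proposition-valued by \cref{Pi-is-prop} and the fact that truncated existentials are propositions). Hence $\Idl_{\V}(B,{\prec})$ is a subtype of the powerset $\powerset_{\V}(B)$ via \cref{prop-subtype}. The powerset is itself a poset under $\subseteq$ (this is part of \cref{powersets-as-pointed-dcpos}), and the three poset axioms transfer to the subtype: reflexivity and transitivity of $\subseteq$ hold pointwise, and antisymmetry follows because two ideals agreeing as subsets are equal as subsets, and equality of the ideal-witness components is automatic since being an ideal is a proposition.

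Second, for directed completeness, let $\mathcal{I} : S \to \Idl_{\V}(B,{\prec})$ be directed with $S : \V$. I would take the candidate supremum to be the union $\bigcup \mathcal{I}$ of \cref{def:union-of-ideals}. Crucially, this is a well-defined $\V$-valued subset of $B$: since $S : \V$ and each membership proposition $b \in \mathcal{I}_s$ lives in $\V$, the truncated existential $\exists_{s : S}(b \in \mathcal{I}_s)$ lies in $\V$, so $\bigcup \mathcal{I} \in \powerset_{\V}(B)$. By \cref{union-of-directed-family-of-ideals-is-ideal}, $\bigcup \mathcal{I}$ is an ideal, and hence a genuine element of $\Idl_{\V}(B,{\prec})$.

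It remains to check that $\bigcup \mathcal{I}$ is the least upper bound of the family. That it is an upper bound is immediate: for each $s : S$ and each $b \in \mathcal{I}_s$, the element $\tosquash{(s, b\in\mathcal{I}_s)}$ witnesses $b \in \bigcup \mathcal{I}$, so $\mathcal{I}_s \subseteq \bigcup \mathcal{I}$. For leastness, suppose $J$ is another ideal with $\mathcal{I}_s \subseteq J$ for every $s : S$. To show $\bigcup \mathcal{I} \subseteq J$, fix $b \in \bigcup \mathcal{I}$; since $b \in J$ is a proposition, we may eliminate the propositional truncation and assume a specified $s : S$ with $b \in \mathcal{I}_s$, whence $b \in J$ by assumption. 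I do not expect any real obstacle here: the main point to confirm is the universe bookkeeping (that $\bigcup \mathcal{I}$ remains $\V$-valued), and this is immediate from the hypothesis $S : \V$. All other verifications are direct consequences of the preceding lemma and the poset structure of the powerset.
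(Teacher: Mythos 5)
Your proposal is correct and takes essentially the same approach as the paper: both reduce the claim to \cref{union-of-directed-family-of-ideals-is-ideal} after noting that ideals form a subtype of the powerset, where unions give least upper bounds. The paper simply leaves the subtype/powerset bookkeeping and the verification that the union is the least upper bound implicit, whereas you spell them out.
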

\begin{proof}
  Since taking unions yields the least upper bound in the powerset, we only have
  to prove that the union of ideals is again an ideal, but this
  is~\cref{union-of-directed-family-of-ideals-is-ideal}.
\end{proof}

Paying attention to the universe levels, the ideals form a large but locally
small \(\V\)-dcpo because \(\Idl{V}(B,{\prec}) : \DCPO{V}{V^+}{V}\).
For the remainder of this section, we will fix an abstract \(\V\)-basis
\((B,{\prec})\) and consider its \ideal{V}s.

\begin{lemma}[Roundedness]\label{ideals-are-rounded}%
  \index{ideal!roundedness}
  The ideals of an abstract basis are \emph{rounded}: for every element \(a\) of
  an ideal~\(I\), there exists \(b \in I\) such that \(a \prec b\).
\end{lemma}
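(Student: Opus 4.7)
The plan is to observe that roundedness follows directly from the directedness condition built into the definition of an ideal, applied to a single element considered twice. This makes the proof essentially immediate and does not need to invoke the nullary or binary interpolation properties of the abstract basis \((B,{\prec})\) itself.

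More precisely, fix an ideal \(I\) of \((B,{\prec})\) and an element \(a\) with \(a \in I\). Because \(I\) is directed with respect to \({\prec}\) as a subset of \(B\), it is in particular semidirected: for any two elements of \(I\) there exists an element of \(I\) that is above both with respect to \({\prec}\). Instantiating this semidirectedness at the pair \((a,a)\), we obtain some \(b \in I\) such that \(a \prec b\) and \(a \prec b\), which is precisely the witness of roundedness we need. Since the conclusion \(\exists_{b : B}\pa*{b \in I \times a \prec b}\) is a proposition (being a propositional truncation), we are free to eliminate the existential coming from semidirectedness without any further coherence concerns.

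There is no real obstacle here: the only subtle point is making sure we are reading the word ``directed'' in the definition of an ideal in the expected way, namely as semidirectedness with respect to the relation \({\prec}\) (together with inhabitedness), rather than with respect to some derived order such as the reflexive closure. Under that reading, which is consistent with the earlier definition of a (semi)directed family in a preorder, the proof is a one-line application of semidirectedness to \((a,a)\).
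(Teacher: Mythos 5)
Your proof is correct and is essentially the same as the paper's, which simply states that roundedness holds because ideals are semidirected; you have merely spelled out the one-line instantiation of semidirectedness at the pair \((a,a)\).
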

\begin{proof}
  Because ideals are semidirected.
\end{proof}

Roundedness makes up for the fact that we have not required an abstract basis to
be reflexive. If it is, then (\cref{sec:rounded-ideal-completion-reflexive}) the
ideal completion is structurally algebraic.

\begin{definition}[Principal ideal, \(\dset b\)]%
  \index{ideal!principal}%
  The \emph{principal ideal} of an element \(b : B\) is defined as the subset
  \(\dset b \colonequiv \set{a \in B \mid a \prec b}\).%
  \nomenclature[dasetb]{$\dset b$}{principal ideal of a basis element \(b\)}
  Observe that the principal ideal is indeed an ideal: it is a lower set by
  transitivity of \({\prec}\), and inhabited and semidirected precisely by
  nullary and binary interpolation, respectively.
\end{definition}

\begin{lemma}\label{principal-ideal-is-monotone}
  The assignment \(b \mapsto \dset b\) is monotone, i.e.\ if \(a \prec b\), then
  \(\dset a \subseteq \dset b\).
\end{lemma}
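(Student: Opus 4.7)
The plan is to unfold the definitions and reduce the statement to transitivity of $\prec$, which is one of the defining properties of an abstract $\V$-basis. Specifically, assume $a \prec b$ in $B$; to show $\dset a \subseteq \dset b$, I take an arbitrary $c \in B$ with $c \in \dset a$, i.e., $c \prec a$, and show that $c \in \dset b$, i.e., $c \prec b$. This is immediate from transitivity of $\prec$ applied to $c \prec a \prec b$.

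No interpolation properties are needed, and no appeal to propositional truncation is required either, since $\prec$ is assumed to be proposition-valued and transitivity of $\prec$ already gives us a (non-truncated) witness of $c \prec b$ directly. The only mild subtlety to note is that subset inclusion $\dset a \subseteq \dset b$ is, by the definition of $\subseteq$ given earlier, an element of $\Pi_{c : B}\pa*{c \in \dset a \to c \in \dset b}$, so the proof is precisely the function $c \mapsto \lambdadot{p : c \prec a}{\text{trans}(p,q)}$, where $q : a \prec b$ is the hypothesis.

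There is no main obstacle here: the lemma is essentially a restatement of transitivity packaged in the language of principal ideals. I would present it as a one-line proof, simply citing transitivity of $\prec$.
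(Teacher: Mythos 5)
Your proof is correct and matches the paper's, which simply says ``By transitivity of $\prec$.'' You have merely spelled out the unfolding of definitions that the paper leaves implicit.
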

\begin{proof}
  By transitivity of \({\prec}\).
\end{proof}

\begin{lemma}\label{directed-sup-of-principal-ideals}
  Every ideal is the directed supremum of its principal ideals. That is, for an
  ideal \(I\), the family
  \(\pa*{\Sigma_{b : B}\pa*{b \in I}} \xrightarrow{b \mapsto \dset b}
  \Idl{V}(B,{\prec})\) is directed and has supremum \(I\).
\end{lemma}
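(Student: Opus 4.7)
The plan is to verify two things: directedness of the family $b \mapsto \dset b$ (indexed by $\Sigma_{b : B}(b \in I)$), and that its supremum, computed via \cref{Idl-is-dcpo} as the union $\bigcup_{b \in I}\dset b$, equals $I$ itself.

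First I would check directedness. The domain $\Sigma_{b : B}(b \in I)$ is inhabited because $I$ is an ideal, hence in particular inhabited. For semidirectedness, suppose we have $b_1, b_2 \in I$. Because $I$ is directed (as part of being an ideal), there exists $b \in I$ with $b_1, b_2 \prec b$. Then \cref{principal-ideal-is-monotone} gives $\dset{b_1} \subseteq \dset b$ and $\dset{b_2} \subseteq \dset b$, which establishes semidirectedness with respect to the subset-inclusion order on $\Idl{V}(B,{\prec})$.

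Next, I would show that $I = \bigcup_{b \in I}\dset b$ by mutual subset inclusion. For the inclusion $\bigcup_{b \in I}\dset b \subseteq I$: if $a$ lies in the union, then there exists $b \in I$ with $a \in \dset b$, i.e.\ $a \prec b$; since $I$ is a lower set, $a \in I$. For the converse inclusion $I \subseteq \bigcup_{b \in I}\dset b$: given $a \in I$, roundedness (\cref{ideals-are-rounded}) yields some $b \in I$ with $a \prec b$, so $a \in \dset b$ and hence $a$ lies in the union. Since directed suprema in $\Idl{V}(B,{\prec})$ are computed as unions (\cref{union-of-directed-family-of-ideals-is-ideal} together with the proof of \cref{Idl-is-dcpo}), this identifies $I$ as the supremum of the family of principal ideals, completing the proof.

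There is no real obstacle here: the two nontrivial ingredients, namely roundedness (to hit every $a \in I$ from below) and directedness of $I$ (to combine two principal ideals under a common one), have already been isolated as lemmas. The argument is essentially a bookkeeping exercise that the ideal axioms match exactly what is needed to present $I$ as a directed union of principal ideals.
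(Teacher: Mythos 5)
Your proof is correct and follows essentially the same route as the paper: directedness of the family via directedness of \(I\) together with \cref{principal-ideal-is-monotone}, and the identification \(I = \bigcup_{b \in I}\dset b\) by mutual inclusion using the lower-set property in one direction and roundedness (\cref{ideals-are-rounded}) in the other. The only (immaterial) difference is the order in which the two halves are established.
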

\begin{proof}
  Since ideals are lower sets, we have \(\dset b \subseteq I\) for every
  \(b \in I\). Hence, the union \(\bigcup_{b \in I}\dset b\) is a subset of
  \(I\). Conversely, if \(a \in I\), then by roundedness of \(I\) there exists
  \(a' \in I\) with \(a \prec a'\), so that \(a \in \bigcup_{b \in I}\dset b\).
  So it remains to show that the family is directed. Notice that it is
  inhabited, because \(I\) is an ideal.
  Now suppose that \(b_1,b_2 \in I\). Since \(I\) is directed, there exists
  \(b \in I\) such that \({b_1,b_2} \prec b\). But this implies
  \({\dset b_1,\dset b_2} \subseteq \dset b\)
  by~\cref{principal-ideal-is-monotone}, so the family is semidirected, as
  desired.
\end{proof}

\begin{lemma}\label{Idl-way-below-characterization}%
  \index{way-below relation!in the rounded ideal completion}%
  The following are equivalent for every two ideals \(I\) and \(J\):
  \begin{enumerate}[(i)]
  \item\label{Idl-way-below} \(I \ll J\);
  \item\label{Idl-way-below-1} there exists \(b \in J\) such that
    \(I \subseteq \dset b\);
  \item\label{Idl-way-below-2} there exist \(a \prec b\) such that
    \(I \subseteq \dset a \subseteq \dset b \subseteq J\).
  \end{enumerate}
  In particular, if \(b\) is an element of an ideal \(I\), then
  \(\dset b \ll I\).
\end{lemma}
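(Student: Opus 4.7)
The plan is to prove the equivalence by establishing the cyclic chain $\text{(i)} \Rightarrow \text{(ii)} \Rightarrow \text{(iii)} \Rightarrow \text{(i)}$. Once this is done, the ``in particular'' claim follows immediately: taking $J \colonequiv I$, the hypothesis $b \in I$ gives $\dset b \subseteq \dset b$, so~(ii) holds for $\dset b$ and $I$, whence $\dset b \ll I$ by the just-established equivalence.

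For $\text{(i)} \Rightarrow \text{(ii)}$, the key observation is that by~\cref{directed-sup-of-principal-ideals}, the ideal $J$ is the directed supremum of its family of principal ideals, indexed by the $\V$-type $\Sigma_{b : B}\pa*{b \in J}$ and sending $b$ to $\dset b$. Since this family is a directed family in $\Idl{V}(B,{\prec})$ indexed by a type in $\V$, applying the hypothesis $I \ll J$ yields some $b \in J$ with $I \subseteq \dset b$, which is exactly~(ii).

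For $\text{(ii)} \Rightarrow \text{(iii)}$, suppose we have $b \in J$ with $I \subseteq \dset b$. By roundedness of ideals (\cref{ideals-are-rounded}) applied to $b \in J$, there exists $b' \in J$ with $b \prec b'$. Then $\dset b \subseteq \dset{b'}$ by~\cref{principal-ideal-is-monotone}, and $\dset{b'} \subseteq J$ because $J$ is a lower set containing $b'$. Renaming $b \rightsquigarrow a$ and $b' \rightsquigarrow b$, we obtain $a \prec b$ with $I \subseteq \dset a \subseteq \dset b \subseteq J$.

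For $\text{(iii)} \Rightarrow \text{(i)}$, assume $a \prec b$ with $I \subseteq \dset a \subseteq \dset b \subseteq J$, and let $\mathcal I : S \to \Idl{V}(B,{\prec})$ be any directed family with $S : \V$ and $J \subseteq \bigcup \mathcal I$. Since $a \prec b$ gives $a \in \dset b \subseteq J \subseteq \bigcup \mathcal I$, there exists $s : S$ such that $a \in \mathcal I_s$. As $\mathcal I_s$ is a lower set, every $c$ with $c \prec a$ lies in $\mathcal I_s$, so $\dset a \subseteq \mathcal I_s$, and therefore $I \subseteq \dset a \subseteq \mathcal I_s$, establishing $I \ll J$. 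No step poses a real obstacle; the only thing to watch is that abstract bases are not assumed reflexive, so we cannot conclude $b \in \dset b$, which is precisely why roundedness in step~$\text{(ii)} \Rightarrow \text{(iii)}$ and the detour through $a \in \dset b$ in step~$\text{(iii)} \Rightarrow \text{(i)}$ are needed.
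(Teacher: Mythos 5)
Your proof is correct and follows essentially the same route as the paper: the cyclic chain $\text{(i)}\Rightarrow\text{(ii)}\Rightarrow\text{(iii)}\Rightarrow\text{(i)}$, using \cref{directed-sup-of-principal-ideals} for the first step, roundedness and \cref{principal-ideal-is-monotone} for the second, and the lower-set property of $\mathcal I_s$ for the third, with the ``in particular'' claim derived from~(ii)~$\Rightarrow$~(i). The only cosmetic difference is the explicit renaming $b \rightsquigarrow a$, $b' \rightsquigarrow b$ in the middle step, which the paper handles by choosing variable names from the outset.
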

\begin{proof}
  We show that \ref{Idl-way-below} \(\Rightarrow\) \ref{Idl-way-below-1}
  \(\Rightarrow\) \ref{Idl-way-below-2} \(\Rightarrow\) \ref{Idl-way-below}.
  So suppose that \(I \ll J\). Then \(J\) is the directed supremum of its
  principal ideals by \cref{directed-sup-of-principal-ideals}. Hence, there
  exists \(b \in J\) such that \(I \subseteq \dset b\) already, which is exactly
  \ref{Idl-way-below-1}.
  Now suppose that we have \(a \in J\) with \(I \subseteq \dset a\). By
  roundedness of \(J\), there exists \(b \in J\) with \(a \prec b\). But then
  \(I \subseteq \dset a \subseteq \dset b \subseteq J\) by
  \cref{principal-ideal-is-monotone} and the fact that \(J\) is a lower set,
  establishing \ref{Idl-way-below-2}.
  Now suppose that condition \ref{Idl-way-below-2} holds and that \(J\) is a
  subset of some directed join of ideals \(\mathcal J\) indexed by a type
  \(S : \V\).
  Since \(a \in \dset b \subseteq J\), there exists \(s : S\) such that
  \(a \in \mathcal J_s\). In particular, \(\dset a \subseteq \mathcal J_s\)
  because ideals are lower sets. Hence, if \(a' \in I \subseteq \dset a\), then
  \(a' \in \mathcal J_s\), so \(I \subseteq \mathcal J_s\), which proves that
  \(I \ll J\).

  Finally, if \(b\) is an element of an ideal \(I\), then \(\dset b \ll I\),
  because \ref{Idl-way-below-1} implies \ref{Idl-way-below} and
  \(\dset b \subseteq \dset b\) obviously holds.
\end{proof}

\begin{theorem}\label{Idl-has-small-basis}%
  \index{basis}%
  The principal ideals \(\dset{(-)} : B \to \Idl{V}(B,{\prec})\) yield a small
  basis for \(\Idl{V}(B,{\prec})\). In particular, \(\Idl{V}(B,{\prec})\) is
  (structurally) continuous.
\end{theorem}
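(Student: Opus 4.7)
The plan is to verify the two defining conditions of a small basis for the map $\beta \colonequiv \dset{(-)} \colon B \to \Idl{V}(B,{\prec})$, after which structural continuity will follow at once from~\cref{structural-continuity-if-small-basis}. Note that $B : \V$ by assumption on the abstract basis, so the index type of $\beta$ has the required universe level.

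First I would dispatch the smallness of the way-below relation. For every $b : B$ and every ideal $I : \Idl{V}(B,{\prec})$, I would invoke the equivalence \ref{Idl-way-below} $\Leftrightarrow$ \ref{Idl-way-below-1} of~\cref{Idl-way-below-characterization} to rewrite $\dset b \ll I$ as
\[
  \exists_{b' : B}\bigl((b' \in I) \times (\dset b \subseteq \dset{b'})\bigr).
\]
All ingredients on the right live in $\V$: the quantifier ranges over $B : \V$, membership in $I$ is $\V$-valued by definition of an ideal, and the subset inclusion unfolds to $\Pi_{a : B}(a \prec b \to a \prec b')$ which is $\V$-small by the assumptions on $(B,{\prec})$. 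Hence $\dset b \ll I$ is $\V$-small, as required.

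Next I would show that for every ideal $I$ the family $\ddset_\beta I \to \Idl{V}(B,{\prec})$ is directed and has supremum $I$. I would do this by reducing to~\cref{directed-sup-of-principal-ideals} via~\cref{subbasis-lemma}. The last assertion of~\cref{Idl-way-below-characterization} provides a map
\[
  \sigma : \bigl(\Sigma_{b : B}\,(b \in I)\bigr) \to \bigl(\Sigma_{b : B}\,(\dset b \ll I)\bigr)
\]
given by the identity on the first coordinate. The composite $\ddset_\beta I \circ \sigma$ is then exactly the family of principal ideals $b \mapsto \dset b$ indexed by $b \in I$, which by~\cref{directed-sup-of-principal-ideals} is directed and has supremum $I$. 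Applying~\cref{subbasis-lemma} therefore yields that $\ddset_\beta I$ itself is directed with supremum $I$.

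The main (though mild) subtlety here is the use of~\cref{subbasis-lemma}: one must notice that despite $\ddset_\beta I$ and $\Sigma_{b : B}(b \in I)$ being a priori different index types, the former cannot be shown directly to be directed without essentially repeating the interpolation arguments, whereas the latter is already handled by~\cref{directed-sup-of-principal-ideals}. Once both conditions of a small basis are established, the ``in particular'' clause is immediate from~\cref{structural-continuity-if-small-basis}, and the (non-structural) continuity version follows by truncation.
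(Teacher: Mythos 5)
Your proof is correct and structurally complete, but it takes a genuinely different route from the paper's for the bulk of the work. The paper establishes directedness of $\ddset_\beta I$ and $\bigsqcup \ddset_\beta I = I$ by hand: inhabitedness from the fact that $I$ is inhabited together with~\cref{Idl-way-below-characterization}; semidirectedness by using~\cref{Idl-way-below-characterization} to extract $c_1, c_2 \in I$ with $\dset b_1 \subseteq \dset c_1$ and $\dset b_2 \subseteq \dset c_2$, then finding $b \in I$ above $c_1, c_2$ by directedness of $I$, and appealing to~\cref{principal-ideal-is-monotone}; and the supremum equation by roundedness of $I$ in one direction and $\dset b \subseteq I$ for lower sets in the other. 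You instead package all of this into an appeal to~\cref{subbasis-lemma}, mapping the domain of the already-known approximating family of~\cref{directed-sup-of-principal-ideals} into $\ddset_\beta I$ via the last clause of~\cref{Idl-way-below-characterization}. This is a cleaner decomposition: it reuses two lemmas that were proved earlier for exactly this kind of bookkeeping and avoids re-deriving interpolation-style arguments. The paper's direct proof is marginally more self-contained but duplicates work that~\cref{subbasis-lemma} and~\cref{directed-sup-of-principal-ideals} were designed to absorb. Your smallness argument is also slightly more explicit than the paper's (which just cites~\cref{Idl-way-below-characterization}), spelling out that $B : \V$, that membership in an ideal is $\V$-valued, and that the subset inclusion of principal ideals unfolds to a $\V$-small $\Pi$-type; this is fine and makes the universe accounting transparent.
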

\begin{proof}
  First of all, note that the way-below relation on \(\Idl{V}(B,{\prec})\) is
  small-valued because of \cref{Idl-way-below-characterization}. So it remains
  to show that for every ideal \(I\), the family
  \(\pa*{\Sigma_{b : B}\pa*{\dset b \ll I}} \xrightarrow{b \mapsto \dset b}
  \Idl{V}(B,{\prec})\) is directed with supremum \(I\).
  That the domain of this family is inhabited follows from
  \cref{Idl-way-below-characterization} and the fact that \(I\) is inhabited.
  For semidirectedness, suppose we have \(b_1,b_2 : B\) with
  \({\dset b_1,\dset b_2} \ll I\). By \cref{Idl-way-below-characterization}
  there exist \({c_1,c_2} \in I\) such that \(\dset b_1 \subseteq \dset c_1\)
  and \(\dset b_2 \subseteq \dset c_2\).
  Since \(I\) is directed, there exists \(b \in I\) with \({c_1,c_2} \prec b\).
  But now \(\dset b_1 \subseteq \dset c_1 \subseteq \dset b \ll I\) by
  \cref{principal-ideal-is-monotone,Idl-way-below-characterization} and
  similarly, \(\dset b_2 \subseteq \dset b \ll I\). Hence, the family is
  semidirected, as we wished to show.
  Finally, we show that \(I\) is the supremum of the family. If \(b \in I\),
  then, since \(I\) is rounded, there exists \(c \in I\) with \(b \prec
  c\). Moreover, \(\dset c \ll I\) by
  \cref{Idl-way-below-characterization}. Hence, \(b\) is included in the join of
  the family.
  Conversely, if we have \(b : B\) with \(\dset b \ll I\), then
  \(\dset b \subseteq I\), so \(I\) is also an upper bound for the family.
\end{proof}

\subsection{The rounded ideal completion of a reflexive abstract basis}%
\label{sec:rounded-ideal-completion-reflexive}

\begin{lemma}\label{abstract-basis-if-reflexive}%
  \index{reflexivity}%
  If \({\prec} : B \to B \to \V\) is proposition-valued, transitive and
  reflexive, then \((B,{\prec})\) is an abstract basis.
\end{lemma}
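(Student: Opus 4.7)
The claim is essentially immediate, so my plan is to verify the two nontrivial axioms of an abstract basis (nullary and binary interpolation) using reflexivity, while noting that the other two conditions (proposition-valuedness and transitivity) are assumed verbatim.

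For nullary interpolation, given an arbitrary $a : B$, I would simply take the witness to be $a$ itself: reflexivity provides $a \prec a$, giving the required element $b$ with $b \prec a$. Since we only need existence, an \emph{unspecified} witness is more than supplied.

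For binary interpolation, suppose we have $a_1, a_2 \prec b$. I would take the interpolant $a$ to be $b$ itself. Then the hypotheses directly give $a_1 \prec a$ and $a_2 \prec a$, while $a \prec b$ holds by reflexivity of $\prec$ at $b$. Both clauses of the conjunction are thereby verified, and again no truncation issue arises.

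There is no real obstacle here: the only subtlety worth flagging is that the ``strict'' reading one might associate with $\prec$ (as in the genuine way-below relation on a dcpo) is not part of the definition of an abstract basis, so nothing prevents us from using the witness $b = a$ in the nullary case and $a = b$ in the binary case. In particular, the resulting rounded ideal completion will turn out to be structurally algebraic, which is the payoff pursued in the remainder of the subsection.
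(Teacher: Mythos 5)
Your proof is correct and matches the paper's (extremely terse) proof, which just says the interpolation properties are easily proved from reflexivity; you have simply spelled out the obvious witnesses ($a$ itself for nullary interpolation, $b$ itself for binary interpolation). Nothing more to add.
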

\begin{proof}
  The interpolation properties for \({\prec}\) are easily proved when it is
  reflexive.
\end{proof}

\begin{lemma}\label{principal-ideal-below-characterization}
  If an element \(b : B\) is reflexive, i.e.\ \(b \prec b\) holds, then
  \(b \in I\) if and only if \(\dset b \subseteq I\) for every ideal \(I\).
\end{lemma}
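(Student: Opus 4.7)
The plan is to prove both directions directly from the definitions, using that an ideal is a lower set for the forward direction and using reflexivity of $b$ for the backward direction.

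For the forward direction, I would assume $b \in I$ and show $\dset b \subseteq I$. Let $a \in \dset b$ be arbitrary, which by definition of the principal ideal means $a \prec b$. Since $I$ is an ideal, it is in particular a lower set with respect to ${\prec}$. Combined with $b \in I$, this gives $a \in I$, as desired. Notice that this direction does not actually require reflexivity of $b$.

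For the backward direction, I would assume $\dset b \subseteq I$ and show $b \in I$. Here reflexivity enters crucially: by the hypothesis $b \prec b$, we have $b \in \dset b$ by definition of the principal ideal. Combining this with $\dset b \subseteq I$ yields $b \in I$.

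There is no real obstacle to this proof; it is a direct unfolding of definitions. The only subtle point worth highlighting is that the assumption of reflexivity is used only in the backward direction, mirroring the fact that in general (without reflexivity) ideals satisfy the weaker roundedness property of \cref{ideals-are-rounded} rather than containing all their ``generating'' elements.
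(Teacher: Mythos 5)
Your proof is correct and follows exactly the same route as the paper's: the forward direction uses that ideals are lower sets, and the backward direction uses reflexivity of \(b\) to get \(b \in \dset b\). Your remark that reflexivity is used only in the backward direction is also accurate and a nice observation.
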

\begin{proof}
  The left-to-right implication holds because \(I\) is a lower set and the
  converse holds because \(b \in \dset b\) as \(b\) is assumed to be reflexive.
\end{proof}

\begin{lemma}\label{principal-ideals-are-compact}
  If \(b : B\) is reflexive, then its principal ideal \(\dset b\) is compact.
\end{lemma}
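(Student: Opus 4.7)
The plan is to deduce this immediately from the final clause of Lemma~\ref{Idl-way-below-characterization}, which states that whenever $b$ is an element of an ideal $I$, we have $\dset b \ll I$. Specialising this to $I \colonequiv \dset b$ will give exactly compactness, provided we can verify that $b \in \dset b$.

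First I would unfold definitions: compactness of $\dset b$ means $\dset b \ll \dset b$ in the \(\V\)-dcpo $\Idl{V}(B,{\prec})$. By the hypothesis $b \prec b$ and the definition $\dset b \colonequiv \set{a \in B \mid a \prec b}$, we have $b \in \dset b$. Now apply the ``in particular'' part of Lemma~\ref{Idl-way-below-characterization} with the ideal $I \colonequiv \dset b$ and the element $b$ to conclude $\dset b \ll \dset b$, as required.

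There is no real obstacle here: the work has all been done in Lemma~\ref{Idl-way-below-characterization}, and the reflexivity assumption is used only to witness membership $b \in \dset b$. If one prefers a self-contained argument rather than citing the final clause, one can instead observe that $\dset b \subseteq \dset b$ trivially and that $b \in \dset b$ by reflexivity, and then invoke the equivalence \ref{Idl-way-below-1}~$\Rightarrow$~\ref{Idl-way-below} of Lemma~\ref{Idl-way-below-characterization} directly.
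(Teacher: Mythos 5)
Your proof is correct, and it takes a slightly different route from the paper's. You invoke the ``in particular'' clause of \cref{Idl-way-below-characterization} (or equivalently its implication \ref{Idl-way-below-1}~\(\Rightarrow\)~\ref{Idl-way-below}), instantiated at $I \colonequiv J \colonequiv \dset b$, with reflexivity used only to witness $b \in \dset b$. The paper instead argues directly from the definition of compactness: it assumes $\dset b \subseteq \bigcup \mathcal I$ for a directed family $\mathcal I$ of ideals, applies \cref{principal-ideal-below-characterization} to get $b \in \bigcup\mathcal I$, extracts an index $s$ with $b \in \mathcal I_s$, and applies \cref{principal-ideal-below-characterization} once more to conclude $\dset b \subseteq \mathcal I_s$. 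Your argument is a bit more economical in that it reuses the already-established way-below characterisation rather than re-deriving the relevant containments by hand; the paper's version is self-contained modulo \cref{principal-ideal-below-characterization}, which is the immediately preceding lemma set up precisely for this purpose. Both proofs are sound and of comparable length; yours makes the dependency on \cref{Idl-way-below-characterization} explicit, while the paper's keeps the dependency local to the reflexivity machinery.
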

\begin{proof}
  Suppose that we have \(b : B\) such that \(b \prec b\) holds and that
  \(\dset b \subseteq \bigcup \mathcal I\) for some directed family
  \(\mathcal I\) of ideals. By \cref{principal-ideal-below-characterization}, we
  have \(b \in \bigcup\mathcal I\), which means that there exists \(s\) in the
  domain of \(\mathcal I\) such that \(b \in \mathcal I_s\). Using
  \cref{principal-ideal-below-characterization} once more, we see that
  \(\dset b \subseteq \mathcal I_s\), proving that \(\dset b\) is compact.
\end{proof}

\begin{theorem}\label{Idl-has-small-compact-basis}%
  \index{basis!compact}%
  If \({\prec}\) is reflexive, then the principal ideals
  \(\dset{(-)} : B \to \Idl{V}(B,{\prec})\) yield a small \emph{compact} basis
  for \(\Idl{V}(B,{\prec})\). In particular, \(\Idl{V}(B,{\prec})\) is
  (structurally) algebraic.
\end{theorem}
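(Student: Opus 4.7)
The plan is to assemble the theorem directly from the lemmas already proved in this subsection together with earlier results. Since reflexivity of ${\prec}$ implies that $(B,{\prec})$ is indeed an abstract basis by \cref{abstract-basis-if-reflexive}, the ideal completion $\Idl{V}(B,{\prec})$ is a well-defined $\V$-dcpo by \cref{Idl-is-dcpo}, and the principal ideal construction $\dset{(-)} : B \to \Idl{V}(B,{\prec})$ is available.

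First, I would invoke \cref{Idl-has-small-basis} to conclude that $\dset{(-)}$ is a small basis for $\Idl{V}(B,{\prec})$. Next, I would observe that reflexivity of ${\prec}$ means every $b : B$ satisfies $b \prec b$, so by \cref{principal-ideals-are-compact} each principal ideal $\dset b$ is a compact element of $\Idl{V}(B,{\prec})$. Thus every element in the image of $\dset{(-)}$ is compact.

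Having verified both conditions, I would then apply \cref{small-and-compact-basis}, which states precisely that a small basis whose image consists entirely of compact elements is a small compact basis. This yields the main claim.

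Finally, for the ``in particular'' clause, I would appeal to \cref{structural-algebraicity-if-small-compact-basis}, which tells us that any dcpo equipped with a small compact basis is structurally algebraic (and hence, by taking the propositional truncation, algebraic). No step here poses a genuine obstacle, since the substantive work was done in establishing \cref{Idl-has-small-basis} and \cref{principal-ideals-are-compact}; the proof is essentially a bookkeeping combination of these results.
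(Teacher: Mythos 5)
Your proposal is correct and follows essentially the same route as the paper: the paper's proof is a one-liner citing exactly \cref{Idl-has-small-basis,principal-ideals-are-compact,small-and-compact-basis}, which are the same three lemmas you combine. Your only extra step is the appeal to \cref{abstract-basis-if-reflexive}, which is unnecessary since the section already fixes $(B,{\prec})$ as an abstract basis, but this does no harm.
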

\begin{proof}
  This follows from
  \cref{Idl-has-small-basis,principal-ideals-are-compact,small-and-compact-basis}.
\end{proof}

\begin{theorem}\label{Idl-mediating-map}
  If \(f : B \to D\) is a monotone map to a \(\V\)-dcpo \(D\), then the map
  \(\bar{f} : \Idl{V}(B,{\prec}) \to D\) defined by taking an ideal \(I\) to the
  supremum of the directed family
  \({f \circ \fst} : \pa*{\Sigma_{b : B}\pa*{b \in I}} \to D\) is Scott
  continuous.
  Moreover, if \({\prec}\) is reflexive, then the diagram
  \[
    \begin{tikzcd}
      B \ar[rr,"f"] \ar[dr,"\dset{(-)}"'] & & D \\
      & \Idl{V}(B,{\prec}) \ar[ur,"\bar{f}"']
    \end{tikzcd}
  \]
  commutes.
\end{theorem}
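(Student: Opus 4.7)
The plan is to verify three things in order: that $\bar{f}$ is well-defined (the family being supremanded is directed and $\V$-indexed), that $\bar{f}$ is Scott continuous, and, under the reflexivity assumption, that the triangle commutes.

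First I would check well-definedness. Fix an ideal $I : \Idl{V}(B,{\prec})$. The indexing type $\Sigma_{b : B}\pa*{b \in I}$ lives in $\V$ because $B : \V$ and the propositions $b \in I$ are in $\V$. Directedness of this indexing type follows from $I$ being a directed lower set (with respect to~${\prec}$), and then monotonicity of $f$ together with \cref{image-is-directed} ensures that ${f \circ \fst}$ is a directed family in $D$, so its supremum exists in the $\V$-dcpo $D$.

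Next I would prove Scott continuity using \cref{continuity-criterion}: first monotonicity, then the inequality $\bar{f}\pa*{\bigsqcup \mathcal{I}} \below \bigsqcup \bar{f} \circ \mathcal{I}$ for directed $\mathcal{I} : S \to \Idl{V}(B,{\prec})$ with $S : \V$. Monotonicity is immediate since $I \subseteq J$ yields an inclusion of indexing types and both families are given by ${f\circ\fst}$. For the continuity inequality, recall that suprema of ideals are unions (\cref{def:union-of-ideals,union-of-directed-family-of-ideals-is-ideal}), so $\bar{f}\pa*{\bigcup \mathcal{I}} = \bigsqcup_{b \in \bigcup\mathcal{I}} f(b)$. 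By the universal property of the supremum it suffices to show $f(b) \below \bigsqcup_{s : S}\bar{f}(\mathcal{I}_s)$ for each $b \in \bigcup\mathcal{I}$; since this conclusion is a proposition we may unpack the truncation in the definition of $\bigcup \mathcal{I}$ to obtain a specific $s : S$ with $b \in \mathcal{I}_s$, and then $f(b) \below \bar{f}(\mathcal{I}_s) \below \bigsqcup_{s' : S}\bar{f}(\mathcal{I}_{s'})$, as required.

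Finally, assuming ${\prec}$ is reflexive, I would prove $\bar{f}(\dset b) = f(b)$ by antisymmetry. In one direction, reflexivity gives $b \prec b$, so $b \in \dset b$, whence $f(b) \below \bigsqcup_{a \prec b} f(a) \equiv \bar{f}(\dset b)$. In the other direction, for every $a \prec b$ monotonicity of $f$ yields $f(a) \below f(b)$, so $f(b)$ is an upper bound of the family defining $\bar{f}(\dset b)$ and therefore dominates its supremum.

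None of the steps looks like a real obstacle; the only subtlety worth flagging is the proposition-level unpacking of the truncated existential in $\bigcup \mathcal{I}$ during the continuity argument, which is legitimate precisely because the goal $f(b) \below \bigsqcup_{s : S}\bar{f}(\mathcal{I}_s)$ is itself a proposition (the order on $D$ is proposition-valued).
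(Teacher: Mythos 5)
Your proposal is correct and follows essentially the same approach as the paper. The only presentational difference is that you route the Scott-continuity argument through \cref{continuity-criterion} (monotonicity plus the inequality \(\bar{f}\pa*{\bigsqcup\mathcal I} \below \bigsqcup\bar f\circ\mathcal I\)) whereas the paper directly verifies that \(\bar f\pa*{\bigcup\mathcal I}\) is an upper bound of \(\bar f\circ\mathcal I\) and the least such, but the content of both steps is the same, including the proposition-level unpacking of the truncated existential in \(b \in \bigcup\mathcal I\); the reflexive case is proved by the identical antisymmetry argument.
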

\begin{proof}
  Note that \({f \circ \fst} : \pa*{\Sigma_{b : B}\pa*{b \in I}} \to D\) is
  indeed a directed family, because \(I\) is a directed subset of \(B\) and
  \(f\) is monotone.
  For Scott continuity of \(\bar{f}\), assume that we have a directed family
  \(\mathcal I\) of ideals indexed by \(S : \V\).
  We first show that \(\bar{f}\pa*{\bigcup \mathcal I}\) is an upper bound of
  \(\bar{f} \circ \mathcal I\). So let \(s : S\) be arbitrary and note that
  \(\bar{f}\pa*{\bigcup\mathcal I} \supseteq \bar{f}\pa*{\mathcal I_s}\) as soon
  as \(\bar{f}\pa*{\bigcup \mathcal I} \aboveorder f(b)\) for every
  \(b \in \mathcal I_s\). But for such \(b\) we have
  \(b \in \bigcup \mathcal I\), so this holds.
  Now suppose that \(y\) is an upper bound of \(\bar{f} \circ \mathcal I\). To
  show that \(\bar{f}\pa*{\bigcup\mathcal I} \below y\), it is enough to prove
  that \(f(b) \below y\) for every \(b \in \mathcal I\). But for such \(b\),
  there exists \(s : S\) such that \(b \in \mathcal I_s\) and hence,
  \(f(b) \below \bar{f}\pa*{\mathcal I_s} \below y\).

  Finally, if \({\prec}\) is reflexive, then we prove that
  \(\bar{f}\pa*{\dset b} = f(b)\) for every \(b : B\) by antisymmetry.  Since
  \({\prec}\) is assumed to be reflexive, we have \(b \in \dset b\) and
  therefore, \(f(b) \below \bar{f}\pa*{\dset b}\).
  Conversely, for every \(c \prec b\) we have \(f(c) \below f(b)\) by
  monotonicity of \(f\) and hence, \(\bar{f}(\dset b) \below f(b)\), as desired.
\end{proof}

\subsection{Example: the ideal completion of dyadics}\label{sec:dyadics}

We end this section by describing an example of a continuous dcpo, built using
the ideal completion, that is not algebraic. In fact, this dcpo has no compact
elements at all.

We inductively define a type and an order representing dyadic rationals
\(m / 2^n\) in the interval \((-1,1)\) for integers \(m,n\).
This type is similar to the lower Dedekind reals but with dyadics instead of
rationals and is extended with a point at \(+\infty\).%
\index{Dedekind real}\index{rational number}
We prefer to work with this type, because working with lower Dedekind reals
would require us to develop and formalise the theory of integers, rational
numbers, etc.

The~intuition for the upcoming definitions is the following. Start with the
point~\(0\) in the middle of the interval (represented by \emph{middle}
below). Then consider the two functions (respectively represented by \emph{left}
and \emph{right} below)
\begin{align*}
  l,r &: (-1,1) \to (-1,1) \\
  l(x) &\colonequiv (x-1)/2 \\
  r(x) &\colonequiv (x+1)/2
\end{align*}
that generate the dyadic rationals. Observe that \(l(x) < 0 < r(x)\) for every
\(x : (-1,1)\). Accordingly, we inductively define the following types.

\begin{definition}[{Dyadics}, \(\dyadics\), \(\prec\)]%
  \index{dyadics|textbf}%
  The type of \emph{dyadics} \(\dyadics : \U_0\) is the inductive type with
  these three constructors
  \[
    \dyadicmiddle : \dyadics \quad \dyadicleft : {\dyadics \to \dyadics}
    \quad \dyadicright : {\dyadics \to \dyadics}.
  \]
  \nomenclature[D]{$\dyadics$}{type of dyadics}%
  \nomenclature[left]{$\dyadicleft$}{left constructor of the type of dyadics}%
  \nomenclature[right]{$\dyadicright$}{right constructor of the type of dyadics}%
  \nomenclature[middle]{$\dyadicmiddle$}{middle constructor of the type of dyadics}%
  We also inductively define \({\prec} : {\dyadics \to \dyadics \to \U_0}\) as
  \begin{alignat*}{6}
    \dyadicmiddle & \prec \dyadicmiddle && \colonequiv \Zero &\quad
    {\dyadicleft (x)} & \prec {\dyadicmiddle} && \colonequiv \One
    &\quad
    {\dyadicright (x)} & \prec {\dyadicmiddle} && \colonequiv \Zero \\
    \dyadicmiddle & \prec {\dyadicleft (y)} && \colonequiv \Zero &
    {\dyadicleft (x)} & \prec {\dyadicleft (y)} && \colonequiv {x \prec y} &
    {\dyadicright (x)} & \prec {\dyadicleft (y)} && \colonequiv \Zero \\
    \dyadicmiddle & \prec {\dyadicright (y)} && \colonequiv \One &
    {\dyadicleft (x)} & \prec {\dyadicright (y)} && \colonequiv \One &
    {\dyadicright (x)} & \prec {\dyadicright (y)} && \colonequiv {x \prec y}.
    \!\!\!\!\qedhere
  \end{alignat*}
\end{definition}

\begin{lemma}
  The type of dyadics is a set with decidable equality.
\end{lemma}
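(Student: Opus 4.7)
The plan is to reduce the lemma to the general machinery for indexed $\WW$-types developed in \cref{sec:indexed-W-types}. Specifically, I would exhibit $\dyadics$ as a (nonindexed) $\WW$-type in the style of \cref{W-PCF-types} and then invoke \cref{Wtypedeceq} to conclude that it has decidable equality; Hedberg's Theorem (\cref{Hedberg-Theorem}) then upgrades this to being a set.

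Concretely, I would take $A \colonequiv \One + \Two$, one summand for each of the three constructors $\dyadicmiddle$, $\dyadicleft$ and $\dyadicright$, and define $B : A \to \U_0$ by sending the coprojection representing $\dyadicmiddle$ to $\Zero$ (which takes no arguments) and the two coprojections representing $\dyadicleft$ and $\dyadicright$ to $\One$ (each takes one recursive argument). A routine induction on both sides then constructs mutually inverse maps between $\dyadics$ and $\WW_{A,B}$, using the constructors of $\dyadics$ in one direction and $\sup$ together with the unique map from $\Zero$ in the other. The hypotheses of \cref{Wtypedeceq} are easy to verify: $A$ is a finite coproduct of copies of $\One$ and therefore has decidable equality, while each $B(a)$ is either $\Zero$ or $\One$, both of which are $\Pi$-compact by \cref{emptyunitpicompact}.

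An alternative, more direct route would proceed by induction on the first argument (and an auxiliary induction on the second) to define a decision procedure \({x = y} + \lnot{(x = y)}\) for each pair $x, y : \dyadics$, using that $\dyadicleft$ and $\dyadicright$ are left-cancellable (so that the recursive calls supply the needed decisions) and that the three constructors are pairwise disjoint (which in intensional type theory follows by transporting along a suitable family of types into $\Zero$, $\One$ and $\Two$). This avoids the detour through $\WW$-types but essentially re-proves in a special case what \cref{Wtypedeceq} already gives us.

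I expect no real obstacle: the only mildly delicate step is setting up the equivalence $\dyadics \simeq \WW_{A,B}$ cleanly, and even this is a direct imitation of \cref{W-PCF-types}. Once decidable equality is in hand, the promotion to a set is immediate from \cref{Hedberg-Theorem}, so the whole argument should be short.
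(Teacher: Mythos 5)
Your proposal is correct; both routes you sketch would succeed. The paper itself takes the second, more elementary route: it simply says that decidable equality follows by "a standard inductive proof" and that sethood then follows from Hedberg's Theorem (\cref{Hedberg-Theorem}), which is exactly your "alternative, more direct route." Your primary proposal — encoding $\dyadics$ as the nonindexed $\WW$-type with $A \colonequiv \One + \Two$, checking that $A$ has decidable equality and each $B(a)$ is $\Pi$-compact, and invoking \cref{Wtypedeceq} — is a genuinely different decomposition that the paper does not use here, even though the machinery was developed in \cref{sec:indexed-W-types}. Your route buys modularity and reuse (it factors the work through a lemma already proved), at the cost of a small detour to build the equivalence $\dyadics \simeq \WW_{A,B}$; the paper's route is shorter to state but repeats in a special case the kind of double induction that \cref{Wtypedeceq} already packages up. Both are fine, and your observation that the direct proof "essentially re-proves in a special case what \cref{Wtypedeceq} already gives us" is exactly right.
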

\begin{proof}
  Sethood follows from having decidable equality by Hedberg's
  Theorem.
  To see that \(\dyadics\) has decidable equality, one can use a standard
  inductive proof.
\end{proof}

\begin{definition}[Trichotomy, density, having no endpoints]%
  \index{trichotomy}\index{density}\index{endpoints}%
  We say that a binary relation \({<}\) on a type \(X\) is
  \begin{itemize}
  \item \emph{trichotomous} if exactly one of \(x < y\), \(x = y\) or
    \(y < x\) holds.
  \item \emph{dense} if for every \(x,y : X\), there exists some \(z : X\)
    such that \(x \prec z \prec y\).
  \item \emph{without endpoints} if for every \(x : X\), there exist some
    \(y,z : X\) with \(y \prec x \prec z\).\qedhere
  \end{itemize}
\end{definition}

\begin{lemma}\label{dyadics-order-properties}
  The relation \({\prec}\) on the dyadics is proposition-valued, transitive,
  irreflexive, trichotomous, dense and without endpoints.
\end{lemma}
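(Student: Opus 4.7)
The plan is to dispatch the six properties by structural induction on $\dyadics$, with different arities of induction in each case, building up a small collection of lemmas before the trichotomy/density/endpoints arguments.

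First, I would establish that $\prec$ is proposition-valued by double induction on its two arguments: each of the nine clauses of the definition is either $\Zero$, $\One$, or a recursive instance of $\prec$ applied to smaller dyadics, and these are propositions by the inductive hypothesis (and because $\Zero, \One$ are). Irreflexivity, $\lnot(x \prec x)$, then follows by a single induction on $x$: the $\dyadicmiddle$ case is immediate since $\dyadicmiddle \prec \dyadicmiddle \equiv \Zero$, and the two recursive cases reduce to the inductive hypothesis. Transitivity I would prove by triple induction on $x, y, z$; of the $27$ cases, all but the two ``homogeneous'' recursions $\dyadicleft(a) \prec \dyadicleft(b) \prec \dyadicleft(c)$ and $\dyadicright(a) \prec \dyadicright(b) \prec \dyadicright(c)$ are decided either by triviality ($\One$ on the right) or vacuity ($\Zero$ on a hypothesis), and the two recursive cases reduce directly to the inductive hypothesis on the inner arguments. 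Trichotomy (exactly one of $x \prec y$, $x = y$, $y \prec x$) proceeds by double induction, again with most cases handled directly and the left/left and right/right cases reducing to the IH; the uniqueness part of trichotomy is a corollary of irreflexivity together with transitivity.

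The key enabling lemma for the remaining two properties is the pair of sandwiching inequalities
\[
  \dyadicleft(x) \prec x \prec \dyadicright(x) \quad \text{for every } x : \dyadics,
\]
which I would prove by simultaneous induction on $x$: the $\dyadicmiddle$ case is by definition, the $\dyadicleft(a)$ and $\dyadicright(a)$ cases unfold so that the right-hand conjunct of the IH feeds the left-hand side and vice versa. This immediately gives the ``no endpoints'' property: take $\dyadicleft(x)$ and $\dyadicright(x)$ as the two witnesses.

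For density I would perform case analysis on the shape of the derivation $x \prec y$ afforded by the nine-clause definition of $\prec$. The two cases with an inner recursion, $\dyadicleft(a) \prec \dyadicleft(b)$ (from $a \prec b$) and $\dyadicright(a) \prec \dyadicright(b)$ (from $a \prec b$), yield an interpolant by applying the inductive hypothesis to $a \prec b$ and then wrapping in $\dyadicleft$ or $\dyadicright$ respectively. The three remaining nontrivial cases are handled by the sandwiching lemma above: between $\dyadicleft(a)$ and $\dyadicmiddle$ use $\dyadicleft(\dyadicright(a))$ (via $\dyadicleft(a) \prec \dyadicleft(\dyadicright(a))$, which is $a \prec \dyadicright(a)$); symmetrically use $\dyadicright(\dyadicleft(b))$ between $\dyadicmiddle$ and $\dyadicright(b)$; and for $\dyadicleft(a) \prec \dyadicright(b)$ simply take $\dyadicmiddle$, since $\dyadicleft(a) \prec \dyadicmiddle$ and $\dyadicmiddle \prec \dyadicright(b)$ are both $\One$ by definition.

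The bulk of the work is routine case splitting; the only step that requires any thought is the sandwiching lemma, and I expect its proof by induction to be the main obstacle, because it is the single place where one must discover that $\dyadicleft(\cdot) \prec (-)$ and $(-) \prec \dyadicright(\cdot)$ must be proved jointly so that the left and right halves of the inductive hypothesis support each other in the $\dyadicright(a)$ and $\dyadicleft(a)$ cases.
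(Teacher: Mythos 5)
Your proposal is correct and takes essentially the same approach as the paper: the paper also dispatches proposition-valuedness, transitivity, irreflexivity and trichotomy by routine structural induction, isolates the same sandwiching lemma \(\dyadicleft(x) \prec x \prec \dyadicright(x)\) as the key enabling fact, derives no-endpoints from it, and proves density by the same five-case analysis with the same interpolants.

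One small inaccuracy in your write-up, though it does not invalidate the proof: you claim the two halves of the sandwiching lemma must be proved jointly because ``the right-hand conjunct of the IH feeds the left-hand side and vice versa.'' Unfolding the definition shows this is not what happens. In the \(\dyadicleft(a)\) case, the goal \(\dyadicleft(\dyadicleft(a)) \prec \dyadicleft(a)\) reduces to \(\dyadicleft(a) \prec a\), which is the \emph{left} conjunct of the IH, while the goal \(\dyadicleft(a) \prec \dyadicright(\dyadicleft(a))\) is \(\One\) by the \(\dyadicleft\)/\(\dyadicright\) clause and needs no IH at all. Dually, in the \(\dyadicright(a)\) case the left conjunct of the goal is trivially \(\One\) and the right conjunct reduces to the \emph{right} conjunct of the IH. So each half can in fact be proved by its own independent induction; bundling them is harmless but not forced, and the ``main obstacle'' you anticipate is smaller than you expect.
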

\begin{proof}
  That \({\prec}\) is proposition-valued, transitive, irreflexive and
  trichotomous is all proven by a straightforward induction on the definition on
  \(\dyadics\).
  That it has no endpoints is witnessed by the fact that for every
  \(x : \dyadics\), we have
  \begin{equation}\label{left-and-right-in-order}\tag{\(\dagger\)}
    {\dyadicleft x} \prec x \prec {\dyadicright x}
  \end{equation}
  which is proven by induction on \(\dyadics\) as well.
  We spell out the inductive proof that it is dense, making use of
  \eqref{left-and-right-in-order}. Suppose that \(x \prec y\). Looking at the
  definition of the order, we see that we need to consider five cases.
  \begin{itemize}
  \item If \(x = \dyadicmiddle\) and \(y = \dyadicright y'\), then we have
    \(x \prec \dyadicright(\dyadicleft(y')) \prec y\).
  \item If \(x = \dyadicleft(x')\) and \(y = \dyadicmiddle\), then we have
    \(x \prec \dyadicleft(\dyadicright(x')) \prec y\).
  \item If \(x = \dyadicleft(x')\) and \(y = \dyadicright y'\), then we have
    \(x \prec \dyadicmiddle \prec y\).
  \item If \(x = \dyadicright (x')\) and \(y = \dyadicright y'\), then we have
    \(x' \prec y'\) and therefore, by induction hypothesis, there exists
    \(z' : \dyadics\) such that \(x' \prec z' \prec y'\).
    Hence, \(x \prec \dyadicright(z') \prec y\).
  \item If \(x = \dyadicleft (x')\) and \(y = \dyadicleft (y')\), then
    \(x' \prec y'\) and hence, by induction hypothesis, there exists
    \(z' : \dyadics\) such that \(x' \prec z' \prec y'\).
    Thus, \({x \prec \dyadicleft(z') \prec y}\). \qedhere
  \end{itemize}
\end{proof}

\begin{proposition}\label{dyadics-form-abstract-basis}%
  \index{basis!abstract}%
  The pair \((\dyadics,{\prec})\) is an abstract \(\U_0\)-basis.
\end{proposition}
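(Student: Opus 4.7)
The plan is to verify each of the four requirements of an abstract \(\U_0\)-basis. The type \(\dyadics\) lies in \(\U_0\) and \({\prec} : \dyadics \to \dyadics \to \U_0\) is proposition-valued and transitive by \cref{dyadics-order-properties}, so the remaining work is to establish the two interpolation properties.

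Nullary interpolation follows immediately from the fact, also recorded in \cref{dyadics-order-properties}, that \(\prec\) has no endpoints: for any \(a : \dyadics\), there exists \(b : \dyadics\) with \(b \prec a\) (just project to the lower endpoint). Since the target is propositionally truncated, the truncated hypothesis is exactly what we need.

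For binary interpolation, I would proceed as follows. Given \(a_1 \prec b\) and \(a_2 \prec b\), trichotomy of \(\prec\) supplies an inhabitant of \({(a_1 \prec a_2) + (a_1 = a_2) + (a_2 \prec a_1)}\), on which I do a case analysis (which is legal even though the conclusion is truncated, since trichotomy is not itself a propositional truncation). In each case this produces a ``maximum'' \(m : \dyadics\) of \(a_1\) and \(a_2\) satisfying \(m \prec b\): in the first case take \(m \colonequiv a_2\), in the third \(m \colonequiv a_1\), and in the second \(m \colonequiv a_1 = a_2\). Density of \(\prec\) then yields an unspecified \(a : \dyadics\) with \(m \prec a \prec b\); because our goal is itself propositionally truncated, we may eliminate this existential. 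Transitivity of \(\prec\) then delivers \(a_1, a_2 \prec a\), completing the interpolant.

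I do not anticipate any serious obstacle here: \cref{dyadics-order-properties} already does the substantive inductive work on \(\dyadics\), and the remaining argument is a routine combination of trichotomy, density, and transitivity. The only subtlety is handling propositional truncations correctly, but this is straightforward since every existential we need to eliminate appears in a goal that is itself truncated.
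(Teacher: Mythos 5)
Your proposal is correct and follows essentially the same route as the paper's proof: transitivity and proposition-valuedness from \cref{dyadics-order-properties}, nullary interpolation from having no endpoints, and binary interpolation by a trichotomy case split followed by density and transitivity. The only cosmetic difference is that you factor the case analysis through an explicit ``maximum'' \(m\) of \(a_1\) and \(a_2\), whereas the paper treats the three cases directly; the underlying argument is identical, and your remarks about when truncation elimination is legitimate are accurate.
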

\begin{proof}
  By \cref{dyadics-order-properties} the relation \({\prec}\) is
  proposition-valued and transitive.
  Moreover, that it has no endpoints implies unary interpolation.
  For binary interpolation, suppose that we have \(x \prec z\) and
  \(y \prec z\). Then by trichotomy there are three cases.
  \begin{itemize}
  \item If \(x = y\), then using density and our assumption that \(x \prec z\),
    there exists \(d : \dyadics\) with \(y = x \prec d \prec z\), as desired.
  \item If \(x \prec y\), then using density and our assumption that
    \(y \prec z\), there exists \(d : \dyadics\) with \(y \prec d \prec z\), but
    then also \(x \prec d\) since \(x \prec y\), so we are done.
  \item If \(x \prec y\), then the proof is similar to that of the second case.
    \qedhere
  \end{itemize}
\end{proof}

\begin{proposition}\index{ideal!completion}\index{compactness}\index{algebraicity}
  The ideal completion \(\Idlnum{0}(\dyadics,{\prec}) : \DCPOnum{0}{1}{0}\) is
  structurally continuous with small basis
  \(\dset (-) : \dyadics \to \Idlnum{0}(\dyadics,{\prec})\). Moreover, it cannot be
  algebraic, because none of its elements are compact.
\end{proposition}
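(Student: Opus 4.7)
The proof breaks naturally into three pieces: (i) that we genuinely have an abstract basis, (ii) that the stated map is a small basis and hence induces structural continuity, and (iii) that no element of the completion is compact (which rules out algebraicity).

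For (i) and (ii), the work has essentially been done. By Proposition~\ref{dyadics-form-abstract-basis} the pair $(\dyadics,{\prec})$ is an abstract $\U_0$-basis, and so by Theorem~\ref{Idl-has-small-basis} the principal ideal map $\dset{(-)} : \dyadics \to \Idlnum{0}(\dyadics,{\prec})$ is a small basis for the rounded ideal completion, which consequently is structurally continuous. The universe computation is straightforward: since $\dyadics : \U_0$ and ${\prec}$ is $\U_0$-valued, we have $\Idlnum{0}(\dyadics,{\prec}) : \DCPOnum{0}{1}{0}$.

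For (iii), the plan is to show that no ideal $I$ is compact, by appeal to the characterisation of the way-below relation in the ideal completion. Suppose for contradiction that $I \ll I$. By Lemma~\ref{Idl-way-below-characterization}\ref{Idl-way-below-1}, there exists some $b \in I$ with $I \subseteq \dset b$. Since $b \in I$, we conclude $b \in \dset b$, which unfolds to $b \prec b$. But this contradicts the irreflexivity of ${\prec}$ on the dyadics, established in Lemma~\ref{dyadics-order-properties}. Hence no element of $\Idlnum{0}(\dyadics,{\prec})$ is compact, and in particular $\Idlnum{0}(\dyadics,{\prec})$ cannot be algebraic: an algebraic dcpo would approximate each of its elements by compact ones, but every ideal is nonempty (consisting of at least one dyadic), forcing the existence of compact elements below any given ideal.

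There is no real obstacle here; the only point requiring a moment's care is that we are using the characterisation of $\ll$ in the ideal completion \emph{with both arguments equal}, so the contradiction at the end hinges precisely on irreflexivity of the basis relation~${\prec}$, which in turn is exactly the feature that prevents the principal ideals from being compact in the sense of Lemma~\ref{principal-ideals-are-compact}. In other words, the same hypothesis that supplied binary interpolation via density (in Proposition~\ref{dyadics-form-abstract-basis}) is what obstructs algebraicity, giving a clean separation between continuity and algebraicity in this example.
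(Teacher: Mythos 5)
Your proposal is correct and follows essentially the same route as the paper's proof: structural continuity via \cref{dyadics-form-abstract-basis} and \cref{Idl-has-small-basis}, and then the non-compactness of every ideal via \cref{Idl-way-below-characterization} together with irreflexivity of $\prec$. The only difference is that you spell out the final step (no compact elements implies not algebraic), which the paper leaves implicit, and you add a closing remark connecting irreflexivity to the failure of \cref{principal-ideals-are-compact}; both are fine but not substantively different from the paper's argument.
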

\begin{proof}
  The first claim follows from \cref{Idl-has-small-basis}. Now suppose for a
  contradiction that we have a compact ideal \(I\). By
  \cref{Idl-way-below-characterization}, there exists \(x \in I\) with
  \(I \subseteq \dset x\). But this implies \(x \prec x\), which is impossible
  as \({\prec}\) is irreflexive.
\end{proof}

\section{Ideal completions of small bases}%
\label{sec:ideal-completions-of-small-bases}%
\index{basis}\index{basis!abstract}\index{ideal!completion}%

Given a \(\V\)-dcpo \(D\) with a small basis \(\beta : B \to D\), we show that
there are two natural ways of turning \(B\) into an abstract basis. Either
define \(b \prec c\) by \(\beta(b) \ll \beta(c)\), or by
\(\beta (b) \below \beta(c)\).
Taking their \ideal{V} completions we show that the former yields a
continuous dcpo isomorphic to \(D\), while the latter yields an algebraic dcpo
(with a small compact basis) in which \(D\) can be embedded.
The latter fact will find application in
\cref{sec:exponentials-with-small-bases}, while the former gives us a
presentation theorem: every dcpo with a small basis is isomorphic to a dcpo of
ideals. In particular, if \(D : \DCPO{V}{U}{T}\) has a small basis, then it is
isomorphic to a dcpo with simpler universe parameters, namely
\(\Idl{V}\pa*{B,{\ll_{\beta}}} : \DCPO{V}{V^+}{V}\).
Of course a similar result holds for dcpos with a small compact basis. In
studying these variations, it is helpful to first develop some machinery that
all of them have in common.

Fix a \(\V\)-dcpo \(D\) with a small basis \(\beta : B \to D\). In what follows
we conflate the family
\(\ddset_{\beta} x : \pa*{\Sigma_{b : B}\pa*{\beta(b) \ll x}} \xrightarrow{\beta
  \circ \fst} D\) with its associated subset
\(\set{b \in B \mid \beta(b) \ll x}\), formally given by the map
\(B \to \Omega_{\V}\) defined as
\(b \mapsto \exists_{b : B}\pa*{\beta(b) \ll x}\).

\begin{lemma}\label{ddsets-is-continuous}
  The assignment \(x : D \mapsto \ddset_{\beta} x : \powerset(B)\) is Scott
  continuous.
\end{lemma}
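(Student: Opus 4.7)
My plan is to apply the Scott continuity criterion of \cref{continuity-criterion}, so I first need to check monotonicity and then verify the inequality $\ddset_\beta(\bigsqcup \alpha) \subseteq \bigsqcup_i \ddset_\beta(\alpha_i)$ for every directed family $\alpha : I \to D$. Recall from \cref{powersets-as-pointed-dcpos} that suprema in $\powerset_\V(B)$ are given by set-theoretic unions, so the right-hand side unfolds to $\bigcup_{i : I} \ddset_\beta(\alpha_i)$. Also note that I am justified in writing $\ddset_\beta x$ as an element of $\powerset(B)$ because, by the small basis condition, $\beta(b) \ll x$ is $\V$-small for every $b : B$ and $x : D$, so that $b \mapsto \beta(b) \ll x$ defines a $\V$-valued subset of $B$.

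For monotonicity, suppose $x \below y$ in $D$ and $b \in \ddset_\beta x$, i.e.\ $\beta(b) \ll x$. Then $\beta(b) \ll x \below y$ gives $\beta(b) \ll y$ by \cref{way-below-properties}\ref{below-way-below-way-below}, so $b \in \ddset_\beta y$. This shows $\ddset_\beta : D \to \powerset(B)$ is monotone, and hence, by \cref{image-is-directed}, the family $i \mapsto \ddset_\beta(\alpha_i)$ is directed in $\powerset(B)$ whenever $\alpha$ is directed in $D$, so that its supremum (union) makes sense.

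The main work is the nontrivial inclusion. Suppose $b \in \ddset_\beta(\bigsqcup \alpha)$, i.e.\ $\beta(b) \ll \bigsqcup \alpha$. A naive attempt directly invokes the definition of the way-below relation to produce some $i : I$ with $\beta(b) \below \alpha_i$, but this only gives $\below$ and not $\ll$. The key move — and the main (small) obstacle — is to first apply unary interpolation in the basis (\cref{unary-interpolation-basis}) to obtain $b' : B$ with $\beta(b) \ll \beta(b') \ll \bigsqcup \alpha$. Then the definition of $\ll$ applied to $\beta(b') \ll \bigsqcup \alpha$ and the directed family $\alpha$ produces $i : I$ with $\beta(b') \below \alpha_i$, and composing yields $\beta(b) \ll \beta(b') \below \alpha_i$, so $\beta(b) \ll \alpha_i$ by \cref{way-below-properties}\ref{below-way-below-way-below}. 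Hence $b \in \ddset_\beta(\alpha_i) \subseteq \bigcup_{i : I} \ddset_\beta(\alpha_i)$, completing the proof. Since we are proving a proposition (a subset inclusion), the existential quantifiers introduced by $\ll$ and by interpolation can be eliminated in the standard way.
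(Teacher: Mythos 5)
Your proposal is correct and takes essentially the same approach as the paper's proof: establish monotonicity, reduce to a single inclusion by the continuity criterion, and then handle that inclusion by first applying unary interpolation in the basis (\cref{unary-interpolation-basis}) to upgrade the naive application of the definition of $\ll$ — which would only give $\beta(b) \below \alpha_i$ — to the needed $\beta(b) \ll \alpha_i$.
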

\begin{proof}
  Note that \(\ddset_{\beta}(-)\) is monotone: if \(x \below y\) and \(b : B\)
  is such that \(\beta(b) \ll x\), then also \(\beta(b) \ll y\).  So it suffices
  to prove that
  \(\ddset_{\beta} \pa*{\bigsqcup \alpha} \subseteq \bigcup_{i :
    I}\ddset_{\beta}{\alpha_i}\). So suppose that \(b : B\) is such that
  \(\beta(b) \ll \bigsqcup \alpha\). By \cref{unary-interpolation-basis}, there
  exists \(c : B\) with \(\beta(b) \ll \beta(c) \ll \bigsqcup \alpha\).
  Hence, there exists \(i : I\) such that
  \(\beta(b) \ll \beta(c) \below \alpha_i\) already, and therefore,
  \(b \in \bigcup_{j : J}\ddset_{\beta} {\alpha_j}\), as desired.
\end{proof}

By virtue of the fact that \(\beta\) is a small basis for \(D\), we know that
taking the directed supremum of \(\ddset_{\beta} x\) equals \(x\) for every
\(x : D\). In other words, \(\ddset_{\beta}{(-)}\) is a section of
\(\bigsqcup{(-)}\). The following lemma gives conditions for the other composite
to be an inflation or a deflation.

\begin{lemma}\label{inflation-deflation-criteria}
  Let \(I : B \to \Omega_{\V}\) be a subset of \(B\) such that its associated
  family
  \({\bar{I} : \pa*{\Sigma_{b : B}\pa*{b \in I}} \xrightarrow{\beta \circ \fst}
  D}\) is directed.
  \begin{enumerate}[(i)]
  \item\label{inflation-criterion} If the conjunction of
    \(\beta(b) \below \beta(c)\) and \(c \in I\) implies \(b \in I\), then
    \(\ddset_{\beta} \bigsqcup\bar{I} \subseteq I\).
  \item\label{deflation-criterion} If for every \(b \in I\) there exists
    \(c \in I\) such that \(\beta(b) \ll \beta(c)\), then
    \(I \subseteq \ddset_{\beta} \bigsqcup\bar{I}\).
  \end{enumerate}
  In particular, if both conditions hold, then
  \(I = \ddset_{\beta}\bigsqcup \bar{I}\).
\end{lemma}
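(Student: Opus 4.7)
The plan is to prove each inclusion separately using the characterisation of the way-below relation in terms of directed families, together with the downward-closure-type assumptions.

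For part \ref{inflation-criterion}, I would start with an arbitrary \(b \in \ddset_{\beta}\bigsqcup \bar{I}\), i.e.\ \(b : B\) with \(\beta(b) \ll \bigsqcup \bar{I}\). Since \(\bar{I}\) is assumed to be directed and indexed by a type in \(\V\), the way-below hypothesis applied to \(\bar{I}\) itself yields (after unfolding the propositional truncation, which is allowed because the goal \(b \in I\) is a proposition) some \((c,p) : \Sigma_{c : B}(c \in I)\) such that \(\beta(b) \below \bar{I}(c,p) \equiv \beta(c)\). The hypothesis on \(I\) (that \(\beta(b) \below \beta(c)\) together with \(c \in I\) implies \(b \in I\)) then gives \(b \in I\), as required.

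For part \ref{deflation-criterion}, I would take an arbitrary \(b \in I\) and show \(\beta(b) \ll \bigsqcup \bar{I}\). The hypothesis provides (via the truncation, which is harmless since the way-below relation is proposition-valued by~\cref{way-below-properties}\ref{way-below-prop-valued}) some \(c \in I\) with \(\beta(b) \ll \beta(c)\). Since \((c,\text{proof of } c \in I)\) is in the domain of \(\bar{I}\), we have \(\beta(c) \below \bigsqcup \bar{I}\). Composing with~\cref{way-below-properties}\ref{below-way-below-way-below} yields \(\beta(b) \ll \bigsqcup \bar{I}\), i.e.\ \(b \in \ddset_{\beta}\bigsqcup \bar{I}\).

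The final ``in particular'' claim follows immediately by combining both inclusions and applying antisymmetry of subset inclusion (which amounts to function extensionality and propositional extensionality, both available). I do not expect any real obstacle here: the argument is essentially bookkeeping about the definition of \(\bar{I}\) and direct appeals to the way-below relation's basic properties. The only mild subtlety is the use of the propositional truncation in both parts, which is justified by the fact that the goals (\(b \in I\) and \(\beta(b) \ll \bigsqcup\bar{I}\)) are propositions.
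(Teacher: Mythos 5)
Your proof is correct and follows the same route as the paper's: in~\ref{inflation-criterion} you apply the defining property of \(\beta(b) \ll \bigsqcup\bar{I}\) to the directed family \(\bar{I}\) itself to extract \(c \in I\) with \(\beta(b) \below \beta(c)\) and then invoke the lower-set hypothesis, and in~\ref{deflation-criterion} you use roundedness to obtain \(c \in I\) with \(\beta(b) \ll \beta(c)\), note that \(\beta(c) \below \bigsqcup\bar{I}\), and compose via \cref{way-below-properties}. The only difference is that you make explicit the propositional-truncation bookkeeping that the paper leaves tacit.
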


Note that the first condition says that \(I\) is a lower set with respect to the
order of \(D\), while the second says that \(I\) is rounded with respect to the
way-below relation.

\begin{proof}
  \ref{inflation-criterion} Suppose that \(I\) is a lower set and let \(b : B\)
  be such that \(\beta(b) \ll \bigsqcup \bar{I}\). Then there exists \(c \in I\)
  with \(\beta(b) \below \beta(c)\), which implies \(b \in I\) as desired,
  because \(I\) is assumed to be a lower set.
  \ref{deflation-criterion} Assume that \(I\) is rounded and let \(b \in I\) be
  arbitrary. By roundedness of \(I\), there exists \(c \in I\) such that
  \(\beta(b) \ll \beta(c)\). But then
  \(\beta(b) \ll \beta(c) \below \bigsqcup\bar{I}\), so that
  \(b \in \ddset_{\beta}\bigsqcup \bar{I}\), as we wished to show.
\end{proof}

\begin{lemma}\label{semidirected-lower-set-criteria}
  Suppose that we have \({\prec} : B \to B \to \V\) and let \(x : D\) be
  arbitrary.
  \begin{enumerate}[(i)]
  \item\label{lowerset-criterion} If \(b \prec c\) implies
    \(\beta(b) \below \beta(c)\) for every \(b,c : B\), then
    \(\ddset_{\beta} x\) is a lower set w.r.t.\ \({\prec}\).
  \item\label{semidirected-criterion} If \(\beta(b) \ll \beta(c)\) implies
    \(b \prec c\) for every \(b,c : B\), then \(\ddset_{\beta} x\) is
    semidirected w.r.t.\ \(\prec\).
  \end{enumerate}
\end{lemma}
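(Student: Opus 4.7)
For part \ref{lowerset-criterion}, the plan is a direct chase through the way-below relation. I would fix $b, c : B$ with $c \in \ddset_\beta x$ (so $\beta(c) \ll x$) and $b \prec c$, and then need to show $b \in \ddset_\beta x$, i.e.\ $\beta(b) \ll x$. The hypothesis gives $\beta(b) \below \beta(c)$, so I would combine $\beta(b) \below \beta(c) \ll x$ and conclude by Lemma~\ref{way-below-properties}\ref{below-way-below-way-below} that $\beta(b) \ll x$. This is essentially a one-line argument once the definitions are unfolded.

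For part \ref{semidirected-criterion}, the plan is to apply binary interpolation in the basis. Given $b_1, b_2 \in \ddset_\beta x$, meaning $\beta(b_1) \ll x$ and $\beta(b_2) \ll x$, Lemma~\ref{binary-interpolation-basis} yields some $b : B$ with $\beta(b_1), \beta(b_2) \ll \beta(b)$ and $\beta(b) \ll x$. The hypothesis of \ref{semidirected-criterion} then converts $\beta(b_1) \ll \beta(b)$ and $\beta(b_2) \ll \beta(b)$ into $b_1 \prec b$ and $b_2 \prec b$, while $\beta(b) \ll x$ ensures $b \in \ddset_\beta x$. Since semidirectedness is a proposition, I am justified in invoking the truncated existential produced by binary interpolation.

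I do not expect any real obstacle: both items follow mechanically from results already established (\cref{way-below-properties} for the first and \cref{binary-interpolation-basis} for the second), and the smallness condition on the basis is already baked into the fact that the interpolation lemma is available. The only point worth flagging is that $\ddset_\beta x$ is treated here as the subset of $B$ (rather than as an indexed family into $D$), so lower-set-ness and semidirectedness refer to the relation $\prec$ on $B$ itself; with this reading both statements are propositions, which is what allows the use of existentials coming from the interpolation lemma.
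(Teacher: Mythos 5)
Your proof is correct and follows the same route as the paper: for part~(i) it unfolds the observation that $\ddset_\beta x$ is a lower set in $D$ via Lemma~\ref{way-below-properties}\ref{below-way-below-way-below}, and for part~(ii) it appeals to binary interpolation in the basis (Lemma~\ref{binary-interpolation-basis}). In fact your write-up of part~(ii) is tighter than the paper's own: the paper's proof invokes binary interpolation but then names \emph{two} interpolants $c_1, c_2$ with $\beta(b_i) \ll \beta(c_i) \ll x$, which only yields $b_1 \prec c_1$ and $b_2 \prec c_2$ and does not by itself produce the single common $\prec$-upper bound that semidirectedness demands; your use of binary interpolation to obtain a \emph{single} $b : B$ with $\beta(b_1), \beta(b_2) \ll \beta(b) \ll x$, whence $b_1, b_2 \prec b$ and $b \in \ddset_\beta x$, is the correct reading and is almost certainly what the paper intended.
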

\begin{proof}
  \ref{lowerset-criterion} This is immediate, because \(\ddset{\beta} x\) is a
  lower set with respect to the order relation on \(D\). %
  \ref{semidirected-criterion} Suppose that the condition holds and that we have
  \({b_1,b_2} : B\) such that \(\beta(b_1),\beta(b_2) \ll x\). Using binary
  interpolation in the basis, there exist \({c_1,c_2} : B\) with
  \(\beta(b_1) \ll \beta(c_1) \ll x\) and \(\beta(b_2) \ll \beta(c_2) \ll
  x\). Hence, \(c_1,c_2 \in \ddset_{\beta x}\) and moreover, by assumption we
  have \(b_1 \prec c_1\) and \(b_2 \prec c_2\), as desired.
\end{proof}

\subsection{Ideal completion with respect to the way-below relation}%
\index{way-below relation|(}

\begin{lemma}\label{way-below-abstract-basis}
  If \(\beta : B \to D\) is a small basis for a \(\V\)-dcpo \(D\), then
  \(\pa*{B,\ll_{\beta}}\) is an abstract \(\V\)-basis where \(b \ll_{\beta} c\)
  is defined as \(\beta(b) \ll \beta(c)\).
  \nomenclature[llbeta]{$b \ll_{\beta} c$}{way-below relation restricted to a basis \(\beta\)}
\end{lemma}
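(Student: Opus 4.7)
The plan is to verify the four defining clauses of an abstract $\V$-basis for $(B, {\ll_\beta})$ one at a time, relying almost entirely on results already established about small bases and the way-below relation.

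First, I would check that $\ll_\beta$ is a $\V$-valued, proposition-valued, transitive relation on the type $B : \V$. Proposition-valuedness and transitivity are immediate from the corresponding properties of $\ll$ on $D$ (Lemma \ref{way-below-properties}). For the size condition, note that by Proposition \ref{locally-small-if-small-basis} the way-below relation on the whole of $D$ is small-valued for any dcpo equipped with a small basis; in particular $\beta(b) \ll \beta(c)$ is $\V$-small for all $b,c : B$, so $\ll_\beta$ is (up to equivalence) a relation with values in $\V$.

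Next, nullary and binary interpolation for $\ll_\beta$ should fall out of the corresponding interpolation lemmas for bases. For nullary interpolation, given $a : B$, I would apply Lemma \ref{nullary-interpolation-basis} to the element $x \colonequiv \beta(a) : D$ to obtain some $b : B$ with $\beta(b) \ll \beta(a)$, which is exactly $b \ll_\beta a$. For binary interpolation, given $a_1, a_2 \ll_\beta b$, i.e.\ $\beta(a_1), \beta(a_2) \ll \beta(b)$ in $D$, I would invoke Lemma \ref{binary-interpolation-basis} with $x \colonequiv \beta(a_1)$, $y \colonequiv \beta(a_2)$, $z \colonequiv \beta(b)$ to obtain some $c : B$ with $\beta(a_1), \beta(a_2) \ll \beta(c)$ and $\beta(c) \ll \beta(b)$; this $c$ then witnesses $a_1, a_2 \ll_\beta c \ll_\beta b$.

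There is no substantial obstacle here: the statement is essentially a translation of already-proved facts about small bases into the language of abstract bases, with the only mildly subtle point being the size condition on $\ll_\beta$, which is handled by Proposition \ref{locally-small-if-small-basis} rather than directly by the defining clauses of a small basis.
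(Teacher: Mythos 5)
Your proof is correct and follows essentially the same route as the paper: proposition-valuedness and transitivity come from the corresponding properties of the way-below relation, while nullary and binary interpolation come directly from \cref{nullary-interpolation-basis,binary-interpolation-basis}. The only minor divergence is in how you justify the size condition: you invoke \cref{locally-small-if-small-basis}, whereas the paper (in a remark accompanying the lemma) obtains $\V$-smallness of $\beta(b) \ll \beta(c)$ directly from clause~(ii) of the definition of a small basis by specialising $x \colonequiv \beta(c)$; your route is a slightly longer chain through the same underlying fact, but equally valid.
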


\begin{remark}
  The definition of an abstract \(\V\)-basis requires the relation on it to be
  \(\V\)-valued. Hence, for the lemma to make sense we appeal to the fact that
  \(\beta\) is a \emph{small} basis which tells us that we can substitute
  \(\beta(b) \ll \beta(c)\) by an equivalent type in \(\V\).
\end{remark}

\begin{proof}[Proof of \cref{way-below-abstract-basis}]
  The way-below relation is proposition-valued and transitive. Moreover,
  \({\ll_{\beta}}\) satisfies nullary and binary interpolation precisely because
  we have nullary and binary interpolation in the basis for the way-below
  relation by \cref{nullary-interpolation-basis,binary-interpolation-basis}.
\end{proof}

The following theorem is a presentation result for dcpos with a small basis:
every such dcpo can be presented as the rounded ideal completion of its small
basis.
\begin{theorem}\label{Idl-iso-continuous}%
  \index{dcpo!isomorphism}
  The map \(\ddset_{\beta}{(-)} : D \to \Idl{V}\pa*{B,{\ll_{\beta}}}\) is an
  isomorphism of \(\V\)-dcpos.
\end{theorem}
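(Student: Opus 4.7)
The plan is to prove that $\ddset_{\beta}{(-)}$ is an isomorphism by constructing its inverse via Theorem \ref{Idl-mediating-map} applied to $\beta$ itself, and then verifying that the two composites are identities. Concretely, I would proceed as follows.

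First, I would check that $\ddset_{\beta} x$ is genuinely an ideal in the abstract basis $(B, {\ll_{\beta}})$ for every $x : D$, so that $\ddset_{\beta}{(-)}$ has the claimed codomain. That $\ddset_{\beta} x$ is a lower set with respect to ${\ll_{\beta}}$ follows from Lemma \ref{semidirected-lower-set-criteria}\ref{lowerset-criterion}, since $b \ll_{\beta} c$ means $\beta(b) \ll \beta(c)$ and in particular $\beta(b) \below \beta(c)$. That it is directed with respect to ${\ll_{\beta}}$ follows from Lemma \ref{semidirected-lower-set-criteria}\ref{semidirected-criterion} (the hypothesis there holds by definition of ${\ll_{\beta}}$), together with nullary interpolation in the basis (Lemma \ref{nullary-interpolation-basis}) to give inhabitedness. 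Scott continuity of $\ddset_{\beta}{(-)}$ is already provided by Lemma \ref{ddsets-is-continuous}.

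Next, I would produce the candidate inverse. Since $b \ll_{\beta} c$ implies $\beta(b) \below \beta(c)$, the map $\beta : B \to D$ is monotone with respect to $\ll_{\beta}$, so Theorem \ref{Idl-mediating-map} yields a Scott continuous map $\bar{\beta} : \Idl{V}(B, {\ll_{\beta}}) \to D$ sending an ideal $I$ to $\bigsqcup_{b \in I} \beta(b)$. It remains to show that $\bar{\beta}$ and $\ddset_{\beta}{(-)}$ are mutually inverse. One composite is immediate: $\bar{\beta}(\ddset_{\beta} x) \equiv \bigsqcup \ddset_{\beta} x = x$, because $\beta$ is a small basis for $D$.

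For the other composite, I would apply Lemma \ref{inflation-deflation-criteria} to an arbitrary ideal $I$ of $(B, {\ll_{\beta}})$, verifying both its conditions. For the inflation direction \ref{inflation-criterion}, suppose $\beta(b) \below \beta(c)$ and $c \in I$; by roundedness of $I$ (Lemma \ref{ideals-are-rounded}) there exists $d \in I$ with $c \ll_{\beta} d$, i.e. $\beta(c) \ll \beta(d)$, so $\beta(b) \below \beta(c) \ll \beta(d)$ yields $b \ll_{\beta} d$, and because $I$ is a lower set with respect to ${\ll_{\beta}}$ we get $b \in I$. The deflation condition \ref{deflation-criterion} is simply roundedness of $I$ phrased in the language of $\beta$. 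Hence $\ddset_{\beta}\bar{\beta}(I) = I$, which closes the proof.

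I do not anticipate a serious obstacle beyond carefully unpacking that ``ideal in $(B, {\ll_{\beta}})$'' combined with roundedness gives exactly the lower-set and roundedness hypotheses of Lemma \ref{inflation-deflation-criteria}; the mild friction is that an ideal is a priori only a lower set with respect to ${\ll_{\beta}}$ rather than ${\below}$, and roundedness of ideals is what bridges this gap.
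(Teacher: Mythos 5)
Your proof is correct and follows essentially the same route as the paper's: well-definedness via nullary interpolation and \cref{semidirected-lower-set-criteria}, Scott continuity via \cref{ddsets-is-continuous}, the candidate inverse via \cref{Idl-mediating-map}, one composite immediate from the basis condition, and the other via \cref{inflation-deflation-criteria} with roundedness bridging the lower-set condition from \(\ll_{\beta}\) to \(\below\). No gaps.
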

\begin{proof}
  First of all, we should check that the map is well-defined, i.e.\ that
  \(\ddset_{\beta} x\) is an \(\pa*{B,{\ll_{\beta}}}\)-ideal. It is an inhabited
  subset by nullary interpolation in the basis and a semidirected lower set
  because the criteria of \cref{semidirected-lower-set-criteria} are satisfied
  when taking \({\prec}\) to be \({\ll_{\beta}}\).
  Secondly, the map \(\ddset_{\beta}{(-)}\) is Scott continuous by
  \cref{ddsets-is-continuous}.

  Now notice that the map \(\beta : \pa*{B,{\ll_\beta}} \to D\) is monotone and
  that the Scott continuous map it induces by \cref{Idl-mediating-map} is
  exactly the map \(\bigsqcup : \Idl{V}\pa*{B,{\ll_\beta}} \to D\) that takes an
  ideal \(I\) to the supremum of its associated directed family
  \(\beta \circ \fst : \pa*{\Sigma_{b : B}\pa*{b \in I}} \to D\).

  Since \(\beta\) is a basis for \(D\), we know that
  \(\bigsqcup{\ddset_{\beta} x} = x\) for every \(x : D\). So it only
  remains to show that \(\ddset_{\beta} \circ \bigsqcup\) is the identity on
  \(\Idl{V}\pa*{B,{\ll_{\beta}}}\), for which we will use
  \cref{inflation-deflation-criteria}.
  So suppose that \(I : \Idl{V}\pa*{B,{\ll_{\beta}}}\) is arbitrary. Then we
  only need to prove that
  \begin{enumerate}[(i)]
  \item\label{to-prove-lower} the conjunction of \(\beta(b) \below \beta(c)\)
    and \(c \in I\) implies \(b \in I\) for every \(b,c : B\);
  \item\label{to-prove-rounded} for every \(b \in I\), there exists \(c \in I\)
    such that \(\beta(b) \ll \beta(c)\).
  \end{enumerate}
  Note that \ref{to-prove-rounded} is just saying that \(I\) is a rounded ideal
  w.r.t.\ \({\ll_{\beta}}\), so this holds.
  For \ref{to-prove-lower}, suppose that \(\beta(b) \below \beta(c)\) and
  \(c \in I\). By roundedness of \(I\), there exists \(c' \in I\) such that
  \(c \ll_{\beta} c'\). But then \(\beta(b) \below \beta(c) \ll \beta(c')\), so
  that \(b \ll_{\beta} c'\) which implies that \(b \in I\), because ideals are
  lower sets.
\end{proof}%
\index{way-below relation|)}

\subsection{Ideal completion with respect to the order relation}

\begin{lemma}\label{below-abstract-basis}
  If \(\beta : B \to D\) is a small basis for a \(\V\)-dcpo \(D\), then
  \(\pa*{B,\below_{\beta}}\) is an abstract \(\V\)-basis where
  \(b \below_{\beta} c\) is defined as \(\beta(b) \below \beta(c)\).
  \nomenclature[sqsubseteqbeta]{$b \below_{\beta} c$}{order relation restricted to a basis \(\beta\)}
\end{lemma}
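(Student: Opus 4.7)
The plan is short because $\below_\beta$ inherits a great deal of structure from the underlying partial order on $D$, and in particular it is reflexive, which makes the interpolation properties trivial.

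First I would observe that $\below_\beta$ must be $\V$-valued for the claim to even make sense, and that this is exactly where the assumption of a \emph{small} basis is used: by Proposition~\ref{locally-small-if-small-basis}, the dcpo $D$ is locally small, so $\beta(b) \below \beta(c)$ has a $\V$-small equivalent which we take as the definition of $b \below_\beta c$.

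Next I would check the remaining closure properties. The relation $\below_\beta$ is proposition-valued, transitive, and reflexive, because $\below$ is. At this point we are in the situation of Lemma~\ref{abstract-basis-if-reflexive}: a proposition-valued transitive reflexive relation on a type in $\V$ is automatically an abstract $\V$-basis (reflexivity makes nullary interpolation immediate by taking the interpolant to be the element itself, and makes binary interpolation immediate by taking the interpolant to be the common upper bound $b$).

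There is no real obstacle here; the only conceptual point to register is that, unlike in Lemma~\ref{way-below-abstract-basis} where one had to invoke interpolation in the basis for $\ll$ (Lemmas~\ref{nullary-interpolation-basis} and \ref{binary-interpolation-basis}), for $\below_\beta$ we get the interpolation properties for free from reflexivity of $\below$. This anticipates the discussion in Section~\ref{sec:ideal-completions-of-small-bases} to come: $\Idl{V}(B,\below_\beta)$ will be algebraic (by Theorem~\ref{Idl-has-small-compact-basis} applied to the reflexive basis $(B,\below_\beta)$), in contrast with $\Idl{V}(B,\ll_\beta)$, which is merely continuous and isomorphic to $D$.
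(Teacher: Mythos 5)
Your proof is correct and matches the paper's approach exactly: the paper also invokes \cref{abstract-basis-if-reflexive} for the main claim, and handles the $\V$-valuedness issue in a subsequent remark via \cref{locally-small-if-small-basis}, just as you do. The extra forward-looking commentary at the end is accurate but inessential.
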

\begin{proof}
  The relation \({\below_{\beta}}\) is reflexive, so this follows from
  \cref{abstract-basis-if-reflexive}.
\end{proof}

\begin{remark}
  The definition of an abstract \(\V\)-basis requires the relation on it to be
  \(\V\)-valued. Hence, for the lemma to make sense we appeal to
  \cref{locally-small-if-small-basis} to know that \(D\) is locally small which
  tells us that we can substitute \(\beta(b) \below \beta(c)\) by an equivalent
  type in \(\V\).
\end{remark}

\begin{theorem}\label{Idl-iso-algebraic}%
  \index{retract!Scott continuous}\index{isomorphism (of dcpos)}%
  The map \(\ddset_{\beta}{(-)} : D \to \Idl{V}\pa*{B,{\below_{\beta}}}\) is the
  embedding in an embedding-projection pair.
  In particular, \(D\) is a Scott continuous retract of the algebraic dcpo
  \(\Idl{V}\pa*{B,\below_{\beta}}\) that has a small compact basis.
  Moreover, if \(\beta\) is a small \emph{compact} basis, then the map is an
  isomorphism.
\end{theorem}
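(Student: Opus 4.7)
The plan is to take the proposed projection to be the canonical supremum map \(\pi \colonequiv \bigsqcup : \Idl{V}(B,\below_\beta) \to D\), i.e.\ the map sending an ideal \(I\) to the supremum of its associated directed family \({\beta \circ \fst} : \pa*{\Sigma_{b : B}\pa*{b \in I}} \to D\), and to verify that \(\pa*{\ddset_\beta(-),\pi}\) satisfies the axioms of an embedding-projection pair.

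First, I would check that the embedding is well-typed, namely that \(\ddset_\beta x\) is a \(\below_\beta\)-ideal for every \(x : D\). It is inhabited by nullary interpolation in the basis (\cref{nullary-interpolation-basis}); it is a lower set with respect to \(\below_\beta\) because if \(\beta(b) \below \beta(c)\) and \(\beta(c) \ll x\), then \(\beta(b) \ll x\) by \cref{way-below-properties}; and it is semidirected with respect to \(\below_\beta\) by binary interpolation in the basis (\cref{binary-interpolation-basis}) together with the fact that \(\ll\) implies \(\below\) (so that \cref{semidirected-lower-set-criteria}\ref{semidirected-criterion} applies with \({\prec} \colonequiv {\below_\beta}\)). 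Scott continuity of \(\ddset_\beta(-)\) is then \cref{ddsets-is-continuous}, and Scott continuity of \(\pi\) follows from \cref{Idl-mediating-map} applied to the monotone map \(\beta : \pa*{B,\below_\beta} \to D\).

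Next, the section identity \(\pi\pa*{\ddset_\beta x} = x\) is literally the defining property of \(\beta\) being a small basis. For the deflation \(\ddset_\beta \pi(I) \subseteq I\), I would invoke \cref{inflation-deflation-criteria}\ref{inflation-criterion}: the required hypothesis — that \(I\) is closed under \(\below\) in the basis — is precisely what it means for \(I\) to be a \(\below_\beta\)-ideal, so this is immediate. This establishes the embedding-projection pair, and the ``in particular'' claim then follows because \(\Idl{V}\pa*{B,\below_\beta}\) is algebraic with a small compact basis by \cref{Idl-has-small-compact-basis} (since \(\below_\beta\) is reflexive).

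Finally, for the isomorphism claim in the compact case, I would strengthen the deflation above to an equality by combining it with \cref{inflation-deflation-criteria}\ref{deflation-criterion}. The required roundedness hypothesis — for every \(b \in I\) there exists \(c \in I\) with \(\beta(b) \ll \beta(c)\) — is satisfied by simply taking \(c \colonequiv b\), since compactness of \(\beta(b)\) gives \(\beta(b) \ll \beta(b)\). The main subtlety to watch out for is the mismatch between the way-below relation (used in the definition of \(\ddset_\beta x\)) and the order relation (used to define \(\below_\beta\)); the key observation that makes the proof go through is that \cref{inflation-deflation-criteria} decouples these two relations precisely in the manner needed, so there are no further obstacles beyond the routine checks above.
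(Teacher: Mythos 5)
Your proposal is correct and takes essentially the same route as the paper: it identifies the supremum map induced by \cref{Idl-mediating-map} as the projection, verifies well-typedness via the interpolation lemmas and \cref{semidirected-lower-set-criteria}, establishes the section and deflation equations via \cref{inflation-deflation-criteria}, and upgrades to an isomorphism in the compact case by observing that compactness supplies the roundedness hypothesis trivially (with \(c \colonequiv b\)). The only cosmetic difference is that you verify the lower-set property of \(\ddset_\beta x\) directly from \cref{way-below-properties} rather than routing through \cref{semidirected-lower-set-criteria}\ref{lowerset-criterion}, which is the same argument unfolded.
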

\begin{proof}
  First of all, we should check that the map is well-defined, i.e.\ that
  \(\ddset_{\beta} x\) is an \(\pa*{B,{\below_{\beta}}}\)-ideal. It is an
  inhabited subset by nullary interpolation in the basis and a semidirected
  lower set because the criteria of \cref{semidirected-lower-set-criteria} are
  satisfied when taking \({\prec}\) to be \({\below_{\beta}}\).
  Secondly, the map \(\ddset_{\beta}{(-)}\) is Scott continuous by
  \cref{ddsets-is-continuous}.

  Now notice that the map \(\beta : \pa*{B,{\below_\beta}} \to D\) is monotone
  and that the continuous map it induces by \cref{Idl-mediating-map} is
  exactly the map \(\bigsqcup : \Idl{V}\pa*{B,{\below_\beta}} \to D\) that takes
  an ideal \(I\) to the supremum of its associated directed family
  \({\beta \circ \fst : {\pa*{\Sigma_{b : B}\pa*{b \in I}} \to D}}\).

  Since \(\beta\) is a basis for \(D\), we know that
  \(\bigsqcup{\ddset_{\beta} x} = x\) for every \(x : D\). So it only remains to
  show that \(\ddset_{\beta} \circ \bigsqcup\) is a deflation, for which we will
  use \cref{inflation-deflation-criteria}.
  So suppose that \(I : \Idl{V}\pa*{B,{\below_{\beta}}}\) is arbitrary. Then we
  only need to prove that the conjunction of \(\beta(b) \below \beta(c)\) and
  \(c \in I\) implies \(b \in I\), but this holds, because \(I\) is a lower set
  with respect to \({\below_{\beta}}\).

  Finally, assume that \(\beta\) is a small compact basis. We show that
  \(\ddset_{\beta} \circ \bigsqcup\) is also inflationary in this case. So let
  \(I\) be an arbitrary ideal. By \cref{inflation-deflation-criteria} it is
  enough to show that for every \(b \in I\), there exists \(c \in I\) such that
  \(\beta(b) \ll \beta(c)\). But by assumption, \(\beta(b)\) is compact, so we
  can simply take \(c\) to be \(b\).
\end{proof}

Combining Theorems~\ref{Idl-has-small-basis}~and~\ref{Idl-iso-continuous} and
Theorems~\ref{Idl-iso-algebraic}~and~\ref{Idl-has-small-compact-basis}, we
obtain the following result.
\begin{corollary}\hfill%
  \index{dcpo!isomorphism}%
  \index{basis}\index{basis!compact}\index{reflexivity}%
  \begin{enumerate}[(i)]
  \item A \(\V\)-dcpo \(D\) has a small basis if and only if it is isomorphic
    to \(\Idl{V}(B,{\prec})\) for an abstract basis \((B,{\prec})\).
  \item A \(\V\)-dcpo \(D\) has a small compact basis if and only if it is
    isomorphic to \(\Idl{V}(B,{\prec})\) for an abstract basis \((B,{\prec})\)
    where \({\prec}\) is reflexive.
  \end{enumerate}
  In particular, every \(\V\)-dcpo with a small basis is isomorphic to one whose
  order takes values in \(\V\) and whose carrier lives in \(\V^+\).
\end{corollary}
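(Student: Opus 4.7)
The plan is to derive this corollary by combining four previously established theorems, essentially observing that each direction of each equivalence has already been proven.

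For the forward direction of (i), I would take a $\V$-dcpo $D$ with a small basis $\beta : B \to D$ and apply \cref{way-below-abstract-basis} to see that $(B, {\ll_{\beta}})$ forms an abstract $\V$-basis. Then \cref{Idl-iso-continuous} gives us the isomorphism $\ddset_{\beta}(-) : D \to \Idl{V}(B, {\ll_{\beta}})$ directly. For the backward direction, suppose we have an isomorphism between $D$ and $\Idl{V}(B, {\prec})$ for some abstract basis $(B, {\prec})$. By \cref{Idl-has-small-basis}, the ideal completion has a small basis given by the principal ideals. Since isomorphisms of dcpos are Scott continuous retracts in both directions (the inverse serves as both section and retraction), \cref{small-basis-closed-under-continuous-retracts} transports the small basis to $D$.

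For (ii), I would follow the same pattern: given a small compact basis $\beta : B \to D$, \cref{below-abstract-basis} shows $(B, {\below_{\beta}})$ is an abstract $\V$-basis whose relation is reflexive, and \cref{Idl-iso-algebraic} (second part) provides the isomorphism $\ddset_{\beta}(-) : D \to \Idl{V}(B, {\below_{\beta}})$. Conversely, given an isomorphism $D \cong \Idl{V}(B, {\prec})$ with ${\prec}$ reflexive, \cref{Idl-has-small-compact-basis} equips the ideal completion with a small compact basis (the principal ideals, which are compact by \cref{principal-ideals-are-compact}), and this transfers to $D$ via the retract argument. Here I should double-check that \cref{small-basis-closed-under-continuous-retracts} also preserves compactness of the basis elements, which follows because compact elements are preserved by sections of continuous retracts in the embedding-projection setting — but for a full isomorphism this is immediate since the retraction is also an embedding.

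For the final ``in particular'' clause, I would just observe that having shown $D \cong \Idl{V}(B, {\prec})$ for an abstract $\V$-basis, the remark immediately following \cref{Idl-is-dcpo} records that $\Idl{V}(B, {\prec}) : \DCPO{V}{V^+}{V}$, so the isomorphic copy has carrier in $\V^+$ and order taking values in $\V$.

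The only subtle point — and thus the ``main obstacle'' — is confirming that ``has a small (compact) basis'' really is invariant under dcpo isomorphism, but this reduces to noting that an isomorphism is in particular a Scott continuous retract (in fact both ways), so \cref{small-basis-closed-under-continuous-retracts} applies directly. Everything else is bookkeeping: the hard work was done in the two ``iso'' theorems and the two ``has-small-(compact)-basis'' theorems for $\Idl{V}$, and this corollary is simply their two-sided packaging.
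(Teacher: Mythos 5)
Your proposal is correct and takes essentially the same approach as the paper, which simply states that the corollary follows by combining \cref{Idl-has-small-basis}, \cref{Idl-iso-continuous}, \cref{Idl-iso-algebraic} and \cref{Idl-has-small-compact-basis}; your write-up fills in the two missing micro-steps (abstract bases come from \cref{way-below-abstract-basis} and \cref{below-abstract-basis}, and the transfer along an isomorphism via \cref{small-basis-closed-under-continuous-retracts} together with \cref{embedding-preserves-and-reflects-way-below} and \cref{small-and-compact-basis} for the compact case) in exactly the way the paper intends.
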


\section{Structurally continuous and algebraic bilimits}%
\label{sec:structurally-continuous-and-algebraic-bilimits}%
\index{bilimit|(}

We show that bilimits are closed under structural
continuity/algebraicity.
For the reminder of this section, fix a directed diagram of \(\V\)-dcpos
\((D_i)_{i : I}\) with embedding-projection pairs
\(\pa*{\varepsilon_{i,j},\pi_{i,j}}_{i \below j \text{ in } I}\) between them,
as in \cref{sec:bilimits}.

Now suppose that for every \(i : I\), we have \(\alpha_i : J_i \to D_i\) with
each \(J_i : \V\). Then we define \(J_\infty \colonequiv \Sigma_{i : I}J_i\) and
\(\alpha_\infty : J_\infty \to D_\infty\) by
\((i,j) \mapsto \varepsilon_{i,\infty}(\alpha_i(j))\), where
\(\varepsilon_{i,\infty}\) is as in \cref{epsilon-infty}.

\begin{lemma}\label{infty-family-directed-sup}
  If every \(\alpha_i\) is directed and we have \(\sigma : D_\infty\) such that
  \(\alpha_i\) approximates \(\sigma_i\), then \(\alpha_\infty\) is directed and
  approximates \(\sigma\).
\end{lemma}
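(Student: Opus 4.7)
The plan is to establish three facts: that $\alpha_\infty$ is directed, that its supremum is $\sigma$, and that each of its components is way below $\sigma$.

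For directedness of $\alpha_\infty$, I would first observe that the domain $J_\infty = \Sigma_{i : I} J_i$ is inhabited because $(I,\below)$ is directed and each $\alpha_i$ is directed, hence each $J_i$ is inhabited. The semidirectedness is the main obstacle: given $(i_1,j_1),(i_2,j_2) : J_\infty$, I would use directedness of $(I,\below)$ to obtain $k : I$ with $i_1,i_2 \below k$. Then by \cref{embedding-preserves-and-reflects-way-below} the embedding $\varepsilon_{i_1,k}$ preserves the way-below relation, so $\varepsilon_{i_1,k}(\alpha_{i_1}(j_1)) \ll \varepsilon_{i_1,k}(\sigma_{i_1})$. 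Now $\sigma_{i_1} = \pi_{i_1,k}(\sigma_k)$ (as $\sigma$ lies in $D_\infty$), and $\varepsilon_{i_1,k} \circ \pi_{i_1,k}$ is a deflation, so $\varepsilon_{i_1,k}(\alpha_{i_1}(j_1)) \ll \sigma_k = \bigsqcup \alpha_k$; an analogous inequality holds for $(i_2,j_2)$. Unfolding the definition of the way-below relation yields indices $j_1',j_2' : J_k$ with $\varepsilon_{i_1,k}(\alpha_{i_1}(j_1)) \below \alpha_k(j_1')$ and $\varepsilon_{i_2,k}(\alpha_{i_2}(j_2)) \below \alpha_k(j_2')$. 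Directedness of $\alpha_k$ then supplies $j : J_k$ above both, and the commutation $\varepsilon_{i,\infty} \sim \varepsilon_{k,\infty} \circ \varepsilon_{i,k}$ from \cref{epsilon-pi-infty-commutes} gives $\alpha_\infty(i_1,j_1),\alpha_\infty(i_2,j_2) \below \alpha_\infty(k,j)$, as required.

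For the supremum being $\sigma$, I would argue by antisymmetry. On one hand, each $\alpha_\infty(i,j) = \varepsilon_{i,\infty}(\alpha_i(j)) \below \varepsilon_{i,\infty}(\sigma_i) \below \sigma$, where the final inequality uses that $\varepsilon_{i,\infty} \circ \pi_{i,\infty}$ is deflationary by \cref{epsilon-pi-infty-ep-pair}. On the other hand, if $\tau$ is any upper bound of $\alpha_\infty$, then for each $i : I$, Scott continuity of $\varepsilon_{i,\infty}$ and the assumption $\sigma_i = \bigsqcup \alpha_i$ yield $\varepsilon_{i,\infty}(\sigma_i) = \bigsqcup_{j : J_i} \varepsilon_{i,\infty}(\alpha_i(j)) = \bigsqcup_{j : J_i}\alpha_\infty(i,j) \below \tau$; combining this with \cref{sigma-sup-of-epsilon-pis}, which gives $\sigma = \bigsqcup_{i : I} \varepsilon_{i,\infty}(\sigma_i)$, we conclude $\sigma \below \tau$.

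Finally, to show $\alpha_\infty(i,j) \ll \sigma$ for every $(i,j) : J_\infty$, I would apply \cref{embedding-preserves-and-reflects-way-below} to the hypothesis $\alpha_i(j) \ll \sigma_i$ to obtain $\varepsilon_{i,\infty}(\alpha_i(j)) \ll \varepsilon_{i,\infty}(\sigma_i)$, and then combine this with the deflation inequality $\varepsilon_{i,\infty}(\sigma_i) \below \sigma$ via \cref{way-below-properties}\ref{below-way-below-way-below} to conclude $\alpha_\infty(i,j) \ll \sigma$. The main obstacle of the proof is the semidirectedness step, where one has to weave together directedness of $I$, directedness of $\alpha_k$, preservation of the way-below relation by embeddings, and the commutation identities among the $\varepsilon_{i,k}$.
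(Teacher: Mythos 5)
Your proof is correct and uses the same key ingredients as the paper's: preservation of the way-below relation by embeddings (\cref{embedding-preserves-and-reflects-way-below}), the deflation property of \(\varepsilon \circ \pi\), the commutation identities of \cref{epsilon-pi-infty-commutes}, and \cref{sigma-sup-of-epsilon-pis}. The only difference is presentational: for directedness, the paper reduces to showing that \(i \mapsto \varepsilon_{i,\infty} \circ \alpha_i\) is directed with respect to cofinality in the ind-completion and then invokes \cref{ind-completion-is-directed-complete}, whereas you unfold that abstraction and argue directly on pairs \((i_1,j_1),(i_2,j_2) : J_\infty\) by pushing both into a common \(D_k\); and for the supremum, you argue by antisymmetry where the paper gives a chain of equalities using Scott continuity of \(\varepsilon_{i,\infty}\) — these are the same computation.
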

\begin{proof}
  Observe that \(\alpha_\infty\) is equal to the supremum, if it exists, of the
  directed families \(\pa*{\varepsilon_{i,\infty} \circ \alpha_i}_{i : I}\) in
  the ind-completion of \(D_\infty\), cf.\ the proof of
  \cref{ind-completion-is-directed-complete}.
  Hence, for directedness of \(\alpha_\infty\), it suffices to prove that the
  family \(i \mapsto \varepsilon_{i,\infty} \circ \alpha_i\) is directed with
  respect to cofinality.
  The index type \(I\) is inhabited, because we are working with a directed
  diagram of dcpos. For semidirectedness, we will first prove that if
  \(i \below i'\), then \(\varepsilon_{i,\infty} \circ \alpha_i\) is cofinal in
  \(\varepsilon_{i',\infty} \circ \alpha_{i'}\).

  So suppose that \(i \below i'\) and \(j : J_i\). As \(\alpha_{i}\)
  approximates \(\sigma_{i}\), we have \(\alpha_{i}(j) \ll \sigma_i\).
  Because \(\varepsilon_{i,i'}\) is an embedding, it preserves the way-below
  relation (\cref{embedding-preserves-and-reflects-way-below}), so that we get
  \(\varepsilon_{i,i'}(\alpha_i(j)) \ll \varepsilon_{i,i'}(\sigma_i) \below
  \sigma_{i'} = \bigsqcup \alpha_{i'}\). Hence, there exists \(j' : J_{i'}\)
  with \(\varepsilon_{i,i'}(\alpha_i(j)) \below \alpha_{i'}(j')\) which yields
  \(\varepsilon_{i,\infty}(\alpha_i(j)) =
  \varepsilon_{i',\infty}\pa*{\varepsilon_{i,i'}(\alpha_i(j))} \below
  \varepsilon_{i',\infty}(\alpha_{i'}(j'))\), completing the proof that
  \(\varepsilon_{i,\infty} \circ \alpha_i\) is cofinal in
  \(\varepsilon_{i',\infty} \circ \alpha_{i'}\).

  Now to prove that the family
  \(i \mapsto \varepsilon_{i,\infty} \circ \alpha_i\) is semidirected with
  respect to cofinality, suppose we have \(i_1,i_2 : I\). Since \(I\) is a
  directed preorder, there exists \(i : I\) such that \(i_1,i_2 \below i\). But
  then \(\varepsilon_{i_1,\infty} \circ \alpha_{i_1}\) and
  \(\varepsilon_{i_2,\infty} \circ \alpha_{i_2}\) are both cofinal in
  \(\varepsilon_{i,\infty} \circ \alpha_i\) by the above.

  Thus, \(\alpha_\infty\) is directed. To see that its supremum is \(\sigma\),
  observe that
  \begin{align*}
    \sigma &= \textstyle\bigsqcup_{i : I} \varepsilon_{i,\infty}(\sigma_i)
    &&\text{(by \cref{sigma-sup-of-epsilon-pis})} \\
    &= \textstyle\bigsqcup_{i : I} \varepsilon_{i,\infty}\pa*{\textstyle\bigsqcup \alpha_i}
    &&\text{(since \(\alpha_i\) approximates \(\sigma_i\))} \\
    &= \textstyle\bigsqcup_{i : I}\bigsqcup \varepsilon_{i,\infty} \circ \alpha_i
    &&\text{(by Scott continuity of \(\varepsilon_{i,\infty}\))} \\
    &= \textstyle\bigsqcup_{(i,j) : J_\infty} \alpha_\infty(i,j),
  \end{align*}
  as desired.

  Finally, we wish to show that \(\alpha_\infty(i,j) \ll \sigma\) for every
  \((i,j) : J_\infty\). But \(\varepsilon_{i,\infty}\) is an embedding and
  therefore preserves the way-below relation while \(\alpha_i(j)\)
  approximates~\(\sigma_i\), so we get
  \(\alpha_\infty(i,j) \equiv \varepsilon_{i,\infty}(\alpha_i(j)) \ll
  \varepsilon_{i,\infty}(\sigma_i) \below \sigma\) where the final inequality
  holds because \(\varepsilon_{i,\infty} \circ \pi_{i,\infty}\) is a deflation.
\end{proof}

\begin{lemma}\label{infty-family-compact}
  If \(\alpha_i(j)\) is compact for every \(i : I\) and \(j : J_i\), then all
  the values of \(\alpha_\infty\) are compact too.
\end{lemma}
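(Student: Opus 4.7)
The plan is a direct application of two earlier results. First recall that, by definition, $\alpha_\infty(i,j) \equiv \varepsilon_{i,\infty}(\alpha_i(j))$. So it suffices to show that $\varepsilon_{i,\infty}(\alpha_i(j))$ is compact in $D_\infty$ whenever $\alpha_i(j)$ is compact in $D_i$.

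By Theorem \ref{epsilon-pi-infty-ep-pair}, the pair $(\varepsilon_{i,\infty},\pi_{i,\infty})$ is an embedding-projection pair from $D_i$ to $D_\infty$ for every $i : I$. Hence I can invoke Lemma \ref{embedding-preserves-and-reflects-way-below}, whose ``in particular'' clause states precisely that $x$ is compact if and only if $\varepsilon(x)$ is, where $\varepsilon$ is the embedding part of an embedding-projection pair. Applied to our situation with $x \colonequiv \alpha_i(j)$, this gives compactness of $\varepsilon_{i,\infty}(\alpha_i(j)) \equiv \alpha_\infty(i,j)$.

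There is no real obstacle here; the work has already been done in establishing that the maps $\varepsilon_{i,\infty}$ are embeddings (Theorem \ref{epsilon-pi-infty-ep-pair}) and that embeddings preserve and reflect the way-below relation (Lemma \ref{embedding-preserves-and-reflects-way-below}). The only minor thing to keep in mind is that we want preservation of compactness in the forward direction only, which is the easy half of the stated equivalence.
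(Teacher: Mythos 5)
Your proof is correct and follows exactly the same route as the paper: unfold $\alpha_\infty(i,j) \equiv \varepsilon_{i,\infty}(\alpha_i(j))$ and invoke the fact that the embedding $\varepsilon_{i,\infty}$ from \cref{epsilon-pi-infty-ep-pair} preserves compactness via \cref{embedding-preserves-and-reflects-way-below}. Nothing to add.
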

\begin{proof}
  Let \((i,j) : J_\infty\) be arbitrary. Since \(\varepsilon_{i,\infty}\) is an
  embedding it preserves compact elements, so
  \(\alpha_\infty(i,j) \equiv \varepsilon_{i,\infty}(\alpha_i(j))\) is compact.
\end{proof}

\begin{theorem}\label{structurally-continuous-bilimit}%
  \index{algebraicity!structural!of the bilimit}%
  \index{continuity!structural!of the bilimit}%
  If each \(D_i\) is structurally continuous, then so is \(D_\infty\).
  Furthermore, if each \(D_i\) is structurally algebraic, then so is
  \(D_\infty\).
\end{theorem}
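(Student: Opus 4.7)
The plan is to apply the two lemmas just established, which do essentially all of the work. For structural continuity, suppose each $D_i$ is structurally continuous, so that for each $i : I$ and each $x : D_i$ we have a specified approximating family $\alpha_x^{(i)} : I_x^{(i)} \to D_i$ with $I_x^{(i)} : \V$. Given an arbitrary $\sigma : D_\infty$, I would set $\alpha_i \colonequiv \alpha_{\sigma_i}^{(i)}$ for each $i : I$ and then form $\alpha_\infty : J_\infty \to D_\infty$ with $J_\infty \colonequiv \Sigma_{i : I} I_{\sigma_i}^{(i)}$ by the recipe preceding Lemma~\ref{infty-family-directed-sup}. Since each $\alpha_i$ is directed and approximates $\sigma_i$ by the structural continuity of $D_i$, that lemma immediately yields that $\alpha_\infty$ is a directed family in $D_\infty$ whose supremum is $\sigma$ and each of whose values is way below $\sigma$. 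This assigns to every $\sigma : D_\infty$ a specified approximating family, establishing structural continuity.

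For the universe bookkeeping, I need $J_\infty : \V$. This holds because $I : \V$ (the indexing preorder for the diagram) and each $I_{\sigma_i}^{(i)} : \V$ (by structural continuity of each $D_i$), so the dependent sum lies in $\V$.

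For the second claim, I would argue identically but use, for each $i : I$ and each $x : D_i$, the specified compact family $\kappa_x^{(i)} : I_x^{(i)} \to D_i$ furnished by the structural algebraicity of $D_i$. Given $\sigma : D_\infty$, set $\alpha_i \colonequiv \kappa_{\sigma_i}^{(i)}$ and form $\alpha_\infty$ exactly as before. Since every compact family is in particular an approximating family (each $\kappa_{\sigma_i}^{(i)}(j)$ is compact and hence way below $\sigma_i$, and the supremum is $\sigma_i$), Lemma~\ref{infty-family-directed-sup} again gives directedness and the fact that $\alpha_\infty$ approximates $\sigma$, while Lemma~\ref{infty-family-compact} tells us that every value of $\alpha_\infty$ is compact in $D_\infty$. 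Hence $\alpha_\infty$ is a compact family for $\sigma$, proving structural algebraicity.

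There is no real obstacle here: both lemmas have been set up precisely so that the theorem becomes an immediate consequence. The only point to be careful about is choosing the notation consistently so that the reader sees that ``approximating family of $\sigma_i$'' in the first case and ``compact family of $\sigma_i$'' in the second case both fit the hypotheses of Lemma~\ref{infty-family-directed-sup}, and that in the algebraic case the additional hypothesis of Lemma~\ref{infty-family-compact} is automatically satisfied.
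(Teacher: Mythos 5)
Your proof is correct and takes essentially the same approach as the paper: fix \(\sigma : D_\infty\), use structural continuity (resp.\ algebraicity) of each \(D_i\) to obtain approximating (resp.\ compact) families for the components \(\sigma_i\), and then apply \cref{infty-family-directed-sup} (and \cref{infty-family-compact} in the algebraic case) to conclude. The extra remark on universe bookkeeping is a welcome addition but does not change the substance of the argument.
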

\begin{proof}
  Let \(\sigma : D_\infty\) be arbitrary. By structural continuity of each
  \(D_i\), we have a directed family \(\alpha_i : J_i \to D_i\) approximating
  \(\sigma_i\). Hence, by \cref{infty-family-directed-sup}, the family
  \(\alpha_\infty\) is directed and approximates \(\sigma\), proving the
  structural continuity of \(D_\infty\).
  Now if each \(D_i\) is structurally algebraic, then \(D_\infty\) is
  structurally algebraic by \cref{infty-family-compact} and the above.
\end{proof}

Note that we do not expect to be able to prove that \(D_\infty\) is continuous
if each \(D_i\) is, because it would require an instance of the axiom of choice
to get the continuity structures on each \(D_i\) and without those we have
nothing to operate on.

\begin{theorem}\label{bilimit-has-small-basis}%
  \index{basis!for the bilimit}%
  \index{basis!compact!for the bilimit}%
  If each \(D_i\) has a small basis \(\beta_i : B_i \to D_i\), then the map
  \(\beta_\infty\) defined by
  \(\beta_\infty : \pa*{B_\infty \colonequiv \pa*{\Sigma_{i : I}B_i}}
  \xrightarrow{(i,b) \mapsto \varepsilon_{i,\infty}(\beta_i(b))} D_\infty\) is a
  small basis for \(D_\infty\).
  Furthermore, if each \(\beta_i\) is a small compact basis, then
  \(\beta_\infty\) is a small compact basis too.
\end{theorem}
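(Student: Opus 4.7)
The plan is to reduce both claims to results already proved in the excerpt, most notably Lemma~\ref{infty-family-directed-sup} together with the ``subbasis'' lemma (Lemma~\ref{subbasis-lemma}) and the preservation-of-way-below property of embeddings (Lemma~\ref{embedding-preserves-and-reflects-way-below}). First observe that $B_\infty \colonequiv \Sigma_{i : I} B_i$ lives in $\V$ since $I : \V$ and each $B_i : \V$, so the indexing type has the correct size.

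The main work is to show that for every $\sigma : D_\infty$, the family $\ddset_{\beta_\infty} \sigma$ is directed with supremum $\sigma$. I would first set $\alpha_i \colonequiv \ddset_{\beta_i} \sigma_i$: by hypothesis $\beta_i$ is a small basis, so each $\alpha_i$ is directed and approximates $\sigma_i$. By Lemma~\ref{infty-family-directed-sup}, the resulting family $\alpha_\infty : \pa*{\Sigma_{i : I} \ddset_{\beta_i} \sigma_i} \to D_\infty$, sending $(i,(b,p))$ to $\varepsilon_{i,\infty}(\beta_i(b))$, is directed and has supremum $\sigma$. Next, I would construct a factorisation $\sigma_{\mathrm{map}} : \pa*{\Sigma_{i : I} \ddset_{\beta_i}\sigma_i} \to \ddset_{\beta_\infty}\sigma$ sending $(i,(b,p))$ to $((i,b),q)$, where $q$ witnesses $\varepsilon_{i,\infty}(\beta_i(b)) \ll \sigma$. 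This $q$ is obtained from $p : \beta_i(b) \ll \sigma_i$ as follows: since $\sigma_i = \pi_{i,\infty}(\sigma)$, Lemma~\ref{embedding-preserves-and-reflects-way-below} gives $\varepsilon_{i,\infty}(\beta_i(b)) \ll \varepsilon_{i,\infty}(\pi_{i,\infty}(\sigma))$, and the latter is $\below \sigma$ because $\varepsilon_{i,\infty} \circ \pi_{i,\infty}$ is a deflation (Theorem~\ref{epsilon-pi-infty-ep-pair}); combining with Lemma~\ref{way-below-properties}\ref{below-way-below-way-below} yields the desired way-below relation. Since $\alpha_\infty$ factors through $\ddset_{\beta_\infty} \sigma$ via $\beta_\infty \circ \fst$ and is already known to be directed with supremum $\sigma$, Lemma~\ref{subbasis-lemma} gives that $\ddset_{\beta_\infty}\sigma$ itself is directed with supremum $\sigma$.

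For the second clause of a small basis, smallness of $\beta_\infty(i,b) \ll \sigma$, I would proceed indirectly: each $D_i$ is locally small by Proposition~\ref{locally-small-if-small-basis}, so by Proposition~\ref{locally-small-bilimit} the bilimit $D_\infty$ is locally small; each $D_i$ is structurally continuous by Lemma~\ref{structural-continuity-if-small-basis}, so by Theorem~\ref{structurally-continuous-bilimit} so is $D_\infty$; then Proposition~\ref{cont-loc-small-iff-way-below-small} immediately gives that the way-below relation on $D_\infty$ has $\V$-small values. This completes the proof that $\beta_\infty$ is a small basis.

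For the compact case, assume each $\beta_i$ is a small compact basis. By Lemma~\ref{small-and-compact-basis} it suffices to combine the small-basis conclusion above with the fact that each $\beta_\infty(i,b) = \varepsilon_{i,\infty}(\beta_i(b))$ is compact. Compactness of $\beta_i(b)$ transfers along $\varepsilon_{i,\infty}$ because embeddings preserve (and reflect) the way-below relation, which is exactly the ``in particular'' part of Lemma~\ref{embedding-preserves-and-reflects-way-below}. The main delicate point of the whole argument is the first paragraph: one must resist the temptation to index $\ddset_{\beta_\infty} \sigma$ itself directly and instead pass through $\alpha_\infty$, since it is the cofinality/smallness bridge provided by Lemma~\ref{subbasis-lemma} that rescues the computation without unfolding what $\varepsilon_{i,\infty}(\beta_i(b)) \ll \sigma$ means at the bilimit.
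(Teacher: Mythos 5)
Your proposal is correct and takes essentially the same route as the paper's proof: you use the preservation of the way-below relation by the embeddings $\varepsilon_{i,\infty}$ together with the deflation property to build the map from $\Sigma_{i:I}\ddset_{\beta_i}\sigma_i$ into $\ddset_{\beta_\infty}\sigma$, identify its composite with $\alpha_\infty$, reduce to \cref{infty-family-directed-sup} via \cref{subbasis-lemma}, and handle the smallness condition through \cref{locally-small-if-small-basis}, \cref{locally-small-bilimit}, \cref{structurally-continuous-bilimit} and \cref{cont-loc-small-iff-way-below-small}, exactly as the paper does. The compact case via \cref{small-and-compact-basis} and preservation of compactness under embeddings also matches.
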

\begin{proof}
  First of all, we must show that \(\beta_\infty(i,b) \ll \sigma\) is small for
  every \(i : I\), \(b : B_i\) and \(\sigma : D_\infty\). We claim that this is
  the case as the way-below relation on \(D_\infty\) has small values. By
  \cref{cont-loc-small-iff-way-below-small} and
  \cref{structurally-continuous-bilimit}, it suffices to prove that \(D_\infty\)
  is locally small. But this holds by \cref{locally-small-bilimit} as each
  \(D_i\) is locally small by \cref{locally-small-if-small-basis}.

  It remains to prove that, for an arbitrary element \(\sigma : D_\infty\), the
  family \(\ddset_{\beta_\infty} \sigma\) given by
  \(\pa*{\Sigma_{(i,b) : B_\infty}\beta_\infty(i,b) \ll \sigma}
  \xrightarrow{\beta_\infty \circ \fst} D_\infty\) is directed with supremum
  \(\sigma\).
  Note that for every \(i : I\) and \(b : B_i\), we have that
  \(\beta_i(b) \ll \sigma_i\) implies
  \[
    \beta_\infty(i,b) \equiv \varepsilon_{i,\infty}(\beta_i(b)) \ll
    \varepsilon_{i,\infty}(\sigma_i) \below \sigma,
  \]
  since \cref{embedding-preserves-and-reflects-way-below} tells us that the
  embedding \(\varepsilon_{i,\infty}\) preserves the way-below relation.
  Hence, the identity induces a well-defined map
  \[
    \iota : \pa*{\Sigma_{i : I}\Sigma_{b :B_i}\beta_i(b) \ll \sigma_i} \to
    \pa*{\Sigma_{(i,b) : B_\infty}\beta_\infty(i,b) \ll \sigma}.
  \]
  \cref{subbasis-lemma} now tells us that we only need to show that
  \(\ddset_{\beta_\infty}\sigma \circ \iota\) is directed and has supremum
  \(\sigma\).
  But if we write
  \(\alpha_i : \pa*{\Sigma_{b : B_i}\beta_i(b) \ll \sigma_i} \to D_i\) for the
  map \(b \mapsto \beta_i(b)\), then we see that
  \(\ddset_{\beta_\infty} \sigma \circ \iota\) is given by \(\alpha_\infty\), as
  defined at the start of this section. But then \(\alpha_\infty\) is indeed
  seen to be directed with supremum \(\sigma\) by virtue of
  \cref{infty-family-directed-sup} and the fact that \(\alpha_i\) approximates
  \(\sigma_i\).

  Finally, if every \(\beta_i\) is a small compact basis, then \(\beta_\infty\)
  is also a small compact basis because by \cref{small-and-compact-basis} all we
  need to know is that
  \(\beta_\infty(i,b) \equiv \varepsilon_{i,\infty}(\beta_i(b))\) is compact for
  every \(i : I\) and \(b : B_i\). But this follows from the fact that
  embeddings preserve compactness and that each \(\beta_i(b)\) is compact.
\end{proof}%
\index{bilimit|)}

\section{Exponentials with small (compact) bases}%
\label{sec:exponentials-with-small-bases}%
\index{exponential}%

Just as in the classical, impredicative setting, the exponential of two
continuous dcpos need not be continuous~\cite{Jung1988}.
However, with some work, we are able to show that \(E^D\) has a small basis
provided that both \(D\) and \(E\) do and that \(E\) has all (not necessarily
directed) \(\V\)-suprema.
We first establish this for small compact bases using \emph{step functions} and
then derive the result for compact bases using \cref{Idl-iso-algebraic}.

\subsection{Single step functions}

Suppose that we have a dcpo \(D\) and a pointed dcpo
\(E\). Classically~\cite[Exercise~II-2.31]{GierzEtAl2003}, the single step
function given by \(d : D\) and \(e : E\) is defined as
\begin{align*}
  \ssf{d}{e} : D &\to E \\
  x &\mapsto
  \begin{cases}
    e &\text{if \(d \below x\);} \\
    \bot &\text{otherwise}.
  \end{cases}
\end{align*}

Constructively, we can't expect to make this case distinction, so we define
single step functions using subsingleton suprema instead.

\begin{definition}[Single step function, \(\ssf{d}{e}\)]%
  \index{single step function}\index{supremum!subsingleton}
  The \emph{single step function} given by two elements \(d : D\) and \(e : E\),
  where \(D\) is a locally small \(\V\)-dcpo and \(E\) is a pointed \(\V\)-dcpo,
  is the function \(\ssf{d}{e} : D \to E\) given by mapping \(x : D\) to the
  supremum of the family indexed by the subsingleton \(d \below x\) that is
  constantly \(e\).
  \nomenclature[arrowzzz]{$\ssf{d}{e}$}{single step function}
\end{definition}

\begin{remark}
  Recall from \cref{pointed-dcpos-sups} that the supremum of a
  subsingleton-indexed family \(\alpha : {P \to E}\) is given by the supremum of
  the directed family \(\One + P \to E\) defined by \(\inl(\star) \mapsto \bot\)
  and \(\inr(p) \mapsto \alpha(p)\).
  Note that we need \(D\) to be locally small, because we need \(d \below x\) to
  be a subsingleton in \(\V\) to use the \(\V\)-directed-completeness of \(E\).
\end{remark}

\begin{lemma}\label{single-step-function-continuous-if-compact}
  If \(d : D\) is compact, then \(\ssf{d}{e}\) is Scott continuous for every
  \(e : E\).
\end{lemma}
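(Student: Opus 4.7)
The plan is to apply the continuity criterion from \cref{continuity-criterion}, which reduces Scott continuity of a monotone map to the inequality $\ssf{d}{e}\pa*{\bigsqcup \alpha} \below \bigsqcup {\ssf{d}{e} \circ \alpha}$ for every directed family $\alpha$. So I would first verify monotonicity, and then verify this inequality using compactness of $d$ in the essential step.

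First I would prove monotonicity. If $x \below y$ in $D$, then by transitivity we get a function $\pa*{d \below x} \to \pa*{d \below y}$ between the indexing subsingletons of the two constant families defining $\ssf{d}{e}(x)$ and $\ssf{d}{e}(y)$. The family indexing $\ssf{d}{e}(x)$ is therefore cofinal in that indexing $\ssf{d}{e}(y)$, so $\ssf{d}{e}(x) \below \ssf{d}{e}(y)$ by the universal property of the suprema (or equivalently by unfolding the ``semidirected supremum'' description from \cref{pointed-dcpos-sups}).

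Next, let $\alpha : I \to D$ be a directed family with $I : \V$. By \cref{continuity-criterion} it suffices to show $\ssf{d}{e}\pa*{\bigsqcup \alpha} \below \bigsqcup {\ssf{d}{e} \circ \alpha}$. Unfolding the definition, the left-hand side is the supremum of the constant family on $e$ indexed by the subsingleton $d \below \bigsqcup \alpha$, so it is enough to establish that $d \below \bigsqcup \alpha$ implies $e \below \bigsqcup {\ssf{d}{e} \circ \alpha}$. Assume $d \below \bigsqcup \alpha$. Since $d$ is compact, there exists some $i : I$ with $d \below \alpha_i$; because the conclusion we are after is a subsingleton, we may eliminate this existential and fix such an $i$. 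For this $i$, the subsingleton $d \below \alpha_i$ is inhabited, so $e \below \ssf{d}{e}(\alpha_i)$, and therefore $e \below \ssf{d}{e}(\alpha_i) \below \bigsqcup {\ssf{d}{e} \circ \alpha}$, as desired.

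I do not anticipate any substantial obstacle: the only nontrivial ingredient is the use of compactness of $d$ to extract an index $i : I$ witnessing $d \below \alpha_i$, which is exactly what \cref{def:way-below} and reflexivity-via-compactness provide, and the propositional truncation is harmless because the target inequality $e \below \bigsqcup {\ssf{d}{e} \circ \alpha}$ is a proposition. The only care needed is to keep track that $d \below x$ is small (using local smallness of $D$, which is implicit in the hypotheses for $\ssf{d}{e}$ to be defined) so that the subsingleton supremum lives in $E$ via \cref{pointed-dcpos-sups}.
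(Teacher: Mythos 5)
Your proof is correct and uses the same key idea as the paper: compactness of \(d\) lets you extract an index \(i\) with \(d \below \alpha_i\) from \(d \below \bigsqcup \alpha\), and since the conclusion is a proposition you may eliminate the truncated existential. The only difference is organizational: you route through \cref{continuity-criterion} (monotonicity plus one inequality), whereas the paper verifies directly that \(\ssf{d}{e}\pa*{\bigsqcup\alpha}\) is the least upper bound of \(\ssf{d}{e}\circ\alpha\) by proving the upper-bound and lower-bound-of-upper-bounds conditions separately; these amount to the same computation.
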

\begin{proof}
  Suppose that \(d : D\) is compact and that \(\alpha : I \to D\) is a directed
  family. We first show that \(\ssf{d}{e}\pa*{\bigsqcup \alpha}\) is an upper
  bound of \(\ssf{d}{e} \circ \alpha\). So let \(i : I\) be arbitrary. Then we
  have to prove
  \(\bigsqcup_{d \below \alpha_i} e \below \bigsqcup {\ssf{d}{e} \circ
    \alpha}\). Since the supremum gives a lower bound of upper bounds, it
  suffices to prove that \(e \below \bigsqcup {\ssf{d}{e} \circ \alpha}\)
  whenever \(d \below \alpha_i\). But in this case we have
  \(e = \ssf{d}{e}(\alpha_i) \below \bigsqcup {\ssf{d}{e} \circ \alpha}\),
  so we are done.

  To see that \(\ssf{d}{e}\pa*{\bigsqcup \alpha}\) is a lower bound of upper
  bounds, suppose that we are given \(y : E\) such that \(y\) is an upper bound
  of \(\ssf{d}{e} \circ \alpha\). We are to prove that
  \(\pa*{\bigsqcup_{d \below \bigsqcup \alpha} e} \below y\).
  Note that it suffices for \(d \below \bigsqcup \alpha\) to imply \(e \below y\).
  So assume that \(d \below \bigsqcup \alpha\). By compactness of \(d\) there
  exists \(i : I\) such that \(d \below \alpha_i\) already. But then
  \(e = \ssf{d}{e}(\alpha_i) \below y\), as desired.
\end{proof}

\begin{lemma}\label{above-single-step-function}%
  \index{compactness!of single step functions}%
  A Scott continuous function \(f : D \to E\) is above the single step function
  \(\ssf{d}{e}\) with \(d : D\) compact if and only if \(e \below f(d)\).
\end{lemma}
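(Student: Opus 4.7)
The plan is to prove both directions directly from the definition of $\ssf{d}{e}$ as a subsingleton-indexed supremum, with compactness of $d$ only being needed implicitly (via \cref{single-step-function-continuous-if-compact}) so that $\ssf{d}{e}$ makes sense as an element of $E^D$ to compare with $f$.

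For the forward direction, assume $\ssf{d}{e} \below f$ in the pointwise order. Instantiating at $d$ yields $\ssf{d}{e}(d) \below f(d)$. Unfolding, $\ssf{d}{e}(d)$ is the supremum of the family indexed by the proposition $d \below d$ that is constantly $e$. Since $d \below d$ holds by reflexivity, $e$ appears in this family, so $e \below \ssf{d}{e}(d) \below f(d)$, as required.

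For the backward direction, assume $e \below f(d)$. I need to show $\ssf{d}{e}(x) \below f(x)$ for every $x : D$. Unfolding, this reduces to showing $\bigvee_{d \below x} e \below f(x)$, which by the universal property of suprema of subsingleton-indexed families (\cref{pointed-dcpos-sups}) amounts to showing $e \below f(x)$ under the assumption $d \below x$. But then $f(d) \below f(x)$ by monotonicity of $f$ (\cref{continuous-implies-monotone}), and combined with the hypothesis $e \below f(d)$ this gives $e \below f(x)$ by transitivity.

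Neither direction presents any real obstacle; the proof is essentially a reflexivity/monotonicity argument hiding behind the subsingleton-supremum definition. The role of compactness of $d$ is purely to ensure, via the preceding lemma, that $\ssf{d}{e}$ is itself Scott continuous, so that the inequality $\ssf{d}{e} \below f$ is a well-formed statement about elements of $E^D$; the compactness is not used in the pointwise inequality reasoning above.
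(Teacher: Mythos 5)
Your proof is correct and takes essentially the same approach as the paper: instantiate at $d$ and use reflexivity for the forward direction, and reduce to the case $d \below x$ and use monotonicity plus transitivity for the converse. The only cosmetic difference is that the paper asserts the equality $\ssf{d}{e}(d) = e$ outright while you settle for the inequality $e \below \ssf{d}{e}(d)$, which is all that is needed; your closing remark that compactness of $d$ serves only to guarantee continuity of $\ssf{d}{e}$ (via \cref{single-step-function-continuous-if-compact}) is also accurate.
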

\begin{proof}
  Suppose that \(\ssf{d}{e} \below f\). Then \(\ssf{d}{e}(d) = e \below f(d)\),
  proving one implication.
  Now assume that \(e \below f(d)\) and let \(x : D\) be arbitrary. To prove
  that \(\ssf{d}{e}(x) \below f(x)\), it suffices that \(e \below f(x)\)
  whenever \(d \below x\). But if \(d \below x\), then
  \(e \below f(d) \below f(x)\) by monotonicity of \(f\).
\end{proof}

\begin{lemma}\label{single-step-function-is-compact}
  If \(d\) and \(e\) are compact, then so is \(\ssf{d}{e}\) in the exponential
  \(E^D\).
\end{lemma}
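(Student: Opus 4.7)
The plan is to show directly that $\ssf{d}{e}$ is way below itself in $E^D$ by combining the assumed compactness of $e$ in $E$ with the characterisation of when a Scott continuous map lies above a single step function (\cref{above-single-step-function}). First I would observe that $\ssf{d}{e}$ is indeed a Scott continuous function, and hence genuinely an element of $E^D$, by \cref{single-step-function-continuous-if-compact} and the assumed compactness of $d$.

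Next, to verify compactness, I would take an arbitrary directed family $\alpha : I \to E^D$ with $\ssf{d}{e} \below \bigsqcup \alpha$ and aim to produce $i : I$ with $\ssf{d}{e} \below \alpha_i$. Applying \cref{above-single-step-function} to the hypothesis (valid since $d$ is compact), the inequality $\ssf{d}{e} \below \bigsqcup \alpha$ is equivalent to $e \below \pa*{\bigsqcup \alpha}(d)$. Because suprema in the exponential $E^D$ are computed pointwise, the right-hand side equals $\bigsqcup_{i : I} \alpha_i(d)$, and the family $i \mapsto \alpha_i(d)$ is directed in $E$ (it arises by applying the Scott continuous evaluation $\ev(-,d) : E^D \to E$ to the directed family $\alpha$, using \cref{image-is-directed}).

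Now the compactness of $e$ in $E$ supplies some $i : I$ with $e \below \alpha_i(d)$ already. A second application of \cref{above-single-step-function} to $\alpha_i$ (again relying on compactness of $d$) converts this into $\ssf{d}{e} \below \alpha_i$, which is exactly what was needed. Since this argument proceeds by two symmetric applications of \cref{above-single-step-function} sandwiching a use of compactness of $e$, there is no real obstacle; the only point to be careful about is confirming that directed suprema in $E^D$ are pointwise (so that $\pa*{\bigsqcup \alpha}(d) = \bigsqcup_{i : I}\alpha_i(d)$) and that the resulting family in $E$ is indeed directed, both of which follow from the construction of the exponential in \cref{sec:products-and-exponentials}.
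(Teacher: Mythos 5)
Your argument is correct and matches the paper's proof exactly: both use \cref{above-single-step-function} once to translate $\ssf{d}{e} \below \bigsqcup \alpha$ into $e \below \bigsqcup_{i}\alpha_i(d)$, invoke compactness of $e$ to find $i$ with $e \below \alpha_i(d)$, and apply \cref{above-single-step-function} a second time to conclude $\ssf{d}{e} \below \alpha_i$. Your version just spells out a few auxiliary facts (Scott continuity of $\ssf{d}{e}$, pointwise suprema and directedness of $i \mapsto \alpha_i(d)$) that the paper leaves implicit.
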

\begin{proof}
  Suppose that we have a directed family \(\alpha : I \to E^D\) such that
  \(\ssf{d}{e} \below \bigsqcup \alpha\). Then we consider the directed family
  \(\alpha^d : I \to E\) given by \(i \mapsto \alpha_i(d)\).
  We claim that \(e \below \bigsqcup \alpha^d\). Indeed, by
  \cref{above-single-step-function} and our assumption that
  \(\ssf{d}{e} \below \bigsqcup\alpha\) we get
  \(e \below \pa*{\bigsqcup\alpha}(d) = \bigsqcup \alpha^d\).
  Now by compactness of \(e\), there exists \(i : I\) such that
  \(e \below \alpha^d(i) \equiv \alpha_i(d)\) already. But this implies
  \(\ssf{d}{e} \below \alpha_i\) by \cref{above-single-step-function} again,
  finishing the proof.
\end{proof}

\subsection{Exponentials with small compact bases}

Fix \(\V\)-dcpos \(D\) and \(E\) with small compact bases
\(\beta_D : B_D \to D\) and \(\beta_E : B_E \to D\) and moreover assume that
\(E\) has suprema for all (not necessarily directed) families indexed by types
in \(\V\). We are going to construct a small compact basis on the exponential
\(E^D\).%
\index{sup-completeness}

\begin{lemma}\label{sup-of-single-step-functions}
  If \(E\) is sup-complete, then every Scott continuous function
  \(f : D \to E\) is the supremum of the collection of single step functions
  \(\pa*{\ssf{\beta_D(b)}{\beta_E(c)}}_{b : B_D , c : B_E}\) that are below
  \(f\).
\end{lemma}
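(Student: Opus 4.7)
The plan is to prove the equation by antisymmetry. Let $S$ denote the family of single step functions $\ssf{\beta_D(b)}{\beta_E(c)}$ below $f$, indexed by the type
\[
T \colonequiv \Sigma_{b : B_D}\Sigma_{c : B_E}\pa*{\ssf{\beta_D(b)}{\beta_E(c)} \below f}.
\]
First I would check that this indexing type lies in $\V$: the components $B_D$ and $B_E$ are in $\V$ by assumption, and the order on $E^D$ is $\V$-valued by \cref{exponential-is-locally-small}, using that $D$ has a small basis and that $E$ is locally small (which follows from \cref{locally-small-if-small-basis}). Each $\ssf{\beta_D(b)}{\beta_E(c)}$ is Scott continuous by \cref{single-step-function-continuous-if-compact}, since $\beta_D(b)$ is compact. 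Because $E$ is sup-complete and suprema in $E^D$ are computed pointwise, the supremum $g \colonequiv \bigsqcup S$ exists in $E^D$.

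By construction $g \below f$, so it remains to show $f \below g$. Since the order on $E^D$ is pointwise, I would fix $x : D$ and show $f(x) \below g(x)$. Because $\beta_D$ is a small compact basis, $x$ is the directed supremum of $\dset_{\beta_D} x$, and Scott continuity of $f$ gives
\[
f(x) = \textstyle\bigsqcup_{b \in \dset_{\beta_D} x} f(\beta_D(b)).
\]
Hence it suffices to show $f(\beta_D(b)) \below g(x)$ for each $b$ with $\beta_D(b) \below x$. Applying the same reasoning with $\beta_E$ to the element $f(\beta_D(b))$, this further reduces to showing $\beta_E(c) \below g(x)$ for each $c$ with $\beta_E(c) \below f(\beta_D(b))$.

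Fix such $b$ and $c$. Since $\beta_D(b)$ is compact and $\beta_E(c) \below f(\beta_D(b))$, \cref{above-single-step-function} yields $\ssf{\beta_D(b)}{\beta_E(c)} \below f$, so this single step function belongs to the family $S$ and therefore lies below $g$. Evaluating at $x$: since $\beta_D(b) \below x$, we have $\ssf{\beta_D(b)}{\beta_E(c)}(x) = \bigvee_{\beta_D(b) \below x}\beta_E(c) = \beta_E(c)$, whence $\beta_E(c) \below g(x)$, as required.

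The proof is essentially a diagram chase using the two small compact bases to approximate both the input and the output of $f$, and then exploiting \cref{above-single-step-function} to witness each approximation in $S$. The only genuine delicacy is verifying that the indexing type $T$ is $\V$-small so that $g$ exists in $E^D$; once this is in place, the rest is routine application of Scott continuity and the defining property of small compact bases.
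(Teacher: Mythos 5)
Your proof is correct, and the core computation—using the small compact bases of $D$ and $E$ to reduce to a single step function and then invoking \cref{above-single-step-function}—is the same as in the paper. The structural difference is that you explicitly construct the supremum $g \colonequiv \bigsqcup S$ and then argue $f = g$ by antisymmetry, whereas the paper never constructs the supremum: it observes that $f$ is trivially an upper bound of $S$ and then proves $f$ is below an \emph{arbitrary} upper bound $g$. The paper's route is slightly cleaner in that it completely sidesteps the question of whether the supremum of $S$ exists as a Scott continuous function, which in your route requires (as you correctly note) verifying that the indexing type $T$ is $\V$-small via \cref{exponential-is-locally-small} and \cref{locally-small-if-small-basis}, and implicitly that the pointwise supremum of a $\V$-small family of Scott continuous maps is Scott continuous. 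Your extra work is not wasted—it makes the existence of the supremum explicit rather than leaving it implicit in the phrase ``is the supremum of''—but the paper achieves the same conclusion with less machinery by working only with the universal property. One small further difference: you evaluate $\ssf{\beta_D(b)}{\beta_E(c)}(x) = \beta_E(c)$ directly using $\beta_D(b) \below x$, while the paper invokes \cref{above-single-step-function} a second time to pass from $\ssf{\beta_D(b)}{\beta_E(c)} \below g$ to $\beta_E(c) \below g(\beta_D(b))$; both are fine.
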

\begin{proof}
  Note that \(f\) is an upper bound by definition, so it remains to prove that
  it is the least. Therefore suppose we are given an upper bound
  \(g : D \to E\). We have to prove that \(f(x) \below g(x)\) for every
  \(x : D\), so let \(x : D\) be arbitrary. Now
  \(x = \bigsqcup \ddset_{\beta_D} x\), because \(\beta_D\) is a small compact
  basis for \(D\), so by Scott continuity of \(f\) and \(g\), it suffices to
  prove that \(f(\beta_D(b)) \below g(\beta_D(b))\) for every \(b : B_D\).
  So let \(b : B_D\) be arbitrary. Since \(\beta_E\) is a small compact basis
  for \(E\), we have
  \(f(\beta_D(b)) = \bigsqcup \dset_{\beta_E} f(\beta_D(b))\). So to prove
  \(f(\beta_D(b)) \below g(\beta_D(b))\) it is enough to know that
  \(\beta_E(c) \below g(\beta_D(b))\) for every \(c : B_E\) with
  \(\beta_E(c) \below f(\beta_D(b))\).
  But for such \(c : B_E\) we have \(\ssf{\beta_D(b)}{\beta_E(c)} \below f\) and
  therefore \(\ssf{\beta_D(b)}{\beta_E(c)} \below g\) because \(f\) is an upper
  bound of such single step functions, and hence
  \(\beta_E(c) \below g(\beta_D(b))\) by \cref{above-single-step-function}, as
  desired.
\end{proof}

\begin{definition}[Directification]\label{directification}%
  \index{directification}\index{type!of lists}\index{join!binary}%
  In a \(\V\)-sup-complete poset \(P\), the \emph{directification} of a family
  \(\alpha : I \to P\) is the family \(\bar\alpha : \List(I) \to P\) inductively
  defined by \([] \mapsto \bot\) and
  \(i :: l \mapsto \alpha_i \vee \bar{\alpha}(l)\), where \(\bot\) denotes the
  least element of \(P\) and \({\vee}\) the binary join.
  \nomenclature[vee]{$x \vee y$}{binary join}
  It is clear that \(\bar\alpha\) has the same supremum as \(\alpha\), and by
  concatenating lists, one sees that the directification yields a directed
  family, hence the name.
\end{definition}

\begin{lemma}\label{directification-is-compact}
  If each element of a family into a sup-complete dcpo is compact, then so are
  all elements of its directification.
\end{lemma}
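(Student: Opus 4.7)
The plan is to proceed by induction on the list argument of the directification. Recall that $\bar\alpha : \List(I) \to P$ is defined by $\bar\alpha([]) \colonequiv \bot$ and $\bar\alpha(i :: l) \colonequiv \alpha_i \vee \bar\alpha(l)$. So the goal is to show, by list induction, that $\bar\alpha(l)$ is compact for every $l : \List(I)$, given the hypothesis that $\alpha_i$ is compact for every $i : I$.

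For the base case $l \equiv []$, the value $\bar\alpha([]) = \bot$ is the least element of the sup-complete (in particular pointed) dcpo, which is compact by \cref{least-element-is-compact}. For the inductive step $l \equiv i :: l'$, the inductive hypothesis gives that $\bar\alpha(l')$ is compact, and the assumption on the family gives that $\alpha_i$ is compact. Their binary join $\alpha_i \vee \bar\alpha(l') = \bar\alpha(i :: l')$ is then compact by \cref{binary-join-is-compact}, which says that compact elements are closed under existing binary joins (and these joins exist because $P$ is sup-complete).

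There is no real obstacle here; the two closure facts about compactness that we need (least elements and binary joins) are already established earlier in the chapter, and the recursive definition of $\bar\alpha$ lines up exactly with those closure operations. The only point worth being careful about is that the binary joins $\alpha_i \vee \bar\alpha(l')$ indeed exist in $P$, which is immediate from the assumed $\V$-sup-completeness used in \cref{directification} to define $\bar\alpha$ in the first place.
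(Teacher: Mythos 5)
Your proof is correct and takes exactly the approach of the paper, which simply says ``By induction, \cref{least-element-is-compact,binary-join-is-compact}''; you have merely spelled out the list induction and the two closure facts in full. Nothing to change.
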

\begin{proof}
  By induction, \cref{least-element-is-compact,binary-join-is-compact}.
\end{proof}

Let us write \(\sigma : B_D \times B_E \to E^D\) for the map that takes
\((b,c)\) to the single step function \(\ssf{\beta_D(b)}{\beta_E(c)}\) and
\(\beta : B \colonequiv \List(B_D \times B_E) \to E^D\) for its directification,
which exists because \(E^D\) is \(\V\)-sup-complete as \(E\) is and suprema are
calculated pointwise.

\begin{theorem}\label{exponential-has-small-compact-basis}%
  \index{basis!compact!for the exponential}%
  The map \(\beta\) is a small compact basis for the exponential \(E^D\), where
  \(E\) is assumed to be sup-complete.
\end{theorem}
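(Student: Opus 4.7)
The plan is to verify the three defining conditions of a small compact basis for $\beta : B \to E^D$: (a) compactness of every $\beta(l)$, (b) $\V$-smallness of $\beta(l) \below f$, and (c) that $\ddset_\beta f = \dset_\beta f$ is directed with supremum $f$ for every Scott continuous $f : D \to E$.

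For (a), observe first that every single step function $\sigma(b,c) \equiv \ssf{\beta_D(b)}{\beta_E(c)}$ is Scott continuous because $\beta_D(b)$ is compact (Lemma~\ref{single-step-function-continuous-if-compact}), and moreover is compact in $E^D$ because both $\beta_D(b)$ and $\beta_E(c)$ are (Lemma~\ref{single-step-function-is-compact}). The exponential $E^D$ inherits sup-completeness from $E$, with joins computed pointwise. Hence the empty list gives the least element $\bot$ of $E^D$ (compact by Example~\ref{least-element-is-compact}) and $\beta(i::l) = \sigma(i) \vee \beta(l)$ is compact by induction using Lemma~\ref{binary-join-is-compact}. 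This is exactly what Lemma~\ref{directification-is-compact} records.

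For (b), I would appeal to local smallness: $E$ is locally small by Proposition~\ref{locally-small-if-small-basis}, and since $D$ has a small basis, $E^D$ is locally small by Proposition~\ref{exponential-is-locally-small}. Hence $\beta(l) \below f$ has $\V$-small truth value. (Alternatively, unfolding the inductive definition of $\beta$ and applying Lemma~\ref{above-single-step-function}, the relation $\beta(l) \below f$ is equivalent to the $\V$-small conjunction $\forall (b,c) \in l.\ \beta_E(c) \below f(\beta_D(b))$.)

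For (c), directedness of $\dset_\beta f$ is straightforward: the empty list witnesses inhabitedness since $\beta([]) = \bot \below f$, and semidirectedness is witnessed by list concatenation, since $\beta(l_1 \mathbin{\text{\raisebox{0ex}{+}\!\!+}} l_2) = \beta(l_1) \vee \beta(l_2) \below f$ whenever both summands are below $f$. For the supremum, one inequality is trivial. The nontrivial direction $f \below \bigsqcup \dset_\beta f$ follows from Lemma~\ref{sup-of-single-step-functions}: $f$ is already the supremum of the family of single step functions $\sigma(b,c)$ that are below it, and each such $\sigma(b,c)$ equals $\beta([(b,c)])$, so is an element of $\dset_\beta f$; hence the sup of that smaller family is bounded above by $\bigsqcup \dset_\beta f$, giving $f \below \bigsqcup \dset_\beta f$ by antisymmetry.

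The one place that needs genuine care, rather than routine bookkeeping, is the implicit use of sup-completeness of $E^D$ itself in the definition of the directification $\beta$. This needs pointwise suprema of Scott continuous functions to remain Scott continuous, which relies on $E$ being sup-complete so that binary joins in $E$ distribute over directed suprema (which holds because the binary join is itself a supremum, and a supremum of suprema can be reorganised). Once this is in place, everything else is an application of results already proved: the compactness of single step functions, Lemma~\ref{sup-of-single-step-functions}, and the closure properties of compact elements under $\bot$ and binary joins.
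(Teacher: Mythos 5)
Your proof is correct and follows essentially the same route as the paper's: compactness via \cref{directification-is-compact,single-step-function-is-compact}, smallness via \cref{exponential-is-locally-small}, directedness from the definition of directification, and the supremum claim from \cref{sup-of-single-step-functions}. Your write-up is more detailed in a couple of places: you unpack \cref{directification-is-compact} to the base cases and inductive step, and for the supremum claim you spell out the ``sandwich'' argument (the single step functions below \(f\) sit inside \(\dset_\beta f\), so \(f = \bigsqcup\{\sigma(b,c) \below f\} \below \bigsqcup\dset_\beta f \below f\)), whereas the paper cites \cref{sup-of-single-step-functions} and leaves that comparison implicit. Your closing observation about sup-completeness of \(E^D\) is also something the paper addresses only in the sentence just before the theorem statement (``which exists because \(E^D\) is \(\V\)-sup-complete as \(E\) is and suprema are calculated pointwise''); your elaboration of why that is so is sound, though the key point is not specifically about binary joins distributing over directed suprema but the more basic fact that a supremum of suprema can be reorganised as a single supremum over the product index, which requires \(E\) to have suprema over the relevant product types. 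The alternative smallness argument you offer (via \cref{above-single-step-function} and a finite conjunction) is a valid and slightly more explicit variant of the same bound.
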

\begin{proof}
  Firstly, every element in the image of \(\beta\) is compact by
  \cref{directification-is-compact,single-step-function-is-compact}.
  Secondly, for every \(b : B\) and Scott continuous map \(f : D \to E\), the
  type \(\beta(b) \below f\) is small, because \(E^D\) is locally small by
  \cref{exponential-is-locally-small}.  Thirdly, for every such \(f\), the
  family
  \(\pa*{\Sigma_{b : B}\pa*{\beta(b) \below f}} \xrightarrow{{\beta} \circ
    {\fst}} E^D\) is directed because \(\beta\) is the directification of
  \(\sigma\).  Finally, this family has supremum \(f\) because
  of~\cref{sup-of-single-step-functions}.
\end{proof}

\subsection{Exponentials with small bases}

We now present a variation of~\cref{exponential-has-small-compact-basis} but for
(sup-complete) dcpos with small bases. In fact, we will prove it using
\cref{exponential-has-small-compact-basis} and the theory of Scott continuous
retracts (\cref{Idl-iso-algebraic} in particular).

\begin{definition}[Closure under finite joins]%
  \index{join!finite}%
  A small basis \(\beta : B \to D\) for a sup-complete poset is \emph{closed
    under finite joins} if we have \(b_\bot : B\) with \(\beta(b_\bot) = \bot\) and a
  map \({\vee} : B \to B \to B\) such that
  \(\beta(b \vee c) = \beta(b) \vee \beta(c)\) for every \(b,c : B\).
\end{definition}

\begin{lemma}\label{close-basis-under-finite-joins}
  If \(D\) is a sup-complete dcpo with a small basis \(\beta : B \to D\), then
  the directification of \(\beta\) is also a small basis for \(D\). Moreover, by
  construction, it is closed under finite joins.
\end{lemma}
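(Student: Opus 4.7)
The plan is to verify the two basis conditions for the directification $\bar\beta \colon \List(B) \to D$ and then check closure under finite joins, making essential use of Lemma~\ref{subbasis-lemma}. First, $\List(B)$ lives in $\V$ because $B$ does, and smallness of each proposition $\bar\beta(l) \ll x$ follows from Proposition~\ref{locally-small-if-small-basis}, which guarantees that the way-below relation on any dcpo with a small basis has $\V$-small values. So the smallness clauses of the basis definition are free.

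The heart of the argument is that $\ddset_{\bar\beta} x$ is directed with supremum $x$ for every $x : D$, and here I would invoke Lemma~\ref{subbasis-lemma} with $\bar\beta$ playing the role of the candidate basis. Observe that for any $b : B$ we have $\bar\beta([b]) = \beta(b) \vee \bot = \beta(b)$, so the assignment $(b,p) \mapsto ([b], p')$, where $p'$ is obtained from $p : \beta(b) \ll x$ by transporting along this equality, defines a map
\[
  \sigma \colon \ddset_\beta x \longrightarrow \Sigma_{l : \List(B)}\pa*{\bar\beta(l) \ll x}.
\]
The composite $\ddset_{\bar\beta} x \circ \sigma$ then coincides pointwise with the original $\ddset_\beta x$, which is directed with supremum $x$ because $\beta$ is a small basis. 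Lemma~\ref{subbasis-lemma} transfers both directedness and having supremum $x$ from $\ddset_{\bar\beta} x \circ \sigma$ to $\ddset_{\bar\beta} x$, finishing the verification that $\bar\beta$ is a small basis.

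For the closure under finite joins, take $b_\bot \colonequiv []$, for which $\bar\beta([]) = \bot$ holds definitionally, and define the binary join on $\List(B)$ by list concatenation. A short induction on the first list shows that $\bar\beta$ sends concatenation to binary join in $D$: the base case reduces to $\bot \vee \bar\beta(l) = \bar\beta(l)$, and the inductive step uses associativity of $\vee$ to rewrite $\beta(b) \vee (\bar\beta(l) \vee \bar\beta(l')) = (\beta(b) \vee \bar\beta(l)) \vee \bar\beta(l')$. There is no real obstacle in the argument; the only point requiring care is the bookkeeping in the application of Lemma~\ref{subbasis-lemma}, in particular making sure the small-valuedness hypothesis is met so that the domain of $\sigma$ genuinely lives in $\V$, which is exactly what Proposition~\ref{locally-small-if-small-basis} provides.
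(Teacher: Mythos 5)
Your proof matches the paper's argument essentially verbatim: both obtain the smallness of the relation from \cref{locally-small-if-small-basis}, both apply \cref{subbasis-lemma} via the inclusion \(b \mapsto [b]\) of \(\ddset_\beta x\) into \(\ddset_{\bar\beta} x\) (using \(\bar\beta([b]) = \beta(b) \vee \bot = \beta(b)\)), and both treat the closure under finite joins as immediate from the inductive definition of the directification. The extra detail you give on concatenation and associativity is a harmless elaboration of what the paper dismisses with ``by construction.''
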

\begin{proof}
  Since \(\beta\) is a small basis for \(D\), it follows by
  \cref{locally-small-if-small-basis} that the way-below relation on \(D\) is
  small-valued. Hence, writing \(\bar\beta\) for the directification of
  \(\beta\), it remains to prove that \(\ddset_{\bar\beta} x\) is directed with
  supremum \(x\) for every \(x : D\). But this follows easily from
  \cref{subbasis-lemma}, because \(\ddset_\beta x\) is directed with supremum
  \(x\) and this family is equal to the composite
  \(\pa*{\Sigma_{b : B}\pa*{\beta(b) \ll x}} \xhookrightarrow{b \mapsto [b]}
  \pa*{\Sigma_{l : \List(B)}\pa*{\bar\beta(l) \ll x}} \xrightarrow{\bar\beta
    \circ \fst} D\).
\end{proof}

\begin{lemma}\label{sup-complete-ideal-completion}
  If \(D\) is a \(\V\)-sup-complete poset with a small basis \(\beta : B \to D\)
  that is closed under finite joins, then the ideal-completion
  \(\Idl{V}(B,\below)\) is \(\V\)-sup-complete too.
\end{lemma}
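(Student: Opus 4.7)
The plan is to reduce the claim to two pieces of structure on $\Idl{V}(B,\below)$: the existence of a least element and of binary joins. Together with the $\V$-directed completeness already established in~\cref{Idl-is-dcpo}, these suffice to obtain arbitrary $\V$-suprema via a directification argument analogous to~\cref{directification}. Concretely, given any family $\mathcal{I} : S \to \Idl{V}(B,\below)$ with $S : \V$, I will form the list-indexed family $\bar{\mathcal{I}} : \List(S) \to \Idl{V}(B,\below)$ generated by the empty ideal on $[]$ and $s :: l \mapsto \mathcal{I}_s \vee \bar{\mathcal{I}}(l)$. By construction $\bar{\mathcal{I}}$ is directed (lists concatenate to give upper bounds) and has the same set of upper bounds as $\mathcal{I}$, so its directed supremum, which exists by~\cref{Idl-is-dcpo}, is the desired supremum of $\mathcal{I}$.

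The real content, then, is to exhibit the bottom ideal and binary joins, and this is where the hypothesis that $\beta$ is closed under finite joins is used. For the least element I will take the principal ideal $\dset b_\bot$; since $\beta(b_\bot) = \bot_D$, this equals $\set{a \in B \mid \beta(a) = \bot_D}$, which is a lower set w.r.t.\ $\below_\beta$, is inhabited by $b_\bot$, and is semidirected because $\beta(a \vee a') = \beta(a) \vee \beta(a') = \bot_D$ whenever $\beta(a)=\beta(a')=\bot_D$. For any ideal $J$, picking an arbitrary $c \in J$ (which exists since $J$ is inhabited) and using that $J$ is a lower set gives $\dset b_\bot \subseteq J$, so $\dset b_\bot$ is indeed the least ideal.

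For binary joins I propose the definition
\[
  I_1 \vee I_2 \;\colonequiv\; \set{b \in B \mid \exists_{c_1 \in I_1}\exists_{c_2 \in I_2}\, b \below_\beta c_1 \vee c_2}.
\]
That this is a lower set is immediate from transitivity of $\below_\beta$; inhabitedness follows by picking $c_i \in I_i$ and observing that $c_1 \vee c_2 \below_\beta c_1 \vee c_2$. For semidirectedness, given two elements witnessed by $c_1, c_1' \in I_1$ and $c_2, c_2' \in I_2$, directedness of each $I_i$ yields $d_i \in I_i$ with $c_i, c_i' \below_\beta d_i$; since $\beta$ preserves binary joins, we obtain $c_1 \vee c_2, c_1' \vee c_2' \below_\beta d_1 \vee d_2 \in I_1 \vee I_2$. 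A symmetric argument verifies that $I_1 \vee I_2$ is a common upper bound of $I_1$ and $I_2$, and if $J$ contains both, then any $b \below_\beta c_1 \vee c_2$ with $c_i \in I_i \subseteq J$ lies below some $d \in J$ (by semidirectedness of $J$ applied to $c_1, c_2$ and the fact that $\beta(c_1 \vee c_2) = \beta(c_1) \vee \beta(c_2) \below \beta(d)$), so $b \in J$ by the lower-set property.

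The main obstacle is essentially bookkeeping: the semidirectedness and least-upper-bound verifications for $I_1 \vee I_2$ require threading the closure properties of $\beta$ through the propositional truncations implicit in the defining $\exists$, and one must be careful that the displayed $\V$-valued binary relations make the subset $I_1 \vee I_2$ a genuine $\V$-valued subset of $B$, which it is because $B : \V$ and $\below_\beta$ is $\V$-valued. Once these are in place, the directification step combining binary joins, bottom, and directed completeness delivers the claimed $\V$-sup-completeness.
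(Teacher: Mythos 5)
Your proposal is correct and follows the paper's own route: reduce \(\V\)-sup-completeness to exhibiting a bottom ideal and binary joins, then recover arbitrary \(\V\)-indexed suprema by directification together with the \(\V\)-directed completeness already established in~\cref{Idl-is-dcpo}. Two local variations are worth noting: your binary join of \(I_1\) and \(I_2\), defined via a single element from each ideal, is extensionally equal to the paper's (which uses arbitrary finite lists from \(I\) and \(J\) — directedness of the ideals lets a single element dominate any finite list); and your bottom ideal \(\dset b_\bot\) is in fact the more careful choice, since the paper's \(\set{b_\bot}\) need not be a \(\below_{\beta}\)-lower set when \(\beta\) is not injective (several basis elements may map to \(\bot_D\)), whereas \(\dset b_\bot\) always is.
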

\begin{proof}
  Since the \(\V\)-ideal completion is \(\V\)-directed complete, it suffices to
  show that \(\Idl{V}(B,\below)\) has finite joins. Since \(\beta : B \to D\) is
  closed under finite joins, we have \(b_\bot : B\) with
  \(\beta(b_\bot) = \bot\) and we easily see that \(\set{b_\bot}\) is the least
  element of \(\Idl{V}(B,\below)\).
  Now suppose that we have two ideals \(I , J : \Idl{V}(B,\below)\) and consider
  the subset
  \[K \colonequiv \set{b \in B \mid \exists_{b_0,\dots,b_{n-1} \in I}\exists_{c_0,\dots,c_{m-1}
      \in J}\pa*{\beta(b) \below \beta\pa*{b_0 \vee \dots \vee b_{n-1} \vee c_0 \vee \dots
        \vee c_{m-1}}}}.
  \]
  Observe that \(K\) is a lower set and that it is directed as \(B\) is closed
  under finite joins. Thus, \(K \in \Idl{V}(B,\below)\).
  We claim that \(K\) is the join of \(I\) and \(J\). First of all, \(I\) and
  \(J\) are both subsets of \(K\), so it remains to prove that \(K\) is the
  least upper bound. To this end, suppose that we have an ideal \(L\) that
  includes \(I\) and \(J\), and let \(b \in K\) be arbitrary. We show that
  \(b \in L\).
  Since \(b \in K\), there exist \(b_0,\dots,b_{n-1} \in I\) and
  \(c_0,\dots,c_{m-1}\in J\) such that
  \(\beta(b) \below \beta(b_0 \vee \dots \vee b_{n-1} \vee c_0 \vee \dots \vee
  c_{m-1})\).
  Then \(b_0,\dots,b_{n-1} \in L\) and there exists \(b \in L\) such that
  \(\beta(b_0),\dots,\beta(b_{n-1}) \below \beta(b)\) as \(L\) is directed. But
  \(L\) is a lower set, so \(L\) must also contain
  \(\beta(b_0 \vee \dots \vee b_{n-1})\). Similarly, \(L\) contains
  \(\beta(c_0\vee\dots\vee c_{m-1})\). Finally, using a similar argument once
  again, we get that \(L\) contains
  \(\beta(b_0\vee\dots\vee b_{n-1} \vee c_0 \vee \dots \vee c_{m-1})\). But
  \(\beta(b)\) is below this element, so \(L\) must contain it, finishing the
  argument that \(K \subseteq L\). Hence, \(K\) is the least upper bound of
  \(I\) and \(J\), completing the proof.
\end{proof}

\begin{theorem}\label{exponentials-with-small-bases-via-retract}%
  \index{basis!for the exponential}%
  The exponential \(E^D\) of dcpos has a specified small basis if \(D\) and
  \(E\) have specified small bases and \(E\) is sup-complete.
\end{theorem}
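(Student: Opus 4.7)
The plan is to reduce to the already established case of small \emph{compact} bases (Theorem~\ref{exponential-has-small-compact-basis}) by exhibiting $E^D$ as a Scott continuous retract of an exponential of algebraic dcpos and invoking Theorem~\ref{small-basis-closed-under-continuous-retracts}.

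More precisely, given small bases $\beta_D : B_D \to D$ and $\beta_E : B_E \to E$, I first replace them by their directifications using Lemma~\ref{close-basis-under-finite-joins}, so that we may assume both bases are closed under finite joins. I then form the rounded ideal completions $D' \colonequiv \Idl{V}(B_D,{\below_{\beta_D}})$ and $E' \colonequiv \Idl{V}(B_E,{\below_{\beta_E}})$. By Theorem~\ref{Idl-has-small-compact-basis} both $D'$ and $E'$ carry specified small compact bases, and by Theorem~\ref{Idl-iso-algebraic} we have specified Scott continuous embedding-projection pairs $(s_D, r_D)$ and $(s_E, r_E)$ exhibiting $D$ as a retract of $D'$ and $E$ as a retract of $E'$. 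Since $E$ is $\V$-sup-complete and its basis is closed under finite joins, Lemma~\ref{sup-complete-ideal-completion} ensures that $E'$ is $\V$-sup-complete as well. Hence Theorem~\ref{exponential-has-small-compact-basis} applies and furnishes a specified small compact basis for $E'^{D'}$.

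It then remains to build a Scott continuous retract $\retract{E^D}{E'^{D'}}$. The natural candidates are
\[
  s(f) \colonequiv s_E \circ f \circ r_D \qquad\text{and}\qquad r(g) \colonequiv r_E \circ g \circ s_D,
\]
both of which are well-defined on Scott continuous maps by the closure result Proposition~\ref{continuity-closure}. That $s$ is a section of $r$ follows from $r_D \circ s_D = \id_D$ and $r_E \circ s_E = \id_E$:
\[
  r(s(f)) = r_E \circ s_E \circ f \circ r_D \circ s_D = \id_E \circ f \circ \id_D = f.
\]
Scott continuity of $s$ and $r$ themselves (as maps between the exponentials) is straightforward from the fact that suprema in exponentials are computed pointwise, combined with Scott continuity of pre- and post-composition by fixed Scott continuous maps. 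With the retract in hand, Theorem~\ref{small-basis-closed-under-continuous-retracts} transports the small compact basis on $E'^{D'}$ along the retraction $r$ to a specified small basis on $E^D$, completing the proof.

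The only mildly delicate point, and thus the main obstacle, is to check that the induced maps $s$ and $r$ on exponentials are themselves Scott continuous; this is essentially the observation that $g \mapsto h \circ g \circ k$ is Scott continuous whenever $h$ and $k$ are, which is immediate from the pointwise definition of suprema in $E^D$ and Scott continuity of $h$. Everything else is bookkeeping assembled from the retract machinery already developed.
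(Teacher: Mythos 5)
Your proposal follows the paper's proof essentially line-for-line: directify the basis for finite joins, pass to the ideal completions $D'$ and $E'$, get the retracts from Theorem~\ref{Idl-iso-algebraic}, use Lemma~\ref{sup-complete-ideal-completion} and Theorem~\ref{Idl-has-small-compact-basis} to land in a situation where Theorem~\ref{exponential-has-small-compact-basis} applies, form the composite retract $\retract{E^D}{{E'}^{D'}}$, and transport the basis back with Theorem~\ref{small-basis-closed-under-continuous-retracts}.

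One small overstep: you ``replace \emph{them} by their directifications,'' i.e.\ both $\beta_D$ and $\beta_E$. But directification (Definition~\ref{directification}) and Lemma~\ref{close-basis-under-finite-joins} require the ambient dcpo to be sup-complete, and the hypotheses only give you sup-completeness of $E$, not $D$. So you are not entitled to directify $\beta_D$. Fortunately this is harmless: the argument only needs $\beta_E$ to be closed under finite joins (for Lemma~\ref{sup-complete-ideal-completion}); $D' = \Idl{V}(B_D,\below_{\beta_D})$ already has a small compact basis by Theorem~\ref{Idl-has-small-compact-basis} regardless. So the fix is simply to directify only $\beta_E$, as the paper does, and everything else goes through unchanged.
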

\begin{proof}
  Suppose that \(\beta_D : B_D \to D\) and \(\beta_E : B_E \to E\) are small
  bases and that \(E\) is sup-complete.  By
  \cref{close-basis-under-finite-joins} we can assume that
  \(\beta_E : B_E \to E\) is closed under finite joins.
  We will write \(D'\) and \(E'\) for the
  respective ideal completions \(\Idl{V}{(B_D,\below)}\) and
  \(\Idl{V}(B_E,\below)\).
  Then~\cref{Idl-iso-algebraic} tells us that we have Scott continuous retracts
  \(\retractalt{D}{D'}{s_D}{r_D}\) and \(\retractalt{E}{E'}{s_E}{r_E}\).
  Composition yields a Scott continuous retract \(\retract{E^D}{{E'}^{D'}}\)
  where \(s(f) \colonequiv s_E \circ f \circ r_D\) and
  \(r(g) \colonequiv r_E \circ g \circ s_D\).
  Now \(D'\) and \(E'\) have small compact basis
  by~\cref{Idl-has-small-compact-basis} and \(E'\) is sup-complete by
  \cref{sup-complete-ideal-completion}. Therefore, \({E'}^{D'}\) has a small
  basis by \cref{exponential-has-small-compact-basis}.
  Finally, \cref{small-basis-closed-under-continuous-retracts} tells us that the
  retraction \(r\) yields a small basis on \(E^D\), as desired.
\end{proof}

Note how, unlike~\cref{exponential-has-small-compact-basis}, the above theorem
does not give an explicit description of the small basis for the exponential. It
may be possible to do so using function graphs, as is done in the classical
setting of effective domain theory in~\cite[Section~4.1]{Smyth1977}.

\section{Notes}\label{sec:continuous-and-algebraic-notes}

This chapter is largely based on our work~\cite{deJongEscardo2021a}. In
particular, \cref{sec:way-below,sec:rounded-ideal-completion} are expanded and
revised versions of parts of that paper.
Some of the ideas for the arguments
in~\cref{sec:structurally-continuous-and-algebraic-bilimits,sec:exponentials-with-small-bases}
were already present in the expanded version of~\cite{deJongEscardo2021a}.
The present treatment of continuous and algebraic dcpos and small (compact)
bases is significantly different from that of~\cite{deJongEscardo2021a}. In the
published work, our definition of continuous dcpo was an amalgamation of
pseudocontinuity and having a small basis, although it did not imply local
smallness.%
\index{continuity!of a dcpo}\index{algebraicity}%
\index{basis}\index{basis!compact}%
\index{continuity!pseudo-}\index{dcpo!locally small}
In this chapter we have disentangled the two notions and based our definition of
continuity on that of~\cite{JohnstoneJoyal1982} without making any reference to
a basis. The current notion of a small basis is simpler and slightly stronger
than that of~\cite{deJongEscardo2021a}, which allows us to prove that having a
small basis is equivalent to being presented by ideals.

This idea of a small basis is similar, but different to \citeauthor{Aczel2006}'s
notion of a ``set-generated'' dcpo~\cite[Section~6.4]{Aczel2006} in the context
of constructive set theory, and a similar smallness criterion in a categorical
context also appears in \cite[Proposition~2.16]{JohnstoneJoyal1982}.%
\index{smallness}\index{set-generated}
While Aczel requires the set \(\{b \in B \mid b \below x\}\) to be directed, we
use the way-below relation (or compact elements) in line with the usual
definition of a basis~\cite[Section~2.2.6]{AbramskyJung1994}.
The particular case of a dcpo with a small compact basis is similar to the
notion of an accessible category~\cite{MakkaiPare1989}.

As mentioned before, our treatment of continuous dcpos is based on the
work~\cite{JohnstoneJoyal1982} of~\citeauthor{JohnstoneJoyal1982}, but we use
the propositional truncation to ensure that the type of continuous dcpos is a
subtype of the type of dcpos. Moreover, our discussion of pseudocontinuity
(\cref{sec:pseudocontinuity}) is new.

Abstract bases were introduced by~\citeauthor{Smyth1977}~\cite{Smyth1977} under
the name ``R-structures'', but our treatment of them and the rounded ideal
completion is closer to that of~\cite[Section~2.2.6]{AbramskyJung1994}, although
ours is based on families and avoids impredicative, set-theoretic constructions.%
\index{R-structure}

Finally, the example of the ideal completion of the dyadics in
\cref{sec:dyadics} and \cref{exponentials-with-small-bases-via-retract} were
suggested to me by Mart\'in Escard\'o.%
\index{dyadics}\index{basis!for the exponential}


\chapter{Applications in semantics of programming languages}\label{chap:applications}%
\index{semantics}\index{programming language}\index{lambda-calculus@\(\lambda\)-calculus}%
\index{exponential}\index{Scott model!of the untyped \(\lambda\)-calculus}

We present two applications of domain theory in the semantics of programming
languages. The first application is Scott's~\cite{Scott1972} famous
\(D_\infty\): a construction of a nontrivial pointed dcpo \(D\) that is
isomorphic to its self-exponential \(D^D\). This allows one to view functions as
elements (and vice versa) and hence to give a genuine model of the untyped
\(\lambda\)-calculus where self-application is fundamental.
The construction works by instantiating the bilimit machinery
from~\cref{sec:bilimits} with a particular diagram.
It is noteworthy that this construction is possible in our predicative setting
given that Scott's \(D_\infty\) is obtained by iterating exponentials where it
is not obvious that this does not lead to ever-increasing universes.
While the applications can be developed fully using the basic theory of
dcpos (as set out in~\cref{chap:basic-domain-theory}) the exposition in
\cref{chap:continuous-and-algebraic-dcpos} allows us to conclude that Scott's
\(D_\infty\) is algebraic. In fact, it has a small compact basis.

The second application is a constructive and predicative account of the
Scott~\cite{Scott1993} model of the typed programming language
PCF~\cite{Plotkin1977}, including the fundamental soundness and computational
adequacy theorems, formulated and proved originally by
\citeauthor{Plotkin1977}~\cite{Plotkin1977}.
The lifting from \cref{sec:lifting} is the essential ingredient in our
constructive treatment and the model also illustrates the usefulness of the
least fixed point theorem (\cref{least-fixed-point}).
We employ computational adequacy in our investigations into semidecidability and
this is also where the theory on indexed
\(\WW\)-types~(\cref{sec:indexed-W-types}) will find application.

\section{Scott's \texorpdfstring{\(D_\infty\)}{D-infinity} model of the untyped
  \texorpdfstring{\(\lambda\)}{lambda}-calculus}\label{sec:Scott-D-infty}

We construct Scott's \(D_\infty\)~\cite{Scott1972} predicatively. Formulated
precisely, we construct a pointed \(D_\infty : \DCPOnum{0}{1}{1}\) such that
\(D_\infty\) is isomorphic to its self-exponential~\(D_\infty^{D_\infty}\),
employing the machinery from~\cref{sec:bilimits}.

\begin{definition}[\(D_n\)]
  We inductively define pointed dcpos \(D_n : \DCPOnum{0}{1}{1}\) for every
  natural number \(n\) by setting
  \(D_0 \colonequiv \lifting_{\U_0}\pa*{\One_{\U_0}}\) and
  \(D_{n+1} \colonequiv D_n^{D_n}\).
\end{definition}

In light of~\cref{universe-levels-of-lifting-and-exponentials} we highlight the
fact that every \(D_n\) is a \(\U_0\)-dcpo with carrier in \(\U_1\) by the
discussion of universe parameters of exponentials
in~\cref{exponential-universe-parameters}.%
\index{universe!parameters}

\begin{definition}[\(\varepsilon_n\), \(\pi_n\)]
  We inductively define for every natural number \(n\), two Scott continuous maps
  \(\varepsilon_n : D_n \to D_{n+1}\) and \({\pi_n : D_{n+1} \to D_n}\):
  \begin{enumerate}[(i)]
  \item
    \begin{itemize}
    \item \(\varepsilon_0 : D_0 \to D_1\) is given by mapping \(x : D_0\) to the
      continuous function that is constantly~\(x\),
    \item \(\pi_0 : D_1 \to D_0\) is given by evaluating a continuous function
      \(f : D_0 \to D_0\) at~\(\bot\) which is itself continuous
      by~\cref{exponential-universal-property},
    \end{itemize}
  \item
    \begin{itemize}
    \item \(\varepsilon_{n+1} : D_{n+1} \to D_{n+2}\) takes a continuous
      function \(f : D_n \to D_n\) to the continuous composite
      \(D_{n+1} \xrightarrow{\pi_n} D_n \xrightarrow{f} D_n
      \xrightarrow{\varepsilon_n} D_{n+1}\), and
    \item \(\pi_{n+1} : D_{n+2} \to D_{n+1}\) takes a continuous function
      \(f : D_{n+1} \to D_{n+1}\) to the continuous composite
      \(D_n \xrightarrow{\varepsilon_n} D_{n+1} \xrightarrow{f} D_{n+1}
      \xrightarrow{\pi_n} D_n\). \qedhere
    \end{itemize}
  \end{enumerate}
\end{definition}

\begin{lemma}
  The maps \(\pa*{\varepsilon_n,\pi_n}\) form an embedding-projection pair for
  every natural number \(n\).%
  \index{embedding-projection pair}
\end{lemma}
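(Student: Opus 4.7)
The plan is to proceed by induction on the natural number $n$, verifying at each stage that $\pi_n \circ \varepsilon_n = \id_{D_n}$ and that $\varepsilon_n \circ \pi_n$ is a deflation on $D_{n+1}$. Scott continuity of $\varepsilon_n$ and $\pi_n$ itself comes for free from the definitions: $\varepsilon_0$ is a constant map (continuous by~\cref{continuity-closure}\ref{const-is-continuous}), $\pi_0$ is evaluation at $\bot$ (continuous by~\cref{exponential-universal-property}), and the inductive definitions only compose continuous maps with themselves (again using~\cref{continuity-closure}\ref{comp-is-continuous}).

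For the base case $n = 0$, the section property is immediate: if $x : D_0$, then $\varepsilon_0(x)$ is the function constantly equal to $x$, so $\pi_0(\varepsilon_0(x)) \equiv \varepsilon_0(x)(\bot) = x$. For the deflation property, given $f : D_0 \to D_0$ and any $y : D_0$, we have $(\varepsilon_0(\pi_0(f)))(y) \equiv f(\bot) \below f(y)$ by monotonicity of $f$ and the fact that $\bot$ is least in $D_0$; since the order on $D_1$ is pointwise, this gives $\varepsilon_0 \circ \pi_0 \below \id_{D_1}$.

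For the inductive step, suppose $(\varepsilon_n, \pi_n)$ is an embedding-projection pair. Unfolding the definitions, for $f : D_{n+1}$ we have
\[
  \pi_{n+1}(\varepsilon_{n+1}(f)) = \pi_n \circ \varepsilon_n \circ f \circ \pi_n \circ \varepsilon_n = \id_{D_n} \circ f \circ \id_{D_n} = f
\]
by two applications of the inductive hypothesis $\pi_n \circ \varepsilon_n = \id_{D_n}$, using function extensionality to conclude equality. For the deflation property, let $g : D_{n+2}$ be arbitrary; we must show $\varepsilon_{n+1}(\pi_{n+1}(g)) \below g$ in $D_{n+2}$, i.e., pointwise. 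For any $x : D_{n+1}$,
\[
  \bigl(\varepsilon_{n+1}(\pi_{n+1}(g))\bigr)(x) \equiv \varepsilon_n\bigl(\pi_n\bigl(g(\varepsilon_n(\pi_n(x)))\bigr)\bigr) \below g(\varepsilon_n(\pi_n(x))) \below g(x),
\]
where the first inequality uses that $\varepsilon_n \circ \pi_n$ is a deflation on $D_{n+1}$ (inductive hypothesis), and the second uses monotonicity of $g$ together with the deflation property $\varepsilon_n(\pi_n(x)) \below x$.

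There is no real obstacle here; this is a routine induction. The only thing to be careful about is to use function extensionality when turning the pointwise identities into equalities of Scott continuous maps, which is harmless since we have function extensionality available throughout. After this lemma, the hypotheses of~\cref{sec:bilimits} apply to the directed diagram indexed by $\Nat$, and $D_\infty$ will be obtained as the bilimit in $\DCPOnum{0}{1}{1}$ by the universe-tracking remarks in~\cref{bilimit-universe-parameters}.
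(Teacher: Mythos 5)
Your proof is correct and takes essentially the same inductive approach as the paper; the paper abbreviates the deflation part of the inductive step as ``similar,'' and the calculation you supply for it, namely \(\varepsilon_n(\pi_n(g(\varepsilon_n(\pi_n(x))))) \below g(\varepsilon_n(\pi_n(x))) \below g(x)\) via the induction hypothesis and monotonicity of \(g\), is exactly the intended one.
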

\begin{proof}
  We prove this by induction on \(n\). For \(n \equiv 0\) and arbitrary
  \(x : D_0\), we have
  \[
    \pi_0(\varepsilon_0(x)) \equiv \pi_0(\const_x) \equiv \const_x(\bot) \equiv
    x,
  \]
  so \(\varepsilon_0\) is indeed a section of \(\pi_0\).
  Moreover, for arbitrary \(f : D_1\), we have
  \[
    \varepsilon_0(\pi_0(f)) \equiv \varepsilon_0(f(\bot)) \equiv \const_{f(\bot)},
  \]
  so that for arbitrary \(x : D_0\) we get
  \(\pa*{\varepsilon_0(\pi_0(f))}(x) \equiv f(\bot) \below f(x)\) by
  monotonicity of~\(f\), proving that \(\varepsilon_0 \circ \pi_0\) is
  deflationary.

  Now suppose that the result holds for a natural number \(n\); we prove it for
  \(n+1\). For arbitrary \(f : D_n \to D_n\), we calculate that
  \[
    \pi_{n+1}\pa*{\varepsilon_{n+1}(f)}
    \equiv \pi_n \circ \varepsilon_{n+1}(f) \circ \varepsilon_n
    \equiv \pi_n \circ \varepsilon_n \circ f \circ \pi_n \circ \varepsilon_n
    = f,
  \]
  as \(\varepsilon_n\) is a section of \(\pi_n\) by induction hypothesis.
  The proof that \(\varepsilon_{n+1} \circ \pi_{n+1}\) is a deflation is
  similar.
\end{proof}

In order to apply the tools from \cref{sec:bilimits}, we will need
embedding-projection pairs \(\pa*{\varepsilon_{n,m},\pi_{n,m}}\) from \(D_n\) to
\(D_m\) whenever \(n \leq m\).

\begin{definition}[\(\varepsilon_{n,m},\pi_{n,m}\)]
  We define Scott continuous maps \(\varepsilon_{n,m} : D_n \to D_m\) and
  \({\pi_{n,m} : D_m \to D_n}\) for every two natural numbers \(n \leq m\) as
  follows:
  \begin{enumerate}[(i)]
  \item \(\varepsilon_{n,m}\) and \(\pi_{n,m}\) are both defined to be the
    identity if \(n = m\);
  \item if \(n < m\), then we define \(\varepsilon_{n,m}\) as the composite
    \[
      D_n \xrightarrow{\varepsilon_n} D_{n+1} \to \cdots \to D_{m-1}
      \xrightarrow{\varepsilon_{m-1}} D_m
    \]
    and \(\pi_{n,m}\) as the composite
    \[
      D_m \xrightarrow{\pi_m} D_{m-1} \to \cdots \to D_{n+1}
      \xrightarrow{\pi_{n}} D_n
    \]
  \end{enumerate}
  which yields embedding-projection pairs as they are compositions of them.
\end{definition}

Instantiating the framework of~\cref{sec:bilimits} with the above diagram of
objects \(D_n : \DCPOnum{0}{1}{1}\), we arrive at the construction of
\(D_\infty\) and appropriate embedding-projection pairs. Observe that
\(D_\infty\) is a \(\U_0\)-dcpo with carrier and order taking values
in~\(\U_1\), just like each \(D_n\), as was also mentioned
in~\cref{bilimit-universe-parameters}.%
\index{universe!parameters}

\begin{definition}[\(D_\infty\)]\index{bilimit}%
  \nomenclature[Dinfty]{$D_\infty$}{Scott model of the untyped \(\lambda\)-calculus}%
  Applying \cref{def:D-infty,pi-infty,epsilon-infty} to the above diagram yields
  \(D_\infty : \DCPOnum{0}{1}{1}\) with embedding-projection pairs
  \(\pa*{\varepsilon_{n,\infty},\pi_{n,\infty}}\) from \(D_n\) to \(D_\infty\)
  for every natural number \(n\).
\end{definition}

\begin{lemma}\label{pi-is-strict}
  The function \(\pi_n : D_{n+1} \to D_n\) is strict for every natural number
  \(n\). Hence, so is \(\pi_{n,m}\) whenever \(n \leq m\).
\end{lemma}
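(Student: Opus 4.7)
The plan is to prove the first claim by induction on $n$ and then deduce the statement about $\pi_{n,m}$ from it using the closure of strict maps under composition recorded in \cref{continuity-closure}.

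For the base case $n \equiv 0$, I would simply unfold the definitions: the least element $\bot_{D_1}$ of the exponential $D_0^{D_0}$ is the constant function $\lambdadot{x : D_0}{\bot_{D_0}}$, so
\[
  \pi_0\pa*{\bot_{D_1}} \equiv \bot_{D_1}\pa*{\bot_{D_0}} \equiv \bot_{D_0},
\]
as required. For the induction step, assume that $\pi_n\pa*{\bot_{D_{n+1}}} = \bot_{D_n}$. The least element $\bot_{D_{n+2}}$ of $D_{n+2} \equiv D_{n+1}^{D_{n+1}}$ is the constant function $\lambdadot{y : D_{n+1}}{\bot_{D_{n+1}}}$, so unfolding the definition of $\pi_{n+1}$ we get
\[
  \pi_{n+1}\pa*{\bot_{D_{n+2}}} \equiv \pi_n \circ \bot_{D_{n+2}} \circ \varepsilon_n = \lambdadot{x : D_n}{\pi_n\pa*{\bot_{D_{n+1}}}} = \lambdadot{x : D_n}{\bot_{D_n}},
\]
using the induction hypothesis in the last step. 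But $\lambdadot{x : D_n}{\bot_{D_n}}$ is exactly the least element $\bot_{D_{n+1}}$ of $D_{n+1} \equiv D_n^{D_n}$, so $\pi_{n+1}$ is strict.

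For the second claim, I would argue by cases on the definition of $\pi_{n,m}$. When $n = m$, the map $\pi_{n,m}$ is the identity on $D_n$, which is trivially strict. When $n < m$, the map $\pi_{n,m}$ is by definition the composite $\pi_n \circ \pi_{n+1} \circ \cdots \circ \pi_{m-1}$ of maps that are all strict by the first part of the lemma, and hence $\pi_{n,m}$ is itself strict by the strictness-preservation part of \cref{continuity-closure}.

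There is no real obstacle here: the result is essentially a direct computation plus an appeal to closure of strictness under composition. The only thing to be careful about is correctly identifying the least element of an exponential $D_n^{D_n}$ as the pointwise constant-$\bot$ function, which is immediate from the fact that suprema, and hence the least element, are computed pointwise in exponentials.
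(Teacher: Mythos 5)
Your proposal is correct and takes essentially the same approach as the paper; the paper's proof is simply the one-liner ``Both statements are proved by induction,'' and your argument is exactly the detailed unfolding of that induction (base case by unfolding the definitions of $\pi_0$ and the pointwise least element, inductive step by using the hypothesis inside the composite defining $\pi_{n+1}$, and then strictness of $\pi_{n,m}$ by closure of strict maps under composition from \cref{continuity-closure}).
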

\begin{proof}
  Both statements are proved by induction.
\end{proof}

\begin{proposition}
  The dcpo \(D_\infty\) is pointed.
\end{proposition}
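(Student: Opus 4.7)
The plan is to exhibit a concrete least element of $D_\infty$ by taking the pointwise bottom. Since each $D_n$ is pointed (by induction: $D_0 = \lifting_{\U_0}(\One_{\U_0})$ is pointed, and $D_{n+1} = D_n^{D_n}$ is pointed because exponentials of pointed dcpos are pointed, with least element the constant map to $\bot_{D_n}$), we have a candidate $\sigma : \Pi_{n : \Nat} D_n$ defined by $\sigma_n \colonequiv \bot_{D_n}$.

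First I would check that $\sigma$ actually lies in $D_\infty$, i.e.\ that $\pi_{n,m}(\sigma_m) = \sigma_n$ whenever $n \leq m$. This is immediate from~\cref{pi-is-strict}, which tells us that each $\pi_{n,m}$ is strict, so $\pi_{n,m}(\bot_{D_m}) = \bot_{D_n}$.

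Next I would verify that $\sigma$ is the least element. Recall from the definition of $D_\infty$ that the order is pointwise: $\tau \below_{D_\infty} \tau'$ iff $\tau_n \below_{D_n} \tau'_n$ for every $n : \Nat$. Hence for any $\tau : D_\infty$ we have $\sigma_n \equiv \bot_{D_n} \below_{D_n} \tau_n$ for every $n$, so $\sigma \below_{D_\infty} \tau$, as required.

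There is no real obstacle here; the only subtlety is the appeal to strictness of the projections, which is precisely what~\cref{pi-is-strict} provides, and the observation (implicit in the pointwise definition of suprema and order on $D_\infty$) that bottoms of the $D_n$ assemble into a well-defined element of the bilimit.
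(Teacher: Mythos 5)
Your proof is correct and is essentially the same as the paper's: both take $\sigma$ to be the pointwise bottom, use the strictness of the projections (\cref{pi-is-strict}) to check that $\sigma$ lies in $D_\infty$, and conclude that it is least from the pointwise order. You have merely spelled out the details more explicitly than the paper does.
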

\begin{proof}
  Since every \(D_n\) is pointed, we can consider the function
  \(\sigma : \Pi_{n : \Nat} D_n\) given by \(\sigma(n) \colonequiv
  \bot_{D_n}\). Then \(\sigma\) is an element of \(D_{\infty}\) by
  \cref{pi-is-strict} and it is the least, so \(D_\infty\) is indeed pointed.
\end{proof}

We now work towards showing that \(D_\infty\) is isomorphic to the exponential
\(D_\infty^{D_\infty}\). Note that this exponential is again an element of
\(\DCPOnum{0}{1}{1}\) by~\cref{exponential-universe-parameters}, so the universe
parameters do not increase.

\begin{definition}[\(\Phi_n\)]
  For every natural number \(n\), we define the continuous maps
  \begin{align*}
    \Phi_{n+1} : D_{n+1} &\to  D_\infty^{D_{\infty}} \\
    f &\mapsto \pa{D_\infty \xrightarrow{\pi_{n,\infty}} D_n
    \xrightarrow{f} D_n \xrightarrow{\varepsilon_{n,\infty}} D_\infty}
  \end{align*}
  and \(\Phi_0 : D_0 \to D_{\infty}^{D_{\infty}}\) as
  \(\Phi_1 \circ \varepsilon_0\).
\end{definition}

\begin{lemma}\label{Phi-commutes-with-epsilons}
  For every two natural numbers \(n \leq m\), the diagram

  \[
    \begin{tikzcd}
      D_n \ar[dr,"\varepsilon_{n,m}"'] \ar[rr,"\Phi_n"]
      & & D_\infty^{D_\infty} \\
      & D_m \ar[ur,"\Phi_m"']
    \end{tikzcd}
  \]
  commutes.
\end{lemma}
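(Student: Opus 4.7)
The plan is to reduce the claim to the single-step statement that $\Phi_n = \Phi_{n+1} \circ \varepsilon_n$ for every natural number $n$, and then proceed by induction on $m$ (keeping $n$ fixed) to lift it to arbitrary $n \leq m$. Concretely, for fixed $n$ the base case $m = n$ is immediate since $\varepsilon_{n,n} = \id$ by \cref{epsilon-pi-id}, and for the step from $m$ to $m+1$ we write
\[
  \Phi_{m+1} \circ \varepsilon_{n,m+1} = \Phi_{m+1} \circ \varepsilon_m \circ \varepsilon_{n,m} = \Phi_m \circ \varepsilon_{n,m} = \Phi_n,
\]
where the first equality is the definition of $\varepsilon_{n,m+1}$, the second is the single-step lemma at $m$, and the third is the induction hypothesis.

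It remains to prove the single-step lemma $\Phi_n = \Phi_{n+1} \circ \varepsilon_n$. The case $n = 0$ is definitional: $\Phi_0 \colonequiv \Phi_1 \circ \varepsilon_0$. For $n \geq 1$, I would unfold both sides. By function extensionality, it suffices to show, for every $f : D_n$ (so $f : D_{n-1} \to D_{n-1}$), that $\Phi_{n+1}(\varepsilon_n(f))$ and $\Phi_n(f)$ agree as Scott continuous endomaps on $D_\infty$. Unfolding the definitions,
\[
  \Phi_{n+1}(\varepsilon_n(f)) = \varepsilon_{n,\infty} \circ \varepsilon_n(f) \circ \pi_{n,\infty} = \varepsilon_{n,\infty} \circ \varepsilon_{n-1} \circ f \circ \pi_{n-1} \circ \pi_{n,\infty},
\]
using the definition of $\varepsilon_n$ on $D_n = D_{n-1}^{D_{n-1}}$. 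By the functoriality conditions \cref{epsilon-pi-comms} (combined with the definitions of $\varepsilon_{n-1,n}$ and $\pi_{n-1,n}$ as $\varepsilon_{n-1}$ and $\pi_{n-1}$) together with \cref{epsilon-pi-infty-commutes}, we have $\varepsilon_{n,\infty} \circ \varepsilon_{n-1} = \varepsilon_{n-1,\infty}$ and $\pi_{n-1} \circ \pi_{n,\infty} = \pi_{n-1,\infty}$. Substituting yields $\varepsilon_{n-1,\infty} \circ f \circ \pi_{n-1,\infty}$, which is exactly $\Phi_n(f)$.

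There is no real obstacle here: the argument is essentially a bookkeeping exercise, and the only thing to be careful about is which of the pairs $(\varepsilon_{n-1}, \pi_{n-1})$ and $(\varepsilon_{n,\infty}, \pi_{n,\infty})$ are being composed and in which order. The main input is the compatibility of the $\varepsilon_{(-),\infty}$ and $\pi_{(-),\infty}$ families with the $\varepsilon_{(-),(-)}$ and $\pi_{(-),(-)}$ families established in \cref{epsilon-pi-infty-commutes}, which reduces the verification to the definition of $\varepsilon_n$ and an application of function extensionality.
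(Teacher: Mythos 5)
Your proof is correct and follows essentially the same path as the paper: both reduce by induction (on $m-n$, equivalently your induction on $m$ with $n$ fixed) to the single-step identity $\Phi_n = \Phi_{n+1} \circ \varepsilon_n$, which is definitional for $n = 0$ and for $n \geq 1$ follows from \cref{epsilon-pi-infty-commutes} and unfolding the definitions of $\Phi$ and $\varepsilon_n$. You simply spell out the unfolding that the paper leaves implicit; the only small over-statement is the appeal to \cref{epsilon-pi-comms}, which is not really needed since $\varepsilon_{n-1,n} = \varepsilon_{n-1}$ and $\pi_{n-1,n} = \pi_{n-1}$ hold by the definition of the composite maps, not by the compatibility conditions.
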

\begin{proof}
  By induction on the difference of the two natural numbers, it suffices to
  prove that for every natural number \(n\), the diagram
  \[
    \begin{tikzcd}
      D_n \ar[dr,"\varepsilon_n"'] \ar[rr,"\Phi_n"]
      & & D_\infty^{D_\infty} \\
      & D_{n+1} \ar[ur,"\Phi_{n+1}"']
    \end{tikzcd}
  \]
  commutes. But this follows from~\cref{epsilon-pi-infty-commutes} and unfolding
  the definition of \(\Phi_n\).
\end{proof}

\begin{definition}[\(\Phi\)]
  The map \(\Phi : D_\infty \to D_\infty^{D_\infty}\) is defined as the unique
  Scott continuous map induced by the \(\Phi_n\) via~\cref{colimit}.
\end{definition}

\begin{lemma}\label{Phi-alt}
  For \(\sigma : D_\infty\) we have %
  \(\Phi(\sigma) = \textstyle\bigsqcup_{n : \Nat}\Phi_{n+1}\pa*{\sigma_{n+1}}\).
\end{lemma}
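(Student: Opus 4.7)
The plan is to chain three observations: the universal property characterising $\Phi$, the representation of $\sigma$ as a directed supremum in $D_\infty$, and a cofinality argument to shift the index.

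First, since $\Phi$ is the map obtained from the cocone $(\Phi_n)_{n : \Nat}$ via Theorem~\ref{colimit}, the defining equation is $\Phi \circ \varepsilon_{n,\infty} = \Phi_n$ for every $n : \Nat$. Next, Lemma~\ref{sigma-sup-of-epsilon-pis} (applied to the diagram indexing $D_\infty$) yields $\sigma = \bigsqcup_{n : \Nat} \varepsilon_{n,\infty}(\sigma_n)$, with the family on the right directed. Using Scott continuity of $\Phi$, I would then compute
\[
  \Phi(\sigma) \;=\; \Phi\Bigl(\textstyle\bigsqcup_{n : \Nat} \varepsilon_{n,\infty}(\sigma_n)\Bigr)
  \;=\; \textstyle\bigsqcup_{n : \Nat} \Phi\pa{\varepsilon_{n,\infty}(\sigma_n)}
  \;=\; \textstyle\bigsqcup_{n : \Nat} \Phi_n(\sigma_n).
\]

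It then remains to identify this supremum with $\bigsqcup_{n : \Nat}\Phi_{n+1}(\sigma_{n+1})$, i.e.\ to show that the two families are bicofinal (and hence, by antisymmetry, have equal suprema). One direction is immediate: the shifted family $(\Phi_{n+1}(\sigma_{n+1}))_{n : \Nat}$ is literally a subfamily of $(\Phi_n(\sigma_n))_{n : \Nat}$. For the other direction, I would use Lemma~\ref{Phi-commutes-with-epsilons} (in the case $m = n+1$) to rewrite $\Phi_n = \Phi_{n+1} \circ \varepsilon_n$, together with the fact that $\sigma$ is an element of $D_\infty$, which gives $\pi_n(\sigma_{n+1}) = \sigma_n$, so that $\varepsilon_n(\sigma_n) = \varepsilon_n(\pi_n(\sigma_{n+1})) \below \sigma_{n+1}$ because $\varepsilon_n \circ \pi_n$ is a deflation. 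Monotonicity of $\Phi_{n+1}$ (Lemma~\ref{continuous-implies-monotone}) then yields $\Phi_n(\sigma_n) = \Phi_{n+1}(\varepsilon_n(\sigma_n)) \below \Phi_{n+1}(\sigma_{n+1})$, establishing cofinality of the shifted family.

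There is no real obstacle here: the only mildly non-trivial step is checking that the shifted family is directed (so that writing $\bigsqcup_{n : \Nat}\Phi_{n+1}(\sigma_{n+1})$ is even meaningful), but this is immediate because it is bicofinal with the directed family $(\Phi_n(\sigma_n))_{n : \Nat}$ obtained as the image of a directed family under the monotone map $\Phi$ (Lemma~\ref{image-is-directed}).
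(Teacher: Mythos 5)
Your proof is correct and takes essentially the same approach as the paper's: the paper simply recalls that $\Phi(\sigma) \equiv \bigsqcup_{n:\Nat}\Phi_n(\sigma_n)$ holds definitionally from the construction in the proof of \cref{colimit}, and then states that the claim ``follows easily'' (i.e.\ the index shift you carry out explicitly). You instead re-derive the equation $\Phi(\sigma) = \bigsqcup_{n:\Nat}\Phi_n(\sigma_n)$ from \cref{sigma-sup-of-epsilon-pis}, Scott continuity of $\Phi$, and the universal-property equation $\Phi \circ \varepsilon_{n,\infty} = \Phi_n$ — which is precisely the uniqueness calculation in the proof of \cref{colimit} — so the content is the same, just spelled out rather than recalled.
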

\begin{proof}
  Recalling the proof of~\cref{colimit} we have
  \(\Phi(\sigma) \equiv \textstyle\bigsqcup_{n : \Nat}\Phi_{n}\pa*{\sigma_{n}}\),
  from which the claim follows easily.
\end{proof}

We now define a map in the other direction using that \(D_\infty\) is also the
limit.

\begin{definition}[\(\Psi_n\)]
  For every natural number \(n\), we define the continuous maps
  \begin{align*}
    \Psi_{n+1} : D_\infty^{D_\infty} &\to D_{n+1}  \\
    f &\mapsto \pa{D_n \xrightarrow{\varepsilon_{n,\infty}} D_\infty
    \xrightarrow{f} D_\infty \xrightarrow{\pi_{n,\infty}} D_n}
  \end{align*}
  and \(\Psi_0 : D_\infty^{D_\infty} \to D_0\) as \(\pi_0 \circ \Psi_1\).
\end{definition}

\begin{lemma}\label{Psi-commutes-with-pis}
  For every two natural numbers \(n \leq m\), the diagram
  \[
    \begin{tikzcd}
      D_\infty^{D_\infty} \ar[dr,"\Psi_m"'] \ar[rr,"\Psi_n"]
      & & D_n \\
      & D_m \ar[ur,"\pi_{n,m}"']
    \end{tikzcd}
  \]
  commutes.
\end{lemma}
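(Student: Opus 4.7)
The plan is to mirror the proof of \cref{Phi-commutes-with-epsilons}, reducing to consecutive indices and then unfolding definitions.

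First I would argue that by induction on the difference $m - n$ it suffices to prove the single-step statement: for every natural number $n$, we have $\Psi_n = \pi_n \circ \Psi_{n+1}$. Indeed, once this is known, the factorisation $\pi_{n,m+1} = \pi_{n,m} \circ \pi_m$ (which is immediate from the definition of $\pi_{n,m}$ as an iterated composite) lets us chain
\[
\pi_{n,m+1} \circ \Psi_{m+1} = \pi_{n,m} \circ \pi_m \circ \Psi_{m+1} = \pi_{n,m} \circ \Psi_m = \Psi_n
\]
using the induction hypothesis in the final step, while the $n = m$ case is trivial since $\pi_{n,n} = \id$.

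For the single-step statement at $n = 0$, the equation $\Psi_0 = \pi_0 \circ \Psi_1$ is literally the definition of $\Psi_0$. For $n = k+1 \geq 1$, I would unfold both sides. On the one hand, $\Psi_{k+1}(f) = \pi_{k,\infty} \circ f \circ \varepsilon_{k,\infty}$. On the other hand, $\Psi_{k+2}(f) = \pi_{k+1,\infty} \circ f \circ \varepsilon_{k+1,\infty}$ is a continuous endomap on $D_{k+1}$, and applying $\pi_{k+1}$ to it produces, by definition of $\pi_{k+1}$, the composite
\[
\pi_k \circ \pi_{k+1,\infty} \circ f \circ \varepsilon_{k+1,\infty} \circ \varepsilon_k.
\]
Now \cref{epsilon-pi-infty-commutes} (applied with $i = k$, $j = k+1$, and recalling that $\pi_{k,k+1} = \pi_k$ and $\varepsilon_{k,k+1} = \varepsilon_k$) gives $\pi_k \circ \pi_{k+1,\infty} = \pi_{k,\infty}$ and $\varepsilon_{k+1,\infty} \circ \varepsilon_k = \varepsilon_{k,\infty}$, so the displayed composite collapses to $\pi_{k,\infty} \circ f \circ \varepsilon_{k,\infty} = \Psi_{k+1}(f)$, as required.

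There is no real obstacle: the argument is entirely dual to that of \cref{Phi-commutes-with-epsilons}, and the only mild nuisance is the bookkeeping enforced by the two separate cases ($n = 0$ and $n \geq 1$) in the definition of $\Psi_n$, which is easily managed by writing $n = k+1$ in the inductive case.
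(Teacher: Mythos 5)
Your proof is correct and follows essentially the same route the paper intends: the paper's proof of \cref{Psi-commutes-with-pis} is simply ``Similar to \cref{Phi-commutes-with-epsilons}'', and you have mirrored that argument faithfully, reducing by induction to the consecutive-index case and then unfolding $\pi_{n+1}$ and invoking \cref{epsilon-pi-infty-commutes}. The extra care you take with the $n=0$ versus $n \geq 1$ cases of $\Psi_n$ is exactly the bookkeeping the paper elides by saying ``similar''.
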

\begin{proof}
  Similar to \cref{Phi-commutes-with-epsilons}.
\end{proof}

\begin{definition}[\(\Psi\)]
  The map \(\Psi : D_\infty^{D_\infty} \to D_\infty\) is defined as the unique
  Scott continuous map induced by the \(\Psi_n\) via~\cref{limit}.
\end{definition}

\begin{lemma}\label{Psi-alt}
  For \(f : D_\infty^{D_\infty}\) we have
  \(\Psi(f) = \textstyle\bigsqcup_{n : \Nat}\varepsilon_{n+1,\infty}\pa*{\Psi_{n+1}(f)}\).
\end{lemma}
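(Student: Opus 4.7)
The plan is to mimic the proof of the analogous Lemma~\ref{Phi-alt} for $\Phi$, but working on the limit side rather than the colimit side. The key tool is Lemma~\ref{sigma-sup-of-epsilon-pis}, which expresses every element of $D_\infty$ as the directed supremum $\bigsqcup_{n : \Nat}\varepsilon_{n,\infty}(\sigma_n)$ of its components embedded back into $D_\infty$. Applying this to $\sigma \colonequiv \Psi(f)$ immediately gives a formula that is almost what we want, except that the index is $n$ rather than $n+1$.

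First I would unpack the definition of $\Psi$ from the proof of Theorem~\ref{limit}: $\Psi$ is the unique Scott continuous map to the limit determined by the cone $(\Psi_n)_{n : \Nat}$, so $\Psi(f)_n = \Psi_n(f)$ for every $n : \Nat$. (The cone condition needed to invoke Theorem~\ref{limit} is exactly Lemma~\ref{Psi-commutes-with-pis}.) Substituting this into Lemma~\ref{sigma-sup-of-epsilon-pis} applied at $\sigma = \Psi(f)$ yields
\[
\Psi(f) \;=\; \textstyle\bigsqcup_{n : \Nat}\varepsilon_{n,\infty}(\Psi_n(f)).
\]

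The remaining step is to shift the index by one. By Lemma~\ref{epsilon-infty-fam-monotone} (applied to $\Psi(f) \in D_\infty$), the family $n \mapsto \varepsilon_{n,\infty}(\Psi(f)_n) \equiv \varepsilon_{n,\infty}(\Psi_n(f))$ is monotone in $n : \Nat$. Consequently it is cofinal (in $\Ind{\U_0}(D_\infty)$) in the shifted family $n \mapsto \varepsilon_{n+1,\infty}(\Psi_{n+1}(f))$, and conversely the shifted family is cofinal in the original since $n \leq n+1$. Both families are therefore directed with equal suprema, giving
\[
\textstyle\bigsqcup_{n : \Nat}\varepsilon_{n,\infty}(\Psi_n(f)) \;=\; \textstyle\bigsqcup_{n : \Nat}\varepsilon_{n+1,\infty}(\Psi_{n+1}(f)),
\]
which closes the argument. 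There is no genuine obstacle here; the only thing to be careful about is to invoke Lemma~\ref{sigma-sup-of-epsilon-pis} in the correct direction and to verify the cone condition for $\Psi$ via Lemma~\ref{Psi-commutes-with-pis} before identifying $\Psi(f)_n$ with $\Psi_n(f)$.
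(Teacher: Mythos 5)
Your proposal is correct and follows essentially the same route as the paper: apply \cref{sigma-sup-of-epsilon-pis} to $\sigma = \Psi(f)$, rewrite $\pi_{n,\infty}(\Psi(f))$ as $\Psi_n(f)$ via the defining diagram \eqref{f-infty} from \cref{limit}, and then reindex. The paper compresses the index shift into the phrase ``from which the claim follows easily,'' whereas you spell out the cofinality argument using \cref{epsilon-infty-fam-monotone}; that is merely a more explicit rendering of the same final step.
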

\begin{proof}
  Notice that
  \begin{align*}
    \Psi(f)
    &= \textstyle\bigsqcup_{n : \Nat}\varepsilon_{n,\infty}\pa*{\pi_{n,\infty}\pa*{\Psi(f)}}
    &&\text{(by \cref{sigma-sup-of-epsilon-pis})} \\
    &= \textstyle\bigsqcup_{n : \Nat}\varepsilon_{n,\infty}\pa*{\Psi_n(f)}
    &&\text{(by \cref{f-infty})},
  \end{align*}
  from which the claim follows easily.
\end{proof}

\begin{theorem}\label{isomorphic-to-self-exponential}%
  \index{dcpo!isomorphism}\index{exponential}%
  The maps \(\Phi\) and \(\Psi\) are inverses and hence, \(D_\infty\) is
  isomorphic to \(D_\infty^{D_\infty}\).
\end{theorem}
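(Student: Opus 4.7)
The plan is to verify the two identities $\Phi \circ \Psi = \id_{D_\infty^{D_\infty}}$ and $\Psi \circ \Phi = \id_{D_\infty}$ separately, exploiting the limit universal property of $D_\infty$ together with the key convergence identity $\bigsqcup_n \varepsilon_{n,\infty} \circ \pi_{n,\infty} = \id_{D_\infty}$ from \cref{epsilon-pi-sup}.

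For $\Phi \circ \Psi = \id$, I would start from the formula $\Phi(\Psi(f)) = \bigsqcup_{m}\Phi_{m+1}(\Psi(f)_{m+1})$ supplied by \cref{Phi-alt}, combined with the fact that $\Psi(f)_n = \Psi_n(f)$, which is built into the construction of $\Psi$ via the limit property. Unfolding the definitions of $\Phi_{m+1}$ and $\Psi_{m+1}$ yields
\[
  \Phi_{m+1}(\Psi_{m+1}(f)) = e_m \circ f \circ e_m,
\]
where $e_m \colonequiv \varepsilon_{m,\infty} \circ \pi_{m,\infty}$. Since composition of Scott continuous maps is Scott continuous in both arguments (\cref{continuous-in-both-arguments}) and the directed family $(e_m)$ has supremum $\id_{D_\infty}$ by \cref{epsilon-pi-sup}, a standard cofinality argument (the diagonal $\{(k,k)\}$ is cofinal in $\{(n,m)\}$ for any directed family) gives $\bigsqcup_m e_m \circ f \circ e_m = \id \circ f \circ \id = f$, completing this half.

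For $\Psi \circ \Phi = \id$, I would invoke the limit universal property (\cref{limit}) to reduce the goal to showing $\Psi_n \circ \Phi = \pi_{n,\infty}$ for every $n$. Fixing $\sigma : D_\infty$ and $x : D_n$, I would expand
\[
  \Psi_{n+1}(\Phi(\sigma))(x) = \pi_{n,\infty}\bigl(\Phi(\sigma)(\varepsilon_{n,\infty}(x))\bigr)
\]
and then, by Scott continuity of $\pi_{n,\infty}$ together with \cref{Phi-alt}, rewrite this as the directed supremum over $m$ of terms $\pi_{n,\infty}(\varepsilon_{m,\infty}(\sigma_{m+1}(\pi_{m,\infty}(\varepsilon_{n,\infty}(x)))))$. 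Using \cref{epsilon-pi-infty-commutes} together with $\pi_{m,m} = \varepsilon_{m,m} = \id$ from \cref{epsilon-pi-id}, the composites $\pi_{n,\infty} \circ \varepsilon_{m,\infty}$ and $\pi_{m,\infty} \circ \varepsilon_{n,\infty}$ simplify for $m \geq n$ to $\pi_{n,m}$ and $\varepsilon_{n,m}$ respectively, so each such term becomes $\pi_{n,m}(\sigma_{m+1}(\varepsilon_{n,m}(x)))$. A short induction on $m-n$ establishes the auxiliary identity $\pi_{n+1,m+1}(g) = \pi_{n,m} \circ g \circ \varepsilon_{n,m}$ for $g : D_m \to D_m$, and because $\sigma \in D_\infty$ we have $\pi_{n+1,m+1}(\sigma_{m+1}) = \sigma_{n+1}$. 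Hence each such term equals $\sigma_{n+1}(x) = (\pi_{n+1,\infty}(\sigma))(x)$, and the directed supremum collapses to the same value, yielding $\Psi_{n+1} \circ \Phi = \pi_{n+1,\infty}$ as required.

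The hard part will be bookkeeping rather than ideas: correctly tracking $\pi_{n,\infty} \circ \varepsilon_{m,\infty}$ and $\pi_{m,\infty} \circ \varepsilon_{n,\infty}$ according to the relative order of $n$ and $m$, and isolating the auxiliary inductive lemma describing how $\pi_{n+1,m+1}$ acts on the exponential $D_m \to D_m$. No individual step is deep, but each requires a careful, universe-aware manipulation of the embedding-projection pairs between the $D_n$.
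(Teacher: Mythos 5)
Your proposal is correct, but it takes a genuinely different route from the paper's for one of the two directions.

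For \(\Phi \circ \Psi = \id\), you are close to the paper but slightly streamlined: instead of expanding \(\Psi(f)\) through \cref{Psi-alt} and then diagonalising a nested directed supremum, you go straight to \(\Psi(f)_{m+1} = \Psi_{m+1}(f)\), which is the defining relation from the limit universal property~(\cref{limit}). Combined with \cref{Phi-alt} this immediately yields \(\Phi(\Psi(f)) = \bigsqcup_m e_m \circ f \circ e_m\) with \(e_m \colonequiv \varepsilon_{m,\infty} \circ \pi_{m,\infty}\), and you then conclude via continuity of composition and \cref{epsilon-pi-sup}, just as the paper does.

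For \(\Psi \circ \Phi = \id\), the paper uses the same pattern as for the other direction: expand with \cref{Phi-alt} and \cref{Psi-alt}, obtain a nested directed supremum, diagonalise, use \(\pi_{n,\infty} \circ \varepsilon_{n,\infty} = \id\), and conclude with \cref{sigma-sup-of-epsilon-pis}. You instead reduce to a componentwise statement (\(\Psi_n \circ \Phi = \pi_{n,\infty}\)), expand \(\Phi(\sigma)\) pointwise, use \cref{epsilon-pi-infty-commutes} to reduce \(\pi_{n,\infty} \circ \varepsilon_{m,\infty}\) and \(\pi_{m,\infty}\circ \varepsilon_{n,\infty}\) to the finite maps \(\pi_{n,m}\) and \(\varepsilon_{n,m}\), and then invoke the auxiliary inductive identity \(\pi_{n+1,m+1}(g) = \pi_{n,m} \circ g \circ \varepsilon_{n,m}\) together with \(\pi_{n+1,m+1}(\sigma_{m+1}) = \sigma_{n+1}\). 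I checked the auxiliary identity, the cofinality of \(\{m : m \ge n\}\), and the small gap at \(n = 0\) (which follows from the \(n = 1\) case via \(\Psi_0 = \pi_0 \circ \Psi_1\) and \(\pi_{0,\infty} = \pi_0 \circ \pi_{1,\infty}\)); all are fine. The trade-off is clear: the paper's version is shorter and keeps the two halves of the argument symmetric, avoiding any extra lemma about the finite projections; your version is more hands-on and makes the concrete structure of the \(D_n\)-tower visible, at the cost of the bookkeeping you yourself flagged. Both are valid proofs.
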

\begin{proof}
  For arbitrary \(\sigma : D_\infty\) we calculate that
  \begin{align*}
    \Psi(\Phi(\sigma))
    &= \Psi\pa*{\textstyle\bigsqcup_{n : \Nat}\Phi_{n+1}\pa*{\sigma_{n+1}}}
    &&\text{(by \cref{Phi-alt})} \\
    &= \textstyle\bigsqcup_{n : \Nat}\Psi\pa*{\Phi_{n+1}\pa*{\sigma_{n+1}}}
    &&\text{(by Scott continuity of \(\Psi\))} \\
    &= \textstyle\bigsqcup_{n : \Nat}\textstyle\bigsqcup_{m : \Nat}
      \varepsilon_{m+1,\infty}\pa*{\Psi_{m+1}\pa*{\Phi_{n+1}\pa*{\sigma_{n+1}}}}
    &&\text{(by \cref{Psi-alt})} \\
    &= \textstyle\bigsqcup_{n : \Nat}\varepsilon_{n+1,\infty}\pa*{\Psi_{n+1}\pa*{\Phi_{n+1}\pa*{\sigma_{n+1}}}} \\
    &\equiv \textstyle\bigsqcup_{n : \Nat}\varepsilon_{n+1,\infty}\pa*{\pi_{n,\infty} \circ \varepsilon_{n,\infty} \circ \sigma_{n+1} \circ \pi_{n,\infty} \circ \varepsilon_{n,\infty}}
    &&\text{(by definition)} \\
    &= \textstyle\bigsqcup_{n : \Nat} \varepsilon_{n+1,\infty}\pa*{\sigma_{n+1}}
    &&\text{(since \(\pi_{n,\infty} \circ \varepsilon_{n,\infty} = \id\))} \\
    &= \sigma
    &&\text{(by \cref{sigma-sup-of-epsilon-pis})},
  \end{align*}
  so \(\Phi\) is indeed a section of \(\Psi\).
  Moreover, for arbitrary \(f : D_\infty^{D_\infty}\) we calculate that
  \begin{align*}
    \Phi\pa*{\Psi(f)}
    &= \Phi\pa*{\textstyle\bigsqcup_{n : \Nat}\varepsilon_{n+1,\infty}\pa*{\Psi_{n+1}(f)}}
    &&\text{(by \cref{Psi-alt})} \\
    &= \textstyle\bigsqcup_{n : \Nat}\Phi\pa*{\varepsilon_{n+1,\infty}\pa*{\Psi_{n+1}(f)}}
    &&\text{(by Scott continuity of \(\Phi\))} \\
    &= \textstyle\bigsqcup_{n : \Nat}\textstyle\bigsqcup_{m : \Nat}
      \Phi_{m+1}\pa*{\pi_{m+1,\infty}\pa*{\varepsilon_{n+1,\infty}\pa*{\Psi_{n+1}(f)}}}
    &&\text{(by \cref{Phi-alt})} \\
    &= \textstyle\bigsqcup_{n : \Nat}\Phi_{n+1}\pa*{\pi_{n+1,\infty}\pa*{\varepsilon_{n+1,\infty}\pa*{\Psi_{n+1}(f)}}} \\
    &= \textstyle\bigsqcup_{n : \Nat}\Phi_{n+1}\pa*{\Psi_{n+1}(f)}
    &&\text{(since \(\pi_{n+1,\infty} \circ \varepsilon_{n+1,\infty} = \id\))} \\
    &\equiv \textstyle\bigsqcup_{n : \Nat}\pa*{\varepsilon_{n,\infty} \circ \pi_{n,\infty} \circ f \circ \varepsilon_{n,\infty} \circ \pi_{n,\infty}}
    &&\text{(by definition)} \\
    &= \pa*{\textstyle\bigsqcup_{n : \Nat}\varepsilon_{n,\infty} \circ \pi_{n,\infty}} \circ f \circ \pa*{\textstyle\bigsqcup_{m : \Nat}\varepsilon_{m,\infty} \circ \pi_{m,\infty}} \\
    &= f
    &&\text{(by \cref{epsilon-pi-sup})},
  \end{align*}
  finishing the proof.
\end{proof}

\begin{remark}
  Of course, \cref{isomorphic-to-self-exponential} is only interesting when
  \(D_\infty\) is not the trivial one-element dcpo. Fortunately, \(D_\infty\)
  has (infinitely) many elements besides the least
  element\(\bot_{D_\infty}\). For instance, we can consider
  \(x \colonequiv \eta(\star) : D_0\) and observe that
  \(\varepsilon_{0,\infty}(x)\) is an element of \(D_\infty\) not equal
  to~\(\bot_{D_\infty}\), since \(x \neq \bot_{D_0}\).
\end{remark}

\begin{theorem}
  Scott's \(D_\infty\) has a small compact basis and in particular is
  (structurally) algebraic.%
  \index{basis!compact}\index{algebraicity!structural}\index{algebraicity}%
\end{theorem}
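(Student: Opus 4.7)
The plan is to combine the bilimit result for small compact bases (Theorem~\ref{bilimit-has-small-basis}) with the exponential result (Theorem~\ref{exponential-has-small-compact-basis}), using induction on~$n$ to equip each~$D_n$ with a small compact basis. Concretely, I would argue that for every natural number~$n$, the dcpo~$D_n$ is sup-complete and carries a specified small compact basis $\beta_n : B_n \to D_n$.

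For the base case, recall that $D_0 \equiv \lifting_{\U_0}(\One_{\U_0})$, which by Example~\ref{lifting-has-small-compact-basis} has a small compact basis $\beta_0 : (\One + \One_{\U_0}) \to D_0$. Moreover, $\lifting_{\U_0}(\One_{\U_0})$ is equivalent to $\Omega_{\U_0}$, which by Example~\ref{Omega-as-pointed-dcpo} has suprema for \emph{all} families indexed by types in~$\U_0$, so $D_0$ is $\U_0$-sup-complete. For the inductive step, assume $D_n$ has a small compact basis and is sup-complete; then $D_{n+1} \equiv D_n^{D_n}$ has a small compact basis by Theorem~\ref{exponential-has-small-compact-basis} (applied with $D \colonequiv E \colonequiv D_n$, which is permissible precisely because $D_n$ is sup-complete), and $D_{n+1}$ is itself sup-complete because suprema in exponentials are computed pointwise. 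This completes the induction.

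With each $D_n$ equipped with a small compact basis $\beta_n : B_n \to D_n$, Theorem~\ref{bilimit-has-small-basis} applied to the directed diagram $(D_n)_{n : \Nat}$ with the embedding-projection pairs $(\varepsilon_{n,m},\pi_{n,m})$ immediately yields that
\[
  \beta_\infty : \Big(\textstyle\sum_{n : \Nat} B_n\Big) \xrightarrow{(n,b) \mapsto \varepsilon_{n,\infty}(\beta_n(b))} D_\infty
\]
is a small compact basis for $D_\infty$. Structural algebraicity then follows from Lemma~\ref{structural-algebraicity-if-small-compact-basis}, and algebraicity follows by propositional truncation.

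I do not anticipate serious obstacles: the one subtlety worth double-checking is that sup-completeness is preserved along the inductive step, but this is immediate from the pointwise computation of suprema in exponentials, together with closure of sup-completeness under this operation. The universe parameters also work out cleanly, since each $B_n$ lives in $\U_0$ (it is built by iterated directification of $B_n \times B_n$-indexed families of single step functions), so $\sum_n B_n$ is again a $\U_0$-type, consistent with $D_\infty : \DCPOnum{0}{1}{1}$.
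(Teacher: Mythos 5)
Your proof is correct and takes essentially the same route as the paper's: both start from Example~\ref{lifting-has-small-compact-basis} for $D_0$, use the isomorphism $D_0 \simeq \Omega_{\U_0}$ to get $\U_0$-sup-completeness, run a simultaneous induction using Theorem~\ref{exponential-has-small-compact-basis} to equip each $D_n$ with a small compact basis while preserving sup-completeness, and then apply Theorem~\ref{bilimit-has-small-basis}.
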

\begin{proof}
  By \cref{lifting-has-small-compact-basis} the \(\U_0\)-dcpo \(D_0\) has a
  small compact basis. Moreover, it is not just a \(\U_0\)-dcpo as it has
  suprema for all (not necessarily directed) families indexed by types in
  \(\U_0\), as \(D_0\) is isomorphic to \(\Omega_{\U_0}\).
  Hence, by induction it follows that
  each \(D_n\) is \(\U_0\)-sup-complete.
  Therefore, by induction and \cref{exponential-has-small-compact-basis} we get
  a small compact basis for each \(D_n\).
  Thus, by \cref{bilimit-has-small-basis}, the bilimit \(D_\infty\) has a small
  basis too.
\end{proof}

\section{Scott's model of the programming language PCF}%
\label{sec:Scott-model-of-PCF}

PCF~\cite{Plotkin1977} is a typed \(\lambda\)-calculus with a base type for
natural numbers and additional constants.%
\index{PCF|seealso {Scott model of PCF}}\index{PCF!types!base}
The full syntax of PCF and its reduction rules (operational semantics) are
described in~\cref{sec:PCF}.
For example, we have numerals \(\underline{n}\) of the base type \(\iota\)
corresponding to natural numbers and basic operations on them, such as a
predecessor term \(\pred\) and a term \(\ifz\) that allows us to perform case
distinction on whether an input is zero or not. The most striking feature of
PCF is its fixed point combinator \(\fix_\sigma\) for every PCF type
\(\sigma\). The idea is that for a term \(t\) of function type
\(\sigma \Rightarrow \sigma\), the term \(\fix_\sigma t\) of type \(\sigma\)
is a fixed point of \(t\). The use of \(\fix\) is that it gives us general
recursion.%
\index{PCF!fixed point}
The operational semantics of PCF is a reduction strategy that allows us to
compute in PCF. We write \(s \smallstep t\) for \(s\) \emph{reduces to}
\(t\). We show a few examples below:
\[
  {\pred \underline{0}} \smallstep \underline{0}; \quad%
  \pred \underline{n + 1} \smallstep \underline{n}; \quad%
  \ifz s\,t\,\underline{0} \smallstep s; \quad%
  \ifz s\,t\,\underline{n + 1} \smallstep t; \quad%
  \fix t \smallstep t (\fix  t).
\]
We see that \(\pred\) indeed acts as a predecessor function and that \(\ifz\)
performs case distinction on whether its third argument is zero or not. The
reduction rule for \(\fix\) reflects that \(\fix t\) is a fixed point of \(t\)
and may be seen as an unfolding (of a recursive definition).

Another way to give meaning to the PCF terms is through denotational semantics,
that is, by giving a model of PCF. A model of PCF assigns to every PCF type
\(\sigma\) some mathematical structure \(\densem{\sigma}\) and to every PCF term
\(t\) of type \(\sigma\) an element \(\densem{t}\) of \(\densem{\sigma}\).%
\index{semantics}
In Scott's model~\cite{Scott1993}, we interpret the PCF types as pointed
dcpos. Specifically, in our constructive and predicative setting, we interpret
the base type \(\iota\) as the lifting of the type of natural numbers and
function types using exponentials.%
\index{Scott model!of PCF}\index{lifting}\index{exponential}
The least element serves as an interpretation of a nonterminating computation,
as is made precise by computational adequacy discussed below.
As mentioned in the introduction to this chapter, the fact that Scott continuous
maps have least fixed points (\cref{least-fixed-point}) will be fundamental in
giving a sound meaning to PCF's fixed point combinator.

Soundness and computational adequacy are important properties that any model of
PCF should have.  Soundness states that if a PCF term \(s\) reduces to \(t\)
(according to the operational semantics), then their interpretations are equal
in the model (symbolically, \(\densem{s} = \densem{t}\)).  Computational
adequacy is completeness at the base type \(\iota\). It says that for every term
\(t\) of type \(\iota\) and every natural number \(n\), if
\(\densem{t} = \densem{\underline {n}}\), then \(t\) reduces to \(\underline{n}\).%
\index{Scott model!of PCF!computational adequacy}\index{Scott model!of PCF!soundness}%
\index{soundness|see {Scott model of PCF}}%
\index{computational adequacy|see {Scott model of PCF}}
The Scott model of PCF was originally proved to be sound and computationally
adequate by \citeauthor{Plotkin1977}~\cite{Plotkin1977} and we prove these
results in our foundational type-theoretic setup too.

Since the base type is interpreted using the lifting, every PCF term of the base
type is interpreted as a partial element of \(\Nat\), and hence gives rise to a
proposition (the domain of the partial element).
Motivated by constructive issues involving countable choice
(see~\cref{sec:semidec-pcf}), we use computational adequacy in a syntactic
approach to establishing that all such propositions are semidecidable.

\subsection{PCF and its operational semantics}\label{sec:PCF}
We precisely define the types and terms of PCF as well as the small-step
operational semantics.%
\index{PCF}
Instead of the formulation by \citeauthor{Plotkin1977}~\cite{Plotkin1977}, which
features variables and \(\lambda\)-abstraction, we revert to the original,
combinatory, formulation of the terms of LCF by
\citeauthor{Scott1993}~\cite{Scott1993} in order to simplify the technical
development.%
\index{combinator}
We note that it is possible to represent every closed \(\lambda\)-term in terms
of combinators by a well-known technique~\cite[Section~2C]{HindleySeldin2008}.

\begin{definition}[PCF types, \(\iota\), \(\sigma \Rightarrow \tau\)]%
  \index{PCF!types}\index{PCF!types!base|textbf}\index{PCF!types!function}%
  The \emph{PCF types} are inductively defined as
  \begin{enumerate}[(i)]
  \item $\iota$ is a type, the \emph{base type}, and
    \nomenclature[iota]{$\iota$}{base type of natural numbers in PCF}
  \item for every two types $\sigma$ and $\tau$, there is a \emph{function type}
    $\sigma \Rightarrow \tau$.
    \nomenclature[arrowzz]{$\sigma \Rightarrow \tau$}{function type in PCF}
  \end{enumerate}
  As usual, $\Rightarrow$ will be right associative, so we write
  $\sigma \Rightarrow \tau \Rightarrow \rho$ for
  $\sigma \Rightarrow (\tau \Rightarrow \rho)$.
\end{definition}

\begin{definition}[PCF terms, \(\zeroo\), \(\succc\), \(\pred\), \(\ifz\), %
    \(\PCFk_{\sigma,\tau}\), \(\PCFs_{\sigma,\tau,\rho}\), \(\fix_\sigma\)]%
  \label{def:PCFterms}%
  \index{PCF!terms}
  The \emph{PCF terms of PCF type $\sigma$} are inductively generated by
  \begin{center}
    \AxiomC{\phantom{$t : \iota$}}
    \UnaryInfC{${\zeroo} \text{ of type } \iota$}
    \DisplayProof \quad
    \AxiomC{\phantom{$t : \iota$}}
    \UnaryInfC{${\succc} \text{ of type } \iota \Rightarrow \iota$}
    \DisplayProof\vspace{0.5cm}\\
    \AxiomC{\phantom{$t : \iota$}}
    \UnaryInfC{$\pred \text{ of type } \iota \Rightarrow \iota$}
    \DisplayProof \quad
    \AxiomC{\phantom{$t : \iota$}}
    \UnaryInfC{${\ifz} \text{ of type } {\iota \Rightarrow \iota \Rightarrow \iota
        \Rightarrow \iota}$}
    \DisplayProof \vspace{0.5cm}\\
    \AxiomC{\phantom{$t : \iota$}}
    \UnaryInfC{${\PCFk_{\sigma,\tau}} \text{ of type } \sigma \Rightarrow \tau \Rightarrow
      \sigma$}
    \DisplayProof \quad \AxiomC{\phantom{$t : \iota$}}
    \UnaryInfC{${\PCFs_{\sigma,\tau,\rho}} \text{ of type } (\sigma \Rightarrow \tau
      \Rightarrow \rho) \Rightarrow \break (\sigma \Rightarrow \tau) \Rightarrow
      \sigma \Rightarrow \rho$}
    \DisplayProof\vspace{0.5cm}\\
    \AxiomC{\phantom{$t : \iota$}}
    \UnaryInfC{${\fix_\sigma} \text{ of type } (\sigma \Rightarrow \sigma)
      \Rightarrow \sigma$}
    \DisplayProof \quad
    \AxiomC{$s \text{ of type } \sigma \Rightarrow \tau$}
    \AxiomC{$t \text{ of type } \sigma$}
    \BinaryInfC{$(st) \text{ of type } \tau$}
    \DisplayProof
  \end{center}
  \nomenclature[zero]{$\zeroo$}{constant for the natural number 0 in PCF}%
  \nomenclature[succ]{$\succc$}{successor function on the natural numbers in PCF}%
  \nomenclature[pred]{$\pred$}{predecessor function on the natural numbers in PCF}%
  \nomenclature[ifz]{$\ifz$}{conditional in PCF}%
  \nomenclature[k]{$\PCFk$}{k-combinator in PCF}%
  \nomenclature[s]{$\PCFs$}{s-combinator in PCF}%
  \nomenclature[fix]{$\fix$}{fixed point combinator in PCF}%
  We will often drop the parentheses in the final clause, as well as the PCF
  type subscripts in \({\PCFk_{\sigma,\tau}}\), \({\PCFs_{\sigma,\tau,\rho}}\)
  and \({\fix_\sigma}\). Finally, we employ the convention that the parentheses
  associate to the left, i.e.\ we write $rst$~for~$(rs)t$.
\end{definition}

\begin{definition}[PCF numerals, \(\underline{n}\)]\label{def:PCFnumerals}%
  \index{PCF!numeral|textbf}%
  For any $n : \Nat$, let us write $\underline n$ for the
  $n$\textsuperscript{th} PCF numeral, defined inductively by
  \(\underline 0 \colonequiv {\zeroo}\) and
  \(\underline {n+1} \colonequiv \succc\,\underline n\).
  \nomenclature[underline]{$\underline{n}$}{numeral in PCF}
\end{definition}

To define the small-step operational semantics of PCF, we first define an
auxiliary type family by induction and then we propositional truncate its values
to obtain the small-step relation.

\begin{definition}[Small-step (pre-)relation, \({\smallsteppre}\), \(\smallstep\)]%
  \label{def:small-step}%
  \index{PCF!operational semantics}\index{PCF!small-step relation}%
  Define the \emph{small-step pre-relation} $\smallsteppre$ of type
  \[
    \Pi_{\sigma : \text{PCF types}}\pa*{\text{PCF terms of type $\sigma$} \to
    \text{PCF terms of type $\sigma$} \to \mathcal{U}_0}
  \]
  \nomenclature[triangle']{$s \smallsteppre t$}{small-step pre-relation of PCF}
  as the inductive family generated by
  \begin{center}
    \AxiomC{\phantom{$f \smallsteppre g$}}
    \UnaryInfC{$\pred\underline 0\smallsteppre\underline 0$}
    \DisplayProof \quad \AxiomC{\phantom{$f \smallsteppre g$}}
    \UnaryInfC{$\pred\underline {n+1}\smallsteppre\underline n$}
    \DisplayProof \quad \AxiomC{\phantom{$f \smallsteppre g$}}
    \UnaryInfC{$\ifz s\,t\,\underline 0\smallsteppre s$} \DisplayProof
    \quad \AxiomC{\phantom{$f \smallsteppre g$}}
    \UnaryInfC{$\ifz s\,t\,\underline {n+1}\smallsteppre t$}
    \DisplayProof \vspace{0.5cm}
    \\
    \AxiomC{\phantom{$f \smallsteppre g$}}
    \UnaryInfC{${\PCFk}\,s\,t\smallsteppre s$} \DisplayProof \quad
    \AxiomC{\phantom{$f \smallsteppre g$}}
    \UnaryInfC{${\PCFs}f\,g\,t\smallsteppre f\,t(g\,t)$} \DisplayProof \quad
    \AxiomC{\phantom{$f\smallsteppre g$}}
    \UnaryInfC{$\fix f\smallsteppre f(\fix f)$} \DisplayProof
    \quad \AxiomC{$f\smallsteppre g$} \UnaryInfC{$f\,t\smallsteppre g\,t$}
    \DisplayProof \vspace{0.5cm}
    \\
    \AxiomC{$s \smallsteppre t$}
    \UnaryInfC{$\succc s \smallsteppre \succc t$} \DisplayProof
    \quad \AxiomC{$s \smallsteppre t$}
    \UnaryInfC{$\pred s \smallsteppre \pred t$} \DisplayProof
    \quad \AxiomC{$r \smallsteppre r'$}
    \UnaryInfC{$\ifz s\,t\,r \smallsteppre \ifz s\,t\,r'$}
    \DisplayProof
  \end{center}

  We have been unable to prove that $s \smallsteppre t$ is a proposition for all
  PCF terms $s$~and~$t$ of the same type. The difficulty is that one cannot
  perform induction on \emph{both} \(s\) and \(t\), because the reduction
  relation is not defined by induction on terms. However, conceptually,
  $s \smallsteppre t$ should be a proposition, as (by inspection of the
  definition), there is at most one way by which we obtained
  $s \smallsteppre t$. Moreover, for technical reasons that will become apparent
  later, we need $\smallsteppre$ to be proposition-valued.

  We solve the problem by defining the \emph{small-step relation} $\smallstep$ as
  the propositional truncation of ${\smallsteppre}$, i.e.\
  $s \smallstep t \colonequiv \squash{s \smallsteppre t}$.%
  \index{propositional truncation}%
  \nomenclature[triangle]{$s \smallstep t$}{small-step relation of PCF}
\end{definition}

\begin{remark}
  Benedikt Ahrens pointed out that in an impredicative framework, one could use
  propositional resizing and an impredicative encoding, i.e.\ by defining
  $\smallstep$ as a $\Pi$-type of all suitable proposition-valued
  relations. This is similar to the situation in set theory, where one would
  define $\smallstep$ as an intersection. Specifically, say that a relation
  \[
    R : \Pi_{\sigma : \text{PCF types}} \pa*{\text{PCF terms of type
        \(\sigma\)} \to \text{PCF terms of type \(\sigma\)} \to \Omega_{\mathcal
        U_0}}
  \] is \emph{suitable} if it closed under all the clauses of
  \cref{def:small-step}, that is, we would want to have elements of
  \[
    R\pa*{\iota,\pred \underline 0,\underline 0},
    R\pa*{\iota,\pred \underline{n+1},\underline n},
    R\pa*{\iota,\ifz\,s\,t\,\underline 0,s},
    \text{etc.}
  \]
  We could then define
  \(s \smallstep_{\text{impred}} t \colonequiv \Pi_{R \text{ suitable}}
  R(\sigma,s,t)\). But notice the increase in universes:
  \[
    \smallstep_{\text{impred}} : \Pi_{\sigma : \text{PCF types}}
    \pa*{\text{PCF terms of type \(\sigma\)} \to \text{PCF terms of type
        \(\sigma\)} \to \Omega_{\mathcal U_1}}.
  \]
  So because of this increase, \(\smallstep_{\text{impred}}\) itself is not one
  of the suitable relations.
  Moreover, the definition \({\smallstep_{\text{impred}}}\) only quantifies over
  \(\U_0\)-valued relations, and so has no bearing on relations valued in other
  universes.
  Therefore \(\smallstep_{\text{impred}}\) does not
  satisfy the appropriate universal property of being the least relation closed
  under the clauses in \cref{def:small-step}.
  Notice that this is exactly the same phenomenon as discussed in
  \cref{sec:prop-trunc-resizing} for the Voevodsky propositional truncation.
  With propositional resizing we could resize the relations to all have values
  in \(\U_0\) and obtain the appropriate universal property.
  The advantage of using the propositional truncation above is that it does
  satisfy the right universal property even without propositional resizing.
\end{remark}

\begin{samepage} 
  Let $R : X \to X \to \Omega$ be a relation on a type $X$. We might try to
  define the reflexive transitive closure $R_\ast$ of $R$ as an inductive
  family, generated by three constructors:
  \nopagebreak%
\begin{align*}
  \operatorname{extend} &: \Pi_{x,y : X}\pa*{x \relR y \to x \relR_\ast y}; \\
  \operatorname{refl}   &: \Pi_{x : X}\pa*{x \relR_\ast x}; \\
  \operatorname{trans} &: \Pi_{x,y,z : X}\pa*{x \relR_\ast y \to y \relR_\ast z \to x \relR_\ast z}.
\end{align*}
\end{samepage}
But $R_\ast$ is not necessarily proposition-valued, even though $R$ is. This is
because we might add a pair $(x,y)$ to $R_\ast$ in more than one way, for
example, once by an instance of $\operatorname{extend}$ and once by an instance of
$\operatorname{trans}$. Thus, we are led to the following definition.

\begin{definition}\label{def:refltransclos}%
  \index{reflexive transitive closure|textbf}%
  Let $R : X \to X \to \Omega$ be a relation on a type $X$. We define the
  \emph{reflexive transitive closure} $R^\ast$ of $R$ by
  $x \relR^\ast y \colonequiv \squash*{x \relR_\ast y}$, where $R_\ast$ is as above.
\end{definition}
It is not hard to show that $R^\ast$ is the least reflexive and transitive
proposition-valued relation that extends $R$, so $R^\ast$ satisfies the
appropriate universal property.

Some properties of $\smallstep$ reflect onto $\smallstepclos$ as the following
lemma shows.
\nomenclature[triangleast]{$s \smallstepclos t$}{reflexive transitive closure
  of the small-step relation of PCF}
\begin{lemma}\label{refltransreduce}
  Let $r', r, s$ and $t$ be PCF terms of type $\iota$. If
  $r' \smallstepclos r$, then
  \begin{enumerate}[(i)]
  \item\label{refltransreduce-1} $\succc r' \smallstepclos \succc r$,
  \item $\pred r' \smallstepclos \pred r$, and
  \item $\ifz s\,t\,r' \smallstepclos \ifz s\,t\,r$.
  \end{enumerate}
  Moreover, if $f$ and $g$ are PCF terms of type $\sigma\Rightarrow\tau$ and
  $f \smallstepclos g$, then $f t \smallstepclos g t$ for any PCF term $t$ of
  type $\sigma$.
\end{lemma}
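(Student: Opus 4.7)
The plan is to prove each of the four closure properties by the same strategy, so I would only spell out the argument for \ref{refltransreduce-1} (the \(\succc\) case) in detail and indicate that the remaining three are entirely analogous, each using a matching congruence rule of \(\smallsteppre\) from \cref{def:small-step}.

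Fix \(r',r\) of type \(\iota\) and assume \(r' \smallstepclos r\). The goal \(\succc r' \smallstepclos \succc r\) is a proposition (being itself a propositional truncation), so we may strip the outer truncation in the hypothesis and assume a derivation \(p : r' \smallstep_\ast r\) in the untruncated inductive family generating the reflexive transitive closure from \cref{def:refltransclos}. I would then proceed by induction on \(p\), treating the three constructors in turn:
\begin{itemize}
\item If \(p\) arises from \(\operatorname{refl}\), then \(r' \equiv r\) and the conclusion follows from reflexivity of \(\smallstepclos\).
\item If \(p\) arises from \(\operatorname{trans}\) applied to \(p_1 : r' \smallstep_\ast r''\) and \(p_2 : r'' \smallstep_\ast r\), then the induction hypotheses give \(\succc r' \smallstepclos \succc r''\) and \(\succc r'' \smallstepclos \succc r\), and transitivity of \(\smallstepclos\) yields the conclusion.
\item If \(p\) arises from \(\operatorname{extend}\) applied to some \(q : r' \smallstep r\), then, since the conclusion is a proposition, we may strip the truncation defining \(\smallstep\) to obtain \(q' : r' \smallsteppre r\). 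Applying the congruence rule \(s \smallsteppre t \vdash \succc s \smallsteppre \succc t\) of \cref{def:small-step} gives \(\succc r' \smallsteppre \succc r\), hence \(\succc r' \smallstep \succc r\) by truncation, and finally \(\succc r' \smallstepclos \succc r\) by \(\operatorname{extend}\).
\end{itemize}

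The proofs of the \(\pred\), \(\ifz\) and application clauses are obtained by the same induction, with the \(\operatorname{extend}\) case invoking the rules \(s \smallsteppre t \vdash \pred s \smallsteppre \pred t\), \(r \smallsteppre r' \vdash \ifz s\,t\,r \smallsteppre \ifz s\,t\,r'\), and \(f \smallsteppre g \vdash f\,t \smallsteppre g\,t\) respectively; each of these is listed as a primitive constructor of \(\smallsteppre\) in \cref{def:small-step}, so no additional ingenuity is required.

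There is no real obstacle here: the only subtle point is that the induction principle on \(\smallstep_\ast\) applies because the targets are propositions (so the double truncations of \(\smallstep\) and of \(R_\ast\) can be eliminated into them), and that \(\smallstepclos\) is itself reflexive and transitive by construction as argued right after \cref{def:refltransclos}.
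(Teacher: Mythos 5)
Your proof is correct and follows essentially the same route as the paper's: strip the outer truncation of the hypothesis (using that the goal is a proposition), do induction on the resulting untruncated derivation \(p : r' \smallstep_\ast r\), handle \(\operatorname{refl}\) and \(\operatorname{trans}\) via reflexivity and transitivity of \(\smallstepclos\), and in the \(\operatorname{extend}\) case strip the inner truncation of \(\smallstep\), apply the relevant congruence constructor of \(\smallsteppre\), and re-truncate. The only cosmetic difference is that you spell out the \(\operatorname{refl}\) and \(\operatorname{trans}\) cases, which the paper simply calls easy.
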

\begin{proof}
  We only prove~\ref{refltransreduce-1} the rest is similar. Suppose
  $r' \smallstepclos r'$.  Since $\succc r' \smallstepclos \succc r$ is a
  proposition, we may assume that we actually have an element $p$ of type
  $r' \smallstep_\ast r'$. Now we can perform induction on $p$. The cases were
  $p$ is formed using $\operatorname{refl}$ or $\operatorname{trans}$ are easy. If $p$ is
  formed by $\operatorname{extend}$, then we get an element of type
  ${r \smallstep r'} \equiv {r \smallsteppre r'}$. Again, as we are proving a
  proposition, we may suppose the existence of an element of type
  $r \smallsteppre r'$. By \cref{def:small-step}, we then get
  $\succc r \smallsteppre \succc r'$. This in turn yields,
  $\succc r'\smallstep\succc r$ and finally we use $\operatorname{extend}$ to get the
  desired $\succc r'\smallstepclos\succc r$.
\end{proof}

\subsection{The Scott model of PCF}

We proceed by defining our constructive version of the Scott model of PCF using
pointed (exponentials of) \(\U_0\)-dcpos.
Recall from~\cref{flat-gives-LPO} that interpreting the base type by naively
adding a least element to the type of natural numbers is constructively
inadequate, which is why we use the lifting (as defined in~\cref{sec:lifting})
of the type of natural numbers.

\begin{definition}[Interpretation of PCF types, \(\densem{\sigma}\)]%
  \index{Scott model!of PCF|(textbf}%
  We inductively define a map
  \[
    \densem{-} : \text{PCF types} \to \DCPOnum{0}{1}{1}
  \]
  \nomenclature[brackets]{$\densem{\sigma}$}{interpretation of a PCF type as a
    pointed dcpo}
  interpreting a PCF type as a pointed \(\U_0\)-dcpo as follows:
  \begin{enumerate}[(i)]
  \item $\densem{\iota} \colonequiv \lifting(\Nat)$;
  \item
    $\densem{\sigma \Rightarrow \tau} \colonequiv
    \densem{\tau}^{\densem{\sigma}}$. \qedhere
  \end{enumerate}
\end{definition}

We recall~\cref{exponential-universe-parameters} on universe parameters and we
remark that is fortunate that the interpretation function \(\densem{-}\) takes
values in \(\DCPOnum{0}{1}{1}\) and does not require ever-increasing universe
parameters.

Next, we interpret PCF terms as elements of these pointed dcpos, for which we
will need that \(\lifting\) is a monad (with unit \(\eta\)) and (in particular)
a functor (recall \cref{lifting-is-monad,def:lifting-functor}).

\begin{definition}[Interpretation of PCF terms, \(\densem{t}\)]%
  Define for each PCF term $t$ of PCF type $\sigma$ a term $\densem{t}$ of type
  $\densem{\sigma}$, by the following inductive clauses:
  \nomenclature[brackets]{$\densem{t}$}{interpretation of a PCF term as an
    element of a pointed dcpo}
  \begin{enumerate}[(i)]
  \item $\densem{\zeroo} \colonequiv \eta(0)$;
  \item $\densem{\succc} \colonequiv \lifting(s)$, where
    $s : \Nat \to \Nat$ is the successor function;
  \item $\densem{\pred} \colonequiv \lifting(p)$, where
    $p : \Nat \to \Nat$ is the predecessor function with the convention that 0 is mapped to 0;
  \item
    $\densem{\ifz} :
      \densem{\iota\Rightarrow\iota\Rightarrow\iota\Rightarrow\iota}$ is defined
      using the Kleisli extension (\cref{lifting-is-monad}) as:
      $\lambda x,y.\pa*{\chi_{x,y}}^{\#}$, where
    \[
      \chi_{x,y}(n) \colonequiv
      \begin{cases}
        x &\text{if } n = 0; \\
        y &\text{else};
      \end{cases}
    \]
  \item $\densem{\PCFk} \colonequiv \lambda x, y . x$;
  \item $\densem{\PCFs} \colonequiv \lambda f, g, x . (f (x)) (g (x))$;
  \item $\densem{\fix} \colonequiv \mu$, where $\mu$ is the least fixed
    point operator from \cref{least-fixed-point};%
    \index{least fixed point}%
  \item \(\densem{st} \colonequiv \densem{s}\pa*{\densem{t}}\) for \(s\) of
    type \(\sigma \Rightarrow \tau\) and \(t\) of type \(\sigma\).
    \qedhere
  \end{enumerate}
\end{definition}
\index{Scott model!of PCF|)textbf}%

\begin{remark}
  Of course, there are some things to be proved here. Namely,
  \(\densem{\succc}\), \(\densem{\pred}\), \dots, \(\densem{\fix}\),
  \(\densem{s\,t}\) all need to be Scott continuous. In the case of
  $\densem{\succc}$ and $\densem{\pred}$, we simply appeal to
  \cref{lifting-extension-is-continuous,def:lifting-functor}.
  For $\densem{\fix}$, this is guaranteed by \cref{least-fixed-point}.
  The continuity of \(\densem{\PCFk}\), \(\densem{\PCFs}\) and \(\densem{\ifz}\)
  can be verified directly; the details are omitted here.
  Finally, the interpretation of application is continuous
  by~\cref{exponential-universal-property}.
  \end{remark}

As a first result about our denotational semantics, we show that the PCF
numerals have a canonical interpretation in the denotational semantics.
This basic result is fundamental and finds application in the proof of
soundness.

\begin{proposition}\label{densemnumeral}
  For every natural number $n$, we have $\densem{\underline n} = \eta (n)$.
\end{proposition}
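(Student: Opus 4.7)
The plan is a straightforward induction on $n$, using the definitions of the PCF numerals, the interpretation of $\zeroo$ and $\succc$, and the functorial action of the lifting from \cref{def:lifting-functor}.

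For the base case, I would simply unfold definitions: $\densem{\underline 0} \equiv \densem{\zeroo} \equiv \eta(0)$, so there is nothing to prove.

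For the inductive step, assume $\densem{\underline n} = \eta(n)$. Then I would compute
\[
  \densem{\underline{n+1}} \equiv \densem{\succc\,\underline n} \equiv \densem{\succc}\pa*{\densem{\underline n}} \equiv \lifting(s)\pa*{\densem{\underline n}} = \lifting(s)\pa*{\eta(n)},
\]
using the inductive hypothesis in the last step. It then remains to observe that $\lifting(s)\pa*{\eta(n)} = \eta(s(n)) \equiv \eta(n+1)$. By \cref{def:lifting-functor}, the functorial action of $\lifting$ is given by post-composition, so $\lifting(s)\pa*{\eta(n)} \equiv \lifting(s)\pa*{\One_{\U_0},\lambdadot{u}{n}} \equiv \pa*{\One_{\U_0},s \circ \lambdadot{u}{n}} = \pa*{\One_{\U_0},\lambdadot{u}{s(n)}} \equiv \eta(s(n))$, where the middle equality is just the definition of function composition.

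There is no real obstacle here; the only subtlety is cosmetic, namely that the equation $\lifting(s)\pa*{\eta(n)} = \eta(s(n))$ is essentially the naturality/unit law relating $\eta$ to the functorial action of $\lifting$, which can alternatively be derived from the monad laws of \cref{lifting-is-monad} together with the definition $\lifting(s) \colonequiv (\eta_{\Nat} \circ s)^\#$ and the law $(\eta_{\Nat} \circ s)^\# \circ \eta_{\Nat} \sim \eta_{\Nat} \circ s$. Either route works; the direct computation via \cref{def:lifting-functor} is the shorter one.
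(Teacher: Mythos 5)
Your proof is correct and takes essentially the same approach as the paper: induction on $n$, unfolding $\densem{\zeroo}$ for the base case, and applying the inductive hypothesis plus the definition of the lifting functor (post-composition) for the inductive step. The paper's proof simply cites "by definition of the lift functor" for the final equality, where you have spelled out the computation; the content is identical.
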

\begin{proof}
  We proceed by induction on $n$. The $n\equiv 0$ case is by definition of
  $\densem{\underline 0}$. Suppose $\densem{\underline m} = \eta (m)$ for a
  natural number $m$. Then,
  \begin{align*}
    \densem{\underline {m+1}}
    &\equiv \densem{\succc}(\densem{\underline m}) \\
    &= \lifting(s)(\eta (m))&&\text{(by induction hypothesis)} \\
    &= \eta(m + 1) &&\text{(by definition of the lift functor)},
  \end{align*}
  as desired.
\end{proof}

\subsection{Soundness and computational adequacy}%
\label{sec:soundnesscompadequacy}

Having defined the Scott model of PCF we show that it respects the operational
semantics by proving soundness and computational adequacy.

\begin{theorem}[Soundness]%
  \index{Scott model!of PCF!soundness|textbf}
  If $s \smallstepclos t$, then $\densem{s} = \densem{t}$ for every two PCF
  terms \(s\) and \(t\) (necessarily of the same type).
\end{theorem}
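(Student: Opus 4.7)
The plan is to exploit the fact that for every PCF type $\sigma$, the interpretation $\densem{\sigma}$ is a pointed dcpo and hence a set by \cref{posets-are-sets}. Consequently $\densem{s} = \densem{t}$ is a proposition, which means I can eliminate the propositional truncation in $s \smallstepclos t$ and proceed by induction on the underlying inductive family $s \smallstep_\ast t$ from \cref{def:refltransclos}.

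The induction has three constructor cases. The $\operatorname{refl}$ case is immediate, and the $\operatorname{trans}$ case follows from transitivity of equality applied to the two induction hypotheses. This reduces the problem to the $\operatorname{extend}$ case, i.e.\ proving that a single-step reduction $s \smallstep t$ implies $\densem{s} = \densem{t}$. Since $\densem{s} = \densem{t}$ is still a proposition, I can in turn unpack the truncation in the definition $s \smallstep t \colonequiv \squash{s \smallsteppre t}$ and perform an induction on the inductive family $\smallsteppre$ from \cref{def:small-step}.

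This leaves me with one case per constructor of $\smallsteppre$. The $\pred$ and $\ifz$ reduction cases follow by unfolding the definitions of $\densem{\pred}$, $\densem{\ifz}$ and computing with the Kleisli extension of the lifting monad (\cref{lifting-is-monad}), where \cref{densemnumeral} is used to rewrite $\densem{\underline n}$ as $\eta(n)$ so that the computation rules of $\lifting(p)$ and $\chi_{x,y}^{\#}$ on $\eta(n)$ can be applied. The $\PCFk$ and $\PCFs$ cases are immediate from the definitions of their interpretations as $\lambda x,y.\,x$ and $\lambda f,g,x.\,f(x)(g(x))$. The key case is $\fix f \smallsteppre f(\fix f)$: unfolding, we must show $\mu(\densem{f}) = \densem{f}(\mu(\densem{f}))$, which is precisely the fixed-point property \ref{is-fixed-point} of $\mu$ from \cref{least-fixed-point}. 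The congruence rules (e.g.\ if $f \smallsteppre g$ then $f\,t \smallsteppre g\,t$, and the analogous rules under $\succc$, $\pred$, $\ifz$) are handled by applying the induction hypothesis $\densem{f} = \densem{g}$ and then $\ap$-ing the appropriate interpretation function; Scott continuity of application and of $\densem{\succc}$, $\densem{\pred}$, $\densem{\ifz}$ ensures these terms are well-defined and respect equality.

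I expect no real obstacle: the main subtlety is simply remembering to truncate-eliminate twice, which is justified by sethood of the $\densem{\sigma}$, and then mechanically dispatching all eleven clauses of $\smallsteppre$. The $\fix$ case, which might seem the most delicate, is already taken care of by \cref{least-fixed-point} which provides $f(\mu(f)) = \mu(f)$ on the nose; and the reductions involving numerals are the only ones that require a small computation, handled uniformly via \cref{densemnumeral}.
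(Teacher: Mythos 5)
Your proposal is correct and takes essentially the same route as the paper: observe that \(\densem{s} = \densem{t}\) is a proposition because the carriers of dcpos are sets, eliminate the two propositional truncations (in \(\smallstepclos\) and then in \(\smallstep\)) to reduce to induction on \(\smallstep_\ast\) and then on \(\smallsteppre\), and dispatch the single-step cases via the Kleisli monad laws, \cref{densemnumeral}, the fixed-point equation of \(\mu\) from \cref{least-fixed-point}, and congruence. The paper's proof is more terse, sketching only the \(\ifz\,s\,t\,\underline{n+1}\) case, but the underlying method is identical.
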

\begin{proof}
  Since the carriers of dcpos are sets, the type \(\densem{s} = \densem{t}\) is
  a proposition. Therefore, we can use induction on the derivation of
  \(s \smallstepclos t\). We use the Kleisli monad laws in proving some of the
  cases. For example, one step is to prove that
  \[
    \densem{\ifz s\,t\,\underline{n+1}} = \densem{t}.
  \]
  This may be proved by the following chain of equalities:
  \begin{align*}
    \densem{\ifz s\,t\,\underline{n+1}}
    &\equiv
      \densem{\ifz s\,t}(\densem{\underline{n+1}}) \\
    &= \densem{\ifz s\,t}(\eta(n+1))
    &&\text{(by \cref{densemnumeral})} \\
    &\equiv \pa*{\chi_{\densem{s},\densem{t}}}^\#(\eta(n+1))
    &&\text{(by definition of $\densem{\ifz}$)} \\
    &= \chi_{\densem{s},\densem{t}}(n+1)
    &&\text{(by \cref{lifting-is-monad})} \\
    &= \densem{t}.
  \end{align*}
  The other cases are proved similarly.
\end{proof}

Ideally, we would like a converse to soundness. However, this is not possible,
as for example,
${\densem{\PCFk{\zeroo}}} = {\densem{\PCFk (\succc (\pred \zeroo))}}$, but
neither ${\PCFk \zeroo} \smallstepclos {\PCFk (\succc (\pred \zeroo))}$ nor
${\PCFk (\succc (\pred \zeroo))} \smallstepclos {\PCFk \zeroo}$ holds. We do,
however, have the following.

\begin{theorem}[Computational adequacy]\label{Adequacy}%
  \index{Scott model!of PCF!computational adequacy|textbf}%
  \index{PCF!types!base}\index{PCF!numeral}%
  For a PCF term \(t\) of the base type, if the partial element \(\densem{t}\)
  is defined, then \(t\) reduces to the numeral given by the value of
  \(\densem{t}\).
\end{theorem}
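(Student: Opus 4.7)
The plan is to follow Plotkin's classical technique of logical relations, adapted to our constructive univalent setting. For each PCF type $\sigma$, I will define a proposition-valued relation $R_\sigma$ between elements of $\densem{\sigma}$ and closed PCF terms of type $\sigma$, by induction on the structure of $\sigma$. At the base type, $R_\iota(d, t)$ will hold iff $d = \eta(n)$ implies $t \smallstepclos \underline{n}$ for every $n : \Nat$; at function type, $R_{\sigma \Rightarrow \tau}(f, s)$ will hold iff for all $d : \densem{\sigma}$ and closed terms $r$ of type $\sigma$ with $R_\sigma(d, r)$, we have $R_\tau(f(d), s\,r)$. Since the small-step relation is proposition-valued by construction, $R_\sigma$ is proposition-valued at every PCF type $\sigma$ by~\cref{Pi-is-prop}.

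Before getting to the main lemma, I will establish three auxiliary properties by induction on $\sigma$. First, a \emph{bottom lemma}: $R_\sigma(\bot_{\densem{\sigma}}, t)$ holds for every closed PCF term $t$ of type $\sigma$; this is vacuous at base type since $\bot \neq \eta(n)$, and at function type it is immediate from the fact that $\bot$ in the exponential is the constant-$\bot$ function. Second, a \emph{sup lemma}: if $\alpha : I \to \densem{\sigma}$ is directed with $I : \U_0$ and $R_\sigma(\alpha_i, t)$ for every $i : I$, then $R_\sigma(\bigsqcup \alpha, t)$. At base type this uses the key feature of the lifting (from the proof of~\cref{lifting-is-pointed-dcpo}) that $\bigsqcup \alpha = \eta(n)$ implies there exists $i : I$ with $\alpha_i = \eta(n)$, combined with the observation that $t \smallstepclos \underline{n}$ is a proposition and so we may eliminate the propositional truncation. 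Third, a \emph{backward reduction lemma}: if $s \smallstepclos s'$ and $R_\sigma(d, s')$, then $R_\sigma(d, s)$; this follows at base type by transitivity of $\smallstepclos$, and at function type from~\cref{refltransreduce} giving $s\,r \smallstepclos s'\,r$.

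The \emph{main lemma} will then state that $R_\sigma(\densem{t}, t)$ holds for every closed PCF term $t$ of type $\sigma$, proved by induction on the structure of $t$. The constant cases ($\zeroo$, $\succc$, $\pred$, $\ifz$, $\PCFk$, $\PCFs$) are direct verifications that unfold the denotational definitions, use the corresponding small-step rules from~\cref{def:small-step}, and rely on~\cref{densemnumeral} to identify interpretations of numerals. The application case $s\,t$ is immediate from the two induction hypotheses and the definition of $R$ at function type. The delicate case is $\fix_\sigma$: given any $d$ and $t$ with $R_{\sigma \Rightarrow \sigma}(d, t)$, I need $R_\sigma(\mu(d), \fix_\sigma t)$ where $\mu(d) = \bigsqcup_{n : \Nat} d^n(\bot)$ by~\cref{least-fixed-point}. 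The sup lemma reduces this to proving $R_\sigma(d^n(\bot), \fix_\sigma t)$ for every $n : \Nat$, which I will show by induction on $n$: the base case is the bottom lemma, while the successor step uses $R_{\sigma \Rightarrow \sigma}(d, t)$ with the inductive hypothesis to derive $R_\sigma(d^{n+1}(\bot), t(\fix_\sigma t))$, and then invokes the backward reduction lemma together with $\fix_\sigma t \smallstep t(\fix_\sigma t)$ to conclude $R_\sigma(d^{n+1}(\bot), \fix_\sigma t)$.

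Computational adequacy will then follow immediately: if $\densem{t}$ is defined with value $n \colonequiv \liftvalue(\densem{t})$, then $\densem{t} = \eta(n)$, and applying the main lemma to $t$ gives $R_\iota(\densem{t}, t)$, which by definition yields $t \smallstepclos \underline{n}$. I expect the main obstacle to lie in the $\fix$ case and the sup lemma at base type: I must be careful that everything stays proposition-valued so that the propositional truncation implicit in the construction of suprema in $\lifting(\Nat)$ can be eliminated cleanly, and that the nested inductions (over types, over terms, and finally over $n$ in the $\fix$ case) can be orchestrated without circularity.
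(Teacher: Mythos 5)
Your proposal follows essentially the same route as the paper: a proposition-valued logical relation defined by induction on PCF types, with exactly the same three auxiliary lemmas (bottom, closure under directed suprema, and backward closure under $\smallstepclos$), the same case analysis in the fundamental lemma with the $\fix$ case handled by an inner induction on $n$ reduced to the sup lemma, and the same observation that proposition-valuedness lets you eliminate the truncation buried in the construction of suprema in the lifting. The only cosmetic difference is your formulation of the base-type relation as ``$d = \eta(n)$ implies $t \smallstepclos \underline{n}$'' where the paper writes $\Pi_{p : \isdefined(d)}\pa*{t \smallstepclos \underline{\liftvalue(d,p)}}$; these are logically equivalent.
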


Equivalently, for every $n : \Nat$, it holds that
$\densem{t} = \densem{\underline{n}}$ implies $t \smallstepclos {\underline n}$.
Another useful rephrasing is: for every PCF term \(t\) of the base type, we have
\(t \smallstepclos \underline{\liftvalue(\densem{t},p)}\) for every
\(p : \isdefined(\densem{t})\).

We do not prove computational adequacy directly, as, unlike soundness, it does
not allow for a straightforward proof by induction on terms. Instead, we use the
standard technique of logical relations~\cite[Chapter 7]{Streicher2006} which
goes back to~\cite{Tait1967} and obtain the result as a direct corollary of
\cref{mainlemma}.

\begin{definition}[Logical relation, \(R_\sigma\)]\label{def:R}%
  \index{logical relation}%
  For every PCF type $\sigma$, define a relation
  \[
    \logicalR{\sigma} : \textup{PCF terms of type
      $\sigma$} \to \densem{\sigma} \to \Omega_{\U_0}
  \]
  \nomenclature[R_sigma]{$s \logicalR{\sigma} t$}{logical relation used to prove
    computational adequacy}
  by induction on $\sigma$ as
  \begin{enumerate}[(i)]
  \item\label{R-iota}
    $t \logicalR{\iota} d \colonequiv \Pi_{p : \isdefined (d)}\pa*{t \smallstepclos
      \underline {\liftvalue(d,p)}}$, and
  \item\label{R-arrow}
    $s \logicalR{\tau \Rightarrow \rho} f \colonequiv \Pi_{t : \text{PCF terms
        of type $\tau$}} \Pi_{d : \densem{\tau}} \pa*{t \logicalR{\tau} d \to st
      \logicalR{\rho} f(d)}$.
  \end{enumerate}
  We sometimes omit the type subscript $\sigma$ in $\logicalR{\sigma}$.
\end{definition}

Note that~\ref{R-iota} in \cref{def:R} is the statement of computational
adequacy. By generalising, we can prove properties of \(R\) by induction on
types.

\begin{lemma}\label{logicalrelationsmall-step}
  If \(s \smallstepclos t\) and \(t \logicalR{\sigma} d\), then
  \(s \logicalR{\sigma} d\), for all PCF types \(\sigma\) and PCF terms
  \(s\)~and~\(t\) of type \(\sigma\) and elements \(d : \densem{\sigma}\).
\end{lemma}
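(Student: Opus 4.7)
The plan is to prove this by induction on the PCF type $\sigma$, exploiting transitivity of $\smallstepclos$ and the congruence property established in \cref{refltransreduce}.

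For the base case $\sigma \equiv \iota$, I would unfold the definition of $\logicalR{\iota}$ and argue directly. Assume $s \smallstepclos t$ and $t \logicalR{\iota} d$. Given any $p : \isdefined(d)$, the assumption yields $t \smallstepclos \underline{\liftvalue(d,p)}$, and composing with $s \smallstepclos t$ via transitivity of the reflexive transitive closure gives $s \smallstepclos \underline{\liftvalue(d,p)}$, as required.

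For the inductive step at $\sigma \equiv \tau \Rightarrow \rho$, assume $s \smallstepclos t$ and $t \logicalR{\tau \Rightarrow \rho} f$. Unfolding the definition of $\logicalR{\tau \Rightarrow \rho}$, I must show that for every PCF term $u$ of type $\tau$ and every $e : \densem{\tau}$ with $u \logicalR{\tau} e$, we have $su \logicalR{\rho} f(e)$. From the second assumption applied to $u$ and $e$, we already have $tu \logicalR{\rho} f(e)$. The final clause of \cref{refltransreduce} converts $s \smallstepclos t$ into $su \smallstepclos tu$. The induction hypothesis at the strictly smaller type $\rho$ then lets us conclude $su \logicalR{\rho} f(e)$ from $su \smallstepclos tu$ and $tu \logicalR{\rho} f(e)$.

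No step looks especially delicate here: the only thing to check is that the relation $\logicalR{\sigma}$ is proposition-valued so that the induction is justified without coherence worries (which is immediate from \cref{Pi-is-prop} together with the fact that $\smallstepclos$ is proposition-valued by construction as the truncated reflexive transitive closure, recalling \cref{def:refltransclos}). The only mild subtlety is being careful to invoke \cref{refltransreduce} in the right form; if a different formulation had been set up (say without the congruence under application), one would have to prove that separately, but since it is already available the lemma falls out in a few lines.
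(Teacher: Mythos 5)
Your proof is correct and follows exactly the same route as the paper's (terse) proof: induction on $\sigma$, with the inductive step resting on the application-congruence clause at the end of \cref{refltransreduce}. You have simply spelled out the details that the paper leaves implicit.
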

\begin{proof}
  By induction on $\sigma$, making use of the last part of
  \cref{refltransreduce}.
\end{proof}

\begin{lemma}\label{logicalrelationconstants}
  For $t$ equal to \(\zeroo\), \({\succc}\), \({\pred}\), \({\ifz}\),
    \({\PCFk}\) or \({\PCFs}\), we have $t \logicR \densem{t}$.
\end{lemma}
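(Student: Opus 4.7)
The plan is to verify each of the six cases separately, using the relevant reduction rules of PCF together with Lemma~\ref{logicalrelationsmall-step} (which lets us reduce the problem from $s \logicR d$ to $s' \logicR d$ whenever $s \smallstepclos s'$), and Lemma~\ref{refltransreduce} (to propagate reductions under the basic PCF constructors).

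For $\zeroo$, I unfold $\densem{\zeroo} = \eta(0)$ and observe that it is defined with value $0$, so the requirement reduces to $\zeroo \smallstepclos \underline{0}$, which holds by reflexivity of $\smallstepclos$ and the definition $\underline{0} \equiv \zeroo$. For $\succc$, given $t \logicR_\iota d$, I need $\succc t \logicR_\iota \lifting(s)(d)$; if $\lifting(s)(d)$ is defined, then so is $d$ with some value $n$, the hypothesis yields $t \smallstepclos \underline{n}$, and Lemma~\ref{refltransreduce} lifts this to $\succc t \smallstepclos \succc\underline{n} \equiv \underline{n+1}$, while $\liftvalue(\lifting(s)(d), \cdot) = n+1$ by the definition of the lift functor. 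The case of $\pred$ is entirely analogous, using Lemma~\ref{refltransreduce} together with the reduction rules $\pred\underline{0} \smallstep \underline{0}$ and $\pred\underline{n+1} \smallstep \underline{n}$; here I will need to split on whether $n$ is zero or a successor so as to match the behaviour of $p : \Nat \to \Nat$.

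For $\PCFk$, given $s \logicR_\sigma d$ and $t \logicR_\tau e$, I need $\PCFk s\, t \logicR_\sigma d$. Since $\PCFk s\, t \smallstep s$, this follows from $s \logicR_\sigma d$ by Lemma~\ref{logicalrelationsmall-step}. For $\PCFs$, given $f \logicR F$, $g \logicR G$, and $t \logicR d$, I unfold $\densem{\PCFs}(F)(G)(d) = F(d)(G(d))$ and note that $\PCFs f\, g\, t \smallstep f\, t\, (g\, t)$. By the hypotheses on $f$ and $g$ (applied with $t \logicR d$), both $f\, t \logicR F(d)$ and $g\, t \logicR G(d)$ hold; applying the first of these with $g\, t \logicR G(d)$ yields $f\, t\, (g\, t) \logicR F(d)(G(d))$, and Lemma~\ref{logicalrelationsmall-step} then gives the required result.

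The main obstacle will be $\ifz$, which requires a careful case split. Given $s \logicR_\iota d_s$, $t \logicR_\iota d_t$, and $r \logicR_\iota d_r$, I must show that $\ifz s\, t\, r \logicR_\iota \pa{\chi_{d_s, d_t}}^\#(d_r)$. If the right-hand side is defined, then by the Kleisli extension the value $d_r$ is defined with some value $n : \Nat$, so by hypothesis $r \smallstepclos \underline{n}$; Lemma~\ref{refltransreduce} lifts this to $\ifz s\, t\, r \smallstepclos \ifz s\, t\, \underline{n}$. I then case-split on $n$: if $n = 0$, the reduction rule gives $\ifz s\, t\, \underline{0} \smallstep s$ while $\chi_{d_s,d_t}(0) = d_s$, so the hypothesis on $s$ finishes the case; if $n = m+1$, the rule gives $\ifz s\, t\, \underline{m+1} \smallstep t$ while $\chi_{d_s,d_t}(m+1) = d_t$, and the hypothesis on $t$ closes this case. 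In both cases, concatenating the reductions via transitivity of $\smallstepclos$ (or equivalently invoking Lemma~\ref{logicalrelationsmall-step}) yields the conclusion.
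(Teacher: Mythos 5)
Your proof is correct and follows essentially the same route as the paper, which just cites \cref{logicalrelationsmall-step} and \cref{refltransreduce} without spelling out the cases. You have worked through each of the six constants explicitly, but the underlying argument — reduce to the relevant $\smallstep$ rule, propagate reductions under the constructor, and invoke closure of the logical relation under $\smallstepclos$ — matches the paper's intended proof.
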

\begin{proof}
  By the previous lemma and \cref{refltransreduce}.
\end{proof}

Next, we wish to extend the previous lemma to the case where
$t \equiv {\fix_\sigma}$ for any PCF type $\sigma$. This is slightly more
complicated and we need two intermediate lemmas.

\begin{lemma}\label{logicalrelationbottom}
  For every PCF type \(\sigma\) and term \(t\) of type \(\sigma\) it holds that
  \(t \logicalR{\sigma} \bot\).
\end{lemma}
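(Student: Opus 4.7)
The plan is to prove this by induction on the PCF type $\sigma$, since the logical relation $R_\sigma$ is defined by induction on $\sigma$. The two cases should unfold neatly from the definitions.

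For the base case $\sigma \equiv \iota$, I would unfold the definition: $t \logicalR{\iota} \bot$ says that for every $p : \isdefined(\bot)$ we have $t \smallstepclos \underline{\liftvalue(\bot, p)}$. But by \cref{def:lifting-bot}, $\isdefined(\bot)$ is $\Zero_{\V}$ (i.e., the empty type), so the required statement is a $\Pi$-type out of the empty type and hence is vacuously inhabited by the eliminator $\uniquefromZero$.

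For the inductive step $\sigma \equiv \tau \Rightarrow \rho$, suppose the result holds for $\rho$. I must show $t \logicalR{\tau \Rightarrow \rho} \bot_{\densem{\tau \Rightarrow \rho}}$, which by clause~\ref{R-arrow} of \cref{def:R} amounts to showing that for every PCF term $s$ of type $\tau$ and every $d : \densem{\tau}$ with $s \logicalR{\tau} d$, we have $ts \logicalR{\rho} \bot_{\densem{\tau \Rightarrow \rho}}(d)$. The key observation is that the least element of the exponential $\densem{\rho}^{\densem{\tau}}$ is the constant map $\lambdadot{x : \densem{\tau}}{\bot_{\densem{\rho}}}$, so $\bot_{\densem{\tau \Rightarrow \rho}}(d) = \bot_{\densem{\rho}}$. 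Thus it suffices to show $ts \logicalR{\rho} \bot_{\densem{\rho}}$, which is precisely the induction hypothesis applied to the term $ts$ of type $\rho$.

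I do not anticipate any real obstacle: the statement is essentially a type-theoretic bookkeeping exercise, with the base case handled by ex falso and the inductive case by the pointwise nature of the least element in exponentials of pointed dcpos. The only subtlety worth flagging is that the statement is correctly formulated as a $\Pi$-type over all terms $t$, so the induction hypothesis is available not just for some fixed term but uniformly for all terms of type $\rho$, which is exactly what is needed to apply it to the application $ts$ in the inductive step.
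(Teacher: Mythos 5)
Your proof is correct and takes essentially the same approach as the paper: induction on the PCF type $\sigma$, with the base case vacuous because $\isdefined(\bot)$ is empty, and the function case following from the pointwise definition of $\bot$ in the exponential together with the induction hypothesis. The paper's proof is just a one-line summary of exactly this argument, and your observation that the induction hypothesis is stated uniformly over all terms of type $\rho$ (so it can be applied to $ts$) correctly identifies the point that makes the inductive step go through.
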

\begin{proof}
  By induction on $\sigma$: for the base type, this holds vacuously; for
  function types, it follows by induction hypothesis and the pointwise ordering.
\end{proof}

\begin{lemma}
  The logical relation is closed under directed suprema. That is, for every PCF
  type \(\sigma\) and every PCF term $t$ of type $\sigma$ and every directed
  family $d : I \to \densem{\sigma}$, if $t \logicalR{\sigma} d_i$ for every
  $i : I$, then $t \logicalR{\sigma} \bigsqcup_{i : I}d_i$.
\end{lemma}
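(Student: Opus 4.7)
The plan is to proceed by induction on the PCF type $\sigma$, mirroring the inductive definition of $\logicalR{\sigma}$ itself.

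For the base case $\sigma \equiv \iota$, we have $\densem{\iota} = \lifting(\Nat)$ and we must show: given $p : \isdefined\pa*{\bigsqcup_{i : I} d_i}$, we have $t \smallstepclos \underline{\liftvalue\pa*{\bigsqcup_i d_i, p}}$. Recall from the construction of directed suprema in the lifting (\cref{lifting-is-pointed-dcpo}) that $\isdefined\pa*{\bigsqcup_i d_i}$ is $\exists_{i : I}\isdefined(d_i)$, and moreover whenever some $d_i$ is defined we have $d_i = \bigsqcup_j d_j$ (so in particular $\liftvalue(d_i, q) = \liftvalue\pa*{\bigsqcup_j d_j, p}$ for any witness $q$ of $\isdefined(d_i)$). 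Since the goal $t \smallstepclos \underline{n}$ is a proposition, we may eliminate the propositional truncation and obtain a specified $i : I$ and $q : \isdefined(d_i)$. The hypothesis $t \logicalR{\iota} d_i$ then delivers $t \smallstepclos \underline{\liftvalue(d_i,q)}$, which is the required reduction.

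For the inductive case $\sigma \equiv \tau \Rightarrow \rho$, suppose $t \logicalR{\tau\Rightarrow\rho} d_i$ for every $i : I$, and let $s$ and $e$ be given with $s \logicalR{\tau} e$. We need $ts \logicalR{\rho} \pa*{\bigsqcup_i d_i}(e)$. Since directed suprema in the exponential $\densem{\rho}^{\densem{\tau}}$ are computed pointwise, we have $\pa*{\bigsqcup_i d_i}(e) = \bigsqcup_i d_i(e)$, and the family $i \mapsto d_i(e)$ is directed (as pointwise evaluation is monotone; this was already used in the construction of exponentials). By the definition of $\logicalR{\tau\Rightarrow\rho}$ together with $s \logicalR{\tau} e$, we know $ts \logicalR{\rho} d_i(e)$ for every $i : I$. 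The induction hypothesis at type $\rho$ then yields $ts \logicalR{\rho} \bigsqcup_i d_i(e)$, as required.

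The only subtle point is the base case, where one might worry that having only an unspecified $i : I$ with $d_i$ defined is not enough to extract a reduction. This is resolved by the fact that $t \smallstepclos \underline{n}$ is a proposition (being defined as the propositional truncation in \cref{def:refltransclos}), so that the universal property of the propositional truncation permits the elimination. Everything else is bookkeeping: pointwise suprema in the exponential, and the fact that the value of $\bigsqcup_i d_i$ agrees with the value of any defined $d_i$. I do not anticipate any further obstacle.
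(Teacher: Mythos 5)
Your proof is correct and follows essentially the same route as the paper: induction on the PCF type, with the function-type case dispatched via pointwise suprema in the exponential, and the base case handled by eliminating the propositional truncation $\exists_{i:I}\isdefined(d_i)$ (which is legitimate since $\smallstepclos$ is proposition-valued) and then invoking the hypothesis at the extracted index. The only cosmetic difference is that the paper reads off the value of the supremum from the factorisation map $\phi$ directly, whereas you argue via $d_i = \bigsqcup_j d_j$ for any defined $d_i$; these amount to the same thing.
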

\begin{proof}
  This proof is somewhat different from the classical proof, so we spell out the
  details. We prove the lemma by induction on $\sigma$.
  The case when $\sigma$ is a function type is easy, because least upper bounds
  are calculated pointwise and so it reduces to an application of the induction
  hypothesis. We concentrate on the case when $\sigma \equiv \iota$ instead.
  Recall that $\bigsqcup_{i : I} d_i$ is given by
  $\pa*{{\exists_{i : I} \isdefined(d_i)}, \phi}$, where $\phi$ is the
  factorisation of
  \[
    \pa*{\Sigma_{i : I} \isdefined(d_i)}
    \xrightarrow{(i,p) \mapsto \liftvalue\pa*{d_i,p}} \lifting(\Nat)
  \]
  through ${\exists_{i : I} \isdefined(d_i)}$.
  We are tasked with proving that $t \smallstepclos \underline{\phi (p)}$
  whenever \(p\) witnesses that \(\bigsqcup_{i : I} d_i\) is defined. Since we
  are trying to prove a proposition (as $\smallstepclos$~is proposition-valued),
  we may assume that we have $(j,p) : \Sigma_{i : I}\isdefined(d_i)$.  By
  definition of $\phi$ we have:
  $\phi\pa*{\tosquash{(j,p)}} = \liftvalue\pa*{d_{j},p}$ and by assumption we
  know that $t \smallstepclos \underline{\liftvalue\pa*{d_{j},p}}$, so we are
  done.
\end{proof}

\begin{lemma}
  For every PCF type $\sigma$, we have
  $\fix_\sigma \logicalR{(\sigma \Rightarrow \sigma)\Rightarrow \sigma}
  \densem{\fix_\sigma}$.
\end{lemma}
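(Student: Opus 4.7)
The plan is to unfold the definition of the logical relation at the function type and then reduce to a statement about the iterates of $f$, invoking the closure of $R$ under directed suprema.

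Spelling this out: By \cref{def:R}\ref{R-arrow}, what we must show is that for every PCF term $t$ of type $\sigma \Rightarrow \sigma$ and every $f : \densem{\sigma \Rightarrow \sigma}$, if $t \logicalR{\sigma \Rightarrow \sigma} f$, then $\fix_\sigma t \logicalR{\sigma} \densem{\fix_\sigma}(f)$. Since $\densem{\fix_\sigma}(f) \equiv \mu(f) = \bigsqcup_{n : \Nat} f^n(\bot)$ by \cref{least-fixed-point}, the preceding lemma on closure of $R$ under directed suprema reduces this to showing $\fix_\sigma t \logicalR{\sigma} f^n(\bot)$ for every $n : \Nat$ (after observing that $n \mapsto f^n(\bot)$ is the directed family whose supremum is $\mu(f)$).

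I would then proceed by induction on $n$. The base case $n \equiv 0$ amounts to $\fix_\sigma t \logicalR{\sigma} \bot$, which is immediate from \cref{logicalrelationbottom}. For the inductive step, suppose $\fix_\sigma t \logicalR{\sigma} f^n(\bot)$. Using the hypothesis $t \logicalR{\sigma \Rightarrow \sigma} f$ applied to $\fix_\sigma t$ and $f^n(\bot)$, we obtain $t(\fix_\sigma t) \logicalR{\sigma} f(f^n(\bot)) \equiv f^{n+1}(\bot)$. To transfer this along the reduction $\fix_\sigma t \smallstep t(\fix_\sigma t)$ from \cref{def:small-step}, I will extend this single step to $\fix_\sigma t \smallstepclos t(\fix_\sigma t)$ via the $\operatorname{extend}$ constructor of \cref{def:refltransclos} and then apply \cref{logicalrelationsmall-step} to conclude $\fix_\sigma t \logicalR{\sigma} f^{n+1}(\bot)$.

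The main conceptual point is the pattern of the argument: the interpretation $\mu(f)$ is built as a countable directed supremum of iterates, and the operational counterpart of this is the fact that $\fix_\sigma t$ unfolds under reduction to $t^n(\fix_\sigma t)$ for every $n$, making each iterate $f^n(\bot)$ reachable by reduction from a term that is below $\fix_\sigma t$ in the reduction preorder. There is no real obstacle: the previous lemma has already done the work of bridging the supremum on the semantic side, and the small-step closure under reduction (\cref{logicalrelationsmall-step}) bridges the operational side, so the induction goes through smoothly. The only mild care needed is to ensure that everything is phrased as a proposition before using the propositional truncation embedded in $\smallstepclos$.
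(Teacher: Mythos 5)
Your proof is correct and follows essentially the same path as the paper's: unfold the logical relation at the arrow type, reduce to the iterates $f^n(\bot)$ via the closure of $R$ under directed suprema, then induct on $n$ using \cref{logicalrelationbottom} for the base case and \cref{logicalrelationsmall-step} together with the $\fix$ reduction rule for the inductive step. The only difference is that you make a couple of small steps more explicit (the directedness of $n \mapsto f^n(\bot)$ and the passage from $\smallstep$ to $\smallstepclos$); these are implicit in the paper's account but match it exactly.
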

\begin{proof}
  Suppose that \(t \logicalR{\sigma\Rightarrow\sigma} f\); we are to prove that
  $\fix t \logicalR{\sigma} \mu(f)$.
  By the previous lemma, it suffices to prove that
  $\fix t \logicalR{\sigma} f^n(\bot)$ for every natural number $n$, which we do
  by induction on $n$.
  The base case is an application of \cref{logicalrelationbottom}.
  Now suppose that $\fix t \logicalR{\sigma} f^m(\bot)$. Then, using
  $t \logicalR{\sigma\Rightarrow\sigma} f$, we find
  $t(\fix t) \logicalR{\sigma} f(f^m(\bot))$. Hence, by
  \cref{logicalrelationsmall-step}, we obtain the
  $\fix t \logicalR{\sigma} f^{m+1}(\bot)$, completing our inductive proof.
\end{proof}

\begin{lemma}[Fundamental Theorem]\label{mainlemma}%
  \index{fundamental theorem (of the logical relation)}
  We have $t \logicR \densem{t}$ for every PCF term \(t\).
\end{lemma}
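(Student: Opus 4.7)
The plan is to prove the statement by induction on the structure of the PCF term $t$. Since the logical relation $R_\sigma$ is proposition-valued for every PCF type $\sigma$ (as equality in $\densem{\sigma}$ is a proposition and $\smallstepclos$ is proposition-valued), the induction is well-behaved.

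For the base cases, I would appeal directly to the lemmas already established: Lemma \ref{logicalrelationconstants} handles the constants $\zeroo$, $\succc$, $\pred$, $\ifz$, $\PCFk$ and $\PCFs$, while the preceding lemma handles $\fix_\sigma$ for every PCF type $\sigma$. These exhaust all the constant-forming constructors of \cref{def:PCFterms}.

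The only remaining case is that of application: given a PCF term $st$ of type $\tau$, where $s$ has type $\sigma \Rightarrow \tau$ and $t$ has type $\sigma$, the induction hypotheses yield $s \logicalR{\sigma \Rightarrow \tau} \densem{s}$ and $t \logicalR{\sigma} \densem{t}$. Unfolding clause \ref{R-arrow} of \cref{def:R} with $t$ and $\densem{t}$ substituted in, we obtain $st \logicalR{\tau} \densem{s}(\densem{t})$, which is exactly $st \logicalR{\tau} \densem{st}$ by definition of the interpretation of application. This completes the induction.

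There is no real obstacle here, as all the hard work has already been done in isolating the correct formulation of the logical relation (\cref{def:R}) and in establishing the constant cases. The key design choice making the induction go through smoothly is that the case for the function type in the definition of $R$ is tailored precisely to make the application case trivial, while the case for the base type is exactly the statement of computational adequacy (\cref{Adequacy}), which is the desired corollary.
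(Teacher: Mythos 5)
Your proof is correct and follows the same approach as the paper's: induction on term structure, with the constant cases dispatched by \cref{logicalrelationconstants} and the preceding lemma, and the application case handled by unfolding clause \ref{R-arrow} of \cref{def:R} with the two induction hypotheses. If anything, your explanation of the application case is stated more cleanly than the paper's, which phrases it somewhat elliptically.
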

\begin{proof}
  The proof is by induction on $t$. The base cases are taken care of by
  \cref{logicalrelationconstants} and the previous lemma. For the inductive
  step, suppose $t$ is a PCF term of type $\sigma\Rightarrow\tau$. By induction
  hypothesis, $ts \logicalR\tau \densem{ts}$ for every PCF term $s$ of type
  $\sigma$, but $\densem{ts} \equiv \densem{t}\densem{s}$, so we are done.
\end{proof}

Computational adequacy is now a direct corollary of \cref{mainlemma}.
\begin{proof}[Proof of computational adequacy]
  Take $\sigma$ to be the base type $\iota$ in \cref{mainlemma}.
\end{proof}

\paragraph*{Using computational adequacy to compute.} An interesting use of
computational adequacy is that it allows one to argue semantically to obtain
results about termination (i.e.\ reduction to a numeral) in PCF. Classically,
every PCF program of type $\iota$ either terminates or it does not.  From a
constructive point of view, we wait for a program to terminate, with no a priori
knowledge of termination. The waiting could be indefinite. Less naively, we
could limit the number of computation steps to avoid indefinite waiting, with an
obvious shortcoming: how many steps are enough? Instead, one could use
computational adequacy to compute as we describe now.%
\index{constructivity}\index{Scott model!of PCF!computational adequacy}

For a PCF type \(\sigma\), a \emph{functional of type $\sigma$} is an element of
$\densem{\sigma}$. By induction on PCF types, we define when a functional is
said to be \emph{total}:%
\index{PCF!functional}\index{PCF!functional!total}%
\begin{enumerate}[(i)]
\item a functional $i$ of type $\iota$ is total if $i = \densem{\underline n}$
  for some natural number $n$;
\item a functional $f$ of type $\sigma\Rightarrow\tau$ is total if it maps total
  functionals to total functionals, viz.\ $f(d)$ is a total functional of type
  $\tau$ for every total functional $d$ of type $\sigma$.
\end{enumerate}
Now, let $s$ be a PCF term of type
$\sigma_1 \Rightarrow \sigma_2 \Rightarrow \dots \Rightarrow \sigma_n
\Rightarrow \iota$. If we can prove that $\densem{s}$ is total, then
computational adequacy lets us conclude that for all total inputs
$\densem{t_1} : \densem{\sigma_1},\dots,\densem{t_n} : \densem{\sigma_n}$, the
term $s(t_1,\dots,t_n)$ reduces to the numeral representing
$\densem{s}(\densem{t_1},\dots,\densem{t_n})$.
Thus, instead of e.g.\ giving a number of steps as a timeout for the
computation, we supply a proof of totality to computational adequacy and we are
guaranteed to obtain a result.

Of course, this approach still requires us to prove that \(\densem{s}\) is
total, which may be challenging. But note that we can use domain-theoretic
arguments to prove this about the denotation~\(\densem{s}\), whereas in a direct
proof of termination we would only have the operational semantics available for
our argument.

\subsection{Semidecidability and PCF terms of the base type}\label{sec:semidec-pcf}%
\index{semidecidability}\index{PCF!types!base}\index{PCF!numeral}

Given a PCF term \(t\) of the base type, we intuitively expect it to be
semidecidable whether \(t\) will compute to a numeral, as we can reduce \(t\)
one step at a time following the operational semantics of PCF and stop when we
have obtained a numeral.\index{PCF!operational semantics}

A rather slick way of proving this would be to argue that we could have
restricted to semidecidable propositions in the lifting of the natural numbers,
i.e.\ we modify the Scott model of PCF and set
\(\densem{\iota} \colonequiv \liftingsd(\Nat)\) with%
\index{Scott model!of PCF}\index{lifting}%
\[
  \liftingsd(X) \colonequiv \Sigma_{P : \Omega_{\U_0}}\pa*{P \text{ is
      semidecidable}}\times\pa*{P \to X}.
\]
\nomenclature[L_sd]{$\liftingsd(X)$}{lifting of a type \(X\) with respect to the
  semidecidable propositions}
This restricted lifting of a set does not necessarily yield a dcpo, but observe
that only used suprema of \(\omega\)-chains in the Scott model of PCF.%
\index{omega-completeness@\(\omega\)-completeness}
Thus it would suffice for \(\liftingsd(X)\) to be a \(\omega\)-cpo. Escard\'o
and Knapp~\cite[Corollary~5]{EscardoKnapp2017} observed that this is indeed the
case \emph{if} countable choice is assumed. And in this case it coincides with
the other
constructions~\cite{ChapmanUustaluVeltri2019,AltenkirchDanielssonKraus2017} of
\(\omega\)-cpos, as discussed in the~\nameref{sec:basic-domain-theory-notes}
of~\cref{chap:basic-domain-theory}.%
\index{choice!axiom of countable}
Countable choice is used to obtain witnessing sequences \(\alpha_n\) for a given
\(\Nat\)-indexed sequence \(P_n\) of semidecidable propositions.
Indeed, at least some weak form of countable choice is indeed necessary for
\(\liftingsd(X)\) to be closed under countable joins as shown
by~\cite{deJong2022}.
But since countable choice is not provable in constructive univalent foundations
(cf.~the~\nameref{sec:basic-domain-theory-notes}
of~\cref{chap:basic-domain-theory}), we resort to a different, syntactic
approach.
That~is, without using countable choice, we use soundness and computational
adequacy to prove that \(\isdefined(\densem{t})\) is semidecidable for every PCF
term of the base type.%
\index{Scott model!of PCF!soundness}%
\index{Scott model!of PCF!computational adequacy}
In other words, although we cannot define the Scott model using \(\liftingsd\),
the map
\[
  \densem{-} : \text{PCF terms of the base type} \to \lifting(\Nat)
\]
factors through \(\liftingsd(\Nat)\), even in the absence of countable choice.

Specifically, we prove semidecidability of the proposition
\(\isdefined(\densem{t})\) by appealing to \cref{semidecidable-criterion} and
showing that \(\isdefined(\densem{t})\) is logically equivalent to
$\exists_{n : \Nat} \exists_{k : \Nat} \,t \smallstep^k {\underline n}$, where
$t \smallstep^k {\underline n}$ says that $t$ reduces to $\underline n$ in at
most $k$ steps.%
\nomenclature[trianglek]{$s \smallstep^k t$}{\(k\)-step reflexive transitive
  closure of the small-step relation of PCF}
We prove this notion to be decidable by using that the terms of PCF have
decidable equality which is an application of the theory on indexed
\(\WW\)-types developed in~\cref{sec:indexed-W-types}.%
\index{W-type@\(\WW\)-type!indexed}\index{decidability!of equality}
It turns out to be helpful to study the \(k\)-step reflexive transitive closure
of an arbitrary relation more generally and isolate criteria for it to be
decidable.%
\index{reflexive transitive closure}

\subsubsection{Decidability of the \(k\)-step reflexive transitive closure of a
  relation}

Fix an arbitrary type \(X\) and a proposition-valued binary relation \(R\) on
\(X\).
We define the \(k\)-step reflexive transitive closure of \(R\). As
in~\cref{def:refltransclos}, we want this relation to be proposition-valued
again, so we proceed with an auxiliary definition that we truncate.

\begin{definition}[\(k\)-step reflexive transitive closure, \(R_k\), \(R^k\)]\label{kstep-refltransclos}
  Define \(x \relR_k y\) by induction on the natural number \(k\) as
  \begin{enumerate}[(i)]
  \item \(x \relR_0 y \colonequiv x = y\), and
  \item
    \(x \relR_{k+1} z \colonequiv \Sigma_{y : X} \pa{x \relR y} \times \pa{y
      \relR_k z}\).
  \end{enumerate}
  The \emph{\(k\)-step reflexive transitive closure} \(\relR^k\) of \(R\) is
  given by propositionally truncating:
  \(x \relR^k y \colonequiv \squash*{x \relR_k y}\).
\end{definition}

The following proposition relates the reflexive transitive closure of \(R\) and
its \(k\)-step reflexive transitive closure.

\begin{proposition}\label{refltransclos-k-step-refltransclos}
  For every \(x\) and \(y\) we have \(x \relR_\ast y\) if and only if
  \(\Sigma_{k : \Nat}\pa*{x \relR_k y}\), where \(R_\ast\) is the untruncated
  reflexive transitive closure from just before~\cref{def:refltransclos}.
  Hence, for every \(x\) and \(y\) it holds that
  \(x \relR^\ast y\) if and only if \(\exists_{k : \Nat}\pa*{x \relR^k y}\).
\end{proposition}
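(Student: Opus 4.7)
The plan is to establish the first (untruncated) equivalence by induction on the generators of \(R_\ast\) in one direction and on the natural number \(k\) in the other, and then derive the second (truncated) equivalence by routine manipulation of propositional truncations. The whole argument is elementary; the only piece of real work is an auxiliary concatenation lemma for \(R_k\).

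Before proving the equivalence I would establish the concatenation lemma: if \(x \relR_k y\) and \(y \relR_l z\), then \(x \relR_{k+l} z\). This is proved by induction on \(k\). In the base case \(k = 0\) we have \(x = y\), so transporting along this identification gives \(x \relR_l z\). In the step case, unfolding \(x \relR_{k+1} y\) yields some \(y'\) with \(x \relR y'\) and \(y' \relR_k y\); applying the induction hypothesis to \(y' \relR_k y\) and \(y \relR_l z\) we obtain \(y' \relR_{k+l} z\), and pairing this with \(x \relR y'\) produces \(x \relR_{k+l+1} z\), as required.

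For the first equivalence, the left-to-right direction is by induction on the construction of \(x \relR_\ast y\): the constructor \(\operatorname{extend}\) applied to \(x \relR y\) gives the witness \((1, (y, \refl_y))\) in \(\Sigma_{k : \Nat}(x \relR_k y)\); the constructor \(\operatorname{refl}\) gives \((0, \refl_x)\); and \(\operatorname{trans}\) applied to witnesses that produce, by induction hypothesis, naturals \(k, l\) with \(x \relR_k y\) and \(y \relR_l z\), gives \((k + l, \cdot)\) using the concatenation lemma. Conversely, given \((k, p) : \Sigma_{k : \Nat}(x \relR_k y)\) I would induct on \(k\): when \(k = 0\), transport \(\operatorname{refl}_x : x \relR_\ast x\) along \(p : x = y\); when \(k = n+1\), unfold \(p\) to some \(y'\) with \(x \relR y'\) and \(y' \relR_n y\), apply the induction hypothesis to the latter to get \(y' \relR_\ast y\), and combine with \(\operatorname{extend}(x \relR y')\) via \(\operatorname{trans}\).

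The second claim follows by truncating both sides of the first equivalence. By functoriality of \(\squash{-}\), the equivalence \({x \relR_\ast y} \simeq {\Sigma_{k : \Nat}(x \relR_k y)}\) descends to \({\squash{x \relR_\ast y}} \iff {\squash{\Sigma_{k : \Nat}(x \relR_k y)}}\), i.e.\ \(x \relR^\ast y \iff \exists_{k : \Nat}(x \relR_k y)\). It then suffices to observe the standard fact that \(\squash{\Sigma_{k : \Nat}(x \relR_k y)} \iff \squash{\Sigma_{k : \Nat}\squash{x \relR_k y}}\), which is immediate from the universal property of the propositional truncation (the forward map is functorial application of \(\tosquash{-}\) componentwise, and the backward map eliminates the inner truncation into the outer proposition). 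The right-hand side is definitionally \(\exists_{k : \Nat}(x \relR^k y)\), finishing the proof. No serious obstacles arise; the only point requiring care is the final truncation manipulation, which is routine but worth spelling out because the inner and outer truncations play different roles.
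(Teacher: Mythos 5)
Your proof is correct, but it follows a genuinely different route than the one in the paper. You work directly with \(R_\ast\), handle its three constructors one by one, and pay for the \(\operatorname{trans}\) case with a concatenation lemma \(x \relR_k y \to y \relR_l z \to x \relR_{k+l} z\) proved by induction on \(k\). The paper instead introduces an auxiliary inductive relation \(R'\) with just two constructors --- reflexivity, and ``step on the left, then \(R'\)'' --- which structurally mirrors the left-recursive definition of \(R_k\). This makes the two directions between \(R'\) and \(\Sigma_{k : \Nat}(x \relR_k y)\) an almost mechanical translation with no arithmetic on \(k\) at all: \(\operatorname{refl}\) corresponds to \(k = 0\), and the step constructor corresponds to \(k \mapsto k + 1\). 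The price the paper pays is having to establish that \(R' \iff R_\ast\), which requires showing \(R'\) is transitive --- an induction of essentially the same shape as your concatenation lemma, just packaged differently. So the total work is comparable; your version is slightly more direct and self-contained, while the paper's version isolates the idea that a left-recursive closure is the right intermediate step and keeps the counting completely separate from the combining. Your handling of the truncated version, in particular the observation that \(\squash*{\Sigma_{k : \Nat}(x \relR_k y)}\) is logically equivalent to \(\squash*{\Sigma_{k : \Nat}\squash*{x \relR_k y}}\), matches the paper's use of the same standard fact.
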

\begin{proof}
  We define the auxiliary binary relation \(R'\) on \(X\) inductively: for every
  \(x : X\) we have \(x \relR' x\); and if \(x \relR y\) and \(y \relR' z\),
  then \(x \relR' z\).
  Then \(R'\) is reflexive, transitive and it extends \(R\). It follows that
  \(R'\) and \(R_\ast\) are equivalent, so it remains to show that
  \(x \relR' y\) and \(\Sigma_{k : \Nat}\pa*{x \relR_k y}\) are logically
  equivalent.
  In one direction, induction on \(k\) yields a proof that \(x \relR_k y\)
  implies \(x \relR' y\) for every natural number \(k\), and hence that
  \(\Sigma_{k : \Nat}\pa*{x \relR_k y}\) implies \(x \relR' y\).
  The other direction is obtained by induction on the constructors of \(R'\).

  The final claim follows from the functoriality of the truncation
  and the general fact that \(\squash*{\Sigma_{x : X}A(x)}\) and
  \(\squash*{\Sigma_{x : X}\squash*{A(x)}}\) are
  equivalent~\cite[Theorem~7.3.9]{HoTTBook}.
\end{proof}

\begin{definition}[Singe-valuedness and decidability of a relation]%
  \index{decidability}\index{single-valuedness}%
  The relation \(R\) is said to be
  \begin{enumerate}[(i)]
  \item \emph{single-valued} if for every \(x\), \(y\) and \(z\) with
    \(x \relR y\) and \(x \relR z\), we have \(y = z\), and
  \item \emph{decidable} if the type \(x \relR y\) is decidable for every \(x\)
    and \(y\) in \(X\). %
    \qedhere
  \end{enumerate}
\end{definition}

\begin{proposition}\label{decidable-k-step-refl-trans-clos}
  The \(k\)-step reflexive transitive closure \(R^k\) is decidable for every
  natural number \(k\) if
  \begin{enumerate}[(i)]
  \item\label{carrier-dec-eq} the type \(X\) has decidable equality,
  \item\label{single-valued-rel} the relation \(R\) is single-valued, and
  \item\label{left-ext-decidable} the type \(\Sigma_{y : X}\pa*{x \relR y}\) is
    decidable for every \(x : X\). \qedhere
  \end{enumerate}
\end{proposition}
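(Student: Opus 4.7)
The plan is to prove decidability of $R_k$ itself, by induction on $k$, and then deduce decidability of $R^k \equiv \squash{R_k}$ from the observation that a decidable type always has a decidable truncation (if $R_k$ holds, so does $\squash{R_k}$; if $\lnot R_k$, then $\squash{R_k}$ is empty because propositional truncation preserves emptiness). This will sidestep the need to argue separately about the truncation.

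For the base case, $R_0(x, y) \equiv (x = y)$ is decidable by hypothesis~\ref{carrier-dec-eq}. For the inductive step, I want to reduce the type $R_{k+1}(x, z) \equiv \Sigma_{y : X}\pa*{x \mathrel{R} y} \times \pa*{y \mathrel{R_k} z}$ to a single instance of $R_k$. First I would use hypothesis~\ref{left-ext-decidable} to decide $\Sigma_{y : X}\pa*{x \mathrel{R} y}$. If this type is empty, then $R_{k+1}(x, z)$ is evidently empty too, so decidable. Otherwise we have a specific pair $(y_0, p_0)$ and I would show that $R_{k+1}(x, z)$ is logically equivalent to $y_0 \mathrel{R_k} z$: the right-to-left direction uses $(y_0, p_0)$ directly; for the left-to-right direction, given any $(y, p, q) : R_{k+1}(x, z)$, single-valuedness~\ref{single-valued-rel} applied to $p_0$ and $p$ yields an identification $y_0 = y$, along which we transport $q$ to obtain an element of $y_0 \mathrel{R_k} z$. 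Decidability of $y_0 \mathrel{R_k} z$ is then given by the induction hypothesis, and the equivalence transfers this to $R_{k+1}(x, z)$.

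There is no real obstacle here; the only subtle point is that the argument requires single-valuedness in an essential way (to reduce the $\Sigma$ over $y$ to a statement about one canonical witness) and that the base case is the only place where hypothesis~\ref{carrier-dec-eq} is used directly, though of course Hedberg's theorem (\cref{Hedberg-Theorem}) ensures that $X$ is a set throughout, which is needed to make sense of the transport step in the inductive argument.
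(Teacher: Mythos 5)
Your proof is correct and follows essentially the same route as the paper's: reduce $R^k$ to $R_k$ via the observation that decidability passes to truncations, then induct on $k$, using hypothesis~\ref{left-ext-decidable} to decide $\Sigma_{y:X}(x \mathrel{R} y)$ and single-valuedness to collapse the $\Sigma$ over $y$ to the canonical witness $y_0$ before invoking the induction hypothesis. The paper phrases the inductive step as a case split on whether $y_0 \mathrel{R_k} z$ holds, while you package the same content as a logical equivalence between $R_{k+1}(x,z)$ and $y_0 \mathrel{R_k} z$ — a purely cosmetic difference. One small aside is off, though: the transport of $q : y \mathrel{R_k} z$ along the identification $y_0 = y$ is always well-defined in MLTT and does not require $X$ to be a set, so the appeal to Hedberg's theorem (\cref{Hedberg-Theorem}) is not actually needed for that step; indeed the paper's proof makes no such appeal.
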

\begin{proof}
  Since the propositional truncation of a type is decidable as soon the type
  itself is, it suffices to prove that the relation \(R_k\) is decidable, which we
  do by induction on~\(k\).
  For \(k = 0\), this means decidability of \(x = y\) for every \(x\) and \(y\)
  which we have by assumption~\ref{carrier-dec-eq}.
  Now suppose that \(R^k\) is decidable and let \(x\) and \(z\) be arbitrary
  elements of \(X\). We prove that \(x \relR_{k+1} z\) is decidable.
  By definition, this means proving that
  \begin{equation}\tag{\(\dagger\)}\label{to-prove-decidable}
    \Sigma_{y : X}\pa*{x \relR y}\times\pa*{y \relR_k z}
  \end{equation}
  is decidable.
  Using~\ref{left-ext-decidable} we can decide whether
  \(\Sigma_{y : X}\pa*{x \relR y}\) has an element or not. If it does not, then
  obviously~\eqref{to-prove-decidable} has no elements either.
  So assume that we have \(y : X\) with \(x \relR y\). By induction hypothesis,
  the type \(y \relR_k z\) is decidable. If it has an element, then so
  does~\eqref{to-prove-decidable}.
  If it does not, then we claim that~\eqref{to-prove-decidable} must be empty
  too. For if it isn't, then we get \(y' : X\) with \(x \relR y'\) and
  \(y' \relR_k z\). But the relation \(R\) is assumed to be single-valued, so
  \(y' = y\) and hence \(y \relR_k z\), contradicting our assumption.
\end{proof}

\subsubsection{Semidecidability at the base type}

After completing the generalities above, we are now ready to complete the proof
of the strategy outlined at the start of this section.
The application of~\cref{decidable-k-step-refl-trans-clos} to the small-step
reduction relation of PCF requires us to prove that the syntax of PCF has
decidable equality, which follows from the results on indexed \(\WW\)-types
featured in~\cref{sec:indexed-W-types}.%
\index{decidability!of equality}\index{W-type@\(\WW\)-type!indexed}%
\index{W-type@\(\WW\)-type!indexed!encoding PCF terms}

\begin{theorem}\label{semidecidability-of-PCF-props}%
  \index{semidecidability}\index{PCF!types!base}%
  For every PCF term \(t\) of the base type, the proposition
  \(\isdefined(\densem{t})\) is semidecidable as witnessed by the logical
  equivalence
  \[
    \isdefined(\densem{t}) \iff \exists_{n : \Nat}\exists_{k : \Nat} \, t
      \smallstep^k \underline n
  \]
  and the decidability of \(t \smallstep^k \underline n\).
\end{theorem}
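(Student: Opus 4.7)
My plan is to prove the theorem in three stages: first establish the logical equivalence, then establish decidability of the multi-step reduction relation, and finally apply \cref{semidecidable-criterion}.

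For the logical equivalence, I would proceed as follows. In the forward direction, assume $p : \isdefined(\densem{t})$. Then by computational adequacy (\cref{Adequacy}), we have $t \smallstepclos \underline{\liftvalue(\densem{t},p)}$, so setting $n \colonequiv \liftvalue(\densem{t},p)$ we get $t \smallstepclos \underline n$. By \cref{refltransclos-k-step-refltransclos} this yields $\exists_{k : \Nat}\,t \smallstep^k \underline n$, and hence the required doubly-existentially-quantified statement. Conversely, if there exist $n$ and $k$ with $t \smallstep^k \underline n$, then $t \smallstepclos \underline n$, so by soundness $\densem{t} = \densem{\underline n} = \eta(n)$ (using \cref{densemnumeral}), which is defined.

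For the decidability of $t \smallstep^k \underline n$, I would invoke \cref{decidable-k-step-refl-trans-clos} applied to the relation $\smallstep$ on PCF terms of the base type (really, on PCF terms of any fixed type, since the relation preserves typing). This requires three conditions. Condition~(i), decidable equality of the type of PCF terms, is exactly \cref{PCF-terms-have-decidable-equality}. Condition~(ii), single-valuedness of $\smallstep$, and condition~(iii), decidability of $\Sigma_{s}\pa*{t \smallstep s}$ for every $t$, both need to be proved. Since each is a proposition (sethood of PCF terms and proposition-valuedness of $\smallstep$ imply that single-valuedness is a proposition, and condition~(iii) is a proposition because single-valuedness makes $\Sigma_{s}\pa*{t \smallstep s}$ itself a proposition), I can freely untruncate $\smallstep$ to $\smallsteppre$ and argue about the inductively-defined pre-relation directly.

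For single-valuedness of $\smallsteppre$, I would proceed by induction on $t$: the key observation is that the inductive clauses for $\smallsteppre$ defining a reduction of, e.g., $\succc r$ necessarily use the rule $\succc r \smallsteppre \succc r'$ with $r \smallsteppre r'$ (since no other clause has $\succc r$ on the left), so uniqueness follows from the induction hypothesis; the application case $f\,t$ reduces either via one of the combinator clauses ($\PCFk$, $\PCFs$, $\ifz$, $\pred$, $\fix$), all of which apply only when the head is a specific constant and hence are mutually exclusive, or by a step in the head $f \smallsteppre g$. For decidability of $\Sigma_{s}\pa*{t \smallsteppre s}$, I would again proceed by induction on the structure of $t$, checking in each case whether the head is a redex (using decidable equality of terms to match against patterns like $\ifz s\,t\,\underline n$) and otherwise recursing on the appropriate subterm using the induction hypothesis. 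This case analysis is the main obstacle: it is entirely routine but tedious, as one must enumerate the possible head-reducts for each combinator and show that inspecting whether each of the finitely many redex patterns applies is a decidable check.

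With these three conditions verified, \cref{decidable-k-step-refl-trans-clos} gives decidability of $t \smallstep^k \underline n$ for every $n$ and $k$. Combining the logical equivalence with this decidability, the proposition $\isdefined(\densem{t})$ is equivalent to a doubly-existential statement over $\Nat$ with decidable body, which by \cref{semidecidable-criterion} (taking $k \colonequiv 2$) is semidecidable, completing the proof.
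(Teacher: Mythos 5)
Your proposal is correct and follows essentially the same route as the paper's proof: establish the logical equivalence via computational adequacy (forward) and soundness together with~\cref{densemnumeral} (backward), pass between $\smallstepclos$ and $\exists_k\,{\smallstep^k}$ using~\cref{refltransclos-k-step-refltransclos}, and then discharge the three hypotheses of~\cref{decidable-k-step-refl-trans-clos} before invoking~\cref{semidecidable-criterion}. One small imprecision: you cite~\cref{PCF-terms-have-decidable-equality} directly for decidable equality of PCF terms, but that proposition only treats the toy fragment of PCF encoded in~\cref{W-PCF-terms} (just $\zeroo$, $\succc$, and application); the paper notes that a ``modest extension'' of it is required to cover the full term language with $\pred$, $\ifz$, $\PCFk$, $\PCFs$, and $\fix$ — the same indexed-$\WW$-type technique applies, but with an enlarged type $A$ of constructor labels. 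Your elaboration of single-valuedness and decidability of the one-step-reduct type, which the paper leaves as ``by inspection of the operational semantics,'' is a faithful unpacking of what that inspection amounts to.
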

\begin{proof}
  First of all, observe that \(\densem{t}\) is defined if and only if there
  exists \(n : \Nat\) such that \(t \smallstepclos \underline n\).
  Indeed, if we have \(p : \isdefined(\densem{t})\), then
  \(t \smallstepclos \underline{\liftvalue(\densem{t},p)}\) by computational
  adequacy.
  Conversely, if there exists a natural number \(n\) such that
  \(t \smallstepclos \underline n\), then \(\densem{t} = \eta(n)\) by soundness
  and \cref{densemnumeral}, so that \(\densem{t}\) must be defined.
  Furthermore, by \cref{refltransclos-k-step-refltransclos}, we have that
  \(t \smallstepclos \underline n\) is logically equivalent to
  \(\exists_{k : \Nat}\,t \smallstep^k \underline n\), so this proves the
  logical equivalence
  \(\isdefined(\densem{t}) \iff \exists_{n : \Nat}\exists_{k : \Nat} \, t
    \smallstep^k \underline n\).
  Hence, it only remains to prove that \(t \smallstep^k \underline n\) is
  decidable, for which we use~\cref{decidable-k-step-refl-trans-clos}.
  Accordingly, we need to check its three conditions. First of all, the type of
  PCF terms should have decidable equality, which is guaranteed by a modest
  extension of~\cref{PCF-terms-have-decidable-equality}.
  The other two conditions can be proved by inspection of the operational
  semantics of PCF using decidability of equality of PCF terms.
\end{proof}

\section{Notes}\label{sec:applications-notes}

The treatment of Scott's \(D_\infty\) model of the untyped \(\lambda\)-calculus
in~\cref{sec:Scott-D-infty} is an expanded account of Section~5.2 of our
paper~\cite{deJongEscardo2021a}, while~\cref{sec:Scott-model-of-PCF} is a slight
revision of the exposition in our publication~\cite{deJong2021a}.

The proof that \(D_\infty\) is isomorphic to \(D_\infty^{D_\infty}\) largely
follows that of~\cite[Theorem~4.4]{Scott1972}, although we instantiate the
general framework involving directed bilimits set out in~\cref{sec:bilimits},
rather than working with sequential bilimits directly.

The Scott model was proved sound and computationally adequate by
\citeauthor{Plotkin1977}~\cite{Plotkin1977}, and the techniques of Scott and
Plotkin have been extended to many other programming
languages~\cite{Plotkin1983}.
Our proof follows the modern
presentation given by \citeauthor{Streicher2006}~\cite{Streicher2006}, although,
instead of formulating PCF with variables and \(\lambda\)-abstraction, we revert
to the original, combinatory, formulation of the terms of LCF by
\citeauthor{Scott1993}~\cite{Scott1993} in order to simplify the technical
development.

The formulation of computational adequacy in terms of
\(\isdefined(\densem{t})\), and the suggestion that it could be leveraged to
prove semidecidability of these propositions are due to Mart\'in Escard\'o.

\chapter{Predicativity in order theory}\label{chap:predicativity-in-order-theory}

In the preceding chapters we gave a type-theoretic account of constructive and
predicative domain theory including many familiar constructions and notions,
such as Scott's \(D_\infty\) model of the untyped \(\lambda\)-calculus and the
theory of continuous dcpos. In this chapter we complement this by exploring what
cannot be done predicatively.

\section{Introduction}

The work in this chapter is in the spirit of constructive reverse
mathematics~\cite{Ishihara2006} and amounts to predicative reverse mathematics:
we show certain statements to crucially rely on resizing axioms in the sense
that they are equivalent to them. Such arguments are important in constructive
mathematics. For example, the constructive failure of trichotomy on the real
numbers is shown~\cite{BridgesRichman1987} by reducing it to a nonconstructive
instance of excluded middle. As another example, note that
in~\cref{flat-gives-LPO} we used a reduction to the limited principle of
omniscience (LPO) to show that \(\Nat_\bot\) cannot be a dcpo constructively.%
\index{constructivity}

Our first main result is that nontrivial (directed or bounded) complete posets
are necessarily large. All examples of dcpos that we have seen have large
carriers, in the sense that all examples of \(\V\)-dcpos have carriers that live
in \(\V^+\) or some higher universe.%
\index{smallness}\index{resizing!propositional}\index{predicativity}
We show here that this is no coincidence, but rather a necessity, in the sense
that if such a nontrivial poset is small, then weak propositional resizing
holds. It is possible to derive full propositional resizing if we strengthen
nontriviality to positivity in the sense of~\cite{Johnstone1984}. The
distinction between nontriviality and positivity is analogous to the distinction
between nonemptiness and inhabitedness.%
\index{poset!nontrivial}%
\index{poset!delta-complete@\deltacomplete{\V}!positive|see {positivity}}%
\index{positivity}
We prove our results for a general class of posets, which includes directed
complete posets, bounded complete posets and sup-lattices, using a technical
notion of a \deltacomplete{\V} poset.%
\index{poset!delta-complete@\deltacomplete{\V}}
We also show that nontrivial locally small
\deltacomplete{\V} posets necessarily lack decidable equality.%
\index{decidability!of equality}
Specifically, we
can derive weak excluded middle from assuming the existence of a nontrivial
locally small \deltacomplete{\V} poset with decidable equality.%
\index{excluded middle!weak}
Moreover, if we assume positivity instead of nontriviality, then we can derive
full excluded middle.\index{excluded middle}

The fact that nontrivial dcpos are necessarily large has the important
consequence that \emph{Tarski's Theorem} (and similar results, such as
Pataraia's Lemma) cannot be applied in nontrivial instances, even though it has
a predicative proof.%
\index{Tarski's Theorem}
Furthermore, we explain that generalisations of Tarski's Theorem that allow for
large structures are provably false. Specifically, we show that the ordinal of
ordinals in a univalent universe does not have a maximal element, but does have
small suprema in the presence of small set quotients or set replacement,
illustrating the abstract theory
of~\cref{sec:quotients-replacement-prop-trunc-revisited}.%
\index{ordinal}\index{supremum!of ordinals}\index{set quotient}\index{set replacement}

Finally, we clarify, in our predicative setting, the relation between the
traditional definition of sup-lattice that requires suprema for all subsets and
our definition that asks for suprema of all small families, further explaining
our choice to work with families in our development of domain theory.

\section{Large posets without decidable equality}
\label{sec:large-posets}
A well-known result of Freyd in classical mathematics says that every
complete small category is a preorder~\cite[Exercise~D of
Chapter~3]{Freyd1964}. In other words, complete categories are necessarily large
and only complete preorders can be small, at least impredicatively.
Predicatively, by contrast, we show that many weakly complete posets (including
directed complete posets, bounded complete posets and sup-lattices) are
necessarily large.
We capture these structures by a technical notion of a \deltacomplete{\V} poset
in~\cref{sec:delta-complete-posets}. In~\cref{sec:nontrivial-and-positive} we
define when such structures are nontrivial and introduce the constructively
stronger notion of positivity. \cref{sec:retract-lemmas} and
\cref{sec:small-completeness-with-resizing} contain the two fundamental
technical lemmas and the main theorems, respectively. Finally, we consider
alternative formulations of being nontrivial and positive that ensure that these
notions are properties rather than data and shows how the main theorems remain
valid, assuming univalence.

\subsection{\texorpdfstring{\(\delta_{\V}\)}{delta\_V}-complete posets}
\label{sec:delta-complete-posets}
We start by introducing a class of weakly complete posets that we call
\deltacomplete{\V} posets. The notion of a \deltacomplete{\V} poset is a
technical and auxiliary notion sufficient to make our main theorems go
through. The important point is that many familiar structures (dcpos, bounded
complete posets, sup-lattices) are \deltacomplete{\V} posets
(see~\cref{examples-of-delta-complete-posets}).

\begin{definition}[\deltacomplete{\V} poset, \(\delta_{x,y,P}\), \(\bigvee \delta_{x,y,P}\)]%
  \index{delta-completeness@\(\delta_{\V}\)-completeness|see {poset}}
  \index{poset!delta-completeness@\deltacomplete{\V}}
  A poset \(X\) is \emph{\(\delta_\V\)-complete} for a universe~\(\V\) if for
  every pair of elements \(x,y : X\) with \(x \below y\) and every subsingleton
  \(P\) in \(\V\), the family
  \nomenclature[deltaxyP]{$\delta_{x,y,P}$}{family indexed by \(\One + P\)
    sending the left component to \(x\) and the right component to \(y\)}
  \begin{align*}
    \delta_{x,y,P} : 1 + P &\to X \\
    \inl(\star) &\mapsto x, \text{ and} \\
    \inr(p) &\mapsto y
  \end{align*}
  has a supremum \(\bigvee \delta_{x,y,P}\) in \(X\).
\end{definition}
\begin{remark}[Classically, every poset is \deltacomplete{\V}]
  \label{classically-every-poset-is-delta-complete}
  Consider a pair of elements \(x \below y\) of a poset. If \(P : \V\) is a
  decidable proposition, then we can define the supremum of \(\delta_{x,y,P}\)
  by case analysis on whether \(P\) holds or not. For if it holds, then the
  supremum is \(y\), and if it does not, then the supremum is \(x\). Hence, if
  excluded middle holds in \(\V\), then the family \(\delta_{x,y,P}\) has a
  supremum for every \(P : \V\). Thus, if excluded middle holds in \(\V\), then
  every poset (with carrier in any universe) is \deltacomplete{\V}.
\end{remark}
The above remark naturally leads us to ask whether the converse also holds,
i.e.\ if every poset is \deltacomplete{\V}, does excluded middle in \(\V\) hold?
As far as we know, we can only get weak excluded middle in \(\V\), as we will
later see in~\cref{Two-is-not-delta-complete}.
This proposition also shows that in the absence of excluded middle, the notion
of \(\delta_{\V}\)-completeness isn't trivial. For now, we focus on the fact
that, also constructively and predicatively, there are many examples of
\deltacomplete{\V} posets.

\begin{example}[\(\delta_{\V}\)-complete posets]\hfill
  \label{examples-of-delta-complete-posets}%
  \index{sup-lattice|textbf}\index{poset!bounded complete|textbf}\index{dcpo}
  \begin{enumerate}[(i)]
  \item Every \emph{\(\V\)-sup-lattice} is \deltacomplete{\V}. That is, if a
    poset \(X\) has suprema for all families \(I \to X\) with \(I\) in the
    universe \(\V\), then \(X\) is \deltacomplete{\V}.
    Hence, in particular, \(\Omega_{\V}\) is \deltacomplete{\V}, as is the
    \(\V\)-powerset \(\powerset_{\V}(X)\) for a type \(X\) in any universe.
  \item Every \emph{\(\V\)-bounded complete} poset is \deltacomplete{\V}. That
    is, if \(X\) is a poset with suprema for all bounded families \(I \to X\)
    with \(I : \V\), then \(X\) is \deltacomplete{\V}.
    A~family \(\alpha : I \to X\) is \emph{bounded} if there exists some
    \(x : X\) with \(\alpha(i) \below x\) for every \(i : I\). For example, the
    family \(\delta_{x,y,P}\) is bounded by \(y\).
  \item Every \(\V\)-dcpo is \deltacomplete{\V}, since the family
    \(\delta_{x,y,P}\) is directed. %
    \qedhere
  \end{enumerate}
\end{example}

\subsection{Nontrivial and positive posets}
\label{sec:nontrivial-and-positive}
In \cref{classically-every-poset-is-delta-complete} we saw that if we can decide
a proposition \(P\), then we can define \(\bigvee \delta_{x,y,P}\) by case
analysis. What about the converse? That is, if \(\delta_{x,y,P}\) has a supremum
and we know that it equals \(x\) or \(y\), can we then decide \(P\)?  Of course,
if \(x = y\), then \(\bigvee \delta_{x,y,P} = x = y\), so we don't learn
anything about \(P\). But what if we add the assumption that \(x \neq y\)? It
turns out that constructively we can only expect to derive decidability of
\(\lnot P\) in that case. This is due to the fact that \(x \neq y\) is a negated
proposition, which is rather weak constructively, leading us to later define
(see~\cref{def:strictly-below}) a constructively stronger notion for elements of
\deltacomplete{\V} posets.

\begin{definition}[Nontriviality]%
  \index{poset!nontrivial|textbf}%
  A poset \((X,\below)\) is \emph{nontrivial} if we have specified \(x,y : X\)
  with \(x \below y\) and \(x \neq y\).
\end{definition}

\begin{lemma}\label{delta-sup-weak-em}
  For a nontrivial poset \((X,{\below},x,y)\) and a proposition \(P : \V\), we
  have the following two implications:
  \begin{enumerate}[(i)]
  \item\label{delta-sup-weak-em-1} if the supremum of \(\delta_{x,y,P}\) exists and
    \(x = \bigvee \delta_{x,y,P}\), then \(\lnot P\) is the case;
  \item\label{delta-sup-weak-em-2} if the supremum of \(\delta_{x,y,P}\) exists and
    \(y = \bigvee \delta_{x,y,P}\), then \(\lnot\lnot P\) is the case.
  \end{enumerate}
\end{lemma}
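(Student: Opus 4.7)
The plan is to exploit the defining upper-bound and least-upper-bound properties of $\bigvee \delta_{x,y,P}$ in each of the two directions, reducing to the nontriviality hypothesis $x \neq y$ to derive the desired negation.

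For \ref{delta-sup-weak-em-1}, I would argue contrapositively: assume $p : P$ in order to derive $x = y$, which contradicts nontriviality. Given $p$, the element $\inr(p) : \One + P$ lies in the domain of $\delta_{x,y,P}$, and $\delta_{x,y,P}(\inr(p)) \equiv y$. Since $\bigvee \delta_{x,y,P}$ is an upper bound of the family, we obtain $y \below \bigvee \delta_{x,y,P} = x$. Combined with the assumed $x \below y$ from nontriviality, antisymmetry forces $x = y$, contradicting the assumption that $x \neq y$.

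For \ref{delta-sup-weak-em-2}, I would assume $\lnot P$ and derive a contradiction with $x \neq y$, thereby establishing $\lnot\lnot P$. Under the assumption $\lnot P$, I claim that $x$ is itself an upper bound for $\delta_{x,y,P}$. Indeed, for $\inl(\star)$ we have $\delta_{x,y,P}(\inl(\star)) \equiv x \below x$ by reflexivity, and for $\inr(p)$ the hypothesis $p : P$ contradicts $\lnot P$, so this case is vacuous. Hence, since $\bigvee \delta_{x,y,P}$ is the least upper bound, $y = \bigvee \delta_{x,y,P} \below x$. Together with $x \below y$, antisymmetry yields $x = y$, contradicting nontriviality.

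Both parts are short direct arguments; there is no real obstacle, just a careful use of the upper-bound property (for the first part) and the least-upper-bound property (for the second part), together with the observation that $\lnot P$ makes the $\inr$-branch of $\delta_{x,y,P}$ vacuous. The mild asymmetry between $\lnot P$ and $\lnot\lnot P$ in the conclusions mirrors the asymmetry between the two uses of the supremum and is exactly what makes the weaker conclusion $\lnot\lnot P$ (rather than $P$) the best one can constructively extract from case \ref{delta-sup-weak-em-2}, foreshadowing the connection to weak excluded middle rather than full excluded middle later in the chapter.
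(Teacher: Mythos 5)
Your proof is correct and takes essentially the same approach as the paper's. The only difference is that in part~\ref{delta-sup-weak-em-2} you spell out the upper-bound/least-upper-bound argument for why $\bigvee \delta_{x,y,P} \below x$ under $\lnot P$, whereas the paper compresses this to the direct assertion that $x = \bigvee \delta_{x,y,P}$ in that case; the underlying reasoning is identical.
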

\begin{proof}
  Let \(P : \V\) be an arbitrary proposition. For~\ref{delta-sup-weak-em-1},
  suppose that \(x = \bigvee \delta_{x,y,P}\) and assume for a contradiction
  that we have \(p : P\). Then
  \( y \equiv \delta_{x,y,P}(\inr(p)) \below \bigvee \delta_{x,y,P} = x, \)
  which is impossible by antisymmetry and our assumptions that \(x \below y\)
  and \(x \neq y\).
  For~\ref{delta-sup-weak-em-2}, suppose that \(y = \bigvee \delta_{x,y,P}\) and
  assume for a contradiction that \(\lnot P\) holds. Then
  \(x = \bigvee \delta_{x,y,P} = y\), contradicting our assumption that
  \(x \neq y\). \qedhere
\end{proof}

\begin{proposition}\label{Two-is-not-delta-complete}%
  \index{excluded middle!weak}%
  If the poset \(\Two\) with exactly two elements \(0 \below 1\) is
  \deltacomplete{\V}, then weak excluded middle in \(\V\) holds.
\end{proposition}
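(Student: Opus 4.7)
The plan is to invoke Lemma \ref{delta-sup-weak-em} with the nontrivial poset $\Two$, exploiting the fact that $\Two$ has only two elements, between which we can decide by case analysis. Concretely, fix a proposition $P : \V$; we aim to produce an element of $\lnot\lnot P + \lnot P$. Note that $(\Two, {\below}, 0, 1)$ is nontrivial, since $0 \below 1$ in $\Two$ while $0 \neq 1$ (this follows from the usual fact that $\inl(\star) \neq \inr(\star)$ in a coproduct, as $\Two \equiv \One + \One$).

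By the assumption that $\Two$ is \deltacomplete{\V}, the family $\delta_{0,1,P} : \One + P \to \Two$ has a supremum $s \colonequiv \bigvee \delta_{0,1,P}$ in $\Two$. Since $\Two \equiv \One + \One$, we may case split on $s$: either $s = 0$ or $s = 1$. In the first case, Lemma~\ref{delta-sup-weak-em}\ref{delta-sup-weak-em-1} yields $\lnot P$, so we return $\inr$ applied to this witness. In the second case, Lemma~\ref{delta-sup-weak-em}\ref{delta-sup-weak-em-2} yields $\lnot\lnot P$, and we return $\inl$ applied to this witness. Either way we obtain a weak decision of $P$, which is exactly weak excluded middle in $\V$ by \cref{em-and-wem-equivalent-formulations}\ref{item-wem}.

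There is essentially no obstacle here beyond noticing that the hypothesis supplies a supremum at every proposition in $\V$, and that $\Two$'s decidable two-element structure converts the two possible values of the supremum into the two disjuncts of weak excluded middle. The work has already been done in Lemma~\ref{delta-sup-weak-em}; this proposition is just the observation that applying that lemma inside a poset whose elements can be exhaustively enumerated collapses ``the supremum is $x$ or the supremum is $y$'' from a trichotomy into a genuine dichotomy.
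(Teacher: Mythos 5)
Your proof is correct and follows the paper's argument exactly: apply \cref{delta-sup-weak-em} after a case split on the value of the supremum $\bigvee\delta_{0,1,P}$, which must be $0$ or $1$ since $\Two$ has exactly two elements. The paper's version is terser but identical in structure.
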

\begin{proof}
  Suppose that \(\Two\) were \deltacomplete{\V} and let \(P : \V\) be an
  arbitrary subsingleton. We must show that \(\lnot P\) is decidable. Since
  \(\Two\) has exactly two elements, the supremum \(\bigvee \delta_{0,1,P}\)
  must be \(0\) or \(1\). But then we apply \cref{delta-sup-weak-em} to get
  decidability of \(\lnot P\).
\end{proof}
Combining~\cref{classically-every-poset-is-delta-complete,Two-is-not-delta-complete}
yields that excluded middle implies that every poset is \deltacomplete{\V},
which in turns implies weak excluded middle.
We do not know whether these implications can be reversed.
That the conclusion of the implication in
\cref{delta-sup-weak-em}\ref{delta-sup-weak-em-2} cannot be strengthened to
say that \(P\) must hold is shown by the following observation.
\begin{proposition}\label{delta-sup-em}%
  \index{excluded middle}%
  If for every two propositions \(Q\) and \(R\) in \(\Omega_{\V}\) with
  \(Q \below R\) and \(Q \neq R\) we have that the equality
  \(R = \bigvee \delta_{Q,R,P}\) in \(\Omega_{\V}\) implies \(P\) for every
  proposition \(P : \V\), then excluded middle in \(\V\) follows.
\end{proposition}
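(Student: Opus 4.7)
The plan is to derive excluded middle for an arbitrary proposition $P : \V$ by applying the hypothesis to a carefully chosen triple $(Q, R, P')$, where the ``$P'$'' plays the role of the conclusion we want. The natural target is to arrange things so that the hypothesis hands us $P \lor \lnot P$ directly, so I would try $P' \colonequiv P \lor \lnot P$ and look for $Q \below R$ with $Q \neq R$ such that the relevant supremum computes to $R$.

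My concrete choice would be $Q \colonequiv \Zero_{\V}$ and $R \colonequiv P \lor \lnot P$, both of which live in $\Omega_{\V}$. The inequality $Q \below R$ is automatic since $\Zero_{\V}$ is the least element of $\Omega_{\V}$. For $Q \neq R$, I would argue that an equality $\Zero_{\V} = P \lor \lnot P$ would yield $\lnot(P \lor \lnot P)$, which contradicts the standard derivation of $\lnot\lnot(P \lor \lnot P)$ (from $\lnot(P \lor \lnot P)$ one obtains both $\lnot P$ and $\lnot\lnot P$).

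Next, I would compute the supremum $\bigvee \delta_{Q,R,P'}$ in $\Omega_{\V}$, using the fact (from \cref{Omega-as-pointed-dcpo}) that suprema in $\Omega_{\V}$ are given by existential quantification. This gives
\[
  \textstyle\bigvee \delta_{Q,R,P'} \;=\; \exists_{i : \One_{\V} + P'}\, \delta_{Q,R,P'}(i) \;=\; \Zero_{\V} \lor \pa*{P' \land R} \;=\; P' \land R,
\]
and with $P' = R$ this simplifies to $R \land R = R$. Hence $R = \bigvee \delta_{Q,R,P'}$, so the hypothesis applies and yields $P' \equiv P \lor \lnot P$, which is excluded middle in $\V$.

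I do not foresee a real obstacle: the heart of the argument is picking $R$ to be the very statement we wish to derive, which makes the required equation $R = P' \land R$ tautological. The only subtlety to double-check is that the chosen $R = P \lor \lnot P$ genuinely lies in $\Omega_{\V}$ (this uses that $\V$ is closed under $\Zero_{\V}$, negation, binary coproducts, and propositional truncation) and that the computation of the join really produces $P' \land R$ rather than something weaker; both are routine given the setup in \cref{sec:subsets-and-powersets} and \cref{Omega-as-pointed-dcpo}.
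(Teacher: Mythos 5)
Your proof is correct, and it takes a genuinely but mildly different route from the paper's. The paper reduces excluded middle to double negation elimination: it fixes a proposition $P$ with $\lnot\lnot P$, takes $Q := \Zero_{\V}$ and $R := P$, uses $\lnot\lnot P$ to get $Q \neq R$, verifies $\bigvee \delta_{\Zero,P,P} = P$, and concludes $P$ holds. You instead target excluded middle head-on by choosing $R := P \lor \lnot P$ (with the same $Q := \Zero_{\V}$), using the tautology $\lnot\lnot(P \lor \lnot P)$ to secure $Q \neq R$, and then extracting $P \lor \lnot P$ directly from the hypothesis. Both proofs hinge on the same two ingredients — $\Zero_{\V} \neq R$ is a double-negation statement, and $\bigvee \delta_{\Zero,R,R} = R$ — but your version avoids explicitly invoking the fact (\cref{em-and-wem-equivalent-formulations}) that DNE implies excluded middle, at the cost of the small extra check that $\lnot\lnot(P \lor \lnot P)$ holds; the paper's version is slightly shorter because it piggybacks on an equivalence already established. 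Your computation of the supremum as $P' \land R$ (hence $R$ when $P' = R$) is correct and matches what the paper cites as ``by construction of suprema in $\Omega_{\V}$''.
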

\begin{proof}
  Assume the hypothesis in the proposition. We are going to show that
  \(\lnot\lnot P \to P\) for every proposition \(P : \V\), from which excluded
  middle in \(\V\) follows. So let \(P\) be a proposition in \(\V\) such that
  its double negation holds. This yields \(\Zero \neq P\), so by assumption the
  equality \(P = \bigvee \delta_{\Zero,P,P}\) implies~\(P\). But this equality
  holds, by construction of suprema in \(\Omega_\V\).
\end{proof}

Thus, having a pair of elements \(x \below y\) with \(x \neq y\) is rather weak
constructively in that we can only derive \(\lnot\lnot P\) from
\(y = \bigvee\delta_{x,y,P}\).
As promised in the introduction of this section, we now introduce and motivate a
constructively stronger notion.

\begin{definition}[Strictly below, \(x \sbelow y\)]\label{def:strictly-below}%
  \index{strictly below relation}%
  We~say that \(x\) is \emph{strictly below} \(y\) in a
  \(\delta_{\V}\)\nobreakdash-complete poset if \(x \below y\) and, moreover,
  for every \(z \aboveorder y\) and every proposition \(P : \V\), the equality
  \(z = \bigvee \delta_{x,z,P}\) implies \(P\).
  \nomenclature[sqsubset]{$x \sbelow y$}{strictly below relation}
\end{definition}
Note that with excluded middle, \(x \sbelow y\) is equivalent to the conjunction
of \(x \below y\) and \(x \neq y\). But constructively, the former is much
stronger, as the following examples and proposition illustrate.

\begin{example}[Strictly below in \(\Omega_{\V}\) and \(\powerset_{\V}(X)\)]%
  \label{examples-strictly-below}%
  \hfill
  \begin{enumerate}[(i)]
  \item\label{strictly-below-in-Omega}%
    \index{strictly below relation!in the type of subsingletons}%
    We illustrate the notion of strictly below in \(\Omega_{\V}\). For an
    arbitrary proposition \(P : \V\), we have that \(\Zero_\V \neq P\) holds
    precisely when \(\lnot\lnot P\) does. However, \(\Zero_\V\) is strictly
    below \(P\) if and only if \(P\) holds.
    More generally, for any two propositions \(Q,P : \V\), we have
    \((Q \below P) \times (Q\neq P)\) if and only if
    \(\lnot Q \times \lnot\lnot P\) holds.
    But, \(Q \sbelow P\) holds if and only if \(\lnot Q \times P\) holds.
  \item\label{strictly-below-in-powerset}%
    \index{strictly below relation!in the powerset}%
    In the powerset \(\powerset_{\V}(X)\) of a type \(X : \V\) the situation is
    slightly more involved, but still illustrative.
    If we have two subsets \(A \below B\) of \(X\), then
    \(A \neq B\) if and only if
    \(\lnot\pa*{\forall_{x : X}\pa*{x \in B \to x \in A}}\).
    However, if \(A \sbelow B\) and \(y \in A\) is decidable for every
    \(y : X\), then we get the stronger
    \(\exists_{x : X}\pa*{x \in B \times x \not\in A}\).
    For we can take \(P : \V\) to be
    \(\exists_{x : X}\pa*{x \in B \times x \not\in A}\) and observe that
    \(\bigvee\delta_{A,B,P} = B\), because if \(x \in B\), either \(x \in A\) in
    which case \(x \in \bigvee\delta_{A,B,P}\), or \(x \not\in A\) in which case
    \(P\) must hold and \(x \in B = \bigvee\delta_{A,B,P}\).
    Conversely, if we have \(A \below B\) and an element \(x \in B\) with
    \(x \not\in A\), then \(A \sbelow B\). For if \(C \aboveorder B\) is a
    subset and \(P : \V\) a proposition such that \(\bigvee\delta_{A,C,P} = C\),
    then \(x \in C = \bigvee\delta_{A,C,P} = A \cup \{y \in C \mid P\}\), so
    either \(x \in A\) or \(P\) must hold. But \(x \not\in A\) by assumption, so
    \(P\) must be true, proving \(A \sbelow B\). %
    \qedhere
  \end{enumerate}
\end{example}

\begin{proposition}\label{sbelow-below-neq}%
  \index{excluded middle}%
  For elements \(x\) and \(y\) of a \deltacomplete{\V} poset, we have that
  \(x \sbelow y\) implies both \(x \below y\) and \(x \neq y\). However, if the
  conjunction of \(x \below y\) and \(x \neq y\) implies \(x \sbelow y\) for
  every \(x,y : \Omega_\V\), then excluded middle in \(\V\) holds.
\end{proposition}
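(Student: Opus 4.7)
The plan is to prove the two claims separately, with the first being a direct unfolding of the definitions and the second obtained by reduction to double negation elimination via Example \ref{examples-strictly-below}\ref{strictly-below-in-Omega}.

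For the first claim, that \(x \sbelow y\) implies \(x \below y\), this is immediate from the definition of \(\sbelow\). To deduce \(x \neq y\), I would assume \(x = y\) and derive a contradiction by instantiating the universal statement in the definition of \(\sbelow\) with the witness \(z \colonequiv y\) (which trivially satisfies \(z \aboveorder y\)) and the proposition \(P \colonequiv \Zero_\V\). The family \(\delta_{x,y,\Zero_\V} : \One + \Zero_\V \to X\) has essentially only the left summand contributing, so its supremum exists and equals \(x\). Under the assumption \(x = y\), we obtain \(z = y = x = \bigvee \delta_{x,z,\Zero_\V}\), and the definition of \(x \sbelow y\) then yields an element of \(\Zero_\V\), the desired contradiction.

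For the second claim, I would leverage the characterisations already worked out in Example \ref{examples-strictly-below}\ref{strictly-below-in-Omega}: for propositions \(Q,R : \Omega_\V\), the relation \(Q \sbelow R\) is equivalent to \(\lnot Q \times R\), while the conjunction \(Q \below R\) together with \(Q \neq R\) is equivalent to \(\lnot Q \times \lnot\lnot R\). Specialising the hypothesis to \(Q \colonequiv \Zero_\V\) and an arbitrary \(R : \Omega_\V\), and noting that \(\lnot \Zero_\V\) holds, the implication collapses to \(\lnot\lnot R \to R\). That is, every proposition in \(\V\) is \(\lnot\lnot\)-stable, which is double negation elimination. An appeal to the equivalence of double negation elimination and excluded middle, recorded in Lemma \ref{em-and-wem-equivalent-formulations} (items \ref{item-em} \(\Leftrightarrow\) \ref{item-dns}), then completes the proof.

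There is no real obstacle here beyond correctly unpacking \(\delta_{x,y,P}\) for \(P = \Zero_\V\) in the first part (which forces the supremum to be \(x\)) and correctly citing the existing characterisations in the second part; the argument is otherwise essentially syntactic.
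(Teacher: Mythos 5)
Your proof is correct and follows essentially the same route as the paper: part one instantiates \(z = y\) and \(P = \Zero_{\V}\) in the definition of \(\sbelow\) to extract a contradiction from \(x = y\), and part two reduces the hypothesis to double negation elimination on \(\Omega_{\V}\) via the characterisations in Example~\ref{examples-strictly-below}\ref{strictly-below-in-Omega}, then invokes Lemma~\ref{em-and-wem-equivalent-formulations}. The only cosmetic difference is that you pass through the general characterisation for arbitrary \(Q, R : \Omega_{\V}\) before specialising to \(Q = \Zero_{\V}\), whereas the paper quotes the \(\Zero_{\V}\) special case directly.
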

\begin{proof}
  Note that \(x \sbelow y\) implies \(x \below y\) by definition. Now suppose
  that \(x \sbelow y\). Then the equality
  \(y = \bigvee \delta_{x,y,\Zero_{\V}}\) implies that \(\Zero_{\V}\) holds. But
  if \(x = y\), then this equality holds, so \(x \neq y\), as desired.

  For \(P : \Omega_{\V}\) we observed that \(\Zero_\V \neq P\) is equivalent to
  \(\lnot\lnot P\) and that \(\Zero_\V \sbelow P\) is equivalent to \(P\), so if
  we had \(\pa*{\pa*{x \below y} \times \pa*{x \neq y}} \to x \sbelow y\) in
  general, then we would have \(\lnot\lnot P \to P\) for every proposition \(P\)
  in \(\V\), which is equivalent to excluded middle in~\(\V\).
\end{proof}

\begin{lemma}\label{sbelow-trans}
  The following transitivity properties hold for all elements \(x\), \(y\) and
  \(z\) of a \deltacomplete{\V} poset:
  \begin{enumerate}[(i)]
  \item\label{below-sbelow-sbelow} if \(x \below y \sbelow z\), then
    \(x \sbelow z\);
  \item\label{sbelow-below-sbelow} if \(x \sbelow y \below z\), then
    \(x \sbelow z\).
  \end{enumerate}
\end{lemma}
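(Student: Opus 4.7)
Both parts start from the observation that $x \below z$ follows from the ordinary transitivity of $\below$, so the work lies entirely in verifying the universal property that is the extra content of $\sbelow$. Concretely, in each case I fix an arbitrary $w \aboveorder z$ and a proposition $P : \V$ such that $w = \bigvee\delta_{x,w,P}$, and I aim to deduce $P$.

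For part~\ref{sbelow-below-sbelow}, the argument is essentially a triviality: from $y \below z$ and $z \below w$ we get $y \below w$, so $w$ is an element above $y$; hence the hypothesis $w = \bigvee\delta_{x,w,P}$ is exactly an instance of the defining clause of $x \sbelow y$, which directly delivers $P$. No further computation is required.

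For part~\ref{below-sbelow-sbelow}, the plan is to bridge between $\delta_{x,w,P}$ and $\delta_{y,w,P}$ and then apply the hypothesis $y \sbelow z$ with the upper bound $w \aboveorder z$ and the same proposition $P$. The key intermediate step is to establish $w = \bigvee\delta_{y,w,P}$ by antisymmetry:
\begin{itemize}
\item For $w \aboveorder \bigvee\delta_{y,w,P}$, note that $y \below z \below w$ and that $w$ is clearly an upper bound of the constant value at $\inr(p)$, so $w$ is an upper bound of the family $\delta_{y,w,P}$.
\item For $w \below \bigvee\delta_{y,w,P}$, use that $x \below y$ to obtain a pointwise inequality $\delta_{x,w,P} \below \delta_{y,w,P}$, from which $\bigvee\delta_{x,w,P} \below \bigvee\delta_{y,w,P}$; combined with $w = \bigvee\delta_{x,w,P}$ this yields the desired inequality.
\end{itemize}
With $w = \bigvee\delta_{y,w,P}$ in hand, the definition of $y \sbelow z$ applied to $w$ and $P$ gives $P$.

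There is no real obstacle here, since we have full access to the suprema $\bigvee\delta_{x,w,P}$ and $\bigvee\delta_{y,w,P}$ by $\delta_{\V}$-completeness of the poset; the only point that requires a little care is the verification of $w = \bigvee\delta_{y,w,P}$ in part~\ref{below-sbelow-sbelow}, and this is handled by the two-sided inequality argument sketched above. The proof does not need positivity, nontriviality, or any resizing assumption.
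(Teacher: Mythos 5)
Your proposal is correct and follows essentially the same route as the paper: part~\ref{sbelow-below-sbelow} is immediate by applying the definition of $x \sbelow y$ to $w \aboveorder y$, and for part~\ref{below-sbelow-sbelow} both proofs show $w = \bigvee\delta_{y,w,P}$ by combining the pointwise comparison $\bigvee\delta_{x,w,P} \below \bigvee\delta_{y,w,P}$ with the hypothesis $w = \bigvee\delta_{x,w,P}$ and the trivial bound $\bigvee\delta_{y,w,P} \below w$, then invoke $y \sbelow z$. Your write-up is just slightly more explicit about the antisymmetry step, which the paper leaves implicit.
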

\begin{proof}
  \ref{below-sbelow-sbelow} Assume \(x \below y \sbelow z\), let \(P\) be an
  arbitrary proposition in \(\V\) and suppose that \(z \below w\). We must show
  that \(w = \bigvee \delta_{x,w,P}\) implies \(P\). But \(y \sbelow z\), so we
  know that the equality \(w = \bigvee \delta_{y,w,P}\) implies \(P\). Now
  observe that \(\bigvee \delta_{x,w,P} \below \bigvee \delta_{y,w,P}\), so if
  \(w = \bigvee \delta_{x,w,P}\), then \(w = \bigvee \delta_{y,w,P}\), finishing
  the proof.
  \ref{sbelow-below-sbelow} Assume \(x \sbelow y \below z\), let \(P\) be an
  arbitrary proposition in \(\V\) and suppose that \(z \below w\). We must show
  that \(w = \bigvee \delta_{x,w,P}\) implies \(P\). But \(x \sbelow y\) and
  \(y \below w\), so this follows immediately.
\end{proof}

\begin{proposition}\label{positive-element-equivalent}
  The following are equivalent for an element \(y\) of a \(\V\)-sup-lattice~\(X\):
  \begin{enumerate}[(i)]
  \item\label{positive-1} the least element of \(X\) is strictly below \(y\);
  \item\label{positive-2} for every family \(\alpha : I \to X\) with \(I : \V\),
    if \(y \below \bigvee \alpha\), then \(I\) is inhabited;
  \item\label{positive-3} there exists some \(x : X\) with \(x \sbelow y\).
  \end{enumerate}
\end{proposition}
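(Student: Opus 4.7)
I would prove the equivalences cyclically as (i) $\Rightarrow$ (ii) $\Rightarrow$ (iii) $\Rightarrow$ (i), observing at the outset that (i) $\Leftrightarrow$ (iii) is almost automatic: since $X$ is a $\V$-sup-lattice it carries a least element $\bot$ (the supremum of the empty family), so (i) $\Rightarrow$ (iii) just takes $x \colonequiv \bot$, while any witness $x \sbelow y$ for (iii) gives $\bot \below x \sbelow y$ and hence $\bot \sbelow y$ by the transitivity property in \cref{sbelow-trans}. So the substantive content lies in relating (i) and (iii) with (ii).

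For (i) $\Rightarrow$ (ii), given $\alpha : I \to X$ with $y \below \bigvee\alpha$, I would set $z \colonequiv \bigvee\alpha$ and $P \colonequiv \squash{I}$, and then show $z = \bigvee\delta_{\bot,z,P}$ by proving both inequalities: the $\below$-direction is immediate from the upper-bound property since $\bot, z \below z$; for the reverse, each $\alpha(i) \below \bigvee\delta_{\bot,z,P}$ because $\tosquash{i}$ witnesses $P$ and therefore picks out $z$ in the right component of the $\delta$-family. Unfolding the definition of $\bot \sbelow y$ at this $z$ and $P$ then yields $\squash{I}$, as required.

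For (ii) $\Rightarrow$ (iii), I would again take $x \colonequiv \bot$ and verify $\bot \sbelow y$ directly. Given $z \aboveorder y$ and a proposition $P : \V$ with $z = \bigvee\delta_{\bot,z,P}$, the key observation is that $\bigvee\delta_{\bot,z,P}$ coincides with the supremum of the $P$-indexed constant family $\gamma \colonequiv \lambdadot{p : P} z$, because the $\bot$-summand is absorbed by any upper bound for $\gamma$. Hence $y \below z = \bigvee\gamma$, and (ii) applied to $\gamma : P \to X$ shows that $P$ is inhabited, i.e.\ $P$ holds since it is a subsingleton.

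I expect the only subtlety to be careful bookkeeping around the propositional truncation and the size constraint in (ii): (ii) delivers only inhabitedness and applies only to families indexed by types in $\V$. In particular, the passage (ii) $\Rightarrow$ (iii) is not merely a matter of feeding $\delta_{\bot,z,P}$ directly into (ii)---its domain $\One + P$ is unconditionally inhabited, so that would tell us nothing---and one really has to shift to the $P$-indexed constant family $\gamma$ before (ii) can extract the information that $P$ holds.
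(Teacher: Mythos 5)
Your proof is correct and matches the paper's approach: both prove (i) $\Leftrightarrow$ (iii) via \cref{sbelow-trans}, establish (i) $\Rightarrow$ (ii) by unfolding the strictly-below condition at $z \colonequiv \bigvee\alpha$ with $P \colonequiv \squash{I}$, and prove (ii) $\Rightarrow$ (i) by identifying $\bigvee\delta_{\bot,z,P}$ with the supremum of the $P$-indexed constant family. The only cosmetic difference is that the paper first derives $\bot \sbelow \bigvee\alpha$ via \cref{sbelow-trans} before unfolding the definition there, whereas you apply $\bot \sbelow y$ directly at $z \colonequiv \bigvee\alpha$.
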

\begin{proof}
  Write \(\bot\) for the least element of \(X\). By~\cref{sbelow-trans} we have:
  \[
    \bot \sbelow y
    \iff \exists_{x : X}\pa*{\bot \below x \sbelow y}
    \iff \exists_{x : X}\pa*{x \sbelow y},
  \]
  which proves the equivalence of \ref{positive-1} and \ref{positive-3}. It
  remains to prove that \ref{positive-1} and \ref{positive-2} are
  equivalent. Suppose that \(\bot \sbelow y\) and let \(\alpha : I \to X\) with
  \(y \below \bigvee \alpha\). Using \(\bot \sbelow y \below \bigvee \alpha\)
  and~\cref{sbelow-trans}, we have \(\bot \sbelow \bigvee \alpha\). Hence, we
  only need to prove
  \(\bigvee \alpha \below \bigvee \delta_{\bot,\bigvee \alpha,\exists {i :
      I}}\), but
  \(\alpha_j \below \bigvee \delta_{\bot,\bigvee\alpha,\exists {i : I}}\) for
  every \(j : I\), so this is true indeed.
  For the converse, assume that \(y\) satisfies \ref{positive-2}, suppose
  \(z \aboveorder y\) and let \(P : \V\) be a proposition such that
  \(z = \bigvee \delta_{\bot,z,P}\). We must show that \(P\) holds. But notice
  that
  \(y \below z = \bigvee \delta_{\bot,z,P} = \bigvee \pa*{(p : P)\mapsto z}\),
  so \(P\) must be inhabited as \(y\) satisfies~\ref{positive-2}.
\end{proof}
\cref{positive-2}~in~\cref{positive-element-equivalent} says exactly that \(y\)
is a positive element in the sense of~\cite[p.~98]{Johnstone1984}.%
\index{positivity}
Observe that \ref{positive-2} makes sense for any poset, not just
\(\V\)-sup-lattices: we don't need to assume the existence of suprema to
formulate condition~\ref{positive-2}, because we can rephrase
\(y \below \bigvee \alpha\) as ``for every \(x : X\), if \(x\) is an upper bound
of \(\alpha\) and \(x\) is below any other upper bound of \(\alpha\), then
\(y \below x\)''. Similarly, the notion of being strictly below makes sense for
any poset.
What \cref{positive-element-equivalent} shows is that strictly below generalises
Johnstone's positivity from a \emph{unary} relation to a \emph{binary} one.
Another binary generalisation of positivity in a different direction is that of
a positivity relation in formal
topology~\cite{Sambin2003,CirauloSambin2018,CirauloVickers2016}. For a formal
topology \(S\), one considers a binary relation \(\ltimes\) between \(S\) and
its powerclass. Then \(a \ltimes S\) implies that \(a\) is
positive~\cite[p.~764]{CirauloSambin2018}, while sets of the form
\(\{a \in S \mid a \ltimes U\}\) are thought of as formal closed
subsets~\cite{CirauloVickers2016}.

Looking to strengthen the notion of a nontrivial poset, we make the following
definitions.

\begin{definition}[Positivity; cf.\ {\cite[p.~98]{Johnstone1984}}]\hfill%
  \index{positivity|textbf}%
  \begin{enumerate}[(i)]
  \item An element of a \deltacomplete{\V} poset is \emph{positive} if it
    satisfies \cref{positive-element-equivalent}\ref{positive-3}.
  \item A \deltacomplete{\V} poset \(X\) is \emph{positive} if we have specified
    \(x,y : X\) with \(x\) strictly below \(y\).%
    \qedhere
  \end{enumerate}
\end{definition}

\begin{example}[Nontriviality and positivity in \(\Omega_{\V}\) and
  \(\powerset_{\V}(X)\)]\label{examples-nontrivial-positive}\hfill
  \begin{enumerate}[(i)]
  \item\label{nontrivial-positive-Omega}%
    \index{positivity!in the type of subsingletons}%
    Consider an element \(P\) of the \deltacomplete{\V} poset \(\Omega_\V\). The
    pair \(\pa*{\Zero_\V , P}\) witnesses nontriviality of \(\Omega_\V\) if and
    only if \(\lnot\lnot P\) holds, while it witnesses positivity if and only if
    \(P\) holds.
  \item\label{nontrivial-positive-powerset}%
    \index{positivity!in the powerset}%
    Say that a subset \(A : \powerset_{\V}(X)\) is \emph{nonempty} if the type
    \(\Sigma_{x : X}\pa*{x\in A}\) is nonempty, and \emph{inhabited} if this type is
    inhabited.
    The pair \((\emptyset , A)\) witnesses nontriviality of \(\powerset_\V(X)\)
    if and only if \(A\) is nonempty, while it witnesses positivity if and only
    if \(A\) is inhabited.%
    \qedhere
  \end{enumerate}
\end{example}

We describe how the notion of strictly below relates to compactness and the
way-below relation from domain theory.

\begin{proposition}%
  \index{compactness}%
  If \(x \below y\) are unequal elements of a \(\V\)-dcpo \(D\) and \(y\) is
  compact, then \(x \sbelow y\) without needing to assume excluded middle.
  In particular, a compact element \(x\) of a \(\V\)-dcpo with a least element
  \(\bot\) is positive if and only if \(x \neq \bot\).
\end{proposition}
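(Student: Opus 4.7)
The plan is to prove the first claim directly from the definitions of strict below and compactness, and then derive the second claim as a straightforward corollary using \cref{sbelow-trans,sbelow-below-neq}.

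For the first claim, suppose $x \below y$ with $x \neq y$ and $y$ compact. Since $x \below y$ is already given, it remains to show the second defining clause of $x \sbelow y$. So fix $z \aboveorder y$ and a proposition $P : \V$ with $z = \bigvee \delta_{x,z,P}$; I want to conclude $P$. The family $\delta_{x,z,P} : \One + P \to D$ is directed: it is inhabited via $\inl(\star)$, and semidirectedness holds because $x \below z$ (noting $x \below y \below z$), so $z$ bounds any two members of the family. Compactness of $y$ applied to $y \below z = \bigvee \delta_{x,z,P}$ then gives, in a propositionally truncated sense, some $i : \One + P$ with $y \below \delta_{x,z,P}(i)$. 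Because we aim to prove the proposition $P$, we may eliminate the truncation and do case analysis on $i$. In the case $i = \inl(\star)$ we get $y \below x$ and so $x = y$ by antisymmetry, contradicting $x \neq y$; hence this case produces an element of $P$ via ex falso. In the case $i = \inr(p)$ we simply read off $p : P$.

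For the second claim, let $D$ be a pointed $\V$-dcpo with least element $\bot$ and let $x : D$ be compact. First note that positivity of $x$, defined as the existence of some $x' : D$ with $x' \sbelow x$, is equivalent to $\bot \sbelow x$: the backward direction is obvious, and the forward direction follows from \cref{sbelow-trans}\ref{below-sbelow-sbelow} applied to $\bot \below x' \sbelow x$. Given this reformulation, if $x$ is positive then $\bot \sbelow x$, whence $\bot \neq x$ (equivalently $x \neq \bot$) by \cref{sbelow-below-neq}. Conversely, if $x \neq \bot$, then we can apply the first claim with $\bot \below x$ (since $\bot$ is least) and compactness of $y \colonequiv x$ to deduce $\bot \sbelow x$, i.e.\ $x$ is positive.

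The only substantive step is the first claim, and the main point requiring care there is ensuring that the compactness-of-$y$ step produces $P$ rather than merely $\lnot\lnot P$; this goes through cleanly only because we are trying to prove a proposition (so the truncation from compactness can be eliminated) and because $x \neq y$ lets us rule out the $\inl$ branch outright, without having to decide $P$. Everything else is bookkeeping with \cref{sbelow-trans} and \cref{sbelow-below-neq}.
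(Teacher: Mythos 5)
Your proof is correct and takes essentially the same approach as the paper: apply compactness of \(y\) to the directed family \(\delta_{x,z,P}\), eliminate the truncation (as \(P\) is a proposition), and rule out the \(\inl(\star)\) branch using antisymmetry and \(x \neq y\). You also spell out the ``in particular'' corollary, which the paper leaves implicit, using the same lemmas (\cref{sbelow-trans} and \cref{sbelow-below-neq}) that the paper itself uses for the analogous argument in the proof of \cref{positive-element-equivalent}.
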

\begin{proof}
  Suppose that \(x \below y\) are unequal and that \(y\) is compact. We are to
  show that \(x \sbelow y\). So assume we have \(z \aboveorder y\) and a
  proposition \(P : \V\) such that \(y \below z = \bigvee\delta_{x,z,P}\). By
  compactness of \(y\), there exists \(i : \One + P\) such that
  \(y \below \delta_{x,z,P}(i)\) already. But \(i\) can't be equal to
  \(\inl(\star)\), since \(x \neq y\) is assumed. Hence, \(i = \inr(p)\) and
  \(P\) must hold.
\end{proof}

Note that \(x \sbelow y\) does not imply \(x \ll y\) in general, because with
excluded middle, \(x \sbelow y\) is simply the conjunction of \(x \below y\) and
\(x \neq y\), which does not imply \(x \ll y\) in general.
Also, the conjunction of \(x \ll y\) and \(x \neq y\) does not imply
\(x \sbelow y\), as far as we know.

We end this section by summarising why we consider strictly below to be a
suitable notion in our constructive framework.
First of all, \(x \sbelow y\) coincides with \((x \below y) \times (x \neq y)\)
in the presence of excluded middle, so it is compatible with classical logic.
Secondly, we've seen in~\cref{examples-strictly-below} that strictly below works
well in the poset of truth values and in powersets, yielding familiar
constructive strengthenings.
Thirdly, being strictly below generalises Johnstone's notion of positivity from
a unary to a binary relation.
And finally, as we will see shortly, the derived notion of positive poset is
exactly what we need to derive \(\Omegaresizingalt{\V}\) rather than the weaker
\(\Omeganotnotresizingalt{\V}\) in \cref{positive-impredicativity}.

\subsection{Retract lemmas}\label{sec:retract-lemmas}
We show that the type of propositions in \(\V\) is a retract of any positive
\deltacomplete{\V} poset and that the type of \(\lnot\lnot\)-stable
propositions in \(\V\) is a retract of any nontrivial \deltacomplete{\V} poset.

\begin{definition}[\(\Delta_{x,y}\)]
  For a nontrivial \deltacomplete{\V} poset \(\pa*{X,{\below},x,y}\), we define
  \(\Delta_{x,y} : \Omega_{\V} \to X\) by the assignment
  \(P \mapsto \bigvee \delta_{x,y,P}\).
  \nomenclature[Deltaxy]{$\Delta_{x,y}$}{map sending a proposition \(P\) to the
      supremum of \(\delta_{x,y,P}\)}
\end{definition}
We will often omit the subscripts in \(\Delta_{x,y}\) when it is clear from the
context.
We extend the definition of local smallness~(\cref{def:local-smallness}) from
\(\V\)-dcpos to \deltacomplete{\V} posets.

\begin{definition}[Local smallness, \({\below_{\V}}\)]%
  \index{poset!delta-complete@\deltacomplete{\V}!locally small}%
  A \deltacomplete{\V} poset is \emph{locally small} if its order has
  \(\V\)-small values, in which case we often denote the order with values in
  \(\V\) by \({\below_{\V}}\).
  \nomenclature[sqsubseteqV]{$x \below_{\V} y$}{\(\V\)-valued order relation of
    a locally small \deltacomplete{\V} poset}
\end{definition}

\begin{lemma}\label{nontrivial-retract}%
  \index{poset!nontrivial}%
  \index{retract}%
  A locally small \deltacomplete{\V} poset \(X\) is nontrivial, witnessed by
  elements \(x \below y\), if~and only if the composite
  \(\Omeganotnot{\V} \hookrightarrow \Omega_{\V} \xrightarrow{\Delta_{x,y}} X\)
  is a section.
\end{lemma}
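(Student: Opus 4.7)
The key idea is that the strictly below analysis in \cref{delta-sup-weak-em} gives a logical equivalence between $P$ and $y \below \bigvee \delta_{x,y,P}$, \emph{up to double negation}. Concretely, I will build the retraction as $r(z) \colonequiv \lnot\lnot(y \below_{\V} z)$, using local smallness to make sense of $y \below_{\V} z$ as a small proposition, so that $r$ genuinely lands in $\Omeganotnot{\V}$.

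\textbf{Forward direction (nontriviality $\Rightarrow$ section).} Assume $x \below y$ and $x \neq y$. Define $r : X \to \Omeganotnot{\V}$ by $r(z) \colonequiv \lnot\lnot(y \below_{\V} z)$; this is valued in $\Omeganotnot{\V}$ because every double negation is $\lnot\lnot$-stable. I claim that $r\pa*{\Delta_{x,y}(P)} = P$ for every $P \in \Omeganotnot{\V}$. By propositional extensionality it suffices to prove that $\lnot\lnot\pa*{y \below \bigvee \delta_{x,y,P}}$ and $\lnot\lnot P$ are equivalent, which since $P$ is $\lnot\lnot$-stable gives the required identity. To this end I will establish the chain
\[
P \;\Longrightarrow\; \pa*{y \below \textstyle\bigvee \delta_{x,y,P}} \;\Longrightarrow\; \lnot\lnot P,
\]
whose $\lnot\lnot$-closure gives the desired biconditional (using $\lnot\lnot\lnot\lnot P \iff \lnot\lnot P$). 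The first implication is immediate: if $p : P$, then $\delta_{x,y,P}(\inr(p)) \equiv y$ sits inside the family, so $y \below \bigvee \delta_{x,y,P}$. For the second implication, observe that $y$ is itself an upper bound of $\delta_{x,y,P}$ (using $x \below y$ to handle the left coprojection), so $\bigvee \delta_{x,y,P} \below y$, and combining this with the hypothesis gives $y = \bigvee \delta_{x,y,P}$ by antisymmetry; now \cref{delta-sup-weak-em}\ref{delta-sup-weak-em-2} applies (this is where the assumption $x \neq y$ of nontriviality is used in the hypotheses of the lemma, although for this implication one only needs $x \below y$ together with the equality just derived). Thus $r$ is the required retraction.

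\textbf{Backward direction (section $\Rightarrow$ nontriviality).} Suppose we have a retraction $r$ of the composite. Since $x \below y$ is part of the data, only $x \neq y$ remains. I argue by contradiction: if $x = y$, then the family $\delta_{x,y,P} \equiv \delta_{x,x,P}$ is constantly $x$, so $\Delta_{x,y}(P) = x$ for every $P : \Omega_{\V}$. Consequently the composite $\Omeganotnot{\V} \hookrightarrow \Omega_{\V} \xrightarrow{\Delta_{x,y}} X$ is the constant map with value $x$, and the retract equation $r\pa*{\Delta_{x,y}(P)} = P$ forces $r(x) = P$ for every $\lnot\lnot$-stable $P$. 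Applying this to $P \colonequiv \Zero_{\V}$ and $P \colonequiv \One_{\V}$, which are both $\lnot\lnot$-stable but unequal in $\Omega_{\V}$, yields a contradiction.

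\textbf{Expected obstacle.} The conceptually delicate point is matching the strength of ``nontrivial'' (a negated statement $x \neq y$) with the $\lnot\lnot$-stable fragment of $\Omega_{\V}$: this is exactly why the retraction must be wrapped in a double negation and why only $\Omeganotnot{\V}$, rather than all of $\Omega_{\V}$, can be recovered as a retract. Keeping the small-valued order $\below_{\V}$ (rather than the possibly large order $\below$) on the nose is what makes the codomain of $r$ actually live in $\Omeganotnot{\V}$; this is a routine but necessary bookkeeping step that relies on local smallness via \cref{local-smallness-alt}.
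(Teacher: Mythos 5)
Your proof is correct and uses the same overall strategy as the paper's: use local smallness to define an explicit retraction \(r : X \to \Omeganotnot{\V}\), then verify the retract equation using the \(\delta_{x,y,P}\)-machinery. The one genuine point of difference is the choice of retraction map. You take \(r(z) \colonequiv \lnot\lnot(y \below_{\V} z)\) and work ``from above'', whereas the paper takes \(r(z) \colonequiv (z \not\below_{\V} x)\), a single negation, and works ``from below''. The paper's choice is marginally slicker: a negated proposition is automatically \(\lnot\lnot\)-stable with no double-negation wrapping, and the verification can then invoke \cref{delta-sup-weak-em}\ref{delta-sup-weak-em-1} in the ``only-if'' direction rather than passing through the chain \(P \Rightarrow (y \below \bigvee\delta_{x,y,P}) \Rightarrow \lnot\lnot P\) and its \(\lnot\lnot\)-closure. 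Your chain is a pleasant alternative and makes the role of \cref{delta-sup-weak-em}\ref{delta-sup-weak-em-2} and the need to restrict to \(\Omeganotnot{\V}\) very transparent.

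One small nit: the parenthetical remark in your forward direction (``\ldots although for this implication one only needs \(x \below y\) together with the equality just derived'') is misleading as written. The implication \((y \below \bigvee\delta_{x,y,P}) \Rightarrow \lnot\lnot P\) does genuinely use \(x \neq y\), because \cref{delta-sup-weak-em}\ref{delta-sup-weak-em-2} derives a contradiction from \(\lnot P\) precisely by appealing to \(x \neq y\). If you meant to say that \(x \neq y\) is used only in the second implication and not in the first, that is correct, but the phrasing as it stands suggests the opposite. This is a slip in the commentary only; the proof itself is sound. In the backward direction, your reductio on \(x = y\) is also fine, though the paper gets the same conclusion a hair more directly by computing \(r(\Delta_{x,y}(\Zero_{\V})) = r(x) = \Zero_{\V}\) and \(r(\Delta_{x,y}(\One_{\V})) = r(y) = \One_{\V}\) outright and then concluding \(x \neq y\) from \(\Zero_{\V} \neq \One_{\V}\).
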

\begin{proof}
  Suppose first that \((X,\below,x,y)\) is nontrivial and locally small. We define
  \begin{align*}
    r : X &\to \Omeganotnot{\V} \\
    z &\mapsto z \not\below_{\V} x.
  \end{align*}
  Note that negated propositions are \(\lnot\lnot\)-stable, so \(r\) is
  well-defined. Let \(P : \V\) be an arbitrary \(\lnot\lnot\)-stable
  proposition. We want to show that \(r (\Delta_{x,y}(P)) = P\). By~propositional
  extensionality, establishing logical equivalence suffices.
  Suppose first that \(P\) holds. Then
  \(\Delta_{x,y}(P) \equiv \bigvee \delta_{x,y,P} = y\), so
  \(r(\Delta_{x,y}(P)) = r(y) \equiv \pa*{y \not\below_{\V} x}\) holds by
  antisymmetry and our assumptions that \(x \below y\) and \(x \neq y\).
  Conversely, assume that \(r(\Delta_{x,y}(P))\) holds, i.e.\ that we have
  \(\bigvee \delta_{x,y,P} \not\below_{\V} x\). Since \(P\) is
  \(\lnot\lnot\)-stable, it suffices to derive a contradiction from~\(\lnot
  P\). So assume~\(\lnot P\). Then \(x = \bigvee \delta_{x,y,P}\), so
  \(r(\Delta_{x,y}(P)) = r(x) \equiv x \not\below_{\V} x\), which is false by
  reflexivity.

  For the converse, assume that
  \(\Omeganotnot{\V} \hookrightarrow \Omega_{\V} \xrightarrow{\Delta_{x,y}} X\)
  has a retraction \(r : \Omeganotnot{\V} \to X\). Then
  \(\Zero_{\V} = r(\Delta_{x,y}(\Zero_{\V})) = r(x)\) and
  \(\One_{\V} = r(\Delta_{x,y}(\One_{\V})) = r(y)\),
  where we used that \(\Zero_{\V}\) and \(\One_{\V}\) are \(\lnot\lnot\)-stable.
  Since \(\Zero_{\V} \neq \One_{\V}\), we get \(x \neq y\), so
  \((X,\below,x,y)\) is nontrivial, as desired.
\end{proof}
The appearance of the double negation in the above lemma is due to the
definition of nontriviality. If we instead assume a positive poset \(X\), then
we can exhibit all of \(\Omega_{\V}\) as a retract of \(X\).
\begin{lemma}\label{positive-retract}%
  \index{positivity}\index{retract}%
  A locally small \deltacomplete{\V} poset \(X\) is positive, witnessed by
  elements \(x \sbelow y\), if~and only if for every \(z \aboveorder y\), the map
  \(\Delta_{x,z} : \Omega_{\V} \to X\) is a section.
\end{lemma}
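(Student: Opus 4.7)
The plan is to mirror the proof of \cref{nontrivial-retract}, exploiting the strictly-below relation to exhibit \emph{all} of $\Omega_{\V}$ as a retract (rather than only $\Omeganotnot{\V}$). The essential change is to replace the earlier retraction $w \mapsto (w \not\below_{\V} x)$, which landed in $\Omeganotnot{\V}$ by virtue of being a negation, with one whose values range over arbitrary propositions.

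For the forward direction, given $x \sbelow y$ and an arbitrary $z \aboveorder y$, I would first apply \cref{sbelow-trans} to obtain $x \sbelow z$. I then define $r \colon X \to \Omega_{\V}$ by $r(w) \colonequiv (z \below_{\V} w)$; this lands in $\Omega_{\V}$ thanks to local smallness of $X$. The verification that $r$ is a retraction of $\Delta_{x,z}$ reduces via propositional extensionality to showing $z \below \bigvee \delta_{x,z,P} \iff P$ for each $P : \Omega_{\V}$. One direction is immediate, since if $P$ holds the supremum equals $z$. For the other direction, antisymmetry collapses $z \below \bigvee \delta_{x,z,P}$ to $z = \bigvee \delta_{x,z,P}$, after which the defining property of $x \sbelow z$ applied to this equality yields $P$.

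For the backward direction, the crucial observation is that any section is left-cancellable. Given $z \aboveorder y$ and $P : \V$ with $z = \bigvee \delta_{x,z,P}$, I have $\Delta_{x,z}(P) = z = \Delta_{x,z}(\One_{\V})$, the second equality because $z$ is itself the supremum of $\delta_{x,z,\One_{\V}}$; left-cancellability of $\Delta_{x,z}$ then forces $P = \One_{\V}$, i.e., $P$ holds. This is precisely the defining clause of $x \sbelow y$ at $z$. The inequality $x \below y$ needed to complete $x \sbelow y$ is implicit in the well-formedness of the hypothesis, since $\Delta_{x,y}$ itself must be defined.

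The main obstacle is conceptual rather than technical: recognising why the right-hand side must quantify over \emph{all} $z \aboveorder y$ rather than just take $z = y$. This universality mirrors the defining quantification in $x \sbelow y$, and without it the backward direction would produce only a weakened version of the strictly-below condition. Once this matching of quantifiers is in place, the retraction argument proceeds in close analogy with \cref{nontrivial-retract}, with $x \sbelow z$ replacing the constructively weaker pair $x \below z$ with $x \neq z$.
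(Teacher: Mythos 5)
Your proof is correct and follows essentially the same approach as the paper's: the forward direction defines the retraction $r_z(w) \colonequiv (z \below_{\V} w)$ and uses antisymmetry plus the strictly-below property (via \cref{sbelow-trans}) exactly as the paper does, and the backward direction's appeal to left-cancellability of sections is just a more abstract phrasing of the paper's explicit composition $\One_{\V} = r_z(\Delta_{x,z}(\One_{\V})) = r_z(z) = r_z(\Delta_{x,z}(P)) = P$. No gaps.
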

\begin{proof}
  Suppose first that \((X,\below,x,y)\) is positive and locally small and let
  \(z \aboveorder y\) be arbitrary. We define
  \begin{align*}
    r_z : X &\mapsto \Omega_{\V} \\
    w &\mapsto z \below_{\V} w.
  \end{align*}
  Let \(P : \V\) be arbitrary proposition. We want to show that
  \(r_z(\Delta_{x,z}(P)) = P\). Because of propositional extensionality, it
  suffices to establish a logical equivalence between \(P\) and
  \(r_z(\Delta_{x,z}(P))\).
  If \(P\) holds, then \(\Delta_{x,z}(P) = z\), so
  \(r_z(\Delta_{x,z}(P)) = r_z(z) \equiv \pa*{z \below_{\V} z}\) holds as well
  by reflexivity.
  Conversely, assume that \(r_z(\Delta_{x,z}(P))\) holds, i.e.\ that we have
  \(z \below_{\V} \bigvee \delta_{x,z,P}\). Since
  \({\bigvee \delta_{x,z,P} \below z}\) always holds, we get
  \(z = \bigvee \delta_{x,z,P}\) by antisymmetry. But by assumption
  and~\cref{sbelow-trans}, the element \(x\) is strictly~below~\(z\), so \(P\)
  must hold.

  For the converse, assume that for every \(z \aboveorder y\), the map
  \(\Delta_{x,z} : \Omega_{\V} \to X\) has a retraction
  \(r_z : X \to \Omega_{\V}\). We must show that the equality
  \(z = \Delta_{x,z}(P)\) implies~\(P\) for every \(z \aboveorder y\) and
  proposition \(P : \V\). Assuming \(z = \Delta_{x,z}(P)\), we have
  \(\One_{\V} = r_z(\Delta_{x,z}(\One_{\V})) = r_z(z) = r_z(\Delta_{x,z}(P)) =
  P\), so \(P\) must hold indeed. Hence, \((X,\below,x,y)\) is positive, as
  desired.
\end{proof}

\subsection{Small completeness with resizing}
\label{sec:small-completeness-with-resizing}
We present our main theorems here, which show that, constructively and
predicatively, nontrivial \deltacomplete{\V} posets are necessarily large and
necessarily lack decidable equality.

\begin{definition}[Smallness]%
  \index{poset!delta-complete@\deltacomplete{\V}!small}%
  A \deltacomplete{\V} poset is \emph{small} if it is locally small and its
  carrier is \(\V\)-small.
\end{definition}

\begin{theorem}\label{nontrivial-impredicativity}\label{positive-impredicativity}\hfill
  \begin{enumerate}[(i)]
  \item\label{nontrivial-impredicativity-1} There is a nontrivial small
    \deltacomplete{\V} poset if and only if \(\Omeganotnotresizingalt{\V}\)
    holds.
  \item\label{positive-impredicativity-2} There is a positive small
    \deltacomplete{\V} poset if and only if \(\Omegaresizingalt{\V}\) holds.
  \end{enumerate}%
  \index{resizing!of the type of propositions}
\end{theorem}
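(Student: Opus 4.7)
The plan is to prove each of the two biconditionals by combining the retract lemmas from \cref{sec:retract-lemmas} with the fact that smallness is closed under retracts (\cref{is-small-retract}), with the reverse directions witnessed by the concrete examples \(\Omeganotnot{\V}\) and \(\Omega_{\V}\).

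For the forward direction of~\ref{nontrivial-impredicativity-1}, suppose \((X,{\below},x,y)\) is a nontrivial small \deltacomplete{\V} poset. Since \(X\) is locally small, \cref{nontrivial-retract} yields a section \(\Omeganotnot{\V} \hookrightarrow \Omega_{\V} \xrightarrow{\Delta_{x,y}} X\). As \(X\) is \(\V\)-small, \cref{is-small-retract} tells us that \(\Omeganotnot{\V}\) is \(\V\)-small too, which is exactly \(\Omeganotnotresizingalt{\V}\). The forward direction of~\ref{positive-impredicativity-2} is entirely analogous: if \((X,{\below},x,y)\) is positive, then \cref{positive-retract} gives a section \(\Delta_{x,y} : \Omega_{\V} \hookrightarrow X\) (taking \(z \colonequiv y\)), and \cref{is-small-retract} again gives \(\Omegaresizingalt{\V}\).

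For the reverse direction of~\ref{positive-impredicativity-2}, I would take \(\Omega_{\V}\) itself as the witness. It is a \(\V\)-sup-lattice and hence \deltacomplete{\V} by \cref{examples-of-delta-complete-posets}, it is locally small by propositional extensionality, and it is positive as witnessed by \(\Zero_{\V} \sbelow \One_{\V}\) from \cref{examples-nontrivial-positive}\ref{nontrivial-positive-Omega}. Under the assumption \(\Omegaresizingalt{\V}\), its carrier is \(\V\)-small, so \(\Omega_{\V}\) is a positive small \deltacomplete{\V} poset.

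For the reverse direction of~\ref{nontrivial-impredicativity-1}, I would use \(\Omeganotnot{\V}\), which is a subposet of \(\Omega_{\V}\) under implication and is clearly locally small and nontrivial (witnessed by \(\Zero_{\V} \below \One_{\V}\)). Under \(\Omeganotnotresizingalt{\V}\), its carrier is \(\V\)-small. The main obstacle is verifying that \(\Omeganotnot{\V}\) is \deltacomplete{\V}; unlike for \(\Omega_{\V}\), we cannot simply take the supremum in \(\Omega_{\V}\), since \(\lnot\lnot\)-stability is not preserved by disjunction. My approach is to define, for \(x \below y\) in \(\Omeganotnot{\V}\) and \(P : \V\), the candidate supremum \(s \colonequiv \lnot\lnot\pa*{x \vee (y \wedge P)}\), which is automatically \(\lnot\lnot\)-stable. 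It is an upper bound since \(x \Rightarrow s\) and \(P \Rightarrow (y \Rightarrow s)\); and for any \(\lnot\lnot\)-stable upper bound \(z\) we obtain \(x \vee (y \wedge P) \Rightarrow z\) by case analysis, whence \(s = \lnot\lnot\pa*{x \vee (y \wedge P)} \Rightarrow \lnot\lnot z \Rightarrow z\) using the \(\lnot\lnot\)-stability of \(z\). With this verification in hand, \(\Omeganotnot{\V}\) is a nontrivial small \deltacomplete{\V} poset, completing the proof.
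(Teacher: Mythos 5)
Your proof is correct and follows essentially the same route as the paper: the forward directions combine \cref{nontrivial-retract}/\cref{positive-retract} with \cref{is-small-retract}, and the reverse directions use \(\Omeganotnot{\V}\) and \(\Omega_{\V}\) as witnesses. The only difference is in presentation: the paper asserts directly that \(\Omeganotnot{\V}\) is a \(\V\)-sup-lattice with suprema given by \(\bigvee\alpha \colonequiv \lnot\lnot\exists_{i : I}\alpha_i\), whereas you verify only the (sufficient) \(\delta_{\V}\)-completeness for the specific family \(\delta_{x,y,P}\) — your candidate \(s \colonequiv \lnot\lnot\pa*{x \vee (y \wedge P)}\) is exactly the paper's formula specialised to that family, and your observation that disjunction does not preserve \(\lnot\lnot\)-stability is the correct reason the double negation is needed.
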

\begin{proof}
  \ref{nontrivial-impredicativity-1} Suppose that \((X,\below,x,y)\) is a
  nontrivial small \deltacomplete{\V} poset. Using \cref{nontrivial-retract}, we
  can exhibit \(\Omeganotnot{\V}\) as a retract of \(X\). But \(X\) is
  \(\V\)-small by assumption, so by~\cref{is-small-retract} the type
  \(\Omeganotnot{\V}\) is \(\V\)-small as well.
  For the converse, note that
  \(\pa*{\Omeganotnot{\V},\to,\Zero_{\V},\One_{\V}}\) is a nontrivial locally
  small \(\V\)-sup-lattice with \(\bigvee \alpha\) given by
  \(\lnot\lnot\exists_{i : I}\alpha_i\). And if we assume
  \(\Omeganotnotresizingalt{\V}\), then it is small.
  \ref{positive-impredicativity-2} Suppose that \((X,\below,x,y)\) is a positive
  small poset. By \cref{positive-retract}, we can exhibit \(\Omega_{\V}\) as a
  retract of \(X\). But \(X\) is \(\V\)-small by assumption, so
  by~\cref{is-small-retract} the type \(\Omega_{\V}\) is \(\V\)-small as well.
  For the converse, note that \(\pa*{\Omega_{\V},\to,\Zero_\V,\One_\V}\) is a
  positive locally small \(\V\)-sup-lattice. And if we assume
  \(\Omegaresizingalt{\V}\), then it is small. \qedhere
\end{proof}

\begin{lemma}[{\cite[\mkTTurl{TypeTopology.DiscreteAndSeparated}]{TypeTopology}}]%
  \label{equality-retract}\hfill%
  \index{retract}\index{decidability!of equality}%
  \index{not-not-stability@\(\lnot\lnot\)-stability}%
  \begin{enumerate}[(i)]
  \item\label{dec-eq-retract} Types with decidable equality are closed under
    retracts.
  \item\label{not-not-stable-eq-retract} Types with \(\lnot\lnot\)-stable
    equality are closed under retracts.
  \end{enumerate}
\end{lemma}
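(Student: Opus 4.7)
The plan is to exploit the fact that the section $s : X \to Y$ of any retract is left-cancellable (this is essentially Lemma~\ref{lc-if-invertible} applied to the retract rather than an equivalence: if $s(x) = s(x')$, then applying $r$ and using $r \circ s = \id_X$ yields $x = x'$). Both properties at stake — decidable equality and $\lnot\lnot$-stable equality — transfer backwards along any left-cancellable map, so both parts follow uniformly from this observation.

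For part~\ref{dec-eq-retract}, I would fix $x, x' : X$ and invoke decidable equality of $Y$ on the images $s(x), s(x')$. In the case $s(x) = s(x')$, left-cancellability of $s$ gives $x = x'$. In the case $s(x) \neq s(x')$, contraposition via $\ap_s$ gives $x \neq x'$: an identification $x = x'$ would yield $s(x) = s(x')$, contradicting the assumption. So $x = x'$ is decidable.

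For part~\ref{not-not-stable-eq-retract}, I would again fix $x, x' : X$ and assume $\lnot\lnot(x = x')$. Applying $\ap_s$ inside the double negation (i.e., the monotonicity of $\lnot\lnot$, since any map $A \to B$ induces $\lnot\lnot A \to \lnot\lnot B$) yields $\lnot\lnot(s(x) = s(x'))$. By $\lnot\lnot$-stability of equality on $Y$, we obtain $s(x) = s(x')$, and then left-cancellability of $s$ (equivalently, $\ap_r$ followed by the retraction identity) yields the desired $x = x'$.

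Neither part contains any real obstacle: the proofs are almost purely diagrammatic, relying only on the retraction equation $r \circ s = \id_X$, functoriality of $\ap$, and the elementary fact that functions preserve double negations. The only thing worth flagging is that one should not conflate ``retract'' with ``equivalence'' — the argument uses nothing beyond the section–retraction data, and in particular does not need $s \circ r = \id_Y$.
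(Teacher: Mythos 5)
Your proof is correct and follows essentially the same two-step argument as the paper's: use decidability (resp.\ $\lnot\lnot$-stability) on $s(x)=s(x')$ in $Y$, then cancel $s$ via the retraction equation to descend to $X$. The only presentational difference is that you extract the common observation that both properties transfer backward along any left-cancellable map, whereas the paper spells each case out directly; the substance is identical.
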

\begin{proof}
  \ref{dec-eq-retract} Let \(s : X \to Y\) be a section with retraction \(r\),
  assume that \(Y\) has decidable equality and let \(x,y : X\) be
  arbitrary. Then \(s(x) = s(y)\) is decidable by assumption. If
  \(s(x) = s(y)\), then \(x = r(s(x)) = r(s(y)) = y\); and if
  \(s(x) \neq s(y)\), then certainly \(x \neq y\). Thus, \(x = y\) is decidable,
  as desired.
  \ref{not-not-stable-eq-retract} Using the same notation as before, the type
  \(s(x) = s(y)\) is assumed to be \(\lnot\lnot\)-stable. But then
  \[
    \lnot\lnot(x = y)
    \xrightarrow{\text{functoriality of \(\lnot\lnot\)}}
    \lnot\lnot(s(x) = s(y))
    \xrightarrow{\text{\(\lnot\lnot\)-stability}}
    s(x) = s(y)
    \xrightarrow{\text{apply \(r\)}}
    x = y,
  \]
  so \(x = y\) is \(\lnot\lnot\)-stable, completing the proof.
\end{proof}

\begin{example}[Types with \(\lnot\lnot\)-stable equality]%
  \label{types-with-not-not-stable-equality}%
  \index{type!of natural numbers}%
  \index{Dedekind real}%
  The simple types \(\Nat\), \({\Nat \to \Nat}\), \({\Nat \to \Nat \to \Nat}\),
  etc., see~{\cite[\mkTTurl{TypeTopology.SimpleTypes}]{TypeTopology}}, and the
  type of Dedekind real
  numbers~{\cite[\mkTTurl{Various.Dedekind}]{TypeTopology}} all have
  \(\lnot\lnot\)-stable equality, as does the type \(\Omeganotnot{\U}\) of
  \(\lnot\lnot\)\nobreakdash-stable propositions in any universe \(\U\).
\end{example}

\begin{theorem}\label{nontrivial-weak-em}%
  \label{excluded middle!weak}\index{poset!nontrivial}%
  There is a nontrivial locally small \deltacomplete{\V} poset with decidable
  equality if and only if weak excluded middle in \(\V\) holds.
\end{theorem}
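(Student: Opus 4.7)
The plan is to observe that both directions follow essentially mechanically from Lemma \ref{em-and-wem-equivalent-formulations}, specifically the equivalence between weak excluded middle in $\V$ and decidable equality of $\Omeganotnot{\V}$ (item \ref{item-Omeganotnot-dec-eq}). The whole argument revolves around $\Omeganotnot{\V}$ being the paradigmatic example of a nontrivial locally small \deltacomplete{\V} poset, so that the theorem becomes a corollary of the retract lemma already established.

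For the forward direction, I would start from a nontrivial locally small \deltacomplete{\V} poset $(X,{\below},x,y)$ with decidable equality. By \cref{nontrivial-retract}, the composite $\Omeganotnot{\V} \hookrightarrow \Omega_{\V} \xrightarrow{\Delta_{x,y}} X$ is a section, so $\Omeganotnot{\V}$ is a retract of $X$. Since decidable equality is preserved under retracts by \cref{equality-retract}\ref{dec-eq-retract}, the type $\Omeganotnot{\V}$ inherits decidable equality from $X$. Applying the equivalence \ref{item-wem} $\iff$ \ref{item-Omeganotnot-dec-eq} from \cref{em-and-wem-equivalent-formulations} then yields weak excluded middle in $\V$.

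For the backward direction, assuming weak excluded middle in $\V$, I would exhibit $\Omeganotnot{\V}$ itself as the required structure, exactly as in the converse proof of \cref{nontrivial-impredicativity-1}. Explicitly, $(\Omeganotnot{\V},{\to},\Zero_{\V},\One_{\V})$ is a nontrivial locally small $\V$-sup-lattice with joins given by $\bigvee \alpha \colonequiv \lnot\lnot \exists_{i : I}\alpha_i$, so it is \deltacomplete{\V} by \cref{examples-of-delta-complete-posets}. Nontriviality holds since $\Zero_{\V} \to \One_{\V}$ and $\Zero_{\V} \neq \One_{\V}$; local smallness follows since implications between propositions in $\V$ are themselves \(\V\)-valued; and decidable equality is precisely the content of \cref{em-and-wem-equivalent-formulations}\ref{item-Omeganotnot-dec-eq} under our assumption.

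I do not anticipate any real obstacle: both the retract-based transfer of decidable equality and the sup-lattice structure on $\Omeganotnot{\V}$ are already packaged in the cited lemmas, and the only minor check is that the \(\lnot\lnot\)-closed join really does define a supremum in $\Omeganotnot{\V}$, which is immediate from the universal property of $\lnot\lnot$-stabilisation and the fact that implication between \(\lnot\lnot\)-stable propositions is itself $\lnot\lnot$-stable.
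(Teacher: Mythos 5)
Your proof is correct and follows essentially the same route as the paper: for the forward direction, both proofs use \cref{nontrivial-retract} to exhibit \(\Omeganotnot{\V}\) as a retract of the poset and \cref{equality-retract} to transfer decidable equality, and for the converse both point to \(\pa*{\Omeganotnot{\V},\to,\Zero_\V,\One_\V}\) as the witnessing \(\V\)-sup-lattice. The paper states the final step ("negated propositions are \(\lnot\lnot\)-stable, so this yields weak excluded middle") inline, whereas you cite \cref{em-and-wem-equivalent-formulations}\ref{item-Omeganotnot-dec-eq} directly, but these are the same argument packaged differently.
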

\begin{proof}
  Suppose that \((X,\below,x,y)\) is a nontrivial locally small
  \deltacomplete{\V} poset with decidable equality. Then by
  \cref{nontrivial-retract,equality-retract}, the type \(\Omeganotnot{\V}\) must
  have decidable equality too. But negated propositions are
  \(\lnot\lnot\)-stable, so this yields weak excluded middle in \(\V\). For the
  converse, note that \(\pa*{\Omeganotnot{\V},\to,\Zero_\V,\One_\V}\) is a
  nontrivial locally small \(\V\)-sup-lattice that has decidable equality if and
  only if weak excluded middle in \(\V\) holds.
\end{proof}

\begin{theorem}\label{positive-em} The following are equivalent:%
  \index{positivity}\index{excluded middle}%
  \index{not-not-stability@\(\lnot\lnot\)-stability}%
  \begin{enumerate}[(i)]
  \item\label{positive-loc-small-not-not-stable-eq}%
    there is a positive locally small \deltacomplete{\V} poset with
    \(\lnot\lnot\)-stable equality;
  \item\label{positive-loc-small-dec-eq}%
    there is a positive locally small \deltacomplete{\V} poset with decidable
    equality;
  \item\label{positive-em-em} excluded middle in \(\V\) holds.
  \end{enumerate}
\end{theorem}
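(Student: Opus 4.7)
My plan is to prove the cycle (iii) $\Rightarrow$ (ii) $\Rightarrow$ (i) $\Rightarrow$ (iii), closely mirroring the proof of \cref{nontrivial-weak-em} but using \cref{positive-retract} in place of \cref{nontrivial-retract} and exploiting that a retract yields all of $\Omega_{\V}$ rather than only $\Omeganotnot{\V}$.

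For (iii) $\Rightarrow$ (ii), I would take the poset $\pa*{\Omega_{\V},\to,\Zero_{\V},\One_{\V}}$ itself. It is a $\V$-sup-lattice, hence $\delta_{\V}$-complete by \cref{examples-of-delta-complete-posets}, and locally small by propositional extensionality. By \cref{examples-nontrivial-positive}\ref{nontrivial-positive-Omega} (taking $P \colonequiv \One_{\V}$), the pair $(\Zero_{\V},\One_{\V})$ witnesses positivity. Finally, excluded middle in $\V$ gives $\Omega_{\V}$ decidable equality by \cref{em-and-wem-equivalent-formulations}\ref{item-Omega-dec-eq}.

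The implication (ii) $\Rightarrow$ (i) is immediate, because decidable equality always implies $\lnot\lnot$-stable equality.

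The main content is (i) $\Rightarrow$ (iii). Given a positive locally small $\delta_{\V}$-complete poset $(X,\below,x,y)$ with $\lnot\lnot$-stable equality, \cref{positive-retract} exhibits $\Omega_{\V}$ as a retract of $X$ via the map $\Delta_{x,y} : \Omega_{\V} \to X$ (taking $z \colonequiv y$). By \cref{equality-retract}\ref{not-not-stable-eq-retract}, $\Omega_{\V}$ then has $\lnot\lnot$-stable equality. It remains to deduce excluded middle in $\V$, and the key observation is that every proposition $P : \V$ satisfies $(P = \One_{\V}) \iff P$ by propositional extensionality. Hence $\lnot\lnot P \iff \lnot\lnot(P = \One_{\V}) \Rightarrow (P = \One_{\V}) \iff P$, so every proposition in $\V$ is $\lnot\lnot$-stable, which is excluded middle in $\V$ by \cref{em-and-wem-equivalent-formulations}\ref{item-dns}. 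No step here should present any serious difficulty, since all the heavy lifting has already been done in \cref{positive-retract} and \cref{equality-retract}; the only thing to take care about is tracking that the retraction constructed in \cref{positive-retract} lands in $\Omega_{\V}$ rather than in some larger type.
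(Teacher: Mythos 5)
Your proposal is correct and follows essentially the same route as the paper: it establishes the same three implications ((iii) $\Rightarrow$ (ii), (ii) $\Rightarrow$ (i), (i) $\Rightarrow$ (iii)) using \cref{positive-retract} and \cref{equality-retract}\ref{not-not-stable-eq-retract} for the crucial direction (i) $\Rightarrow$ (iii), and $\Omega_{\V}$ as the witness for (iii) $\Rightarrow$ (ii). The only difference is that you spell out more explicitly why $\lnot\lnot$-stable equality on $\Omega_{\V}$ yields double-negation elimination via $P \iff (P = \One_{\V})$, which the paper leaves as a one-line remark.
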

\begin{proof}
  Note that {\ref{positive-loc-small-dec-eq}} \(\Rightarrow\)
  {\ref{positive-loc-small-not-not-stable-eq}}, so we are left to show that
  {\ref{positive-em-em}} \(\Rightarrow\) {\ref{positive-loc-small-dec-eq}} and
  that {\ref{positive-loc-small-not-not-stable-eq}} \(\Rightarrow\)
  {\ref{positive-em-em}}.
  For the first implication, note that
  \(\pa*{\Omega_{\V},\to,\Zero_\V,\One_\V}\) is a positive locally small
  \(\V\)-sup-lattice that has decidable equality if and only if excluded middle
  in \(\V\) holds.
  To see that
  {\ref{positive-loc-small-not-not-stable-eq}}~implies~{\ref{positive-em-em}},
  suppose that \((X,\below,x,y)\) is a positive locally small \deltacomplete{\V}
  poset with \(\lnot\lnot\)-stable equality. Then by
  \cref{positive-retract,equality-retract}, the type \(\Omega_{\V}\) must have
  \(\lnot\lnot\)-stable equality. But this implies that \(\lnot\lnot P \to P\)
  for every proposition \(P\) in \(\V\) which is equivalent to excluded middle
  in \(\V\).
\end{proof}

In particular, \cref{positive-em}\ref{positive-loc-small-not-not-stable-eq}
shows that, constructively, none of the types
from~\cref{types-with-not-not-stable-equality} can be equipped with the
structure of a positive \deltacomplete{\V} poset.
Hence, we cannot expect the Dedekind reals to form a positive bounded complete
poset.

Lattices, bounded complete posets and dcpos are necessarily large and
necessarily lack decidable equality in our predicative constructive setting.%
\index{smallness}\index{decidability!of equality}%
\index{poset!bounded complete}\index{sup-lattice}\index{dcpo}
More precisely:
\begin{corollary}\hfill
  \begin{enumerate}[(i)]
  \item There is a nontrivial small \(\V\)-sup-lattice (or \(\V\)-bounded complete
    poset or \(\V\)-dcpo) if~and~only~if \(\Omeganotnotresizingalt{\V}\) holds.
  \item There is a positive small \(\V\)-sup-lattice (or \(\V\)-bounded complete
    poset or \(\V\)-dcpo) if~and~only~if \(\Omegaresizingalt{\V}\) holds.
  \item There is a nontrivial locally small \(\V\)-sup-lattice (or
    \(\V\)-bounded complete poset or \(\V\)-dcpo) with decidable equality if and
    only if weak excluded middle in \(\V\) holds.
  \item There is a positive locally small \(\V\)-sup-lattice (or \(\V\)-bounded
    complete poset or \(\V\)-dcpo) with decidable equality if and only if
    excluded middle in \(\V\) holds.
  \end{enumerate}
\end{corollary}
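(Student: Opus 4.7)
The plan is to derive this corollary directly from the four theorems already established for \(\delta_{\V}\)-complete posets (\cref{nontrivial-impredicativity,nontrivial-weak-em,positive-em}) by observing that all three classes of posets under consideration fit into that framework and that the witnessing examples produced in the proofs of those theorems are in fact \(\V\)-sup-lattices, which is the strongest of the three conditions.

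For the forward direction of each of (i)--(iv), I would appeal to \cref{examples-of-delta-complete-posets}, which shows that every \(\V\)-sup-lattice, every \(\V\)-bounded complete poset and every \(\V\)-dcpo is \(\delta_{\V}\)-complete. Hence a nontrivial (resp.\ positive) small poset of any of these three kinds is in particular a nontrivial (resp.\ positive) small \(\delta_{\V}\)-complete poset, and the corresponding theorem delivers \(\Omeganotnotresizingalt{\V}\), \(\Omegaresizingalt{\V}\), weak excluded middle in \(\V\), or excluded middle in \(\V\), respectively. The decidable-equality hypothesis in (iii) and (iv) is preserved by this specialisation since it is a property of the carrier.

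For the converse direction, the idea is that each theorem's proof already exhibits a canonical example, namely \(\Omeganotnot{\V}\) for the ``nontrivial'' statements and \(\Omega_{\V}\) for the ``positive'' ones, ordered by implication. In both cases the supremum of a \(\V\)-indexed family is given by (the double negation of) an existential quantifier, so the poset is actually a \emph{\(\V\)-sup-lattice}. Since a \(\V\)-sup-lattice has suprema of \emph{all} \(\V\)-indexed families, it is in particular a \(\V\)-bounded complete poset (restricting to bounded families) and a \(\V\)-dcpo (restricting to directed families). Consequently the very same example witnesses the existence clause for each of the three classes simultaneously; smallness under the relevant resizing axiom, and decidable equality under the relevant excluded-middle principle, transfer verbatim from the proofs of \cref{nontrivial-impredicativity,nontrivial-weak-em,positive-em}.

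There is no real obstacle here: the corollary is a packaging result. The only point to double-check is that the implications ``sup-lattice \(\Rightarrow\) bounded complete poset'' and ``sup-lattice \(\Rightarrow\) dcpo'' truly hold with our universe conventions (that is, that the \(\V\)-smallness of the indexing type is preserved when we restrict from arbitrary \(\V\)-families to bounded or directed ones), which is immediate from the definitions. Everything else is an invocation of the previously proved theorems together with \cref{examples-of-delta-complete-posets}.
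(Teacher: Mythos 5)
Your proposal is correct and follows exactly the route the paper intends: the forward directions of (i)--(iv) follow by specialising \cref{nontrivial-impredicativity,positive-impredicativity,nontrivial-weak-em,positive-em} through \cref{examples-of-delta-complete-posets}, and for the converses the paper's own proofs of those theorems already exhibit \(\Omeganotnot{\V}\) and \(\Omega_{\V}\) as \(\V\)-sup-lattices, which (as you note) are automatically \(\V\)-bounded complete posets and \(\V\)-dcpos by restriction of the indexing families.
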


The above notions of non-triviality and positivity are data rather than
property. Indeed, a nontrivial poset \(X\) is (by definition) equipped with two
designated points \(x,y : X\) such that \(x \below y\) and \(x \neq y\). It is
natural to wonder if the propositionally truncated versions of these two notions
yield the same conclusions. We show that this is indeed the case if we assume
univalence. The need for the univalence assumption comes from the fact that
smallness is a property precisely if univalence holds, as shown in
\cref{is-small-is-prop,is-small-univalence}.%
\index{univalence}

\begin{definition}[Nontrivial/positive in an unspecified way]%
  \index{existence!unspecified}\index{poset!nontrivial}\index{positivity}%
  A poset \(X\) is \emph{nontrivial in an unspecified way} if there exist some
  elements \(x,y : X\) such that \(x \below y\) and \(x \neq y\), i.e.\
  \(\exists_{x,y: X}\pa*{\pa*{x \below y} \times \pa*{x \neq y}}\).  Similarly,
  we can define when a poset is \emph{positive in an unspecified way} by
  truncating the notion of positivity.
\end{definition}

\begin{theorem}
  Suppose that the universes \(\V\) and \(\V^+\) are univalent.
  \begin{enumerate}[(i)]
  \item\label{unspecified-1} There is a small \deltacomplete{\V} poset that is
    nontrivial in an unspecified way if and only if
    \(\Omeganotnotresizingalt{\V}\) holds.
  \item\label{unspecified-2} There is a small \deltacomplete{\V} poset that is
    positive in an unspecified way if and only if \(\Omegaresizingalt{\V}\)
    holds.
  \end{enumerate}
\end{theorem}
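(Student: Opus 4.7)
The plan is to reduce both equivalences to \cref{nontrivial-impredicativity,positive-impredicativity} by stripping the propositional truncations, using that under our univalence assumptions every relevant notion is property-valued. I will handle both items in parallel, since the argument is structurally identical; the only difference is replacing nontriviality and \(\Omeganotnotresizingalt{\V}\) by positivity and \(\Omegaresizingalt{\V}\), and invoking the appropriate half of \cref{nontrivial-impredicativity,positive-impredicativity}.

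First I would verify that, under the univalence hypotheses, the right-hand sides are propositions. For~\ref{unspecified-2}, the assertion \(\Omegaresizingalt{\V}\) says \(\Omega_{\V} : \V^+\) is \(\V\)-small, which is a proposition by~\cref{is-small-is-prop} applied to the univalent universes \(\V\) and \(\V^+\); for~\ref{unspecified-1} the same argument applies to \(\Omeganotnot{\V}\). Similarly, for a poset \(X\) with carrier in some universe, being \(\V\)-small and being locally \(\V\)-small are properties by~\cref{is-small-is-prop,being-small-prop-is-prop}, so ``\(X\) is a small \deltacomplete{\V} poset'' is property as well. Thus ``there is a small \deltacomplete{\V} poset that is nontrivial (resp.\ positive) in an unspecified way'' is, by construction of its truncation, a proposition.

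Next, for the forward direction of~\ref{unspecified-1}, suppose we have an element of the truncation asserting the existence of a small \deltacomplete{\V} poset nontrivial in an unspecified way. Since the goal \(\Omeganotnotresizingalt{\V}\) is a proposition, the universal property of the propositional truncation lets us assume we have a specified small \deltacomplete{\V} poset \(X\) that is nontrivial in an unspecified way. Applying the universal property once more to the inner truncation we obtain specified witnesses \(x,y : X\) with \(x \below y\) and \(x \neq y\). Then \((X,{\below},x,y)\) is a nontrivial small \deltacomplete{\V} poset in the specified sense, so \cref{nontrivial-impredicativity}\ref{nontrivial-impredicativity-1} delivers \(\Omeganotnotresizingalt{\V}\). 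The forward direction of~\ref{unspecified-2} is identical, using~\cref{positive-impredicativity}\ref{positive-impredicativity-2} to produce \(\Omegaresizingalt{\V}\) from a specified positive small \deltacomplete{\V} poset.

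For the converse direction of~\ref{unspecified-1}, assuming \(\Omeganotnotresizingalt{\V}\), the poset \(\pa*{\Omeganotnot{\V},\to,\Zero_\V,\One_\V}\) is, exactly as in the proof of \cref{nontrivial-impredicativity}, a nontrivial locally small \(\V\)-sup-lattice; the resizing hypothesis makes its carrier \(\V\)-small as well, so it is a small \deltacomplete{\V} poset nontrivial (with specified witnesses, a~fortiori in an unspecified way). The converse of~\ref{unspecified-2} is analogous using \(\pa*{\Omega_\V,\to,\Zero_\V,\One_\V}\) and \(\Omegaresizingalt{\V}\). The only conceptual obstacle is the bookkeeping with univalence to guarantee that smallness is property-valued so that the outer and inner truncations can both be eliminated; once that is in place the proof is a direct appeal to \cref{nontrivial-impredicativity,positive-impredicativity}.
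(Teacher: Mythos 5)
Your proposal is correct and follows the same strategy as the paper: use univalence of \(\V\) and \(\V^+\) together with \cref{is-small-is-prop} to know the resizing goal is a proposition, eliminate the propositional truncation(s) to recover specified witnesses, and reduce to \cref{nontrivial-impredicativity,positive-impredicativity}. The only superfluous step is your verification that ``\(X\) is a small \deltacomplete{\V} poset'' is a property: for the forward direction it suffices that the \emph{target} (the resizing assertion) is a proposition, so the universal property of the truncation applies directly; the paper's own proof accordingly omits that observation.
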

\begin{proof}
  \ref{unspecified-1} Suppose that \(X\) is a \deltacomplete{\V} poset that is
  nontrivial in an unspecified way.  By~\cref{is-small-is-prop} and univalence
  of \(\V\) and \(\V^+\), the type ``\(\Omeganotnot{\V}\issmalltype{V}\)'' is a
  proposition. By the universal property of the propositional truncation, in
  proving that the type \(\Omeganotnot{\V}\) is \(\V\)-small we can therefore
  assume that are given points \(x,y : X\) with \(x \below y\) and \(x \neq
  y\). The result then follows
  from~\cref{nontrivial-impredicativity}\ref{nontrivial-impredicativity-1}.
  \ref{unspecified-2} By reduction to
  \cref{positive-impredicativity}\ref{positive-impredicativity-2}.
\end{proof}
Similarly, we can prove the following theorem by reduction to
\cref{nontrivial-weak-em,positive-em}.

\begin{theorem}\hfill
  \begin{enumerate}[(i)]
  \item There is a locally small \deltacomplete{\V} poset with decidable
    equality that is nontrivial in an unspecified way if and only if weak
    excluded middle in \(\V\) holds.
  \item There is a locally small \deltacomplete{\V} poset with decidable
    equality that is positive in an unspecified way if and only if excluded
    middle in \(\V\) holds.
  \end{enumerate}
\end{theorem}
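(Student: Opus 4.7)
The plan is to mirror the proof of the immediately preceding theorem on smallness and unspecified (non-)triviality, but crucially observe that here no univalence assumption is required. The reason is that both weak excluded middle and excluded middle in $\V$ are already propositions unconditionally: by \cref{being-prop-is-prop}, being decidable or weakly decidable is a property of a proposition in $\V$, and \cref{Pi-is-prop} then gives that the $\Pi$-quantifications over $\Omega_{\V}$ expressing (weak) excluded middle in $\V$ are themselves subsingletons. This replaces the appeal to \cref{is-small-is-prop} (which did require univalence) in the previous theorem.

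For item (i), the backward direction is immediate: if weak excluded middle in $\V$ holds, then the backward direction of \cref{nontrivial-weak-em} supplies a specified nontrivial locally small \deltacomplete{\V} poset with decidable equality (concretely $\Omeganotnot{\V}$ ordered by implication), which is in particular nontrivial in an unspecified way by applying $\tosquash{-}$. For the forward direction, suppose $X$ is a locally small \deltacomplete{\V} poset with decidable equality that is nontrivial in an unspecified way. Since weak excluded middle in $\V$ is a proposition, the universal property of the propositional truncation lets us assume we have specified $x,y : X$ with $x \below y$ and $x \neq y$. We then directly invoke the forward direction of \cref{nontrivial-weak-em} to conclude weak excluded middle in $\V$.

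Item (ii) is completely analogous, with \cref{nontrivial-weak-em} replaced throughout by \cref{positive-em} (and weak excluded middle replaced by excluded middle). The backward direction uses \(\Omega_{\V}\) as the positive locally small \(\V\)-sup-lattice with decidable equality furnished by excluded middle; the forward direction again uses that excluded middle in $\V$ is a proposition to eliminate the propositional truncation and access a specified witness $x \sbelow y$, at which point the equivalence \ref{positive-loc-small-dec-eq} $\Rightarrow$ \ref{positive-em-em} of \cref{positive-em} applies.

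There is no real obstacle here: the argument is essentially a bookkeeping exercise showing that truncating the (non-)triviality/positivity data makes no difference because the target conclusions are already property, so the propositional truncation can be eliminated by its universal property. The only subtlety worth flagging explicitly in the write-up is why univalence is not needed, in contrast with the preceding theorem.
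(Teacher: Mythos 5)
Your proof is correct and takes essentially the same approach as the paper, which gives only a terse one-line reduction to \cref{nontrivial-weak-em,positive-em}: eliminate the truncated (non-)triviality because the conclusion is a subsingleton, as you correctly flesh out, and your observation that univalence is not needed here (in contrast with the preceding theorem about smallness) is exactly the right point. One minor nit on the citation: \cref{being-prop-is-prop} concerns \(\issubsingleton\), \(\issingleton\) and \(\isset\) rather than decidability; the relevant fact is that \(P + \lnot P\) (or \(\lnot\lnot P + \lnot P\)) is a subsingleton because the two summands are mutually exclusive propositions, which together with \cref{Pi-is-prop} gives that (weak) excluded middle in \(\V\) is a subsingleton—but the substance of your argument is unaffected.
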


\section{Maximal points and fixed points}\label{sec:maximal-and-fixed-points}

As is well known, in impredicative mathematics, a poset has suprema of all
subsets if and only if it has infima of all subsets.
Perhaps counter-intuitively, this ``duality'' theorem can be proved
predicatively. However, in the absence of impredicativity, it is not possible to
fulfil its hypotheses when trying to apply it, because there are no nontrivial
examples.%
\index{duality}\index{sup-lattice}

To explain this, we first have to make the statement of the duality theorem
precise. A~single universe formulation is ``every \(\V\)-small
\(\V\)-sup-lattice has all infima of families indexed by types in \(\V\)''.
The usual proof, adapted from subsets to families, shows that this formulation
is predicatively provable, but in our predicative setting
\cref{nontrivial-impredicativity} tells us that there are no nontrivial examples
to apply it to.

It is natural to wonder whether the single universe formulation can be
generalised to \emph{locally} small \(\V\)-sup-lattices (with large carriers),
resulting in a predicatively useful result.
However, one of the anonymous reviewers of our
submission~\cite{deJongEscardo2022} pointed out that this generalisation is
false and suggested the ordinals as a counterexample in a set-theoretic setting:
it is a class with suprema for all subsets but has no greatest element.
This led us to prove~(\cref{sec:small-suprema-of-ordinals}) in our
type-theoretic context that the locally small, but large type of ordinals in a
universe~\(\V\) is a \(\V\)-sup-lattice (with no greatest element).

Similarly, consider a generalised formulation of Tarski's
theorem~\cite{Tarski1955} that allows for multiple universes, i.e.\ we define
\(\text{\emph{Tarski's-Theorem}}_{\V,\U,\T}\) as the assertion that every
monotone endofunction on a \(\V\)-sup-lattice with carrier in a universe \(\U\)
and order taking values in a universe \(\T\) has a greatest fixed point.%
\index{Tarski's Theorem|textbf}
Then \(\Tarski{\V}{\V}{\V}\) corresponds to the original formulation and,
moreover, is provable predicatively, but not useful predicatively
because~\cref{nontrivial-impredicativity} shows that its hypotheses can only be
fulfilled for trivial posets.
On~the other hand, \(\Tarski{\V}{\V^+}{\V}\) is provably false because the
identity map on the \(\V\)-sup-lattice of ordinals in \(\V\) is a
counterexample.
Analogous considerations could be made for a lemma due to
Pataraia~\cite{Pataraia1997,Escardo2003} saying that every dcpo has a greatest
monotone inflationary endofunction.%
\index{Pataraia's Lemma}

\subsection{A predicative counterexample}\label{sec:predicative-counterexample}

Because the type of ordinals in \(\V\) is not \(\V\)-small even impredicatively,
the above does not rule out the possibility that a \(\V\)-sup-lattice \(X\) has
all \(\V\)-infima provided \(X\) is \(\V\)\nobreakdash-small impredicatively.
To address this, we present an example of a \(\V\)-sup-lattice that is
\(\V\)-small impredicatively, but predicatively does not necessarily have a
maximal element. In particular, it need not have a greatest element or all
\(\V\)-infima.

Fix a proposition \(P_\U\) in a universe \(\U\). We consider its lifting (in the
sense of~\cref{sec:lifting}) with respect to a universe \(\V\), i.e.\ we
consider
\(\lifting_{\V}\pa*{P_\U} \equiv \Sigma_{Q : \Omega_\V} \pa*{Q \to P_\U}\), just
like in~\cref{lifting-structurally-algebraic-but-no-small-basis}.
This is a subtype of \(\Omega_\V\) and it is closed under \(\V\)-suprema (in
particular, it contains the least~element).%
\index{lifting}

\begin{example}\hfill
  \begin{enumerate}[(i)]
  \item If \(P_\U \colonequiv \Zero_\U\), then
    \( \lifting_{\V}(P_\U) \simeq \pa*{\Sigma_{Q : \Omega_\V} \lnot Q} \simeq
    \pa*{\Sigma_{Q : \Omega_\V} \pa*{Q = \Zero_{\V}}} \simeq \One \).
  \item If \(P_\U \colonequiv \One_\U\), then
    \( \lifting_{\V}(P_\U) \equiv \pa*{\Sigma_{Q : \Omega_\V} \pa*{Q \to \One_\U}}
    \simeq \Omega_\V \). \qedhere
  \end{enumerate}
\end{example}
What makes \(\lifting_{\V}(P_\U)\) useful is the following observation.
\begin{lemma}\label{maximal-iff-resize}%
  \index{maximality}%
  Suppose that the poset \(\lifting_{\V}(P_\U)\) has a maximal element
  \(Q : \Omega_\V\). Then \(P_\U\) is equivalent to \(Q\), which is the
  greatest element of \(\lifting_{\V}(P_\U)\). In particular,
  \(P_\U\) is \(\V\)-small.
  Conversely, if \(P_\U\) is equivalent to a proposition \(Q : \Omega_\V\), then
  \(Q\) is the greatest element of~\(\lifting_{\V}(P_\U)\).
\end{lemma}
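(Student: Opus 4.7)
The plan is to exploit the fact that, since $P_\U$ is a subsingleton, the poset $\lifting_{\V}(P_\U)$ is equivalent to the subtype $\set{Q \in \Omega_\V \mid Q \to P_\U}$ of $\Omega_\V$ ordered by implication: the second component of a partial element is uniquely determined once it exists. All arguments then reduce to simple manipulations of propositions together with propositional extensionality.

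For the backward direction, suppose $e : P_\U \iff Q$ for some $Q : \Omega_\V$. Then the map $e_1 : Q \to P_\U$ witnesses $Q \in \lifting_{\V}(P_\U)$. For any other $Q' \in \lifting_{\V}(P_\U)$ with witness $f : Q' \to P_\U$, the composite ${e_2 \circ f} : Q' \to Q$ shows $Q' \below Q$ in $\lifting_{\V}(P_\U)$, so $Q$ is the greatest element.

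For the forward direction, suppose $Q : \Omega_\V$ is a maximal element of $\lifting_{\V}(P_\U)$, witnessed by some $f : Q \to P_\U$. The nontrivial step is to produce a map $P_\U \to Q$. To this end, assume $p : P_\U$. Then $\lambdadot{u} p : \One_\V \to P_\U$ shows that $\One_\V$ is an element of $\lifting_{\V}(P_\U)$; since $Q \to \One_\V$ is trivially true, maximality of $Q$ forces $Q = \One_\V$ in $\lifting_{\V}(P_\U)$, and in particular $Q$ holds. Thus $P_\U \to Q$, and combined with $f$ and propositional extensionality we obtain an equivalence $P_\U \simeq Q$, which witnesses that $P_\U$ is $\V$-small. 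Finally, that $Q$ is actually the \emph{greatest} element (not just a maximal one) follows from the backward direction applied to this equivalence.

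The argument is essentially a one-paragraph unfolding of definitions, so there is no genuine obstacle; the only subtlety worth flagging is the role of the choice $\One_\V$ as a witness above $Q$ in $\lifting_\V(P_\U)$ once $p : P_\U$ is assumed, which is what allows us to conclude $Q$ without leaving the universe $\V$.
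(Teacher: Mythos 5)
Your proposal is correct and takes essentially the same route as the paper: in both, the crux of the forward direction is to assume $P_\U$, observe that $\One_\V$ then sits above $Q$ in $\lifting_\V(P_\U)$, and invoke maximality to force $Q = \One_\V$; both also reduce the whole lemma to the identification of $\lifting_\V(P_\U)$ with the propositions in $\V$ implying $P_\U$, ordered by implication. The only cosmetic difference is that you verify greatestness by composing witnesses directly, whereas the paper cites the equivalence $\lifting_\V(P_\U) \simeq \Sigma_{Q':\Omega_\V}(Q' \to Q)$ — these are the same observation.
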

\begin{proof}
  Suppose that \(\lifting_{\V}(P_\U)\) has a maximal element \(Q :
  \Omega_\V\). We wish to show that \(Q \simeq P_\U\). By definition of
  \(\lifting_{\V}(P_\U)\), we already have that \(Q \to P_\U\). So only the
  converse remains. Therefore suppose that \(P_\U\) holds. Then, \(\One_\V\) is
  an element of \(\lifting_{\V}(P_\U)\). Obviously \(Q \to 1_\V\), but \(Q\) is
  maximal, so actually \(Q = 1_\V\), that is, \(Q\) holds, as
  desired. Thus, \(Q \simeq P_\U\). It~is then straightforward to see that \(Q\)
  is actually the greatest element of \(\lifting_{\V}(P_\U)\), since
  \(\lifting_{\V}(P_\U) \simeq \Sigma_{Q' : \Omega_\V}(Q' \to Q)\).
  For the converse, assume that \(P_\U\) is equivalent to a proposition
  \(Q : \Omega_\V\). Then, as before,
  \(\lifting_{\V}(P_\U) \simeq \Sigma_{Q' : \Omega_\V}(Q' \to Q)\), which shows
  that \(Q\) is indeed the greatest element of~\(\lifting_{\V}(P_\U)\).
\end{proof}

\begin{corollary}%
  \index{infimum}%
  The \(\V\)-sup-lattice \(\lifting_{\V}(P_{\U})\) has all \(\V\)-infima if and
  only if \(P_{\U}\) is \(\V\)-small.
\end{corollary}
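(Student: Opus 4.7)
The plan is to exploit Lemma~\ref{maximal-iff-resize} together with the observation that the infimum of the empty family (indexed by \(\Zero_\V : \V\)) in any poset is precisely its greatest element, so that having all \(\V\)-infima in particular yields a greatest element.

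For the forward implication, I would assume that \(\lifting_\V(P_\U)\) has all \(\V\)-infima and instantiate this with the empty family \(\Zero_\V \to \lifting_\V(P_\U)\). Its infimum is an upper bound of the (vacuous) collection of lower bounds, hence the greatest element of \(\lifting_\V(P_\U)\); call it \(Q : \Omega_\V\). A greatest element is in particular maximal, so Lemma~\ref{maximal-iff-resize} gives \(P_\U \simeq Q\), witnessing that \(P_\U\) is \(\V\)-small.

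For the backward implication, I would assume \(P_\U\) is \(\V\)-small, i.e.\ \(P_\U \simeq Q\) for some \(Q : \Omega_\V\). Under this assumption, \(\lifting_\V(P_\U) \simeq \Sigma_{Q' : \Omega_\V}(Q' \to Q)\), as already noted in the proof of Lemma~\ref{maximal-iff-resize}. Given any family \(\alpha : I \to \lifting_\V(P_\U)\) with \(I : \V\), writing \(\alpha_i = (R_i,\varphi_i)\) with \(R_i : \Omega_\V\), I would propose the infimum to be \(R \colonequiv Q \wedge \forall_{i : I} R_i\). This is a proposition in \(\V\) equipped with a canonical map \(R \to Q \simeq P_\U\), so it lives in \(\lifting_\V(P_\U)\). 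It is a lower bound of \(\alpha\) because \(R\) implies each \(R_i\) by projection from the universal quantifier, and it is greatest among lower bounds because any \((S,\psi) : \lifting_\V(P_\U)\) with \(S \to R_i\) for all \(i\) satisfies \(S \to \forall_{i : I} R_i\), while the defining map \(\psi : S \to P_\U \simeq Q\) supplies \(S \to Q\), so \(S \to R\).

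There is no real obstacle here: the result follows almost immediately from the preceding lemma once one notices that the empty infimum is the greatest element. The only minor subtlety is that in the backward direction one should conjoin with \(Q\) in the definition of the infimum to handle the empty family correctly (if \(I\) is empty then \(\forall_{i : I}R_i = \One_\V\), which need not imply \(P_\U\) without the extra \(Q\) conjunct), but this conjunction works uniformly for all \(I\) and preserves the universal property.
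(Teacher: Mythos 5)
Your proof is correct and matches the paper's own argument almost verbatim: both directions use the empty family to extract a greatest element from all \(\V\)-infima and then invoke \cref{maximal-iff-resize}, and both construct the infimum of a general family as the conjunction of \(Q\) with the universally quantified first components. The subtlety you flag about needing the extra \(Q\) conjunct for the empty family is exactly what the paper's formula \(Q \times \Pi_{i : I}\alpha_i\) encodes.
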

\begin{proof}
  Suppose first that \(\lifting_{\V}(P_{\U})\) has all \(\V\)-infima. Then it
  must have an infimum for the empty family
  \(\Zero_{\V} \to \lifting_{\V}(P_{\U})\). But this infimum must be the greatest
  element of \(\lifting_{\V}(P_{\U})\). So by \cref{maximal-iff-resize} the
  proposition \(P_{\U}\) must be \(\V\)-small.

  Conversely, suppose that \(P_{\U}\) is equivalent to a proposition \(Q : \V\).
  Then the infimum of a family \(\alpha : I \to \lifting_{\V}(P_{\U})\) with
  \(I : \V\) is given by \(\pa*{Q \times \Pi_{i : I} \alpha_i} : \V\).
\end{proof}

In~\cite{deJongEscardo2021a} we used~\cref{maximal-iff-resize} to conclude that
a version of Zorn's Lemma that says that every pointed dcpo has a maximal
element is predicatively unavailable, as \(\lifting_{\V}\pa*{P_{\U}}\) is a
pointed \(\V\)-dcpo, but has a maximal element if and only if \(P_{\U}\) is
\(\V\)-small.
But, as in our above discussion of the duality theorem and Tarski's Theorem, we
must pay attention to the universes here.
Zorn's Lemma restricted to \(\V\)-small \(\V\)-sup-lattices is, assuming
excluded middle~\cite{Bell1997}, equivalent to the axiom of choice, as usual.%
\index{Zorn's Lemma}
Disregarding its constructive status for a moment, the predicative issue is
that there are no nontrivial \(\V\)-small \(\V\)-sup-lattices
(\cref{nontrivial-impredicativity}).
But the generalisation of Zorn's Lemma to \emph{locally} small
\(\V\)-sup-lattices is false (even if we assume the axiom of choice and hence,
excluded middle), because the \(\V\)-sup-lattice of ordinals in \(\V\), having
no maximal element, is a counterexample.

\subsection{Small suprema of small ordinals}
\label{sec:small-suprema-of-ordinals}%
\index{ordinal|(}%
\index{supremum!of ordinals|(}

We now show that the ordinal \(\Ord_{\V}\) of ordinals in a fixed
\emph{univalent} universe \(\V\) has suprema for all families indexed by types
in~\(\V\) and that it has no maximal element.%
\index{univalence}%
\nomenclature[Ord_V]{$\Ord_{\V}$}{ordinal of ordinals in a univalent universe \(\V\)}
The~latter is implied by~\cite[Lemma~10.3.21]{HoTTBook}, but we were not able to
find a proof of the former in the literature:
Theorem~9 of \cite{KrausForsbergXu2021} only proves \(\Ord_{\V}\) to have joins
of increasing sequences, while \cite[Lemma~10.3.22]{HoTTBook} shows that every
family indexed by a type in \(\V\) has some upper bound, but does not prove it
to be the least (although least upper bounds are required for
\cite[Exercise~10.17(ii)]{HoTTBook}).
We present two proofs: one based on~\cite[Lemma~10.3.22]{HoTTBook} using small
set quotients and an alternative one using small images.

Following~\cite[Section~10.3]{HoTTBook}, we define an ordinal to be a type
equipped with a proposition-valued, transitive, extensional and (inductive)
well-founded relation.%
\index{well-founded!relation}
In \cite{HoTTBook} the underlying type of an ordinal is required to be a set,
but this actually follows from the other axioms,
see~\cite[\mkTTurl{Ordinals.Type}]{TypeTopology}.
The type of ordinals, denoted by \(\Ord_{\V}\), in a given \emph{univalent}
universe \(\V\) can itself be equipped with such a
relation~\cite[Theorem~10.3.20]{HoTTBook} and thus is an ordinal again. However,
it is not an ordinal in \(\V\), but rather in the next universe \(\V^+\), and
this is necessary, because it is contradictory for \(\Ord_{\V}\) to be
isomorphic to an ordinal in \(\V\), see~\cite{TypeTopologyBuraliForti}. Before
we can prove that \(\Ord_{\V}\) has \(\V\)-suprema, we need to recall a few
facts.
The well-order on \(\Ord_{\V}\) is given by: \(\alpha \prec \beta\) if and only
if there exists a (necessarily) unique \(y : \beta\) such that \(\alpha\) and
\(\beta \initseg y\) are isomorphic ordinals.
\nomenclature[prec]{$\alpha \prec \beta$}{well-order of an ordinal}
Here \(\beta \initseg y\) denotes
the ordinal of elements \(b : \beta\) satisfying \(b \prec y\).
\nomenclature[dsetzzz]{$\alpha \initseg x$}{initial segment of an ordinal}

\begin{lemma}[{\cite[\mkTTurl{Ordinals.OrdinalOfOrdinals}]{TypeTopology}}]%
  \label{initseg-order}%
  \index{simulation}%
  For every two points \(x\) and \(y\) of an ordinal \(\alpha\), we have
  \(x \prec y\) in \(\alpha\) if and only if
  \(\alpha \initseg x \prec \alpha \initseg y\) as ordinals.
\end{lemma}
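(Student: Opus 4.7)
My plan is to prove both directions directly from the definition of the well-order on $\Ord_{\V}$, namely that $\gamma \prec \delta$ iff there is a (necessarily unique) element $z : \delta$ with $\gamma$ isomorphic to $\delta \initseg z$. The key auxiliary fact will be that for $z \prec y$ in $\alpha$ the iterated initial segment $(\alpha \initseg y) \initseg z$ coincides with $\alpha \initseg z$, together with the injectivity of the assignment $x \mapsto \alpha \initseg x : \alpha \to \Ord_{\V}$ (this is a standard fact about the canonical simulation; it follows either from it being a simulation, or, assuming univalence, from extensionality of $\alpha$).

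For the forward direction, suppose $x \prec y$ in $\alpha$. Then $x$ gives rise to an element of $\alpha \initseg y$, so it suffices to construct an isomorphism $\alpha \initseg x \cong (\alpha \initseg y) \initseg x$. Transitivity of $\prec$ ensures that an element $a : \alpha$ satisfies $a \prec x$ iff it satisfies both $a \prec y$ and $a \prec x$, so the two initial segments have definitionally the same carrier (a $\Sigma$-type that reshuffles via $\Sigma$-associativity) and inherit their order from $\alpha$. Packaging this as an isomorphism of ordinals then yields $\alpha \initseg x \prec \alpha \initseg y$ with witness $x$.

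For the backward direction, suppose $\alpha \initseg x \prec \alpha \initseg y$. Unfolding, there exists $z : \alpha \initseg y$ (that is, some $z : \alpha$ with $z \prec y$) such that $\alpha \initseg x \cong (\alpha \initseg y) \initseg z$. By the auxiliary fact, $(\alpha \initseg y) \initseg z \cong \alpha \initseg z$, so we obtain an isomorphism $\alpha \initseg x \cong \alpha \initseg z$. Invoking injectivity of $w \mapsto \alpha \initseg w$ (via univalence this becomes an equality of ordinals and then an equality $x = z$ in $\alpha$) we conclude $x = z$, and hence $x \prec y$ as desired.

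The main obstacle is keeping the bookkeeping clean: strictly speaking $(\alpha \initseg y) \initseg z$ and $\alpha \initseg z$ are built from nested $\Sigma$-types and are only equal up to the reshuffling isomorphism, so one must either work consistently up to isomorphism of ordinals (which is sound because, by univalence applied to the univalent type $\Ord_{\V}$, isomorphic ordinals are equal) or verify that the obvious map on underlying elements is order-preserving and a bijection. Once that auxiliary identification is in hand, the two directions of the equivalence are essentially symmetric and both are subsumed by unfolding $\prec$ on $\Ord_{\V}$.
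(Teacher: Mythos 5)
The paper does not give a textual proof of this lemma: it merely records the statement and cites the \TypeTopology\ formalisation, so there is no ``paper's own proof'' to compare against. Your proposal is a reasonable reconstruction and the overall strategy is sound.

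One place that deserves to be flagged, since it carries the real weight of the backward direction, is the ``injectivity of \(w \mapsto \alpha \initseg w\)'' step. Invoking univalence to turn the isomorphism \(\alpha \initseg x \cong \alpha \initseg z\) into an equality of ordinals does not by itself recover \(x = z\): one has to pass through the elements of \(\alpha\). The standard constructive route is to note that (a) the inclusion \(\alpha \initseg w \hookrightarrow \alpha\) is a simulation, (b) isomorphisms of ordinals are simulations and simulations compose, and (c) simulations between ordinals are unique; from the isomorphism one then gets, for every \(a : \alpha\), that \(a \prec x\) iff \(a \prec z\), and extensionality of \(\alpha\) yields \(x = z\). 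Classical shortcuts via trichotomy are not available here. You do gesture at exactly this (``from it being a simulation, or \dots\ from extensionality''), but the two ingredients are not alternatives --- both the simulation argument and extensionality are needed, in sequence. Apart from that slight imprecision, and the already-acknowledged bookkeeping about the reshuffled \(\Sigma\)-types when identifying \((\alpha \initseg y) \initseg z\) with \(\alpha \initseg z\), the argument is correct.
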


\begin{definition}[Simulation; {\cite[Section~10.3]{HoTTBook}}]
  A \emph{simulation} between two ordinals \(\alpha\)~and~\(\beta\) is a map
  \(f : \alpha \to \beta\) satisfying the following conditions:
  \begin{enumerate}[(i)]
  \item for every \(x,y : \alpha\), if \(x \prec y\), then \(f(x) \prec f(y)\);
  \item for every \(x : \alpha\) and \(y : \beta\), if \(y \prec f(x)\), then there
    exists a (necessarily unique) \(x' : \alpha\) such that \(x' \prec x\) and
    \(f(x') = y\). %
    \qedhere
  \end{enumerate}
\end{definition}

\begin{lemma}[{\cite[\mkTTurl{Ordinals.OrdinalOfOrdinals}]{TypeTopology}}]\label{Ord-order}
  For ordinals \(\alpha\) and \(\beta\), the following are equivalent:
  \begin{enumerate}[(i)]
  \item there exists a (necessarily unique) simulation from \(\alpha\) to
    \(\beta\);
  \item for every ordinal \(\gamma\), if \(\gamma \prec \alpha\), then
    \(\gamma \prec \beta\).
  \end{enumerate}
  We write \(\alpha \preceq \beta\) if the equivalent conditions above hold.
\end{lemma}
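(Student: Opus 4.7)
The plan is to first establish uniqueness of simulations, and then prove the two implications using the characterisation of \(\prec\) on \(\Ord_{\V}\) via initial segments together with Lemma \ref{initseg-order}. For uniqueness, given two simulations \(f,g : \alpha \to \beta\), I would show \(f(x) = g(x)\) for all \(x : \alpha\) by (transfinite) well-founded induction on \(x\): extensionality of \(\beta\) reduces this to verifying \(y \prec f(x) \iff y \prec g(x)\) for every \(y : \beta\), which follows from the two simulation clauses applied to \(f\) and \(g\) and the induction hypothesis.

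For the implication (i) \(\Rightarrow\) (ii), suppose \(f : \alpha \to \beta\) is a simulation and \(\gamma \prec \alpha\), so we have \(x : \alpha\) with \(\gamma = \alpha \initseg x\). I would verify that the restriction \(\bar f : \alpha \initseg x \to \beta \initseg f(x)\) is well-defined (the first clause for \(f\) ensures it lands in \(\beta \initseg f(x)\)), is a simulation (both clauses transfer directly from \(f\)), and is bijective (surjectivity uses the second clause for \(f\), while injectivity follows from order-preservation plus trichotomy, or equivalently from the uniqueness established above applied to \(\bar f\) and any hypothetical alternative). Hence \(\gamma = \alpha \initseg x \cong \beta \initseg f(x)\), and by univalence this identifies the two ordinals, giving \(\gamma \prec \beta\).

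For (ii) \(\Rightarrow\) (i), I would define \(f : \alpha \to \beta\) pointwise: for each \(x : \alpha\) the initial segment \(\alpha \initseg x\) satisfies \(\alpha \initseg x \prec \alpha\), so by hypothesis \(\alpha \initseg x \prec \beta\), yielding a unique \(y_x : \beta\) with \(\alpha \initseg x = \beta \initseg y_x\); set \(f(x) \colonequiv y_x\). The first simulation clause follows from Lemma \ref{initseg-order}: if \(x' \prec x\), then \(\alpha \initseg x' \prec \alpha \initseg x\), hence \(\beta \initseg f(x') \prec \beta \initseg f(x)\), hence \(f(x') \prec f(x)\). For the second clause, given \(y \prec f(x)\), we have \(\beta \initseg y \prec \beta \initseg f(x) = \alpha \initseg x\), so there exists \(x' \prec x\) in \(\alpha\) with \(\beta \initseg y = (\alpha \initseg x) \initseg x' = \alpha \initseg x' = \beta \initseg f(x')\), and extensionality of \(\beta\) (applied to \(\prec_{\beta}\)) then gives \(y = f(x')\).

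The main obstacle I anticipate is the careful bookkeeping of \emph{identity} (rather than merely isomorphism) of ordinals, which is what makes the uniqueness of \(y_x\) genuine and hence \(f\) a bona fide function rather than a relation; this is precisely where the assumed univalence of \(\V\) is essential, since without it the appropriate \(y_x\) would only be determined up to isomorphism of ordinals, and the definition of \(f\) would require an additional argument (or an appeal to \cref{constant-map-to-set-factors-through-truncation} applied to the fact that the type of such \(y_x\) is a set-valued proposition once we quotient by iso). A secondary, more minor, technicality is that the restriction \(\bar f\) in the forward direction must be verified to preserve and reflect the full structure of the initial segments as ordinals in their own right, but this is routine once one checks that initial segments inherit well-foundedness and extensionality.
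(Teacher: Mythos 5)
The paper itself does not prove this lemma; it is cited directly from the \TypeTopology\ formalisation, so I am assessing your argument on its own merits within the paper's constructive framework.

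Your overall decomposition is the standard one and is correct in structure: uniqueness of simulations by well-founded induction plus extensionality; (i) \(\Rightarrow\) (ii) by restricting the simulation to an initial segment and using univalence to upgrade the resulting order-isomorphism to an identity; (ii) \(\Rightarrow\) (i) by defining \(f(x)\) to be the unique \(y_x\) with \(\alpha \initseg x = \beta \initseg y_x\) and verifying both simulation clauses via \cref{initseg-order} and the identity \((\alpha \initseg x) \initseg x' = \alpha \initseg x'\).

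The one genuine flaw is the injectivity argument for the restricted map \(\bar f\). You appeal to ``order-preservation plus trichotomy,'' but trichotomy of the order on an arbitrary ordinal is a classical principle and is not available in the paper's constructive setting. Your fallback phrasing (``uniqueness applied to \(\bar f\) and any hypothetical alternative'') is too vague to check, and uniqueness of simulations between a \emph{fixed} pair of ordinals does not obviously yield injectivity of a single simulation. The correct constructive argument that simulations are left-cancellable is another well-founded induction in exactly the style of your uniqueness proof: assume \(\bar f(x) = \bar f(x')\), and by extensionality it suffices to show \(x\) and \(x'\) have the same predecessors, which follows from the two simulation clauses together with the induction hypothesis. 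Since you already have this pattern in hand, the repair is small, but as written the step is an error, not a stylistic shortcut.

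A minor further observation on your closing paragraph: the appeal to \cref{constant-map-to-set-factors-through-truncation} is not needed even as a fallback. The type \(\Sigma_{y : \beta}\,(\alpha \initseg x = \beta \initseg y)\) is a subsingleton, by extensionality of \(\beta\) and the fact that identity types of ordinals are sets, so one eliminates the truncated existential in the definition of \(\prec\) directly into it, with no detour through constancy. You are right, though, that univalence is where the genuine content lies, since it is what converts an isomorphism of ordinals into an identity.
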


Recall from~\cref{existence-of-small-set-quotients} what it means to have small
set quotients. If these are available, then the type of ordinals has all small
suprema.

\begin{theorem}[Extending~{\cite[Lemma~10.3.22]{HoTTBook}}]%
  \index{set quotient}%
  Assuming small set quotients, the large ordinal \(\Ord_{\V}\) has suprema of
  families indexed by types in \(\V\).
\end{theorem}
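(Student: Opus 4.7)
The plan is to construct, for a family $\alpha : I \to \Ord_{\V}$ with $I : \V$, its supremum as a small set quotient of $\Sigma_I \colonequiv \Sigma_{i : I} \alpha_i$, adapting the construction of~\cite[Lemma~10.3.22]{HoTTBook} to produce a \emph{small} upper bound rather than a merely large one. Note first that $\Sigma_I$ lives in $\V$, since $I$ and each underlying type $\alpha_i$ do. On it, I would define the relation
\[
(i, x) \approx (j, y) \colonequiv \pa*{\alpha_i \initseg x} \cong \pa*{\alpha_j \initseg y},
\]
where $\cong$ denotes isomorphism of ordinals. Since the underlying types of ordinals in $\V$ are sets in $\V$, this relation is $\V$-valued and straightforwardly an equivalence relation. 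Invoking the hypothesis of small set quotients, I obtain $S \colonequiv \Sigma_I/{\approx}$ as a set in $\V$.

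Next I would equip $S$ with an order and verify that it forms an ordinal in $\V$. Define
\[
[(i, x)] \prec_S [(j, y)] \colonequiv \exists_{y' : \alpha_j}\pa*{\pa*{y' \prec y}\times\pa*{(i, x) \approx (j, y')}},
\]
which is $\V$-valued and, by \cref{initseg-order}, well-defined on equivalence classes; equivalently, it reads $(\alpha_i \initseg x) \prec (\alpha_j \initseg y)$ in $\Ord_{\V^+}$. The axioms of an ordinal --- proposition-valuedness, transitivity, extensionality, and well-foundedness --- then transfer from $\Ord_{\V^+}$. For well-foundedness, I would use surjectivity of the quotient map to reduce to showing that $[(i, x)]$ is accessible for every $(i, x) : \Sigma_I$, and then induct on $\alpha_i$: if $[(j, y)] \prec_S [(i, x)]$, then by definition $(j, y) \approx (i, x')$ for some $x' \prec_{\alpha_i} x$, so $[(j, y)] = [(i, x')]$ is accessible by the induction hypothesis. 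Extensionality likewise follows from \cref{initseg-order} applied at $\Ord_{\V^+}$.

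Finally, I would verify the universal property via \cref{Ord-order}. For each $i : I$, the assignment $x \mapsto [(i, x)]$ is, by construction and \cref{initseg-order}, a simulation $\alpha_i \to S$, giving $\alpha_i \preceq S$. Conversely, given any upper bound $\beta : \Ord_{\V}$ with (necessarily unique) simulations $f_i : \alpha_i \to \beta$, define $g : S \to \beta$ on representatives by $[(i, x)] \mapsto f_i(x)$. Well-definedness uses the fundamental fact that a simulation $f : \alpha \to \beta$ induces $\alpha \initseg x \cong \beta \initseg f(x)$: if $(i, x) \approx (j, y)$, then $\beta \initseg f_i(x) \cong \beta \initseg f_j(y)$, and extensionality of $\beta$ together with \cref{initseg-order} forces $f_i(x) = f_j(y)$. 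The map $g$ is then readily checked to be a simulation, so $S \preceq \beta$.

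The principal obstacle is the careful tracking of universe levels: it is precisely the small set quotient hypothesis that pushes $S$ from $\V^+$ down into $\V$, so that $S$ genuinely inhabits $\Ord_{\V}$ rather than merely $\Ord_{\V^+}$. The most technically delicate step is the verification of well-foundedness, which requires chasing representatives through the quotient and invoking \cref{initseg-order} to convert the external order $\prec$ on ordinals back into the internal order of each $\alpha_i$; the remaining ordinal axioms and the universal property then reduce fairly directly to the ambient structure of $\Ord_{\V^+}$.
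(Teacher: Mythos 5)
Your proposal is correct and follows essentially the same path as the paper: the paper forms the same quotient of $\Sigma_{i:I}\,\alpha_i$ by the same equivalence relation, equips it with the same order, and verifies the universal property by constructing the simulation $[(i,x)] \mapsto b_i^x$ to an arbitrary upper bound~$\beta$ (where $b_i^x$ is precisely your $f_i(x)$). The one presentational difference is that the paper cites~\cite[Lemma~10.3.22]{HoTTBook} for the fact that the quotient is an ordinal and an upper bound of $\alpha$, whereas you re-verify the ordinal axioms (well-foundedness, extensionality, etc.) and the upper-bound property directly; this is more self-contained but proves nothing the citation does not already supply.
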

\begin{proof}
  Given \(\alpha : I \to \Ord_{\V}\), define \(\hat{\alpha}\) as the quotient of
  \(\Sigma_{i : I}\,\alpha_i\) by the \(\V\)-valued equivalence relation
  \({\approx}\) where \((i,x) \approx (j,y)\) if and only if \(\alpha_i \initseg x\)
  and \(\alpha_j \initseg y\) are isomorphic as ordinals.
  By our assumption, the quotient \(\hat\alpha\) lives in~\(\V\).  Next, we
  use~\cite[Lemma~10.3.22]{HoTTBook} which tells us that
  \(\pa*{\hat{\alpha},\prec}\) with
  \[
    [(i,x)] \prec [(j,y)] \colonequiv (\alpha_i \initseg x) \prec (\alpha_j \initseg y).
  \]
  is an ordinal that is an upper bound of \(\alpha\).
  So we show that \(\hat{\alpha}\) is a lower bound of upper bounds of
  \(\alpha\). To this end, suppose that \(\beta : \Ord_\V\) is such that
  \(\alpha_i \preceq \beta\) for every \(i : I\). In light of~\cref{Ord-order},
  this assumption yields two things:
  \begin{enumerate}[(1)]
  \item\label{assum1} for every \(i : I\) and \(x : \alpha_i\) there exists a
    unique \(b_i^x : \beta\) such that
    \(\alpha_i \initseg x = \beta \initseg b_i^x\);
  \item\label{assum2} for every \(i : I\), a simulation
    \(f_i : \alpha_i \to \beta\) such that for every \(x : \alpha_i\), we have
    \(f_i(x) = b_i^x\).
  \end{enumerate}
  We are to prove that \(\hat{\alpha} \preceq \beta\). We start by defining
  \begin{align*}
    f : \pa*{\Sigma_{i : I}\,\alpha_i} &\to \beta \\
    (i,x) &\mapsto b_i^x.
  \end{align*}
  Observe that \(f\) respects \(\approx\), for if \((i,x) \approx (j,y)\), then by
  univalence,
  \[(\beta \initseg b_i^x) = (\alpha_i \initseg x) = (\alpha_j \initseg y)
    = (\beta \initseg b_j^y),\]
  so \(b_i^x = b_j^y\) by uniqueness of \(b_i^x\).
  Thus, \(f\) induces a map \(\hat{f} : \hat{\alpha} \to \beta\) satisfying the equality
  \(\hat{f}([(i,x)]) = f(i,x)\) for every \((i,x) : \Sigma_{j : J}\,\alpha_j\).

  It remains to prove that \(\hat{f}\) is a simulation. Because the defining
  properties of a simulation are propositions, we can use set quotient induction
  and it suffices to prove the following two things:

  \begin{enumerate}[(I)]
  \item\label{cond-1} if \(\alpha_i \initseg x \prec \alpha_j \initseg y\), then
    \(b_i^x \prec b_j^y\);
  \item\label{cond-2} if \(b \prec b_i^x\), then there exists \(j : I\) and
    \(y : \alpha_j\) such that \(\alpha_i \initseg y \prec \alpha_j \initseg x\)
    and \(b_j^y = b\).
  \end{enumerate}
  For~\ref{cond-1}, observe that if \(\alpha_i \initseg x \prec \alpha_j \initseg y\),
  then \(\beta \initseg b_i^x \prec \beta \initseg b_j^y\), from which
  \(b_i^x \prec b_j^y\) follows using~\cref{initseg-order}.
  For~\ref{cond-2} suppose that \(b \prec b_i^x\). Because \(f_i\) (see
  item~\ref{assum2} above) is a simulation, there exists \(y : \alpha_i\) with
  \(y \prec x\) and \(f_i(y) = b\). By~\cref{initseg-order}, we get
  \(\alpha_i \initseg y \prec \alpha_i \initseg x\). Moreover,
  \(b_i^y = f_i(y) = b\), finishing the proof of~\ref{cond-2}.
\end{proof}

In~\cref{sec:set-replacement} we saw that set replacement is equivalent to the
existence of small set quotients, so the following result immediately follows
from the theorem above. But the point is that an alternative construction
without set quotients is available, if set replacement is assumed.

\begin{theorem}\index{set replacement}%
  Assuming set replacement, the large ordinal \(\Ord_{\V}\) has suprema of
  families indexed by types in \(\V\).
\end{theorem}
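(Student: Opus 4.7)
The plan is to avoid the use of set quotients by constructing the supremum directly as the \emph{image} of a suitable map into $\Ord_\V$, then invoking set replacement to ensure that this image is $\V$-small. Given a family $\alpha : I \to \Ord_\V$ with $I : \V$, I would consider the map
\[
  g : \pa*{\Sigma_{i : I}\,\alpha_i} \to \Ord_\V, \qquad (i,x) \mapsto \alpha_i \initseg x.
\]
To apply set replacement we need the domain to be $\V$-small, which it is because $I : \V$ and the underlying type of each $\alpha_i$ is in $\V$; and we need the codomain to be locally small, which holds by univalence, since for $\beta,\gamma : \Ord_\V$ the type $\beta = \gamma$ is equivalent to the type of order-isomorphisms between them, and this type is $\V$-small because the carriers of $\beta$ and $\gamma$ are in $\V$. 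Hence set replacement yields a $\V$-small set $S \colonequiv \image(g)$ equipped with an embedding $S \hookrightarrow \Ord_\V$.

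Next I would equip $S$ with the order relation inherited from $\Ord_\V$ and show this makes it an ordinal. Proposition-valuedness, transitivity and well-foundedness transfer to any subtype. The key observation for extensionality is that $S$ is downward-closed in $\Ord_\V$: if $s = \alpha_i \initseg x$ is in $S$ and $\gamma \prec s$ in $\Ord_\V$, then $\gamma = \pa*{\alpha_i \initseg x} \initseg y = \alpha_i \initseg y$ for some $y \prec x$ in $\alpha_i$, so $\gamma = g(i,y) \in S$. Thus the $S$-predecessors of an element of $S$ coincide with its $\Ord_\V$-predecessors, and extensionality of $\Ord_\V$ passes to~$S$.

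Finally I would verify that $S$ is the supremum of $\alpha$ in $\Ord_\V$. For each $i : I$ the map $\alpha_i \to S$ given by $x \mapsto \alpha_i \initseg x$ is a simulation (by~\cref{initseg-order} it is order-preserving, and by the downward-closedness observation above its image is a lower set), so $\alpha_i \preceq S$. For the least upper bound property, given an upper bound $\beta$ of $\alpha$, I would use~\cref{Ord-order} to extract, for each $i : I$ and $x : \alpha_i$, the unique $b_i^x : \beta$ with $\alpha_i \initseg x = \beta \initseg b_i^x$, and then define $\hat{f} : S \to \beta$ on elements of the form $\alpha_i \initseg x$ by $\hat{f}(\alpha_i \initseg x) \colonequiv b_i^x$. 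Well-definedness on the image follows from uniqueness of $b_i^x$, and $\hat{f}$ is a simulation by essentially the same calculation as in the proof using quotients.

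The main obstacle I anticipate is the verification that the inherited structure on $S$ is genuinely an ordinal; but as indicated, everything reduces to the downward-closedness of $S$ inside $\Ord_\V$, which is a direct consequence of the fact that initial segments of initial segments are themselves initial segments.
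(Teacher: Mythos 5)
Your proposal is correct and follows essentially the same route as the paper's proof: take the image of $(i,x) \mapsto \alpha_i \initseg x$, use univalence for local $\V$-smallness of $\Ord_\V$ and set replacement for the $\V$-smallness of the image, and use the unique $b_i^x$ to build the mediating simulation. The one step to be careful with is the definition of $\hat{f} : S \to \beta$: since membership in the image is a propositional truncation and $\beta$ is a set (not a proposition), one cannot simply define $\hat{f}$ ``on elements of the form $\alpha_i \initseg x$'' but must package your well-definedness observation via the constant-map-to-set factorisation (\cref{constant-map-to-set-factors-through-truncation}), which the paper invokes explicitly.
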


\begin{proof}
  Given \(\alpha : I \to \Ord_{\V}\), consider the image of the map
  \(e : \Sigma_{i : I}\,\alpha_i \to \Ord_{\V}\) given by
  \(e(i,x) \colonequiv \alpha_i \initseg x\).
  The image of \(e\) is conveniently equivalent to the type
  \(\Sigma_{\gamma : \Ord_{\V}}\exists_{i : I}\, \gamma \prec \alpha_i\),
  i.e.\ the type of ordinals that are initial segments of some \(\alpha_i\).
  One can prove that \(\image(e)\) with the induced order from \(\Ord_{\V}\) is
  again a well-order and that for every \(i : I\), the canonical map
  \(\alpha_i \to \image(e)\) is a simulation.
  Moreover, if \(\beta\) is an ordinal such that \(\alpha_i \preceq \beta\) for
  every \(i : I\), then for every \(i : I\) and every \(x : \alpha_i\) there
  exists a unique \(b_i^x : \beta\) such that
  \(\alpha_i \initseg x = \beta \initseg b_i^x\).
  Now observe that for every \(\gamma : \Ord_{\V}\), the map
  \(\pa*{\Sigma_{i : I}\Sigma_{x : \alpha_i}\,\pa*{\alpha_i \initseg x =
      \gamma}} \to \beta\) defined by the assignment
  \((i , x , p) \mapsto b_i^x\) is a constant function to a set.
  Hence, by~\cref{constant-map-to-set-factors-through-truncation}, this map
  factors through the propositional truncation
  \(\exists_{i : I}\Sigma_{x : \alpha_i}\,\pa*{{\alpha \initseg x} = \gamma}\).
  This yields a map \(\image(e) \to \beta\) which can be proved to be a
  simulation, as desired.
  Finally, we use set replacement and the fact that \(\Ord_{\V}\) is locally
  \(\V\)-small (by univalence) to get an ordinal in \(\V\) equivalent to
  \(\image(e)\), finishing the proof.
\end{proof}
\index{ordinal|)}%
\index{supremum!of ordinals|)}

\section{Families and subsets}
\label{sec:families-and-subsets}%
\index{supremum!of subsets|(}%
\index{subset|(}%

In traditional impredicative foundations, completeness of posets is usually
formulated using subsets. For instance, dcpos are defined as posets \(D\) such
that every directed subset of \(D\) has a supremum in
\(D\). \cref{examples-of-delta-complete-posets} are all formulated using small
families instead of subsets. While subsets are primitive in set theory, families
are primitive in type theory, so this could be an argument for using families
above. However, that still leaves the natural question of how the family-based
definitions compare to the usual subset-based definitions, especially in our
predicative setting, unanswered. This section addresses this question. We first
study the relation between subsets and families predicatively and then clarify
our definitions in the presence of impredicativity.
In our answers we will consider sup-lattices, but similar arguments could be
made for posets with other sorts of completeness, such as dcpos.

We first show that simply asking for completeness with respect to all subsets is
not satisfactory from a predicative viewpoint. In fact, we will now see that
even asking for completeness with respect to all elements of
\(\powerset_{\T}(X)\) for some fixed universe \(\T\) is problematic from a
predicative standpoint, where we recall from~\cref{def:powerset} that we refer
to the elements of \(\powerset_{\T}(X) \equiv (X \to \Omega_{\T})\) as
\(\T\)-valued subsets of \(X\).

\begin{theorem}
  \label{all-T-subsets-resizing}
  Let \(\U\) and \(\V\) be universes, fix a proposition \(P_{\U} : \U\) and
  recall \(\lifting_{\V}(P_{\U})\) defined in~\cref{sec:predicative-counterexample},
  which has \(\V\)-suprema.
  If \(\lifting_{\V}(P_{\U})\) has suprema for all \(\T\)-valued subsets, then
  \(P_{\U}\) is \(\V\)-small independently of the choice of the type universe
  \(\T\).
\end{theorem}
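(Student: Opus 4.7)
The plan is to reduce the theorem to Lemma \ref{maximal-iff-resize}, which already tells us that $\lifting_{\V}(P_{\U})$ has a greatest (equivalently, a maximal) element precisely when $P_{\U}$ is $\V$-small. So it suffices to exhibit a greatest element of $\lifting_{\V}(P_{\U})$ under the assumption that it has suprema for all $\T$-valued subsets.

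The key observation is that, whatever the universe $\T$ is, there is always a ``top'' $\T$-valued subset of $\lifting_{\V}(P_{\U})$, namely the constantly-$\One_{\T}$ predicate
\[
  A_\top \colonequiv \lambdadot{l}\One_{\T} : \lifting_{\V}(P_{\U}) \to \Omega_{\T}.
\]
By hypothesis, $A_\top$ has a supremum $S$ in $\lifting_{\V}(P_{\U})$. I would then check that $S$ is a greatest element: for every $l : \lifting_{\V}(P_{\U})$ we trivially have $l \in A_\top$, so $l \below S$ by the definition of a supremum. An application of Lemma \ref{maximal-iff-resize} then yields that $S$, as an element of $\Omega_{\V}$, is equivalent to $P_{\U}$, so $P_{\U}$ is $\V$-small.

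There is essentially no obstacle here: the construction of the top subset does not depend on any feature of $\T$ beyond the existence of $\One_{\T}$, which is why the conclusion is independent of $\T$. The only point worth being careful about is that the supremum is required to live in $\lifting_{\V}(P_{\U})$ itself (as opposed to some ambient poset), so that Lemma \ref{maximal-iff-resize} applies directly; but this is built into the hypothesis that the poset $\lifting_{\V}(P_{\U})$ has suprema of $\T$-valued subsets. Thus the substantive mathematical content was already established in Lemma \ref{maximal-iff-resize}, and the present theorem is a short corollary highlighting that completeness with respect to \emph{all} $\T$-valued subsets, for any fixed $\T$, is already strong enough to force the resizing of $P_{\U}$ into $\V$.
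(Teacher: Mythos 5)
Your proof is correct and takes essentially the same approach as the paper: both arguments take the supremum of the constantly-$\One_{\T}$ subset to obtain a greatest (hence maximal) element, and then invoke \cref{maximal-iff-resize} to conclude that $P_{\U}$ is $\V$-small.
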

\begin{proof}
  Let \(\T\) be a type universe and consider the subset \(S\) of
  \(\lifting_{\V}(P_\U)\) given by \(Q \mapsto \One_\T\).  Note that \(S\) has a
  supremum in \(\lifting_{\V}(P_\U)\) if and only if \(\lifting_{\V}(P_\U)\) has
  a greatest element, but by~\cref{maximal-iff-resize}, the latter is equivalent
  to \(P_\U\) being \(\V\)-small.
\end{proof}

The proof above illustrates that if we have a subset \(S : \powerset_{\T}(X)\),
then there is no reason why the total space \({\Sigma_{x : X} (x \in S)}\)
(recall~\cref{def:total-space}) should be \(\T\)-small. In fact, for
\(S(x) \colonequiv \One_{\T}\) as above, the latter is equivalent to asking that
\(X\) is \(\T\)-small.

In an attempt to solve the problem described in \cref{all-T-subsets-resizing},
we look to impose size restrictions on the total space of a subset. There are
two natural such restrictions and they are reminiscent of Bishop and Kuratowski
finite subsets.%
\index{finiteness!Bishop}

\begin{definition}[\(\V\)-small and \(\V\)-covered subsets]%
  \index{subset!small}\index{subset!covered}%
  An element \(S : \powerset_{\T}(X)\) is
  \begin{enumerate}[(i)]
  \item \emph{\(\V\)-small} if its total space is \(\V\)-small, and
  \item \emph{\(\V\)-covered} if we have \(I : \V\) with a surjection
    \(e : I \surj \totalspace(S)\). %
    \qedhere
  \end{enumerate}
\end{definition}

Observe that every \(\V\)-small subset is \(\V\)-covered, because every
equivalence is a surjection.
But the converse does not hold: We can emulate the well-known argument used to
show that, constructively, Kuratowski finiteness does not necessarily imply
Bishop finiteness to show that, predicatively, being \(\V\)-covered does not
necessarily imply being \(\V\)-small.

\begin{proposition}\label{covered-small-resizing}%
  \index{resizing!propositional}%
  For every two universes \(\U\) and \(\V\), if every \(\V\)-covered element of
  \(\powerset_{\U}\pa*{\Omega_{\U}}\) is \(\V\)-small, then
  \(\Propresizing{\U}{\V}\) holds.
\end{proposition}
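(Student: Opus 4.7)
The plan is to imitate the familiar Kuratowski--Bishop argument, where the failure of a finiteness-resizing principle is used to derive excluded middle, adapted here to the present size-sensitive setting. Given a proposition \(P : \U\), I construct a \(\V\)-covered subset of \(\Omega_\U\) whose \(\V\)-smallness is sufficient to conclude that \(P\) itself is \(\V\)-small.

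First, take \(g : \Two_\U \to \Omega_\U\) defined by \(g(0) \colonequiv \One_\U\) and \(g(1) \colonequiv P\), and let \(S \in \powerset_\U(\Omega_\U)\) be the subset given by \(S(Q) \colonequiv \exists_{b : \Two_\U}\pa*{g(b) \leftrightarrow Q}\). This genuinely lands in \(\Omega_\U\) because everything is indexed by \(\U\)-types, and by propositional extensionality it is equivalent to the more natural \(\exists_{b : \Two_\U}\pa*{g(b) = Q}\). The total space of \(S\) is then equivalent to the image of \(g\), namely the subset of \(\Omega_\U\) consisting of those propositions equal to \(\One_\U\) or to \(P\).

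Next, I check that \(S\) is \(\V\)-covered: using \(\Two_\V : \V\) (which is equivalent to \(\Two_\U\)), the map \(\Two_\V \to \totalspace(S)\) sending \(0\) to the pair with first component \(\One_\U\) and \(1\) to the pair with first component \(P\) is a surjection, by construction of the image. The hypothesis then provides an equivalence \(e : \totalspace(S) \simeq T\) with \(T : \V\); let \(\tilde g : \Two_\V \to T\) be the composite of the covering map with \(e\).

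Finally, I claim \(P\) is equivalent to the proposition \(\tilde g(0) = \tilde g(1)\), which lives in \(\V\) and hence furnishes the required resizing. Indeed, \(\tilde g(0) = \tilde g(1)\) is equivalent to equality of the two corresponding points in \(\totalspace(S)\) because \(e\) is an equivalence and therefore reflects equality, which in turn is equivalent to \(g(0) = g(1)\) in \(\Omega_\U\) because the subtype inclusion \(\totalspace(S) \hookrightarrow \Omega_\U\) is an embedding by \cref{prop-subtype}; and \(\One_\U = P\) in \(\Omega_\U\) is equivalent to \(P\) by propositional extensionality. The only delicate point is universe bookkeeping — \(S\) must lie in \(\powerset_\U(\Omega_\U)\), forcing a \(\U\)-indexed existential in its definition, while the covering must be indexed by a type in \(\V\) — but the equivalence \(\Two_\V \simeq \Two_\U\) together with the local \(\U\)-smallness of \(\Omega_\U\) coming from propositional extensionality make this essentially a routine transport between equivalent copies.
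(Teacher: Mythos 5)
Your proposal is correct and takes essentially the same approach as the paper: same two-element subset of \(\Omega_{\U}\), same covering by a two-element type in \(\V\), with the only cosmetic difference that you extract \(P\) from the equality \(\tilde g(0) = \tilde g(1)\) in the small copy \(T\), whereas the paper extracts it from \(\issubsingleton\) of the total space (both of which detect precisely whether the two designated points of the total space coincide). If anything, you are slightly more explicit on universe bookkeeping: defining the subset via \(\exists_{b : \Two_{\U}}\pa*{g(b) \leftrightarrow Q}\) keeps it literally in \(\powerset_{\U}\pa*{\Omega_{\U}}\), whereas the paper's \(\pa*{Q = P} \vee \pa*{Q = \One_{\U}}\) a priori lands in \(\powerset_{\U^{+}}\pa*{\Omega_{\U}}\) and relies on an implicit propositional-extensionality identification to come back down.
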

\begin{proof}
  Suppose that every \(\V\)-covered \(\U\)-valued subset of \(\Omega_{\U}\) is
  \(\V\)-small and let \(P : \U\) be an arbitrary proposition. Consider the
  subset \(S_P : \Omega_{\U} \to \Omega_{\U}\) given by
  \(S_P(Q) \colonequiv \pa*{Q = P} \vee \pa*{Q = \One_{\U}}\). Notice that this
  is \(\V\)-covered as witnessed by
  \begin{align*}
    \pa*{\One_{\V} + \One_{\V}} &\surj \totalspace(S_P) \\
    \inl(\star) &\mapsto \pa*{P\hspace{1.25mm},\tosquash*{\inl(\refl)}} \\
    \inr(\star) &\mapsto \pa*{\One_{\U},\tosquash*{\inr(\refl)}},
  \end{align*}
  so by assumption \(\totalspace(S_P)\) is \(\V\)-small. But observe that \(P\)
  holds if and only if \(\totalspace(S_P)\) is a subsingleton, but the latter
  type is \(\V\)-small by assumption, hence so is \(P\).
\end{proof}

In the case where we restrict our attention to a single universe \(\V\) and a
locally \(\V\)-small set \(X\), the two notions coincide if and only if we have
set replacement for maps into \(X\) with \(\V\)-small domain.%
\index{set replacement}
\begin{proposition}\label{covered-small-set-replacement}
  If \(X\) is a locally \(\V\)-small set, then every \(\V\)-covered element of
  \(\powerset_{\V}(X)\) is \(\V\)-small if and only if the image of any map into
  \(X\) with a \(\V\)-small domain is \(\V\)-small.
\end{proposition}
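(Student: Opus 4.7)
The plan is to establish both directions by exhibiting a direct correspondence between $\V$-covered subsets of $X$ and images of maps $I \to X$ with $I : \V$, using the surjection–embedding factorisation from~\cref{factor-map-through-corestriction}.

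For the forward direction, suppose every $\V$-covered element of $\powerset_{\V}(X)$ is $\V$-small, and let $f : A \to X$ be a map with $A$ a $\V$-small type. Choose an equivalence $A \simeq I$ with $I : \V$ and let $f' : I \to X$ be the transported map; clearly $\image(f) \simeq \image(f')$. Define the subset $S_f : X \to \Omega_{\V}$ by $S_f(x) \colonequiv \exists_{i : I}\, f'(i) = x$, which is $\V$-valued because $I : \V$ and $X$ is locally $\V$-small (so that $f'(i) = x$ has a $\V$-small equivalent copy). By construction $\totalspace(S_f)$ is definitionally $\image(f')$, and the corestriction $I \surj \image(f')$ from~\cref{corestrictions-are-surjections} witnesses that $S_f$ is $\V$-covered. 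Applying the hypothesis yields the $\V$-smallness of $\totalspace(S_f) \simeq \image(f)$.

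For the backward direction, suppose every map into $X$ with $\V$-small domain has a $\V$-small image, and let $S : \powerset_{\V}(X)$ be $\V$-covered, so we have $I : \V$ together with a surjection $e : I \surj \totalspace(S)$. Form the composite $g \colonequiv \fst \circ e : I \to X$. The key observation is that $\image(g) \simeq \totalspace(S)$: unfolding the definition of image, $\image(g) = \Sigma_{x : X}\exists_{i : I}\,\fst(e(i)) = x$, and using that $e$ is a surjection and that $S(x)$ is a proposition, this is easily seen to be equivalent to $\Sigma_{x : X}\,(x \in S) \equiv \totalspace(S)$. (Alternatively, one can invoke the general fact that precomposing with a surjection does not change the image.) Since $I : \V$, the hypothesis gives that $\image(g)$, and hence $\totalspace(S)$, is $\V$-small.

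No serious obstacle is anticipated; both directions are essentially bookkeeping once the correspondence between $\V$-covered subsets and images of maps from types in $\V$ is made explicit. The only point requiring a little care is the invocation of local $\V$-smallness of $X$ in the forward direction, which is needed to ensure that $S_f$ really lands in $\Omega_{\V}$ and not in a larger universe of propositions, and the clean identification $\image(\fst \circ e) \simeq \totalspace(S)$ in the backward direction, which relies on surjectivity of $e$.
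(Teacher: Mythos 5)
Your proof is correct and follows essentially the same strategy as the paper's: in the forward direction, convert a map $f$ with $\V$-small domain into the subset $S_f(x) = \exists_i\, f(i) =_{\V} x$ (using local $\V$-smallness of $X$ to stay in $\Omega_{\V}$), observe that the corestriction witnesses $\V$-coveredness, and apply the hypothesis; in the backward direction, apply the image-smallness hypothesis to $\fst \circ e : I \to X$ and identify $\image(\fst \circ e)$ with $\totalspace(S)$. The only cosmetic difference is that the paper's backward direction routes through an auxiliary subset $S'$ defined via the small-valued equality $=_{\V}$, proves $\totalspace(S') \simeq \image(\fst \circ e)$, and then argues $S = S'$ from surjectivity of $e$; you collapse this by establishing $\image(\fst \circ e) \simeq \totalspace(S)$ directly, which is a slightly tighter presentation of the same argument.
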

\begin{proof}
  Suppose first that every \(\V\)-covered subset \(S : X \to \Omega_{\V}\) is
  \(\V\)-small and let \(f : I \to X\) be a map such that \(I\) is \(\V\)-small.
  Without loss of generality, we may assume that \(I : \V\), because we can
  always precompose \(f\) with the equivalence witnessing that \(I\) is
  \(\V\)-small.
  Now consider the subset \(S : X \to \Omega_{\V}\) given by
  \(S(x) \colonequiv \exists_{i : I}\pa*{f(i) =_{\V} x}\), where \({=_{\V}}\)
  has values in \(\V\) and is provided by our assumption that \(X\) is locally
  \(\V\)-small.
  Then \(S\) is \(\V\)-covered, because we have
  \(I \surj \image(f) \simeq \totalspace(S)\), where the first map is the
  corestriction of \(f\).
  So by assumption \(\totalspace(S)\) is \(\V\)-small, which means that
  \(\image(f)\) must be \(\V\)-small too.

  Conversely, assume the set replacement principle and let
  \(S : X \to \Omega_{\V}\) be \(\V\)-covered by \(e : I \surj \totalspace(S)\).
  Define the subset \(S' : X \to \Omega_{\V}\) by
  \(S'(x) \colonequiv \exists_{i : I}\pa*{x =_{\V} \fst(e_i)}\).
  By the assumed set replacement principle for \(X\), the subset \(S'\) is
  \(\V\)-small because of the equivalence
  \(\totalspace(S') \simeq \image({\fst} \circ {e})\).
  Finally, it follows from the surjectivity of \(e\) that \(S\) and \(S'\) are
  equal as subsets, and therefore that
  \(\totalspace(S) \simeq \totalspace(S')\). Hence, \(S\) is a \(\V\)-small
  subset, as desired.
\end{proof}

So, predicatively, and in the absence of a set replacement principle, the notion
of a \(\V\)\nobreakdash-small subset is strictly stronger than that of a
\(\V\)-covered subset. Hence, in this setting, having suprema for all
\(\V\)-small subsets is strictly weaker than having suprema for all
\(\V\)\nobreakdash-covered subsets.
Meanwhile, \cref{family-subset-sup-equiv} will imply that there are plenty of
examples of posets with suprema for all \(\V\)-covered subsets, even
predicatively.
So we prefer the stronger, but predicatively reasonable requirement of asking
for suprema of all \(\V\)-covered subsets.

Form a practical viewpoint, \(\V\)-covered subsets also give us an easy handle
on examples like the following: Let \(X\) be a poset with suprema for all
(directed) \(\U_0\)-covered subsets. Then the least fixed point of a Scott
continuous endofunction \(f\) on \(X\) can be computed as the supremum of the
subset \(\{\bot, f(\bot) , f^2(\bot) , \dots\}\), which is covered by \(\Nat\).
But it is not clear that this subset is \(\U_0\)-small, at least not in the
absence of set replacement.

Our preference for \(\V\)-covered subsets over \(\V\)-small subsets also makes
it clear why we do not impose an injectivity condition on families, because for
every type \(X : \U\) there is an equivalence between embeddings
\(I \hookrightarrow X\) with \(I : \V\) and \(\pa*{\U \sqcup \V}\)-valued
subsets of \(X\) whose total spaces are \(\V\)-small,
cf.~\cite[\mkTTurl{Slice.Slice}]{TypeTopology}.

\begin{theorem}
  \label{family-subset-equiv}
  For \(X : \U\) and any universe \(\V\) we have an equivalence between
  \(\V\)-covered \(\pa*{\U \sqcup \V}\)-valued subsets of \(X\) and families
  \(I \to X\) with \(I : \V\).
\end{theorem}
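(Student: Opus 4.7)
The plan is to construct mutually inverse maps explicitly, treating a $\V$-covered $(\U\sqcup\V)$-valued subset as the triple $(S, I, e)$ with $S : X \to \Omega_{\U\sqcup\V}$, $I : \V$ and $e : I \to \totalspace(S)$ a surjection (so that, after reordering $\Sigma$'s, the source is $\Sigma_{I : \V}\Sigma_{S : X \to \Omega_{\U\sqcup\V}}\Sigma_{e : I \to \totalspace(S)}\,\text{($e$ surjection)}$). The key conceptual input is the image factorisation of \cref{factor-map-through-corestriction}: a map $\alpha : I \to X$ factors as the surjective corestriction $I \surj \image(\alpha)$ followed by the embedding $\image(\alpha) \hookrightarrow X$, and this factorisation is essentially unique.

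Concretely, I would define $\varphi : \pa*{\Sigma_{I : \V}(I \to X)} \to \pa*{\text{$\V$-covered subsets}}$ by $(I, \alpha) \mapsto (\image(\alpha), I, \alpha_c)$, where $\image(\alpha)$ is the subset $x \mapsto \exists_{i : I}\,\alpha_i = x$ (valued in $\Omega_{\U \sqcup \V}$ since $X : \U$ and $I : \V$) and $\alpha_c$ is the corestriction, which is a surjection by \cref{corestrictions-are-surjections}. The inverse $\psi$ sends $(S, I, e)$ to $(I, \fst \circ e)$, where $\fst : \totalspace(S) \to X$ is the first projection.

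One direction is almost immediate: for any family $(I, \alpha)$, the composite $\psi(\varphi(I, \alpha)) = (I, \fst \circ \alpha_c)$ equals $(I, \alpha)$ since $\fst \circ \alpha_c = \alpha$. The more delicate direction is $\varphi \circ \psi = \id$. Given $(S, I, e)$, we obtain $\varphi(\psi(S, I, e)) = (\image(\fst \circ e), I, (\fst \circ e)_c)$, and I would first prove $\image(\fst \circ e) = S$ in $X \to \Omega_{\U \sqcup \V}$: by function extensionality and propositional extensionality this reduces to the logical equivalence $\pa*{\exists_{i : I}\,\fst(e_i) = x} \iff S(x)$ for each $x : X$. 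The forward direction uses that if $\fst(e_i) = x$ then $\snd(e_i)$ transports to a proof of $S(x)$; the backward direction uses surjectivity of $e$ applied to $(x, p) \in \totalspace(S)$.

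The only remaining point, and the main notational obstacle, is to verify that after transporting $(\fst \circ e)_c$ along this equality of subsets we recover $e$ itself. This is where the work would be most fiddly, but it is not really difficult: using \cref{Id-of-Sigma}, two maps $I \to \totalspace(S)$ are equal iff their first projections agree (the second components are then forced since $S$ is subsingleton-valued), and the surjectivity witnesses match automatically because being a surjection is a property. Thus the whole equality reduces to $\fst \circ (\fst \circ e)_c = \fst \circ e$, which holds by definition of the corestriction.
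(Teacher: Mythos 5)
Your proposal is correct and takes essentially the same approach as the paper: the maps, the easy round trip (forget the subset, i.e.\ $\fst \circ \alpha_c = \alpha$), and the hard round trip via image factorisation are all identical to the paper's, modulo your swap of the labels $\varphi$ and $\psi$. The only difference is cosmetic: where you discharge the transport bookkeeping by appealing to \cref{Id-of-Sigma} and the subtype property of $\totalspace(S)$, the paper isolates an explicit claim (if $S = S'$ and $\fst \circ e \sim \fst \circ e'$ then $(S,e) = (S',e')$) and proves it by path induction and function extensionality.
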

\begin{proof}
  The forward map \(\varphi\) is given by \((S,I,e) \mapsto (I,{\fst} \circ {e})\).
  In the other direction, we define \(\psi\) by mapping \((I,\alpha)\) to the
  triple \((S,I,e)\) where \(S\) is the subset of \(X\) given by
  \(S(x) \colonequiv \exists_{i : I}\,(x = \alpha(i))\) and
  \(e : I \surj \totalspace(S)\) is defined as
  \(e (i) \colonequiv \pa*{\alpha(i),\tosquash*{(i,\refl)}}\).
  The composite \(\varphi \circ \psi\) is easily seen to be equal to the
  identity. To show that \(\psi \circ \varphi\) equals the identity, we need the
  following intermediate result, which is proved using function extensionality
  and path induction.
  \begin{claim}
    Let \(S,S' : X \to \Omega_{\U\sqcup\V}\), \(e : I \to \totalspace(S)\)
    and \(e' : I \to \totalspace(S')\). If \(S = S'\) and \({{\fst} \circ e \sim
    {\fst} \circ e'}\), then \((S,e) = (S',e')\).
  \end{claim}
  The result follows from the claim using function and propositional
  extensionality.
\end{proof}

\begin{corollary}
  \label{family-subset-sup-equiv}
  A poset with carrier in a universe \(\U\) has suprema for all \(\V\)-covered
  \(\pa*{\U \sqcup \V}\)\nobreakdash-valued subsets if and only if it has
  suprema for all families indexed by types in \(\V\).
\end{corollary}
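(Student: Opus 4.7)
The plan is to derive the corollary directly from the equivalence established in Theorem \ref{family-subset-equiv} by showing that, under this equivalence, a $\V$-covered subset and its corresponding $\V$-indexed family have precisely the same upper bounds, so that one has a supremum exactly when the other does, and they coincide.

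More concretely, fix a poset with carrier $X : \U$. Under the equivalence of Theorem \ref{family-subset-equiv}, a family $\alpha : I \to X$ with $I : \V$ corresponds to the $\V$-covered subset $(S,I,e)$ with $S(x) \colonequiv \exists_{i : I}\pa*{x = \alpha(i)}$ and $e(i) \colonequiv (\alpha(i),\tosquash*{(i,\refl)})$, while in the opposite direction a $\V$-covered subset $(S,I,e)$ yields the family ${\fst} \circ e : I \to X$. First I would establish the key lemma: for corresponding pairs, an element $y : X$ is an upper bound of the family if and only if it is an upper bound of the subset. One direction is immediate since $e$ is a surjection onto $\totalspace(S)$, so any $y$ bounding $\fst \circ e$ bounds all $x \in S$ after eliminating the truncation (being an upper bound is a proposition). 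The other direction is equally straightforward: if $y$ bounds $S$, then $\fst(e(i)) \in S$ for every $i : I$, so $\fst(e(i)) \below y$.

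From this lemma the corollary follows immediately: the least upper bound of $\alpha$, if it exists, is characterized by the same universal property as the least upper bound of the corresponding subset $S$, and vice versa. Thus if the poset has all \(\V\)-indexed family suprema, then given any $\V$-covered subset $(S,I,e)$ we form $\bigsqcup {\fst} \circ e$, which is then the supremum of $S$; conversely, if the poset has all $\V$-covered subset suprema, then for a family $\alpha : I \to X$ we take the supremum of its associated $\V$-covered subset, which is the supremum of $\alpha$.

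The only point that requires a little care, rather than being a genuine obstacle, is handling the propositional truncations correctly when passing between $x \in S \equiv \exists_{i : I}(x = \alpha(i))$ and the unmediated family; this is harmless because the statement ``$y$ is an upper bound'' is a proposition and therefore eliminates into the truncation by its universal property. I do not anticipate any substantive difficulty, since the heavy lifting has been done in Theorem \ref{family-subset-equiv}; the corollary is essentially a routine unpacking of that equivalence at the level of suprema.
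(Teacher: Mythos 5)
Your proposal is correct and takes essentially the same route as the paper, which simply notes that the supremum of a $\V$-covered subset equals the supremum of the corresponding family by inspecting the equivalence of \cref{family-subset-equiv}; you have merely spelled out the ``inspection'' as a lemma on upper bounds. The handling of the propositional truncation via the fact that ``$y$ is an upper bound'' is a proposition is exactly the right observation to make the argument go through.
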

\begin{proof}
  This is because the supremum of a \(\V\)-covered subset equals the supremum of
  the corresponding family and vice versa by inspecting the proof
  of~\cref{family-subset-equiv}.
\end{proof}

We conclude by comparing our family-based approach to the subset-based approach
in the presence of impredicativity.

\begin{theorem}\label{impred-comparison}
  Assuming \(\Omegaresizing{\T}{\U_0}\) for every universe \(\T\). Then the
  following are equivalent for a poset with carrier in a universe \(\U\):
  \begin{enumerate}[(i)]
  \item\label{all-subsets} the poset has suprema for all subsets;
  \item\label{all-covered-subsets} the poset has suprema for all \(\U\)-covered
    subsets;
  \item\label{all-small-subsets} the poset has suprema for all \(\U\)-small
    subsets;
  \item\label{all-small-families} the poset has suprema for all families indexed
    by types in \(\U\).
  \end{enumerate}
\end{theorem}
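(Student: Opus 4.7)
The plan is to prove the cycle \ref{all-subsets} $\Rightarrow$ \ref{all-covered-subsets} $\Rightarrow$ \ref{all-small-subsets} $\Rightarrow$ \ref{all-small-families} $\Rightarrow$ \ref{all-subsets}, with the main work concentrated in the last two implications, where the $\Omega$-resizing hypothesis is used. The first implication is immediate, and the second follows because every $\V$-small subset is $\V$-covered: the equivalence $I \simeq \totalspace(S)$ supplied by smallness yields (in particular) a surjection.

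For \ref{all-small-subsets} $\Rightarrow$ \ref{all-small-families}, given a family $\alpha : I \to X$ with $I : \U$, I would consider the subset $S(x) \colonequiv \exists_{i : I}(x = \alpha(i))$ of $X$. The key observation is that under the global assumption $\Omegaresizing{\T}{\U_0}$, \cref{Omega-prop-resizing-iff}\ref{Omegaresizing-implies-Propresizing} yields $\Propresizing{\T}{\U_0}$ for every $\T$, so every proposition is $\U_0$-small. Thus each $(x = \alpha(i))$ can be replaced by a proposition in $\U_0$, making $\exists_{i : I}(x = \alpha(i))$ $\U_0$-small, and hence $\totalspace(S) \simeq \image(\alpha)$ equivalent to a type in $\U \sqcup \U_0 = \U$. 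So $S$ is a $\U$-small subset, and its supremum provided by \ref{all-small-subsets} agrees with the supremum of $\alpha$ (an upper bound for $\alpha$ is the same thing as an upper bound for the image of $\alpha$).

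For \ref{all-small-families} $\Rightarrow$ \ref{all-subsets}, given an arbitrary subset $S : X \to \Omega_\T$, propositional resizing again makes each proposition $x \in S$ $\U_0$-small. Hence $\totalspace(S) = \Sigma_{x : X}(x \in S)$ is equivalent to a type $I : \U$, and the family $I \simeq \totalspace(S) \xrightarrow{\fst} X$ has a supremum by \ref{all-small-families}, which is easily verified to be the supremum of $S$.

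The main obstacle, if any, is just careful bookkeeping of universe levels to see that propositional resizing suffices to bring the total space of an arbitrary subset (whose subset-membership can a priori live in any universe) down to the universe $\U$ of the carrier. Once that is understood, the routes between families and subsets are exactly those set out in \cref{family-subset-equiv,covered-small-set-replacement}, with the role of set replacement now being played by $\Omega$-resizing (which implies the needed instance of set replacement for maps into $X$, since images of $\U$-indexed maps become $\U$-small by the same propositional resizing argument).
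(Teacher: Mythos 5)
Your proof is correct and establishes the same equivalences; the main difference is one of decomposition rather than substance. The paper proves the chain (i) $\Rightarrow$ (ii) $\Rightarrow$ (iii), closes the loop by showing (iii) $\Rightarrow$ (i) directly via resizing (any $\T$-valued subset has $\U$-small total space once each $x \in S$ is resized to $\U_0$), and then appeals to~\cref{family-subset-sup-equiv} together with resizing to identify (ii) with (iv). You instead run a linear cycle (i) $\Rightarrow$ (ii) $\Rightarrow$ (iii) $\Rightarrow$ (iv) $\Rightarrow$ (i), essentially inlining the content of that corollary: your step (iii) $\Rightarrow$ (iv) builds the image-subset $S(x) \colonequiv \exists_{i : I}\pa{x = \alpha(i)}$ and observes that upper bounds of $S$ and of $\alpha$ coincide, while your step (iv) $\Rightarrow$ (i) carries the real weight of resizing and is word-for-word the paper's (iii) $\Rightarrow$ (i). One small inefficiency: in your (iii) $\Rightarrow$ (iv) you invoke propositional resizing to shrink $x = \alpha(i)$ into $\U_0$, but this is not needed --- since $X : \U$ and $I : \U$ the total space of $S$ already lives in $\U$ outright, so that step goes through without any hypothesis. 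Otherwise the argument is sound; avoiding reliance on~\cref{family-subset-sup-equiv} makes it a touch more self-contained at the cost of reproving a piece of it.
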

\begin{proof}
  Clearly
  \ref{all-subsets}~\(\Rightarrow\)~\ref{all-covered-subsets}~\(\Rightarrow\)~\ref{all-small-subsets}.
  We show that \ref{all-small-subsets} implies \ref{all-subsets}, which proves
  the equivalence of \ref{all-subsets}--\ref{all-small-subsets}. Assume that a
  poset \(X\) has suprema for all \(\U\)-small subsets and let
  \(S : X \to \Omega_{\T}\) be any subset of \(X\). Using
  \(\Omegaresizing{\T}{\U_0}\), the total space \(\totalspace(S)\) is
  \(\U\)-small. So~\(X\)~has a supremum for \(S\) by assumption, as
  desired. Finally, \ref{all-covered-subsets}~and~\ref{all-small-families} are
  equivalent in the presence of \(\Omegaresizing{\T}{\U_0}\)
  by~\cref{family-subset-sup-equiv}.
\end{proof}
If condition~\ref{all-small-families} of~\cref{impred-comparison} holds, then
the poset has suprema for all families indexed by types in \(\V\) provided that
\(\V \sqcup \U \equiv \U\).
Typically, in the examples in our account of domain theory for~instance,
\(\U \colonequiv \U_1\) and \(\V \colonequiv \U_0\), so that
\(\V \sqcup \U \equiv \U\) holds. Thus, our \(\V\)-families-based approach
generalises the traditional subset-based approach.%
\index{supremum!of subsets|)}%
\index{subset|)}%

\section{Notes}\label{sec:predicativity-in-order-theory-notes}

This chapter is based on an extended and revised
version~\cite[Sections~4--6]{deJongEscardo2022} of our
paper~\cite{deJongEscardo2021b}. %
We would like to thank the anonymous reviewers of~\cite{deJongEscardo2022} for
their valuable and complementary suggestions. We~are particularly grateful to
the reviewer who pointed out that one of our results can be strengthened
to~\cref{is-small-retract} and for their insights and questions
on~\cref{sec:maximal-and-fixed-points,sec:families-and-subsets} that have
considerably improved the exposition.


\chapter{Formalisation}\label{chap:formalisation}%
\index{formalisation|(}

This research started with formalising the Scott model of PCF in
\Coq~\cite{Coq} using the \UniMath~\cite{UniMath} library.%
\index{proof assistant!Coq@\Coq}\index{UniMath@\UniMath}\index{Scott model!of PCF}
The \UniMath\ project was started in 2014 by merging repositories from Vladimir
Voevodsky, Benedikt Ahrens and Daniel Grayson. The current \UniMath\
Coordinating Committee members are: Benedikt Ahrens, Daniel Grayson, Ralph
Matthes and Niels van der Weide and the library has accepted
contributions\footnote{\url{https://github.com/UniMath/UniMath/graphs/contributors}}
from over 60 people at the time of writing.

For historical reasons~\cite{Voevodsky2015}, the \verb|Type-in-Type| flag is
enabled in \UniMath, so that it is not possible to have \Coq\ automatically
check the universes for us.
Since we were interested in developing domain theory predicatively, having the
proof assistant carefully track universes was an important feature.%
\index{universe!parameters}
Hence, we decided to continue our formalisation efforts in \Agda~\cite{Agda}
using Mart\'in Escard\'o's (and collaborators') \TypeTopology~\cite{TypeTopology}
development which explicitly keeps track of universes.%
\index{proof assistant!Agda@\Agda}\index{TypeTopology@\TypeTopology}
This is the reason some parts are formalised both in \Coq/\UniMath\ and in
\Agda/\TypeTopology. But because we did not wish to duplicate all our efforts,
some parts are only formalised in \Coq/\UniMath.

Formalising our efforts has helped to experiment with and has structured and
guided our development of domain theory as set out in this thesis.
Moreover~\cite{Hart2020} has shown that the \Agda\
formalisation~\cite{TypeTopologyDomainTheory} can be taken as a starting point
for a further formal development of domain theory in univalent
foundations. Since then, the code base has been extended and improved
considerably, hopefully making our formalisation more convenient to work with.

Both the \Coq/\UniMath~and \Agda/\TypeTopology~proofs and their renderings in
HTML (for presentation and reading) are archived by the University of Birmingham at
\href{https://doi.org/10.25500/edata.bham.00000912}{\texttt{doi:10.25500/edata.bham.00000912}}.

\section{Overview}

We give a comprehensive overview per chapter of what is and what isn't
formalised.

\paragraph{\cref{chap:univalent-foundations}}%

All the background material up to and
including~\cref{sec:impredicativity-resizing-axioms} is fairly well-known and
all formalised between~\cite{TypeTopology} and~\cite{Escardo2019}, except
for~\cref{is-small-retract} because we only recently learned that this stronger
result was possible, see~\cref{small-retract-improvement}. The weaker result for
sections that are embeddings is formalised however and is sufficient for our
applications.

\cref{sec:quotients-replacement-prop-trunc-revisited} on set quotients, set
replacement and propositional truncations is formalised in \Agda/\TypeTopology,
see~\cite{TypeTopologyQuotientLarge,%
  TypeTopologyQuotientF,%
  TypeTopologyQuotientReplacement,%
  TypeTopologyQuotient}.%
\index{set quotient}\index{set replacement}\index{propositional truncation}%

\cref{sec:indexed-W-types} on indexed \(\WW\)-types is formalised in
\Coq/\UniMath,
see~\cite[{\href{https://tdejong.com/Scott-PCF-UniMath/UniMath.MoreFoundations.Wtypes.html}{\texttt{MoreFoundations.Wtypes}}}]{UniMathScottModelOfPCF},
as it was aimed at the application to PCF.%
\index{W-type@\(\WW\)-type!indexed}

\paragraph{\cref{chap:basic-domain-theory}} %
All of~\cref{chap:basic-domain-theory} is formalised in \Agda/\TypeTopology, %
see~\cite{TypeTopologyDomainTheory}, except for
\begin{itemize}
\item products of dcpos (although this was included in~\cite{Hart2020}),
\item \cref{flat-gives-LPO,lifting-is-plus-one-iff-em}, and
\item \cref{lifting-is-free2} (but the
  similar~\cref{lifting-is-free,lifting-is-free3} are formalised).
\end{itemize}

\paragraph{\cref{chap:continuous-and-algebraic-dcpos}} %
All of~\cref{chap:continuous-and-algebraic-dcpos} is formalised in
\Agda/\TypeTopology, see~\cite{TypeTopologyDomainTheory}, with
\cref{lifting-structurally-algebraic-but-no-small-basis}, and
\cref{close-basis-under-finite-joins,sup-complete-ideal-completion} as the only
exceptions.

\paragraph{\cref{chap:applications}} %
\cref{sec:Scott-D-infty} is fully formalised in \Agda/\TypeTopology, %
see the dedicated
file~\cite[\mkTTurl{Bilimits.Dinfinity}]{TypeTopologyDomainTheory}.
Regarding \cref{sec:Scott-model-of-PCF}, we would like to mention three formalisations.

The Scott model of PCF, including soundness and computational adequacy, is fully
formalised in \Coq/\UniMath, %
see~\cite[{\href{https://tdejong.com/Scott-PCF-UniMath/UniMath.Partiality.PCF.html}{\texttt{Partiality.PCF}}}]{UniMathScottModelOfPCF}.%
\index{Scott model!of PCF}\index{Scott model!of PCF!soundness}%
\index{Scott model!of PCF!computational adequacy}
This also includes the general~\cref{decidable-k-step-refl-trans-clos} in%
~\cite[{\href{https://tdejong.com/Scott-PCF-UniMath/UniMath.MoreFoundations.ClosureOfHrel.html}{\texttt{MoreFoundations.ClosureOfHrel}}}]{UniMathScottModelOfPCF}, %
and the logical equivalence of~\cref{semidecidability-of-PCF-props}, but the
application of~\cref{decidable-k-step-refl-trans-clos} to obtain decidability of
\({\smallstep^k}\) is not formalised.

To check the predicative validity and the universes involved, which is not
possible in \Coq/\UniMath\ because it uses \verb|Type-in-Type| (as mentioned
above), we also formalised the syntax of PCF and the definition of the Scott
model of PCF in \Agda/\TypeTopology, %
see~\cite[\mkTTurl{ScottModelOfPCF.ScottModelOfPCF}]{TypeTopologyDomainTheory}.

This \Agda\ development was extended by Brendan Hart~\cite{Hart2020} to a proof
of soundness and computational adequacy for PCF with variables and
\(\lambda\)\nobreakdash-abstraction for a final year MSci project supervised by
Mart\'in Escard\'o and myself.

\paragraph{\cref{chap:predicativity-in-order-theory}} %
Only the results of~\cref{sec:small-suprema-of-ordinals} on suprema of ordinals
have been formalised in \Agda/\TypeTopology,
see~\cite{TypeTopologyOrdinalSuprema}.%
\index{ordinal}

\section{Future work}
It would be desirable to expand~\cite{TypeTopologyDomainTheory} with a
formalisation of products and
\cref{close-basis-under-finite-joins,sup-complete-ideal-completion}.%
\index{dcpo!product}
Products are not included because they were not needed to develop the
applications in~\cref{chap:applications}, while the lemmas are missing due to a
lack of time.
Including products could potentially be achieved by merging~\cite{Hart2020}.
The other results are less pressing to have formalised, because they do not
obstruct a further computer-verified development of domain theory, for example
because they are results showing that some statement implies a constructive or
predicative taboo.

\section{Statistics}%
\index{formalisation!statistics|(}

To give the reader an impression of the (relative) sizes of our formalisations
of domain theory, \cref{Coq-code-stats} and \cref{Agda-code-stats} show the
number of lines in
respectively~\cite{UniMathScottModelOfPCF} and \cite{TypeTopologyDomainTheory}.
Notice that the files listed in these tables depend on auxiliary files that
develop univalent foundations (just
as~\cref{chap:basic-domain-theory,chap:continuous-and-algebraic-dcpos,chap:applications}
depend on~\cref{chap:univalent-foundations}), but that these auxiliary files are
not included in the statistics.\vspace{1em}

\begin{table}[h!]
  \centering
  \begin{tabular}{lr@{}}\toprule
    File & Number of lines \\ \midrule
    \texttt{DCPO.v}            & 637 \\
    \texttt{LiftMonad.v}       & 151 \\
    \texttt{PartialElements.v} & 369 \\
    \texttt{PCF.v}             & 769 \\
    \cmidrule(l){2-2}
         & 1953 \\
    \bottomrule
  \end{tabular}
  \caption{Number of lines (including comments and blank lines) per file in our
    \Coq/\UniMath\ formalisation~\cite{UniMathScottModelOfPCF}. It should be
    noted that the file~\texttt{PCF.v} in~\cref{Coq-code-stats} includes the
    soundness and computational adequacy of the Scott model of PCF.}
  \label{Coq-code-stats}
\end{table}

\begin{center}
\begin{longtable}{lr@{}}\toprule
  \!\!\!File & Number of lines \\ \midrule
  \multicolumn{2}{@{}l}{\textbf{Basics}}\\
  \texttt{Dcpo.lagda}            & 303 \\
  \texttt{Exponential.lagda}     & 346 \\
  \texttt{LeastFixedPoint.lagda} & 309 \\
  \texttt{Miscelanea.lagda}      & 668 \\
  \texttt{Pointed.lagda}         & 341 \\
  \texttt{SupComplete.lagda}     & 294 \\
  \texttt{WayBelow.lagda}        & 289 \\
  \cmidrule(l){2-2}
  & 2550 \\
  \midrule
  \multicolumn{2}{@{}l}{\textbf{BasesAndContinuity}}\\
  \texttt{Bases.lagda}                & 721 \\
  \texttt{Continuity.lagda}           & 502 \\
  \texttt{ContinuityDiscussion.lagda} & 375 \\
  \texttt{IndCompletion.lagda}        & 379 \\
  \texttt{StepFunctions.lagda}        & 509 \\
  \cmidrule(l){2-2}
  & 3716 \\
  \midrule
  \multicolumn{2}{@{}l}{\textbf{Bilimits}}\\
  \texttt{Dinfinity.lagda}  & 962 \\
  \texttt{Directed.lagda}   & 1229 \\
  \texttt{Sequential.lagda} & 495 \\
  \cmidrule(l){2-2}
  & 2686 \\
  \midrule
  \multicolumn{2}{@{}l}{\textbf{Examples}}\\
  \texttt{IdlDyadics.lagda} & 80 \\
  \texttt{Omega.lagda}      & 219 \\
  \texttt{Powerset.lagda}   & 216 \\
  \cmidrule(l){2-2}
  & 515  \\
  \midrule
  \multicolumn{2}{@{}l}{\textbf{IdealCompletion}}\\
  \texttt{IdealCompletion.lagda} & 198 \\
  \texttt{Properties.lagda}      & 528 \\
  \texttt{Retracts.lagda}        & 472 \\
  \cmidrule(l){2-2}
  & 1198 \\
  \midrule
  \multicolumn{2}{@{}l}{\textbf{Lifting}}\\
  \texttt{LiftingDcpo.lagda}         & 466 \\
  \texttt{LiftingSet.lagda}          & 395 \\
  \texttt{LiftingSetAlgebraic.lagda} & 208 \\
  \cmidrule(l){2-2}
  & 1069 \\
  \midrule
  \multicolumn{2}{@{}l}{\textbf{ScottModelOfPCF}}\\
  \texttt{ScottModelOfPCF.lagda} & 71 \\
  \texttt{PCF.lagda}             & 118 \\
  \texttt{PCFCombinators.lagda}  & 474 \\
  \cmidrule(l){2-2}
  & 663 \\
  \midrule
  \!\!\!\textbf{All components combined} & 11167 \\
  \bottomrule
  \caption{Number of lines (including comments and blank lines) per component of
    our \Agda/\TypeTopology\ formalisation~\cite{TypeTopologyDomainTheory}}
  \label{Agda-code-stats}
\end{longtable}
\end{center}%
\index{formalisation!statistics|)}
\index{formalisation|)}


\chapter{Conclusion}\label{chap:conclusion}

We provide a summary of our contributions and our approach to developing domain
theory in constructive and predicative univalent foundations. Furthermore, we
briefly describe various directions for future research.

\section{Summary}
We have demonstrated how constructive and predicative univalent foundations
provides an adequate and sophisticated setting to develop domain theory.
Since higher inductive types can be seen as specific instances of resizing
principles, it is noteworthy that the only higher inductive type needed
for our purposes is the propositional truncation.

Instead of working with information systems, abstract bases or formal
topologies, and approximable relations, we studied directed complete posets and
Scott continuous directly, using type universes and type equivalences to deal
with size issues in the absence of propositional resizing axioms.
Seeing a poset as a category in the usual way, we can say that dcpos are large,
but locally small, and have small filtered colimits.
By carefully keeping track of universe parameters, we showed that complex
constructions of dcpos, such as Scott's \(D_\infty\) model of the untyped
\(\lambda\)-calculus, which involves countable infinite iterations of
exponentials, are predicatively possible.
We further illustrated our domain-theoretic development by giving a predicative
and constructive account of the soundness and computational adequacy of the
Scott model of PCF. In particular, this illustrated the use of the
Escard\'o--Knapp lifting monad.

Taking inspiration from work in category theory by Johnstone and Joyal, we gave
predicatively adequate definitions of continuous and algebraic dcpos, and
discussed issues related to the absence of the axiom of choice.
We~also presented predicative adaptations of the notions of a basis and the
rounded ideal completion.
The theory was accompanied by several examples: we described small compact bases
for the lifting and the powerset, and considered the rounded ideal completion of
the dyadics.

The fact that nontrivial dcpos have large carriers is in fact unavoidable and
characteristic of our predicative setting, as we explained in a complementary
chapter on the constructive and predicative limitations of univalent
foundations. We proved no-go theorems regarding both constructivity and
predicativity for a general class of posets that includes dcpos, bounded
complete posets, sup-lattices and frames.
In particular, we showed that locally small nontrivial dcpos necessarily lack
decidable equality in our constructive setting.
The fact that nontrivial dcpos are necessarily large has the important
consequence that Tarski's theorem (and similar results) cannot be applied in
nontrivial instances, even though it has a predicative proof.
Further, we explained, by studying the large \(\V\)-sup-lattice of ordinals in a
univalent universe \(\V\), that generalisations of Tarski's theorem that allow
for large structures are provably false.
Finally, we elaborated on the connections between requiring suprema of families
and of subsets in our predicative setting.

Moreover, we contributed to the overall theory of (predicative) univalent
foundations by studying a set replacement principle and the connections and
universe levels of set quotients and propositional truncations. We also
presented a general criterion for decidable equality of indexed \(\WW\)-types
that we applied to the syntax of~PCF when proving that totality of PCF terms of
base type is semidecidable.

\section{Future work}

To prove that \(D_\infty\) had a small compact basis, we used that each \(D_n\)
is a \(\U_0\)-sup-lattice, so that we could apply the results
of~\cref{sec:exponentials-with-small-bases}.
\cref{lifting-has-small-compact-basis} tells us that \(\lifting_{\U_0}(\Nat)\)
has a small compact basis too, but to prove that the \(\U_0\)-dpcos in the Scott
model of PCF have small compact bases using the techniques
of~\cref{sec:exponentials-with-small-bases}, we would need
\(\lifting_{\U_0}(\Nat)\) to be a \(\U_0\)-sup-lattice, but it isn't.
However, it is complete for \emph{bounded} families indexed by types in \(\U_0\)
and we believe that is possible to generalise the results
of~\cref{sec:exponentials-with-small-bases} from sup-lattices to bounded
complete posets.%
\index{poset!bounded complete}
Classically, this is fairly straightforward, but from preliminary considerations
it appears that constructively one needs to impose certain decidability criteria
on the bases of the dcpos. For instance that the partial order is decidable when
restricted to basis elements. We also studied such decidability conditions in
our paper~\cite{deJong2021b} discussed below.%
\index{decidability}\index{constructivity}\index{basis}
These conditions should be satisfied by the bases of the dcpos in the Scott
model of PCF, but we leave a full treatment of bounded complete dcpos with bases
satisfying such conditions for future investigations.

It would be worthwhile to further develop domain-theoretic
applications. Specifically, it would be interesting to give a fully rigorous
formalisation of the surprising domain-theoretic algorithms that exhaustively
search infinite sets in finite time due to
\citeauthor{Berger1990}~\cite{Berger1990} and
\citeauthor{Escardo2008}~\cite{Escardo2008}.%
\index{algorithm}
The fact that our development is constructive might then pay off as we could use
our constructive proofs of domain theoretic facts to directly compute the output
of such algorithms.

To complement the applications of domain theory in the semantics of programming
language, it would be desirable to explore applications in (pointfree)
topology. For example, can we predicatively replicate the
connection~\cite{Hyland1981} between exponentiable locales and continuous
lattices?%
\index{locale}\index{continuity!of a dcpo}

Although some results (see \cref{continuous-wrt-Scott-topology}) can be stated
in terms of opens of the Scott topology\index{topology}, we have not given a
constructive account of the Scott topology. We did study this topic and a
related apartness relation~\cite{BridgesVita2011} in our
paper~\cite{deJong2021b}.\index{apartness relation}
To simplify the development and in the interest of appealing to a broader
audience of constructivists, the work~\cite{deJong2021b} is situated in informal
constructive, but impredicative, set theory rather than univalent foundations.
For this reason (and for time and space limitations) this paper is not part of
this thesis, even though it shares the domain-theoretic theme.

Furthermore, we hope that our formalisation efforts (as discussed
in~\cref{chap:formalisation}) provide adequate support for those wishing to
further develop computer-verified domain theory in univalent foundations.
This hope is reinforced by the fact that a gifted MSci student, Brendan Hart,
supervised by Mart\'in Escard\'o and myself, was able to do so for a final year
project~\cite{Hart2020} using an earlier and rudimentary version of our \Agda\
code.

Finally, the most fundamental and pressing question regarding predicativity in
univalent foundations is whether propositional resizing can be given a
computational interpretation.\index{resizing!propositional}\index{predicativity}


\nocite{*}

\backmatter%
\printbibliography[heading=bibintoc]%
\printnomenclature%
\printindex

\end{document}